\renewcommand{\chaptermark}[1]{\markboth{\textsc{\@chapapp}\  \thechapter:\ #1}{}}
\def\cleardoublepage{\clearpage\if@twoside \ifodd\c@page\else
    \hbox{}
    \thispagestyle{plain}
    \newpage
    \if@twocolumn\hbox{}\newpage\fi\fi\fi}
\makeatother \clearpage{\pagestyle{plain}\cleardoublepage}
\renewcommand{\chaptermark}[1]{\markboth{ \emph{#1}}{}}
\newtheorem{conjecture}{Conjecture}[chapter]
\theoremstyle{definition}
\newtheorem{example}{Example}[chapter]
\newtheorem{claim}[theorem]{Claim}
\newtheorem{definition}{Definition}[chapter]
\newtheorem{proposition}[theorem]{Proposition}
\newcommand{\Tr}{{\rm Tr}}
\newcommand{\nc}{\newcommand}
\nc{\rnc}{\renewcommand}
\nc{\beq}{\begin{equation}}
\nc{\eeq}{{\end{equation}}}
\nc{\beqa}{\begin{eqnarray}}
\nc{\eeqa}{\end{eqnarray}}
\nc{\lbar}[1]{\overline{#1}}
\nc{\bra}[1]{\langle#1|}
\nc{\ket}[1]{|#1\rangle}
\nc{\ketbra}[2]{|#1\rangle\!\langle#2|}
\nc{\braket}[2]{\langle#1|#2\rangle}
\nc{\proj}[1]{| #1\rangle\!\langle #1 |}
\nc{\avg}[1]{\langle#1\rangle}
\nc{\smfrac}[2]{\mbox{$\frac{#1}{#2}$}}
\nc{\tr}{\operatorname{tr}}
\nc{\tracedist}[1]{\Delta_{}\!\left( #1 \right)}
\nc{\fid}[1]{F\!\left( #1 \right)}
\nc{\ox}{\otimes}
\nc{\dg}{\dagger}
\nc{\dn}{\downarrow}
\nc{\cA}{{\cal A}}
\nc{\cB}{{\cal B}}
\nc{\cC}{{\cal C}}
\nc{\cD}{{\cal D}}
\nc{\cE}{{\mathcal E}}
\nc{\cF}{{\cal F}}
\nc{\cG}{{\cal G}}
\nc{\cH}{{\cal H}}
\nc{\cI}{{\cal I}}
\nc{\cJ}{{\cal J}}
\nc{\cK}{{\cal K}}
\nc{\cL}{{\cal L}}
\nc{\cM}{{\cal M}}
\nc{\cN}{{\cal N}}
\nc{\cO}{{\cal O}}
\nc{\cP}{{\cal P}}
\nc{\cR}{{\cal R}}
\nc{\cS}{{\cal S}}
\nc{\cT}{{\cal T}}
\nc{\cU}{{\cal U}}
\nc{\cX}{{\cal X}}
\nc{\cZ}{{\cal Z}}
\nc{\entI}{{\bf I}}
\nc{\entIarrow}{{\bf I}^{\leftarrow}}
\nc{\entH}{{\bf H}}
\nc{\entS}{{\bf S}}
\nc{\entHmin}{H_{\min}}
\nc{\entF}{{\bf E}_f}
\nc{\isom}{\simeq}
\nc{\rank}{\operatorname{rank}}
\nc{\rar}{\rightarrow}
\nc{\lrar}{\longrightarrow}
\nc{\polylog}{\operatorname{polylog}}
\nc{\poly}{\operatorname{poly}}
\nc{\weight}{\textbf{w}}
\nc{\hamdist}{d_{H}}
\nc{\Sp}{{{\mathbb S}}}
\nc{\RR}{{{\mathbb R}}}
\nc{\CC}{{{\mathbb C}}}
\nc{\FF}{{{\mathbb F}}}
\nc{\NN}{{{\mathbb N}}}
\nc{\ZZ}{{{\mathbb Z}}}
\nc{\PP}{{{\mathbb P}}}
\nc{\QQ}{{{\mathbb Q}}}
\nc{\UU}{{{\mathbb U}}}
\nc{\OO}{{{\mathbb O}}}
\nc{\EE}{{{\mathbb E}}}
\nc{\id}{{\operatorname{id}}}
\nc{\qubitchannel}{\id_2}
\nc{\bitchannel}{\overline{\id}_2}
\nc{\be}{\begin{equation}}
\nc{\ee}{\end{equation}}
\nc{\bea}{\begin{eqnarray}}
\nc{\eea}{\end{eqnarray}}
\nc{\Hom}[2]{\mbox{Hom}(\CC^{#1},\CC^{#2})}
\nc{\rU}{\mbox{U}}
\nc{\ob}[1]{#1}
\renewcommand{\exp}[1]	{\operatorname{exp}\left( #1 \right)}
\newcommand{\colvec}[1]{\left[\!\!\!\begin{array}{c}#1\end{array}\!\!\!\right]}
\def\mcal{\mathcal}
\newcommand{\ICcap}{\mathcal{C}_{\mathrm{IC}}\!\left(\mathcal{N}\right) }
\def\HKR{\mathcal{R}_{\mathrm{HK}}}
\def\RSato{\mathcal{R}_{\mathrm{Sato}}}
\def\MAConeR{\mathcal{R}^1_{\mathrm{CMG}}}
\def\delMAConeR{\partial\mathcal{R}^1_{\mathrm{CMG}}}
\def\MACtwoR{\mathcal{R}^2_{\mathrm{CMG}}}
\def\delMACtwoR{\partial\mathcal{R}^2_{\mathrm{CMG}}}
\def\bbR{\mathbb{R}}
\def\MAConeRHK{\mathcal{R}^{(o,1)}_{\textrm{HK}}}
\def\MACtwoRHK{\mathcal{R}^{(o,2)}_{\textrm{HK}}}
\def\MACone{QMAC$_1$ }
\def\MACtwo{QMAC$_2$ }
\def\Sasoglou{\c{S}a\c{s}o\u{g}lu}
\def\ifthen#1#2{\ifthenelse{#1}{#2}{} }
\def\PIMACavg{\Pi^n_{\bar{\rho},\delta}}
\def\PIMACone{\Pi^n_{m_1}}
\def\PIMAConepr{\Pi^n_{m_1^{\prime}}}
\def\PIMACtwo{\Pi^n_{m_2}}
\def\PIMAConetwo{\Pi^n_{m_1,m_2}}
\def\PIMAConeprtwo{\Pi^n_{m_1^{\prime},m_2}}
\def\PIMAConetwopr{\Pi^n_{m_1,m_2^{\prime}}}
\def\PIMAConeprtwopr{\Pi^n_{m_1^{\prime},m_2^{\prime}}}
\def\PIone{\Pi_{u^n_1(\ell_1)}}
\def\PIonepr{\Pi_{u^n_1(\ell_1^{\prime})}}
\def\PIsigUone{\Pi_{\sigma_{u^n_1(\ell_1)}} }
\def\PIsigUonePr{\Pi_{\sigma_{u^n_1(\ell_1^\prime)}} }
\def\rhoNulul{\rho_{u^n_1(\ell_1),u^n_2(\ell_2) } }
\def\rhouu{\rho^{B_1B_2}_{u_1,u_2}}
\def\omgu{\omega^{B_1}_{u_1}}
\def\omgNul{\omega^{B_1}_{u_1(\ell_1)}}
\def\omgNulpr{\omega^{B_1}_{u_1(\ell'_1)} }
\def\ExpXtwo{ \mathop{\mathbb{E}}_{X_2^n} }
\def\ExpXone{ \mathop{\mathbb{E}}_{X_1^n} }
\def\ExpX{ \mathop{\mathbb{E}}_{X^n} }
\def\ExpXonetwo{ \mathop{\mathbb{E}}_{\!\!X_1^n,X_2^n} }
\def\PIavg{\Pi^{B^n}_{\bar{\rho}}}
\def\ExpUn{ \mathop{\mathbb{E}}_{U^n} }
\def\indicator{ \ \mathbf{1} }
\def\NExpXtilde{ \mathop{\mathbb{E}}_{\tilde{X}^n} }
\def\NExpX{ \mathop{\mathbb{E}}_{X^n} }
\def\NPrYtildegXtilde{\mathop{\mathrm{Pr}}_{\tilde{Y}^n|\tilde{X}^n }}
\def\NExpYtildegXtildeU{ \mathop{\mathbb{E}}_{\tilde{Y}^n|\tilde{X}^nU^n} }
\def\NPrYtildegXtildeU{\mathop{\mathrm{Pr}}_{\tilde{Y}^n|\tilde{X}^nU^n }}
\def\NExpXtildegU{ \mathop{\mathbb{E}}_{\tilde{X}^n|U^n} }
\def\NExpXgU{ \mathop{\mathbb{E}}_{X^n|U^n} }
\def\prhom{\rho_{x^n(m)}^{B^n} } 
\def\pPm{P_{m}^{B^n} } 
\def\pPmpr{P_{m^\prime }^{B^n} }
\def\pPIm{\Pi_{x^n(m)} }
\def\pPIavg{\Pi_{\bar{\rho} }  }
\def\pLAMm{\Lambda_{m}^{B^n} }
\def\prhoMnew{\rho_{U^n,\tilde{X}^n}^{B^n} }
\def\dingone{{\small \ding{172}} }
\def\dingtwo{{\small \ding{173} } }
\def\dingthree{{\small \ding{174} } }
\def\dingfour{{\small \ding{175} } }
\def\dingfive{{\small \ding{176} } }
\def\dingsix{{\small \ding{177} } }
\def\mdingone{\textrm{\ding{172}}}
\def\mdingtwo{\textrm{\ding{173}} }
\def\mdingthree{\textrm{\ding{174}} }
\def\mdingfour{\textrm{\ding{175}} }
\def\mdingfive{\textrm{\ding{176}} }
\def\mdingsix{\textrm{\ding{177}} }
\def\CMGR{\mathcal{R}_{\textrm{CMG}}}
\def\HKR{\mathcal{R}_{\textrm{HK}}}
\def\RSato{\mathcal{R}_{\textrm{Sato}}}
\def\delCMGR{\partial\mathcal{R}_{\textrm{CMG}}}
\def\delCMGRpr{\partial^{\prime}\mathcal{R}_{\textrm{CMG}}}
\def\cNpCMG{(\mcal{N},p_{\textrm{CMG}})}
\def\pCMG{p_{\textrm{CMG}}}
\def\shPulse{\,\rule[1mm]{0.7mm}{1pt}}
\def\medPulse{\,\rule[1mm]{2.1mm}{1pt}}
\def\longPulse{\,\rule[1mm]{4.2mm}{1pt}}
\newcommand{\prfsec}[1]{ \noindent{\bf #1:}\ }
\title{ \Large Network information theory for classical-quantum channels}
\author{Ivan Savov}
\begin{document}

%\pagestyle{empty}
%\maketitle

\thispagestyle{empty}

\begin{center}
\vspace*{1in}
\LARGE{\textbf{Network information theory \\ for classical-quantum channels}} \ifthenelse{\boolean{BOOKFORM}}{\\[2cm]}{\\[5cm]}
Ivan Savov \\[2cm]
\normalsize{School of Computer Science}\\
\normalsize{McGill University, Montr\'eal}\\
July 2012\\[4.5cm]
A thesis submitted to McGill University in partial fulfillment of the requirements of the degree of Ph.D. \\[1cm]
\normalsize{\copyright Ivan Savov, 2012}\author{}\date{} \\

\end{center}

\pagenumbering{roman}                  % Sets the pagenumbering to Roman nunerals to begin with

%\pagestyle{empty}

%contributions of authors, list of abbreviations%%%%%%%%%%%%%%%%%%%%%
\onehalfspacing
%\input{Notation}

%Index of Figures and Tables%%%%%%%%%%%%%%%%%%%%%%%%%%%%%
%\input{front3}

\cleardoublepage

\frontmatter
\pagenumbering{roman}                  % Sets the pagenumbering to Roman nunerals to begin with

%\mbox{}
%\pagestyle{plain}

%%\include{abstract2}
%needed to get rid of chapter0 in front section
%\fancyhead[RO]{}
\parindent=0.3in %for the front section paragraph indents are not necessary

%Add Abstract,Acknowledgements%%%%%%%%%%%%%

\singlespacing

\ \\ 
\noindent
\begin{center}
{\bf \Huge Acknowledgements}
\end{center}
\ \\

	This work would not have been possible without the continued support of my supervisor 	Patrick Hayden. 
	He introduced me to many interesting mathematical research questions 
	at the intersection of quantum physics and computer science.	
	His outstanding abilities as a researcher, teacher and explainer have been 
	an inspiration for me throughout the many years that I have known him. 
	%	and studied them with the mathematical precision.
	% at the intersection of Mathematics, Computer Science, Physics	
	%	mathematical precision
	%	and avoid the platitudes of the corporate world. 
	%
	I am also very grateful to Mark M. Wilde for lending me his 
	expertise on all aspects of quantum Shannon theory.
	%unique ability to see the ideas and continued efforts on our numerous joint projects 
	%	and all the 
	%	original ideas 
	%	the many explanations in quantum Shannon theory.
	%	 and  learn a great deal.
	I would like to thank Omar Fawzi, Pranab Sen, Mai Vu and Saikat Guha for the numerous discussions 
	and their ability to distill complicated mathematical arguments into intuitive explanations.
	% about complicated for information theory.
	% 
	I want to thank Olivier Landon-Cardinal, Adriano Ferrari, Grant Salton and Benno Salwey for 
	their help with the preparation of this manuscript. 
	%	I also owe many thanks to Prof. David Avis and Leonid Chindelevitch for their assistance with some of 
	%	the most difficult parts in this work.
	There are many other people who deserve an honourable mention and my gratitude
	for either directly or indirectly
	influencing me: Jan Florjanczyk, Andie Sigler, Eren \Sasoglou,  Gilles Brassard and Andreas Winter.
	I also want to thank my family for supporting me in my scientific endeavours.
	%
	%	Last but not least I want to thank NSERC and FQRNT for their financial support.

\cleardoublepage

\onehalfspacing

\pagebreak
\hspace{2mm}
%\noindent
\begin{center}
{\bf \Huge Abstract}
\end{center}
\hspace{2mm}
\ifthenelse{\boolean{BOOKFORM}}{\singlespacing}{\onehalfspacing}
%!TEX root = thesis.tex

	Network information theory is the study of communication problems
	involving multiple senders, multiple receivers and intermediate relay stations.
	The purpose of this thesis is to extend the main ideas of 
	classical network information theory to the study of classical-quantum channels.
	We prove coding theorems for the following communication problems:
	quantum multiple access channels, quantum interference channels, 
	quantum broadcast channels and quantum relay channels.
	%	In each of the multiuser communication scenarios above,
	%	We establish achievable rate regions for classical communication
	%	for each of the above channels.
	%	based on th
	
	%	
	%	Classical models represent 	
	A quantum model for a communication channel describes more accurately the channel's 
	ability to transmit information.
	By using physically faithful models for the channel outputs
	and the detection procedure, we obtain better communication
	rates than would be possible using a classical strategy. %possible under a classical model.
	In this thesis, we are interested in the transmission of classical information,
	so we restrict our attention to the study of \emph{classical-quantum} channels.
	These are channels with classical inputs and quantum outputs,
	%	which we call 
	%
	and so the coding theorems we present will use 
	%class of \emph{classical-quantum} coding 
	%	strategies which use 
	classical encoding and quantum decoding.
	%	and we will restrict our attention to  it is possible to restrict 
	%
	%	Indeed, using quantum joint detection measurements on all outputs
	%	of the channel leads to an increase in the communication rates
	%	relative to what would be possible if each output system were
	%	measured independently.	
	%	The model which we choose to investigate is that of 
	%	which are a useful abstraction for studying general quantum channels 
	%	where the transmitters are restricted to using classical encodings.		

	We study the asymptotic regime where many copies of the channel
	are used in parallel, and the uses are assumed to be independent.
	In this context, 
	%	This is the memoryless setting
	%
	%	
	we can exploit information-theoretic techniques to calculate 
	the maximum rates for error-free communication 
	for any channel, given the statistics of the noise on that channel.	
	%	  techniques requiring only pen and paper,	
	%for that channel.
	%
	These theoretical bounds can be used as a benchmark
	to evaluate 
	%	how close to the theoretical limit 
	the rates achieved by practical communication protocols.

	Most of the results in this thesis consider 
	classical-quantum channels with finite dimensional output systems,
	which are analogous to classical discrete memoryless channels.
	%	The main contributions in these chapters are of theoretical interest.
	%
	In the last chapter, we will show some applications of our
	results to a practical optical communication scenario,
	%	We model the quantum state of photons In that model, 
	in which the information is encoded in continuous quantum 
	degrees of freedom, which are analogous to 
	classical channels with Gaussian noise.
	%	
	%	Photons, which are the 
	%	information carriers in that carry the information are represented as 
	%	continuous degrees of freedom

	%	in which the output systems are infinitely dimensional.
	%	 quantum output 

	%	in which 
	%		
	%	in which channels in which we model the channel outputs as quantum where
	%	the a practical application
	%	of the 
	%	of our results to the problem of optical communication.
	%		
	%	 are  quantum
	%	systems 
	%	
	%	finite dimensional quantum outputs
	%	
	%	
	%	 that 
	%	corr in some  Hilbert space
	%	
	%	if my modem can achieve the maximum 
	%	 to transmit information
	%	in bits 
	%	theoretical limit  
	%	 pap  armed with nothing more than pen and paper 
	%	
	%	
	%	The classical-quantum analogue of the 
	%	corresponds to the case 
	%	and the channel uses are assumed to be independent.
	%	%
	%	
	%	
	%	
	%	 systems channel is the 
	%	classical-quantum 
	%	
	%	block  multiple independent 
	%	the channel is used will use information theoretic techniques from classical 
	%	information theory 
	%	We will quantify the 
	%	Most of the results reported in this thesis are strictly theoretical,
	%	since they assume that we have at our disposal  an arbitrary 
	%	number of identical copies of the channel.
	%	Nevertheless these theoretical results allow us to characterize
	%	the limits of the communication rates that would be possible in principle.

\cleardoublepage

\hspace{2mm}
\noindent
\begin{center}
{\bf \Huge R\'esum\'e}
\end{center}
\hspace{2mm}
\ifthenelse{\boolean{BOOKFORM}}{\singlespacing}{\onehalfspacing}

%!TEX root = thesis.tex

La th\'{e}orie de l'information multipartite \'{e}tudie les probl\`{e}mes de communication
avec plusieurs \'{e}metteurs, plusieurs r\'{e}cepteurs et des stations relais.  
L'objectif de cette th\`{e}se est d'\'{e}tendre les id\'{e}es centrales de la 
th\'{e}orie de l'information classique \`{a} l'\'{e}tude des canaux quantiques.
Nous allons nous int\'{e}resser aux sc\'{e}narios de communication suivants: les
canaux quantiques \`{a} acc\`{e}s multiples, les canaux quantiques \`{a} interf\'{e}rence, les
canaux quantiques de diffusion et les canaux quantiques \`{a} relais.
% g\'{e}n\'{e}rale 
Dans chacun des ces sc\'{e}narios de communication, nous caract\'{e}risons les
taux de communication r\'{e}alisables pour l'envoi d'information classique sur ces
canaux quantiques.  

La mod\'{e}lisation quantique des canaux de communication est
importante car elle fournit une repr\'{e}sentation plus pr\'{e}cise de la capacit\'{e} du
canal \`{a} transmettre l'information. En utilisant des mod\`{e}les physiquement
r\'{e}alistes pour les sorties du canal et la proc\'{e}dure de d\'{e}tection, nous obtenons
de meilleurs taux de communication que ceux obtenus dans un mod\`{e}le classique.
En effet, l'utilisation de mesures quantiques collectives sur l'ensemble des
syst\`{e}mes physiques en sortie du canal permet une meilleure extraction
d'information que des mesures ind\'{e}pendantes sur chaque sous-syst\`{e}me.
Nous avons choisi d'\'{e}tudier les canaux \`{a} entr\'{e}e classique et sortie quantique
qui constituent une abstraction utile pour l'\'{e}tude de canaux quantiques
g\'{e}n\'{e}raux o\`{u} l'encodage est restreint au domaine classique. 
%qui correspondent aux situations

%La plupart des
%r\'{e}sultats pr\'{e}sent\'{e}s dans cette th\`{e}se sont th\'{e}oriques puisqu'ils supposent
%d'avoir acc\`{e}s \`{a} un nombre arbitraire de copies identiques du canal. N\'{e}anmoins,
%

%
Nous \'{e}tudions le r\'{e}gime asymptotique o\`{u} de nombreuses copies de du canal sont utilis\'{e}es en parall\`{e}le, 
et les utilisations sont ind\'{e}pendantes. 
Dans ce contexte, il est possible de caract\'{e}riser les limites absolues sur la transmission d'information
d'un canal, si on connait les statistiques du bruit sur ce canal. 
Ces r\'{e}sultats th\'{e}oriques peuvent \^{e}tre utilis\'{e}es comme un point de rep\`{e}re 
pour \'{e}valuer la performance des protocoles de communication pratiques.

Nous consid\'{e}rons surtout les canaux o\`{u} les sorties 
sont des syst\`{e}mes quantiques de dimension finie, 
analogues aux canaux classiques discrets.
%
% dits sans m\'{e}moire.
Le dernier chapitre pr\'{e}sente des applications pratiques de 
nos r\'{e}sultats \`{a} la communication optique,
o\`{u} syst\`{e}mes physiques auront des degr\'{e}s de libert\'{e} continus.
Ce contexte est analogue aux canaux classiques avec bruit gaussien.

\cleardoublepage

%\renewcommand{\appendixname}{}
%Insert and Format Table of Contents%%%%%%%%%%%%%%%%%%%%%%%%%
\singlespacing
\renewcommand\contentsname{Table of Contents}
\tableofcontents

\enlargethispage{\baselineskip}

\cleardoublepage

\parindent=0.3in %need indenting now

\onehalfspacing

\cleardoublepage

%channel to other problems in quantum network information theory
%
%The author's contributions include the preparation a paper on the broadcast channel in \cite{SW11}, 
%as well as the calculation of some
%the bosonic interference channels 
%the \emph{decode and forward} strategy, which was later superseded for the  

%
%\cite{guha2007classical}

%Sen independently obtained \cite{S11a}

\ \\ 
\noindent
\begin{center}
{\bf \Huge   \   Notation}
\end{center}
\ \\
\label{notation-page}

{ \small
\begin{center}
\begin{tabular}{rcl}
	{\sc Classical}			& &							{\sc Quantum} \\[4mm]
	$y_a \ \in \ \mcal{Y}$				&$\Longleftrightarrow$	&		$\ket{v}^B  \  \in \  \mcal{H}^B$ \\[-1mm]
	symbol from a finite set 			&					& 		vector in a Hilbert space \\[5mm]
	$p_Y \ \in \ \mcal{P}(\mcal{Y})$		&$\Longleftrightarrow$	&		$\rho^B \  \in \  \mcal{D}(\mcal{H}^B)$ \\[-1mm]
	probability distribution  			&					& 		density matrix $\equiv$ quantum state \\[-1mm]
	$p_Y(y) \geq 0, \ \forall y \in \mcal{Y}$  	& 				& 		$\bra{v}\rho^B\ket{v}\geq 0, \ \forall \ket{v} \in \mcal{H}^B$  \\[0mm]
	$\sum_y p_Y(y) = 1$	  			& 					& 		$\Tr[\rho^B] = 1$,  $\quad (\rho^B)^\dagger = \rho^B$  \\[5mm]
	$p_{Y|X}$						&$\Longleftrightarrow$	&		$\{ \rho_x^B \}, x \in \mcal{X}$ \\[-1mm]
	conditional probability distribution	&					& 		conditional states \\[-1mm]
	$\equiv$ classical-classical channel &					& 		$\equiv$ classical-quantum channel \\[5mm]
	$p_{XY}(x,y) \equiv p_X(x)p_{Y|X}(y|x)$	&$\Longleftrightarrow$	&	$\theta^{XB} \equiv \sum_x p_X(x)\;\ketbra{x}{x}^X \otimes \rho_x^B$ \\[-1mm]
	joint input-output distribution		&					& 		joint input-output state \\[5mm]
	$p_{\bar{Y}} \equiv \mathop{\mathbb{E}}_{X} p_{Y|X}$
	%	= \sum_{x \in \mcal{X}} p_X(x)p_{Y|X}(y|x)$		
								&$\Longleftrightarrow$	&
												$\bar{\rho}^B \equiv \mathop{\mathbb{E}}_{X} \rho_X^B$ \\[-1mm]
	average output distribution  		&					& 		average output state \\[5mm]
	$\indicator_{ \left\{  y^n \in \mathcal{T}^{(n)}_\delta\!(\bar{Y})  \right\} }$
								&$\Longleftrightarrow$	&
												$\Pi_{\bar{\rho}} \equiv \Pi_{\bar{\rho}^{\otimes n}, \delta}^{B^n}$ \\[0mm]
	indicator function for 			&					& 		projector onto the \\[-1mm]
	the output-typical set  			&					& 		output-typical subspace \\[5mm]
	$\indicator_{ \left\{  y^n \in \mathcal{T}^{(n)}_\delta\!(Y|x^n)  \right\} }$
								&$\Longleftrightarrow$	&
												$\Pi_{x^n} \equiv \Pi_{\rho_{x^n}, \delta}^{B^n}$ \\[0mm]
	indicator function for the  			&					& 		conditionally typical \\[-1mm]
	conditionally typical set 	 		&					& 		projector for the state $\rho^{B^n}_{x^n}$ \\[5mm]

	%
%	$y_1 \in \mcal{P}(\mcal{Y})$		&$\Longleftrightarrow$	&		$\ket{\psi_y}^B  \in \mcal{H}^B$ \\[-1mm]
%	symbol from a finite set 			&				& 		vector in a Hilbert space \\[3mm]
%	\mcal{X}						&\Longleftrightarrow	& 		\text{orthonormal basis for } \mcal{H}^A \\
%	\text{symbol  } x_i	 			&\Longleftrightarrow	& 		\text{basis element } \ket{\psi_i}^A \\
%	\text{ R.V. } X					&\Longleftrightarrow	& 		\text{  } \rho^A 
\end{tabular}

\end{center}
}

\clearpage

\onehalfspacing
\mainmatter

%Formatting for chapters %%%%%%%%%%%%%%%%%%%%%%%%%%%%%%%
% Change headings, to include chapter names for main text
%\fancyhead[LE]{\leftmark}
\fancyhead[LE]{{\small\emph{\thesection \  \rightmark}}}
\fancyhead[RO]{{\small\emph{Chapter \thechapter}: \leftmark}}
\pagestyle{fancy}
\pagenumbering{arabic}
\setcounter{page}{1}
% Change Chapter Headings
%\titleformat{\chapter}[display]
%{\normalfont\Large\filcenter\sffamily}
%{\titlerule[1pt]%
%\vspace{1pt}%}
%\titlerule
%\vspace{1pc}%
%\LARGE\MakeUppercase{\chaptertitlename} \thechapter}
%{1pc}
%{\titlerule
%\vspace{1pc}
%\Huge}

\tikzset{
	cnode/.style={isosceles triangle,
				isosceles triangle apex angle=60,thick,
				draw=blue!75, fill=blue!20,minimum height=6mm,inner sep=0.3mm},
	qnode/.style={circle,thick,draw=blue!75,fill=blue!20,minimum size=6mm,inner sep=0.3mm},
	every label/.style=	{black, font=\footnotesize}
}

%\maketitle

%\setcounter{tocdepth}{3}
%\tableofcontents
%\listoffigures
%\listoftables

%
%  NOT WORKING WELL
%
%\pagebreak
%\ \\ 
%\noindent
%{\bf \Huge List of theorems}
%\ \\
%\theoremlisttype{theorem}
%\listtheorems{theorem}

%!TEX root = thesis.tex

\chapter{Introduction}

    The central theme of this work is the transmission of information
    through noisy communication channels.
    The word \emph{information} means different things to different people,
    so it is worthwhile to begin the discussion with a clear definition of the term.
    Statements like 
    ``Canada has an information-based economy''
    suggest that information is some kind of commodity that can be 
    shipped on trains for export like oil or lumber.
    In the world of digital electronics, the word information is used 
    as a synonym for the word \emph{data}
    as in ``How much information can you store on your USB memory stick?''.
	%    more in the sense of quantifying the size of some file
	%    or data transmissions.
    In that context, most people would say that a 7MB \texttt{mp3} file
    contains just as much information as a 7MB file full of zeros.

    In this work we will use the term \emph{information} in the sense
    originally defined by Claude Shannon \cite{S48}.
	%    had a different notion for what information is
	%    and it is his meaning of the term that we will use in this work.
    %
	%    In his seminal paper from 
    Shannon realized that in order to study the problems of information
    storage and information transmission mathematically, 
    we must step away from the \emph{semantics} of the 
    messages and focus on their \emph{statistics}.
    Using the notions of entropy, conditional entropy and 
    mutual information,
    we can \emph{quantify} the information
    content of data sources and the information transmitting abilities of 
    noisy communication channels.

	We can arrive at an \emph{operational} interpretation of the 
	information content of a data source in terms of our ability to compress it.
	The more unpredictable the content of the data is,
	the more information it contains.
	Indeed, if we use \texttt{WinZip} to compress the \texttt{mp3} file
	and the file full of zeros, we will see that the latter 
	will result in a much smaller \texttt{zip} file,
	which is expected since a file full of zeros has less
	uncertainty and, by extension, contains less information.

	We can similarly give an operational interpretation of
	the information carrying capacity of a noisy communication channel
	in terms of our ability to convert it into a noiseless channel.
	Channels with more noise have a smaller capacity for carrying information.
	Consider a channel which allows us to send data at the rate
	of 1 MB/sec on which half of the packets sent get lost 
%	will not delivered	because 
	due to the effects of noise on the channel.
	It is not true that the capacity of such a channel is 1 MB/sec,
	because we also have to account for the need to 
	retransmit lost packets.
	In order to correctly characterize the information carrying 
	capacity of a channel, we must consider the rate of
	the end-to-end \emph{protocol} which 
	%takes into account the encoding and decoding 
	converts many uses 	of the noisy channel into an effectively noiseless 
	communication channel.
	%

%    The work in information theory is one of the many steps of 
%    \emph{modelling} that sientists and engineers do when trying to develop
%    communication systems. 
    %interested in communication theory employ all
	%kinds of models to describe the transmission of information.
	%
%	Radio engineers model the effects of radio wave reflections (multipath)
%	on the continuous radio signal and design codes that can transmit 
%	information despite these effects.
	%	buildings in an urban environment,
%	A network administrator, on the other hand, sees the 
%	signal transmission as a solved problem and instead 
%	thinks of routing discrete packets over  ``pipes'',
%	possibly defending against transmission errors at the protocol layer
%	using retransmission.
%	%
%	At yet another level of abstraction, software engineers develop the efficient information encoding 
%	techniques in order to compress a video signal into as few packets as possible,
%	assuming that the network layer will take care of the reliable transmission of these packets.
%	%
%	This hierarchy of representation models for information
%	forms the invisible backdrop for much of the modern information 
%	age.

%	 commonly associated with buzz words like: WiFi, 802.11b/g/n, 
%	3G, iAndr{\sc berry}.
%	%	 of much of the recent modern
%	%	  communication theory.
%
%	Watching a hockey game live on a mobile device wouldn't be possible
%	without 60+ years of modelling, theoretical work and engineering 
%	effort.
	%	invisible  backdrop which makes  is the invisible 
	%	explanation seems is a pervasive  to be a productive way to model communications.
	%

\section{Information theory}

		\begin{wrapfigure}{r}{0pt}%
		%	\vspace*{-2cm}
		\begin{tikzpicture}[node distance=2.0cm,>=stealth',bend angle=45,auto,scale=3]
		  \begin{scope}
			% P-to-P channel
			\node [cnode] (Tx) [ label=left:Tx    ]                            {\small $x$} ; %{\scriptsize $x_1$};
			\node [cnode] (Rx) [ label=right:Rx, right of=Tx, xshift=7mm] {\small $y$} % {\scriptsize y}
				edge  [pre]             node[swap]  {\!\!$\lightning \equiv p(y|x)$}    (Tx) ;
		  \end{scope}
		  \begin{pgfonlayer}{background}
		    \filldraw [line width=4mm,join=round,black!10]
		      ([xshift=-1mm,yshift=+1mm]Tx.north -| Tx.east) rectangle (Tx.south -| Rx.west);
		  \end{pgfonlayer}
		\end{tikzpicture}
		\caption{\small A point-to-point channel 
		%			is modelled as a conditional
		%		  	probability distribution 
				$\equiv p_{Y|X}(y|x)$. }
		\label{fig:cp-to-p}
		\end{wrapfigure}
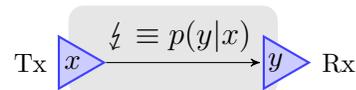
	
		%	\subsection{Classical models}

		Information theory  studies models of communication which are amenable
		to mathematical analysis.
		In order to model the effects of noise ($\lightning$) in a point-to-point communication 
		scenario, we represent the inputs and outputs of the channel probabilistically.
		We describe the channel as a triple $(\mcal{X}, p_{Y|X}(y|x), \mcal{Y})$,
		where $\mcal{X}$ is the set of possible symbols that the Transmitter (Tx) can send,
		$\mcal{Y}$ is the set of possible outputs that the Receiver (Rx) can obtain
		and $p_{Y|X}(y|x)$ is a conditional probability distribution describing the
		channel's transition probabilities.
		This model is illustrated in Figure~\ref{fig:cp-to-p}, where 
		random variables are
		pictorially represented as small triangles (\! %
		\begin{tikzpicture}[node distance=1cm,>=stealth',bend angle=45,auto, scale=0.5]
		  \tikzstyle{cnode}=[isosceles triangle,isosceles triangle apex angle=60,thick,draw=blue!75,
		  fill=blue!20,minimum height=4mm] 
		  \tikzstyle{qnode}=[circle,thick,draw=blue!75,fill=blue!20,minimum size=5mm] 
		  \tikzstyle{every label}=[black, font=\footnotesize]
		  \begin{scope}
			\node [cnode,scale=0.5,yshift=+4mm] (Tx) [   ]                            {} ;
		  \end{scope}
		  \end{tikzpicture}).
		%
		%	
		%	studies several models of communication channels,
		%	provides 
		%	The concepts of information theory can be useful in several places
		%	in this hierarchy.
		%	In particular 
		%	%
		%	channel models Shannon	
		For example, the noiseless binary channel 
		is represented as the triple $(\{0,1\}, p_{Y|X}(y|x)=\delta(x,y), \{0,1\} )$.
		%		where the channel transition probability function 
		%		is deterministic: $p_{Y|X}(y|x)=\delta(x,y)$.
		%	
		Using this model of the channel, it is possible to calculate  
		the optimal communication rates from Transmitter to Receiver 
		in the limit of many independent uses of the channel \cite{S48}. 
		These theoretical results have wide-reaching applications in
		many areas of communication engineering but also in other
		fields like cryptography, computer science, neuroscience
		and even economics.
		So long as a probabilistic model for the channel at hand 
		is available, we can use this model and the techniques of 
		information theory to arrive at precise mathematical statements
		about its suitability for a given communication task
		in the limit of many uses of the channel.

\section{Network information theory}

	Network information theory is the extension of Shannon's 
	model of noisy channels to communication scenarios with 
	%	 predicts the theoretical upper bounds on how information can flow between 
	multiple senders and multiple receivers \cite{el1980multiple,CT91,el2010lecture}.
	To model these channels probabilistically, we use multivariate conditional probability 
	distributions. 
	Some of the most important problems in network information theory
	are shown in Figure~\ref{fig:cnit}, and the relevant class of probability distributions
	is also indicated.
	%
	%	These problems are more 
	%	There are, however, a number of solved problems in information 
	%	theory that I would like to introduce below.
	%
	%While a general theory of $N$ transmitters and $M$ receivers \emph{can} be formulated, it is 
	%\begin{quote}
	%	A complete theory combining distributed source coding and network 
	%	channel coding is still a distant goal.\cite{CT91}
	%\end{quote}
%

\begin{figure}[htb]
\begin{center}
\subfigure[MAC $\equiv p(y|x_1,x_2)$]{
\hspace{-10mm}
\begin{tikzpicture}[node distance=1.8cm,>=stealth',bend angle=45,auto,xscale=0.8,yscale=0.8]
  \begin{scope}
	% MAC channel
	\node [cnode] (Tx1) [ label=left:Tx1     ]                            {\footnotesize $x_1$} ; 
	\node [cnode] (Tx2) [ label=left:Tx2, below of=Tx1]                {\footnotesize $x_2$}; 
	\node [cnode] (Rx) [ label=right:Rx, right of=Tx1,yshift=-10mm] {\footnotesize $y$} % {\scriptsize y}
		edge  [pre]             node[swap]  {$\lightning$}    (Tx1)
		edge  [pre]             node		   {$\lightning$}         (Tx2) ;
  \end{scope}
  \begin{pgfonlayer}{background}
    \filldraw [line width=4mm,join=round,black!10]
      ([xshift=-3mm]Tx1.north -| Tx1.east) rectangle ([xshift=+1mm]Tx2.south -| Rx.west);
  \end{pgfonlayer}
\end{tikzpicture}
}%
\subfigure[BC $\equiv p(y_1,y_2|x)$]{
\!\!\!\!\!\!\!
\begin{tikzpicture}[node distance=2.0cm,>=stealth',bend angle=45,auto,xscale=0.8,yscale=0.8]

  \begin{scope}
	% BC channel
	\node [cnode] (Tx) [ label=left:Tx,yshift=-10mm   ]                            {$x$};
	\node [cnode] (Rx1) [ label=right:Rx1, right of=Tx,yshift=+10mm]	{\footnotesize $y_1$}
		edge  [pre]             node[swap]  		{$\lightning$}    (Tx);
	\node [cnode] (Rx2) [ label=right:Rx2, right of=Tx,yshift=-10mm] { $y_2$}
		edge  [pre]             node		   {$\lightning$}         (Tx) ;
  \end{scope}
  \begin{pgfonlayer}{background}
    \filldraw [line width=4mm,join=round,black!10]
      (Rx1.north -| Tx.east) rectangle (Rx2.south -| Rx2.west);
  \end{pgfonlayer}

\end{tikzpicture}
}\ifthenelse{\boolean{BOOKFORM}}{ }{}%
\subfigure[IC $\equiv p(y_1,y_2|x_1,x_2)$]{
\!\!\!\!\!\!
\begin{tikzpicture}[node distance=2.0cm,>=stealth',bend angle=45,auto,xscale=0.8,yscale=0.8]

  \begin{scope}
	% IC channel
	\node [cnode] (ICTx1) [ label=left:Tx1   ]                            {\footnotesize $x_1$};
	\node [cnode] (ICTx2) [ label=left:Tx2, below of=ICTx1]		{\footnotesize $x_2$};
	\node [cnode] (ICRx1) [ label=right:Rx1, right of=ICTx1]	{\footnotesize $y_1$}
		edge  [pre]             		node[swap]  	{$\lightning$}	(ICTx1)
		edge  [pre,draw=red]	node[yshift=-1.5mm]	  	{$\lightning$}	(ICTx2);
	\node [cnode] (ICRx2) [ label=right:Rx2, right of=ICTx2] {\footnotesize $y_2$}
		edge  [pre,draw=red]	node[swap,yshift=+1mm]	  	{$\lightning$}	(ICTx1)
		edge  [pre]             node		   {$\lightning$}		(ICTx2) ;
  \end{scope}
  \begin{pgfonlayer}{background}
    \filldraw [line width=4mm,join=round,black!10]
      (ICTx1.north -| ICTx1.east) rectangle (ICRx2.south -| ICRx2.west);
  \end{pgfonlayer}

\end{tikzpicture}
}%
\subfigure[RC $\equiv p(y_1,y|x,x_1)$]{
\!\!\!\!\!\!\!
\begin{tikzpicture}[node distance=1.9cm,>=stealth',bend angle=45,auto,scale=0.5]

  \tikzstyle{cnode}=[isosceles triangle,isosceles triangle apex angle=60,thick,draw=blue!75,fill=blue!20,minimum height=6.5mm,inner sep=0mm] 
  \tikzstyle{qnode}=[circle,thick,draw=blue!75,fill=blue!20,minimum size=6mm] 

  \tikzstyle{every label}=[black, font=\footnotesize]

  \begin{scope}
	% Relay channel
	\node [cnode] (RCTx) [ label=left:Tx,yshift=-20mm   ]                            {$x$};
	\node [cnode] (RCRERx) [label=above:$\ \ \ \ \textrm{Re}$, above of=RCTx,xshift=+8mm ]		{$y_1$}
		edge  [pre]             		node[swap]	  	{$\lightning$}	(RCTx);
	\node [cnode] (RCRETx) [above of=RCTx,xshift=+13mm ]		{$x_1$};
	%	\node [cnode] (RCRE) [ label=above:Re, above of=RCTx,xshift=+10mm ]		{}
	%		edge  [pre]             		node[swap]	  	{$\lightning$}	(RCTx);
	\node [cnode] (RCRx)  [ label=right:Rx, below of= RCRETx,xshift=+15mm]	{$y$}
		edge  [pre]             		node	  	{$\lightning$}	(RCTx)
		edge  [pre]				node[swap]  	{$\lightning$}	(RCRETx);
  \end{scope}
  \begin{pgfonlayer}{background}
    \filldraw [line width=4mm,join=round,black!10]
      ([xshift=2mm]RCRERx.south -| RCTx.north) rectangle (RCRx.south -| RCRx.west);
  \end{pgfonlayer}
  
\end{tikzpicture}%
}%
\end{center}%

\caption{\small Classical network information theory studies communication channels
with multiple senders and multiple receivers.
These include, among others, 
(a) multiple access channels (MACs), 
(b) broadcast channels (BCs),
(c)~interference channels (ICs),
and (d) relay channels (RCs).
}
\label{fig:cnit}
\end{figure}
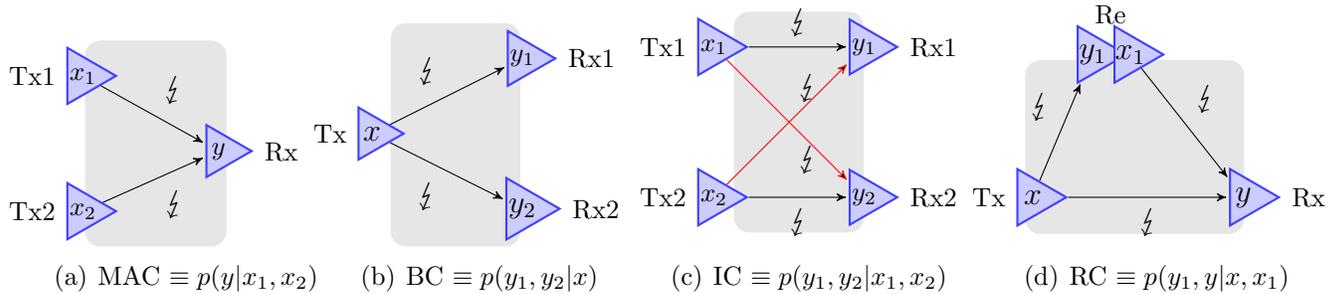

Each of the above channels is a model for some practical 
communication scenario.
% describes important  for some practical application.
%shown in the above figure is important for
%some practical 
%we use many of the same techniques as for point-to-poing channels,
% random coding, jointly typical decoding, etc.
%but each of the problems are interesting in their own right
%and require different techniques.
%
In the multiple access channel, there are multiple transmitters 
trying to talk to a single base station,
and we can describe the tradeoff between the communication rates
that are achievable for the inbound communication links.
% senders can obtain depending on the decoding 
%used by the Receiver.
%
The broadcast channel is the dual problem 
% the situation is reversed and we have a 
in which a single transmit antenna emits 
multiple information streams intended for different receivers.
We can additionally have a \emph{common} information stream intended for both receivers.
%%% TODO: explain about common and indep streams
%
%We assume that each of the transmitters is 
%interested in a different information stream.
%and therefore have to show encoding 
Coding strategies for broadcast channels involve 
encodings that  can ``mix'' the information streams to
produce the transmit signal. 
%sent by the Transmitter.
%such that a they can be broadcast from a single antenna.
% interested in different information streams.
%
Interference channels model situations where multiple independent
transmissions are intended, but \emph{crosstalk} occurs
because the  communication takes place in a shared medium.
% by far the most complex model even in the simplest
%form with two senders and two receivers.
%senders trying to communicate to two different receivers,
%like for example two 
%are shared
% are subscribed to two different internet service providers (ISPs).
%Each user wants to communicate with their own ISP, but
%the twisted copper wires  are very leaky at high frequencies 
%
The relay channel is a \emph{multi-hop} information network.
%These introduce a whole new set of issues, like the 
%The time A time-delay of one block is assumed which allows 
The Relay is assumed to decode the message during one block
of uses of the channel and re-transmit the information it has decoded
during the next block.
This allows the Receiver to collectively decode the information 
from both the Transmitter and the Relay
and achieve better communication rates than
what would be possible with point-to-point codes.
% by combining the received signal from both blocks

\section{Quantum channels}

	Classical models are not adequate for the characterization of the information 
	carrying capacity of 
	communication systems in which the information carriers are quantum systems.
	Such systems need not be exotic: in optical communication links, %fiber optic links, 
	the carriers are photons,
	which are properly described by quantum electrodynamics and only approximately described 
	by Maxwell's equations.
	A more general model for communication channels is one which
	takes into account the underlying laws of physics
	concerning the encoding, transmission and decoding of information using quantum systems.
	%
	%A quantum decoding strategy for $n$ uses of the channel allows 
	Quantum decoding based on \emph{collective} measurements of all the channel outputs in parallel
	can be shown to achieve higher communication rates compared to 
	classical decoding strategies in which the channel outputs are measured individually.
	%   it can be shown that performing collective measurements over 	  the decoding of information over 
	%	 associated with the information decoding.
	%
	%	Rather than treating as than use a prob. distr.
	%	use a density matrix.

	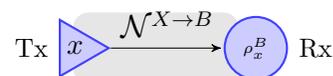
\begin{wrapfigure}{r}{0pt}%
%	\vspace*{-2cm}
	\begin{tikzpicture}[node distance=2.0cm,>=stealth',bend angle=45,auto,scale=3]
  \tikzstyle{cnode}=[isosceles triangle,isosceles triangle apex angle=60,thick,draw=blue!75,fill=blue!20,minimum height=6.5mm,inner sep=0mm] 
%
%	  \tikzstyle{cnode}=[isosceles triangle,isosceles triangle apex angle=60,thick,draw=blue!75,
%	  fill=blue!20,minimum height=0.3cm] 
	  \tikzstyle{qnode}=[circle,thick,draw=blue!75,fill=blue!20,minimum size=6mm] 
	  \tikzstyle{every label}=[black, font=\footnotesize]

	  \begin{scope}
		% c-q P-to-P channel
		\node [cnode] (Tx) [ label=left:Tx    ]                            {\small $x$} ; %{\scriptsize $x_1$};
		\node [qnode] (Rx) [ label=right:Rx, right of=Tx,xshift=4mm] {\tiny $\rho_x^B$} % {\scriptsize y}
			edge  [pre]             node[swap]  {$\mcal{N}^{X\to B}$}    (Tx) ;
	  \end{scope}
	  \begin{pgfonlayer}{background}
	    \filldraw [line width=4mm,join=round,black!10]
	      ([xshift=-1mm]Tx.north -| Tx.east) rectangle (Tx.south -| Rx.west);
	  \end{pgfonlayer}
	\end{tikzpicture}
	\caption{\small A point-to-point classical-quantum %c-q 
				channel
	%			is modelled as a conditional
	%		  	probability distribution 
			$\{ \rho_x \}$.}
	\label{fig:cq-channel-diagram}
	\end{wrapfigure}
		
	Of particular interest are \emph{classical-quantum} channel models,
	which model the sender's inputs as classical variables
	and the receiver's outputs as quantum systems.
	A classical-quantum channel 
	$(\mcal{X}, \mcal{N}^{X\to B}(x)\!\equiv\! \rho^B_x, \ \mcal{H}^B)$
	is fully specified by the finite set of output states $\{ \rho^B_x \}$
	it produces for each of the possible inputs $x \in \mcal{X}$.
	%	there corresponds an output quantum state
	%	%	described by a density operator 
	%	$\rho^B_x$.
	%	in a finite-dimensional Hilbert space $\mcal{H}^B$.
	%
	%	In this setting the measurement 
	%
	%
	%
	%
	%	Note that all codes for classical-quantum relay channels
	%	generalize directly % codes  for  classical communication codes
	%	to fully quantum channels where the inputs are also quantum.
	%
	%
	Figure~\ref{fig:cq-channel-diagram} depicts a
	classical-quantum channel, in which the quantum output system 
	is represented by a circle: 
		 	 \begin{tikzpicture}[node distance=2.0cm,>=stealth',bend angle=45,auto, scale=0.5]
			  \tikzstyle{cnode}=[isosceles triangle,isosceles triangle apex angle=60,thick,draw=blue!75,
			  fill=blue!20,minimum height=3mm] 
			  \tikzstyle{qnode}=[circle,thick,draw=blue!75,fill=blue!20,minimum size=5mm] 
			  \tikzstyle{every label}=[black, font=\footnotesize]
			  \begin{scope}
				\node [qnode,scale=0.6] (Tx) [   ]                            {} ;
			  \end{scope}
			  \end{tikzpicture}.
	Such channels %Classical-quantum channels 
	form a useful abstraction for studying the 
	transmission of classical data over quantum channels.
	The  Holevo-Schumacher-Westmoreland (HSW) Theorem (see %Theorem~\ref{thm:HSWtheorem} on 
	page \pageref{thm:HSWtheorem})
	establishes %a single-letter formulas 
	the maximum achievable communication rates for classical-quantum channels % with a single sender and a single receiver 
	\cite{H98,SW97}.

	Note that a classical-quantum (\emph{c-q}) channel corresponds to the 
	use of a quantum-quantum (\emph{q-q}) channel in which the sender is 
	restricted to selecting from a finite set of signalling states.
	Any code construction for a \emph{c-q} channel can be augmented 
	with an optimization over the choice of signal states 
	to obtain a code for a \emph{q-q} channel.
	For this reason, we restrict our study here to that of \emph{c-q} channels.
	%
	%
	%	 and
	%	multiple-access channels \cite{winter2001capacity} 
	%	and give achievable rates for other network channels 
	%	\cite{FHSSW11,S11a,SW11}.
	%	other problems of  network information theory 
	%	\cite{YHD2006,FHSSW11,SW11}.
	%for classical-quantum channels  

		%
		%	other network quantum information theory such as
		%	broadcast channels \cite{YHD2006,SW11} and interference channels \cite{FHSSW11}
		%	have also been studied.
		%
		%
		The study of quantum channels finds practical applications in optical communications.
		Bosonic channels model the quantum aspects of optical communication links.
		%		where information is encoded into continuous degrees of freedom.
		%	fibre or free-space.
		%	
		It is known that optical receivers based on collective quantum measurements 
		of the channel outputs
		outperform classical strategies,
		particularly in the low-photon-number regime \cite{GGLMSY04,guha2011structured,guha2012explicit}.
		In other words, quantum measurements are \emph{necessary} to
		achieve their ultimate information carrying capacity. 
		% of optical channels for carrying classical information.
		%
		%	On the other hand, proving theorems for classical  is a useful 
		%	represents is not
		%	an essential step  systems does 
		In \cite{GGLMSY04} it is also demonstrated that
		classical encoding is \emph{sufficient} 
		to achieve the Holevo capacity of the lossy bosonic channel,
		% ?leading? support  
		giving further motivation for the theoretical study of classical-quantum models.
		% Gaussian encodings...	

\section{Research contributions}

	This thesis presents a collection of results for  
	problems in network information theory for classical-quantum channels.
	As we stated before, the results here easily extend to quantum-quantum channels.
	The problems considered are illustrated in Figure~\ref{fig:qnit}.
	%	
	%	This thesis will presents recent results on 
	%	four classical-quantum communication scenarios 
	%	depicted in 

\begin{figure}[htbp]
\begin{center}

\subfigure[QMAC $\equiv \left\{ \rho^B_{x_1,x_2} \right\}$ ]{
\hspace{-9mm}
\begin{tikzpicture}[node distance=1.8cm,>=stealth',bend angle=45,auto]
  \begin{scope}
	% QMAC channel
	\node [cnode] (Tx1) [ label=left:Tx1     ]                            {\footnotesize $x_1$} ; %{\scriptsize $x_1$};
	\node [cnode] (Tx2) [ label=left:Tx2, below of=Tx1]                {\footnotesize $x_2$}; %            {\scriptsize  $x_2$};
	\node [qnode] (Rx) [ label=right:Rx, right of=Tx1,yshift=-10mm] {\scriptsize $\rho^B_{x_1,x_2}$} % {\scriptsize y}
		edge  [pre]             node[swap]  {$\lightning$}    (Tx1)
		edge  [pre]             node		   {$\lightning$}         (Tx2) ;
  \end{scope}
  \begin{pgfonlayer}{background}
    \filldraw [line width=4mm,join=round,black!10]
      ([xshift=-3mm]Tx1.north -| Tx1.east) rectangle ([xshift=+3mm]Tx2.south -| Rx.west);
  \end{pgfonlayer}
\end{tikzpicture}
}%
\subfigure[QIC $\equiv \left\{ \rho^{B_1B_2}_{x_1,x_2} \right\}$ ]{
\!\!\!\!
\begin{tikzpicture}[node distance=1.8cm,>=stealth',bend angle=45,auto]

  \begin{scope}
	% QIC channel
	\node [cnode] (ICTx1) [ label=left:Tx1   ]                            {\footnotesize $x_1$};
	\node [cnode] (ICTx2) [ label=left:Tx2, below of=ICTx1]		{\footnotesize $x_2$};
	\node [qnode] (ICRx1) [ label=right:Rx1, right of=ICTx1]	{\tiny $\rho^{B_1}_{x_1,x_2}$}
		edge  [pre]             		node[swap]  	{$\lightning$}	(ICTx1)
		edge  [pre,draw=red]								(ICTx2);
	\node [qnode] (ICRx2) [ label=right:Rx2, right of=ICTx2] {\tiny $\rho^{B_2}_{x_1,x_2}$}
		edge  [pre,draw=red]								(ICTx1)
		edge  [pre]             node		   {$\lightning$}		(ICTx2) ;
  \end{scope}
  \begin{pgfonlayer}{background}
    \filldraw [line width=4mm,join=round,black!10]
      ([xshift=-3mm,yshift=+2mm]ICTx1.north -| ICTx1.east) rectangle ([xshift=+3mm]ICRx2.south -| ICRx2.west);
  \end{pgfonlayer}

\end{tikzpicture}
}\ifthenelse{\boolean{BOOKFORM}}{ }{}%
\subfigure[QBC $\equiv \left\{ \rho^{B_1B_2}_{x} \right\}$ ]{
\!\!\!\!\!
\begin{tikzpicture}[node distance=1.8cm,>=stealth',bend angle=45,auto]

  \begin{scope}
	% QBC channel
	\node [cnode] (Tx) [ label=left:Tx,yshift=-10mm   ]                            {\footnotesize $x$};
	\node [qnode] (Rx1) [ label=right:Rx1, right of=Tx,yshift=+10mm]	{\scriptsize $\rho^{B_1}_{x}$}
		edge  [pre]             node[swap]  		{$\lightning$}    (Tx);
	\node [qnode] (Rx2) [ label=right:Rx2, right of=Tx,yshift=-10mm] {\scriptsize $\rho^{B_2}_{x}$}
		edge  [pre]             node		   {$\lightning$}         (Tx) ;
  \end{scope}
  \begin{pgfonlayer}{background}
    \filldraw [line width=4mm,join=round,black!10]
      ([xshift=-2mm]Rx1.north -| Tx.east) rectangle ([xshift=+2mm]Rx2.south -| Rx2.west);
  \end{pgfonlayer}

\end{tikzpicture}
}%
\subfigure[QRC $\equiv \left\{ \rho^{B_1B}_{x,x_1} \right\}$ ]{
\!\!\!\!\!
\begin{tikzpicture}[node distance=1.8cm,>=stealth',bend angle=45,auto]

  \begin{scope}
	% QRelay channel
	\node [cnode] (RCTx) [ label=left:Tx,yshift=-20mm   ]                            {\footnotesize $x$};
	\node [qnode] (RCRERx) [ label=above:$\ \ \ \ \textrm{Re}$, above of=RCTx,xshift=+8mm ]		{\tiny $\rho^{B_1}_{x,x_1}$}
		edge  [pre]             		node[swap]	  	{$\lightning$}	(RCTx);
	\node [cnode] (RCRETx) [above of=RCTx,xshift=+14mm ]		{\footnotesize $x_1$};
	\node [qnode] (RCRx)  [ label=right:Rx, below of=RCRERx,xshift=+15mm]	{\tiny $\rho^{B}_{x,x_1}$}
		edge  [pre]             		node	  	{$\lightning$}	(RCTx)
		edge  [pre]				node[swap]  	{$\lightning$}	(RCRETx);
  \end{scope}
  \begin{pgfonlayer}{background}
    \filldraw [line width=4mm,join=round,black!10]
      ([xshift=+1mm,yshift=+2mm]RCRERx.south -| RCTx.north) rectangle ([xshift=+2.5mm]RCRx.south -| RCRx.west);
  \end{pgfonlayer}
  
\end{tikzpicture}
}

\end{center}
\caption{\small Network information theory can be extended to
channels with quantum outputs.
% channels  studies communication channels
%with multiple senders and multiple receivers.
We call these ``classical-quantum channels,''
and consider the following communication scenarios:
(a) multiple access channels (QMACs), 
(b) interference channels (QICs),
(c) broadcast channels (QBCs),
and (d) relay channels (QRCs).
%cf. Figure~\ref{fig:cnit}.
}
\label{fig:qnit}
\end{figure}
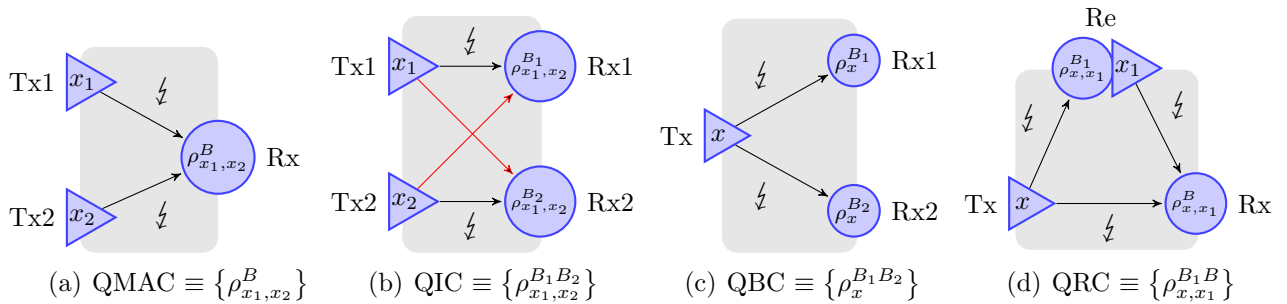

	Most of the results presented in this thesis have appeared in publication.
	%by the author and collaborators developed 
	%over the past two years.
	%
	The new results on the quantum multiple access channel and 
	the quantum interference channel appeared in
	\cite{FHSSW11}, which is a collaboration between 
	Omar Fawzi, Patrick Hayden, Pranab Sen, Mark M. Wilde
	and the present author.
	That paper has been accepted for publication in the \emph{IEEE Transactions on Information Theory}.
	A more compact version of the same results was presented by the author
	at the 2011 \emph{Allerton conference} \cite{FHSSWallerton}.
	A follow-up paper on the bosonic quantum interference channel 
	was presented by the author at 
	the 2011 \emph{International Symposium on Information Theory},
	thanks to a collaboration with Saikat Guha and Mark M. Wilde \cite{GSW11bosonic}.
	A further collaboration with Mark M. Wilde led to the
	publication of \cite{SW11}, which describes two coding strategies
	for the quantum broadcast channel.
	Finally, a collaboration with Mark M. Wilde and Mai Vu
	led to the development of the coding strategy for 
	the quantum relay channel presented in \cite{savov2011partial}.
	The last two papers have been accepted for presentation at the
	2012 \emph{International Symposium on Information Theory}.

	%In particular, the focus will be on the interference channel

	Our aim has been to present a comprehensive collection
	of the state-of-the-art of current  knowledge in quantum network
	information theory analogous to the review paper 
	by Cover and El Gamal \cite{el1980multiple}.
	Indeed, the current work contains the classical-quantum 
	extension of many of the results presented in that paper.
	Towards this aim, we have chosen to include in the text
	the statement of several important results by others.
	%	 which are not due to the author.
	These include a proof of the capacity theorems of
	the point-to-point c-q channels different from the original ones due to 
	Holevo, Schumacher and Westmoreland \cite{H98,SW97}
	and the capacity result for quantum multiple access channel,
	originally due to Winter \cite{winter2001capacity}.
	We will also present an alternate achievability proof of the 
	quantum Chong-Motani-Garg rate region for the QIC,
	which was originally proved by Sen \cite{S11a}.

\section{Thesis overview}

	Each of the communication problems covered in this thesis
	is presented in a separate chapter, and each chapter is organized
	in the same manner.
	The exposition in each chapter is roughly self-contained, 
	but the ideas developed in Chapter~\ref{chapter:MAC} 
	are of key importance to all other results in the thesis.	
	Chapters \ref{chapter:p-to-p} through \ref{chapter:RC} present
	results on classical-quantum (\emph{c-q}) channels where the output systems
	are arbitrary quantum states in finite dimensional Hilbert spaces.
	This class of channels generalizes the class of classical discrete memoryless channels.
	The last chapter, Chapter~\ref{chapter:bosonic}, introduces
	the basic notions of quantum optics and studies 
	bosonic quantum %interference 
	channels, for which the output system is a quantum system
	with continuous degrees of freedom.
	%	to the point of being considered 
	%	so should be considered with 

	Necessary background material on the notion of a
	classical typical set and its quantum analogue,
	the quantum typical subspace, is presented in Chapter~\ref{chapter:background}.
	A more detailed discussion about typicality is presented 
	in the appendix. Appendix~\ref{apdx:classical-typicality} concerns 
	classical typical sets whereas Appendix~\ref{apdx:quantum-typicality}
	reviews the properties of quantum typical subspaces,
	and quantum typical projectors. 
	Of particular importance are  
	conditionally typical projectors, which are used throughout 
	the proofs in this work.

	Our exploration of the classical-quantum world of 
	communication channels begins in Chapter~\ref{chapter:p-to-p},
	where we discuss classical and classical-quantum models of 
	point-to-point  communication.
	We will state and prove the capacity result for each class of channels:
	Shannon's classical channel coding theorem (Theorem~\ref{thm:shannon-ch-cap})
	and the Holevo-Schumacher-Westmoreland theorem (Theorem~\ref{thm:HSWtheorem})
	concerning the capacity of the classical-quantum channel.
	
	Chapter~\ref{chapter:MAC} presents results 
	on the quantum multiple access channel (QMAC) and 
	discusses the different coding strategies that can be employed.
	The capacity of the QMAC was established by Winter in \cite{winter2001capacity}
	(Theorem~\ref{thm:cqmac-capacity}) using a \emph{successive decoding} strategy.
	Our contribution to the quantum multiple access channel problem
	is Theorem~\ref{thm:sim-dec-two-sender}, which shows that the two-sender 
	\emph{simultaneous} decoding is possible.
	This result and the proof techniques used therein will 
	form key building blocks for the results in subsequent chapters.
	%
	%	alternate proof  using a different decoding
	%	strategy. 
	%
	The proof of Theorem~\ref{thm:sim-dec-two-sender} 
	%	appeared 
	%	simultaneously in \cite{FHSSWallerton},  \cite{FHSSW11}
	%	and \cite{S11a} and 
	is the result of longstanding collaboration within our research group.
	%
%	While all members of the team participated equally in the initial
%	research effort, in the end it was Mark Wilde and Pranab Sen who 
%	combined the ingredients to show the existence proof of the 
%	\emph{quantum simultaneous decoder} for the two-sender quantum 
%	multiple access channel.
	%	 is due to Sen and Wilde 
	%	\cite{S11a,FHSSW11}.

	Chapter~\ref{chapter:IC} will present  results on quantum 
	interference channels.
	These include the calculation of
	the capacity region for the quantum interference channel in two special cases 
	and a description of the quantum Han-Kobayashi rate region
	\cite{FHSSWallerton,FHSSW11}.
	In that chapter, we also provide an alternate proof of the achievability of the
	quantum Chong-Motani-Garg rate region,
	which was first established by Sen in \cite{S11a}.
	This new proof is original to this thesis.	
	%	prove the same result using the 
	%		two-sender simultaneous decoding from
	%		Theorem~\ref{thm:sim-dec-two-sender}.

	Chapter~\ref{chapter:BC} is dedicated to the quantum broadcast channel problem.
	We prove two theorems: the superposition coding inner bound 
	(Theorem~\ref{thm:sup-coding-inner-bound}), which was first proved in  \cite{YHD2006}
	using a different approach, and the Marton inner bound with no common message
	 (Theorem~\ref{thm:marton-no-common}).

	In Chapter~\ref{chapter:RC}, we will present
	Theorem~\ref{thm:PDF-for-QRC} which is a proof of the
	\emph{partial decode-and-forward} inner bound for the quantum relay channel.
	The \emph{decode-and-forward} and \emph{direct coding} strategies for the quantum
	relay channel are also established, since they are special cases of the more
	general Theorem~\ref{thm:PDF-for-QRC}.
	%	We establish several other simpler strategies as spe
	
	Chapter~\ref{chapter:bosonic} discusses the free-space optical communication 
	interference channel in the presence of background thermal noise.
	This is a model for the crosstalk between 	two optical communication links.
	% in a free-space communication scenario.
	%	and there is a spacial overlap between the optical modes
	%	as would occur 
	This chapter demonstrates the practical aspect of the ideas developed in this thesis.

	We conclude with Chapter~\ref{chapter:conclusion} wherein we state
	open problems and describe avenues for future research.

%!TEX root = thesis.tex

\chapter{Background}
						\label{chapter:background}

	In this chapter we present all the necessary background 
	material which is essential to the results presented 
	in subsequent chapters.

	\section{Notation}
		
		We will denote the set $\{1,2,\ldots,n\}$ as $[1\!:\!n]$ or with the shorthand $[n]$.
		A random variable $X$, defined over a finite set $\mcal{X}$,
		is associated with a probability distribution $p_X(x)\equiv\Pr\{ X=x\}$,
		where the lowercase $x$ is used to denote particular values of the random variable.
		Furthermore, let $\mathcal{P}(\mathcal{X})$ denote
		the set of all probability mass functions on the finite set $\mathcal{X}$.
		Conditional probability distributions will be denoted as $p_{Y|X}(y|x)$ or simply $p_{Y|X}$.
		
		In order to help distinguish between the 
		classical systems (random variables)
		and the quantum systems in the equations, we use the
		following naming conventions.
		Classical random variables will be denoted by letters near to the
		end of the alphabet ($U$, $W$, $X_1$, $X_2$)
		and denoted as small triangles,
			\begin{tikzpicture}[node distance=2.0cm,>=stealth',bend angle=45,auto, scale=0.5]
			  \tikzstyle{cnode}=[isosceles triangle,isosceles triangle apex angle=60,thick,draw=blue!75,
			  fill=blue!20,minimum height=3mm] 
			  \tikzstyle{qnode}=[circle,thick,draw=blue!75,fill=blue!20,minimum size=5mm] 
			  \tikzstyle{every label}=[black, font=\footnotesize]
			  \begin{scope}
				\node [cnode,scale=0.6] (Tx) [   ]                            {} ;
			  \end{scope}
			  \end{tikzpicture},
		in the diagrams of this thesis.
		The triangular shape was chosen in analogy
		to the 2-simplex  $ \equiv \mcal{P}(\{1,2,3\})$.
		Quantum systems will be named with letters near 
		the beginning of the alphabet ($A$, $B_1$, $B_2$)
		and represented by circles,
		 	 \begin{tikzpicture}[node distance=2.0cm,>=stealth',bend angle=45,auto, scale=0.5]
			  \tikzstyle{cnode}=[isosceles triangle,isosceles triangle apex angle=60,thick,draw=blue!75,
			  fill=blue!20,minimum height=3mm] 
			  \tikzstyle{qnode}=[circle,thick,draw=blue!75,fill=blue!20,minimum size=5mm] 
			  \tikzstyle{every label}=[black, font=\footnotesize]
			  \begin{scope}
				\node [qnode,scale=0.6] (Tx) [   ]                            {} ;
			  \end{scope}
			  \end{tikzpicture},
			  in diagrams.
			  The circular shape is chosen in analogy with the Bloch sphere \cite{leifer2011formulating}.

		Consider a communication scenario
		with one or more senders (female) and one or more receivers (male).
		In diagrams, a sender is denoted \texttt{Tx} (short for Transmitter)
		and is associated with a random variable $X$.
		If there are multiple senders, then each of them 
		will be referred to as Sender~k and associated
		with a random variable $X_k$.
		Receivers will be denoted as $\texttt{Rx}~1$, $\texttt{Rx}~2$
		and each is associated with a different output of the channel.
		The outputs of a \emph{classical} channel will be denoted as $Y_1$, $Y_2$,
		and the outputs of a \emph{quantum} channel will be denoted as $\rho^{B_1}, \rho^{B_2}$.
		%Also, this is a realistic model since it is always women that send the message!

		The purpose of a communication protocol is to transfer 
		bits of information from sender to receiver noiselessly.
		In this respect, the noiseless binary channel from sender to receiver 
		is the standard unit resource for this task:
		\be
			(\mcal{X}=\{0,1\}, \ p_{Y|X}(y|x)=\delta(x,y),\  \mcal{Y}=\{0,1\} )
			\ \   	\equiv \ \ 
			 [c \to c],
		\ee
		where we have also defined the more compact notation $[c \to c]$.
		We will use $[c \to c]$ to denote the \emph{communication resource}
		of being able to send one  bit of classical information from the sender to the 
		receiver \cite{DHW05b}.
		The %``c'' indicates that a classical bit is being transmitted, and the
		square brackets indicate that the resource is noiseless.
		%to emphasys illustrate that a classical bit is 
		%
		In order to describe multiuser communication scenarios, we extend
		this notation with superscripts indicating the sender and the receiver.
		Thus, in order to denote the noiseless classical communication 
		of one bit from Sender~$k$ to Receiver~$z$ we will use the notation
		$[c^k \to c^z]$.
		The communication resource which corresponds to the
		sender being able to \emph{broadcast} a 	message to Receiver~1 and Receiver~2  
		is denoted as $[c \to c^1c^2]$.
		All the coding theorems presented in this work are protocols
		for converting many copies of some noisy channel resource
		into noiseless classical communication between a particular
		sender and a particular receiver as described above.

		Codebooks $\{ x^n(m)\}_{m\in\mcal{M}}$ are lookup 
		tables for codewords representing a discrete set of messages
		$\mcal{M} = \{1, 2, 3, \ldots, |\mcal{M}| \}$ that could be transmitted.
		A communication rate $R$ is a real number which describes our
		asymptotic ability to construct codes for a certain communication task.
		We will use the notation $|\mcal{M}| = 2^{nR}$,
		and $\mcal{M} = \{1, 2, 3, \ldots, |\mcal{M}| \} \equiv [1 : 2^{nR} ]$, in which $2^{nR}$ should be interpreted 
		to indicate $\lfloor 2^{nR} \rfloor$.

		Let $\mathbb{R}_+^n \equiv \{  \vec{v} \in \mathbb{R}^n \ | \ v_i \geq 0, \forall i \in [1:n] \}$
		be the non-negative subset of $\mathbb{R}^n$.
		%		which is the first octant in $\mathbb{R}^3$.
		We will denote a \emph{rate region} as $\mcal{R} \subseteq \mathbb{R}_+^n$ and
		the boundaries of regions as $\partial\mcal{R}$.
		We denote points as $P \in \mathbb{R}^n$ and 
		denote the \emph{convex hull}  of a set of points $\{P_i\}$
		as $\mathrm{conv}(\{ P_i \})$.

	\section{Classical typicality} \label{sec:typ-review}
	%old appendix 
	% types
	We present here a number of properties of typical sequences \cite{CT91}.
	%	and their quantum 
	%	analogue: typical subspaces.
	
	\subsection{Typical sequences}

		Consider the random variable $X$ with probability distribution $p_{X}(x)$ defined on a finite set $\cX$.
		Denote by $|\cX|$ the cardinality of $\cX$. 
		Let $H(X) \equiv H(p_X) \equiv - \sum_x p_X(x) \log_2 p_X(x)$ be the Shannon entropy of $p_X$,
		and it is measured in units of \emph{bits}.
		The binary entropy function is denoted  %:\mathbb{R} \to \mathbb{R}$
		$H_2(p_0)\equiv -p_0\log_2(p_0) -(1-p_0)\log_2(1-p_0)\equiv H_2(p_1)$, where
		$p_0 \equiv p_X(0)$ and $p_1 \equiv 1-p_0$.

	%	\noindent
	%	{\bf Typical sequences\ \ }
		Denote by $x^n$ a sequence $x_1x_2\dots x_n$, where each
		$x_i, i\in[n]$ belongs to the finite \emph{alphabet} $\cX$. 
		%
		%	Denote by $N(x_j|x^n)$ the number of
		%	occurrences of the symbol $x_j$ in the sequence $x^n$.
		%
		To avoid confusion, we use $i\in[1:n]$ to denote the index of a symbol $x$
		in the sequence $x^n$ and $a \in [1,2,\ldots,|\mcal{X}|]$ 
		to denote the different symbols in the alphabet~$\mcal{X}$.
		%

%\def\cT{{\mathcal A}}

	% weakly typical sequences.
	Define the probability distribution $p_{X^n}(x^n)$ on $\cX^n$ to be the
	$n$-fold product of $p_{X}$:
	$p_{X^n}(x^n) \equiv \prod_{i=1}^n p_X(x_i)$. 
	The sequence $x^n$ is drawn from $p_{X^n}$ if
	and only if each letter $x_i$ is drawn independently from $p_X$.
	For any $\delta >0$,	define the set of entropy $\delta$-typical sequences of length 
	$n$ as: %$\cT^{n}_{p,\delta}$ 
	\be
		\mcal{T}^{n}_\delta(X) 
		\! \equiv \!
		\left\{ \!
			x^n \in \mcal{X}^n  \colon \! \left|  -\frac{\log p_{X^n}(x^n)}{n}  - H(X) \right| \! \leq \delta %, \ \forall x_a \in\cX 
			%x^n \in \mcal{X}^n \ \big| \ |  \frac{N(x_a|x^n)}{n}  -p_X(x_a)| \leq \delta, \ \forall x_a \in\cX 
		\right\}.%
		\label{eqn:def-typ-X}
	\ee

	Typical sequences enjoy many useful properties \cite{CT91}.
	For any $\epsilon,\delta>0$, and sufficiently large $n$, we have
	\begin{eqnarray}
		& 
		 \!\!\!\!\!\!\!\!\!\!\!\!\!\!\!\!\!
		\displaystyle\sum_{x^n\in \mcal{T}^{(n)}_\delta(X)} \!\!\!\! p_{X^n}\!\!\left( x^n\right)  
		& \geq  1-\epsilon,   \label{cc1BG}  \\[1.6mm]
		2^{-n[ H(X)+\delta ]}  \leq \!\!\!\!\!&
		\!\!\! p_{X^n}(x^n)  \!\!\!& \!\!\!\!\!
		\leq   2^{-n[H(X)-\delta]}  \  \   \forall x^n \in \mcal{T}^{(n)}_\delta(X),  \label{cc2BG} \\[0mm]
		\!\!\![1 - \epsilon] 2^{n[ H(X)-\delta]} \leq \!\!\!\!\!& 
		\!|\mcal{T}^{(n)}_\delta(X)| \!\!\!
		& \!\!\! \leq 2^{n[ H(X)+\delta]}. \label{cc3BG} 
	\end{eqnarray}

\def\xnlatex{x^n}

\begin{figure}[bt]
\begin{center}

	\begin{tikzpicture}[node distance=2.0cm,>=stealth',bend angle=45,auto]

	  \begin{scope}
		% c-q P-to-P channel
%		\node [draw=black,anchor=south west,inner sep=0] (bglayer) at (0,0) {\includegraphics[width=13cm]{../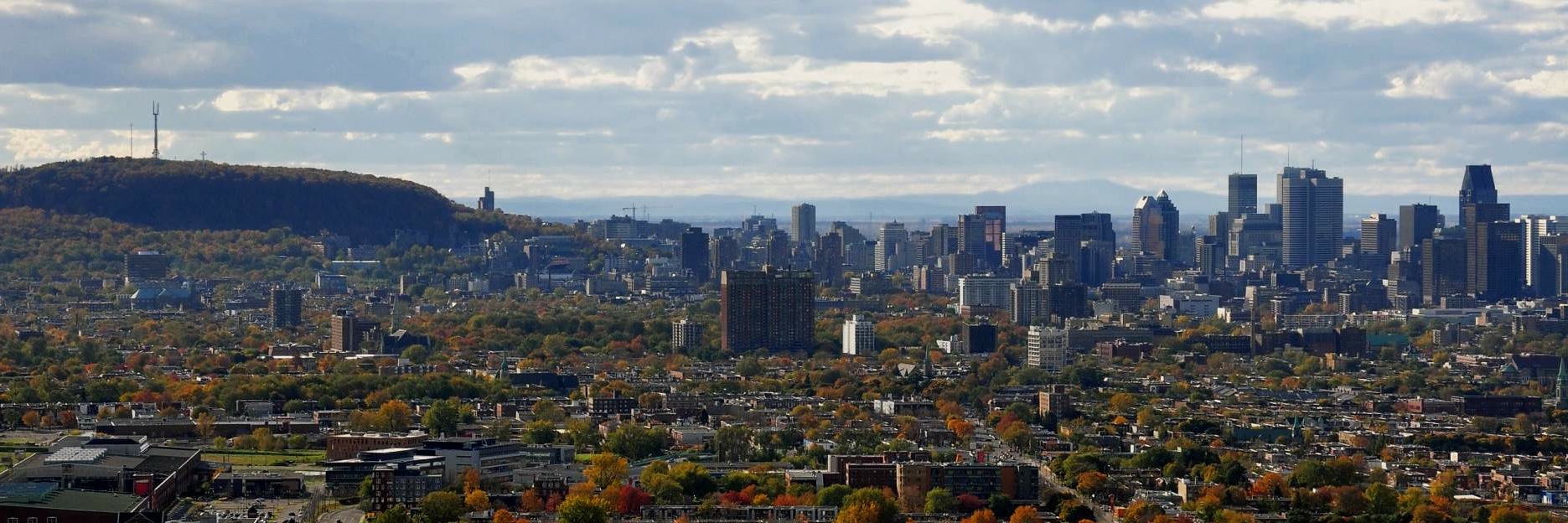} };
%		\draw [anchor=south west,help lines] (0,0) grid (7,4);
%		\node [draw=red] (aaa) at (0,0) {x};
%		\node [cnode] (Tx) at (1.2,3.3)  [ label=left:Tx,    ]                            {\footnotesize $x$} ;
%		\node [qnode] (Rx1) at (6.5,1.5) [ label=right:${\color{white}\textrm{Rx1}}$] {\footnotesize $\rho^{B_1}$} ;
%		\node [qnode] (Rx2) at (10.5,2.5) [ label=right:${\color{white}\textrm{Rx2}}$] {\footnotesize $\rho^{B_2}$} ;

		% Wn
%		\draw  (1.4, 1.7) coordinate (WnCenter);
%		\path (WnCenter) ++(110:20mm) coordinate (WnCorner1);
%		\path (WnCenter) ++(230:20mm) coordinate (WnCorner2);		
%		\path (WnCenter) ++(350:20mm) coordinate (WnCorner3);
%		\draw (WnCorner1) -- (WnCorner2) -- (WnCorner3) -- cycle;
%		%
%		\path (WnCenter) ++(110:15mm) coordinate (WnCorner1in);
%		\path (WnCenter) ++(230:15mm) coordinate (WnCorner2in);		
%		\path (WnCenter) ++(350:15mm) coordinate (WnCorner3in);
%		\onslide<2->{
%			\draw (WnCorner1in)  [rounded corners=10pt,line width=1pt] -- (WnCorner2in) -- (WnCorner3in) -- cycle;
%		}
		
		% LABELS
		
%		\draw (WnCenter) ++ (127:18mm) node  {$\mathcal{W}^n$};
		%		\draw   (1,2.5) ellipse (15mm and 25mm); % node {} (OutSpace);
		%		\draw   (4,2.5) ellipse (15mm and 25mm); % node {} (OutSpace);

%		\draw (0,0) [rounded corners=10pt] -- (1,1) -- (2,1) [sharp corners] -- (2,0) [rounded corners=5pt] -- cycle;		

		%Xn
		\draw  (3.3, 2.3) coordinate (XnCenter);
		\path (XnCenter) ++(35:20mm) coordinate (XnCorner1);
		\path (XnCenter) ++(165:20mm) coordinate (XnCorner2);		
		\path (XnCenter) ++(280:20mm) coordinate (XnCorner3);

		\draw (XnCorner1) -- (XnCorner2) -- (XnCorner3) -- cycle;						
		\path (XnCenter) ++(0,0.2) ++(35:12mm) coordinate (XnCorner1in);
		\path (XnCenter) ++(165:12mm) coordinate (XnCorner2in);		
		\path (XnCenter) ++(0.2,0)++(280:12mm) coordinate (XnCorner3in);

		\draw [fill=black!15] (XnCorner1in)  [rounded corners=10pt,line width=1pt] -- (XnCorner2in) -- (XnCorner3in) -- cycle;		

		\node at (2.2,3.2) {$\mathcal{X}^n$};
		
%		\onslide<1-7>{
%		\node [circle,fill,draw,inner sep=0pt,minimum size=2pt,label=above:$\!\!\!\!\wnlatex$] (wn) at (1.5cm,1.2cm) {};
%		}
%		\only<8>{
%		\node [circle,fill,draw,inner sep=0pt,minimum size=2pt,label=above:$\!\!\!\!\Wnlatex$] (wn) at (1.5cm,1.2cm) {};		
%		}
%		
%		\only<2>{
%			\node (OnWnTyp) at (2,1.1) {};
%			\node at (3,0.5) {$\mcal{T}(W)$} 
%				edge [post] node[]  	{}  (OnWnTyp.center);
			\node (OnXnTyp) at (3.5,1.3) {};
			\node at (2,0.5) {$\mcal{T}_\delta^{(n)}\!(X)$} 
				edge [post] node[]  	{}  (OnXnTyp.center);

	  \end{scope}
%	  \begin{pgfonlayer}{background}
%	    \filldraw [line width=4mm,join=round,black!10]
%	      ([xshift=0mm,yshift=0.5mm]Tx.north -| Tx.east) rectangle ([xshift=1mm]Tx.south -| Rx.west);
%	  \end{pgfonlayer}
	\end{tikzpicture}
\end{center}
\label{fig:X-typ-set}
\caption{The typical set. Property~\eqref{cc1BG} implies that draws of a 
random sequence $X^n \sim p_{X^n} \equiv \prod^n p_X$ 
are likely to fall inside the typical set $\mcal{T}_\delta^{(n)}\!(X) \subset \mcal{X}^n$ with high probability.
If draws from $X^n \sim \prod^n p_X$ are represented as points, then
after many draws the typical set will become darker as the shaded region in the diagram.
The probability mass density on $\mcal{T}_\delta^{(n)}\!(X)$ is
approximately uniform: 
it varies between $2^{-n[H(X)+\delta]}$ and $2^{-n[H(X)-\delta]}$ (Property~\eqref{cc2BG}),
and the size of the shaded area
% $\mcal{T}_\delta^{(n)}\!(X)$ 
will be at most $2^{n[H(X)+\delta]}$ (Property~\eqref{cc3BG}).
The non-typical set,  $\mcal{X}^n \setminus \mcal{T}_\delta^{(n)}\!(X)$,
will have at most $\epsilon$ weight in it (Property~\eqref{cc1BG}).
 }
\end{figure}

Property \eqref{cc1BG} indicates that a sequence $X^n$ of random variables
distributed according to $p_{X^n} = \prod^n p_X$ (identical and independently distributed),
is very likely to be typical, since all but $\epsilon$ of the weight of the
probability mass function is concentrated on the typical set, which follows
from the law of large numbers.
Property~\eqref{cc2BG} follows from the definition of the typical set \eqref{eqn:def-typ-X}.
The lower bound on the probability of the typical sequences from \eqref{cc2BG} can
be used to obtain an upper bound on the size of the typical set in \eqref{cc3BG}.
Similarly the upper bound from \eqref{cc2BG} and equation \eqref{cc1BG} can be
combined to give the lower bound on the typical set in \eqref{cc3BG}.

\subsection{Conditional typicality}

	Consider now the conditional probability distribution $p_{Y|X}(y|x)$
	associated with a communication channel.
	The induced joint input-output distribution is $(X,Y) \sim p_X(x)p_{Y|X}(y|x)$,
	when $p_X(x)$ is used as the input distribution.

	The conditional entropy $H(Y|X)$ for this distribution is
	\be
		H (Y|X) = H(X,Y) - H(X) =
		 \sum_{x_a \in \mcal{X}} 
			p_{X}(x_a) 
 			H(Y|x_a).
	\ee	
	where $H(Y|x_a) = - \sum_y p_{Y|X}(y|x_a) \log p_{Y|X}(y|x_a)$.

	We define the $x^n$-conditionally typical set 
	$\mcal{T}^{(n)}_\delta(Y|x^n) \subseteq \mcal{Y}^n$
	to consist of all sequences $y^n$ which are 
	typically output when the input to the channel is $x^n$:
	\be
		\!\! \mcal{T}^{(n)}_\delta(Y|x^n) 
		\! \equiv \!
		\left\{ \!
			y^n \in \mcal{Y}^n  
			\colon \! \left|  -\frac{\log p_{Y^n|X^n}(y^n|x^n)}{n}  - H(Y|X) \right| \! \leq \delta %, \ \forall x_a \in\cX 
			%x^n \in \mcal{X}^n \ \big| \ |  \frac{N(x_a|x^n)}{n}  -p_X(x_a)| \leq \delta, \ \forall x_a \in\cX 
		\right\}\!\!,%
		\label{eqn:def-typ-YgX}
	\ee	
	with $p_{Y^n|X^n}(y^n|x^n) = \prod_{i=1}^n p_{Y|X}(y_i|x_i)$.
	The definition in \eqref{eqn:def-typ-YgX} can be rewritten as:
	\begin{align}
		2^{-n[ H(Y|X)+\delta ]}  \leq & \ \ 
		p_{Y^n|X^n}(y^n|x^n) \ \ 
		\leq   2^{-n[H(Y|X)-\delta]},  \ \    \forall y^n \in \mcal{T}^{(n)}_{\delta}\!(Y|x^n),  \label{cc5}
	\end{align}
	for any sequence $x^n$.
	
	Suppose that a random input sequence $X^n \sim p_{X^n}=\prod^n p_X$ 
	is passed through the channel $p_{Y^n|X^n}$.
	Then a conditionally typical sequence is likely to occur.
	More precisely, we have that for any $\epsilon,\delta>0$, and sufficiently large $n$
	the statement is true under the expectation over the input sequence
	$X^n$:
	\begin{align}
		 \NExpX
		 \!\!\!\!\!
		\displaystyle
		\sum_{ y^n \in \mcal{T}^{(n)}_{\delta}\!(Y|X^n) } \!\! p_{Y^n|X^n}\!\!\left( y^n|X^n\right)  
		& = \sum_{x^n} p_{X^n}(x^n) \!\!\!\!\!\!
			\sum_{ y^n \in \mcal{T}^{(n)}_{\delta}\!(Y|x^n) } 
			\!\!\!\!\! 
			p_{Y^n|x^n}\!\!\left( y^n|x^n\right)   \nonumber \\
		& \geq  1-\epsilon.   \label{cc4BG}
	\end{align}
	We also have the 
	%	This is turn implies the 
	following bounds on the expected size 
	of the conditionally typical set:	
	\be
		[1 - \epsilon] 2^{n[ H(Y|X)-\delta]} \leq \ \   
		\NExpX
		\left|  \mcal{T}^{(n)}_{\delta}\!(Y|X^n)  \right| 
		\ \  \leq 2^{n[ H(Y|X)+\delta]}. \label{cc6BG} 
	\ee

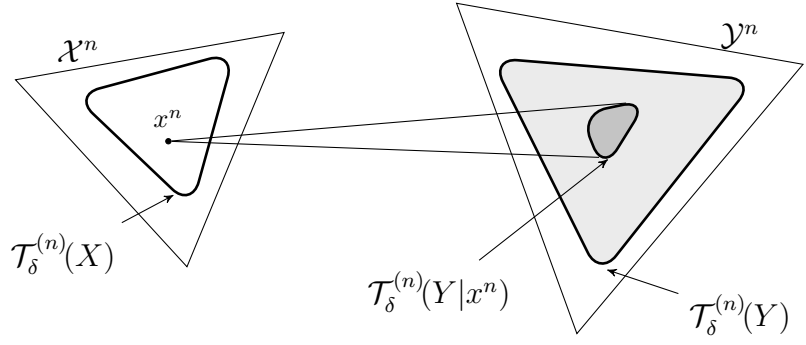
\begin{figure}
\begin{center}

	\begin{tikzpicture}[node distance=2.0cm,>=stealth',bend angle=45,auto]

	  \begin{scope}
		% c-q P-to-P channel
%		\draw [anchor=south west,help lines] (0,0) grid (12,5);

		%Xn
		\draw  (3.3, 3.3) coordinate (XnCenter);
		\path (XnCenter) ++(35:20mm) coordinate (XnCorner1);
		\path (XnCenter) ++(165:20mm) coordinate (XnCorner2);		
		\path (XnCenter) ++(280:20mm) coordinate (XnCorner3);

		\draw (XnCorner1) -- (XnCorner2) -- (XnCorner3) -- cycle;						
		\path (XnCenter) ++(0,0.2) ++(35:12mm) coordinate (XnCorner1in);
		\path (XnCenter) ++(165:12mm) coordinate (XnCorner2in);		
		\path (XnCenter) ++(0.2,0)++(280:12mm) coordinate (XnCorner3in);

%		\onslide<2->{
			\draw (XnCorner1in)  [rounded corners=10pt,line width=1pt] -- (XnCorner2in) -- (XnCorner3in) -- cycle;		
%		}

		%
		\node at (2.2,4.2) {$\mathcal{X}^n$};
		
		\node (OnXnTyp) at (3.5,2.3) {};
		\node at (2,1.5) {$\mcal{T}_\delta^{(n)}\!(X)$} 
			edge [post] node[]  	{}  (OnXnTyp.center);

		\node at (11,4.4) {$\mathcal{Y}^n$};
		%Yn
		\draw  (9.3, 3.1) coordinate (YnCenter);
		\path (YnCenter) ++(20:27mm) coordinate (YnCorner1);
		\path (YnCenter) ++(140:27mm) coordinate (YnCorner2);		
		\path (YnCenter) ++(260:27mm) coordinate (YnCorner3);
		\draw (YnCorner1) -- (YnCorner2) -- (YnCorner3) -- cycle;						

		% Yn typ
		\path (YnCenter) ++(0,0.1) ++(18:20mm) coordinate (YnCorner1in);
		\path (YnCenter) ++(148:19mm) coordinate (YnCorner2in);		
		\path (YnCenter) ++(-0.1,0)++(268:19mm) coordinate (YnCorner3in);
		\draw [fill=black!8] (YnCorner1in)  [rounded corners=10pt,line width=1pt] -- (YnCorner2in) -- (YnCorner3in) -- cycle;		
		\node (OnYnTyp) at (9.2,1.3) {};
		\node at (11, 0.7) {$\mcal{T}_\delta^{(n)}\!(Y)$} 
			edge [post] node[]  	{}  (OnYnTyp.center);

		% Yn|x^n typ
		\path (YnCenter) ++(0,0.3) ++(18:5mm) coordinate (YngXCorner1in);
		\path (YnCenter) ++(148:5mm) coordinate (YngXCorner2in);		
		\path (YnCenter) ++(-0.1,0)++(268:5mm) coordinate (YngXCorner3in);
		\draw [fill=black!23] (YngXCorner1in)  [rounded corners=10pt,line width=1pt] -- (YngXCorner2in) -- (YngXCorner3in) -- cycle;		
		\node (OnYngXTypTop) at (9.63,3.52) {};
		\node (OnYngXTypBot) at (9.27,2.78) {};
		\node at (7, 1) {$\mcal{T}_\delta^{(n)}\!(Y|x^n)$} 
			edge [post] node[]  	{}  (OnYngXTypBot.center);

		\node [circle,fill,draw,inner sep=0pt,minimum size=2pt,label=above:$\xnlatex$,  xshift=1mm,yshift=-3mm] (xn) at (XnCenter)  {};
		\draw (OnYngXTypTop) -- (xn);
		\draw (OnYngXTypBot) -- (xn);

	  \end{scope}
	\end{tikzpicture}
\end{center}
\caption{Illustration of the conditionally typical set $\mcal{T}_\delta^{(n)}\!(Y|x^n)$ 
	and the output-typical set $\mcal{T}_\delta^{(n)}\!(Y)$.
	The ``density'' of $\mcal{T}_\delta^{(n)}\!(Y)$,
	the lightly shaded area, is at least $2^{-n[H(Y) +\delta]}$,
	and the size of $\mcal{T}_\delta^{(n)}\!(Y)$ is at most $2^{n[H(Y)+\delta]}$.
	The size of $\mcal{T}_\delta^{(n)}\!(Y|x^n)$,
	the darker shaded region, is no greater than 
	$2^{n[H(Y|X)+\delta]}$ for an $x^n$ picked on average. 
	}
\label{fig:Y-X-cond-typ-set}
\end{figure}

	\subsection{Output-typical set}
	
	Consider the distribution over symbols $y \in \mcal{Y}$ 
	induced by the channel $\mcal{N}\equiv p_{Y|X}(y|x)$ 
	whenever the input distribution is $p_X(x)$:
	\be
		p_Y(y) \equiv \sum_{x} p_{Y|X}(y|x) p(x) = \mathbb{E}_{X} \mcal{N}.
	\ee

	We define the output typical set as 
	\be
		\mcal{T}^{(n)}_{\delta}\!(Y)
		\! \equiv \!
		\left\{ \!
			y^n \in \mcal{Y}^n  \colon \! \left|  -\frac{\log p_{Y^n}(y^n)}{n}  - H(Y) \right| \! \leq \delta 
		\right\},%
		\label{eqn:def-typ-Y}
	\ee
	where $p_{Y^n}=\prod^n p_{Y}$. 
	Note that the output-typical set is just a special case of the 
	general typical set shown in \eqref{eqn:def-typ-X}.
	The terminology \emph{output-typical} is introduced to
	help with the exposition.
	%	but because the distribution $p_Y$
	%	corresponds to the outputs of the channel 

	When the input sequences are chosen according to 
	$X^n \sim p_{X^n}=\prod^n p_X$,
	then output sequences are likely to be output-typical:	
	\be
		\ExpX
		\sum_{ y^n \in \mcal{T}^{(n)}_{\delta}\!(Y) } \!\! p_{Y^n|X^n}\!\!\left( y^n|X^n\right)  
		\ \ \geq \ \  1-\epsilon.   
		\label{cc7BG}
	\ee

	An illustration and an intuitive interpretation of \eqref{cc4BG}, \eqref{cc6BG} and \eqref{cc7BG} is presented in 
	Figure~\ref{fig:Y-X-cond-typ-set}.
	The expression in \eqref{cc4BG} for the property of the conditionally
	typical set $\mcal{T}^{(n)}_{\delta}\!(Y|x^n)$ is the analogue 
	of the typical property \eqref{cc1BG} for  $\mcal{T}^{(n)}_{\delta}\!(X)$.
	The interpretation is that the codewords of a random codebook
	are likely to produce output sequences that fall within
	their conditionally typical sets.
	This property will be used throughout this thesis to guarantee
	that the decoding strategies based on conditionally typical sets
	correctly recognize the channel outputs.
	On the other hand, \eqref{cc6BG} gives us both an upper bound and a lower
	bound on the size of the conditionally typical set for a random 
	codebook.
	Finally, Property~\eqref{cc7BG} tells us that the outputs of the channel
	which are not output-typical are not likely.
	
%
%  
%    Consider now the distribution on the output symbols
%    $p_Y(y) \equiv \sum_{x} p_{Y|X}(y|x) p(x)$,
%    and define the set of output-typical sequences
%    $\mcal{T}_\delta^{(n)}(Y)$ according to the distribution $p_Y$.
%	\be
%		\mcal{T}^{(n)}_\delta(Y) 
%		 \equiv 
%		\left\{ \!
%			y^n \in \mcal{Y}^n  
%			\colon \! \left|  -\frac{\log p_{Y^n}(y^n)}{n}  - H(Y) \right| \! \leq \delta %, \ \forall x_a \in\cX 
%			%x^n \in \mcal{X}^n \ \big| \ |  \frac{N(x_a|x^n)}{n}  -p_X(x_a)| \leq \delta, \ \forall x_a \in\cX 
%		\right\},%
%	\ee	
%	with $p_{Y^n}(y^n) = \prod_{i=1}^n p_{Y}(y_i)$.

%

\subsection{Joint typicality}

	Consider now the joint probability distribution $p_{XY}(x,y) \in \mcal{P}(\mcal{X},\mcal{Y})$.
	Let $(X^n,Y^n)$ be a pair of random variables distributed
	according to the product distribution $\prod^n p_{XY}$.
	
	We define the jointly typical set $\mcal{J}^{(n)}_\delta(X,Y) \subseteq \mcal{X}^n \times \mcal{Y}^n$
	to be the set of sequences that are typical with respect to the joint
	probability distribution $p_{XY}$ and with respect to the marginals $p_X$ and $p_{Y}$.
	\be
		\!\!\! \! \mcal{J}^{(n)}_\delta(X,Y)
		\equiv
		\left\{ \!
			(x^n,y^n)  \in \mcal{X}^n \times \mcal{Y}^n  
			\left| 
			\begin{array}{l}
				x^n \in \mcal{T}^{(n)}_\delta(X)   \\
				y^n \in \mcal{T}^{(n)}_\delta(Y)   \\
				(x^n,y^n) \in \mcal{T}^{(n)}_\delta(X,Y)  \!\!
			\end{array}
			\right.
		\right\}\!.%		
		\label{eqn:def-Jtyp}
	\ee
	A multi-variable sequence, therefore, is jointly typical if and only
	if all the sequences in the subsets of the variables are jointly typical.
	%
	%Note that this recursive definition of the jointly typical set is 
	%not required for strongly-typical sequences as defined
	%in Appendix~\ref{apdx:strong-typicality}.	
	%	The notion of \emph{strongly} typical sequences are only used in 
	%	one of the results in this thesis, therefore, unless otherwise 
	%	indicated, in this document we refer to section  directly implies all the subsets

%	to setup Marton
	The probability that two random sequences drawn from the marginals
	$X^n \sim \prod^n p_X$ and $Y^n \sim \prod^n p_Y$
	are jointly typical can be bounded from above by $2^{-n[ I(X;Y) - \delta ] }$.
	This is straightforward to see from the definition in \eqref{eqn:def-Jtyp}
	and the properties of typical sets. 
	If $(x^n,y^n)$ is such that $x^n \in \mcal{T}^{(n)}_{\delta^{\prime}}(X)$
	and $y^n \in \mcal{T}^{(n)}_{\delta^{\prime}}(Y)$ then
	$p_{X^n}(x^n) \leq 2^{-n[H(X) - \delta^\prime]}$
	and
	$p_{Y^n}(y^n) \leq 2^{-n[H(Y) - \delta^\prime]}$.
	On the other hand, we know that the number
	of sequences that are typical according to the joint distribution 
	is no larger than $2^{n[H(XY)+\delta^{\prime\prime}]}$.
	Combining these two observations we get:
	\begin{align}
		\sum_{ (x^n,y^n) \in \mcal{T}^{(n)}_{\delta^{\prime\prime}}(X,Y)   } 
		\!\!\!\!\!\!\!
		p_{X^n}(x^n) 
		p_{Y^n}(y^n)
		& \leq 
			\left| \mcal{T}^{(n)}_{\delta^{\prime\prime}}(X,Y)  \right|
			2^{-n[H(X) - \delta^\prime]}
			2^{-n[H(Y) - \delta^\prime]} \nonumber \\
		& \leq 
			2^{n[H(XY) + \delta^{\prime\prime}]}
			2^{-n[H(X) - \delta^\prime]}
			2^{-n[H(Y) - \delta^\prime]} \nonumber \\		
		& =
			2^{-n[ I(X;Y) - \delta ] }.
	\end{align}
	Note that the parameter 
	$\delta = 2\delta^\prime + \delta^{\prime\prime}$
	is a function of our choice of typicality parameters
	for the typical sets.
	%
	%	In the remainder of the document we will denote 
	%	all the typicality parameters with the same $\delta$,
	%	although different parameters may be involved in different
	%	parts of the proof.
	%	 may actually be required in some proofs.
	%	although  parameters
	%	 the 
	%	as formally.
	%
	%	The implicit choice 
	%	but simply use the same
	%	will treat use the \emph{same} 
	%	even in situation where a formal 
	%	and  detailed analysis of the form: 

\section{Introduction to quantum information}
	
		The use of quantum systems for information processing tasks
		is no more mysterious than the use of digital technology for information processing.
		The use of an \emph{analog to digital converter} (ADC) to transform an analog
		signal to a digital representation
		%		something in the
		%		real world to 
		%		something in the digital space 
		and the use of a \emph{digital to analog converter} (DAC) to transform
		from the digital world back into the analog world are similar 
		to the \emph{state preparation} and the \emph{measurement} 
		steps used in quantum information science.
		The \emph{digital world} is sought after because of the computational,
		storage and communication benefits associated with 
		manipulation of discrete systems instead of continuous signals.
		Similarly, there are benefits associated with using the
		\emph{quantum world} (Hilbert space) in certain computation problems
		\cite{shor1994algorithms,shor1995polynomial}.
		The use of digital and quantum technology can therefore both be
		seen operationally as a black box process with information encoding,
		processing and readout steps.

		The focus of this thesis is the study of \emph{quantum} aspects 
		of communication which are relevant for \emph{classical communication}  tasks.
		In order to make the presentation more self-contained,
		we will present below a brief introduction to the subject
		which describes how quantum systems are represented,
		how information can be encoded and how information can be read out.

%		\subsection{Quantum systems}
		\subsection{Quantum states}
	
			In order to describe the \emph{state} of a quantum system $B$
			we use a density matrix $\rho^{B}$ acting on a $d$-dimensional 
			complex vector space $\mathcal{H}^{B}$ 		(Hilbert space).
			To be a density matrix, the operator 
			%	We denote classical random variables as $U,X$ taking realizations in
			%	the finite sets $\mcal{U},\mcal{X}$ and use 
			%	%denote the associated probability distributions as 
			%	$p(u)$, $p(x)$ %, $p(w)$ 
			%	to denote their corresponding probability distributions.
			%
			$\rho^B$ has to be Hermitian, positive semidefinite and have unit trace.
			We denote the set of density matrices on a Hilbert space $\mcal{H}^B$
			as $\mcal{D}(\mcal{H}^B)$.

			A common choice of basis for $\mcal{H}^B$ is the standard basis 
			$\{ \ket{0}, \ket{1},$ $\ldots, \ket{d-1} \}$:
			\be
				\ket{0}  \equiv   \colvec{ 1 \\ 0 \\ \vdots \\ 0}\!\!,  \ \ \
				\ket{1}  \equiv   \colvec{ 0 \\ 1 \\ \vdots \\ 0}\!\!,  \ \ \
				\ \ldots,  \ \ \
				\ket{d-1} \equiv \colvec{ 0 \\  \vdots \\ 0  \\ 1}\!\!,
			 \ee
			 which is also known as the \emph{computational} basis.
			 
			 In two dimensions, another common basis is the
			 \emph{Hadamard} basis:
			\begin{align}
				\ket{+} & \equiv   \frac{1}{\sqrt{2}}\ket{0} + \frac{1}{\sqrt{2}}\ket{1},   \\
				 \ket{-} & \equiv    \frac{1}{\sqrt{2}}\ket{0} - \frac{1}{\sqrt{2}}\ket{1}.
			\end{align}
			%			

			%\intertext{To get the coefficients of $\vec{v}$ with respect to the Hadamard basis:}
			%\vec{v} = [v_+, v_- ]_{\mathcal{B}_X}^T     & \equiv
			%    v_+\ket{+} + v_-\ket{-}  \\
			%    &   = \braket{+}{v}\ket{+} + \braket{-}{v}\ket{-},
			%\end{align*}
			%where we had to specify the basis on the left. 
			%In the bra-ket notation, the positive coefficient w.r.t.  $\mathcal{B}_X$ is $v_+ \equiv    \braket{+}{v}$ and 
			%the negative coefficient w.r.t. $\mathcal{B}_X$ is $v_-  \equiv    \braket{-}{v}$.
			%\begin{align*}
			%\intertext{Change of basis is simple in ket notation:}
			%\vec{v} = [v_+, v_- ]_{\mathcal{B}_X}^T     & \equiv
			%    \braket{+}{v}\ket{+} + \braket{-}{v}\ket{-}  \\
			%    & =     \bra{+}\big( v_0\ket{0} + v_1\ket{1} \big) \ket{+} + \bra{-}\big( v_0\ket{0} + v_1\ket{1} \big) \ket{-}  \\
			%    & =     \big( v_0\bra{+}\ket{0} + v_1\bra{+}\ket{1} \big) \ket{+} + \big( v_0\bra{-}\ket{0} + v_1\bra{-}\ket{1} \big) \ket{-}  \\
			%    & =     \underbrace{\frac{1}{\sqrt{2}}\big( v_0 + v_1\big)}_{v_+} \ket{+} + \underbrace{\frac{1}{\sqrt{2}}\big( v_0 - v_1 \big)}_{v_-} \ket{-},
			%\end{align*}
			%where $v_0,v_1$ were the coefficients of $\vec{v}$ in the standard basis.

			%

			The eigen-decomposition of the density matrix $\rho^B$ gives us another choice of
			basis  in which to represent the state.
			Any density matrix can be written in the form:
			\be
			\!\!
			\rho^B \!
			\equiv \!
				\left[ \!\!\begin{array}{cccc} 
					\vspace{4mm} 	&  			&  			&  \\
					\ket{e_{\rho;1}} \!& \ket{e_{\rho;2}} 	\!		& \!\!\cdots \! \!			& \ket{e_{\rho;d}}\\  
					\vspace{4mm} 	&   			&  			& 
				\end{array} \!\!\right]		
				\!
				\left[ \begin{array}{cccc} 
					\lambda_{\rho;1} 	& 0 			& \cdots 			& 0\\  
					0 		& \lambda_{\rho;2} 		&  				& 0 \\
					\vdots 	&  			& \ddots 			& \vdots \\
					0 		& 0 			& \cdots			& \lambda_{\rho;d}
				\end{array} \right]		
				\!
				\left[ \begin{array}{c}
					\ifthenelse{\boolean{BOOKFORM}}{
					\hspace{2mm}  \ \  \bra{e_{\rho;1}}  \ \hspace{2mm}     \\ 
					}
					{
					\hspace{6mm}  \ \  \bra{e_{\rho;1}}  \ \hspace{6mm}     \\ 
					}
					\bra{e_{\rho;2}} \\
					\vdots \\
					\bra{e_{\rho;d}} 								
				\end{array} \right]\!\!,
				%	\left[ \begin{array}{cccc} 
				%		p_Y(y_1) 	& 0 			& \cdots 			& 0\\  
				%		0 		& p_Y(y_2) 		&  				& 0 \\
				%		\vdots 	&  			& \ddots 			& \vdots \\
				%		0 		& 0 			& \cdots			& p_Y(y_{\small  |\mcal{Y}|})
				%	\end{array} \right]		
			\ee
			where the eigenvalues $\lambda_{\rho;i}$
			are all real and nonnegative.
			In our notation, column vectors are denoted as 
			%			The eigenvectors of $\rho^B$ are denoted as 
			\emph{kets} $\ket{e_{\rho;i}}$ % \equiv \vec{e}_{\rho;i}$
			and the dual (Hermitian conjugate) of a ket is the 
			\emph{bra}:  $\bra{e_{\rho;i}} \equiv \ket{e_{\rho;i}} ^\dagger$
			(a row vector).
			%		
			%		
			%		 \in \mathbb{R}$, $\lambda_{\rho;i} \geq 0$, and further 
			%		we will denote as 
			%
			We say that $\rho^B$ is a \emph{pure state} if it has only a single non-zero
			eigenvalue: $\lambda_{\rho;1}=1$, $\lambda_{\rho;i}=0$, $\forall i > 1$.

			Because the density matrix is positive semidefinite and has unit trace ($\sum_i \lambda_{\rho;i}=1$),
			we can identify the eigenvalues of $\rho^B$ with a probability distribution:
			$p_Y(y) \equiv \lambda_{\rho;y}$.
			A density matrix, therefore, corresponds to the probability distribution
			$p_Y(y)$ over the subspaces: $\ketbra{e_{\rho;y}}{e_{\rho;y}}$.
			This property will be important when we want to define 
			the typical subspace for the tensor product state:
			%	space associated with the
			%	many copies of the quantum state: 
			$(\rho^{B})^{\otimes n} \equiv \rho^{B_1} \otimes \rho^{B_2} \otimes \cdots \otimes \rho^{B_n}$.

		%		\begin{example}[Maximally mixed state]
		%			The density matrix
		%			\be
		%			
		%			\ee
		%		\end{example}		

		%		$\ket{+},\ket{-}$.
		%		
		%		change of basis

		    Suppose that we have a two-party quantum state $\rho^{AB}$ such that
		    Alice has the subsystem $A$ and Bob has the subsystem $B$.
		%		     in their respective labs.
		    %
		    The state in Alice's lab is described by $\rho^A = \Tr_B[\rho^{AB}]$, where
		    $\Tr_B$ denotes a partial trace over Bob's degrees of freedom.
%		    In order to describe 
%		    we describe the state of 
%		    we can describe mathematically what Alice knows about the state by the reduced density 
%		    matrix: $\rho^A = \Tr_B\rho^{AB}$, where
%		    $\Tr_B$ denotes a partial trace over Bob's degrees of freedom.
%		    \comment{	Patrick says: reduced density operators donÕt necessarily correspond 
%		    			to any participantÕs knowledge. Rather, they are succinct representations of all 
%					that is necessary to predict the results of experiments performed by that participant alone.}

		%
		%	\subsection{Gentle measurement}
		%
		In order to describe the ``distance'' between two quantum states, we use the notion
		of \emph{trace distance}.
		The trace distance between states $\sigma$ and $\rho$ is
		$\|\sigma-\rho\|_1 = \mathrm{Tr}|\sigma - \rho|$, where  $|X| = \sqrt{X^{\dagger}X}$.
		Two states that are similar have trace distance close to zero,
		whereas states that are perfectly distinguishable have trace distance equal to two.

		Two quantum states can ``substitute'' for one
		another up to a penalty proportional to the trace distance between them:
		%	
		%	and we also have the following can also bound the difference 
		%	The trace distance also tells us how different the support  natural measure of
		%	how different 	
		\begin{lemma} \label{lem:trace-inequality}
		Let 	
		$0\leq \rho, \sigma, \Lambda \leq I$. Then
		\be
		\mathrm{Tr}\left[  \Lambda\rho\right]
			\leq
		\mathrm{Tr}\left[ \Lambda \sigma\right]  
			+ \left\Vert \rho-\sigma\right\Vert _{1}.
		\label{eqn:tr-trick}
		\ee
		%	holds for all  operators such that   
		\end{lemma}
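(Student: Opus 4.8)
The plan is to reduce the claim to the Jordan--Hahn decomposition of the Hermitian operator $\rho-\sigma$. Note that the inequality is one-sided (there is no absolute value on the left-hand side), so it suffices to control only the positive part of $\rho-\sigma$, which makes the argument slightly cleaner than the symmetric version.

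First I would write
\[
\mathrm{Tr}\!\left[\Lambda\rho\right]-\mathrm{Tr}\!\left[\Lambda\sigma\right]=\mathrm{Tr}\!\left[\Lambda(\rho-\sigma)\right].
\]
Since $\rho-\sigma$ is Hermitian, it decomposes as $\rho-\sigma=P-Q$ with $P,Q\geq 0$ having mutually orthogonal supports: $P$ collects the positive eigenvalues together with their spectral projectors, and $Q$ the moduli of the negative ones. Then $\left|\rho-\sigma\right|=P+Q$, so $\left\Vert\rho-\sigma\right\Vert_1=\mathrm{Tr}[P]+\mathrm{Tr}[Q]$.

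Next I would bound the two contributions separately. For the positive part, $I-\Lambda\geq 0$ gives $\mathrm{Tr}\!\left[P^{1/2}(I-\Lambda)P^{1/2}\right]\geq 0$, hence $\mathrm{Tr}\!\left[\Lambda P\right]\leq\mathrm{Tr}[P]$. For the negative part, $Q^{1/2}\Lambda Q^{1/2}\geq 0$ gives $\mathrm{Tr}\!\left[\Lambda Q\right]\geq 0$. Combining,
\[
\mathrm{Tr}\!\left[\Lambda(\rho-\sigma)\right]=\mathrm{Tr}\!\left[\Lambda P\right]-\mathrm{Tr}\!\left[\Lambda Q\right]\leq\mathrm{Tr}[P]\leq\mathrm{Tr}[P]+\mathrm{Tr}[Q]=\left\Vert\rho-\sigma\right\Vert_1,
\]
and rearranging yields \eqref{eqn:tr-trick}.

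There is essentially no serious obstacle here; the only points requiring care are not introducing an absolute value the statement does not claim, and invoking correctly the two elementary facts used — that a congruence $X\mapsto C^\dagger X C$ preserves positive semidefiniteness (applied with $C=P^{1/2}$ and $X=I-\Lambda$, and with $C=Q^{1/2}$, $X=\Lambda$), and that the trace of a positive semidefinite operator is nonnegative. An alternative one-line route is Hölder's inequality, $\mathrm{Tr}[\Lambda(\rho-\sigma)]\leq\|\Lambda\|_\infty\,\|\rho-\sigma\|_1\leq\|\rho-\sigma\|_1$ since $\|\Lambda\|_\infty\leq 1$; I would nonetheless favor the decomposition argument as it is fully self-contained.
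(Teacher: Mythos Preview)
Your proof is correct and rests on the same idea as the paper's: the paper invokes the variational characterization $\|\rho-\sigma\|_1 \geq \max_{0\leq M\leq I}\mathrm{Tr}[M(\rho-\sigma)] \geq \mathrm{Tr}[\Lambda(\rho-\sigma)]$ as a known fact, while your Jordan--Hahn decomposition argument is exactly what proves that inequality, so you have simply unpacked it rather than cited it.
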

		\begin{proof}
			This follows from a variational characterization of 
			trace distance as the distinguishability of
			the states under an optimal measurement operator $M$:
			\begin{align*}
				\left\Vert \rho-\sigma\right\Vert _{1} 
					&\equiv  \ 	2 \max_{0 \leq M \leq I} \mathrm{Tr}\left[ M(\rho-\sigma) \right] \\
					& \geq   \ 	\max_{0 \leq M \leq I} \mathrm{Tr}\left[ M(\rho-\sigma) \right] \\
					& \overset{\mdingone}{\geq} 	\  \mathrm{Tr}\left[ \Lambda(\rho-\sigma) \right] \\
					& \geq 	 \ \mathrm{Tr}\left[ \Lambda  \rho \right] 
									- 	\mathrm{Tr}\left[ \Lambda \sigma \right].
			\end{align*}%
		Equation \dingone follows since the operator $\Lambda$,
		$0\leq \Lambda\leq 1$, 
		is a particular choice of  the measurement operator $M$.%
		\end{proof}
		
		Most of the quantum systems considered in this document are  
		finite dimensional,
		but it is worth noting that there are also quantum systems 
		with continuous degrees of freedom which are represented
		in infinite dimensional Hilbert spaces.
		We will discuss the infinite dimensional case in 
		Chapter~\ref{chapter:bosonic}, where we consider the 
		quantum aspects of optical communication.
		%		 degrees of freedom 

%	, which we will focus on here,  is useful for describing 
%        phenomena like light polarization, electron spin, angular momentum and other 
%        finite dimensional degrees of freedom.
%        It is also the representation of choice for the entire field of Quantum Information Science,
%        which comprises quantum information theory, quantum cryptography, quantum error correcting codes, 
%        quantum computation and many others.

	\subsection{Quantum channels}

		By convention we will denote the input state 
		as $\sigma$ (for \emph{s}ender) and the outputs of the channel as $\rho$ (for \emph{r}eceiver).
		A noiseless quantum channel is represented by a unitary operator $U$ which acts on the input state 
		$\sigma$ by conjugation to produce the output state $\rho = U\sigma U^\dag$.
		General quantum channels are represented by completely-positive trace-preserving (CPTP) 
		maps $\mathcal{N}^{A\rightarrow B}$, which accept input states in
		$A$ and produce output states in $B$: $\rho^B = \mathcal{N}^{A\rightarrow B}(\sigma^A)$.

		If the sender wishes to transmit some classical message $m$ to the receiver
		using a quantum channel, her encoding procedure will consist of 
		a classical-to-quantum encoder $\mcal{E} \colon m  \to \sigma^A$, %, \forall m \in \mcal{M}$,
		to prepare a message state $\sigma^A \in \mcal{D}(\mcal{H}^A)$ suitable as input for the channel.
		We call this the \emph{state preparation} step.

		If the sender's encoding is restricted to transmitting a 
		finite set of orthogonal states $\{ \sigma^A_x \}_{x \in \mcal{X}}$,
		then we can consider the choice of the signal states $\{ \sigma^A_x \}$ to be
	    part of the channel.
	    Thus we obtain a channel with classical inputs $x \in \mcal{X}$
	    and quantum outputs: $\rho_x^B = \mcal{N}^{X \to B}(x) \equiv \mcal{N}^{A\to B}(\sigma_x^A)$.
		A classical-quantum channel, $\mcal{N}^{X\to B}$, 
		is represented by the set of $|\mcal{X}|$ possible output states 
		$\{ \rho^B_x \equiv \mcal{N}^{X\to B}\!(x) \}$, meaning that each
		classical input of $x\in \mcal{X}$ leads to a different quantum output $\rho^B_x\in \mcal{D}(\mcal{H}^B)$.

	\subsection{Quantum measurement}

		%In a communication scenario, 
		The decoding operations performed by the 
		receivers correspond to quantum measurements on the outputs of the channel.
		A quantum measurement %are modelled 
		is a positive operator-valued measure (POVM) 
		$\left\{ \Lambda_{m}\right\}_{m\in\left\{  1,\ldots,|\mathcal{M}|\right\}  }^{B\to M'}$ on
		the system $B$, the output of which we denote $M^{\prime}$. 
		The probability of outcome $M^{\prime}=m$ when the state $\rho^B$ is measured
		is given by the Born rule:
		\be
			\Pr\{M^{\prime}=m\} \equiv \Tr[ \Lambda^B_m \rho^B ].
		\ee
		To be a valid POVM, the set of $|\mcal{M}|$ operators $\Lambda_{m}$ must all be positive semidefinite
		and sum to the identity:  $\Lambda_{m} \geq 0, \,\,\, \sum_{m}\Lambda_{m}=I$.

	%		The decoding step in a quantum communication protocol is 
	%		which extracts
	%		 communication protocol 
	%		They are our only means to relate the quantum world to classical variables we can observe.
	%		A measurement operation $\E\!\!:\!\!\cH^A \to (\cH^A,\NN)$ acts on density matrices to produce a classical
	%		output as well as a possibly modified quantum state.
	%		It is modeled by a set of projection operators $\{M_i\}$, which sum up to the identity operator 
	%		$\sum_i M_i^\dag M_i = I$.
	%		The probability of outcome $m$ occurring when the input system is $\rho$ is given by
	%		\be
	%			p(m)	=	\Tr\!\left( M_m^\dag \rho M_m \right).
	%		\ee
	%		The output quantum state associated with this outcome is
	%		\be
	%			 \tilde{\rho}_m	=  \frac{M_m^\dag \rho M_m}{	\Tr\!\left( M_m^\dag \rho M_m \right)	}.
	%		\ee

		%		They can be thought of as hybrids between measurements and quantum operations.
		A quantum instrument $\{  \Upsilon_k \}^{A\to B}$ is a more general operation which 
		consists of a collection of completely positive (CP) maps such that $\sum_k  \Upsilon_k$ is 
		trace preserving \cite{DL70}.
		When applied to a quantum state $\sigma^A$, the different elements are applied
		with probability $p_k = \Tr\!\left[ \Upsilon_k(\sigma^A) \right]$ 
		resulting in different normalized outcomes 
		$\rho_k^B=\frac{1}{p_k}\Upsilon_k(\sigma^A)$.

	\subsection{Quantum information theory}
	
		Many of the fundamental ideas of quantum information theory are analogous to 
		those of classical information theory. 
		For example,
		we quantify the information content of quantum systems 
		using the notion of entropy.
		
		\begin{definition}[von Neumann Entropy] 
			Given the density matrix $\rho^A \in \mcal{D}(\cH^A)$, the expression
			\be
				H(A)_\rho=-\Tr\left(\rho^A\log\rho^A\right)
			\ee
			is known as the \emph{von Neumann entropy} of the state $\rho^A$. 
		\end{definition}
		
	            Note that the symbol $H$ is used for both classical and quantum entropy.
			%	            because it is essentially
			%	            the same function.
			%	            Indeed, 
	            The von~Neumann entropy of quantum state $\rho^A$ 
	            with spectral decomposition $\rho^A = \sum_i \lambda_i \proj{e_i}$, is
	            equal to the Shannon entropy of its eigenvalues.
	            \be
		            H(A)_\rho=-\Tr\left(\rho^A\log\rho^A\right) = - \sum_i \lambda_i \log \lambda_i = H(\{\lambda_i\}).
	            \ee
%  		   The von Neumann entropy of a pure state is zero, since it has only a single eigenvalue.

			For bipartite states $\rho^{AB}$ we can also define the quantum conditional entropy
			\be
				H(A|B)_\rho 	\equiv 		H(AB)_\rho - H(B)_\rho,					\label{cond-entrpy} 
			\ee
			where $H(B)_\rho = -\Tr\left(\rho^B\log\rho^B\right)$ is the entropy of the reduced density matrix
			$\rho^B = \Tr_A\!\left( \rho^{AB}\right)$. In the same fashion we can also define the 
			quantum mutual  information
			\be
				I(A;B)_\rho 	\equiv		H(A)_\rho + H(B)_\rho - H(AB)_\rho,
			\ee
			and in the case of a tripartite system $\rho^{ABC}$ we define the conditional mutual information 
			as 
			\bea
				I(A;B|C)_\rho 	&\equiv&	H(A|C)_\rho + H(B|C)_\rho - H(AB|C)_\rho \label{cond-mut-info} \\
								&=&		H(AC)_\rho + H(BC)_\rho - H(ABC)_\rho - H(C)_\rho.
			\eea
		    
			\noindent It can be shown that $I(A;B|C)$ is non negative for any tripartite state $\rho^{ABC}$.
			The formula $I(A;B|C)\geq 0$ can also be written in the form
			\be	\label{strong-subadditivity}
				H(AC) + H(BC) 	\geq	H(C) + H(ABC).
			\ee
			This inequality, originally proved in \cite{LR73}, is called the \emph{strong subadditivity} of von Neumann 
			entropy and forms an important building block of quantum information theory.

		Consider the classical-quantum state $\rho^{XB}$ given by:
		\be
			\rho^{XB}  = \sum_{x \in \mcal{X}} p_X(x) \ketbra{x}{x}^X \otimes \rho_x^{B}.
		\ee
		The conditional entropy $H(B|X)$ of this state is equal to:
		\be
			H(B|X) 
			= \sum_{x \in \mcal{X}}  p_X(x) H(\rho_x^B)
			= \sum_{x \in \mcal{X}}  p_X(x) H(B)_{\rho_x}.
		\ee

%%%%%%%%%%%%%%%	%%%%%%%%%%%%%%%%%%%%%%%%%%%%%%
%\input{02.Intro-to-QM.tex} 			%c
%%%%%%%%%%%%%%%	%%%%%%%%%%%%%%%%%%%%%%%%%%%%%%
%%%%%%%%%%%%%%%	%%%%%%%%%%%%%%%%%%%%%%%%%%%%%%
%%%%%%%%%%%%%%%	%%%%%%%%%%%%%%%%%%%%%%%%%%%%%%
%%%%%%%%%%%%%%%	%%%%%%%%%%%%%%%%%%%%%%%%%%%%%%
%%%%%%%%%%%%%%%	%%%%%%%%%%%%%%%%%%%%%%%%%%%%%%

%	\section{Quantum typicality}
	
%			The von Neumann entropy of the state $\rho^{B}$ 
%		is given by $H(B)_{\rho}\!\equiv\!-\text{Tr}\left[  \rho^{B}\log_2\rho^{B}\right]\! =\! -\sum_i \lambda_{\rho;i} 
%		\log \lambda_{\rho;i}$.
%		Note that the entropy of a quantum state only depends on its eigenvalues.
%		A \emph{pure state} has a single non-zero eigenvalue and therefore has zero entropy.

%	    For a bipartite state $\sigma^{AB}$, we define the quantum conditional entropy $H(A|B)_\sigma \equiv H(AB)_\sigma-H(B)_\sigma$,
%	    the quantum mutual information $I(A;B)_\sigma \equiv H(A)_\sigma+H(B)_\sigma-H(AB)_\sigma$,
%	    and the conditional quantum mutual information $I(A;B|C)_\sigma \equiv H(A|C)_\sigma+H(B|C)_\sigma-H(AB|C)_\sigma$.
%	%

%!TEX root = thesis.tex

\section{Quantum typicality}

	The notions of typical sequences and typical sets generalize to the quantum setting by virtue of the spectral
	theorem. 
	Let $\cH^B$ be a $d_B$ dimensional Hilbert space and let 
	$\rho^B  \in \mcal{D}(\cH^B)$ be the density matrix associated with a quantum state.
	%
	%	The spectral decomposition of $\rho^B$ is denoted
	%	$\rho^B=U\Lambda U^\dag$ where $\Lambda$ is a diagonal
	%	matrix of positive real eigenvalues that sum to one.
	We identify the eigenvalues of $\rho^B$ with the probability 
	distribution $p_Y(y)=\lambda_{\rho;y}$ %\Lambda_{yy}$ 
	and write the 
	spectral decomposition as:
	\be
		\rho^B = \sum_{y=1}^{d_B} p_Y(y) \ket{e_{\rho;y}}\bra{e_{\rho;y}}^B
		\label{eq:spectral-decomp-rhoBG}
	\ee
	where $\ket{e_{\rho;y}}$ is the eigenvector of $\rho^B$ corresponding to eigenvalue $p_Y(y)$.
	%
	%	The von Neumann entropy of the density matrix $\rho^B$ is
	%	\be
	%		H(B)_\rho=-\Tr\{\rho^B\log\rho^B\}=H(p_Y).
	%	\ee

	Define the set of  $\delta$-typical eigenvalue labels according to the eigenvalue distribution $p_Y$ as
	\be
		\mcal{T}^{(n)}_\delta(Y) 
		\! \equiv \!
		\left\{ \!
			y^n \in \mcal{Y}^n  \colon \! \left|  -\frac{\log p_{Y^n}(y^n)}{n}  - H(Y) \right| \! \leq \delta %, \ \forall x_a \in\cX 
			%x^n \in \mcal{X}^n \ \big| \ |  \frac{N(x_a|x^n)}{n}  -p_X(x_a)| \leq \delta, \ \forall x_a \in\cX 
		\right\}.%
	\ee
	For a given string $y^n = y_1y_2\ldots y_i\ldots y_n$ we define the 
	corresponding eigenvector as 
	\be
		\ket{e_{\rho;y^n}} = \ket{e_{\rho;y_1}} \otimes \ket{e_{\rho;y_2}} \otimes \cdots \otimes \ket{e_{\rho;y_n}},
	\ee
	where for each symbol $y_i=b \in \{1,2,\ldots,d_B\}$ we select the b$^\textrm{th}$ eigenvector 
	$\ket{e_{\rho;b}}$.
	
	The typical subspace associated with the density matrix $\rho^B$
	is defined as
	\be
	{A}^n_{\rho, \delta}
		= \textrm{span} \left\{  \ket{e_{\rho;y^n}} \colon y^n \in \mcal{T}^{(n)}_\delta(Y)  \right\}.
	\ee
%	where $\ket{e_{\rho;y^n}}=$ is an orthonormal basis for $\left(\cH^B\right)^{\otimes n}$
%	build from the tensor product of elements basis elements $\ket{e_{\rho;a}}$
%	for each $i \in [n]$.
	The typical projector is defined as 
	\be
		\Pi^{n}_{\rho^B, \delta}
		= \sum_{y^n\in   
			\mcal{T}^{(n)}_\delta(Y)}
			\ket{e_{\rho;y^n}}\! \bra{e_{\rho;y^n}}.
		%\ket{y^n_\rho}\bra{y^n_\rho}.
		%= \sum_{t\in\tau_{\delta}} 	\Pi^{A^n}_{t_\alpha}.
	\ee
	Note that the typical projector is linked twofold to the spectral decomposition
	of \eqref{eq:spectral-decomp-rhoBG}: the sequences $y^n$ are selected 
	according to $p_Y$ and the set of typical vectors are built from tensor
	products of orthogonal eigenvectors $\ket{e_{\rho;y}}$.

	%In the following, we present the corresponding properties of quantum version
	%typical subspaces
	%that will be used later.
	Properties analogous to (\ref{cc1BG}) - (\ref{cc3BG}) hold.
	For any $\epsilon,\delta>0$, and all sufficiently large $n$ we have
	\begin{eqnarray}
		\label{eqn:TypP-prop-one}
		&\!\!\!\!\!\!\!\Tr\{\rho^{\otimes n} \Pi^{n}_{\rho, \delta}\}  &\geq  1-\epsilon  \\
		\label{eqn:TypP-prop-two}
		2^{-n[ H(B)_\rho+\delta]}\Pi^n_{\rho, \delta} 
		\leq  
		&\!\!\Pi^n_{\rho, \delta} \rho^{\otimes n} \Pi^{n}_{\rho, \delta}
		\!\!&
		\leq 
		2^{-n[ H(B)_\rho-\delta ]}\Pi^{n}_{\rho, \delta}, \\
	\label{eqn:TypP-prop-three}
		[1 - \epsilon] 2^{n[ H(B)_\rho-\delta]}
		\leq  
		&\Tr\{\Pi^{n}_{\rho, \delta}\} &
		\leq 
		2^{n[ H(B)_\rho+\delta]}.
	\end{eqnarray}%

	Equation \eqref{eqn:TypP-prop-one} tells us that most of the support of the 
	state $\rho^{\otimes n}$ is within the typical subspace.
	%% MARK SAYS: maybe interpret in terms of a typical subspace measurement, ...
	%
	The interpretation of \eqref{eqn:TypP-prop-two} is that the eigenvalues 
	of the  state $\rho^{\otimes n}$ are bounded between
	$2^{-n[ H(B)_\rho+\delta ]}$ and $2^{-n[ H(B)_\rho-\delta ]}$
	on the typical subspace ${A}^n_{\rho, \delta}$.
	
	\bigskip 
	
	\noindent
	{\bf Signal states\ \ }
	Consider now a set of quantum states $\{\rho^B_{x_a}\}$,
	%corresponds to a probability distribution on a finite set $\mcal{X}$
	%and a set of quantum states $\rho^B_{x_a}$, for each 
	$x_a \in \mcal{X}$.
	We perform a spectral decomposition of each $\rho^B_{x_a}$
	to obtain 
	\be
		\rho^B_{x_a}= \sum_{y=1}^{d_B} p_{Y|X}(y|x_a) \ket{e_{\rho_{x_a};y}}\bra{e_{\rho_{x_a};y}}^B,
		\label{eq:cond-spectral-decomp-rhoBG}
	\ee
	where $p_{Y|X}(y|x_a)$  is the $y^\textrm{th}$ eigenvalue of  $\rho^B_{x_a}$ and 
	$\ket{e_{\rho_{x_a};y}}$ is the corresponding eigenvector.
	
	We can think of $\{\rho^B_{x_a}\}$ as a classical-quantum (\emph{c-q}) channel
	where the input is some $x_a \in \mcal{X}$ and the output is the 
	corresponding quantum state $\rho^B_{x_a}$.
	If the channel is memoryless, then for  each input sequence 
	$x^n=x_1x_2\cdots x_n$ we have the corresponding tensor product output state: 
	\be
		\rho^{B^n}_{x^n} = \rho^{B_1}_{x_1} \otimes \rho^{B_2}_{x_2} \otimes \cdots \otimes \rho^{B_n}_{x_n}.
	\ee

	% NEVER NEEDED -- used in strong typicality route -- but not for weak
	%	Using these we define conditionally typical set 
	%	\be
	%		\cT^{n}_{p_{Y|x_a},\delta} 
	%		\! \equiv \!
	%		\left\{ \!
	%			y^n % \in \mcal{Y}^n  
	%			\colon \! \left|  -\frac{\log(p_{Y^n|X^n}(y^n|(x_a)^n))}{n}  - H(Y|x_a) \right| \! \leq \delta %, \ \forall x_a \in\cX 
	%			%x^n \in \mcal{X}^n \ \big| \ |  \frac{N(x_a|x^n)}{n}  -p_X(x_a)| \leq \delta, \ \forall x_a \in\cX 
	%		\right\}.%
	%	\ee
	%	and conditionally typical 
	%	%subspace for $x_a$ will be
	%	%\be
	%	%	{A}^n_{\rho_{x_a}, \delta} =
	%	%\ee
	%	%and the corresponding typical 
	%	projector for $x_a$  is
	%	\be
	%		\Pi^{n}_{\rho^B_{x_a}, \delta}
	%		= \sum_{y^n\in\cT_{p_{Y|x_a},\delta}^n}	
	%		 \ket{e_{\rho_{x_a};y^n}} \! \bra{e_{\rho_{x_a};y^n}}.
	%		  %\ket{y^n_\rho}\bra{y^n_\rho}.
	%		%= \sum_{t\in\tau_{\delta}} 	\Pi^{A^n}_{t_\alpha}.
	%	\ee	

\subsection{Quantum conditional typicality}

\begin{figure}
\begin{center}

	\begin{tikzpicture}[node distance=2.0cm,>=stealth',bend angle=45,auto]

	  \begin{scope}
		% c-q P-to-P channel
%		\node [draw=black,anchor=south west,inner sep=0] (bglayer) at (0,0) {\includegraphics[width=13cm]{../images/Panorama_Mont_royal-lowres.jpg} };
%		\draw [anchor=south west,help lines] (0,0) grid (12,5);
%		\node [draw=red] (aaa) at (0,0) {x};
%		\node [cnode] (Tx) at (1.2,3.3)  [ label=left:Tx,    ]                            {\footnotesize $x$} ;
%		\node [qnode] (Rx1) at (6.5,1.5) [ label=right:${\color{white}\textrm{Rx1}}$] {\footnotesize $\rho^{B_1}$} ;
%		\node [qnode] (Rx2) at (10.5,2.5) [ label=right:${\color{white}\textrm{Rx2}}$] {\footnotesize $\rho^{B_2}$} ;

		% Wn
%		\draw  (1.4, 1.7) coordinate (WnCenter);
%		\path (WnCenter) ++(110:20mm) coordinate (WnCorner1);
%		\path (WnCenter) ++(230:20mm) coordinate (WnCorner2);		
%		\path (WnCenter) ++(350:20mm) coordinate (WnCorner3);
%		\draw (WnCorner1) -- (WnCorner2) -- (WnCorner3) -- cycle;
%		%
%		\path (WnCenter) ++(110:15mm) coordinate (WnCorner1in);
%		\path (WnCenter) ++(230:15mm) coordinate (WnCorner2in);		
%		\path (WnCenter) ++(350:15mm) coordinate (WnCorner3in);
%		\onslide<2->{
%			\draw (WnCorner1in)  [rounded corners=10pt,line width=1pt] -- (WnCorner2in) -- (WnCorner3in) -- cycle;
%		}
		
		% LABELS
		
%		\draw (WnCenter) ++ (127:18mm) node  {$\mathcal{W}^n$};
		%		\draw   (1,2.5) ellipse (15mm and 25mm); % node {} (OutSpace);
		%		\draw   (4,2.5) ellipse (15mm and 25mm); % node {} (OutSpace);

%		\draw (0,0) [rounded corners=10pt] -- (1,1) -- (2,1) [sharp corners] -- (2,0) [rounded corners=5pt] -- cycle;		

		%Xn
		\draw  (3.3, 3.3) coordinate (XnCenter);
		\path (XnCenter) ++(35:20mm) coordinate (XnCorner1);
		\path (XnCenter) ++(165:20mm) coordinate (XnCorner2);		
		\path (XnCenter) ++(280:20mm) coordinate (XnCorner3);

		\draw (XnCorner1) -- (XnCorner2) -- (XnCorner3) -- cycle;						
		\path (XnCenter) ++(0,0.2) ++(35:12mm) coordinate (XnCorner1in);
		\path (XnCenter) ++(165:12mm) coordinate (XnCorner2in);		
		\path (XnCenter) ++(0.2,0)++(280:12mm) coordinate (XnCorner3in);

%		\onslide<2->{
			\draw (XnCorner1in)  [rounded corners=10pt,line width=1pt] -- (XnCorner2in) -- (XnCorner3in) -- cycle;		
%		}

		%
		\node at (2.2,4.2) {$\mathcal{X}^n$};
		
		\node (OnXnTyp) at (3.5,2.3) {};
		\node at (2,1.5) {$\mcal{T}_\delta^{(n)}\!(X)$} 
			edge [post] node[]  	{}  (OnXnTyp.center);

		%
%		\node at (11,4.4) {$\mathcal{Y}^n$};
%		%Yn
%		\draw  (9.3, 3.1) coordinate (YnCenter);
%		\path (YnCenter) ++(20:27mm) coordinate (YnCorner1);
%		\path (YnCenter) ++(140:27mm) coordinate (YnCorner2);		
%		\path (YnCenter) ++(260:27mm) coordinate (YnCorner3);
%		\draw (YnCorner1) -- (YnCorner2) -- (YnCorner3) -- cycle;						

%		% Yn typ
%		\path (YnCenter) ++(0,0.1) ++(18:20mm) coordinate (YnCorner1in);
%		\path (YnCenter) ++(148:19mm) coordinate (YnCorner2in);		
%		\path (YnCenter) ++(-0.1,0)++(268:19mm) coordinate (YnCorner3in);
%		\draw [fill=black!8] (YnCorner1in)  [rounded corners=10pt,line width=1pt] -- (YnCorner2in) -- (YnCorner3in) -- cycle;		
%		%
%		\node (OnYnTyp) at (9.2,1.3) {};
%		\node at (11, 0.7) {$\mcal{T}_\delta^{(n)}\!(Y)$} 
%			edge [post] node[]  	{}  (OnYnTyp.center);

% CLASSICAL COND TYP
%		\node [circle,fill,draw,inner sep=0pt,minimum size=2pt,label=above:$\xnlatex$,  xshift=1mm,yshift=-3mm] (xn) at (XnCenter)  {};
%		% Yn|x^n typ
%		\path (YnCenter) ++(0,0.3) ++(18:5mm) coordinate (YngXCorner1in);
%		\path (YnCenter) ++(148:5mm) coordinate (YngXCorner2in);		
%		\path (YnCenter) ++(-0.1,0)++(268:5mm) coordinate (YngXCorner3in);
%		\draw [fill=black!23] (YngXCorner1in)  [rounded corners=10pt,line width=1pt] -- (YngXCorner2in) -- (YngXCorner3in) -- cycle;		
%		%
%		\node (OnYngXTypTop) at (9.63,3.52) {};
%		\node (OnYngXTypBot) at (9.27,2.78) {};
%		\node at (7, 1) {$\mcal{T}_\delta^{(n)}\!(Y|x^n)$} 
%			edge [post] node[]  	{}  (OnYngXTypBot.center);

%		\draw (OnYngXTypTop) -- (xn);
%		\draw (OnYngXTypBot) -- (xn);

% QUANTUM COND TYP

		\node [circle,fill,draw,inner sep=0pt,minimum size=2pt,label=above:$\xnlatex$,  xshift=1mm,yshift=-3mm] (xn) at (XnCenter)  {};

		\node at (8,4.8) {$\mathcal{H}^{B^n}$};
		\draw   (9, 2.5) ellipse (15mm and 25mm); % node {} (OutSpace);
%		 \shade [top color=blue,bottom color=white,shading angle=60]    (9, 2.5) ellipse (15mm and 25mm); % node {} (OutState);

		\draw [line width=1pt, fill=black!8]  (9,2.5) ellipse (13mm and 20mm); % node {} (OutAvgTypSpace);
		\node (OnBnTyp) at (9.9,3.9) {};
		\node at (10.7,4.7) {$\Pi_{\bar{\rho}}$} 
			edge [post] node[]  	{}  (OnBnTyp.center);

		\node[ellipse,draw=black!80,fill=black!23,thick,inner sep=3pt,
			rotate=90,
			label=right:$\Pi_{\rho_{x^n}}$] 
			(XWCondOut) 
			at (8.5,2.7) { $\phantom{ {aa} }$}; 		
		\draw (XWCondOut.east) -- (xn);
		\draw (XWCondOut.west) -- (xn);
		
%		\node [shape=ellipse,fill,draw,inner sep=0pt,minimum size=2pt] (xn) at (10,2.5) {};
%		\node [ellipse,rotate=10,x radius=1cm,y radius=1.5cm,draw] (OutSpace) at (10,2.5) 	{};
	  \end{scope}
%	  \begin{pgfonlayer}{background}
%	    \filldraw [line width=4mm,join=round,black!10]
%	      ([xshift=0mm,yshift=0.5mm]Tx.north -| Tx.east) rectangle ([xshift=1mm]Tx.south -| Rx.west);
%	  \end{pgfonlayer}
	\end{tikzpicture}
\end{center}
\label{fig:Q-cond-typ-set}
\caption{Illustration of a conditionally typical subspace for some sequence $x^n$,
and the output-typical subspace.}
\end{figure}
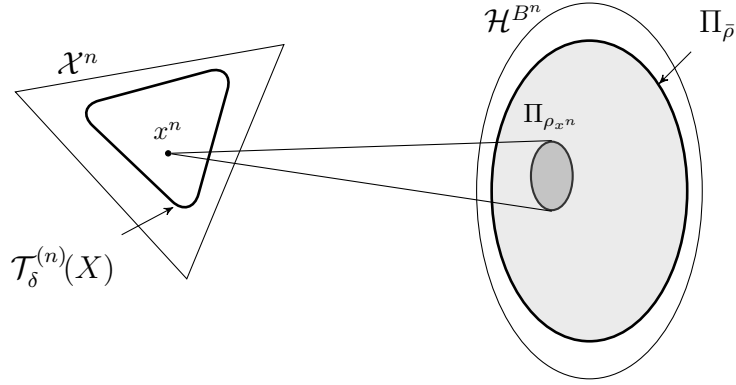

	\medskip
	\noindent
	{\bf Conditionally typical projector \ \ }
	% random code
%	Now suppose we have a random iid sequence $X^n=X_1X_2\cdots X_n$
%	where each symbol is generated from $p_{X}$. %^n}=\prod_i^n p_X(x_i)$.
	Consider the ensemble $\left\{  p_{X}\!\left(  x_a\right)  ,\rho_{x_a}\right\}$.
	The choice of distributions induces the following classical-quantum state:
	\be
		\rho^{XB} =
	       	\sum_{x_a} 
			p_{X}\!\left(x_{a}\right)
			%p(x_1)p(x_2) 
			\ketbra{x_a}{x_a}^{X} \!\!
			\otimes \!
			\rho^{B}_{x_a}.
	\ee

	We define 
	$
	%	\be
		H (B|X)_\rho
		\equiv
		 \sum_{x_a \in \mcal{X}} 
			p_{X}(x_a) 
 			H (\rho_{x_a})
	$
	to be the conditional entropy of this state.
	%	\ee	
	%	or equivalently, 
	Expressed in terms of the eigenvalues of the signal states,
	the conditional entropy becomes
	\be
		H (B|X)_\rho
		\equiv
		H(Y|X)
		\equiv
		\sum_{x_a}p_{X}(x_a)H(Y|x_a),
%		\sum_{x_a \in \mcal{X}} 
%			p_{X}(x_a) \sum_y p_{Y|X}(y|x_a)\log \frac{1}{p_{Y|X}(y|x_a)},
	\ee	
	where $H(Y|x_a) = - \sum_y p_{Y|X}(y|x_a) \log p_{Y|X}(y|x_a)$
	is the entropy of the eigenvalue distribution shown in \eqref{eq:cond-spectral-decomp-rhoBG}.

	We define the $x^n$-conditionally typical projector
	as follows:
	% as the tensor product of
	%typical projectors selected according to the symbols $x_i$:
	\be
		\Pi^{n}_{\rho^B_{x^n}, \delta}
%		= \Pi^{n}_{\rho^B_{x_1}, \delta} \otimes \Pi^{n}_{\rho^B_{x_2}, \delta} \otimes \cdots \otimes \Pi^{n}_{\rho^B_{x_n}, \delta}
		=  \sum_{y^n \in 
			\mcal{T}^{(n)}_\delta(Y|x^n)
			}
			\ket{e_{\rho_{x^n};y^n}}\!\bra{e_{\rho_{x^n};y^n}},
		%{y^n\in\cT_{p_{Y|x^n},\delta}^n}	 \ket{y^n_\rho}\bra{y^n_\rho}.
		%= \sum_{t\in\tau_{\delta}} 	\Pi^{A^n}_{t_\alpha}.	
		\label{eqn:cond-typ-projector-def}
	\ee
	where the set of conditionally typical eigenvalues $\mcal{T}^{(n)}_\delta(Y|x^n)$
	%that appears in \eqref{eqn:cond-typ-projector-def} 
	consists of 
	all sequences $y^n$ which satisfy: 
	%	are typical with respect to the above 
	%	tensor-product conditional distribution:	
	\be
		%		\cT^{n}_{\rho^{B^n}_{x^n},\delta}
		\mcal{T}^{(n)}_\delta(Y|x^n) 
		\! \equiv \!
		\left\{ \!
			y^n %\in \mcal{Y}^n  
			\colon \! \left|  -\frac{\log p_{Y^n|X^n}(y^n|x^n)}{n}  - H(Y|X) \right| \! \leq \delta %, \ \forall x_a \in\cX 
			%x^n \in \mcal{X}^n \ \big| \ |  \frac{N(x_a|x^n)}{n}  -p_X(x_a)| \leq \delta, \ \forall x_a \in\cX 
		\right\},%
	\ee	
	with $p_{Y^n|X^n}(y^n|x^n) = \prod_{i=1}^n p_{Y|X}(y_i|x_i)$.
%		\label{conditional-typ-seqs}
%	\]
	
	%
	%	Since the individual projectors are expressed in 
	%      the corresponding eigenbases for $\rho_{x_i}$,
	The states $\ket{e_{\rho_{x^n};y^n}}$ are built from tensor products of eigenvectors
	for the individual signal states:
	\be
		\ket{e_{\rho_{x^n};y^n}} = \ket{e_{\rho_{x_1};y_1}} \otimes \ket{e_{\rho_{x_2};y_2}} \otimes \cdots \otimes \ket{e_{\rho_{x_n};y_n}},
	\ee
	where the string $y^n = y_1y_2\ldots y_i\ldots y_n$ varies over different choices of bases for $\mcal{H}^B$.
	For each symbol  $y_i=b \in \{1,2,\ldots,d_B\}$ we select $\ket{e_{\rho_{x_a};b}}$:
	the b$^\textrm{th}$ eigenvector from the eigenbasis of $\rho_{x_a}$ corresponding to the letter $x_i = x_a \in \mathcal{X}$.
	%

%	For a given $x^n$ and a string $y^n = y_1y_2\ldots y_i\ldots y_n$ the corresponding eigenvalue is
	The following bound on the rank of the conditionally typical projector
	applies:
	\be
		%\ExpX 
		\Tr\{  \Pi^{n}_{\rho^B_{x^n}, \delta} \} \leq 2^{n[H(B|X)_\rho + \delta] }.
		\label{eqn:bound-on-size}
	\ee

\section{Closing remarks}

In the next chapter, we will show how the properties of the typical sequences 
and typical subspaces can be used to construct coding theorems for 
classical and classical-quantum channels.

%!TEX root = thesis.tex

\chapter{Point-to-point communication}

									\label{chapter:p-to-p}

	In this chapter we describe the point-to-point communication 
	scenario in which there is a single sender and a single receiver.
	In Section~\ref{sec:clas-chan-code}, we review Shannon's channel coding theorem and give the details
	of the achievability proof in order to introduce the idea of \emph{random coding}
	in its simplest form.
	Our presentation is somewhat unorthodox since we use only the properties
	of the conditionally typical sets and not the jointly typical sets.
    Though, following this approach allows us to directly generalize our proof techniques
    to the quantum case.

	In Section~\ref{sec:quan-chan-code} we will discuss the Holevo-Schumacher-Westmoreland (HSW) Theorem
	and show an achievability proof.
	%which is the application of the random coding idea to
	%with proof.
	We do so with the purpose of introducing important background material on the
	construction of quantum decoding operators.
	%and quantum decoding.
	We show how to construct a decoding POVM defined in terms of the conditionally typical projectors.
	% in the simplest communication scenario.
	%
	Readers interested only in the essential parts should consult 
	Lemma~\ref{lem:HN-inequality} and Lemma~\ref{lem:gentle-operator}, 
	since they will be used throughout the remainder of the text.

%This is the simplest communication setting and so 

%In this , we can 

%We will describe

%
%The discussion in

%!TEX root = thesis.tex

\section{Classical channel coding}
%Measuring communication}
															\label{sec:clas-chan-code}

	The fundamental problem associated with communication channels 
	is to calculate and formally prove their capacity for information transmission.
	We can think of the use of a channel $\mcal{N}$ as a \emph{communication resource},
	of which we have $n$ instances.
	Each use of the channel is assumed to be independent,
	and modelled by the conditional probability distribution
	$p_{Y|X}(y|x)$, where $x$ and $y$ are elements from 
	the  finite sets $\mcal{X}$, $\mcal{Y}$.
	This is called the discrete memoryless setting.

	%	The capacity $C$ of the channel $\mcal{N}$ for sending classical
	Our goal is to study the 
	%	In an analogy with chemical equations,consider the 
	%	messages is the
	\emph{rate} $R$ at which the channel $\mcal{N}$ can be converted
	into copies of the noiseless binary channel $[c\to c] \equiv %p_{Y|X}(y|x)=
	\delta(x,y), \ x,y \in \{0,1\}$,
	which represents the canonical unit resource of communication.
	%	The ``chemical'' equation would be
	This conversion can be expressed as follows:
	\be
		n \cdot \mcal{N}  \ \ \overset{ (1-\epsilon)}{\longrightarrow} \ \ nR \cdot  [c \to c].
        \label{ptopyo}
	\ee
	This equation describes a protocol in which $n$ units of the noisy communication 
	resource $\mcal{N}$ are transformed into $nR$ bits of noiseless transmission,
	and the protocol succeeds with probability $(1-\epsilon)$.
	Note that we allow the communication protocol to fail
	with probability $\epsilon$, but $\epsilon$ is an arbitrarily small number for sufficiently large $n$.
	%	can take $\epsilon \to 0$, thus the conversion is exact in the limit.
	%
	To prove that the rate $R$ is \emph{achievable},
	one has to describe the coding %encoding and decoding
	strategy and prove that %average 
    the probability of error for that strategy can be made arbitrarily small.
	Usually, the right hand side in equation \eqref{ptopyo} 
	is measured as the number of different messages
	$\mcal{M} \equiv \{1, 2, \ldots, 2^{nR} \}  \equiv [1:2^{nR}]$
	that can be transmitted using $n$ uses of the channel.
	One can think of the $nR$ individual bits of the message
	as being noiselessly transmitted to the receiver.
	The channel coding pipeline can then be described as follows:
	
	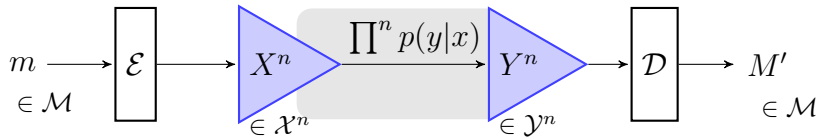
\begin{figure}[htbp]
	\begin{center}

	\begin{tikzpicture}[node distance=1.8cm,>=stealth',bend angle=45,auto]
	  \tikzstyle{cnode}=[isosceles triangle,isosceles triangle apex angle=60,thick,draw=blue!75,fill=blue!20,minimum height=0.3cm] 
	  \tikzstyle{processing}=[rectangle,thick,draw=black,fill=white,minimum height=1.5cm] 
	  \tikzstyle{measurment}=[rectangle,thick,draw=blue!75,fill=blue!20,minimum height=0.3cm] 
	  \tikzstyle{qnode}=[circle,thick,draw=blue!75,fill=blue!20,minimum size=6mm] 
	  \tikzstyle{every label}=[black, font=\footnotesize]

	  \begin{scope}
		% c-c channel setup
		\node (m) [ label=below:$\ \ \ \ \ \in \mcal{M}$     ]  {$m$} ;
		\node [processing,right of=m,xshift=-3mm] (Enc)  {$\mcal{E}$} 
				edge  [pre]				node[swap]  	{}	(m);
		\node [cnode] (Xn) [ label=below:$\ \ \in \mcal{X}^n$, right of=Enc]                {$X^n$}
				edge  [pre]				node[swap]  	{}	(Enc);
		\node [cnode] (Yn) [ label=below:$\ \ \in \mcal{Y}^n$, right of=Xn, xshift=1.5cm]                {$Y^n$}
				edge  [pre]				node[swap]  	{$ \prod^n p(y|x)$}	(Xn);	
		\node [processing,right of=Yn] (Dec)  {$\mcal{D}$} 
				edge  [pre]				node[swap]  	{}	(Yn);	
		\node (Mpr) [ label=below:$\ \ \ \ \ \in \mcal{M}$,right of=Dec,xshift=-3mm   ]  {$M^\prime$} 
				edge  [pre]				node[swap]  	{}	(Dec);	
	  \end{scope}
	  \begin{pgfonlayer}{background}
	    \filldraw [line width=4mm,join=round,black!10]
	      ([xshift=-4mm]Xn.north -| Xn.east) rectangle (Yn.south -| Yn.west);
	  \end{pgfonlayer}
	\end{tikzpicture}

	\end{center}
	\caption{\small Classical channel coding setup.
			The diagram shows the encoding, transmission and decoding
			steps of a communication protocol that uses $n$ copies of the 
			classical channel  $\mcal{N}=(\mcal{X}, p_{Y|X}(y|x), \mcal{Y})$. }
	\label{fig:classical-transmission}

	\end{figure}

The probability of error when sending message $m$ is defined as
$p_e(m) \equiv \textrm{Pr}\{ M^\prime \neq m \}$, 
where $M^\prime \equiv \mcal{D} \circ \mcal{N}^n \circ \mcal{E}(m)$
is the random variable associated with the output of the protocol.
The average probability of error over all messages is
\be
	\bar{p}_e \equiv  \frac{1}{|\mcal{M}|}	\sum_{m \in \mcal{M}} \textrm{Pr}\{ M^\prime \neq m \}.
\ee
This is the quantity we have to bound when we perform an \emph{error analysis}
of some coding protocol.
%although if we choose the codewords for each message identically it is sufficient 
%to perform the error anlaysis for a fixed message

	    \begin{definition}%
		An $(n,R,\epsilon)$ coding protocol consists of
		a message set $\mcal{M}$, where $|\mcal{M}|=2^{nR}$,
		an encoding map
		$\mcal{E}:\mcal{M}\to \mcal{X}^n$
		described by a codebook 	$\{x^n(m)\}_{m \in \mathcal{M}}$,
		and a decoding map $\mcal{D}:\mcal{Y}^n \to \mcal{M}$
		such that the average probability of error 
		 is bounded from above as $\overline{p}_{e} \leq \epsilon$.
		%%			by $\epsilon$:%
		%		\begin{align}
		%			\overline{p}_{e}  
		%			&  \equiv 
		%				\frac{1}{|\mathcal{M}|}\sum_{m}p_{e}\!\left(  m \right) 
		%			 \leq \epsilon.
		%		\end{align}

	    \end{definition}  
		
		A rate  $R$ is \textit{achievable} if there exists
		an $\left(  n,R-\delta,\epsilon\right)$ coding protocol 
		for all $\epsilon,\delta>0$  as $n\to \infty$.
	%		and sufficiently large $n$. 
		%
		
	\subsection{Channel capacity}
	
		The capacity $C$ of a channel is the maximum of the rates
		$R$ that are achievable, and is established in Shannon's 
		channel coding theorem.
		
	\begin{theorem}[Channel capacity   \cite{S48,feinstein1954new}] 
		\label{thm:shannon-ch-cap}
		The communication capacity of a
		discrete memoryless channel $(\mcal{X}, p_{Y|X}(y|x), \mcal{Y})$ is given by
		\be
			C = \max_{p_X} I(X;Y),
		\ee
		where the optimization is taken over all possible input distributions $p_X(x)$.
		The mutual information is calculated on the
		induced joint probability distribution
		\be
			(X,Y) \sim p_{XY}(x,y)=p_{X}(x)p_{Y|X}(y|x).
		\ee
	\end{theorem}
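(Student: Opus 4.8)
I would prove the two inequalities $C \ge \max_{p_X} I(X;Y)$ (achievability) and $C \le \max_{p_X} I(X;Y)$ (converse) separately. For achievability, fix an input distribution $p_X$, a rate $R$, and a small $\delta > 0$; the goal is an $(n,R-\delta,\epsilon)$ code with vanishing error whenever $R < I(X;Y) - 2\delta$, after which letting $\delta \to 0$ and optimizing over $p_X$ finishes. The code is produced by the probabilistic method: draw a random codebook $\{x^n(m)\}_{m \in [1:2^{nR}]}$ with every codeword i.i.d.\ from $p_{X^n} = \prod^n p_X$, and let the decoder $\mcal{D}$ output the unique index $m$ with $y^n \in \mcal{T}^{(n)}_\delta(Y \mid x^n(m))$, declaring an error if there is no such index or more than one. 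By the symmetry of the construction it suffices to bound the expected (over codebooks) error probability given that message $1$ was sent.

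There are two ways to err. First, $Y^n$ may fail to be conditionally typical given its own codeword $x^n(1)$; by \eqref{cc4BG} this has probability at most $\epsilon$. Second, some wrong codeword $x^n(m')$, $m' \ne 1$, may be conditionally typical with $Y^n$. Splitting this event \emph{once} according to whether $Y^n$ is output-typical and using \eqref{cc7BG} to discard the non-typical part at cost $\epsilon$, it remains to control, for a fixed output-typical $y^n$ and an independent draw $X^n \sim p_{X^n}$, the ``packing'' probability
\be
  \Pr_{X^n}\!\left[\, y^n \in \mcal{T}^{(n)}_\delta(Y \mid X^n) \,\right] \ \le \ 2^{-n[I(X;Y) - 2\delta]} .
  \label{eqn:packing-plan}
\ee
This is the heart of the argument. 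To see it, for each $x^n$ with $y^n \in \mcal{T}^{(n)}_\delta(Y\mid x^n)$ rewrite $p_{X^n}(x^n) = p_{X^n \mid Y^n}(x^n \mid y^n)\, p_{Y^n}(y^n) \big/ p_{Y^n \mid X^n}(y^n \mid x^n)$, bound the factor $p_{Y^n}(y^n)$ by $2^{-n[H(Y)-\delta]}$ (output-typicality of $y^n$, i.e.\ \eqref{cc2BG} applied to the output distribution) and the denominator from below by $2^{-n[H(Y\mid X)+\delta]}$ (the definition of the conditionally typical set, cf.\ \eqref{cc5}), and sum the resulting bound $2^{-n[I(X;Y)-2\delta]}\, p_{X^n \mid Y^n}(x^n \mid y^n)$ over $x^n$ to at most $2^{-n[I(X;Y)-2\delta]}$. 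A union bound over the at most $2^{nR}$ wrong messages turns \eqref{eqn:packing-plan} into a contribution $2^{-n[I(X;Y)-R-2\delta]}$, so the codebook-averaged error is at most $2\epsilon + 2^{-n[I(X;Y)-R-2\delta]}$, which is small once $R < I(X;Y)-2\delta$. Finally, derandomize (some fixed codebook meets this bound on the \emph{average} error) and expurgate the worst half of its codewords (one lost bit, hence negligible rate) to get the same bound on the \emph{maximal} error, yielding an $(n,R-\delta,\epsilon)$ code.

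For the converse, take any $(n,R,\epsilon)$ code, put $M$ uniform on $\mcal{M}$ and $M' = \mcal{D}(Y^n)$. Fano's inequality gives $H(M\mid M') \le 1 + \epsilon n R$, so
\be
  nR = H(M) \le I(M;Y^n) + 1 + \epsilon n R \le I(X^n;Y^n) + 1 + \epsilon n R,
\ee
the two mutual-information steps being data processing along $M' \leftarrow Y^n \leftarrow X^n \leftarrow M$. Memorylessness together with the chain rule for entropy gives $I(X^n;Y^n) \le \sum_{i=1}^n I(X_i;Y_i) \le n \max_{p_X} I(X;Y)$, since each single-letter term is the mutual information of one channel use. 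Rearranging yields $R(1-\epsilon) \le \max_{p_X} I(X;Y) + 1/n$, and sending $n \to \infty$ (so $\epsilon \to 0$) gives $R \le \max_{p_X} I(X;Y)$, which combined with achievability proves $C = \max_{p_X} I(X;Y)$.

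The main obstacle is the packing estimate \eqref{eqn:packing-plan}: it is the only place the quantity $I(X;Y)$ actually enters, and it is what lets the $2^{nR}$-way union bound survive. A secondary point, and the reason for treating output-typicality as a separate condition rather than folding it into $\mcal{T}^{(n)}_\delta(Y\mid x^n)$, is that the conditionally typical set as defined here does not by itself constrain $p_{Y^n}(y^n)$; restricting attention to output-typical $y^n$ before invoking Bayes' rule is exactly what makes the bound go through, and by \eqref{cc7BG} it costs nothing. Everything else --- the typicality facts \eqref{cc4BG}, \eqref{cc5}, \eqref{cc7BG}, the probabilistic method, expurgation, and the Fano argument --- is routine, and the same pipeline (with conditionally typical projectors and the quantum-measurement tools of the next section in place of the typicality decoder) is what will later give the Holevo--Schumacher--Westmoreland theorem.
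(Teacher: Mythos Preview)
Your proposal is correct and follows the same achievability approach as the paper --- random coding with a conditionally-typical-set decoder, the same error events, and the same typicality properties \eqref{cc4BG}, \eqref{cc5}, \eqref{cc7BG} --- the only technical difference being that you derive the packing estimate directly via Bayes' rule, whereas the paper invokes the classical packing lemma (Lemma~\ref{lem:classical-packing}), whose proof instead bounds $p_{Y^n}$ and then counts the size of the conditionally typical set. You also supply the standard Fano-based converse, which the paper's proof omits entirely.
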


	The proof of a capacity theorem usually contains two parts:
	\begin{itemize}
	
	\item
        
		A direct coding part that shows that for all $\epsilon,\delta >0$,
		there exists a codebook $\mcal{E}(m)\equiv\{x^n(m)\}$ of rate 
		$R = C-\delta$ and a decoding map $\mcal{D}$ with
		average probability of error $\bar{p}_e \leq \epsilon$.
		%		how to construct codes
		%		that in the limit $\delta \to 0$ achieve the capacity $C$.
		%		of the maximum rates $R$ 
		%		that are achievable.	
		
	\item	
		A converse part that shows that the rate $C$ is the maximum rate possible.
		A converse theorem establishes that the probability of error for a 
		coding protocol $(n,C+\delta,\epsilon)$ is bounded away from zero (weak converse),
		or that the probability of error goes exponentially to 1 (strong converse).
		
	\end{itemize}

	%	
	%	\begin{itemize}
	%		\item There exist codes of rates arbitrarily close to $C$ that have negligible probability of error.
	%		\item Any attempt to send at a rate $R>C$ will result in large errors.
	%	\end{itemize}

	%obtain $p_e$ is bounded away from 0 if R is greater than C - we can get arbitrarily low rates of error only if R is less than $C$.
	%[edit] Strong converse for discrete memoryless channels

	%A strong converse theorem, proven by Wolfowitz in 1957[2], states that,

	%\be
	%    p_e \geq 1- \frac{4A}{n(R-C)^2} - e^{-n(R-C)} 
	%\ee

	%for some finite positive constant A. While the weak converse states that the error probability is bounded away from zero as n goes to infinity, the strong converse states that. Thus, C is a sharp threshold between perfectly reliable and completely unreliable communication.

%	Most of the results in this thesis are direct coding theorems
%	that prove certain rates are achievable.
%	These are also called \emph{inner bounds} on the capacity region
%	of the respective channel.
%	The complete characterization of the capacity regions for
%	most channels in network information theory,
%	with the exception of the multiple access channel problem
%	and certain special cases of the interference channel problem,
%	are still open problems.
%	That is, the there are no matching converse proofs,
%	which would show that  one could not do better than the 
%	achievable rate regions.
%	%
%	Because of the absence of matching outer bounds the achievable rate regions,
%	the finding the capacity of broadcast channels, interference channels
%	relay channels are still open problems.

	\begin{proof}
	We give an overview of the achievability proof of Theorem~\ref{thm:shannon-ch-cap}
	in order to introduce key concepts, which will be used in the other proofs in this thesis.
	%	 and techniques many of  
	%	which we will carry over to the quantum channel coding setting.
	%
	%	which , which will also be useful
	%	in the quantum 
	
	We use a random codebook with $2^{nR}=|\mcal{M}|$ 
	codewords $x^n \in \mcal{X}^n$ generated independently
	from the product distribution $p_{X^n}(x^n)=\prod^n p_X(x_i)$.
	When the sender wants to send the message
	$m \in \mcal{M}$, she will input the $m^{\textrm{th}}$
	codeword, which we will denote as $x^n(m)$.
	Let $Y^n$ denote the resulting output of the channel.
	The distribution on the output symbols induced
	by the input distribution is $p_Y(y) \equiv \sum_{x} p_{Y|X}(y|x) p(x)$,
	and define the set of output-typical sequences
	$\mcal{T}_\delta^{(n)}(Y)$ according to the distribution $p_Y$.
	For any sequence $x^n$, denote the set of conditionally
	typical output sequences $\mcal{T}_\delta^{(n)}(Y| x^n )$.
	%
%	For a precise definition of these sets, see Appendix~\ref{apdx:classical-typicality}.
	
	%	For each message, $m$ define the decode set $T_m \in \mcal{Y}^n$
	%	as follows
	%	\be
	%		T_m = \mcal{T}_\delta^{(n)}(Y| x^n(m) ) \ \cap \ \mcal{T}_\delta^{(n)}(Y).
	%	\ee 
	%	When the Receiver obtains the output ,
	%	he checks which decode set(s) it is part of and outputs the corresponding message.
	%	
	%	If $y^n$ is not in any decode set, he declares an error $E_1$ 
	Given the output of the channel $y^n$,
	the receiver will use the following algorithm:
	\begin{enumerate}
	\item
		If $y^n \not \in \mcal{T}_\delta^{(n)}(Y)$, then an error is declared. % (error $E_1$).
		%	 is not $Y^n$ typical declare error

	 \item
		Return $m$ if $y^n$ is an element of 
		the conditionally typical set $\mcal{T}_\delta^{(n)}(Y| x^n(m) )$.
		%	Test membership of $y^n$ in all the conditionally typical sets 
		%	$\mcal{T}_\delta^{(n)}(Y| x^n(m) )$ for all $m \in \mcal{M}$.
		%	Return $m$, if there is a unique match.
		Report an error if no match or multiple matches are found.
		%		 is found (error $E_2$).
		%	\item 
		%		Output message $m$ if it is the unique match in step 2.
		%		If there are multiple matches report an error (error $E_3$).
	\end{enumerate}

	We now define the three types of errors that may occur in the protocol
	when the message $m$ is being sent.
%	Assume that the message $m$ was sent.
%	We define following three possible error events:
	\begin{description}
	
	\item[$\mathbf{(E0)}$:]%
	The event that the channel output %sequence 
	$Y^n$ is not 
	output-typical: $\{ Y^n \not \in  \mcal{T}_\delta^{(n)}(Y) \}$.

	\item[$\mathbf{(E1)}$:]%
	The event that the channel output sequence $Y^n$ 
	%	is produced when $x^n(m)$ is sent 
	is not in the conditionally typical set 
	$\{ Y^n \not \in \mcal{T}_\delta^{(n)}(Y| x^n(m) ) \}$,
	which corresponds to the message $m$.

	\item[$\mathbf{(E2)}$:]%
	The event that $Y^n$ is output-typical and it
	falls in the conditionally typical	set for another message:
	\be
	\!\!\!\!\!\!\!
	\{ Y^n \in  \mcal{T}_\delta^{(n)}(Y) \}
	\cap
	\left(
	\bigcup_{m^\prime \neq m}
	\{ Y^n \in \mcal{T}_\delta^{(n)}(Y| x^n(m^\prime) ), m^\prime \neq m  \}
	\right)\!\!.
	\ee

	\end{description}

	We can bound the probability of all three events
	when a random codebook is used, that is,
	we will take the expectation over the random choices of the symbols
	for each codeword.
	We define the expectation of an event as the
	expectation of the associated indicated random variable.
	
%\paragraph{Bound on (E0):} 
The bound  $\NExpX \mathbf{(E0)} \leq \epsilon$ follows from  \eqref{cc7BG}. 
The crucial observation for the proof is to use the symmetry of the code
construction: if the codewords for all the messages are constructed identically,
then it is sufficient to analyze the probability of error for any one fixed message.
%The sequence $\tilde{Y}^n$ is 
%likely to be output typical. (by averaging...)
%
%\paragraph{Bound on (E1):}
We obtain a bound $\NExpX \mathbf{(E1)} \leq \epsilon$ from \eqref{cc4BG}.

In order to bound the probability of error event $\mathbf{(E2)}$, 
we will use the \emph{classical packing lemma}, Lemma~\ref{lem:classical-packing}
in Appendix~\ref{apdx:classical-coding-theorem}.
Using the packing lemma with $U=\emptyset$, we obtain a bound 
on the probability that the conditionally typical sets 
for different messages will overlap. 
We can thus bound the expectation of the probability of error event $\mathbf{(E2)}$
as follows:
 \begin{align*}
\NExpX 
 \textrm{Pr}\!\left\{ \mathbf{(E2)}  \right\}
	 &\leq
		 |\mathcal{M}| \;2^{-n[ I(X;Y) - \delta ]}.
\end{align*}

%the probability that the conditionally typical output spaces will overlap,
%
%Finally 
%\paragraph{Bound on (E2):} The probability of 
%If the conditionally typical sets for different messages overlap,
%this will lead to wrong 
%decoding
% errors (error event $\mathbf{(E2)}$).
% 
%An output sequence $\tilde{Y}^n$ which would cause an error
%of type $\mathbf{(E2)}$ is illustrated.
% in Figure~\ref{fig:packing-illustration}.

\bigskip

We can now use the union bound to bound the overall probability
of error for our code as follows:
 \begin{align*}
\NExpX \left\{   \bar{p}_e \right\}
	&=
	 	\NExpX \textrm{Pr}\!\left\{ \mathbf{(E0)} \cup \mathbf{(E1)} \cup \mathbf{(E2)} \right\} \\
 	&\leq
		\NExpX \textrm{Pr}\!\left\{ \mathbf{(E0)}  \right\}
		 +
		 \NExpX \textrm{Pr}\!\left\{  \mathbf{(E1)} \right\}
		  + 
		   \NExpX \textrm{Pr}\!\left\{ \mathbf{(E2)} \right\} \\
%	& \leq
%		\quad \ 
%		\epsilon 
%		\quad  \ 
%		+ \ 
%		\quad  \
%		\epsilon  
%		\quad \ 
%		+  \ 
%		 \textrm{Pr}\!\left\{ \bigcup_{m^\prime \in\mathcal{M},m^\prime \neq m }  \mathcal{E}_{m^\prime} \right\} \\
%	& \leq
%		\quad \ 
%		\epsilon 
%		\quad  \ 
%		+ \ 
%		\quad  \
%		\epsilon  
%		\quad \ 
%		+  \ 
%		 \sum_{m^\prime \in\mathcal{M},m^\prime \neq m } \textrm{Pr}\!\left\{   \mathcal{E}_{m^\prime} \right\} \\
% &\leq
% \sum_{m\in\mathcal{M}}  \textrm{Pr}\!\left\{ \mathcal{E}_m \right\} \\
 &\leq
		\qquad \ 
		\epsilon 
		\qquad  \ 
		+ \ 
		\qquad  \
		\epsilon  
		\qquad \ 
		+  \ 
	 |\mathcal{M}|\;2^{-n[ I(X;Y) - \delta ]} \\
 & =
		\qquad \ 
		\epsilon 
		\qquad  \ 
		+ \ 
		\qquad  \
		\epsilon  
		\qquad \ 
		+  \ 
		 2^{-n[ I(X;Y) - R - \delta ]}.
\end{align*}

	Thus,
	%	 the expected probability of
	%	all three error events $\mathbf{(E0)}$, $\mathbf{(E1)}$ and $\mathbf{(E2)}$ goes to zero 
	in the limit of many uses of the channel, we have:
	\be
		\NExpX \left\{   \bar{p}_e \right\} \ \ \leq \ \ \epsilon^\prime,
	\ee
	provided the rate $R \leq I(X;Y)-2\delta$.

	The last step is called \emph{derandomization}. 
	If the expected probability of error of a random codebook can be bounded as above,
	then there must exist a particular codebook with $\bar{p}_e \leq  \epsilon^\prime$,
	which completes the proof.
	\end{proof}
	
	Note that it is possible to use an \emph{expurgation} step
	and throw out the worse half of the codewords 
	in order to convert the bound on the average probability of 
	error $\bar{p}_e$ into a bound on the maximum probability 
	of error $\bar{p}_e^{\max}=\max_m p_e(m)$ \cite{CT91}.

%%==== Illustration ====

%The random sequences $X^n(m), m \in \mathcal{M}$
%represent different codewords in the codebook $\mcal{C}$.
%The $\tilde{Y}^n$ sequence represents the received sequence as a result of sending a codeword //not// in the set $\{X^n(m)\}$.

%The 
%We denote the random choice of input sequences by 

%!TEX root = thesis.tex

\section{Quantum communication channels}
	\label{sec:quantum-info-theory}

%    \subsection{Quantum channels}

	\begin{wrapfigure}[6]{r}{0pt}%
%	\vspace*{-2cm}
	\begin{tikzpicture}[node distance=2.0cm,>=stealth',bend angle=45,auto,scale=3]

	  \begin{scope}
		% c-q P-to-P channel
		\node [qnode] (Tx) [ label=left:Tx    ]                            {\footnotesize $\sigma^A$} ; %{\scriptsize $x_1$};
		\node [qnode] (Rx) [ label=right:Rx, right of=Tx, xshift=4mm] {\footnotesize $\rho^B$} % {\scriptsize y}
			edge  [pre]             node[swap]  {$\mcal{N}^{A\to B}$}    (Tx) ;
	  \end{scope}
	  \begin{pgfonlayer}{background}
	    \filldraw [line width=4mm,join=round,black!10]
	      ([xshift=0mm,yshift=0.5mm]Tx.north -| Tx.east) rectangle ([xshift=1mm]Tx.south -| Rx.west);
	  \end{pgfonlayer}
	\end{tikzpicture}
	\caption{\small A point-to-point quantum channel
	%			is modelled as a conditional
	%		  	probability distribution 
			$\mcal{N}^{A\to B}$. }
		\label{fig:qqchannel}
	\end{wrapfigure}

	%	A more general model for a communication channel is to consider
	%	the quantum aspects of the information transmission.
	%	Such models are necessary to accurately model optical communication channels,
	%	since in that case the information carriers are quantum systems.
	%
	A quantum channel $(\mcal{H}^A, 	\mcal{N}^{A\to B}, \mcal{H}^B )$ 
	is described as a completely positive trace-preserving 
	map 	$\mcal{N}^{A\to B}$ which takes a quantum system in state $\sigma^A \in  \mcal{D}(\mcal{H}^A)$ as input
	and outputs a quantum system $\rho^B \in \mcal{D}(\mcal{H}^B)$.
	Figure~\ref{fig:qqchannel} shows an example of such a channel.
	%	\footnote{
	%			Diagrams in this document follow the convention
	%			of reprsenting classical random variables as triangles,
	%			\begin{tikzpicture}[node distance=2.0cm,>=stealth',bend angle=45,auto, scale=0.5]
	%			  \tikzstyle{cnode}=[isosceles triangle,isosceles triangle apex angle=60,thick,draw=blue!75,
	%			  fill=blue!20,minimum height=3mm] 
	%			  \tikzstyle{qnode}=[circle,thick,draw=blue!75,fill=blue!20,minimum size=5mm] 
	%			  \tikzstyle{every label}=[black, font=\footnotesize]
	%			  \begin{scope}
	%				\node [cnode,scale=0.6] (Tx) [   ]                            {} ;
	%			  \end{scope}
	%			  \end{tikzpicture},
	%			  while  quantum systems are represented by circles,
	%		 	 \begin{tikzpicture}[node distance=2.0cm,>=stealth',bend angle=45,auto, scale=0.5]
	%			  \tikzstyle{cnode}=[isosceles triangle,isosceles triangle apex angle=60,thick,draw=blue!75,
	%			  fill=blue!20,minimum height=3mm] 
	%			  \tikzstyle{qnode}=[circle,thick,draw=blue!75,fill=blue!20,minimum size=5mm] 
	%			  \tikzstyle{every label}=[black, font=\footnotesize]
	%			  \begin{scope}
	%				\node [qnode,scale=0.6] (Tx) [   ]                            {} ;
	%			  \end{scope}
	%			  \end{tikzpicture},
	%			  in analogy with the simplex and the Bloch sphere respectively.}.
	%
	In recent years, the techniques of classical information theory have been
	extended to the study of quantum channels. For a review of  the subject
	see \cite{wilde2011book}.
	
	In addition to the standard problem of \emph{classical} transmission of information
	(denoted $[c \to c]$), for quantum channels we can study the 
	transmission of quantum information (denoted $[q \to q]$).
	If pre-shared entanglement between Transmitter
	and Receiver is available, it can be used in order to improve the communication rates
	using an \emph{entanglement-assisted} protocol.
	%
	%	
	%	entanglement  auxiliary communication resource
	%	which is denoted 
        There are multiple communication tasks and different capacities 
        associated with each task for any given quantum channel $\mathcal{N}$ \cite{BSST99}.
        Some of the possible communication tasks, along with their associated
        capacities are:
        \begin{itemize}
            \item Classical data capacity: $\mathcal{C}\!\left(\mathcal{N}\right)$ \vspace{-0.1in}
            \item Quantum data capacity: $\mathcal{Q}\!\left(\mathcal{N}\right)$ \vspace{-0.1in}
            \item Entanglement-assisted classical data capacity: $\mathcal{C}_{\texttt{E-A}}\!\left(\mathcal{N}\right)$ \vspace{-0.1in}
            \item Entanglement-assisted quantum data capacity: $\mathcal{Q}_{\texttt{E-A}}\!\left(\mathcal{N}\right)$
        \end{itemize}
	The latter two are actually equivalent up to a factor of $2$, because we can 
	use the \emph{superdense coding} and 
	\emph{quantum teleportation} protocols to convert between them
	in the presence of free entanglement \cite{PhysRevLett.69.2881,PhysRevLett.70.1895}.
	
	In the context of quantum information theory, 
	pre-shared quantum entanglement between sender and receiver
	must be recognized as a communication resource.
	We denote this resource $[qq]$ and must take into account the rates 
	at which it is consumed or generated as part of a communication protocol
	\cite{DHW05b}.
	It is interesting to note that shared randomness (denoted $[cc]$),
	which is the classical equivalent of shared entanglement, 
	does not increase the capacity of point-to-point  classical channels.

	\subsubsection{Classical-quantum channels}

	%which are more accurate models of communication 
	%
	\begin{wrapfigure}{r}{0pt}%
%	\vspace*{-2cm}
	\begin{tikzpicture}[node distance=2.0cm,>=stealth',bend angle=45,auto,scale=3]
  \tikzstyle{cnode}=[isosceles triangle,isosceles triangle apex angle=60,thick,draw=blue!75,fill=blue!20,minimum height=6.5mm,inner sep=0mm] 
%
%	  \tikzstyle{cnode}=[isosceles triangle,isosceles triangle apex angle=60,thick,draw=blue!75,
%	  fill=blue!20,minimum height=0.3cm] 
	  \tikzstyle{qnode}=[circle,thick,draw=blue!75,fill=blue!20,minimum size=6mm] 
	  \tikzstyle{every label}=[black, font=\footnotesize]

	  \begin{scope}
		% c-q P-to-P channel
		\node [cnode] (Tx) [ label=left:Tx    ]                            {\small $x$} ; %{\scriptsize $x_1$};
		\node [qnode] (Rx) [ label=right:Rx, right of=Tx,xshift=4mm] {\tiny $\rho_x^B$} % {\scriptsize y}
			edge  [pre]             node[swap]  {$\mcal{N}^{X\to B}$}    (Tx) ;
	  \end{scope}
	  \begin{pgfonlayer}{background}
	    \filldraw [line width=4mm,join=round,black!10]
	      ([xshift=-1mm]Tx.north -| Tx.east) rectangle (Tx.south -| Rx.west);
	  \end{pgfonlayer}
	\end{tikzpicture}
	\caption{\small A point-to-point c-q channel
	%			is modelled as a conditional
	%		  	probability distribution 
			$\{ \rho_x \}$. }
	\end{wrapfigure}
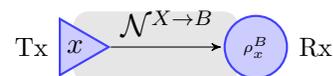

	In the previous section we introduced some of the main communication problems of 
	quantum information theory.
	%	The remainder of the document is focussed on 
	The focus of this thesis will be the study of 
	\emph{classical communication}  ($[c\to c]$)
	over quantum channels, with no entanglement assistance.
	%
	%	This is after all the model of quantum information theory, which is 
	%
	For this purpose, we will use the 
	\emph{classical-quantum} (c-q) channel model,
	which corresponds to the use of a quantum channel where the 
	Sender is restricted to sending a finite set of \emph{signal} states
	$\{ \sigma^A_x \}_{x \in \mcal{X}}$.
	If we consider the choice of the signal states $\{ \sigma^A_x \}$ to be 
	part of the channel, we obtain a channel with classical inputs $x \in \mcal{X}$
	and quantum outputs: $\mcal{N}^{X \to B}(x) \equiv \mcal{N}^{A\to B}(\sigma_x^A)$.
	%	 \equiv \equiv \rho_x^B$.
	%	as inputs to the channel.
	%
	Note that
	%	 because  of inputs that are possible,
	a classical-quantum channel 
	$(\mcal{X}, \mcal{N}^{X\to B}(x)\!\equiv\! \rho^B_x, \ \mcal{H}^B)$
	is fully specified by the finite set of output states $\{ \rho^B_x \}$
	it produces for each of the possible inputs $x \in \mcal{X}$.
	%	there corresponds an output quantum state
	%	%	described by a density operator 
	%	$\rho^B_x$.
	%	in a finite-dimensional Hilbert space $\mcal{H}^B$.
	%
	%	In this setting the measurement 
	%
	%
	%
	%
	%	Note that all codes for classical-quantum relay channels
	%	generalize directly % codes  for  classical communication codes
	%	to fully quantum channels where the inputs are also quantum.
	%
	%
	%
	This channel model is a useful abstraction for studying the 
	transmission of classical data over quantum channels.
	Any code construction for a c-q channel can be augmented 
	with an optimization over the choice of signal states
	$\{ \sigma^A_x \}_{x \in \mcal{X}}$ to obtain
	a code for a quantum channel.
	The  Holevo-Schumacher-Westmoreland Theorem
	establishes %a single-letter formulas 
	the classical capacity of the classical-quantum channel \cite{H98,SW97}.
	The strong converse was later proved in \cite{ogawa1999strong}.
	%	 and
	%	multiple-access channels \cite{winter2001capacity} 
	%	and give achievable rates for other network channels 
	%	\cite{FHSSW11,S11a,SW11}.
	%	other problems of  network information theory 
	%	\cite{YHD2006,FHSSW11,SW11}.
	%for classical-quantum channels  

	\subsection{Classical-quantum channel coding}
												\label{sec:quan-chan-code}

	The quantum channel coding problem for 
	a point-to-point classical-quantum channel
	$(\mcal{X}, \mcal{N}^{X\to B}(x)\!\equiv\! \rho^B_x, \ \mcal{H}^B)$
	is studied in the following setting.

	\begin{figure}[htbp]
	\begin{center}

	\begin{tikzpicture}[node distance=2cm,>=stealth',bend angle=45,auto]
	  \tikzstyle{cnode}=[isosceles triangle,isosceles triangle apex angle=60,thick,draw=blue!75,fill=blue!20,minimum height=0.3cm] 
	  \tikzstyle{processing}=[rectangle,thick,draw=black,fill=white,minimum height=1.5cm] 
	  \tikzstyle{measurment}=[rectangle,thick,draw=blue!75,fill=blue!20,minimum height=0.3cm] 
	  \tikzstyle{qnode}=[circle,thick,draw=blue!75,fill=blue!20,minimum size=6mm] 
	  \tikzstyle{every label}=[black, font=\footnotesize]

	  \begin{scope}
		% c-c channel setup
		\node (m) [ label=below:$\ \ \ \ \ \in \mcal{M}$     ]  {$m$} ;
		\node [processing,right of=m,xshift=-3mm] (Enc)  {$\mcal{E}$} 
				edge  [pre]				node[swap]  	{}	(m);
		\node [cnode] (Xn) [ label=below:$\ \ \in \mcal{X}^n$, right of=Enc,xshift=-3mm]                {$X^n$}
				edge  [pre]				node[swap]  	{}	(Enc);
	\node [qnode] (Bn) [ label=below:$\ \ \in \mcal{H}^{B^n}$, right of=Xn, xshift=1cm]   {$\rho^{B^n}_{X^n}$}
				edge  [pre]				node[swap]  	{$ \mcal{N}^{\otimes n}$}	(Xn);	
		\node [processing,right of=Bn,xshift=-1mm] (Dec)  {$\left\{ \Lambda_{X^n(m)} \right\} $} 
				edge  [pre]				node[swap]  	{}	(Bn);	
		\node (Mpr) [ label=below:$\ \ \ \ \ \in \mcal{M}$,right of=Dec, xshift=-2mm]  {$M^\prime$} 
				edge  [pre]				node[swap]  	{}	(Dec);	
	  \end{scope}
	  \begin{pgfonlayer}{background}
	    \filldraw [line width=4mm,join=round,black!10]
	      ([xshift=-4mm]Xn.north -| Xn.east) rectangle (Bn.south -| Bn.west);
	  \end{pgfonlayer}
	\end{tikzpicture}

	\end{center}
	\caption{\small HSW coding setup.}
	\label{fig:quantum-transmission}

	\end{figure}
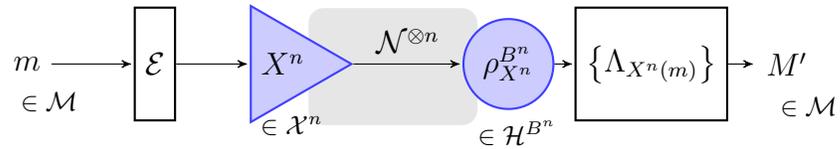

	%		A rate  $R$ is \textit{achievable} if there exists
	%		an $\left(  n,R-\delta,\epsilon\right)$ coding protocol 
	%		for all $\epsilon,\delta>0$  as $n\to \infty$.

		Let $x^{n}\!\left(  m\right) \equiv x_1x_2\cdots x_n  \in \mathcal{X}^n$ be the codeword
		which is input to the channel when we want to send message $m$.
		%   &  \equiv x_{1}\left(  m_1\right)x_{2}\left(l\right)\cdots\ x_{n}\!\left(  m_1\right)  ,\\
		%		x_2^n(m_2)   &  \equiv y_{1}\left(  m\right)  \ y_{2}\left(
		%		m\right)  \ \cdots\ y_{n}\left(  m\right)  .
		%		\end{align*}
		%
		The output of the channel will be the $n$-fold tensor product state:
		\be
			\mathcal{N}^{\otimes n}\!\left( x^{n}(m) \right)
			\equiv			
			\rho_{x^{n}(m)}^{B^{n} }
			\equiv 
			\rho_{x_{1}(m)}^{B_1}
			\otimes
			\rho_{x_{2}(m)}^{B_2}
			\otimes
			\cdots
			\otimes
			\rho_{x_{n}(m)}^{B_n}.
		\label{eqn:output-state-tensor-prod}
		\ee

	To extract the classical information encoded into this
	state, we must perform a quantum measurement.
	The most general quantum measurement is described
	by a positive operator-valued measure (POVM) 
	$\left\{ \Lambda_{m}\right\}_{m\in \mcal{M}}$ % \left\{  1,\ldots,|\mathcal{M}|\right\}  }$ 
	on the system $B^n$. % , the output of which we denote $M^{\prime}$. 
	To be a valid POVM, the set $\{\Lambda_{m}\}$ of 
	$|\mcal{M}|$ operators should all be positive semidefinite
	and sum to the identity:  $\Lambda_{m} \geq 0, \,\,\, \sum_{m}\Lambda_{m}=I$.
	%
%	Roughly speaking, a measurement corresponds to \emph{asking a question}
%	about which subspace the quantum system occupies.
	%

	In the context of our coding strategy,	 the
	decoding measurement aims to distinguish the $|\mcal{M}|$ 
	possible states of the form \eqref{eqn:output-state-tensor-prod}.
	The advantage of the quantum coding paradigm is that it allows for 
	joint measurements on all the outputs of the channel,
	which is more powerful than measuring the systems individually.
	
	We define the average probability of error
	for the end-to-end protocol as
	\begin{align}
		\bar{p}_{e}%\!\left( \ell_j\right)   
		\equiv 
			\frac{1}{|\mathcal{M}|}\sum_{m}					
			\text{Tr}\!
			\left\{  
				\left(  I-\Lambda^{B^{n}} _{x^n(m)} \right)
				\rho_{x^n(m)}^{B^{n}} 
			\right\},
	\end{align}
	where the operator  $\left(  I-\Lambda^{B^{n}} _{x^n(m)} \right)$ %\left(  I-\Lambda_{m} \right)$
	corresponds to the complement of the correct decoding outcome.

	%	The probability of error when sending message $m$ is defined as
	%	
	%	$p_e(m) \equiv \textrm{Pr}\{ M^\prime \neq m \}$, 
	%	where $M^\prime \equiv \mcal{D} \circ \mcal{N}^n \circ \mcal{E}(m)$
	%	is the random variable associated with the output of the protocol.
	%	The average probability of error over all messages is
	%	\be
	%		\bar{p}_e \equiv  \frac{1}{|\mcal{M}|}	\sum_{m \in \mcal{M}} \textrm{Pr}\{ M^\prime \neq m \}.
	%	\ee
	%	This is the quantity have to bound, when we perform the \emph{error analysis}
	%	of some coding protocol.
	%although if we choose the codewords for each message identically it is sufficient 
	%to perform the error anlaysis for a fixed message

	    \begin{definition}%
		An $(n,R,\epsilon)$ classical-quantum coding protocol consists of
		a message set $\mcal{M}$, where $|\mcal{M}|=2^{nR}$,
		an encoding map
		$\mcal{E}:\mcal{M}\to \mcal{X}^n$
		described by a codebook 	$\{x^n(m)\}_{m \in \mathcal{M}}$,
		and a decoding measurement
		(POVM) $\{ \Lambda_{x^n(m)} \}_{m \in 
		%		}:\mcal{H}^{B^n} \to 
		\mcal{M} }$
		such that the average probability of error 
		 is bounded from above as $\overline{p}_{e} \leq \epsilon$.
		%%			by $\epsilon$:%
		%		\begin{align}
		%			\overline{p}_{e}  
		%			&  \equiv 
		%				\frac{1}{|\mathcal{M}|}\sum_{m}p_{e}\!\left(  m \right) 
		%			 \leq \epsilon.
		%		\end{align}

	    \end{definition}

%
%		    \begin{definition}%
%			An $(n,R,\epsilon)$ partial-decode-and-forward code for the quantum relay channel consists
%			of two codebooks %$\mcal{C} = \left\{ 
%			$\{x^n(m_j,\ell_j)\}_{m_j\in \mathcal{M}, \ell_j \in \mathcal{L}}$ and
%			$\{x_1^n(\ell_j)\}_{\ell_j \in \mathcal{L}}$
%			%$|\mathcal{M}_i|=2^{nR_i}$, 
%			and decoding POVMs  
%			$\left\{ \Gamma_{\ell_j}\right\}_{\ell_j\in \mathcal{L}}$ 
%			and 
%			$\left\{  \Lambda_{m_j,\ell_j}\right\}_{m_j\in \mathcal{M}, \ell_j \in \mathcal{L}}$
%			such that the average probability of error
%			is bounded from above as 
%			$\overline{p}_{e} = \bar{p}_{e} + \bar{p}^{D}_{e} \leq \epsilon$.
%			%%			by $\epsilon$:%
%			%\begin{align}
%			%	\overline{p}_{e}  
%			%	&  \equiv 
%			%		\frac{1}{|\mathcal{M}||\mathcal{L}|}\sum_{m_j,\ell_j}p_{e}\!\left(  m_j,\ell_j\right) %
%			%	 \leq \epsilon.
%			%\end{align}
%		    \end{definition}  
%		
%		A rate $R$ is \textit{achievable} if there exists
%		an $\left(  n,R-\delta,\epsilon\right)$ quantum relay channel code
%		for all $\epsilon,\delta>0$ and sufficiently large $n$. 

%

	\begin{theorem}[HSW Theorem  \cite{H98,SW97}] 
												\label{thm:HSWtheorem}
		The classical communication capacity of a
		classical-quantum channel $(\mcal{X}, \rho^B_x , \mcal{H}^B)$ is
		given by:
		%		following equation:
		\be
			\mcal{C}(\mcal{N})  
			= 		 \max_{p_X} 	I(X;B)_\theta 
%			\equiv 	 \sup_{p_X} 	H(B)_\theta - H(B|X)_\theta
%			\equiv 	 \sup_{p_X} 	H\!\bigg(\sum_x p_X(x) \rho_x^B \bigg) - \sum_x p_X(x)H(\rho_x^B),
		\ee
		where the optimization is taken over all possible input distributions $p_X$,
		and where entropic quantities are calculated with respect
		to the following state: 
		\be
			\theta^{XB} = \sum_x p_X(x) \ \ketbra{x}{x}^X \otimes \rho^B_x.
			\label{eqn:code-state}
		\ee
	\end{theorem}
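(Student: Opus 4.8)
The plan is to establish achievability by the random-coding argument of Theorem~\ref{thm:shannon-ch-cap}, the one new ingredient being the construction of the decoding POVM from conditionally typical projectors; the matching weak converse follows from the Holevo bound and a Fano-type estimate exactly as in the classical case, and the strong converse may be imported from \cite{ogawa1999strong}. So fix an input distribution $p_X$ and a target rate $R < I(X;B)_\theta - 2\delta$. First I would draw a random codebook $\{x^n(m)\}_{m\in\mcal{M}}$, $|\mcal{M}| = 2^{nR}$, with every letter i.i.d.\ from $p_X$; the channel output for message $m$ is then the product state $\rho^{B^n}_{x^n(m)}$ of \eqref{eqn:output-state-tensor-prod}.

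Since membership tests in conditionally typical \emph{sets} are not operations one can perform on quantum states, I would instead use the conditionally typical \emph{projectors}. Let $\Pi \equiv \Pi^n_{\bar\rho,\delta}$ be the typical projector of the average output state $\bar\rho^{\otimes n}$ and $\Pi_m \equiv \Pi^n_{\rho^B_{x^n(m)},\delta}$ the conditionally typical projector \eqref{eqn:cond-typ-projector-def} of the $m$-th codeword. Setting $\Gamma_m \equiv \Pi\,\Pi_m\,\Pi$, the decoder is the pretty-good (square-root) measurement
\be
  \Lambda_m \equiv \Big(\sum_{m'}\Gamma_{m'}\Big)^{-1/2}\,\Gamma_m\,\Big(\sum_{m'}\Gamma_{m'}\Big)^{-1/2},
\ee
with the inverse taken on the support and the deficiency $I - \sum_m \Lambda_m$ attached to one extra outcome, which yields a genuine POVM on $B^n$.

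The error analysis would proceed in three steps. By the Hayashi--Nagaoka operator inequality (Lemma~\ref{lem:HN-inequality}), $I - \Lambda_m \le 2(I - \Gamma_m) + 4\sum_{m'\ne m}\Gamma_{m'}$, so averaging over messages and codebooks,
\be
  \mathbb{E}\,\bar{p}_e \ \le\ \mathbb{E}\Big[\tfrac{1}{|\mcal{M}|}\sum_m 2\,\Tr\{(I-\Gamma_m)\rho^{B^n}_{x^n(m)}\}\Big] + \mathbb{E}\Big[\tfrac{1}{|\mcal{M}|}\sum_m 4\!\!\sum_{m'\ne m}\!\Tr\{\Gamma_{m'}\rho^{B^n}_{x^n(m)}\}\Big].
\ee
For the first term I would exploit the symmetry of the codebook to fix one message and bound $\mathbb{E}_{X^n}\Tr\{(I - \Pi\,\Pi_{X^n}\,\Pi)\,\rho^{B^n}_{X^n}\}$: since $\mathbb{E}_{X^n}\rho^{B^n}_{X^n} = \bar\rho^{\otimes n}$, property \eqref{eqn:TypP-prop-one} gives $\mathbb{E}_{X^n}\Tr\{\Pi\,\rho^{B^n}_{X^n}\} = \Tr\{\Pi\,\bar\rho^{\otimes n}\} \ge 1-\epsilon$, hence by the gentle-operator lemma (Lemma~\ref{lem:gentle-operator}) and Jensen $\mathbb{E}_{X^n}\|\Pi\,\rho^{B^n}_{X^n}\,\Pi - \rho^{B^n}_{X^n}\|_1 \le 2\sqrt\epsilon$; combining this with the expected conditional-typicality property $\mathbb{E}_{X^n}\Tr\{\Pi_{X^n}\,\rho^{B^n}_{X^n}\} \ge 1-\epsilon$ (the quantum analogue of \eqref{cc4BG}) and Lemma~\ref{lem:trace-inequality} bounds the first term by $\epsilon' \equiv 2(\epsilon + 2\sqrt\epsilon)$. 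For the second term, the codewords for $m \ne m'$ are independent, so the expectation factors as $\mathbb{E}_{X^n(m')}\Tr\{\Pi_{X^n(m')}\,\Pi\,\bar\rho^{\otimes n}\,\Pi\}$; bounding $\Pi\,\bar\rho^{\otimes n}\,\Pi \le 2^{-n[H(B)_\theta - \delta]}\Pi \le 2^{-n[H(B)_\theta - \delta]}I$ via \eqref{eqn:TypP-prop-two} and using the rank bound $\Tr\{\Pi_{X^n(m')}\} \le 2^{n[H(B|X)_\theta + \delta]}$ of \eqref{eqn:bound-on-size} gives $2^{-n[I(X;B)_\theta - 2\delta]}$ per competing message, so the second term is at most $4\cdot 2^{-n[I(X;B)_\theta - R - 2\delta]}$.

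Combining, $\mathbb{E}\,\bar{p}_e \le \epsilon' + 4\cdot 2^{-n[I(X;B)_\theta - R - 2\delta]} \to 0$ under the rate constraint; derandomizing (some codebook does at least as well as the average) and, if the maximal error is wanted, expurgating the worse half of the codewords as in the classical case produces an $(n,R,\epsilon)$ code, and taking $\delta \to 0$ then maximizing over $p_X$ gives $\mcal{C}(\mcal{N}) \ge \max_{p_X} I(X;B)_\theta$. The step I expect to be the crux — and the reason this is genuinely harder than Shannon's proof — is the passage from the operators $\Gamma_m = \Pi\,\Pi_m\,\Pi$, which neither form a POVM nor commute with one another, to the honest measurement $\{\Lambda_m\}$ with controlled error: the Hayashi--Nagaoka inequality is precisely what converts the square-root measurement's error into the two tractable terms above, while the gentle-operator lemma is what lets me sandwich $\rho^{B^n}_{X^n}$ between copies of $\Pi$ in the first term at a cost of only $O(\sqrt\epsilon)$.
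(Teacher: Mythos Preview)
Your proposal is correct and follows essentially the same approach as the paper's proof: the same projector sandwich $\Pi\,\Pi_m\,\Pi$, square-root measurement, Hayashi--Nagaoka split, gentle-operator bound on the first term, and the $H(B)$-vs-$H(B|X)$ eigenvalue/rank argument on the second. The only cosmetic difference is that the paper packages your second-term computation as a separate ``quantum packing lemma'' (Lemma~\ref{lem:q-packing}), but the content is identical to what you wrote out directly.
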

	%

		%	\subsection{Quantum relay channel}

		%		Consider the following communication channel.
		%		

		%	Quantum relay channel is a model for the combined 
		%	network of a and a Source-to-Relay-and-Destination BC.
		%	Source-and-Relay-to-Destination MAC
		
		%	Suppose the Source uses a laser to communicate to the 
		% 	Destination, but some part of the signal is lost
		%	and some part is captured by the relay.
		%	The relay then gives the signal a second ``push'' through an optical-MAC.

	%	\medskip

\bigskip

The classical-quantum state $\theta^{XB}$ is the state with 
respect to which we will calculate  
%von Neumann 
%conditional entropies to obtain 
mutual information quantities.
We call this state the \emph{code state} and it extends 
the classical joint probability distribution induced by a channel,
when the input distribution $p_X$ is used to construct the codebook: $p_X(x)p_{Y|X}(y|x)$.
In the case of the classical-quantum channel,
the outputs are quantum systems.
Information quantities taken with respect to classical-quantum
states are called ``Holevo'' quantities in honour of Alexander Holevo who was first
to recognize the importance of this expression
by proving that it is an upper bound to the accessible information of an ensemble
\cite{holevo1973bounds,holevo1979capacity}.
%For a p-to-p c-q channel $\{\rho_x\}$ and input distr. $p_x$ the Holevo mutual information is:
Holevo quantities are expressed as a difference of two entropic terms:
\be
 	I(X;B)_\theta 
	\equiv 	H(B)_\theta - H(B|X)_\theta
	\equiv 	H\!\bigg(\sum_x p_X(x) \rho_x^B \bigg) - \sum_x p_X(x)H(\rho_x^B).
\ee
Holevo quantities are in some sense partially classical, 
since the entropies are with respect to quantum systems, 
but the conditioning is classical.

	\subsubsection{Quantum decoding}

		When devising coding strategies for classical-quantum channels, 
		the main obstacle to overcome is the construction of a decoding POVM 
		that correctly identifies the messages.
		%
		%		In a communication scenario, the decoding operations performed by the 
		%		receivers correspond to quantum measurements on the outputs of the channel.
		%		A quantum measurement 	is a positive operator-valued measure (POVM) 
		%		$\left\{ \Lambda_{m}\right\}_{m\in \mcal{M}}$ % \left\{  1,\ldots,|\mathcal{M}|\right\}  }$ 
		%		on
		%		the system $B^n$. % , the output of which we denote $M^{\prime}$. 
		%		To be a valid POVM, the set $\{\Lambda_{m}\}$ of 
		%		$|\mcal{M}|$ operators should all be positive semidefinite
		%		and sum to the identity:  $\Lambda_{m} \geq 0, \,\,\, \sum_{m}\Lambda_{m}=I$.
		Using the properties of quantum typical subspaces 
		we can construct a set of positive operators $\{ P_m \}_{m\in\mcal{M}}$
		which, analogously to the classical conditionally typical indicator functions,
		are good at detecting ($\text{Tr}\!\left[P_m\;\rho_m\right] \geq 1- \epsilon$)
		and distinguishing  ($\text{Tr}\!\left[P_{m}\;\rho_{m^\prime\neq m}\right] \leq \epsilon$)
		the output states produced by each message.
		We can construct a valid POVM by \emph{normalizing} these operators:
		\begin{align}
		\Lambda_{m} &  \equiv 
		\left(  
			\sum_{k}P_{k}
		\right)^{\!\!\!-1/2}
		\!\! P_{m}
		\left(
			\sum_{k} P_{k}
		\right)^{\!\!-1/2}, \label{eq:square-root-POVM-generic} 
		\end{align}
		so that we will have $\sum_{m}\Lambda_{m}=I$.
		This is known as the \emph{square root} measurement 
		or the \emph{pretty good} measurement \cite{H98,SW97}.
		%		Thus, the search for a decoding POVM, $\{\Lambda_{m}\}$, 
		%	is reduced to the problem of finding
		%	\subsubsection{Useful lemmas}	
		
		%		The error analysis of a square root measurement is  
		%		greatly simplified by using the Hayashi-Nagaoka operator inequality
		%		(Lemma~\ref{lem:HN-inequality} below).

\bigskip

		%		In a communication scenario, the decoding operations performed by the 
		%		receivers correspond to quantum measurements on the outputs of the channel.
		%		When devising coding strategies for c-q channels, 
		%		the main obstacle to overcome 
		%		is the construction of a decoding POVM that correctly
		%		identifies the messages.
		%
		%		Using the properties of quantum typical subspaces 
		%		we can construct 
		The achievability proof of Theorem~\ref{thm:HSWtheorem} 
		is based on the properties of typical subspaces 
		and the square root measurement.
		We construct a set of unnormalized positive operators
		\be
			\pPm  
			\equiv 
			\pPIavg  \ \pPIm \ \pPIavg,
			\label{eqn:pPm-def}
		%\Pi_{\omega_{x_{1}^{n}\left(  \ell_{j-1}\right)  }}^{B_{1\left(  j\right)  }^{n}%
		%}\ \Pi_{\sigma_{u^{n}\left(  \ell_{j},\ell_{j-1}\right)  ,x_{1}^{n}\left(
		%\ell_{j-1}\right)  }}^{B_{1\left(  j\right)  }^{n}}\ \Pi_{\omega_{x_{1}%
		%^{n}\left(  \ell_{j-1}\right)  }}^{B_{1\left(  j\right)  }^{n}}.
		\ee
		where $\pPIm \equiv  \Pi_{\rho_{x^n(m)},\delta}^{B^n}$
		is the conditionally typical projector that corresponds to the input
		sequence $x^n(m)$
		and $\pPIavg \equiv  \Pi_{\bar{\rho}^{\otimes n}\!,\delta }^{B^n}$
		is the output-typical projector for the average output
		state $\bar{\rho} = \sum_x p_X(x) \rho_x^B$.
		The operator ``sandwich'' in equation \eqref{eqn:pPm-def} corresponds
		directly to the decoding criteria used in the classical coding theorem.
		We require the state to be in the output-typical subspace 
		\emph{and} inside the conditionally typical subspace for the correct codeword $x^n(m)$.
		The decoding POVM is then constructed as in \eqref{eq:square-root-POVM-generic}.
		
		%		These two contributions to the error can be 
		%		bounded 
		By using the properties of the typical projectors,
		%		which we used in the construction 
		%		of the operators $P_m$,
		we can show that the probability of error of this coding
		scheme vanishes provided $R \leq I(X;B) -\delta$.
		%
		%		The technical details are given in Appendix~\ref{sec:HSW-proof-apdx}.
		%
		An effort has been made to present the proofs of the 
		classical and quantum coding theorems  in a similar fashion  
		in order to highlight similarities in the reasoning.
		%

%!TEX root = thesis.tex

\section{Proof of HSW Theorem}

	\label{sec:HSW-proof-apdx}

	In this section we give the details of the POVM construction
	and the error analysis for the decoder used by the receiver
	in the HSW Theorem.

%\medskip
%\textbf{POVM\ Construction}. 
	%
	Recall the classical-quantum state (\ref{eqn:code-state}), with 
	respect to which our code is constructed:
	\be
		\theta^{XB} =
		\sum_{x}  p_X\!\left(  x\right)  
		\left\vert x \right\rangle\!\!\left\langle x\right\vert ^{X}
		\otimes
		\rho_{x}^{B}.
	\ee

	For each input sequence $x^n$, 
	there is a corresponding $\delta$-conditionally typical projector:
	$\Pi_{x^n} 
	 \equiv
	  \Pi_{\rho_{x^{n}}, \delta }^{B^{n}}$. 

	Define also the average output state
	$	\bar{\rho} 	\equiv 
		\sum_{x}  p_X\!\left(  x\right)  \rho_{x}^{B}$,
	and the corresponding average-output-typical projector
	$
	\pPIavg 
	\equiv 
	 \Pi_{\bar{\rho}^{\otimes n},\delta }^{B^{n}}$.
	%	so that the projectors $\Pi_{\sigma_{u^{n}\left(  \ell_{j},\ell_{j-1}\right)
	%	,x_{1}^{n}\left(  \ell_{j-1}\right)  }}^{B_{1\left(  j\right)  }^{n}}$ can be
	%	used to detect $\ell_{j}$ given knowledge of $\ell_{j-1}$, while the projector
	%	$\Pi_{\omega_{x_{1}^{n}\left(  \ell_{j-1}\right)  }}^{B_{1\left(  j\right)  }%
	%	^{n}}$ is helpful for cutting off the large eigenvalues of code-averaged
	%	states. 
	
	The Receiver constructs a decoding POVM $\{ \Lambda_{m} \}_{m \in \mcal{M}}$
	by starting from the \emph{projector sandwich}:
		\be
			\pPm  
			\equiv 
			\pPIavg  \ \pPIm \ \pPIavg,
			\label{eqn:pPm-def2}
		%\Pi_{\omega_{x_{1}^{n}\left(  \ell_{j-1}\right)  }}^{B_{1\left(  j\right)  }^{n}%
		%}\ \Pi_{\sigma_{u^{n}\left(  \ell_{j},\ell_{j-1}\right)  ,x_{1}^{n}\left(
		%\ell_{j-1}\right)  }}^{B_{1\left(  j\right)  }^{n}}\ \Pi_{\omega_{x_{1}%
		%^{n}\left(  \ell_{j-1}\right)  }}^{B_{1\left(  j\right)  }^{n}}.
		\ee
		and \emph{normalizing} the operators:
		\begin{align}
		\Lambda_{m} &  \equiv 
		\left(  
			\sum_{k}P_{k}
		\right)^{\!\!\!-1/2}
		\!\! P_{m}
		\left(
			\sum_{k} P_{k}
		\right)^{\!\!-1/2}\!\!. \label{eq:square-root-POVM-generic-bis} 
		\end{align}
		%		Thus, the search for a decoding POVM, $\{\Lambda_{m}\}$, 
		%	is reduced to the problem of finding
		%
		
		The error analysis of a square root measurement is  
		greatly simplified by using the Hayashi-Nagaoka operator inequality.
		
		\begin{lemma}[Hayashi-Nagaoka \cite{hayashi2003general}] 
														\label{lem:HN-inequality}
		If $S$ and $T$ are operators such that  
		%	The Hayashi-Nagaoka operator inequality applies to a positive operator 
		%	$T$ and
		%	an operator $S$ where 
		$0\leq T$ and $0\leq S\leq I$, then
		\be
			I-\left(  S+T\right)  ^{-\frac{1}{2}}S\left(  S+T\right)  ^{-\frac{1}{2}}%
			\ \: \leq \ \:
			2\left(  I-S\right) \  + \ 4T. 
		\ee
		\end{lemma}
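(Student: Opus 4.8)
The plan is to follow the argument of Hayashi and Nagaoka: reduce to the case that $S+T$ is invertible, rewrite the left-hand side, and then reduce the whole statement to a single polynomial operator inequality which is settled by a completing-the-square estimate.

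First I would dispose of the kernel of $S+T$. Since $0\le S\le S+T$ and $0\le T\le S+T$, both $S$ and $T$ annihilate $\ker(S+T)$ and preserve its orthogonal complement; with the standard convention that $(S+T)^{-1/2}$ denotes the Moore--Penrose pseudoinverse (vanishing on $\ker(S+T)$), the operator $(S+T)^{-1/2}S(S+T)^{-1/2}$ vanishes on $\ker(S+T)$, where the left-hand side of the claim is $I$ and the right-hand side is $2I$. Thus it is enough to prove the inequality on $\operatorname{supp}(S+T)$, and there I may assume $A:=S+T$ is invertible.

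On the support, $A^{-1/2}SA^{-1/2}=A^{-1/2}(A-T)A^{-1/2}=I-A^{-1/2}TA^{-1/2}$, so the claim becomes $A^{-1/2}TA^{-1/2}\le 2(I-S)+4T$. Conjugating by the invertible self-adjoint operator $A^{1/2}$ preserves the ordering, and using $T=A-S$ together with $A^{1/2}SA^{1/2}=A^{2}-A^{1/2}TA^{1/2}$ this becomes equivalent to
\[
  6\,A^{1/2}SA^{1/2}\ \le\ S+A+4A^{2},
\]
to be established under the hypotheses $0\le S\le I$ and $0\le S\le A$ (the second of which is just $T\ge 0$). I would prove this by completing the square in $S$ and $A$, using the elementary facts $S^{2}\le S$ and $(I-S)^{2}\ge 0$ (from $0\le S\le I$), $S\le A$, $S\le A^{1/2}$ (valid since $S\le I$ gives $S^{2}\le S\le A$, so $A-S^{2}\ge 0$, i.e.\ $\left[\begin{smallmatrix} I & S\\ S & A\end{smallmatrix}\right]\ge 0$, hence $S\le I\# A = A^{1/2}$, the operator geometric mean), and the operator arithmetic--geometric-mean bound $X^{\dagger}Y+Y^{\dagger}X\le X^{\dagger}X+Y^{\dagger}Y$. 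Carrying the same computation out with a free parameter $c>0$ in place of the numerical constants yields the sharper family $A^{-1/2}TA^{-1/2}\le (1+c)(I-S)+(2+c+c^{-1})T$, of which the lemma as stated is the case $c=1$.

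The step I expect to be the real obstacle is precisely this innermost inequality. Its commutative shadow is easy: for scalars $s\in[0,1]$ and $t\ge 0$ one has $t/(s+t)\le 2(1-s)+4t$, because either $s+t\ge\tfrac12$, whence $t/(s+t)\le 2t\le 4t$, or $s+t<\tfrac12$, whence $s<\tfrac12$ and so $2(1-s)>1\ge t/(s+t)$. But since each conjugation step above is order-equivariant, the operator inequality does not reduce to this eigenvalue-by-eigenvalue estimate; the term $S$ on the right must be kept genuinely coupled to $A^{1/2}SA^{1/2}$ on the left (crude estimates such as $A^{1/2}SA^{1/2}\le A^{2}$ or $\le A$ lose too much near the extremal configuration $S=A=I$, where the inequality is tight), so the non-commutativity of $S$ and $T$ has to be handled directly. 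The remaining ingredients---the kernel reduction, the algebraic rewrites, and the eventual use of the lemma in the HSW proof (taking $S=P_{m}$, $T=\sum_{k\ne m}P_{k}$ and combining with Lemma~\ref{lem:trace-inequality})---are mechanical.
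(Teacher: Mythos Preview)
The paper does not prove this lemma at all; it is stated with a citation to Hayashi--Nagaoka and then immediately applied. So there is no in-paper argument to compare against, and your proposal has to stand on its own.

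The kernel reduction and the algebraic rewriting are correct, and you arrive at the right target: under $0\le S\le I$ and $S\le A$,
\[
  6\,A^{1/2}SA^{1/2}\ \le\ S+A+4A^{2}.
\]
(The side remark $S\le A^{1/2}$ is also correct, though it follows more directly from operator monotonicity of the square root applied to $S^{2}\le S\le A$, without invoking the geometric mean.) But you do not prove this inequality: you assert that it yields to ``completing the square'' with the listed tools, and then in the very next paragraph concede that this is precisely the obstacle. That concession is warranted, and the assertion is not. The commutative version admits the decomposition
\[
  s+a+4a^{2}-6as\ =\ 4(a-s)^{2}+(2s+1)(a-s)+2s(1-s),
\]
each term nonnegative when $0\le s\le 1\le$ and $s\le a$; but this identity does \emph{not} lift to operators, because after your conjugation the cross term is $6A^{1/2}SA^{1/2}$ rather than $3(AS+SA)$. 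Nor do the crude bounds $A^{1/2}SA^{1/2}\le A$, $\le A^{3/2}$, $\le A^{2}$ (from $S\le I$, $S\le A^{1/2}$, $S\le A$) combined with $X^{\dagger}Y+Y^{\dagger}X\le X^{\dagger}X+Y^{\dagger}Y$ recover the needed coupling to the isolated $S$ on the right: any additive splitting of $6A^{1/2}SA^{1/2}$ bounded termwise by these inequalities loses a constant near the tight configuration $S=A=I$, exactly as you anticipated.

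So your proposal correctly isolates the hard core of the lemma but does not supply the idea that cracks it; the listed ingredients are not shown to suffice, and I do not believe they do in the form you suggest. This is a genuine gap, not a routine omission.
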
%

		If we let $S=P_m$ and $T=\sum_{m^\prime \neq m} P_{m^\prime}$
		in the above inequality we obtain
		\be
			I - \Lambda_m 	\ \ \
			\leq \ \ \
			2\left(  I-P_m\right)   \ +  \ 4 \textstyle \sum_{m^\prime \neq m} P_{m^\prime},
		\ee
		%
		%		When the message $m$ is sent, the above expression represents
		which corresponds to 
		the decomposition of the error outcome $(I-\Lambda_m)$ into 
		two contributions: 
		\renewcommand{\labelenumi}{\Roman{enumi}.}
		\begin{enumerate}
		\item
			The probability that the correct detector does not 
			``click'': $\left(  I-P_m\right)$. 
			%			(Type I error).
			This corresponds to the error events $\mathbf{(E0)}$ and $\mathbf{(E1)}$ in
			the classical coding theorem.
		\item 
			The probability that a wrong detector ``clicks'': 
			$\sum_{m^\prime \neq m} P_{m^\prime}$.
			%			(Type II error).
			This corresponds to the error event $\mathbf{(E2)}$ in the classical case.

		\end{enumerate}
		\renewcommand{\labelenumi}{\arabic{enumi}.}

%
%\medskip
%\textbf{Error analysis}. 
	%
%	 using a measurement (POVM) constructed from the positive operators 
%	$\pPm \equiv \pPIavg \pPIm  \pPIavg$ 

	We will show that the average probability of error 
	\begin{align*}
%		\ExpX 
		\bar{p}_{e}
		\equiv 
%			\ExpX 
			\frac{1}{|\mathcal{M}|}\sum_{m}					
			\text{Tr}\!
			\left\{  
				\left(  I-\Lambda^{B^{n}}_{m} \right)
				\prhom
				%		\rho_{\ell_j }^{B_{1(j)}^{n}} 
			\right\},
	\end{align*}
	will be small 	
	provided the rate $R  \leq I(X;B)-\delta=H(B)-H(B|X)-\delta$.
	The bound follows from the following properties of typical projectors:
	\begin{align}
		\Tr[ \pPIm ] &\leq 2^{n[H(B|X)+\delta ]},  \label{RE-typ-1-ptop}\\
%		\intertext{and} 
		 \pPIavg  \bar{\rho}^{\otimes n}  \ \pPIavg & \leq 2^{-n[H(B)-\delta] } \pPIavg, \label{RE-typ-2-ptop}
	\end{align}
	and reasoning analogous to that used in the classical coding theorem.
	Note that by the symmetry of both the codebook construction and the decoder we can 
	%	drop the average over the choice of message
	study the error analysis for a fixed message $m$.
	
	Consider the probability of error when the message  $m$ is sent,
	and let us apply the Hayashi-Nagaoka operator inequality
	(Lemma~\ref{lem:HN-inequality}) to split the error into two terms:
	\begin{align}
	 \bar{p}_e
	 & \equiv
	\text{Tr}\!\left[ 
		\left(I   - \!\pLAMm \right)  
		\prhom
	  \right]  \nonumber \\
	& \leq  2
	\underbrace{
	\text{Tr}\left[  \left(  I 
	-\pPm
	\right)   \ 
	\prhom 
	\right]  
	}_{(\textrm{I})}
	\ \  + \;4 \! 
	\underbrace{
	\sum_{m^{\prime}   \neq m }
	\!\!
	\text{Tr}
	\left[  
	\pPmpr
	\
	\prhom
	\right]
	}_{(\textrm{II})}
	\!.
	\label{eqn:after-HN-in-HSW}
	\end{align}
	
	We bound the expectation %under code randomness 
	of the average probability of error
	%	 $\ExpX\!\!\left\{ \bar{p}_e \right\}$,
	by bounding the individual terms.
	
	We now state two useful results, which we need
	to bound the first error term.
	First, recall the inequality from Lemma~\ref{lem:trace-inequality} which states that:
	%	\begin{lemma} 
	%	\label{lem:trace-inequality}
	%	\label{lem:tr-trick}
	\be
	\mathrm{Tr}\left[  \Lambda\rho\right]
		\leq
	\mathrm{Tr}\left[ \Lambda \sigma\right]  
		+ \left\Vert \rho-\sigma\right\Vert _{1},
	\label{eqn:tr-trickHSW}
	\ee
	holds for all operators such that   $0\leq \rho, \sigma, \Lambda \leq I$. 
	
	The second ingredient is the gentle measurement lemma.
	\begin{lemma}[Gentle operator lemma for ensembles \cite{itit1999winter}] 
								\label{lem:gentle-operator}
	Let $\left\{  p\!\left(  x\right)  ,\rho_{x}\right\}$ be an ensemble
	and let $\bar{\rho} \equiv\sum_{x}p\!\left(  x\right)  \rho_{x}$. %expected density operator.
	If an operator $\Lambda$, where $0 \leq \Lambda \leq I$, 
	has high overlap with the average state, % $\bar{\rho}$:
	$\mathrm{Tr}\left[  \: \Lambda \: \bar{\rho} \: \right]  \geq1-\epsilon$,
	then the subnormalized state $\sqrt{\Lambda}\rho_{x}\sqrt{\Lambda}$ is close in 
	%		in expected 
	trace distance to the original state $\rho_{x}$ on average: %
	$ 
	%		\[
	\mathbb{E}_{X}\left\{  \left\Vert \sqrt{\Lambda}\rho_{X}\sqrt{\Lambda}%
	-\rho_{X}\right\Vert _{1}\right\}  \leq2\sqrt{\epsilon}.
	$
	%		\]
	\end{lemma}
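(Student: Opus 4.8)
The plan is to deduce the ensemble statement from the ordinary single-state gentle operator lemma and then average with Jensen's inequality. So first I would prove the single-state version: if $\rho$ is a density operator, $0\leq\Lambda\leq I$, and $\mathrm{Tr}[\Lambda\rho]\geq 1-\epsilon$, then $\|\rho-\sqrt{\Lambda}\rho\sqrt{\Lambda}\|_1\leq 2\sqrt{\epsilon}$. Starting from the triangle inequality
\[
	\|\rho-\sqrt{\Lambda}\rho\sqrt{\Lambda}\|_1
	\ \leq\ \|(I-\sqrt{\Lambda})\rho\|_1+\|\sqrt{\Lambda}\rho(I-\sqrt{\Lambda})\|_1,
\]
I would bound each term using the variational identity $\|X\|_1=\max_{U}|\mathrm{Tr}[UX]|$ over unitaries $U$, together with the Cauchy--Schwarz inequality $|\mathrm{Tr}[A^{\dagger}B]|\leq\sqrt{\mathrm{Tr}[A^{\dagger}A]}\,\sqrt{\mathrm{Tr}[B^{\dagger}B]}$ applied after factoring $\rho=\sqrt{\rho}\,\sqrt{\rho}$. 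This yields $\|(I-\sqrt{\Lambda})\rho\|_1\leq\sqrt{\mathrm{Tr}[(I-\sqrt{\Lambda})^{2}\rho]}$ and $\|\sqrt{\Lambda}\rho(I-\sqrt{\Lambda})\|_1\leq\sqrt{\mathrm{Tr}[\Lambda\rho]}\,\sqrt{\mathrm{Tr}[(I-\sqrt{\Lambda})^{2}\rho]}$.

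The crux is the operator inequality $(I-\sqrt{\Lambda})^{2}\leq I-\Lambda$, valid for every $0\leq\Lambda\leq I$ since $(I-\Lambda)-(I-\sqrt{\Lambda})^{2}=2\sqrt{\Lambda}(I-\sqrt{\Lambda})\geq 0$, the right-hand side being a product of two commuting positive operators. Hence $\mathrm{Tr}[(I-\sqrt{\Lambda})^{2}\rho]\leq\mathrm{Tr}[(I-\Lambda)\rho]\leq\epsilon$, and since also $\mathrm{Tr}[\Lambda\rho]\leq\mathrm{Tr}[\rho]=1$, each of the two terms above is at most $\sqrt{\epsilon}$, which proves the single-state bound $\|\rho-\sqrt{\Lambda}\rho\sqrt{\Lambda}\|_1\leq 2\sqrt{\epsilon}$.

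For the ensemble, I would set $\epsilon_x\equiv 1-\mathrm{Tr}[\Lambda\rho_x]\geq 0$ (nonnegative because $\Lambda\leq I$), so the hypothesis $\mathrm{Tr}[\Lambda\bar{\rho}]\geq 1-\epsilon$ reads $\mathbb{E}_X[\epsilon_X]\leq\epsilon$. Applying the single-state lemma to each $\rho_x$ gives $\|\sqrt{\Lambda}\rho_x\sqrt{\Lambda}-\rho_x\|_1\leq 2\sqrt{\epsilon_x}$; taking the expectation over $X$ and using concavity of the square root (Jensen's inequality) then yields $\mathbb{E}_X\{\|\sqrt{\Lambda}\rho_X\sqrt{\Lambda}-\rho_X\|_1\}\leq 2\,\mathbb{E}_X[\sqrt{\epsilon_X}]\leq 2\sqrt{\mathbb{E}_X[\epsilon_X]}\leq 2\sqrt{\epsilon}$, as claimed. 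The main obstacle is the single-state estimate itself --- choosing the right split of $\rho-\sqrt{\Lambda}\rho\sqrt{\Lambda}$, carrying out the Cauchy--Schwarz manipulation, and establishing the operator inequality $(I-\sqrt{\Lambda})^{2}\leq I-\Lambda$; once that is in hand, the passage to the ensemble is just linearity of expectation plus Jensen.
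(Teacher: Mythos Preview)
The paper does not actually give a proof of this lemma; it is stated with a citation to Winter's original paper and then used as a black box. Your argument is correct and is essentially the standard proof: the decomposition $\rho-\sqrt{\Lambda}\rho\sqrt{\Lambda}=(I-\sqrt{\Lambda})\rho+\sqrt{\Lambda}\rho(I-\sqrt{\Lambda})$, the Cauchy--Schwarz bound on each piece, the operator inequality $(I-\sqrt{\Lambda})^{2}\leq I-\Lambda$, and then the passage to ensembles via Jensen on $\sqrt{\,\cdot\,}$ are exactly the ingredients in Winter's original argument and in standard textbook treatments (e.g.\ Wilde). Nothing is missing.
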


	We bound the expectation over the code randomness
	for the first term in \eqref{eqn:after-HN-in-HSW} as follows:
	{\allowdisplaybreaks
	\begin{align*}
	\ExpX   (\text{I})  
	&=	\ExpX   
		\text{Tr}\left[  
			\left(  I  -\pPm\right)   \ 
			\prhom 
		\right] \\
	&=	\ExpX   
		\text{Tr}\!\left[  
			\left(  I- \pPIavg \pPIm \pPIavg \right) 
			\prhom  
		\right]  \\
	&=	
		1 -
		\ExpX  \!  \bigg\{
		\text{Tr}\!\left[  
			 \pPIm \ \ \pPIavg 
			\prhom  
			\pPIavg
		\right]  
		\bigg\} \\
	&\overset{\mdingone}{\leq}
		1 -
		\ExpX  \!  \bigg\{
		\text{Tr}\!\left[  
		  \pPIm \ 
			\prhom  
		\right]  \ 	  +  
		\left\| \pPIavg \prhom \pPIavg   - \prhom \right\|_1
		\bigg\} \\
	& = 
		1 -
		\ExpX 
		\text{Tr}\!\left[  
		  \pPIm \ 
			\prhom  
		\right]  \ 	  +  
		\ExpX  \left\| \pPIavg \prhom \pPIavg   - \prhom \right\|_1 \\
	%	& =	
	%		1 - 
	%		%
	%		\text{Tr}\!\left[  
	%		  \pPIm \! 
	%			\ExpX \!\!\!\left\{ \prhom   \right\}
	%		\right]  \\
	%	& \ \ \ \ % \hspace{-2mm}  
	%	  +   \!\! \ExpX \left\| \pPIavg \prhom \pPIavg   - \prhom \right\|_1 \\
	%	& =
	%		1 - 
	%		%
	%		\text{Tr}\!\left[  
	%		  \pPIm \! 
	%		  	\sigma_{m,\ell_{j-1}}
	%			%			\ExpX \!\!\!\left\{ \prhom   \right\}
	%		\right]  \\
	%	& \ \ \ \ % \hspace{-2mm}  
	%	  +  \!\! \ExpX \left\| \pPIavg \prhom \pPIavg   - \prhom \right\|_1 \\
	&\overset{\mdingtwo}{\leq}
		1 - 
		\ExpX 
		\text{Tr}\!\left[  
		  \pPIm \ 
			\prhom  
		\right]  \ 	  +  
		2 \sqrt{\epsilon}  \\
	&\overset{\mdingthree}{\leq}
		1 - (1-\epsilon) + 2\sqrt{\epsilon} \ \  = \ 	\epsilon + 2\sqrt{\epsilon}.
	\end{align*}}%
	The inequality~\dingone follows from equation \eqref{eqn:tr-trickHSW}.
	The inequality~\dingtwo follows from Lemma~\ref{lem:gentle-operator}
	and the property of the average output state $\text{Tr}\!\left[  \pPIavg \ \bar{\rho}^{\otimes n} \right] \geq 1- \epsilon$.
	The inequality~\dingthree follows from: $
		\ExpX \text{Tr}\!\left[ 	
			\Pi_{X^n(m)}  \rho_{X^n(m)} 
		 \right] 
		 \geq 1- \epsilon$.

	The crucial Holevo information-dependent bound on the expectation of the second term in \eqref{eqn:after-HN-in-HSW}
	can be obtained by using the quantum packing lemma. 
	The quantum packing lemma (Lemma~\ref{lem:q-packing}) given in Appendix~\ref{apdx:q-packing}, provides a bound on the 
	amount of overlap between the conditionally typical subspaces for the codewords in our code construction
	and is analogous to the classical packing lemma (Lemma~\ref{lem:classical-packing}),
	which we used to prove the classical channel coding theorem. 
	Note that Lemma~\ref{lem:q-packing} is less general than the quantum packing lemmas which appear
	in \cite{itit2008hsieh} and \cite{wilde2011book}.

	The overall probability of error is thus bounded as
	\be
		\ExpX \bar{p}_{e} \ \ 
		\leq \ 
		2( \epsilon + 2\sqrt{\epsilon} )
		+ 4\left( 2^{-n[I(X;B)-2\delta\ -R] } \right),
	\ee
	and if we choose
	$R \leq I(X;B)-3\delta$,
	the probability of error 
	%	we can make it	vanish 
	is bounded from above by $\epsilon$ in the limit $n\to \infty$.
	%	\begin{align*}
	%	\ExpALL \!\! \left\{
	%	 \bar{p}_e^{D}
	%	 \right\} 
	%	 %
	%	 & =
	%	 	2 \!\!\!\ExpALL \!\!\!\!(I) \ \ + \ \ 4 \!\!\!\ExpALL\!\!\!\! (II)
	%	\end{align*}

	%!TEX root = thesis.tex

%	Example 3.1 
%	----------------------------
%	a) Measurement in |0> |1> basis
%	 
%	b) Holevo measurement
%	 
%	c) Symmetric orthogonal classical measurement
%	 basis: 
%	    f0 = [ cosd(22.5), -1*sind(22.5) ];    
%	    f1 = [ cosd(45+22.5), sind(45+22.5) ]; 
%	ans =
%	    0.8536   -0.3536
%	   -0.3536    0.1464
%	ans =
%	    0.1464    0.3536
%	    0.3536    0.8536
%	 
%	d) Helstrom measruement
%	 basis: 
%	    h0 = 1/sqrt(2)*[ 1,  -1 ];        % orth to |+>    
%	    h1 =           [ 0,   1 ];        % orth to |0>    
%	P0 =
%	    0.8536   -0.3536
%	   -0.3536    0.1464
%	P1 =
%	    0.1464    0.3536
%	    0.3536    0.8536
%	 
%	e) Aligned measurement
%	 basis: 
%	    cq0 =           [ 1,  0 ];        %  |0>    
%	    cq1 = 1/sqrt(2)*[ 1,  1 ];        %  |+>    
%	cqP0 =
%	    0.8536   -0.3536
%	   -0.3536    0.1464
%	cqP1 =
%	    0.1464    0.3536
%	    0.3536    0.8536
%	 
%	 Observe that c), d) and e) lead to the same measurement 
%	 operators after sqrt normalization 
% that's kind of cool...
 
 \bigskip

\begin{example}[Point-to-point channel]
									\label{ex:PtoPcVScq}
	Consider the classical-quantum channel $\mcal{N}\equiv ( \{0,1\}, \rho^B_x, \mathbb{C}^2 )$,
	which takes a classical bit as input and outputs a qubit (a two-dimensional quantum system).
	Suppose the channel map is the following:
	\be
	    \!\!0   \rightarrow \rho_0 \equiv \ketbra{0}{0} = \left[\begin{array}{cc}1&0\\0&0\end{array}\right]\!\!,   \qquad 
	    1   \rightarrow \rho_1 \equiv \ketbra{+}{+} = \!\left[\!\begin{array}{cc}\tfrac{1}{2}&\tfrac{1}{2}\\\tfrac{1}{2}&\tfrac{1}{2}\end{array}\!\right]\!\!.
	\ee

	We calculate the channel capacity for three different measurement strategies:
	two classical strategies where the channel outputs are measured independently,
	and a quantum strategy that uses collective measurements on blocks
	of $n$ channel outputs.
	Because the input is binary, it is possible to plot the achievable rates for 
	all input distributions $p_X$. See Figure~\ref{fig:PtoP_example} for 
	a plot of the achievable rates for these three strategies.
	 %two classical strategies in which each channel output is measured 

	\noindent
	{\bf a) Basic classical decoding: }
	A classical strategy for this channel corresponds to the channel outputs 
	being individually measured in the computational basis:
	\be
		\Lambda_0 = \ketbra{0}{0}, \ \ \Lambda_1 = \ketbra{1}{1}, 
		\qquad 
		\Lambda_{y^n}^{B^n} \equiv \Lambda_{y_1}\otimes \Lambda_{y_2} \otimes \cdots \otimes \Lambda_{y_n}.
	\ee
	Such a communication model for the channel is \emph{classical} since we have
	$\Tr\left[ \Lambda_{y^n}^{B^n} \ \rho^{B^n}_{x^n} \right] \equiv p_{Y^n|X^n}(y^n|x^n)$.
	More specifically, $p_{Y^n|X^n}(y^n|x^n) = \prod^n p^{(a)}_{Y|X}(y_i|x_i)$,
	where $p^{(a)}_{Y|X}(y|x)$ is a classical $Z$-channel with transition probability $p_z \equiv p^{(a)}_{Y|X}(0|1)=\Tr[ \Lambda_0 \ketbra{+}{+} ]=0.5$.

	The capacity of the classical $Z$-channel is given by:
	\be
		C^{(a)}(\mcal{N}) = \max_{0 \leq p_0 \leq 1}   H\!\big( (1-p_0)(1-p_z) \big) - (1-p_0)H( p_z ),
	\ee
	where we parametrize in terms of $p_0 = p_X(0)$. 
	For this model, the capacity achieving input distribution has $p_0=0.6$
	and the capacity is $C^{(a)}=H_2(0.2)-0.4 \approx 0.3219$.

	\noindent	
	{\bf b) Aligned classical decoding: }
	A better classical model is to use a ``rotated'' quantum measurement such that
	the measurement operators are symmetrically aligned with the 
	channel outputs. The measurement directions $-\pi/8$ and $\pi/4 + \pi/8$
	are symmetric around the output states $\ket{0}$ and $\ket{+}$.
	Define the notation $c_{\pi_8}=\cos(\pi/8)$ and $s_{\pi_8}=\sin(\pi/8)$.
	The measurement along the $-\pi/8$ and $\pi/4 + \pi/8$ directions
	corresponds to the following POVM operators:
	\begin{align*}
		\Lambda_0 & = (c_{\pi_8} \ket{0}-s_{\pi_8} \ket{1})(c_{\pi_8}\bra{0}-s_{\pi_8}\bra{1})
					= \!\left[\!\begin{array}{cc}c_{\pi_8}^2 & -c_{\pi_8}s_{\pi_8}\\ -s_{\pi_8}c_{\pi_8}& s_{\pi_8}^2\end{array}\!\right]_{\!\{ \ket{0}, \ket{1} \} }  \\
		\Lambda_1 & = (c_{\pi_8} \ket{+}-s_{\pi_8} \ket{-})(c_{\pi_8}\bra{+}-s_{\pi_8}\bra{-})
					= \!\left[\!\begin{array}{cc}c_{\pi_8}^2 & -c_{\pi_8}s_{\pi_8}\\ -s_{\pi_8}c_{\pi_8}& s_{\pi_8}^2\end{array}\!\right]_{\!\{ \!\ket{+}, \ket{-} \} } 
	\end{align*}
	where the matrix representations are expressed in the basis indicated in subscript.
	%Note that $\Lambda_0 + \Lambda_1 = I$.

	Using this measurement on channel outputs $\rho_x^B$ 
	induces a classical channel $p^{(b)}_{Y|X}$ with 
	transition probabilities   %(y|x)$ with:
	\be
	    p^{(b)}_{Y|X}(0|0) = c_{\pi_8}^2, \ \  p_{Y|X}(1|0)=s_{\pi_8}^2,  
	    \quad 
	    p^{(b)}_{Y|X}(1|1) = c_{\pi_8}^2, \ \  p_{Y|X}(0|1)=s_{\pi_8}^2,  
	\ee
	which corresponds to a binary symmetric channel (BSC)
	with crossover probability $p_e = s_{\pi_8}^2 = \sin^2(\pi/8)$
	and success probability $p_s  = c_{\pi_8}^2$.
	The capacity of this BSC is given by:
	\be
		C^{(b)}(\mcal{N}) = 1 - H( p_s ) = 1 - H\big(\cos^2(\pi/8) \big) \approx 0.3991. 
	\ee

\begin{figure}[ptb]
\begin{center}
	\includegraphics[width=0.618\textwidth]{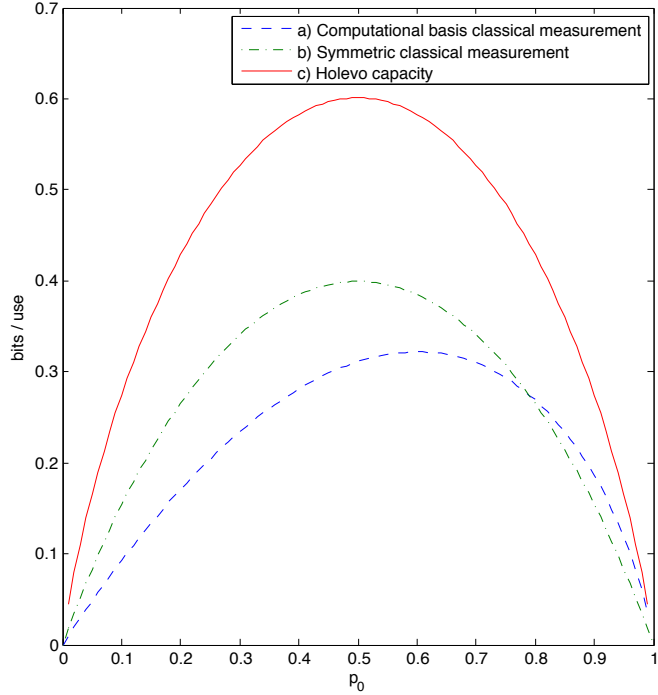}
\end{center}
\caption{Plot of the achievable rates for the point-to-point channel $\rho_x^B$
given by the map $0 \to \ketbra{0}{0}^B, 1\to \ketbra{+}{+}^B$ under three models.
The horizontal axis corresponds to the parameter $p_0 = p_X(0)$ of the input distribution.
The first model treats 
%simplest classical model for the channel is to ignore all quantum effects and model
each output of the channel as a classical bit $Y^{(a)} \in \{0,1\}$ corresponding to the output of a measurement
in the computational basis: $\{ \Lambda^{(a)}_y \}_{y\in\{0,1\} } =  \{ \ketbra{0}{0}, \ketbra{1}{1} \}$.
The mutual information $I(X;Y^{(a)})$ for all input distributions $p_X$ is plotted as a dashed line.
Under this model, the channel $\mcal{N}$ corresponds to a classical $Z$-channel.
%
%
%
%For this model, the capacity achieving input distribution has $p_0=0.6$,
%and the capacity is 
%  the optimal distribution is  $p_{X}(0)=0.6,p_{X}(1)=0.4$ and the capacity is 
%The   distribution is not symmetric because $H(Y^{(a)}|X=0) = 0$ and $H(Y^{(a)}|X=1) = 1$ 
A better approach is to use a symmetric measurement with output denoted as $Y^{(b)}$,
which corresponds to a classical binary symmetric channel.
%  with corresponding output .
%\sim p_{Y|X}(y|X}.
The mutual information $I(X;Y^{(b)})$ is plotted as a dot-dashed line.
%for all input distributions $p_X$ is plotted as a dashed line.
%which corresponds to the  measure in the co we model the channel classically classical Meaurements  and  three 
%different decoding strategies. 
The best coding strategy is to use block measurements.
The Holevo quantity $H\!\big(\sum_x p_X(x) \rho_x^B \big) - \sum_x p_X(x)H(\rho_x^B)$
for all input distributions is plotted as a solid line.
The capacity of the channel under each model is given by the maximum of each function
curve: $C^{(a)}(\mcal{N}) \approx 0.3219$, 
$C^{(b)}(\mcal{N})  \approx 0.3991$,
and 
$\mcal{C}^{(c)}(\mcal{N})  =   H_2(\cos^2(\pi/8)) \approx 0.6009$.
For this particular channel the quantum decoding strategy leads to a 50\% improvement in the 
achievable communication rates relative to the best classical strategy.
}%
\label{fig:PtoP_example}%
\end{figure}

	\noindent
	{\bf c) Holevo limit: }
	The HSW Theorem tells us the \emph{ultimate} capacity of this
	channel is given by
	\be
		\mcal{C}^{(c)}(\mcal{N}) 
		\equiv 	 \max_{p_X} 	H\!\bigg(\sum_x p_X(x) \rho_x^B \bigg) - \sum_x p_X(x)H(\rho_x^B).
	\ee
	In our case, the capacity is achieved using the uniform input distribution.
	The capacity for this channel using a quantum measurment is therefore:
	\be
		\mcal{C}^{(c)}(\mcal{N})  =   H_2(\cos^2(\pi/8))   \approx 0.6009.
	\ee
	
	In general, a collective measurement on blocks of $n$ outputs of the channel
	are required to achieve the capacity.
	This means that the POVM operators $\{ \Lambda_{x^n}^{B^n} \}$ cannot be written as a tensor 
	product of measurement operators on the individual output systems.
	The channel capacity can be achieved using the random coding approach
	and the square root measurement based on conditionally typical projectors
	as shown in the proof of Theorem~\ref{thm:HSWtheorem}.
%	in Section~\ref{sec:HSW-proof-apdx}.
	%	on the indi
	%	on the output space.
	%	
	%	\be
	%		\Lambda_{x^n}^{B^n}  \not \equiv  \Lambda_{x_1}\otimes \Lambda_{x_2} \otimes \cdots \otimes \Lambda_{x_n}.
	%	\ee	
	%	will be necessary $x^n these rates a collective measurement,
	%	measurement necessary 

\end{example}

\section{Discussion}

	This chapter introduced the key concepts of the classical and quantum channel coding
	paradigms.
	The situation considered in Example~\ref{ex:PtoPcVScq} serves as an illustration
	of the potential benefits that exist for modelling communication channels using quantum
	mechanics.
	%
	%	The benefits of collective measurements 
	%	 using for the use of fully-quantum 

	The key take-away from this chapter is that collective measurements
	on blocks of channel outputs are necessary in order to achieve the
	\emph{ultimate} capacity of classical-quantum communication channels,
	and that classical strategies which measure the channel outputs individually are
	suboptimal.
	The increased capacity is perhaps the most notable difference that exists between the
	classical and classical-quantum paradigms for communication  \cite{ElGamalQuestion}.
	
	In the remainder of this thesis, we will study multiuser classical-quantum communication models
	and see various coding strategies, measurement constructions and
	error analysis techniques 	which are necessary in order to prove coding theorems.
	%	for classical-quantum channels.
	%	different constructions 
	%	for other communication scenarios from network information theory.
	%

%!TEX root = thesis.tex

\chapter{Multiple access channels}

													\label{chapter:MAC}

	The multiple access channel is a communication model
	for situations in which
	multiple senders are trying to transmit information to a single receiver.
	To fully solve the multiple access channel problem is to characterize
	all possible transmission rates for the senders which are decodable by
	the receiver.
	%
	%	We will do so by analyzing the encodings used by the senders,
	%	the decoding operation used by the receiver,
	%	and the 
	%
	We will see that there is a natural tradeoff between the rates of the senders;
	the louder that one of the senders ``speaks,''
	the more difficult it will be for the receiver to ``hear'' the other senders.
	%
%	%
%	The key insight we will learn in this chapter is that
%	in most MAC channels, there exists a 
	%	 the two communication links.

\section{Introduction}
													\label{sec:MAC-intro}

	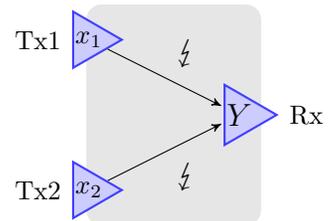
\begin{wrapfigure}[8]{r}{0pt}
	\begin{tikzpicture}[node distance=2.0cm,>=stealth',bend angle=45,auto]

	  \begin{scope}
		% QMAC channel
		\node [cnode] (Tx1) [ label=left:Tx1     ]                            {\footnotesize $x_1$} ; %{\scriptsize $x_1$};
		\node [cnode] (Tx2) [ label=left:Tx2, below of=Tx1]                {\footnotesize $x_2$}; %            {\scriptsize  $x_2$};
		\node [cnode] (Rx) [ label=right:Rx, right of=Tx1,yshift=-10mm] { $Y$} % {\scriptsize y}
			edge  [pre]             node[swap]  {$\lightning$}    (Tx1)
			edge  [pre]             node		   {$\lightning$}         (Tx2) ;
	  \end{scope}
	  \begin{pgfonlayer}{background}
	    \filldraw [line width=4mm,join=round,black!10]
	      ([xshift=-3mm]Tx1.north -| Tx1.east) rectangle ([xshift=+3mm]Tx2.south -| Rx.west);
	  \end{pgfonlayer}
	\end{tikzpicture}
		\caption{A classical multiple access channel.
%			 with two senders and one receiver.
		 	%$\mcal{N}^{B}(x_1,x_2)\equiv \left\{ \rho^B_{x_1,x_2} \right\}.$ 
			}
		\label{ex:CMACchannelTriangles}
		\end{wrapfigure}
	
		The classical multiple access channel $\mcal{N}^{X_1X_2\to Y}$ 
		is a triple 
		$(\mcal{X}_1 \times \mcal{X}_2, \mcal{N}(x_1,x_2)\equiv p_{Y|X_1X_2}(y|x_1,x_2), \mcal{Y} )$,
		where $\mcal{X}_1$ and $\mcal{X}_2$ are the input alphabets
		for the two senders, $\mcal{Y}$ is the output alphabet and 
		$p_{Y|X_1X_2}(y|x_1,x_2)$ is a conditional probability distribution
		which describes the channel behaviour.

		Our task is to characterize the communication rates
		$(R_1,R_2)$ that are achievable
		from Sender~1 to the receiver and 
		from Sender~2 to the receiver.
		%	%
		%	\be
		%		n \cdot \mcal{N}^{X_1X_2 \to Y}  
		%		\ \ \overset{ ( 1 -\epsilon)}{\longrightarrow} \ \ 
		%		nR_1 \cdot  [c^1 \to c]
		%		\ + \
		%		nR_2 \cdot  [c^2 \to c],
		%	\ee	
		%	where $[c^j \to c]$ denotes a noiseless binary channel from Sender $j$ to the receiver.

		\newcounter{exContinuedTMP}[example]
	
		\begin{example}
						\label{ex:lasersCMAC}
		Consider a situation in which two senders 
		use laser light pulses to communicate to a distant receiver
		equipped with an optical instrument and a photodetector.
		In each time instant, Sender 1 can choose 
		to send either a weak pulse of light or a strong pulse:
		$\mcal{X}_1 = \{ \shPulse, \medPulse \}$.
		Sender~2 similarly has two possible inputs $\mcal{X}_2 = \{ \shPulse, \medPulse \}$.
		The receiver measures the light intensity coming into
		the telescope,
		and we model his reading as the following output space
		$\mcal{Y}=\{ \shPulse, \medPulse, \longPulse \}$.
		%
		%		For the sake of simplicity, let us assume that the channel $\mcal{N}_1$
		%		acts deterministically.
		The output signal is the sum of the incoming signals:
		$Y=X_1+X_2$.
		We have $p_{Y|X_1X_2}(\shPulse|\shPulse,\shPulse)=1$,
		$p_{Y|X_1X_2}(\medPulse|\medPulse,\shPulse)
		 =p_{Y|X_1X_2}(\medPulse|\shPulse,\medPulse)=1$ 
		 and 
		 $p_{Y|X_1X_2}(\longPulse|\medPulse,\medPulse)=1$.

	\begin{figure}
	\begin{center}
		\begin{tikzpicture}[node distance=2.0cm,>=stealth',bend angle=45,auto,scale=0.83, every node/.style={scale=0.83}]

	  \begin{scope}
		% c-q P-to-P channel
		\node [draw=black,anchor=south west,inner sep=0] (bglayer) at (0,0) {\includegraphics[width=13cm]{./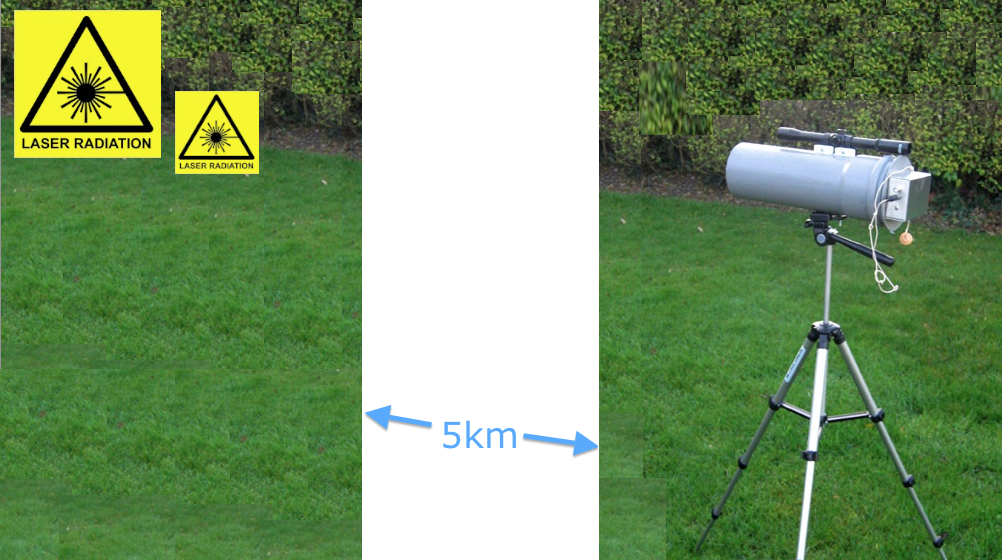} };
%		\draw [anchor=south west,help lines] (0,0) grid (13,4.4);
%		\node [draw=red] (aaa) at (0,0) {x};
		\node [cnode] (Tx1) at (1.2,6.2)  [ ]                            {$x_1$} ;
		\node  [cnode] (Tx2) at (2.9,5.6)  [  ]                            {$x_2$} ;
		\node [qnode] (Rx) at (11.3,4.8) [ ]
		 %label=below:${\color{white}\textrm{Rx}}$
		 { \ \ \textrm{Rx} \ \ %\footnotesize $\rho^{B_2}_x$
		 } 
			edge [pre,draw=red,very thick]  node[] {} (Tx1);		
		\draw[->,draw=green,very thick]	([yshift=-1cm]Tx2)	-- ([yshift=-1cm]Rx);
		% {\scriptsize y}
%			edge  [pre]             node[swap]  {$\mcal{N}^{A\to B}$}    (Tx) ;
	  \end{scope}
%	  \begin{pgfonlayer}{background}
%	    \filldraw [line width=4mm,join=round,black!10]
%	      ([xshift=0mm,yshift=0.5mm]Tx.north -| Tx.east) rectangle ([xshift=1mm]Tx.south -| Rx.west);
%	  \end{pgfonlayer}
	\end{tikzpicture}
	\end{center}
	
	\caption{A real-world multiple access channel $\mcal{N}_1$.}

	\label{fig:real-world-MAC}
	\end{figure}

		\end{example}

		The rate pair $(R_1,R_2)=(1,0)$ is achievable
		if we force Sender 2 to always send a constant input.
		The resulting channel between Sender~1 and the receiver
		is a noiseless binary channel.
		The rate $(0,1)$ is similarly achievable if we fix Sender~1's input.
		A natural question is to ask what other rates are achievable for 
		this communication channel.
		%		What is its capacity region?
		
		Note that the model used to describe the above communication scenario
		is very crude and serves only as a first approximation, which we use to 
		illustrate the basic ideas of multiple access communication.
		In Section~\ref{sec:QMAC-model}, we will
		consider more general models for multiple access channels, 
		which allow the channel outputs to be quantum systems.
		In Chapter~\ref{chapter:bosonic}, we will refine the model further 
		by taking into account certain aspects of quantum optics.
		%	 which are necessary in 
		%	order to accurately analyze this optical communication scenario.

	\subsection{Review of classical results}
	
		The multiple access channel is one of the first 
		multiuser communications problems ever considered \cite{S61}.
		It is also one of the rare problems in network information theory
		where a full capacity result is known, i.e., the best known achievable
		rate region matches a proven outer bound.
		The multiple access channel plays an
		important role as a building block for other
		network communication scenarios.
		
		The capacity region of the classical discrete 
		memoryless multiple access channel (DM-MAC) was established 
		by Ahlswede \cite{ahlswede1971multi, ahlswede1974capacity}
		and Liao  \cite{liao-multipleaccess}.
		Consider the classical multiple access channel with two senders
		described by $\mcal{N} = (\mathcal{X}_1 \times \mathcal{X}_2,  p_{Y|X_1X_2}, \mcal{Y} )$.
		The capacity region for this channel 	is given by 
		\begin{equation}
	        		\nonumber
	        		 \mcal{C}_{\textrm{MAC}}(\mcal{N})  % = G_{HK}  = G_{CMG} 
				\equiv  
					%\left( 
					\bigcup_{ p_{X_{1}}\!, p_{X_{2}}\! }% \in \mathcal{P}_M } 
					\left\{ 
						(R_1,R_2) \in \mathbb{R}_+^2 
						\left| 
						\begin{array}{rcl}					
					            R_1             &\leq&    I(X_1;Y|X_2) \\
					            R_2             &\leq&    I(X_2;Y|X_1) \\
					            R_1+R_2   &\leq&    I(X_1X_2;Y)
					           \end{array}
					          \right.
					 \right\},
	        \end{equation}
	        where $p_{X_1} \in \mcal{P}(\mcal{X}_1)$,
	         $p_{X_2} \in \mcal{P}(\mcal{X}_2)$ 
	        and the mutual information quantities are taken with 
	        respect to the joint input-output distribution
	        \be
	        		p_{X_1X_2Y}(x_1,x_2,y)  \equiv p_{X_1}(x_1)p_{X_2}(x_2)p_{Y|X_1X_2}(y|x_1,x_2).
		\ee
		Note that the input distribution is
		chosen to be a product distribution $p_{X_1}p_{X_2}$, 
		which reflects  the assumption that the two senders 
		are spatially separated and act independently.
		We can calculate the exact capacity region of any multiple access
		channel by evaluating the mutual information expressions for all 
		possible input distributions and taking the union.
		
		\bigskip

	\setcounter{example} {\value{exContinuedTMP}}

	\begin{example}[continued]
	
		The capacity region for the multiple access channel $\mcal{N}_1$ 
		described in Example~\ref{ex:lasersCMAC} is given by:
		\begin{align}
			C_{MAC}(\mcal{N}_1)  = 
					\left\{ 
						(R_1,R_2) \in \mathbb{R}_+^2 
						\left| 
						\begin{array}{rcl}					
					            R_1             &\leq&    1 \\
					            R_2             &\leq&    1 \\
					            R_1+R_2   &\leq&    1.5
					           \end{array}
					          \right.
					 \right\}.			
%			\{ (R_1,R_2) \in \mathbb{R}_+^2 | \ \  
%			R_1 			\leq 1, \quad
%			R_2 			\leq 1, \quad 
%			R_1 + R_2 	\leq 1.5 \ 
%			\}.
		\end{align}
		To see how the rate pair $(1,0.5)$ can be achieved
		consider an encoding strategy where each sender
		generates codebooks according to the uniform
		probability distribution and the receiver
		decodes the messages from Sender~2 first,
		followed by the messages from Sender~1.
		The effective channel from Sender~2 to the receiver
		when the input of Sender~1 is unknown corresponds to a 	
		symmetric binary erasure channel with erasure probability $\frac{1}{2}$.
		This is because when the receiver's output is ``$\shPulse$''
		or ``$\longPulse$'' there is no ambiguity about what was sent.
		The output  ``$\medPulse$'' could arise in two
		different ways, so we treat it as an erasure.
		The capacity of this channel is 0.5 bits per channel use
		\cite[Example 14.3.3]{CT91}.
		Assuming the receiver correctly decodes the codewords
		from Sender~2, the resulting channel from Sender~1 to
		the receiver is a binary noiseless channel which has
		capacity one.
		To achieve the rate pair $(0.5,1)$ we must generate 
		codebooks at the appropriate rates and use the opposite 
		decoding order.
		The capacity region is illustrated in the following figure.
	\end{example}

		\begin{figure}[h]
			\begin{center}
				\includegraphics[width=7.5cm]{./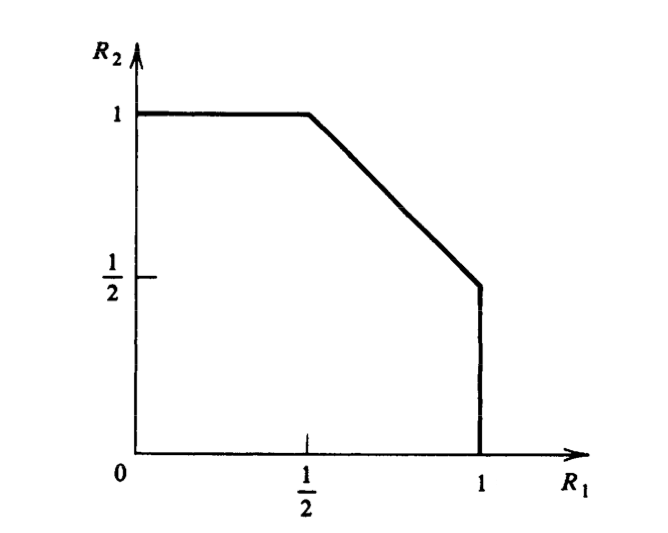} 
				\caption{The capacity region of the %laser-pulse 
						adder channel. %multiple access channel.
						}
			\end{center}
			\label{fig:MAC-cap-region}
		\end{figure}
 
		The above example 	illustrates the key aspect of the 
		multiple access channel problem: the trade off between
		the communication rates of the senders.
		
		\bigskip

	\subsection{Quantum multiple access channels}	
			\label{sec:QMAC-model}

		The communication model used to  evaluate 
		the capacity in  Example~\ref{ex:lasersCMAC} 	is classical.
		We modelled the detection of 
		light intensity in a classical way 
		%the photons as probabilistic ``clicks'' in the detector 
		and ignored details of the quantum
		measurement process.
		%	all aspects of the 
		%	reduced 
		%	all other degrees of freedom such as the polarization
		%	or the orbital angular momentum (OAM) of the photons.
		%

		The capacity result of Ahlswede and Liao is therefore 
		a result which depends on the classical model which we used.
		Better communication rates might be possible if we choose
		to model the quantum degrees of freedom in the communication channel.
		In Example~\ref{ex:PtoPcVScq}, we saw how the 
		\emph{quantum} analysis of the detection aspects of the 
		communication protocol can lead to improved communication rates
		for point-to-point channels.
		In this chapter, we pursue the study of quantum decoding
		strategies in the \emph{multiple access} setting.
		
		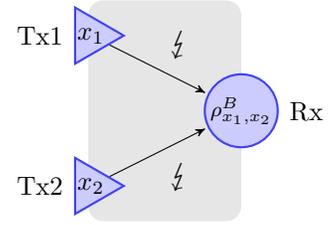
\begin{wrapfigure}{r}{0pt}
		\ \ 
		\begin{tikzpicture}[node distance=2.0cm,>=stealth',bend angle=45,auto]

		  \begin{scope}
			% QMAC channel
			\node [cnode] (Tx1) [ label=left:Tx1     ]                            {\footnotesize $x_1$} ; %{\scriptsize $x_1$};
			\node [cnode] (Tx2) [ label=left:Tx2, below of=Tx1]                {\footnotesize $x_2$}; %            {\scriptsize  $x_2$};
			\node [qnode] (Rx) [ label=right:Rx, right of=Tx1,yshift=-10mm] {\scriptsize $\rho^B_{x_1,x_2}$} % {\scriptsize y}
				edge  [pre]             node[swap]  {$\lightning$}    (Tx1)
				edge  [pre]             node		   {$\lightning$}         (Tx2) ;
		  \end{scope}
		  \begin{pgfonlayer}{background}
		    \filldraw [line width=4mm,join=round,black!10]
		      ([xshift=-3mm]Tx1.north -| Tx1.east) rectangle ([xshift=+3mm]Tx2.south -| Rx.west);
		  \end{pgfonlayer}
		\end{tikzpicture}
			\caption{A quantum multiple access channel with two senders.
				      The output of the channel are conditional quantum
				      states $\mcal{N}^{B}(x_1,x_2)\equiv \rho^B_{x_1,x_2}$.
				}
			\end{wrapfigure}
		
		A classical-quantum multiple access channel 
		is defined as the most general map with two classical inputs 
		and one quantum output:
		\[
			\left(
				\mcal{X}_1 \!\times\! \mcal{X}_2, 
				\mcal{N}^{X_1X_2\to B}\!(x_1,x_2)\equiv  \rho^B_{x_1,x_2}, 
				\mcal{H}^B
			\right)\!\!.
		\]
		
		Our intent is to quantify the communication rates that are possible
		for classical communication from each of the two senders to the receiver.
		The main difference with the classical case is that the decoding
		operation we will use is a quantum measurement (POVM).
		We have to find the \emph{rate region}
		for pairs $(R_1,R_2)$ such that the following interconversion 
		can be achieved:
		\be
			n \cdot \mcal{N}^{X_1X_2 \to B}  
			\ \ \overset{ ( 1 -\epsilon)}{\longrightarrow} \ \ 
			nR_1 \cdot  [c^1 \to c]
			\ + \
			nR_2 \cdot  [c^2 \to c].
		\ee
		The above expression states that $n$ instances of the channel
		can be used to carry $nR_1$ {\bf c}lassical bits from Sender~1 to the receiver
		(denoted $[c^1 \to c]$) and $nR_2$ bits from Sender~2 to the receiver
		(denoted $[c^2 \to c]$).
		The communication protocol succeeds with probability $(1-\epsilon)$ for
		any $\epsilon >0$ and sufficiently large $n$.

		The problem of classical communication over a
		classical-quantum multiple-access channel 
		was solved by Winter \cite{winter2001capacity}.
		He provided single-letter formulas for the capacity region,
		which can be computed as an optimization over the choice 
		of input distributions for the senders.
		%		of classical-quantum multiple access channels.
		%
		We will discuss Winter's result and proof techniques 
		in Section~\ref{sec:mac-succ-decoding}.
		
		Note that there exist other quantum multiple access communication
		scenarios that can be considered.
		The bosonic multiple access channel
		%		which is of particular relevance to optical communications,	which 
		was studied in \cite{yen2005multiple}.
		%The earliest work on the subject is \cite{allahverdyan1999accessible}
		%		and more recently 
		%
		The transmission of quantum information over a quantum
		multiple access channel was considered in 
		\cite{yard2005capacity,thesis2005yard,ieee2008yard}.		
%		[FIXME]
%		The next set of contributions on the quantum MAC problem are by Yard et al [YHD08] and the conference version [YDH05]. In these papers, the authors give regularized capacity regions for the classical-quantum capacity region CQ(M), where one of the inputs to the MAC is classical and the other is quantum.
%The authors also give a multi-letter formula for Q(M), the quantum capacity region for the MAC.
%
		The quantum multiple access problem has also been considered
		in the entanglement-assisted setting \cite{itit2008hsieh,xu2011sequential}.
%		first for certain special cases \cite{klimovitch2001classical,klimovitch2005classical},
%		and later solved for general quantum multiple access channels in \cite{itit2008hsieh}.
		In this chapter, as in the rest of the thesis, we restrict 
		our attention to the problem of classical communication
		over classical-quantum channels.
		
%		\comment{I also have some more comments for each
%				paper too -- but I prefer if the lit review is not too
%				detailed so that I won't make mistakes}

	\subsection{Information processing task}

		To show that a certain rate pair $(R_1,R_2)$ is achievable we must
		construct an end-to-end coding scheme that the two senders and
		the receiver can employ to communicate with each other.
		In this section we specify precisely the different 
		steps involved in the transmission process.
		
		Sender~1 will send a message $m_1$ chosen from the message set
		$\mathcal{M}_1\equiv \left\{  1,2,\ldots,|\mathcal{M}_1|\right\} $ where 
		$|\mathcal{M}_1|=2^{nR_{1}}$.
		Sender~2 similarly chooses a message $m_2$ from a message set 
		$\mathcal{M}_2 \equiv \left\{  1,2,\ldots,|\mathcal{M}_2|\right\}  $ 
		where $|\mathcal{M}_2|=2^{nR_{2}}$. 
		Senders~1 and 2 encode their messages as codewords 
		$x_1^{n}\!\left(  m_1\right)\in \mathcal{X}_1^n$
		and $x_2^{n}\!\left(  m_2\right) \in \mathcal{X}_2^n$,
		which are then input to the channel.

		The output of the channel is an $n$-fold tensor product state of the form:
		\be
			\mathcal{N}^{\otimes n}\!\left( x_1^{n}(m_1), x_2^{n}(m_2) \right)
			\equiv			
			\rho_{x_2^{n}\left(  m_1\right),  x_2^{n}\left(  m_2\right)  }^{B^{n}} \ \  \in  \mcal{D}(\mathcal{H}^{B^n}).
		\ee

		In order to recover the messages $m_1$ and $m_2$, the
		receiver performs a positive operator valued measure (POVM) 
		$\left\{ \Lambda_{m_1,m_2}\right\}_{m_1\in \mathcal{M}_1,m_2\in \mathcal{M}_2}$ 
		on the output of the channel $B^n$.
		We denote the measurement outputs as $M^{\prime}_1$ and $M^{\prime}_2$. 		
		An error  occurs whenever 
		the receiver measurement outcomes differ from the messages that
		were sent.
		The overall probability of error for message pair $(m_1,m_2)$ is
		\begin{align*}
			p_{e}\!\left(  m_1,m_2\right)   
				&  \ \ \equiv \ \ 
				\Pr\left\{  
					(M^{\prime}_1,M^{\prime}_2)\neq(m_1,m_2) %\ |\ M_1=m_1,\ M_2=m_2
				\right\} \\
				&  \ \ = \ \ 
				\text{Tr}\!
				\left[  
					\left(  I-\Lambda_{m_1,m_2} \right)
					\rho_{x_2^{n}\left(  m_1\right)  x_2^{n}\left(  m_2\right)  }^{B^n} 
				\right],
		\end{align*}
		where the measurement operator $\left(  I-\Lambda_{m_1,m_2}\right)$ 
		represents the complement of the correct decoding outcome.

		    \begin{definition}%[quantum multiple access channel code]
			An $(n,R_1,R_2,\epsilon)$ code for the multiple access channel 
			consists of two codebooks 
			$\{x^n_1(m_1)\}_{m_1\in \mathcal{M}_1}$
			and 
			$\{x^n_2(m_2)\}_{m_2\in \mathcal{M}_2}$,
			and a decoding POVM
			$\left\{ \Lambda_{m_1,m_2}\right\}$,$m_1\in \mathcal{M}_1$,$m_2\in \mathcal{M}_2$,
			such that the average probability of error 
			$\overline{p}_{e}$ is bounded from above by~$\epsilon$:%
			\begin{align}
				\overline{p}_{e}  
				&  \  \equiv \ 
					\frac{1}{|\mathcal{M}_1||\mathcal{M}_2|}
						\sum_{m_1,m_2}p_{e}\!\left(  m_1,m_2\right) 
				 \leq \epsilon.
			\end{align}

		    \end{definition}  

		A rate pair $\left(  R_{1},R_{2}\right)  $ is \textit{achievable} if there exists
		an $\left(  n,R_{1}-\delta,R_{2}-\delta,\epsilon\right) $ quantum multiple access 
		channel code for all $\epsilon,\delta>0$ and sufficiently large $n$. 
		The \textit{capacity region} %of the quantum multiple channelproblem, 
		$\mcal{C}_{\text{MAC}}(\mcal{N})$ is the closure of the set of all achievable rates.

	\subsection{Chapter overview}

			%		\subsubsection{Decoding strategies}
		
			Suppose we have a two-sender classical-quantum multiple 
			access channel and the two messages $m_1$ and $m_2$ were sent.
			This chapter studies the different decoding strategies that can 
			be used by the receiver in order to decode the messages.
			
			The technique used by Winter to prove the achievability of
			the rates in the capacity region of the quantum multiple
			access channel % (Theorem~\ref{thm:cqmac-capacity} in the next section)
			is called \emph{successive decoding}.
			In this approach, the receiver can achieve one of the
			corner points of the rate region by
			decoding the messages in the order ``$m_1 \to m_2|m_1$''.
			In doing so, the best possible rate $R_2$ is achieved, 
			because the receiver will have the side information of $m_1$, 
			and by extensions $x^n_1(m_1)$, when decoding the 
			message $m_2$.
			This approach is also referred to as 	\emph{successive cancellation}
			for channels with continuous variable inputs and additive
			white Gaussian noise (Gaussian channels)
			where the first decoded signal can be \emph{subtracted} from the 
			received signal. 
			% to cancel  effects.
			% where the signal streams that have already been decoded
			% are literally subtracted from the received signal in order to reduce
			% their interference
			%
			The other corner point can be achieved by decoding
			in the opposite order ``$m_2 \to m_1|m_2$''.
			These codes can be combined with \emph{time-sharing}
			and \emph{resource wasting} to achieve all other
			points in the rate region.
			We will discuss this strategy in further detail
			in Section~\ref{sec:mac-succ-decoding} below.
			
			Another approach is to use simultaneous decoding 
			which requires no time-sharing.
			We denote the simultaneous decoding of the messages
			$m_1$ and $m_2$ as ``$(m_1,m_2)$''.
			As far as the QMAC problem is concerned the two approaches
			yield equivalent achievable rate regions. However, if the QMAC code is to be used as 
			part of a larger protocol (like a code for the interference channel 
			for example) then the simultaneous decoding approach is
			much more powerful.

			The main contribution in this chapter is Theorem~\ref{thm:sim-dec-two-sender}
			in Section~\ref{sec:mac-simult-decoding},
			which shows that simultaneous decoding for the classical-quantum
			multiple access channel with two senders is possible.
			This result and the techniques developed for its proof
			will form the key building blocks for the subsequent chapters in this thesis.
			We will also comment on the difficulties in extending the simultaneous
			decoding approach to more than two senders (Conjecture~\ref{conj:sim-dec}).
			In Section~\ref{sec:mac-rate-splitting}, we will briefly discuss 
			a third coding strategy for the QMAC called \emph{rate-splitting}.

%!TEX root = thesis.tex

\section{Successive decoding}
	\label{sec:mac-succ-decoding}

	%	Winter showed an achievable rate region for the QMAC
	%	is of the same form as in the classical case
	%	and also proved a matching outer bound,
	%	thus establishing the capacity region \cite{winter2001capacity}.	
	%	found a description classical MAC 
	%	found a such  single-letter formula for the capacity of this channel.

	Winter found a single-letter formula for the capacity of the classical-quantum 
	multiple access channel with $M$ senders \cite{winter2001capacity}. 
	We state the result here for two senders.

	\begin{theorem}[Theorem 10 in  \cite{winter2001capacity}]
		\label{thm:cqmac-capacity}
		The capacity region for the classical-quantum multiple
		access channel $(\mathcal{X}_1 \times \mathcal{X}_2,  \rho_{x_1,x_2}^{B}, \mathcal{H}^{B} )$
		is given by 
		\begin{equation}
			\label{eq:QMACunion}
	        		 \mathcal{C}_{\textrm{MAC}} 
				=  
					\bigcup_{ p_{X_{1}}\!, p_{X_{2}}\! }
					\!\!\! \{ (R_1,R_2) \in \mathbb{R}_+^2 | 
					  \text{ \emph{Eqns.} \eqref{winterThmEqnsOne}-\eqref{winterThmEqnsThree} } \} 
		\end{equation}
		\vspace{-5mm}
	        \bea 
	            R_1             &\leq&    I(X_1;B|X_2)_\theta,  \label{winterThmEqnsOne} \\
	            R_2             &\leq&    I(X_2;B|X_1)_\theta,  \label{winterThmEqnsTwo} \\
	            R_1+R_2   &\leq&    I(X_1X_2;B)_\theta, \label{winterThmEqnsThree}
	        \eea
	        where the information quantities are taken with respect to the 
	        classical-quantum state:
	        %average joint input-output state 
		\be	\label{eqn:state-in-WinterThm}
			\theta^{X_{1}X_{2}B}
			 \ \equiv \ 
		       	\sum_{x_1,x_2} 
				p_{X_{1}}\!\left(x_{1}\right)p_{X_{2}}\!\left( x_{2}\right) \ 
				\ketbra{x_1}{x_1}^{X_1} 
				\otimes 
				\ketbra{x_2}{x_2}^{X_2} 
				\otimes 
				\rho^{B}_{x_1,x_2}.
		\ee	        	        
	\end{theorem}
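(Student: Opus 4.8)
The plan is to prove Theorem~\ref{thm:cqmac-capacity} in the usual two steps: a direct coding part showing every rate pair in the region is achievable, and a converse showing no rate pair outside it is. Both steps reduce the two-sender problem to tools already in hand for the point-to-point case, the key one being the HSW Theorem~\ref{thm:HSWtheorem}.

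\textbf{Achievability by successive decoding.} Fix input distributions $p_{X_1}$ and $p_{X_2}$; the associated pentagon has two dominant corners, $(I(X_1;B)_\theta,\, I(X_2;B|X_1)_\theta)$ and $(I(X_1;B|X_2)_\theta,\, I(X_2;B)_\theta)$. I would establish the second corner (the first being symmetric). Generate the codebook $\{x_1^n(m_1)\}$ i.i.d.\ from $p_{X_1}^{\otimes n}$ and the codebook $\{x_2^n(m_2)\}$ i.i.d.\ from $p_{X_2}^{\otimes n}$, the two codebooks chosen independently. The receiver decodes in the order ``$m_2 \to m_1|m_2$''. First it regards Sender~1's input as channel noise and runs the HSW square-root decoder of Theorem~\ref{thm:HSWtheorem} for the induced c-q channel $x_2 \mapsto \bar\rho_{x_2} \equiv \mathbb{E}_{X_1}\,\rho_{X_1,x_2}$, whose HSW rate at the chosen input distribution is exactly $I(X_2;B)_\theta$; the HSW error analysis then gives vanishing error provided $R_2 \le I(X_2;B)_\theta - \delta$. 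Conditioned on the recovered $\hat m_2$, it then runs the HSW decoder for the conditional (product) c-q channel $x_1 \mapsto \rho_{x_1,\,x_2^n(\hat m_2)}$, built from the conditionally typical projectors $\Pi_{\rho_{x_1^n(m_1),x_2^n(m_2)}}$ and the quantum packing lemma (Lemma~\ref{lem:q-packing}); since $x_2^n(m_2)$ is typical with high probability, the conditional HSW rate concentrates on $I(X_1;B|X_2)_\theta$, so this succeeds provided $R_1 \le I(X_1;B|X_2)_\theta - \delta$. The two physical measurements compose into a single valid POVM $\{\Lambda_{m_1,m_2}\}$. The dominant face $R_1+R_2 = I(X_1X_2;B)_\theta$ between the two corners, and the interior points, are obtained by time-sharing and resource wasting, and a final derandomization step produces a deterministic code; taking the union (and its closure under time-sharing) over $p_{X_1},p_{X_2}$ yields the stated inner bound.

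\textbf{Converse.} Suppose $(R_1,R_2)$ is achievable. For each large $n$ fix an $(n, R_1-\delta, R_2-\delta, \epsilon)$ code, let $M_1,M_2$ be independent and uniform, $X_j^n = x_j^n(M_j)$, $B^n$ the channel output, and $(\hat M_1, \hat M_2)$ the decoder output. Fano's inequality gives $H(M_1M_2|B^n) \le n\varepsilon_n$ with $\varepsilon_n \to 0$. Using the chain rule, $H(M_1|B^nM_2) \le H(M_1M_2|B^n)$, the Markov relation $M_1 \to (X_1^n,X_2^n) \to B^n$ together with $X_2^n$ being a function of $M_2$, the quantum data-processing inequality, and subadditivity for the memoryless channel, one obtains
\begin{align*}
  n(R_1-\delta) &= I(M_1;B^n|M_2) + H(M_1|B^nM_2) \\
  &\le I(X_1^n;B^n|X_2^n) + n\varepsilon_n \ \le\ \sum_{i=1}^n I(X_{1,i};B_i|X_{2,i}) + n\varepsilon_n,
\end{align*}
and analogously $n(R_2-\delta) \le \sum_i I(X_{2,i};B_i|X_{1,i}) + n\varepsilon_n$ and $n(R_1+R_2-2\delta) \le \sum_i I(X_{1,i}X_{2,i};B_i) + n\varepsilon_n$. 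Because the two codebooks are independent, for each index $i$ the marginal of $(X_{1,i},X_{2,i})$ is a product distribution $p_{X_{1,i}}p_{X_{2,i}}$, so the triple of $i$-th summands is a valid set of bounds for the state $\theta$ built from that product distribution. Dividing by $n$ exhibits $(R_1-\delta,R_2-\delta)$ as a convex combination of points of the region in Theorem~\ref{thm:cqmac-capacity}; letting $\delta\to 0$ and invoking closedness finishes the converse.

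\textbf{Expected main obstacle.} The delicate point is the coupling between the two decoding stages: the first square-root measurement disturbs $B^n$, yet the conditional HSW decoder must still see a state close to $\rho_{x_1^n(m_1),x_2^n(m_2)}^{B^n}$. This is handled by the gentle operator lemma (Lemma~\ref{lem:gentle-operator}): since the first decoder detects the average state with probability $\ge 1-\epsilon$, the post-measurement state is within $O(\sqrt{\epsilon})$ in trace distance of the intended one, and Lemma~\ref{lem:trace-inequality} converts this into an additive $O(\sqrt{\epsilon})$ penalty on the second stage's error. Propagating that penalty through the union bound over the error events is the one genuinely new piece of bookkeeping beyond the point-to-point analysis; everything else is either Theorem~\ref{thm:HSWtheorem} applied (conditionally or unconditionally) or standard entropic manipulation.
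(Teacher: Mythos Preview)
Your proposal is correct and follows essentially the same route as the paper's proof sketch: successive decoding to reach a corner point, HSW/square-root measurements built from conditionally typical projectors for each stage, the gentle operator lemma (Lemma~\ref{lem:gentle-operator}) to control the disturbance between stages, and time-sharing plus resource wasting to fill in the pentagon. The paper only sketches achievability and defers the converse to Winter's original paper, whereas you also outline the standard Fano-plus-single-letterization converse; one small point to be careful with there is that your argument lands $(R_1,R_2)$ in the \emph{convex hull} of the union over product distributions, so you should either note that the region as stated is already convex (equivalently, absorb the averaging index into a time-sharing variable $Q$ as in Corollary~\ref{cor:two-sender-QSD}) or appeal to the convex closure when concluding.
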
	

	%	He also used his c-q result  to obtain a regularized formula for
	%	the capacity of a quantum multiple access channel (QMAC) to transmit
	%	classical messages.

	%	Note that the optimization is taken over all product distributions
	%	$p_{X_1}p_{X_2}$ which reflects 
	%	the assumption that the two Transmitters act independently.

		\begin{figure}[hbt]
		\begin{center}
			\includegraphics[width=0.6\textwidth]{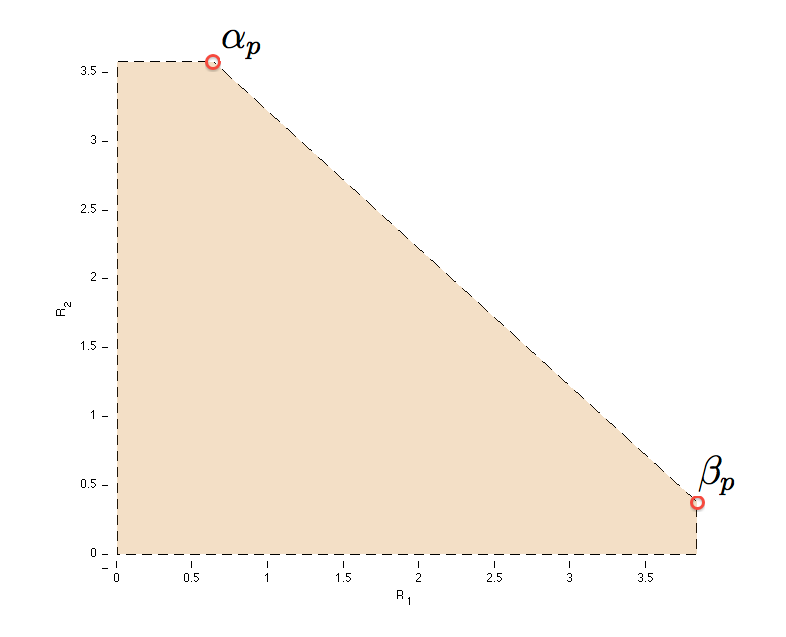}

			\caption{ \small
				The rates achievable by successive decoding correspond to the dominant vertices
				of the rate region $\alpha_p$ and $\beta_p$. Rates in between these points %on the dominant facet 
				can be achieved by \emph{time-sharing} between the strategies for the two corners.
				}
			\label{fig:MAC-seq}
		\end{center}
		\end{figure}

		For a given choice of input probability distribution
		%$p \equiv p_{X_{1}}\!\left(x_{1}\right)p_{X_{2}}\!\left( x_{2}\right)$,
		$p \equiv p_{X_{1}},p_{X_{2}}$,
		the achievable rate region, $\mcal{R}(\mcal{N},p)$, 
		% achievable using a random coding technique 
		has the form of a pentagon bounded by the three inequalities 
		in equations \eqref{winterThmEqnsOne}-\eqref{winterThmEqnsThree} and two rate positivity conditions.
		The two dominant vertices of this rate region have coordinates 
		$\alpha_{p}%_{X_{1}}p_{X_{2}}} 
		\equiv (I(X_1;B)_\theta, I(X_2;B|X_1)_\theta)$ and 
		$\beta_{p}%_{X_{1}}p_{X_{2}}}  
		\equiv (I(X_1;B|X_2)_\theta, I(X_2;B)_\theta)$
		and correspond to two alternate successive decoding strategies.
		The portion of the line $R_1 + R_2 = I(X_1X_2;B)_\theta$ which lies
		in between the points $\alpha_p$ and $\beta_p$ will be referred to 
		as the \emph{dominant facet}.
		%		Figure~\ref{fig:MAC-seq}.

%		\comment{Observe that region is of the same form as classical MAC cap. but with Holevo quantiies..}

		In order to show achievability of the entire rate region, 
		Winter proved that each of the corner points of the region is achievable.
		By the use of \emph{time-sharing} we can achieve any point
		on the dominant facet of the region, 
		and we can use \emph{resource wasting} to achieve all the points on the interior of the region.
		It follows that the entire rate region is achievable.
		We show some of the details of Winter's proof below.

	\begin{proof}[Proof sketch.]
		We will use a random coding approach for the codebook
		construction and point-to-point decoding measurements
		based on the conditionally typical projectors. 
		%		constructed according to the square root measurements.

		Fix the input distribution $p%_{X_1X_2}(x_1,x_2)
		=p_{X_1}(x_1)p_{X_2}(x_2)$ 
		and choose the rates so that they correspond to the
		rate point $\alpha_{p}$:
		\be
			R_1 = I(X_1;B)_\theta - \delta, \qquad R_2 = I(X_2;B|X_1)_\theta - \delta.
		\ee
		
		\prfsec{Codebook construction}
		Randomly and independently generate $2^{nR_{1}}$
		sequences $x_{1}^{n}\!\left(  m_1 \right)$, $m_1 \in\left[1: 2^{nR_{1}}\right]$,
		according to  $\prod\limits_{i=1}^{n}p_{X_{1}}\!\left(  x_{1i}\right)$.
		Similarly generate randomly and independently the codebook 
		$\{ x_{2}^{n}\!\left(  m_2 \right) \}$, $m_2 \in\left[1: 2^{nR_{2}}\right]$
		according to $\prod\limits_{i=1}^{n}p_{X_{2}}\!\left(  x_{2i}\right)$.

		\prfsec{Decoding}
		When the message pair $(m_1,m_2)$ is sent,
		the output of the channel will be $\rho_{x_1^n(m_1)  ,x_2^{n}\left(  m_2\right)  }$.
		Let $\Pi_{\rho_{x_1^n(m_1)  ,x_2^{n}\left(  m_2\right)},\delta}^{n}$ be 
		the conditionally typical projector for that state.
		In order to define the other typical projectors necessary
		for the decoding, we define  the following expectations of the output state:
		\begin{align*}
			\bar{\rho}_{x^n_1(m_1)} &  \equiv
				\sum_{x^n_2}p_{X^n_2}\!\left(  x^n_2\right)  
					\rho_{x^n_1(m_1),x^n_2}
				=  \bigotimes_{i=1}^n \!
					\left(
				\sum_{\mu}p_{X_2}\!\left(  \mu \right)  
					\rho_{x_{1i}(m_1),\mu}
					\right) \\[-2mm]
				&= \ExpXtwo \left\{ \rho_{x^n_1(m_1),X^n_2} \right\},	
												%\label{eq:WinterRho_x} 
												\\[2mm]
			\bar{\rho}^{\otimes n} \! &  \equiv
					\sum_{x_1^n,x^n_2}
						p_{X_1^n}\!\left(  x^n_1\right)  
						p_{X_2^n}\!\left(  x^n_2\right)  
						\rho_{x^n_1,x^n_2}
				=  \bigotimes_{i=1}^n \!
					\left(
				\sum_{\tau,\mu}
					p_{X_1}\!\left(  \tau \right)  
					p_{X_2}\!\left(  \mu \right)  
					\rho_{\tau, \mu}
					\right) \\[-2mm]
				&=  \ExpXonetwo \!\! \left\{ \rho_{X^n_1,X^n_2} \right\} .									
													%\label{eq:WinterRho} 
			% NOT NEEDED
			%\bar{\rho}_{x^n_2(m_2)|x^n_1} &  \equiv
			%		\rho_{x^n_1,x^n_2(m_2)}
			%	=  \bigotimes_{i=1}^n
			%		\left(
			%		\rho_{x_{1i}, x_{2i}(m_2)}
			%		\right),					
			%									\label{eq:WinterRho_y} \\
			%\bar{\rho}_{|x^n_1} &  \equiv
			%	\sum_{x^n_2}p_{X^n_2|X_1^n}\!\left(  x^n_2|x_1^n\right)  
			%		\rho_{x^n_1,x^n_2(m_2)}
			%	=  \bigotimes_{i=1}^n
			%		\left(
			%		\sum_{\tau}p_{X_2|X_1}\!\left(  \tau, x_{1i} \right)  
			%		\rho_{x_{1i}, x_{2i}(m_2)}
			%		\right),					
			%									\label{eq:WinterRho_ygx} 
		\end{align*}		
		The state $\bar{\rho}_{x^n_1(m_1)}$ corresponds to the receiver's output
		if he treats the codewords of Sender~2 as noise to be averaged over.
		The state $\bar{\rho}^{\otimes n}$ corresponds to the average output 
		state for a random code constructed according to $p_{X_1}p_{X_2}$.
		Let $\Pi_{\bar{\rho}_{x_1^n(m_1)  },\delta}^{n}
		\equiv \Pi_{\bar{\rho}_{x_1^n(m_1)  },\delta}^{B^n}$ be the
		conditionally typical projector for $\bar{\rho}_{x_1^n(m_1)  }$
		and let 
		%	$\Pi_{\bar{\rho}_{x_2^{n}\left(  m_2\right) | x_1^n  }}^{n}
		%	\equiv \Pi_{\bar{\rho}_{x_2^{n}\left(  m_2\right) | x_1^n  },\delta}^{B^n}$ 
		%	be the conditionally typical projector for the tensor product state 
		%	$\bar{\rho}_{x^n_2(m_2)|x^n_1}$.
		%	Finally let 
		$\Pi_{\bar{\rho}}^{n} \equiv \Pi_{\bar{\rho}^{\otimes n},\delta}^{B^n}$ 
		be the typical projector for the
		state $\bar{\rho}^{\otimes n}$.

			To achieve the rates of  $\alpha_{p}$,  %_{X_{1}}p_{X_{2}}}$
			the receiver will decode the messages in the order 
			``$m_1 \to m_2|m_1$'' using a successive decoding procedure.
			The first step is to use a quantum instrument  
			$\big\{  \Upsilon^\alpha_{m_1}\big\}$ which acts
			as follows on any state defined on $B^n$:
			\be
				\Upsilon^\alpha \ : \ 
				\psi^{B^n} 
				\longrightarrow
				\sum_{m_1} \ 
				\ketbra{m_1}{m_1}^{M_1}
				\otimes
				\left(
				\frac{ 	\sqrt{\Lambda^\alpha_{m_1}} \psi^{B^n} \sqrt{\Lambda^\alpha_{m_1}}  }
					{  \Tr\!\left[ \Lambda^\alpha_{m_1} \psi^{B^n} \right] }
				\right)^{B^{\prime n} }.
			\ee
			The POVM operators 
			$\big\{  \Lambda^\alpha_{m_1}\big\}$ are 
			constructed using the typical projector sandwich
			\be
				%		 \Lambda^\alpha_{m_1} 
				%		 \ \ 
				%		 \propto
				%		 \ \ 
				 \Pi_{\bar{\rho}}^{n} \ 
				 \Pi_{\bar{\rho}_{x_1^n(m_1)  },\delta}^{n}	 \ 
				 \Pi_{\bar{\rho}}^{n},
			\ee
			and normalized using the square root  measurement approach
			in order to satisfy $\Lambda^\alpha_{m_1} \geq 0$, $\sum_{m_1} \Lambda^\alpha_{m_1} = I$.
			The purpose of the quantum instrument is to extract the message $m_1$
			and store it in the register $M_1$, but also leave behind a system
			in $B^{\prime n}$ which can be processed further.
			
			An error analysis similar to that of the HSW theorem shows that 
			the quantum instrument $\big\{  \Upsilon^\alpha_{m_1}\big\}$ will
			correctly decode the message $m_1$ with high probability.
			This is because we chose the rate for the $m_1$ codebook
			to be $R_1 = I(X_1;B)_\theta - \delta$.
			Furthermore, it can be shown using the \emph{gentle operator lemma for ensembles}
			(Lemma~\ref{lem:gentle-operator}), 
			that the state which remains in the system $B^{\prime n}$
			is negligibly disturbed in the process.
			
			The receiver will then perform a second measurement 
			to recover the message $m_2$.
			%
			%	Let us denote by $B'$ the system that the receiver obtains after  % which corresponds to ``what remains of B''
			%	performing the first measurement. 
			%, but before the second measurement.
			%
			The second measurement is a POVM %, which the receiver performs on the system $B'$,
			$\big\{  \Lambda^\alpha_{m_2|m_1}\big\}$
			constructed from the projectors
			\be
				%	 \Lambda^\alpha_{m_2|m_1} 
				%	 \ \ 
				%	 \propto
				%	 \ \ 
				 \Pi_{\bar{\rho}_{x_1^n(m_1)  },\delta}^{n}	 \ 
				 \Pi_{\rho_{x_1^n(m_1)  ,x_2^{n}\left(  m_2\right)}}^{n} \
				 \Pi_{\bar{\rho}_{x_1^n(m_1)  },\delta}^{n}, 
			\ee
			and appropriately normalized.
			Note that this measurement is chosen 
			conditionally on the codeword $X_1^n(m_1)$ that Sender~1 input to the channel.
			This is because, when the correct message $m_1$ is decoded in the first step,
			the receiver can infer the codeword which Sender~1 input to the channel.
			%
			%			The receiver now has the knowledge of $m_1$ as side information,
			%
			Thus, after the first step, the effective channel from Sender~2 to the receiver is
			\be
				(X_1^n, x_2^n) 	\to	(X_1^n, \rho_{X_1^n,x_2^n}^{B^n} ),
			\ee
			where $X_1^n$ is a random variable distributed according
			to $\prod_{i=1}^n p_{X_1}$.
			This is a setting in which the quantum packing lemma can be applied.
			By substituting $U^n=X_1^n$ and $X^n=X_2^n$ into
			Lemma~\ref{lem:q-packing}, we conclude that if we choose the rate
			to be $R_2 = I(X_2;B|X_1)_\theta - \delta$,
			then the message $m_2$ will be decoded correctly
			with high probability.

%			The expected success probability asymptotically approaches one:
%			\begin{align}
%			& \mathop{\mathbb{E}}_{X_1^{n},X_2^{n}}
%				\left\{ 
%					\frac{1}{|\mathcal{M}_1||\mathcal{M}_2|}
%					\sum_{m_1,m_2} 
%					\text{Tr}
%						\left\{
%							\Lambda^w_{m_1,m_2}
%							\rho_{X_1^{n}\left(  m_1\right)  ,X_2^{n}\left(	m_2\right) }
%						\right\}  
%				\right\} \ 				
%				\geq \ \ 1-\epsilon,
%			\end{align}
%			where we have informally denoted by $\Lambda^w_{m_1,m_2}$
%			the succession of $\Upsilon^\alpha_{m_1}$ followed by 
%			$\Lambda^\alpha_{m_2|m_1}$.

			The rate point $\beta_{p}$ corresponds to the alternate decode ordering
			where the receiver decodes the message $m_2$ first and $m_1$ second.
			All other rate pairs in the region can be obtained from the 
			corner points $\alpha_{p}$ and $\beta_{p}$ by
			using  \emph{time-sharing} and 
			\emph{resource wasting}.			
		\end{proof}

		%	Winter overcame this obstacle by realizing that a so-called 
		%	``gentle measurement'', a measurement with an outcome that succeeds with high
		%	probability, effectively causes no disturbance to the state in the asymptotic
		%	limit of many channel uses.

		Note that one of the key ingredients in the proof
		was the use of Lemma~\ref{lem:gentle-operator},
		which guarantees that the act of decoding $m_1$
		does not disturb the state too much.
		This step of our quantum decoding procedure may be counterintuitive at a first glance,
		since quantum mechanical measurements are usually 
		described as processes in which the quantum system is disturbed.
		Any retrieval of data from a quantum system inevitably disturbs the state
		of the system, so the second measurement, which the receiver performs 
		on the system $B^{\prime n}$,
		may fail if the first measurement has disturbed the state too much.
		The \emph{gentle measurement lemma} guarantees 
		that very little information disturbance to the state 
		occurs when there is one measurement outcome that is very likely.
		When the state of the receiver is $\rho_{x_1^n,x_2^n}^{B^n}$,
		we can be almost certain that the outcome of the quantum instrument %gentle measurement 
		$\{ \Upsilon^\alpha_{m_1} \}$ is going to be $m_1$.
		Therefore, this process 
		%extracts very little information and therefore 
		leaves the state in $B^{\prime n}$ only slightly disturbed.
		%		, so that $\Lambda^\alpha_{m_2|m_1}$
		%		can decode $m_2$.

		The proof technique in Theorem~\ref{thm:cqmac-capacity} generalizes
		to the case of the $M$-sender MAC, which has 
		$M!$ dominant vertices. 
		Each vertex corresponds to one permutation of the decode ordering.

%!TEX root = thesis.tex

\section{Simultaneous decoding}
								\label{sec:mac-simult-decoding}

		Another approach for achieving the capacity of the multiple access channel, 
		which does not use time-sharing, is simultaneous decoding.
		%
		%		\begin{definition}
		%
		In the classical version of this decoding strategy,
		the receiver will report $(m_1,m_2)$ if he finds a unique pair of codewords $X_1^n(m_1)$
		and $X_2^n(m_2)$ which are jointly typical with the output of the channel $Y^n$:
		\be
			\left( 
				X_1^n(m_1), \ X_2^n(m_2), Y^n
			\right)
			\ \in \  \mcal{J}^{(n)}_\epsilon(X_1,X_2,Y).
		\ee
		%If the receiver finds such a pair of codewords, he .
		% 
		Assuming the messages $m_1$ and $m_2$ are sent, 
		we categorize the different kinds of \emph{wrong message} decode errors
		that may occur.
		\begin{equation}%
		\begin{tabular}
		[c]{c|c|c}\hline\hline
		error & $\hat{M}_1$ & $\hat{M}_2$\\
		\hline\hline
		\eqref{eq:err-one}	 & $\ast$ & $m_2$\\
		\eqref{eq:err-two}	 & $m_1$ & $\ast$\\
		\eqref{eq:err-both}	 & $\ast$ & $\ast$\\ 
		\hline\hline
		\end{tabular}
						\label{eq:three-errors}%
		\end{equation}
		The $\ast$ in the above table denotes any message other than
		the one which was sent.		
		The analysis of the \emph{classical} simultaneous decoder
		uses the properties of the jointly typical sequences 
		and the randomness in the codebooks.
		Recall that a multi-variable sequence is jointly typical if and only
		if all the sequences in the subsets of the variables are jointly typical.
		Thus, the condition $\left( X_1^n(m_1), \ X_2^n(m_2), Y^n \right)
		\in \mcal{T}^{(n)}_\epsilon(X_1,X_2,Y)$ implies that:
		\begin{align}
					\left( 	X_1^n(m_1), Y^n \right) 
					&\in 
						\mcal{T}^{(n)}_\epsilon(X_1,Y), \\
					\left( 	X_2^n(m_2), Y^n \right) 
					&\in 
						\mcal{T}^{(n)}_\epsilon(X_2,Y), \\
					Y^n
					&\in 
						\mcal{T}^{(n)}_\epsilon(Y).
		\end{align}
		Starting from these conditions, it is straightforward
		%		apply application of the joint typicality lemma 
		to bound the probability of the different decoding error events 
		%that may occur 
		using the properties of the jointly typical sequences \cite{el2010lecture}.			
		
		In the quantum case, we can similarly identify three different error terms,
		the probabilities of which can be bounded by using the properties
		of the conditionally typical projectors.
		If we can construct a quantum measurement operator that ``contains'' 
		all the typical projectors so that we can obtain the appropriate averages
		of the output state in the error analysis, 
		then we would have a proof that simultaneous decoding is possible.

		If only things were so simple!
		The construction of a simultaneous decoding POVM turns out
		to be a difficult problem.
		Despite being built out of the same typical projectors,
		the operator constructed according to
		\be
			\Lambda_{m_1,m_2}  \ \ 
			\propto \ \ 
			\Pi_{\bar{\rho}_{x_2^{n}\left(  m_2\right)  }}^{n}
			\Pi_{\bar{\rho}_{x_1^{n}\left(  m_1\right)  }}^{n}
			\Pi_{\rho_{x_1^{n}\left(  m_1\right),x_2^{n}\left(  m_2\right)  }}^{n}
			\Pi_{\bar{\rho}_{x_1^{n}\left(  m_1\right)  }}^{n}
			\Pi_{\bar{\rho}_{x_2^{n}\left(  m_2\right)  }}^{n},
			\label{eqn:mtwo-outside}
		\ee
		is different from the operator
		\be
			\Lambda^\prime_{m_1,m_2}  \ \
			\propto \ \ 
			\Pi_{\bar{\rho}_{x_1^{n}\left(  m_1\right)  }}^{n}			
			\Pi_{\bar{\rho}_{x_2^{n}\left(  m_2\right)  }}^{n}
			\Pi_{\rho_{x_1^{n}\left(  m_1\right),x_2^{n}\left(  m_2\right)  }}^{n}
			\Pi_{\bar{\rho}_{x_2^{n}\left(  m_2\right)  }}^{n}
			\Pi_{\bar{\rho}_{x_1^{n}\left(  m_1\right)  }}^{n},
			\label{eqn:mone-outside}
		\ee
		because the different typical projectors do not commute in general.		
		In fact, there is very little we can say about the relationship
		between the subspaces spanned by the two averaged
		typical projectors: 
			$\Pi_{\bar{\rho}_{x_1^{n}\left(  m_1\right)  }}^{n}$ 
		and
			$\Pi_{\bar{\rho}_{x_2^{n}\left(  m_2\right)  }}^{n}$.
		This is a problem because, for one of the error terms in the analysis, we would like 
		to have $\Pi_{\bar{\rho}_{x_2^{n}\left(  m_2\right)  }}^{n}$
		on the ``outside'' as in \eqref{eqn:mtwo-outside} so that we can use Property~\ref{eqn:bound-on-size}  
        of typical
		projectors to obtain a factor $2^{nH(B|X_2)}$.
		For another error term, we want 
		$\Pi_{\bar{\rho}_{x_1^{n}\left(  m_1\right)  }}^{n}$ 
		to be on the outside as in \eqref{eqn:mone-outside} in order to be 
		able to do the averaging
		in the alternate order to obtain a term of the form $2^{nH(B|X_1)}$.
		Thus it would seem, and originally it seemed so to my colleagues and me,
		that the construction of a simultaneous decoding POVM for which 
		we can bound the probability of all error events might be a difficult task.
		
		Quantum simultaneous decoding actually \emph{is} possible,
		and this is what we will show in this section
		%		We will discuss the approach which we took in order to prove that 
		%		a quantum simultaneous decoder exists 
		for the case of the multiple access channel with two senders.
		Our proof techniques do not generalize readily to quantum multiple
		access channels with more than two independent senders.
		At the end of this section we will formulate Conjecture~\ref{conj:sim-dec} regarding the 
		existence of a simultaneous decoder for three-sender multiple access channels,
		which will be required for the proof of Theorem~\ref{thm:quantum-HK-region}
		in the next chapter.

	\begin{figure}
	\begin{center}
		\includegraphics[width=0.6\textwidth]{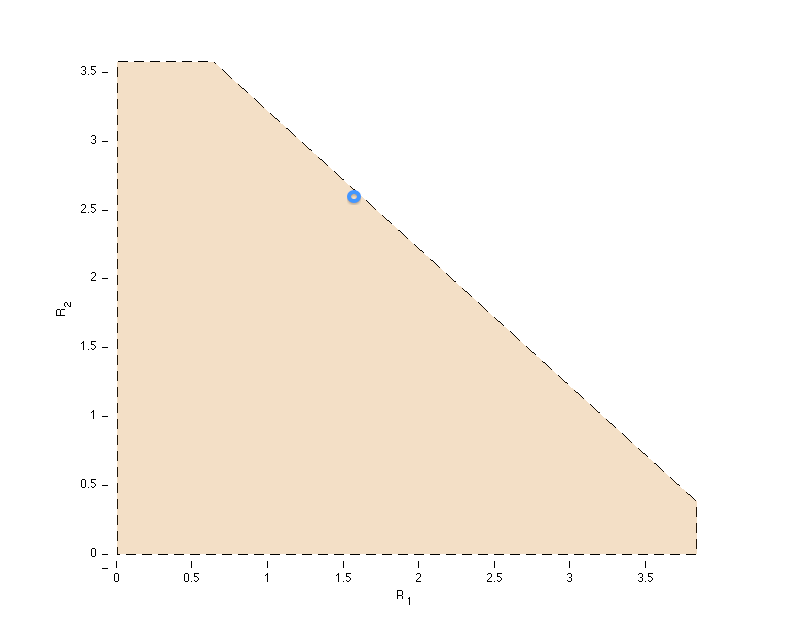}

		\caption{ \small
			Simultaneous decoding strategy.
			Simultaneous decoding of the two messages is more powerful than successive decoding,
			because it allows us to achieve any rate pair $(R_1,R_2)$ of the capacity region 
			without the need for time-sharing.
			}
		\label{fig:MAC-simult}
	\end{center}
	\end{figure}

		\begin{theorem}[Two-sender quantum simultaneous decoding]
		\label{thm:sim-dec-two-sender}
			Let  $(\mathcal{X}_1 \times \mathcal{X}_2,  \rho_{x_1,x_2}^{B}, \mathcal{H}^{B} )$
			be a quantum multiple access channel 
			with two senders and a single receiver,
			and let $p=p_{X_1}p_{X_2}$ be a choice for the input code distribution.
			Let 
			  $\{X_1^n(m_1)\}_{m_1 \in \{1, \dots, |\mcal{M}_1|\}}$ 
			and
			  $\{X_2^n(m_2)\}_{m_2 \in \{1, \dots, |\mcal{M}_2|\}}$
			be random codebooks generated according to the product distributions
			 $p_{X_1^n}$ and $p_{X_2^n}$.
			There exists a simultaneous decoding POVM 
			$\left\{  \Lambda_{m_1,m_2}\right\}_{m_1\in \mathcal{M}_1,m_2\in\mathcal{M}_2}$,
			% which decodes both messages $m_1$ and $m_2$ simultaneously 
			with  expected average probability of error bounded from above by $\epsilon$
			for all $\epsilon,\delta>0$ and sufficiently large $n$,
			provided the rates $R_1,R_2$ satisfy the inequalities
		        \bea 
		            R_1             &\leq&    I(X_1;B|X_2)_\theta,  \label{myThmEqnsOne} \\
		            R_2             &\leq&    I(X_2;B|X_1)_\theta,  \label{myThmEqnsTwo} \\
		            R_1+R_2   &\leq&    I(X_1X_2;B)_\theta, \label{myThmEqnsThree}
		        \eea
		        where the state $\theta^{X_1X_2B}$ is defined in \eqref{eqn:state-in-WinterThm}.
		\end{theorem}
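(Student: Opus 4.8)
The plan is to mirror the HSW achievability argument (Theorem~\ref{thm:HSWtheorem}) but with a two-index decoding POVM whose elements are a \emph{symmetric onion} of conditionally typical projectors, arranged so that the error analysis produces exactly the three exponents appearing in \eqref{myThmEqnsOne}--\eqref{myThmEqnsThree}. Since the achievable region for a fixed $p=p_{X_1}p_{X_2}$ is downward-closed, it suffices to produce, for each rate pair satisfying the inequalities, a POVM with vanishing expected error; an interior pair reduces to a boundary pair by discarding codewords, so I would generate random codebooks $\{X_1^n(m_1)\}_{m_1}$ and $\{X_2^n(m_2)\}_{m_2}$ independently according to the product distributions $p_{X_1^n}$ and $p_{X_2^n}$ exactly as in the proof sketch of Theorem~\ref{thm:cqmac-capacity}, with $|\mathcal{M}_i|=2^{nR_i}$. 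Introduce the four relevant typical projectors $\Pi\equiv\Pi_{\bar\rho^{\otimes n},\delta}$, $\Pi_{1,m_1}\equiv\Pi_{\bar\rho_{x_1^n(m_1)},\delta}$, $\Pi_{2,m_2}\equiv\Pi_{\bar\rho_{x_2^n(m_2)},\delta}$, and $\Pi_{12,m_1m_2}\equiv\Pi_{\rho_{x_1^n(m_1),x_2^n(m_2)},\delta}$, where $\bar\rho_{x_1^n}$, $\bar\rho_{x_2^n}$, $\bar\rho^{\otimes n}$ are the averaged output states from the successive-decoding sketch.

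Next, set
\[
\gamma_{m_1,m_2}\ \equiv\ \Pi\,\Pi_{2,m_2}\,\Pi_{1,m_1}\,\Pi_{12,m_1m_2}\,\Pi_{1,m_1}\,\Pi_{2,m_2}\,\Pi
\]
(or a close variant of this nesting — choosing the order is where the work is) and let $\{\Lambda_{m_1,m_2}\}$ be the pretty-good measurement \eqref{eq:square-root-POVM-generic} built from $\{\gamma_{m_1,m_2}\}$. Applying Hayashi--Nagaoka (Lemma~\ref{lem:HN-inequality}) with $S=\gamma_{m_1,m_2}$ and $T=\sum_{(m_1',m_2')\neq(m_1,m_2)}\gamma_{m_1',m_2'}$ bounds the expected error for a fixed transmitted pair (any fixed pair, by symmetry of the construction) by $2\,\mathbb{E}\,\Tr[(I-\gamma_{m_1,m_2})\rho_{x_1^n(m_1),x_2^n(m_2)}]$ plus $4\,\mathbb{E}\sum_{(m_1',m_2')\neq(m_1,m_2)}\Tr[\gamma_{m_1',m_2'}\rho_{x_1^n(m_1),x_2^n(m_2)}]$. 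The first (``missed detection'') term is handled as in the proof of Theorem~\ref{thm:HSWtheorem}: peel the projectors off one at a time, alternately invoking Lemma~\ref{lem:trace-inequality} and the gentle operator lemma (Lemma~\ref{lem:gentle-operator}), using that $\Tr[\Pi\,\bar\rho^{\otimes n}]$, $\mathbb{E}\,\Tr[\Pi_{1,m_1}\rho_{X_1^n(m_1),x_2^n(m_2)}]$, $\mathbb{E}\,\Tr[\Pi_{2,m_2}\rho_{x_1^n(m_1),X_2^n(m_2)}]$, $\mathbb{E}\,\Tr[\Pi_{12,m_1m_2}\rho_{X_1^n(m_1),X_2^n(m_2)}]$ are all at least $1-\epsilon$; this contributes $O(\sqrt\epsilon)$.

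For the ``false alarm'' term I would split according to the three error types in \eqref{eq:three-errors}. If both messages are wrong, then $X_1^n(m_1')$ and $X_2^n(m_2')$ are independent of the transmitted state, so $\mathbb{E}[\Pi_{12,m_1'm_2'}]\leq 2^{n[H(B|X_1X_2)_\theta+\delta]}\bar\rho^{\otimes n}$, and pushing the outer $\Pi$'s inward together with $\Pi\,\bar\rho^{\otimes n}\,\Pi\leq 2^{-n[H(B)_\theta-\delta]}\Pi$ (property \eqref{eqn:TypP-prop-two}) yields a bound $\sim 2^{n(R_1+R_2)}2^{-n[I(X_1X_2;B)_\theta-2\delta]}$, exponentially small by \eqref{myThmEqnsThree}. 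If $\hat M_1=\ast$ but $\hat M_2=m_2$, then $X_1^n(m_1')$ is fresh while $x_2^n(m_2)$ is frozen; using the rank/eigenvalue bounds for the conditionally typical projectors (\eqref{eqn:bound-on-size} and its eigenvalue analogue) one has $\Pi_{12,m_1'm_2}\leq 2^{n[H(B|X_1X_2)_\theta+\delta]}\rho_{x_1^n(m_1'),x_2^n(m_2)}$, and averaging the inner block over $X_1^n(m_1')$ should collapse it to (a $\Pi_{1}$-dressed version of) $2^{n[H(B|X_1X_2)_\theta+\delta]}\bar\rho_{x_2^n(m_2)}$, after which the outer $\Pi_{2,m_2}$ and $\Pi$ and the eigenvalue bound $\Pi_{2,m_2}\,\bar\rho_{x_2^n(m_2)}\,\Pi_{2,m_2}\leq 2^{-n[H(B|X_2)_\theta-\delta]}\Pi_{2,m_2}$ give $\sim 2^{nR_1}2^{-n[I(X_1;B|X_2)_\theta-c\delta]}$, small by \eqref{myThmEqnsOne}; the event $\hat M_2=\ast$, $\hat M_1=m_1$ is identical with the roles of the indices $1$ and $2$ exchanged, small by \eqref{myThmEqnsTwo}. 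A union bound over the one $O(\sqrt\epsilon)$ term and the three exponentially small terms, followed by derandomization, then finishes the proof.

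The step I expect to be the main obstacle is exactly the collapse ``$\mathbb{E}_{X_1^n}[\Pi_{1,m_1'}\,\rho_{X_1^n(m_1'),x_2^n(m_2)}\,\Pi_{1,m_1'}]\approx \bar\rho_{x_2^n(m_2)}$ up to conjugation by $\Pi_{2,m_2}$'' used in error type~\eqref{eq:three-errors}(row~\ref{eq:err-one}), together with its mirror image in the $2$-wrong case. This is not an algebraic identity: $\Pi_{1,m_1}$ and $\Pi_{2,m_2}$ are built from the eigenbases of $\bar\rho_{x_1^n}$ and $\bar\rho_{x_2^n}$, which generically do not commute, so one cannot move $\Pi_{2,m_2}$ past $\Pi_{1,m_1'}$ nor assert that $\Pi_{1,m_1'}$ preserves the $\bar\rho_{x_2^n}$-typical subspace. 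The technical heart of the argument is an operator inequality quantifying, on average over $X_1^n$, how little $\Pi_{1,X_1^n}$ can distort the relevant typical subspace — a refined gentle-measurement-type estimate adapted to the onion ordering — combined with choosing the nesting of $\gamma_{m_1,m_2}$ so that this single estimate can be deployed symmetrically in both the $1$-wrong and the $2$-wrong analyses. This is also the place where a third independent sender would destroy the two-fold symmetry of the onion, which is why I would expect only a conjectural extension beyond two senders (Conjecture~\ref{conj:sim-dec}).
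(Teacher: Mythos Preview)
Your general framework is right---random codebooks, the pretty-good measurement, Hayashi--Nagaoka, and the three-way split of the false-alarm term---and you have correctly located the difficulty: the projectors $\Pi_{1,m_1}$ and $\Pi_{2,m_2}$ do not commute, so a symmetric onion cannot be unpeeled in both orders. But your proposed resolution (a ``refined gentle-measurement-type estimate'' that works symmetrically for both single-error cases) is not how the paper proceeds, and in fact the paper explicitly discusses why the symmetric sandwich of \eqref{eqn:mtwo-outside}--\eqref{eqn:mone-outside} is problematic before presenting its actual workaround.

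The paper's solution is to \emph{break the symmetry deliberately}. The positive operator contains only three layers,
\[
P_{m_1,m_2}=\Pi_{\bar\rho}\,\Pi_{m_1}\,\Pi_{m_1,m_2}\,\Pi_{m_1}\,\Pi_{\bar\rho},
\]
with $\Pi_{m_2}$ \emph{omitted} from the sandwich entirely. Instead, $\Pi_{m_2}$ is moved onto the state: the error analysis is carried out not against $\rho_{m_1,m_2}$ but against the smoothed state $\tilde\rho_{m_1,m_2}\equiv\Pi_{m_2}\rho_{m_1,m_2}\Pi_{m_2}$, with a one-time $2\sqrt\epsilon$ penalty from Lemma~\ref{lem:trace-inequality} and Lemma~\ref{lem:gentle-operator}. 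The two single-error cases are then handled by \emph{different} mechanisms. For \eqref{eq:err-one} (wrong $m_1$), averaging the smoothed state over $X_1^n(m_1)$ produces $\Pi_{m_2}\bar\rho_{m_2}\Pi_{m_2}\le 2^{-n[H(B|X_2)-\delta]}\Pi_{m_2}$, and the remaining $\Pi_{m_1'}$, $\Pi_{\bar\rho}$, $\Pi_{m_2}$ are discarded via $\le I$ chains as in \eqref{positive-op-anihilation}. For \eqref{eq:err-two} (wrong $m_2$), there is no $\Pi_{m_2'}$ available, so the paper uses the \emph{projector trick} \eqref{eqn:proj-trick-in-prf}: $\Pi_{m_1,m_2'}\le 2^{n[H(B|X_1X_2)+\delta]}\rho_{m_1,m_2'}$, after which averaging over $X_2^n(m_2')$ gives $\bar\rho_{m_1}$, and the $\Pi_{m_1}$ already present in the sandwich squeezes it to $2^{-n[H(B|X_1)-\delta]}\Pi_{m_1}$. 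The sum-rate term \eqref{eq:err-both} is handled by averaging the state fully to $\bar\rho^{\otimes n}$ and squeezing with the outer $\Pi_{\bar\rho}$. So the three tricks you should take away are: (i) asymmetric three-layer sandwich, (ii) state smoothing by the missing projector, and (iii) the projector trick to manufacture the averaged state in the case where the smoothing projector is useless. Your intuition that this asymmetry is what obstructs a direct three-sender extension is exactly right.
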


		The main difference between the coding strategy 
		employed by Winter in the proof of  Theorem~\ref{thm:cqmac-capacity}
		and Theorem~\ref{thm:sim-dec-two-sender} above is that the latter
		does not require the use of time-sharing.
		Using the simultaneous decoding approach we can achieve any 
		of the rates in the QMAC capacity region 
		using a \emph{single} codebook,
		whereas time-sharing requires us to switch between 
		the two codebooks for the vertices.
		This distinction is minor in the context of the multiple access channel
		problem, but it will become important in situations where there 
		are multiple receivers as in the compound multiple access channel
		and the interference channel.
		Note that Sen gave an alternate proof of Theorem~\ref{thm:sim-dec-two-sender}
		using a different approach \cite{S11a}.

		\begin{proof}[Proof of Theorem~\ref{thm:sim-dec-two-sender}]
		The proof proceeds by random coding arguments using the
		properties of projectors onto the typical subspaces of the output states
		and the \emph{square root} measurement.

		Consider some choice $p=p_{X_1}(x_1)p_{X_2}(x_2)$
		for the input distributions.

		\prfsec{Codebook construction} 
		Randomly and independently generate $2^{nR_{1}}$
		sequences $x_{1}^{n}\!\left(  m_1 \right)$, $m_1 \in\left[1: 2^{nR_{1}}\right]$,
		according to  $\prod\limits_{i=1}^{n}p_{X_{1}}\!\left(  x_{1i}\right)$.
		Similarly, generate randomly and independently the codebook 
		$\{ x_{2}^{n}\!\left(  m_2 \right) \}$, $m_2 \in\left[1: 2^{nR_{2}}\right]$,
		according to $\prod\limits_{i=1}^{n}p_{X_{2}}\!\left(  x_{2i}\right)$.
		
		\medskip
		\prfsec{POVM construction}
		In order to lighten the notation, the channel output will be denoted
		with the shorthand  $\rho_{m_1,m_2}\equiv \rho_{x_1^n(m_1)  ,x_2^{n}\left(  m_2\right)  }$,
		when the inputs to the channel are $x_1^n(m_1) $ and 	$x_2^n(m_2)  $.
		Let $\PIMAConetwo \equiv \Pi_{\rho_{x_1^n(m_1)  ,x_2^{n}\left(  m_2\right)
		},\delta}^{n}$ be the conditionally typical projector for that state.
		Consider the following averaged output states:
		\begin{align}
			\bar{\rho}_{x_1} &  \equiv
				\sum_{x_2}p_{X_2}\!\left(  x_2\right)  \rho_{x_1,x_2},\label{eq:rho_x} \\
			\bar{\rho}_{x_2} &  \equiv
				\sum_{x_1}p_{X_1}\!\left(  x_1\right)  \rho_{x_1,x_2},\label{eq:rho_y} \\
			\bar{\rho} &  \equiv
				\sum_{x_1,x_2}p_{X_1}\!\left(  x_1\right)  p_{X_2}\!\left(  x_2\right) \rho_{x_1,x_2}.\label{eq:rho}
		\end{align}		
		Let $\PIMACone \equiv \Pi_{\bar{\rho}_{x_1^n(m_1)  },\delta}^{n}$ be the
		conditionally typical projector for the tensor product state 
		$\bar{\rho}_{m_1} \equiv \bar{\rho}_{x_1^n(m_1)  }$ defined by (\ref{eq:rho_x}) for $n$ uses of the
		channel.
		Let $\PIMACtwo \equiv \Pi_{\bar{\rho}_{x_2^{n}\left(  m_2\right)  },\delta}^{n}$ 
		be the conditionally typical projector for the tensor product state 
		$\bar{\rho}_{m_2} \equiv  \bar{\rho}_{x_2^{n}\left(  m_2\right)  }$ 
		defined by (\ref{eq:rho_y}) %for $n$ uses of the channel. 
		and finally let $\Pi_{\bar{\rho},\delta}^{n}$ be the typical
		projector for the % tensor power 
		state $\bar{\rho}^{\otimes n}$ defined by~(\ref{eq:rho}).

		The detection POVM\ $\left\{  \Lambda_{m_1,m_2}\right\}  $ has the following form:
		\vspace{-2mm}
		\begin{align}
		\Lambda_{m_1,m_2} &  \!\! \equiv \!
		\left(  
			\sum_{m_1^{\prime},m_2^{\prime}}P_{m_1^{\prime},m_2^{\prime}}
		\right)^{\!\!\!-\frac{1}{2}}
		\!\! P_{m_1,m_2}
		\left(
			\sum_{m_1^{\prime},m_2^{\prime}}P_{m_1^{\prime},m_2^{\prime}}
		\right)^{\!\!-\frac{1}{2}}\!\!, %\label{eq:square-root-POVM} 
		\nonumber
		\end{align}
		where % the operator % positive $P_{m_1,m_2}$ 
		\begin{align}
		%
		%P_{m_1,m_2} &  \equiv\Pi_{\rho,\delta}^{n}\ \Pi_{\rho_{y^{n}\left(
		%m\right)  },\delta}^{n}\ \PIMACone%
		%\ \PIMAConetwo%
		%\ \PIMACone\ \Pi_{\rho_{y^{n}\left(
		%m\right)  },\delta}^{n}\ \Pi_{\rho,\delta}^{n},\nonumber
		P_{m_1,m_2} &  \equiv 
			\Pi_{\bar{\rho},\delta}^{n}  \
			\PIMACone \
			 \PIMAConetwo  \
			\PIMACone \
			\Pi_{\bar{\rho},\delta}^{n},				 \label{eq:proj-sandwitch} 
		\end{align}
		is a positive operator which consists of three typical projectors ``sandwiched'' together.
		Observe that the layers of the sandwich go from the more general ones on the outside
		to the more specific ones on the inside.
		Observe also that the conditionally typical projector $\PIMACtwo$ is not included.
		%		missing from the measurement 
		%		 the most specific one on the inside and more

		%(Observe that the operator $P_{m_1,m_2}$ is a positive operator and
		%thus $\left\{  \Lambda_{m_1,m_2}\right\}  $ is a valid POVM.)

		%		Observe that the assumption $\left[\bar{\rho}_{x_1},\bar{\rho}_{x_2}\right]  =0$, $\forall x_1,x_2$ 
		%		implies the following commutation relations:
		%		\begin{align}
		%		\!\!\!
		%		\left[  \PIMACone, \PIMACtwo \right]    \!=\! 0,  \ 
		%		\left[  \PIMACavg, \PIMACone \right]    \!=\!0, \ 
		%		\left[  \PIMACavg, \PIMACtwo \right]     \!=\!0,\label{eq:commute-assump-all}%
		%		%\left[  \PIMACone, \PIMACtwo \right]    &= 0,\label{eq:commute-assump-1}\\
		%		%\left[  \PIMACavg, \PIMACone \right]    & =0,\label{eq:commute-assump-2}\\
		%		%\left[  \PIMACavg, \PIMACtwo \right]    & =0,\label{eq:commute-assump-3}%
		%		%\left[  \Pi_{\rho_{x_2^{n}\left(  m\right)  },\delta}^{n},\Pi_{\rho
		%		%_{x_1^n(m_1)  },\delta}^{n}\right]    &
		%		%=0,\label{eq:commute-assump-1}\\
		%		%\left[  \Pi_{\rho,\delta}^{n},\Pi_{\rho_{x_1^n(m_1)  },\delta}%
		%		%^{n}\right]    & =0,\label{eq:commute-assump-2}\\
		%		%\left[  \Pi_{\rho_{x_2^{n}\left(  m\right)  },\delta}^{n},\Pi_{\rho,\delta}%
		%		%^{n}\right]    & =0,\label{eq:commute-assump-3}%
		%		\end{align}
		%		which means that the order in which the projectors appear in 
		%		the ``sandwich'' is not important. 
		%		The non-commutativity of the projectors is what makes this problem more difficult
		%		in the general case.

		%\textbf{Error Analysis}. 
		The average error probability of the code is given by:%
		\begin{equation}
			\overline{p}_{e}
			\equiv
			\frac{1}{|\mathcal{M}_1||\mathcal{M}_2|}\sum_{m_1,m_2}
				\text{Tr}\left[ 
					\left(I-\Lambda_{m_1,m_2}\right)  
					\rho_{m_1,m_2}
					%\rho_{x_1^n(m_1)  ,x_2^n(m_2)	
				\right].
			\label{eq:avg-error-prob}%
		\end{equation}
		
		The first step in our error analysis is to make a substitution of the output state 
		$\rho_{m_1,m_2}$ %\equiv \rho_{x_1^n(m_1)  ,x_2^{n}\left(  m_2\right)  }$,
		with a smoothed version:
		\be
			\tilde{\rho}_{m_1,m_2} \equiv \PIMACtwo \rho_{m_1,m_2} \PIMACtwo.
			\label{smoothed-rho}
		\ee
		We do this to ensure that we will have the operator $\PIMACtwo$ 
		inside the trace when we perform the averaging.
		The term \emph{smoothing} refers to the fact that we are now coding for a different
		channel which has all of the $\PIMACtwo$-atypical subspace removed,
		i.e., we remove the ``spikes'' (the large eigenvalues).

		We can use the inequality
		\be
		\text{Tr}\!\left[  \Lambda\rho\right]  
		\leq
		\text{Tr}\!\left[ \Lambda \sigma\right]  
		+\left\Vert \rho-\sigma\right\Vert _{1}
		\label{eqn:tr-trick-copy}
		\ee
		from Lemma~\ref{lem:trace-inequality},
		which holds for all  operators such that   $0\leq \rho, \sigma, \Lambda \leq I$,
		in order to bound the \emph{smoothing penalty} which we incur
		as a result of the substitution.

		After the substitution step \eqref{eq:avg-error-prob} and the use of \eqref{eqn:tr-trick-copy},
		we obtain the following bound on the 
		probability of error:
%		bounding \eqref{eq:avg-error-prob} from above as follows:
		%
		\begin{align}
			\overline{p}_{e}
			\leq &
			\frac{1}{|\mathcal{M}_1||\mathcal{M}_2|}\!\sum_{m_1,m_2}\!\!\Bigg[  
				\text{Tr}\!\left[  
					\left(I-\Lambda_{m_1,m_2}\right)  
					\tilde{\rho}_{m_1,m_2}
				\right] 
				%\hspace{8mm}
				%\nonumber \\[-5mm]
			 	%& \qquad \qquad  \qquad \qquad \quad
			 	+ \Vert \tilde{\rho}_{m_1,m_2} - \rho_{m_1,m_2} \Vert_1
			\Bigg].
			\label{eq:error-bound-pre-HN}%
		\end{align}
		
		The next step is to use the Hayashi-Nagaoka operator inequality 
		\cite{hayashi2003general}  (Lemma~\ref{lem:HN-inequality}):
		%		applies to all positive operators $T$ and 
		%		$S$ where $0\leq S\leq I$ :%
		\[
			I-\left(  S+T\right)^{-\frac{1}{2}}S\left(  S+T\right)  ^{-\frac{1}{2}}
			\leq
			2\left(  I-S\right)  +4T.
		\]
		Choosing $S=P_{m_1,m_2}, \ T=\sum_{\left(  m_1^{\prime},m_2^{\prime}\right)
		\neq\left(  m_1,m_2\right)  }P_{m_1^{\prime},m_2^{\prime}}$, 
		we apply the above operator inequality to bound the average error
		probability of the first term in  (\ref{eq:error-bound-pre-HN}) as:
		\begin{align}
			\overline{p}_{e}
			\leq &
			\frac{1}{|\mathcal{M}_1||\mathcal{M}_2|}\!\sum_{m_1,m_2}\!\!\Bigg[  
					2\text{Tr}\!\left[  
					\left(I-P_{m_1,m_2}\right)  \tilde{\rho}_{m_1,m_2} 
					\right] 			\label{eq:error-bound-HN}  \\
			 &  \!\!\!
			 	\qquad \qquad \qquad \qquad +4\hspace{-0.9cm}\sum_{\left(  m_1^{\prime},m_2^{\prime}\right)  \neq\left(m_1,m_2\right)  } 
				\hspace{-0.8cm} \text{Tr}\!\left[  
					P_{m_1^{\prime},m_2^{\prime}}\tilde{\rho}_{m_1,m_2} 
					\right]  
					+ %\underbrace{ 
						\Vert \tilde{\rho}_{m_1,m_2} \!- \!\rho_{m_1,m_2} \Vert_1
					%}_{\ \ \ \ \ \ \leq \  2\sqrt{\epsilon} }
			\Bigg].   \nonumber
		\end{align}
		The three terms in the summation have an intuitive interpretation.
		%		 as:
		%		 the ``error event terms'', 
		The first term corresponds to the case when the output state 
		 is non-typical, the second term describes the probability of a 
		 wrong message being decoded,
		 and the third term accounts for the \emph{smoothing penalty}
		 which we have to pay for using a code designed for
		 the channel $\tilde{\rho}_{m_1,m_2}$ on the channel $\rho_{m_1,m_2}$.

		We apply a random coding argument to bound the expectation of 
		the average error probability in \eqref{eq:error-bound-HN}.
		% rather than the average error
		%probability itself. 
		%We can 
		%bound the expectation value of the first term under 
		%a random choice of code $X^n_1,X^n_2$:
		%
		We compute the expected value of the error terms with respect
		to the random choice of codebook: $\{X_1^n(m_1)\}$, $\{X_2^n(m_2)\}$.
		Recall that in our shorthand notation, the codewords are not indicated.
		Thus when we say $\mathop{\mathbb{E}}_{X^n_1,X^n_2} \rho_{m_1,m_2}$,
		we really mean $\mathop{\mathbb{E}}_{X^n_1,X^n_2} \rho_{X^n_1(m_1),X^n_2(m_2)}$.

		A bound on the first term in \eqref{eq:error-bound-HN}  follows from the following argument:
		% Tr$\left\{  P_{m_1,m_2}\tilde{\rho}
		%_{x_1^n(m_1)  ,x_2^n(m_2)  }\right\}  $:%
		{\allowdisplaybreaks
		\begin{align}
	  	\mathop{\mathbb{E}}_{X^n_1,X^n_2} \text{Tr}&\!\left[  P_{m_1,m_2}\tilde{\rho}_{m_1,m_2} \right] 
		 = \nonumber \\
		& =  \mathop{\mathbb{E}}_{X^n_1,X^n_2} \text{Tr}\!\left[  
			\Pi_{\bar{\rho},\delta}^{n} \ 
			\PIMACone \ 
			\PIMAConetwo \
			\PIMACone \ 
			\Pi_{\bar{\rho},\delta}^{n} \ \
			\  \PIMACtwo  \rho_{m_1,m_2} \PIMACtwo
			\right]  \nonumber\\
		&   \geq \mathop{\mathbb{E}}_{X^n_1,X^n_2} \text{Tr}\!\left[  
		\PIMAConetwo \rho_{m_1,m_2} \right]  \nonumber \\
			& \qquad \qquad \ \ \ 
			-\mathop{\mathbb{E}}_{X^n_1,X^n_2}
			\left\Vert 
			\PIMACtwo \rho_{m_1,m_2 } \PIMACtwo  -\rho_{m_1,m_2} 
			\right\Vert _{1}\nonumber\\
			& \qquad \qquad \ \ \  
			-\mathop{\mathbb{E}}_{X^n_1,X^n_2} 
			\left\Vert 
			\PIMACavg  \rho_{m_1,m_2}  \PIMACavg  -\rho_{m_1,m_2}  \right\Vert _{1}\nonumber\\
			& \qquad \qquad \ \ \ 
			-\mathop{\mathbb{E}}_{X^n_1,X^n_2}
			\left\Vert 
			\PIMACone \rho_{m_1,m_2} \PIMACone-\rho_{m_1,m_2}
			\right\Vert _{1}\nonumber\\
		&   \geq \mathop{\mathbb{E}}_{X^n_1,X^n_2} \text{Tr}\!\left[  
		\PIMAConetwo \rho_{m_1,m_2} \right]  \nonumber - 6\sqrt{\epsilon} \\
		&   \geq1-\epsilon-6\sqrt{\epsilon}. \label{eq:first-error-chain}%
		\end{align}%
		}%
		%The above inequalities follow by employing the Gentle Operator Lemma
		%(Lemma~\ref{lem:gentle-operator} in Appendix~\ref{sec:useful-lemmas}),
		%Lemma~\ref{lem:trace-inequality} in Appendix~\ref{sec:useful-lemmas}, and the
		%below inequalities discussed in Appendix~\ref{sec:typ-review}:%
		%\begin{align}
		%\text{Tr}\{\PIMACone\rho_{x^{n}\left(
		%l\right)  ,x_2^n(m_2)  }\} &  \geq1-\epsilon
		%,\label{eq:typ-cond-typ-1}\\
		%\text{Tr}\{\Pi_{\rho_{x_2^n(m_2)  },\delta}^{n}\rho_{x^{n}\left(
		%l\right)  ,x_2^n(m_2)  }\} &  \geq1-\epsilon
		%,\label{eq:typ-cond-typ-2}\\
		%\text{Tr}\{\Pi_{\rho,\delta}^{n}\rho_{m_1,m_2}  \} &  \geq1-\epsilon.\label{eq:typ-cond-typ-3}\\
		%\text{Tr}\{\Pi_{\rho_{m_1,m_2}  
		%}^{n} \rho_{m_1,m_2}  \} &
		%\geq1-\epsilon.\label{eq:typ-cond-typ-4}%
		%\end{align}
		%The bound on the average error probability in (\ref{eq:error-bound-HN})
		%reduces to the following one after applying (\ref{eq:first-error-chain}):%
		%\[
		%\overline{p}_{e}\leq2\left(  \epsilon+6\sqrt{\epsilon}\right)  +\frac{4}%
		%{|\mathcal{M}_1||\mathcal{M}_2|}\sum_{m_1,m_2}\sum_{\left(  m_1^{\prime},m_2^{\prime}\right)  \neq\left(
		%m_1,m_2\right)  }\text{Tr}\left\{  P_{m_1^{\prime},m_2^{\prime}}\rho
		%_{x_1^n(m_1)  ,x_2^n(m_2)  }\right\}  .
		%\]
		%
		The first inequality follows from \eqref{eqn:tr-trick-copy} (Lemma~\ref{lem:trace-inequality}) 
		applied three times.
		The second inequality follows from Lemma~\ref{lem:gentle-operator}
		%		 \emph{Gentle Measurement Lemma} for ensembles 
		%		\cite[Lemma 9.4.3]{wilde2011book} %\cite{itit1999winter} 
		and the properties of the %entropy- 
		conditionally typical projectors:		
		\eqref{eqn:QMACtypSupAvgOne}, \eqref{eqn:QMACtypSupAvgTwo}
		and \eqref{eqn:QMACtypSupAvgBoth} given in Appendix~\ref{apdx:quantum-typicality}.
		The last inequality follows from equation \eqref{eqn:QMACtypSup}.

		The same reasoning is used to obtain a bound the expectation of the smoothing-penalty
		(the third term  in \eqref{eq:error-bound-HN}).
		\begin{align}
			\mathbb{E}_{X^n_1,X^n_2}  \Vert \tilde{\rho}_{m_1,m_2}  -  \rho_{m_1,m_2} \Vert_1 
			& = 
				\mathbb{E}_{X^n_1,X^n_2}  \Vert \PIMACtwo \rho_{m_1,m_2} \PIMACtwo \ -  \  \rho_{m_1,m_2} \Vert_1  \nonumber \\
			& \leq  
				2\sqrt{\epsilon}.  \label{eqn:smoothing-penality-eps}
		\end{align}

		The main part of the error analysis consists 
		of obtaining a bound on the second term in \eqref{eq:error-bound-HN}.
		This term corresponds to the probability that a wrong message pair
		is decoded by the receiver. 
		We split this term into  three parts, each representing a different type of decoding error:
		%$\sum_{\left(  m_1^{\prime},m_2^{\prime}\right)
		%\neq\left(  m_1,m_2\right)  }$Tr$\left\{  P_{m_1^{\prime},m_2^{\prime}}%
		%\rho_{m_1,m_2}  }\right\}  $ 
		%inside the sum on the RHS above:
		\vspace{-2mm}
		\begin{align}
		&\hspace{-4mm} 
		\sum_{\left(  m_1^{\prime},m_2^{\prime}\right)  \neq\left(  m_1,m_2\right)  }
		\hspace{-9mm}
		\Tr\!\left[  P_{m_1^{\prime},m_2^{\prime}}
		\tilde{\rho}_{m_1,m_2}
		\right]  
		= \nonumber \\
		& \qquad \qquad  =
			\sum_{m_1^{\prime}\neq m_1}\Tr\!\left[  P_{m_1^{\prime},m_2}
			\tilde{\rho}_{m_1,m_2}
			\right]  
			\tag{E1} \label{eq:err-one}\\
		& \qquad \qquad \qquad 
			+\sum_{m_2^{\prime}\neq m_2}\Tr\!\left[P_{m_1,m_2^{\prime}}
			\tilde{\rho}_{m_1,m_2}
			\right] \tag{E2} \label{eq:err-two} \\
		& \qquad \qquad \qquad 
			+ \hspace{-6mm} \sum_{m_1^{\prime}\neq m_1,m_2^{\prime}\neq m_2}
			\hspace{-6mm} \Tr\!\left[  P_{m_1^{\prime},m_2^{\prime}}
								\tilde{\rho}_{m_1,m_2}
			\right]. \tag{E12}
			\label{eq:err-both}%
		\end{align}%
		%The naturally expression decomposes into three different error events,

		%		For clarity, we 
		%		If the messages $m_1$ and $m_2$ are sent,
		%		then we can categorize the 
		%		different kinds of \emph{wrong message} decode errors:
		%		\begin{equation}%
		%		\begin{tabular}
		%		[c]{c|c|c}\hline\hline
		%		error & $\hat{M}_1$ & $\hat{M}_2$\\
		%		\hline\hline
		%		\eqref{eq:err-one}	 & $\ast$ & $m_2$\\
		%		\eqref{eq:err-two}	 & $m_1$ & $\ast$\\
		%		\eqref{eq:err-both}	 & $\ast$ & $\ast$\\ 
		%		\hline\hline
		%		\end{tabular}
		%						\label{eq:three-errors}%
		%		\end{equation}
		%		The $\ast$ in the above table denote any message other than
		%		the one which was sent.

		We will bound each of these terms in turn.
		
		\bigskip
		\prfsec{Bound on (E1) }
		The expectation over the random choice of codebook for the error term \eqref{eq:err-one}
		is as follows:
		\vspace{-1mm}
		\begin{align}
		 \mathop{\mathbb{E}}_{X^n_1,X^n_2} \left\{ 
		 	\eqref{eq:err-one}
			\right\} & 
			=
			\mathop{\mathbb{E}}_{X^n_1,X^n_2} 
		\big\{  \sum_{m_1^{\prime}\neq m_1}
		\text{Tr}\left[  P_{m_1^{\prime},m_2}
		\tilde{\rho}_{m_1,m_2}
		\right]  \big\}  \nonumber\\
		%&  =\sum_{m_1^{\prime}\neq m_1}\mathbb{E}_{X^n_1,X^n_2}\left\{
		%\text{Tr}\left\{  \Pi_{\rho,\delta}^{n}\Pi_{\rho_{x_2^n(m_2)
		%},\delta}^{n}\Pi_{\rho_{x^{n}(l^{\prime})},\delta}^{n}\Pi_{\rho_{x^{n}%
		%(l^{\prime}),x_2^n(m_2)  },\delta}^{n}\Pi_{\rho_{x^{n}(l^{\prime}%
		%)},\delta}^{n}\Pi_{\rho_{x_2^n(m_2)  },\delta}^{n}\Pi_{\rho,\delta
		%}^{n}\rho_{m_1,m_2}  }\right\}  \right\}
		%\nonumber\\
		&  
		\overset{\mdingone}{=} 
		\!\!\!\sum_{m_1^{\prime}\neq m_1}\mathop{\mathbb{E}}_{X^n_2}\left\{  \text{Tr}\left[
		\mathop{\mathbb{E}}_{X^n_1}\left\{  
		P_{m_1^{\prime},m_2} \right\}  
		\mathop{\mathbb{E}}_{X^n_1}\left\{
		  \tilde{\rho}_{m_1,m_2}
		  \right\} 
		 \right]
		\right\}  \nonumber\\
		&  
		= 
		\!\!\!\sum_{m_1^{\prime}\neq m_1}\mathop{\mathbb{E}}_{X^n_2}\left\{  \text{Tr}\left[
		\mathop{\mathbb{E}}_{X^n_1}\left\{  
		P_{m_1^{\prime},m_2} \right\}  
		\mathop{\mathbb{E}}_{X^n_1}\left\{
		 	\PIMACtwo
		  \rho_{m_1,m_2}
		 	\PIMACtwo
		  \right\} 
		 \right]
		\right\}  \nonumber\\
		&  
		= 
		\!\!\!\sum_{m_1^{\prime}\neq m_1}\mathop{\mathbb{E}}_{X^n_2}\left\{  \text{Tr}\left[
		\mathop{\mathbb{E}}_{X^n_1}\left\{  
		P_{m_1^{\prime},m_2} \right\}
		\ 
  		 	\PIMACtwo \!
		\mathop{\mathbb{E}}_{X^n_1}\left\{
		  \rho_{m_1,m_2}
		  \right\} 
  		 	\PIMACtwo
		 \right]
		\right\}  \nonumber\\
		&  
		\overset{\mdingtwo}{=}
		 \!\!\!\sum_{m_1^{\prime}\neq m_1}\mathop{\mathbb{E}}_{X^n_1X^n_2}\left\{  \text{Tr}\left[
			P_{m_1^{\prime},m_2} 
		 	\PIMACtwo
			\bar{\rho}_{m_2}
			\PIMACtwo \right]
		\right\}  \nonumber \\
		&  
		\overset{\mdingthree}{\leq}
		 2^{-n\left[  H\left(B|X_2\right) -\delta\right]}  
		\!\! \sum_{m_1^{\prime}\neq m_1}\mathop{\mathbb{E}}_{X^n_1X^n_2}\left\{  \text{Tr}\left[
		P_{m_1^{\prime},m_2} 
		 	\PIMACtwo \right]
		\right\}  \nonumber 
%		&  = \!\!\!\sum_{m_1^{\prime}\neq m_1}\mathop{\mathbb{E}}_{X^n_1,X^n_2}
%		\left\{  \text{Tr}\left[
%		P_{m_1^{\prime},m_2} 
%		  \bar{\rho}_{m_2} \right]
%		\right\}  \nonumber\\
%		&  = \!\!\!\sum_{m_1^{\prime}\neq m_1}\!\!\!\mathop{\mathbb{E}}_{X^n_1,X^n_2} \!\!\!
%		\left\{ \! \text{Tr}\left[
%		\Pi_{\bar{\rho},\delta}^{n} \cdot \PIMACtwo \cdot \PIMAConepr \cdot \PIMAConeprtwo
%		 \ \bar{\rho}_{m_2} \right]
%		\right\}.  \nonumber
		\end{align}%
		%The first equality follows by substitution. 
		\vspace{-1mm}%
		Equation~\dingone follows because the codewords 
		for $m_1^{\prime}$ and $m_1$ are independent.
		Equality~\dingtwo comes from the definition of the averaged code state 
		$\bar{\rho}_{m_2} \equiv  \bar{\rho}_{x_2^{n}\left(  m_2\right)  }$.
		%\equiv  \sum_{x^n_1} p_{X_1^n}(x^n_1)\rho_{x_1^{n}\left(  m_1\right), x_2^{n}\left(  m_2\right)  } $.
		The inequality~\dingthree follows from the bound
		% the properties of conditionally typical subspaces:
		\[
		\PIMACtwo \bar{\rho}_{m_2} \PIMACtwo 
		\leq
		2^{-n\left[  H\left(B|X_2\right) -\delta\right]  }
		\PIMACtwo.
		\]

		We focus our attention on the expression inside the trace:
		%\vspace{-1mm}
		\begin{align}
		\!\!\!\!\!\text{Tr}\left[
		P_{m_1^{\prime},m_2}  \ 
		 	\PIMACtwo \right] \nonumber 
%		& = 
%		\text{Tr}\left[
%		\Pi_{\bar{\rho},\delta}^{n}  \cdot \PIMAConepr \cdot \PIMAConeprtwo
%		 \ \  \PIMACtwo \right]
%		 \nonumber \\
		& = 
		\text{Tr}\left[
		\Pi_{\bar{\rho},\delta}^{n}  \
		\PIMAConepr \
		\PIMAConeprtwo  \
		\PIMAConepr \
		\Pi_{\bar{\rho},\delta}^{n} 
		\ \ \PIMACtwo		
		 \right] \nonumber \\
		& \overset{\mdingfour}{=} 
		\text{Tr}\left[
		\PIMAConepr \
		\Pi_{\bar{\rho},\delta}^{n} \
		\PIMACtwo		\ 
		\Pi_{\bar{\rho},\delta}^{n}  \
		\PIMAConepr  \
		\PIMAConeprtwo
		 \right] \nonumber \\		
%		\text{Tr}\left[
%		\Pi_{\bar{\rho},\delta}^{n} 
%		\PIMAConepr 
%		\PIMAConeprtwo 
%		\PIMAConepr
%		\Pi_{\bar{\rho},\delta}^{n}
%		 \
%		 \PIMACtwo
%		  \bar{\rho}_{m_2} 
%		 \PIMACtwo 
%		 \right] \nonumber \\
%		& \leq 
%		2^{-n\left[  H\left(  B|X_2\right)-\delta\right]  } 
%		\text{Tr}\left[
%		\Pi_{\bar{\rho},\delta}^{n} 
%		\PIMAConepr 
%		\PIMAConeprtwo 
%		\PIMAConepr
%		\Pi_{\bar{\rho},\delta}^{n}
%		 \ 
%		 \PIMACtwo 
%		 \right] \nonumber \\
		& \overset{\mdingfive}{\leq}
		\text{Tr}\left[
		\PIMAConeprtwo 
		 \right]. \nonumber 
		% above replaces below
		%&  =\sum_{m_1^{\prime}\neq m_1}\mathbb{E}_{X^n_2}\left\{  \text{Tr}\left\{
		%\mathbb{E}_{X^n_1}\left\{  \Pi_{\rho_{x_2^n(m_2)  },\delta
		%}^{n}\Pi_{\rho,\delta}^{n}\Pi_{\rho_{x^{n}(l^{\prime})},\delta}^{n}\Pi
		%_{\rho_{x^{n}(l^{\prime}),x_2^n(m_2)  },\delta}^{n}\Pi_{\rho
		%_{x^{n}(l^{\prime})},\delta}^{n}\Pi_{\rho,\delta}^{n}\Pi_{\rho_{y^{n}\left(
		%m\right)  },\delta}^{n}\right\}  \mathbb{E}_{X^n_1}\left\{  \rho
		%_{x_1^n(m_1)  ,x_2^n(m_2)  }\right\}  \right\}
		%\right\}  \nonumber\\
		%%&  \leq
		%%%\left[  1-\epsilon\right]  ^{-1}
		%%\sum_{m_1^{\prime}\neq m_1}
		%%\mathbb{E}_{X^n_1,X^n_2}\left\{  
		%%\text{Tr}\left\{  \Pi_{\rho,\delta}^{n}
		%%\Pi_{\rho_{x^{n}(l^{\prime})},\delta}^{n}
		%%\Pi_{\rho_{x^{n}(l^{\prime}),x_2^n(m_2)  },\delta}^{n}
		%%\Pi_{\rho_{x^{n}(l^{\prime})},\delta}^{n}
		%%\Pi_{\rho,\delta}^{n}
		%%\Pi_{\rho_{x_2^n(m_2)  },\delta}^{n}
		%%\rho_{x_2^n(m_2)  }\Pi_{\rho_{x_2^n(m_2)  },\delta}%
		%%^{n}\right\}  \right\}  \nonumber\\
		%%&  \leq
		%%%\left[  1-\epsilon\right]  ^{-1}
		%%2^{-n\left[  H\left(  B|X_2\right)-c\delta\right]  }
		%%\sum_{m_1^{\prime}\neq m_1}\mathbb{E}_{X^n_1,Y^{\prime
		%%n}}\left\{  \text{Tr}\left\{  \Pi_{\rho,\delta}^{n}\Pi_{\rho_{x^{n}(l^{\prime
		%%})},\delta}^{n}\Pi_{\rho_{x^{n}(l^{\prime}),x_2^n(m_2)  },\delta
		%%}^{n}\Pi_{\rho_{x^{n}(l^{\prime})},\delta}^{n}\Pi_{\rho,\delta}^{n}\Pi
		%%_{\rho_{x_2^n(m_2)  },\delta}^{n}\right\}  \right\} \label{eq:second-error-chain}
		\end{align}
		In the first step we substituted the definition of $P_{m_1,m_2}$ from
		equation \eqref{eq:proj-sandwitch}.
		Equality \dingfour follows from the cyclicity of trace.
		%_{\rho,\delta}^{n}$ (see (\ref{eq:commute-assump-1}-\ref{eq:commute-assump-3}%
		%)). 
		%The first inequality follows from the following operator inequality from
		%Appendix~\ref{sec:typ-review}:%
		%\begin{equation}
		%\mathbb{E}_{X^n_1}\left\{  \rho_{m_1,m_2}  \right\}  \leq\left[  1-\epsilon\right]  ^{-1}\rho_{x_2^n(m_2)  }.\label{eq:pruned-operator-ineq}%
		%\end{equation}
		%
		%\Pi_{\rho_{x_2^n(m_2)  },\delta}^{n}\rho_{x_2^n(m_2)
		%}\Pi_{\rho_{x_2^n(m_2)  },\delta}^{n}\leq2^{-n\left[  H\left(
		%B|Y\right)  -c\delta\right]  }\Pi_{\rho_{x_2^n(m_2)  },\delta}^{n}.
		%The first equality follows from cyclicity of trace. 
		Inequality \dingfive follows from
		\vspace*{-1mm}
		\begin{align}
		\PIMAConepr \PIMACavg \PIMACtwo \PIMACavg \PIMAConepr 
		 \leq 
		 \PIMAConepr \PIMACavg \PIMAConepr
		 \leq 
		 \PIMAConepr
		 \leq
		 I. \label{positive-op-anihilation}
		%& \Pi_{\rho_{x^{n}(l^{\prime})},\delta}^{n}\Pi_{\rho,\delta}^{n}\Pi_{\rho
		%_{x_2^n(m_2)  },\delta}^{n}\Pi_{\rho,\delta}^{n}\Pi_{\rho
		%_{x^{n}(l^{\prime})},\delta}^{n} \nonumber \\
		% &\quad  \quad \leq 
		% \Pi_{\rho_{x^{n}(l^{\prime})},\delta}%
		%^{n}\Pi_{\rho,\delta}^{n}\Pi_{\rho_{x^{n}(l^{\prime})},\delta}^{n} \\
		%& \qquad \qquad \qquad \leq
		%\Pi_{\rho_{x^{n}(l^{\prime})},\delta}^{n}\leq I. \nonumber
		\end{align}

		Next, we obtain the following bound on the expected probability 
		of the term \eqref{eq:err-one}:
		\begin{align}
		\!\!\mathop{\mathbb{E}}_{X^n_1,X^n_2}
		\!\!\!\!\!\!\left\{ \eqref{eq:err-one}
		\right\}  
%		\equiv  \mathop{\mathbb{E}}_{X^n_1,X^n_2}
%		\left\{ 
%		 \sum_{m_1^{\prime}\neq m_1}
%		\text{Tr}\left[  P_{m_1^{\prime},m_2}
%		\tilde{\rho}_{m_1,m_2}
%		\right]  
%		\right\}  
%		\nonumber\\
		%&  =
		%%\left[  1-\epsilon\right]  ^{-1}
		%2^{-n\left[  H\left(  B|X_2\right) -c\delta\right]  }
		%\sum_{m_1^{\prime}\neq m_1}
		%\mathop{\mathbb{E}}_{X^n_1,X^n_2} 
		%\left\{  \text{Tr}\left[
		%	\Pi_{\rho_{x^{n}(l^{\prime}),x_2^n(m_2)  },\delta}^{n}
		%	\Pi_{\rho_{x^{n}(l^{\prime})},\delta}^{n}
		%	\Pi_{\rho,\delta}^{n}\Pi_{\rho_{x_2^n(m_2)  },\delta}^{n}
		%	\Pi_{\rho,\delta}^{n}
		%	\Pi_{\rho_{x^{n}(l^{\prime})},\delta}^{n}
		%\right]  
		%\right\}
		%\nonumber\\
		& \leq
		%\left[  1-\epsilon\right]  ^{-1}
		2^{-n\left[  H\left(  B|X_2\right)-\delta\right]  } 
		\!\!\!\!
		\sum_{m_1^{\prime}\neq m_1}
		\mathop{\mathbb{E}}_{X^n_1,X^n_2}
		\left\{  \Tr\!\left[  \PIMAConeprtwo %\Pi_{\rho_{x^{n}(l^{\prime}),x_2^n(m_2)  },\delta}^{n}
		\right]  \right\}  \nonumber\\
		&  \overset{\mdingsix}{\leq}
		%\left[  1-\epsilon\right]  ^{-1}
		2^{-n\left[  H\left(  B|X_2\right)-\delta\right]  }
		\sum_{m_1^{\prime}\neq m_1}
		2^{n\left[  H\left(  B|X_1X_2\right)+\delta\right]  }\nonumber\\
		&  \leq
		%\left[  1-\epsilon\right]  ^{-1}\ 
		|\mcal{M}_1|\  2^{-n\left[  I\left(X_1;B|X_2\right)  -2\delta\right]  }.
		\label{eq:err-one-bound}
		\end{align}
		Inequality  \dingsix %in \eqref{eq:err-one-bound} 
		follows from the bound%
		\begin{align}
			\text{Tr}\{
			\PIMAConetwo
			%\Pi_{\rho_{x^{n}\left(  l^{\prime}\right)  ,x_2^n(m_2)},\delta}^{n}
			\}\leq2^{n\left[  H\left(  B|X_1X_2\right)  +\delta\right]  }%
			\nonumber
		\end{align}
		on the rank of a conditionally typical projector.
		%The final inequality follows because the term
		%inside the summation has no dependence on $m_1^{\prime}$. 

		\prfsec{Bound on (E2) }	
		We employ a different argument to bound the probability of the second error term 
		\eqref{eq:err-two} based on the following fact
		\begin{align}
			\PIMAConetwo 
			& \leq 
				2^{n[H(B|X_1X_2) + \delta] } 
				\PIMAConetwo
				\rho^B_{m_1,m_2}
				\PIMAConetwo \nonumber \\
			& =
				2^{n[H(B|X_1X_2) + \delta] } 
				\sqrt{\rho^B_{m_1,m_2}}
				\PIMAConetwo
				\sqrt{\rho^B_{m_1,m_2}}
				\nonumber \\
			& \leq
				2^{n[H(B|X_1X_2) + \delta] } 
				\rho^B_{m_1,m_2},
				\label{eqn:proj-trick-in-prf}
		\end{align}
		which we refer to as the \emph{projector trick} \cite{GLM10}.
		The first inequality is the standard lower bound on the eigenvalues of
		$\rho^B_{m_1,m_2}$ expressed as an operator upper bound on the projector $\PIMAConetwo$.
		The equality follows because the state and its typical projector commute.
		The last inequality follows from $0 \leq \PIMAConetwo \leq I$.
		
		We now proceed to bound the expectation of the error term \eqref{eq:err-two}.
%		 as follows:
		{\allowdisplaybreaks
		\begin{align}
			\mathop{\mathbb{E}}_{X^n_1,X^n_2}
			\!\!
			\Big\{ \eqref{eq:err-two}
			\Big\} 
			&= 
			\mathop{\mathbb{E}}_{X^n_1,X^n_2}
		\!\!
		\left\{ 
		\sum_{m_2^{\prime}\neq m_2}\text{Tr}\left[P_{m_1,m_2^{\prime}}
			\tilde{\rho}_{m_1,m_2}
			\right] 
		\right\} \nonumber \\
		%
		%& \mathop{\mathbb{E}}_{X^n_1,X^n_2}
		%\left\{  
		%\sum_{m^{\prime}\neq m}
		%\text{Tr}\left\{  P_{m_1,m_2^{\prime}}\rho_{x_1^n(m_1),x_2^n(m_2)  }\right\}  \right\}  \nonumber \\
		&  = \!\!\!\sum_{m_2^{\prime}\neq m_2}\mathop{\mathbb{E}}_{X^n_1}\left\{  \text{Tr}\left[
		\mathop{\mathbb{E}}_{X^n_2}\left\{  
		P_{m_1,m_2^{\prime}} \right\}  
		\mathop{\mathbb{E}}_{X^n_2}\left\{
		  \tilde{\rho}_{m_1,m_2}
		  \right\} 
		 \right]
		\right\}  \nonumber\\
		&  = \!\!\!\sum_{m_2^{\prime}\neq m_2}
		\!\!\!\mathop{\mathbb{E}}_{X^n_1}\left\{  \text{Tr}\left[
		\mathop{\mathbb{E}}_{X^n_2}\left\{  
			\Pi_{\bar{\rho},\delta}^{n}  \
			\PIMACone \
			 \PIMAConetwopr \
			 \PIMACone \
			 \Pi_{\bar{\rho},\delta}^{n}  
		\right\}  
		\mathop{\mathbb{E}}_{X^n_2}\left\{
		  \tilde{\rho}_{m_1,m_2}
		  \right\} 
		 \right]
		\right\}  \nonumber\\
		&  = \!\!\!\sum_{m_2^{\prime}\neq m_2}
		\!\!\!\mathop{\mathbb{E}}_{X^n_1}\left\{  
		\text{Tr}\!\left[
		\Pi_{\bar{\rho},\delta}^{n}
		\mathop{\mathbb{E}}_{X^n_2}\!\!\left\{  
			  \PIMACone\PIMAConetwopr\PIMACone
		\right\}\!
		\Pi_{\bar{\rho},\delta}^{n}
		\!\!\mathop{\mathbb{E}}_{X^n_2}\!\!\left\{
		  \tilde{\rho}_{m_1,m_2}
		  \right\} 
		 \right]
		\!\right\}  \nonumber
		\end{align}}

		We focus our attention on the first expectation inside the trace:
		\begin{align}
		\mathop{\mathbb{E}}_{X^n_2}\!\!\left\{  
			  \PIMACone\PIMAConetwopr\PIMACone
		\right\}  		  
		& \ \ 
		\overset{ \mdingone}{
		\leq 
		}
		2^{n[H(B|X_1X_2) + \delta] } 
		\mathop{\mathbb{E}}_{X^n_2}\!\!\left\{  
			  \PIMACone \rho^B_{m_1,m_2^{\prime}} \PIMACone
		\right\}  		  \nonumber \\
		& \ \ 
		= 2^{n[H(B|X_1X_2) + \delta] } 
		\PIMACone
		\mathop{\mathbb{E}}_{X^n_2}\!\!\left\{  
			   \rho^B_{m_1,m_2^{\prime}} 
		\right\} 
		\PIMACone
		\nonumber \\
		& \ \ 
		= 2^{n[H(B|X_1X_2) + \delta] } 
		\PIMACone
		\bar{\rho}_{m_1}
		\PIMACone
		\nonumber \\
		& \ \ 
		\overset{ \mdingtwo}{
		\leq 
		}
		2^{n[H(B|X_1X_2) + \delta] } 
		2^{-n[H(B|X_1) - \delta] } 
		\PIMACone
		\nonumber \\
		& \ \ 
		= 2^{-n[I(X_2;B|X_1)- 2\delta] } 
		\PIMACone.
		\nonumber
		\end{align}
		In inequality \dingone we used the projector trick from \eqref{eqn:proj-trick-in-prf}.
		Inequality \dingtwo follows from the properties of the conditionally
		typical projector $\PIMACone$.

		Substituting back into the expression for the error bound,
		we obtain:
		{\allowdisplaybreaks
		\begin{align}
		 	\hspace{-3mm}
			\mathop{\mathbb{E}}_{X^n_1,X^n_2}
			\!\!
			\{ \eqref{eq:err-two}
			\}  %\nonumber \\
		& \leq
		2^{-n[I(X_2;B|X_1)- 2\delta] } 
		\!\!\!\!\sum_{m_2^{\prime}\neq m_2} \!\!
		%\!\!\!\mathop{\mathbb{E}}_{X^n_1X^n_2}\!\!\left\{  
		\!\!\text{Tr}\!\left[
		\Pi_{\bar{\rho},\delta}^{n}\!
			  \PIMACone
		\Pi_{\bar{\rho},\delta}^{n}
		  \tilde{\rho}_{m_1,m_2}
		 \right]
		%\!\right\}  
		\nonumber \\
		&  =
		2^{-n[I(X_2;B|X_1)- 2\delta] } 
		\!\!\!\sum_{m_2^{\prime}\neq m_2}
		%\!\!\!\mathop{\mathbb{E}}_{X^n_1X^n_2}\!\!\left\{  
		\!\!\text{Tr}\!\left[
		\Pi_{\bar{\rho},\delta}^{n}\!
			  \PIMACone
		\Pi_{\bar{\rho},\delta}^{n}
		 \PIMACtwo \rho_{m_1,m_2} \PIMACtwo
		 \right]
		%\!\right\}  
		\nonumber \\
		&  =
		2^{-n[I(X_2;B|X_1)- 2\delta] } 
		\!\!\!\sum_{m_2^{\prime}\neq m_2}
		%\!\!\!\mathop{\mathbb{E}}_{X^n_1X^n_2}\!\!\left\{  
		\!\!\text{Tr}\!\left[
		\PIMACtwo
		\Pi_{\bar{\rho},\delta}^{n}\!
			  \PIMACone
		\Pi_{\bar{\rho},\delta}^{n}
		 \PIMACtwo \rho_{m_1,m_2} 
		 \right]
		%\!\right\}  
		\nonumber \\
		&  
		\overset{\mdingthree}{\leq}
		2^{-n[I(X_2;B|X_1)- 2\delta] } 
		\!\!\!\sum_{m_2^{\prime}\neq m_2}
		%\!\!\!\mathop{\mathbb{E}}_{X^n_1X^n_2}\!\!\left\{  
		\!\!\text{Tr}\!\left[
		 \rho_{m_1,m_2} 
		 \right]
		%\!\right\}  
		\nonumber \\ 
		& 
		 \leq
		%\left[  1-\epsilon\right]}\ 
		2^{-n\left[  I\left(  X_2;B|X_1\right)  -2\delta\right]  }
		|\mathcal{M}_2|.  \label{eq:err-two-bound}
		\end{align}}%
		\noindent%
		Inequality \dingthree follows from an argument analogous to
		\eqref{positive-op-anihilation}.

		\prfsec{Bound on (E12) }
		We use a slightly different argument in order
		to bound the probability of the third error term: %
		{\allowdisplaybreaks
		\begin{align}
		%\end{align}
		%Finally, by a similar argument that uses (\ref{eq:commute-assump-1}%
		%-\ref{eq:commute-assump-3}), we get a bound on the third term in
		%(\ref{eq:three-terms}):%
		%\begin{align}
		%\mathop{\mathbb{E}}_{X^n_1,X^n_2}\left\{  \sum_{m_1^{\prime}\neq
		%m_1,\ m_2^{\prime}\neq m_2}\text{Tr}\left\{  \Pi_{m_1^{\prime},m_2^{\prime}}^{\prime
		%}\rho_{m_1,m_2}  \right\}  \right\} \nonumber \\
		%&  \qquad \qquad \qquad
		\mathop{\mathbb{E}}_{X^n_1,X^n_2} 
		\!\!
		\Big\{\eqref{eq:err-both}
		\Big\}   
		& = 
		\mathop{\mathbb{E}}_{X^n_1,X^n_2}
		\left\{
		\sum_{m_1^{\prime}\neq m_1,m_2^{\prime}\neq m_2}
			 \Tr\!\left[  
			 	P_{m_1^{\prime},m_2^{\prime}}
				\tilde{\rho}_{m_1,m_2}
			\right] 
		\right\}  \nonumber \\
		&  
		\overset{\mdingone}{=} 
		\sum_{m_1^{\prime}\neq m_1,m_2^{\prime}\neq m_2}
		\mathop{\mathbb{E}}_{X^n_2}\left\{  \text{Tr}\left[
		\mathop{\mathbb{E}}_{X^n_1}\left\{  
		P_{m_1^{\prime},m_2^{\prime}} \right\}  
		\mathop{\mathbb{E}}_{X^n_1}\left\{
		  \tilde{\rho}_{m_1,m_2}
		  \right\} 
		 \right]
		\right\}  \nonumber\\
		&  
		\overset{\mdingtwo}{=}
		\sum_{m_1^{\prime}\neq m_1,m_2^{\prime}\neq m_2}
		\mathop{\mathbb{E}}_{X^n_2}\left\{  \text{Tr}\left[
		\mathop{\mathbb{E}}_{X^n_1}\left\{  
		P_{m_1^{\prime},m_2^{\prime}  } \right\}  
		 	\PIMACtwo
			\bar{\rho}_{m_2}
		 	\PIMACtwo
		 \right]
		\right\}  \nonumber \\
		&  
		\overset{\mdingthree}{\leq}
		\sum_{m_1^{\prime}\neq m_1,m_2^{\prime}\neq m_2}
		\mathop{\mathbb{E}}_{X^n_2}\left\{  \text{Tr}\left[
		\mathop{\mathbb{E}}_{X^n_1}\left\{  
		P_{m_1^{\prime},m_2^{\prime}  } \right\}  
			\bar{\rho}_{m_2}
		 \right]
		\right\}  \nonumber \\
		&  
		=
		\sum_{m_1^{\prime}\neq m_1,m_2^{\prime}\neq m_2}
		  \text{Tr}\left[
		\mathop{\mathbb{E}}_{X^n_1,X^n_2}\left\{  
		P_{m_1^{\prime},m_2^{\prime}  } \right\}  
		\mathop{\mathbb{E}}_{X^n_2}\left\{
			\bar{\rho}_{m_2}
		\right\}
		 \right]
		 \nonumber \\
		&  
		= \!\!\!\!\!\!
		\sum_{m_1^{\prime}\neq m_1,m_2^{\prime}\neq m_2}
		  \text{Tr}\left[
		\mathop{\mathbb{E}}_{X^n_1,X^n_2}\!\!\!\left\{  
		P_{m_1^{\prime},m_2^{\prime}  } \right\}  
		\bar{\rho}^{\otimes n}
		 \right]
		 \nonumber \\
		&  
		= \!\!\!\!\!\!
		\sum_{m_1^{\prime}\neq m_1,m_2^{\prime}\neq m_2}
		\!\!\!\!\!
		\mathop{\mathbb{E}}_{X^n_1,X^n_2}\!\!\!\left\{  \text{Tr}\left[
			\Pi_{\bar{\rho},\delta}^{n}  \
			\PIMAConepr \
			 \PIMAConeprtwopr \
			 \PIMAConepr \
			 \Pi_{\bar{\rho},\delta}^{n}  
			\ \ 
			\bar{\rho}^{\otimes n}
		 \right]
		\right\}  \nonumber \\
		&  
		\overset{\mdingfour}{\leq}
		2^{-n[H(B)-\delta]}
		\!\!\!\!\!\!\!\!\!\!\!\!
		\sum_{m_1^{\prime}\neq m_1,m_2^{\prime}\neq m_2}
		\!\!\!\!\!
		\mathop{\mathbb{E}}_{X^n_1,X^n_2}\!\!\left\{  \text{Tr}\left[
			\PIMAConepr \
			 \PIMAConeprtwopr \
			 \PIMAConepr \
			 \Pi_{\bar{\rho},\delta}^{n}  
		 \right]
		\right\}  \nonumber \\
		&  
		\overset{\mdingfive}{\leq}
		2^{-n[H(B)-\delta]}
		\sum_{m_1^{\prime}\neq m_1,m_2^{\prime}\neq m_2}
		\mathop{\mathbb{E}}_{X^n_1,X^n_2}\left\{  \text{Tr}\left[
			 \PIMAConetwopr \
		 \right]
		\right\}  \nonumber \\
		&  
		\overset{\mdingsix}{\leq}
		2^{-n[H(B)-\delta]}
		2^{n[H(B|X_1X_2) + \delta ] }
		\sum_{m_1^{\prime}\neq m_1,m_2^{\prime}\neq m_2}
		\ 1
		 \nonumber \\
		& \leq 
		%\left[  1-\epsilon\right]  ^{-2}\ 
		|\mathcal{M}_1||\mathcal{M}_2|\ 2^{-n\left[  I\left(  X_1X_2;B\right)-2\delta\right]  }.
		\label{eqn:sum-rate-bound}
		\end{align}}%
Equality \dingone follows from the independence of	the codewords.
To obtain equality \dingtwo we take the $X_1^n$ expectation over the state.
Inequality \dingthree follows from
$\PIMACtwo \
\bar{\rho}_{m_2}  \
\PIMACtwo
=
\sqrt{ \bar{\rho}_{m_2} } \
\PIMACtwo 
\sqrt{ \bar{\rho}_{m_2} }
\leq
\bar{\rho}_{m_2}$.
Inequality \dingfour is obtained by using the cyclicity of trace to
surround the state $\bar{\rho}^{\otimes n}$ by its typical projectors
and then using the property 
$\Pi_{\bar{\rho},\delta}^{n} 
\bar{\rho}^{\otimes n}
\Pi_{\bar{\rho},\delta}^{n} 
\leq
2^{-n[H(B)-\delta]}
\Pi_{\bar{\rho},\delta}^{n}$ of the average output-typical projector.
Inequality \dingfive follows from
$
 \PIMAConepr \
 \Pi_{\bar{\rho},\delta}^{n}  \
  \PIMAConepr 
  \leq 
   \PIMAConepr
   \leq I$.
Finally, inequality \dingsix follows from the bound on 
the rank of the  conditionally typical projector.

		\bigskip
		Combining the bounds from equations 
		\eqref{eq:first-error-chain},
		\eqref{eq:err-one-bound},
		\eqref{eq:err-two-bound}, 
		\eqref{eqn:sum-rate-bound} and the smoothing penalty
		from \eqref{eqn:smoothing-penality-eps}, 
		%of $2\sqrt{\epsilon}$ from equation \eqref{eq:error-bound-HN},
		we get the following bound on the expectation
		of the average error probability:%
		\begin{multline*}
		\mathop{\mathbb{E}}_{X_1^{\prime n},X_2^{\prime n}}\!\!
		\Big\{  \overline{p}_{e}\Big\}
		\leq2\left(  \epsilon+6\sqrt{\epsilon}\right) + 2\sqrt{\epsilon} \\
		%\left[  1-\epsilon\right]  ^{-1}\ 
		+4\bigg[  |\mathcal{M}_1|\ 2^{-n\left[  I\left(
		X_1;B|X_2\right)  -2\delta\right]  }+|\mathcal{M}_2|\ 2^{-n\left[  I\left(  X_2;B|X_1\right) 
		-2\delta\right]  }\\ 
		%\left[  1-\epsilon\right]  ^{-1}\ 
		+ |\mathcal{M}_1||\mathcal{M}_2|\ 2^{-n\left[
		I\left(  X_1X_2;B\right)  -2\delta\right]  }\bigg]  .
		\end{multline*}
		Thus, we can choose the message sets sizes to be 
		$|\mcal{M}_1|  =2^{n\left[  R_{1}-3\delta\right]  }$,
		and 
		$|\mcal{M}_2|   =2^{n\left[  R_{2}-3\delta\right]  }$,
		the expectation of the average error probability vanishes whenever the
		rates $R_{1}$ and $R_{2}$ obey the inequalities:
		\begin{align*}
		R_{1}-\delta   &< I\left(  X_1;B|X_2\right), \\
		R_{2}-\delta   &< I\left(  X_2;B|X_1\right),\\
		R_{1}+R_{2}-4\delta &  <I\left(  X_1X_2;B\right).
		\end{align*}
		%		Given that $\delta>0$ is an arbitrarily small number
		%		the bounds in the statement of the theorem follow.
		If the probability of error of a random code vanishes, then
		there must exist a particular code with vanishing average error probability,
		and given that $\delta>0$ is an arbitrarily small number, the bounds
		in the statement of the theorem follow.
		\end{proof}

		\bigskip

		We now state a corollary regarding the 
		``coded time-sharing''  approach to the MAC problem \cite{HK81,el2010lecture}.
		The main idea is to introduce an auxiliary random variable $Q$
		distributed according to $p_{Q}(q)$ and use the probability 
		distribution $p_{Q}(q)p_{X_1|Q}(x_1|q)p_{X_2|Q}(x_2)$ for the codebook construction.
		First we generate a random sequence $q^{n} \sim \prod_i^n p_{Q}(q_i)$,
		 and then pick the codeword sequences $x_1^{n}$ and $x_2^{n}$
		according to the distributions 
		$p_{X_1^{n}|Q^{n}}(x_1^{n}|q^{n}) \equiv \prod_{i=1}^n p_{X_1|Q}(x_{1i}|q_i)$ 
		and 
		$p_{X_2^{n}|Q^{n}}(x_2^{n}|q^{n})\equiv \prod_{i=1}^n p_{X_2|Q}(x_{2i}|q_i)$.
		%
		%		The 
		%		 (so that $x_1^{n}$ and $x_2^{n}$ are conditionally independent
		%		when given $q^{n}$). 

		\begin{corollary}[Coded time-sharing for QMAC]
		\label{cor:two-sender-QSD} 
		Suppose that the rates $R_{1}$ and $R_{2}$ satisfy
		the following inequalities:%
		\begin{align}
		R_{1} &  \leq I\left(  X_1;B|X_2Q\right)_\theta  ,\\
		R_{2} &  \leq I\left(  X_2;B|X_1Q\right) _\theta,\\
		R_{1}+R_{2} &  \leq I\left(  X_1X_2;B|Q\right)_\theta,
		\end{align}
		where the entropies are with respect to a state $\theta^{QX_1X_2B}$ of the following form:
		\be
		\sum_{x_1,x_2,q}\!\!
		p_{Q}(q)
		p_{X_1|Q}\!\left(  x_1|q\right)
		p_{X_2|Q}\!\left(  x_2|q\right) 
		\ketbra{q}{q}^{Q}
		\!\otimes\!
		\ketbra{x_1}{x_1}^{X_1}
		\!\otimes\!
		\ketbra{x_2}{x_2}^{X_2}
		\!\otimes\!
		\rho_{x_1,x_2}^{B}.
		\ee
		Then %, if the codebooks for Senders~1 and 2 are chosen as described above, 
		there exists a 
		%there exists a random choice of code for Senders~1 and 2 and a
		corresponding simultaneous decoding POVM $\left\{  \Lambda_{m_1,m_2}\right\}  $
		such that the expectation of the average probability of error is bounded above
		by $\epsilon$ for all $\epsilon>0$ and sufficiently large $n$.
		\end{corollary}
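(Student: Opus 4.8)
The plan is to reduce Corollary~\ref{cor:two-sender-QSD} to Theorem~\ref{thm:sim-dec-two-sender} by conditioning on a fixed time-sharing sequence $q^n$. First I would generate $q^n$ at random according to $\prod_{i=1}^n p_Q(q_i)$ and treat it as common randomness known to both senders and the receiver. Conditioned on $q^n$, Sender~1 draws each codeword $x_1^n(m_1)$ from $\prod_{i=1}^n p_{X_1|Q}(x_{1i}|q_i)$, Sender~2 draws $x_2^n(m_2)$ from $\prod_{i=1}^n p_{X_2|Q}(x_{2i}|q_i)$, independently across codewords and of one another. For a fixed $q^n$ the induced ensemble is a product-across-uses (though inhomogeneous) classical-quantum MAC, and I would run the argument of Theorem~\ref{thm:sim-dec-two-sender} verbatim with every typical projector replaced by its $q^n$-conditional counterpart: $\PIMACone$ becomes the projector conditioned on $(q^n,x_1^n(m_1))$ onto the typical subspace of $\bigotimes_i \mathbb{E}_{X_2|q_i}\rho_{x_{1i}(m_1),X_2}$, and similarly for the $m_2$-projector and the joint projector $\PIMAConetwo$; the average-output projector becomes the $q^n$-conditional projector for $\bigotimes_i \bar\rho_{q_i}$ with $\bar\rho_q \equiv \sum_{x_1,x_2} p_{X_1|Q}(x_1|q)\,p_{X_2|Q}(x_2|q)\,\rho_{x_1,x_2}$.

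Second, I would check that the three ingredients of the proof survive conditioning. The rank bound $\Tr[\PIMAConetwo]\leq 2^{n[H(B|X_1X_2)_\theta+\delta]}$ is replaced by the $q^n$-conditional rank bound $2^{n[H(B|X_1X_2Q)_\theta+\delta]}$, and the single-sender projectors acquire $H(B|X_2Q)_\theta$ and $H(B|X_1Q)_\theta$; the operator sandwich inequalities of the form $\Pi\,\sigma^{\otimes n}\,\Pi \leq 2^{-n[H-\delta]}\Pi$ become $q^n$-conditional versions with the entropy replaced by $H(B|Q)_\theta$; the projector trick \eqref{eqn:proj-trick-in-prf} is purely algebraic and transfers unchanged; and the gentle operator lemma (Lemma~\ref{lem:gentle-operator}) applies to the conditional ensembles since it needs only a high-overlap hypothesis, which the conditional-typicality properties (the $q^n$-conditional analogues of \eqref{eqn:QMACtypSup}--\eqref{eqn:QMACtypSupAvgBoth} in Appendix~\ref{apdx:quantum-typicality}) supply. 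Running the same error split \eqref{eq:err-one}--\eqref{eq:err-both} then yields, for each $q^n$ that is itself $Q$-typical, an expected average error bounded by $2(\epsilon+6\sqrt{\epsilon})+2\sqrt{\epsilon} + 4\big[|\mcal{M}_1|\,2^{-n[I(X_1;B|X_2Q)-2\delta]} + |\mcal{M}_2|\,2^{-n[I(X_2;B|X_1Q)-2\delta]} + |\mcal{M}_1||\mcal{M}_2|\,2^{-n[I(X_1X_2;B|Q)-2\delta]}\big]$.

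Third, I would take the expectation over $q^n \sim \prod_i p_Q(q_i)$. With probability at least $1-\epsilon$ the drawn $q^n$ lies in $\mcal{T}^{(n)}_\delta(Q)$, for which the conditional-typicality statements above hold; on the complementary event I bound the error trivially by $1$, contributing at most $\epsilon$. Hence the overall expected error over the joint randomness of $Q^n$, $\{X_1^n(m_1)\}$, $\{X_2^n(m_2)\}$ still vanishes under the stated rate constraints, and derandomization — fixing first a good $q^n$, then good codebooks — produces a single deterministic $(n,R_1-\delta,R_2-\delta,\epsilon)$ code. The time-sharing sequence $q^n$ is absorbed into the codebook description, so the final protocol needs no shared randomness.

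The main obstacle I expect is bookkeeping the conditional-typicality machinery rather than any new idea: one must state and invoke the $q^n$-conditioned versions of the typical-projector properties and of the rank bound, and be careful that the relevant states are tensor products of \emph{distinct} single-letter states indexed by $q_i$ rather than i.i.d.\ copies, so that the arguments are the "conditionally typical given $q^n$" variants throughout. This is precisely the classical coded-time-sharing reduction of \cite{HK81,el2010lecture} transplanted to the classical-quantum setting, using the conditional-typicality lemmas of Appendix~\ref{apdx:quantum-typicality} in place of their unconditioned analogues.
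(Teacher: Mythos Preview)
Your proposal is correct and takes essentially the same approach as the paper, which states simply that the proof ``proceeds exactly as the proof of Theorem~\ref{thm:sim-dec-two-sender}, but all the typical projectors are chosen conditionally on $Q^{n}$, and we take the expectation over $Q^n$ in the error analysis.'' Your typical/atypical split on $q^n$ is slightly more elaborate than necessary---the paper simply carries the expectation over $Q^n$ through the entire error analysis so that the conditional-typicality properties (which are stated in expectation) apply directly---but the substance is identical.
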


		The proof of Corollary~\ref{cor:two-sender-QSD} proceeds exactly
		as the proof of Theorem~\ref{thm:sim-dec-two-sender},
		but all the typical projectors are chosen conditionally on $Q^{n}$, 
		and we take the expectation over $Q^n$ in the error analysis.
		%		 over the time-sharing variable $Q$ as
		%		well when bounding the expectation of the average error probability. Thus, we
		%		omit the proof of the below corollary.
		%
		The statement of the QMAC capacity rates
		using coded time-sharing will be important for 
		the results in Chapter~\ref{chapter:IC}.

\subsection{Conjecture for three-sender simultaneous decoding}
			
		%
		%		We now state our conjecture regarding the existence of a quantum simultaneous
		%		decoder
		%		% for the quantum multiple access channel 
		%		for the three-sender case.
	
	We now state our conjecture regarding the existence of a quantum simultaneous
	decoder for a classical-quantum multiple access channel with three senders.
	We focus on the case of three senders, because this is the form
	that will be required in Section~\ref{sec:HK} for the achievability proof of 
	the quantum Han-Kobayashi achievable rate region \cite{HK81,S11a}.
	%	for the quantum interference channel.

	\begin{conjecture}[Three-sender quantum simultaneous decoder]\label{conj:sim-dec} \ \\
	\noindent
	Let %$x_1,x_2,x_3\rightarrow\rho_{x_1,x_2,x_3}$ be a cccq 
	$(\mcal{X}_1 \times\mcal{X}_2 \times\mcal{X}_3,\ \rho_{x_1,x_2,x_3} , \  \mcal{H}^{B} )$
	be a 
	classical-quantum multiple access channel with three senders.
	Let $p_{X_1}, p_{X_2}$ and $p_{X_3}$ be distributions on the inputs. 
	Define the following random code:
	let $\{X_1^n(m_1)\}_{m_1 \in \{1, \dots, |\mcal{M}_1|\}}$ be an
	independent random codebook distributed according to the
	product distribution $p_{X_1^n}$ and similarly and independently let $\{X_2^n(m_2)\}_{m_2 \in \{1, \dots, |\mcal{M}_2|\}}$ and $\{X_3^n(m_3)\}_{m_3 \in \{1, \dots, |\mcal{M}_3|\}}$ be independent random 
	codebooks distributed according to product distributions $p_{X_2^n}$ and $p_{X_3^n}$. 
	%
	%The rates of communication are $R_{1}=\frac{1}%
	%{n}\log_{2}\left(  |\mcal{M}_1|\right) +\delta $, $R_{2}=\frac{1}{n}\log_{2}\left(
	%|\mcal{M}_2|\right)  +\delta$, and $R_{3}=\frac{1}{n}\log_{2}\left(  |\mcal{M}_3|\right)  +\delta$,
	%respectively, where $\delta > 0$. 
	Suppose that the rates of the codebooks obey the following inequalities:%
	\begin{align*}
	R_{1}  &  \leq I\left(  X_1;B|X_2X_3\right)  _{\rho},\\
	R_{2}  &  \leq I\left(  X_2;B|X_1X_3\right)  _{\rho},\\
	R_{3}  &  \leq I\left(  X_3;B|X_1X_2\right)  _{\rho},\\
	R_{1}+R_{2}  &  \leq I\left(  X_1X_2;B|X_3\right)  _{\rho},\\
	R_{1}+R_{3}  &  \leq I\left(  X_1X_3;B|X_2\right)  _{\rho},\\
	R_{2}+R_{3}  &  \leq I\left(  X_2X_3;B|X_1\right)  _{\rho},\\
	R_{1}+R_{2}+R_{3}  &  \leq I\left(  X_1X_2X_3;B\right)  _{\rho},
	\end{align*}
	where the Holevo information quantities are with respect to the following
	classical-quantum state:%
%	\begin{multline}
	\begin{align}
	\label{eq:in-out-3mac}
	\rho^{X_1X_2X_3B} & \equiv  \sum_{x_1,x_2,x_3}p_{X_1}\!\left(  x_1\right)  p_{X_2}\!\left(  x_2\right)
	p_{X_3}\!\left(  x_3\right)  \times  \\[-3mm] 
	& \qquad \qquad \quad 
	\left\vert x_1\right\rangle \!\! \left\langle x_1\right\vert
	^{X_1}\otimes\left\vert x_2\right\rangle \!\!\left\langle x_2\right\vert ^{X_2} 
%	\\
	\otimes\left\vert x_3\right\rangle\!\! \left\langle x_3\right\vert ^{X_3}\otimes
	\rho_{x_1,x_2,x_3}^{B}. \nonumber 
	\end{align}
%	\end{multline}
	Then there exists a simultaneous decoding POVM $\left\{  \Lambda_{m_1,m_2,m_3}\right\}_{m_1,m_2,m_3}$ 
	such that the expectation of the average probability of error is bounded above by $\epsilon$
	for all $\epsilon>0$ and sufficiently large $n$:%
	\[
	\mathbb{E}\!\left\{\!  \frac{1}{|\mcal{M}_1||\mcal{M}_2||\mcal{M}_3|}\sum_{m_1,m_2,m_3}%
	\!\!\!\!\!\! \Tr\!\left[
	  \left(  I-\Lambda_{m_1,m_2,m_3}\right)  
	  \rho_{X_1^{n}\left(m_1 \right)  ,X_2^{n}\left(  m_2\right)  ,X_3^{n}\left(  m_3\right)  }
	  \right]  \!
	  \right\}
	\!\leq\epsilon,
	\]
	where the expectation is with respect to $X_1^{n}$, $X_2^{n}$, and $X_3^{n}$.
	\end{conjecture}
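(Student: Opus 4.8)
The plan is to mimic the structure of the proof of Theorem~\ref{thm:sim-dec-two-sender}, building a square-root POVM from a nested sandwich of typical projectors and then accounting for the seven distinct error events. First I would fix the input distributions $p_{X_1},p_{X_2},p_{X_3}$ and generate the three codebooks independently according to the product distributions, as in the statement. Write $\rho_{m_1,m_2,m_3}\equiv\rho_{x_1^n(m_1),x_2^n(m_2),x_3^n(m_3)}$ for the channel output, and introduce the family of averaged states: the fully averaged state $\bar\rho^{\otimes n}$, the three singly-conditioned states $\bar\rho_{x_1^n}$, $\bar\rho_{x_2^n}$, $\bar\rho_{x_3^n}$ (each obtained by averaging over the two other senders' codewords), and the three doubly-conditioned states $\bar\rho_{x_1^n,x_2^n}$, $\bar\rho_{x_1^n,x_3^n}$, $\bar\rho_{x_2^n,x_3^n}$. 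Denote by $\Pi_{\bar\rho}^n$, $\Pi_{m_i}^n$, $\Pi_{m_i,m_j}^n$, and $\Pi_{m_1,m_2,m_3}^n$ the corresponding (conditionally) typical projectors. The positive operators from which the POVM is built would be
\[
P_{m_1,m_2,m_3}\equiv\Pi_{\bar\rho}^n\,\Pi_{m_1}^n\,\Pi_{m_1,m_2}^n\,\Pi_{m_1,m_2,m_3}^n\,\Pi_{m_1,m_2}^n\,\Pi_{m_1}^n\,\Pi_{\bar\rho}^n,
\]
i.e.\ a symmetric sandwich going from the most general projector on the outside to the most specific on the inside, following some fixed chain $\emptyset\subset\{1\}\subset\{1,2\}\subset\{1,2,3\}$. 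The decoding POVM $\{\Lambda_{m_1,m_2,m_3}\}$ is then the square-root (pretty-good) measurement obtained by normalizing these operators, which automatically satisfies the POVM conditions.

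Next I would run the smoothing-and-Hayashi-Nagaoka reduction exactly as in Theorem~\ref{thm:sim-dec-two-sender}. Replace $\rho_{m_1,m_2,m_3}$ by a smoothed version $\tilde\rho_{m_1,m_2,m_3}\equiv\Pi_{m_2}^n\Pi_{m_3}^n\rho_{m_1,m_2,m_3}\Pi_{m_3}^n\Pi_{m_2}^n$ (the two ``missing'' conditionally typical projectors, so that when we average out $X_1^n$ we still have the $X_2$- and $X_3$-conditioned projectors available inside the trace), paying a smoothing penalty of $O(\sqrt\epsilon)$ via Lemma~\ref{lem:trace-inequality} and the gentle operator lemma (Lemma~\ref{lem:gentle-operator}). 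Then apply Lemma~\ref{lem:HN-inequality} with $S=P_{m_1,m_2,m_3}$ and $T=\sum_{(m_1',m_2',m_3')\neq(m_1,m_2,m_3)}P_{m_1',m_2',m_3'}$ to split the error into a ``correct detector fails'' term---bounded below by $1-\epsilon-O(\sqrt\epsilon)$ using the properties of the typical projectors and the gentle lemma, just as in the chain leading to \eqref{eq:first-error-chain}---and a ``wrong detector clicks'' term $4\sum_{(m_1',m_2',m_3')\neq(m_1,m_2,m_3)}\operatorname{Tr}[P_{m_1',m_2',m_3'}\tilde\rho_{m_1,m_2,m_3}]$, which we decompose into the seven error types according to which subset of $\{m_1',m_2',m_3'\}$ differs from the sent triple.

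For each of the seven error terms the bound would be obtained by the same two devices used in the two-sender proof: (i) the ``positive-operator annihilation'' trick, $\Pi_A\,\Pi_{\bar\rho}^n\,\Pi_{\bar\rho}^n\,\Pi_A\leq\Pi_A\leq I$ and its relatives, used to strip off the projectors that carry the wrong-codeword index on both sides, and (ii) the ``projector trick'' from \eqref{eqn:proj-trick-in-prf}, $\Pi_{m_1,m_2,m_3}^n\leq 2^{n[H(B|X_1X_2X_3)+\delta]}\rho_{m_1,m_2,m_3}$, used to convert a conditionally typical projector of the ``correctly indexed'' state into the state itself so it can be absorbed into an averaging step. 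Concretely: the three single-sender errors (e.g.\ $m_1'\neq m_1$) reduce, after averaging out $X_1^n$ and using the rank bound $\operatorname{Tr}[\Pi^n_{m_1',m_2,m_3}]\leq 2^{n[H(B|X_1X_2X_3)+\delta]}$ together with the operator upper bound $\Pi_{m_2,m_3}^n\bar\rho_{m_2,m_3}\Pi_{m_2,m_3}^n\leq 2^{-n[H(B|X_2X_3)-\delta]}\Pi_{m_2,m_3}^n$, to $|\mcal{M}_1|\,2^{-n[I(X_1;B|X_2X_3)-2\delta]}$; the three pair errors (e.g.\ $m_1'\neq m_1, m_2'\neq m_2$) similarly give $|\mcal{M}_1||\mcal{M}_2|\,2^{-n[I(X_1X_2;B|X_3)-2\delta]}$; and the all-different error gives $|\mcal{M}_1||\mcal{M}_2||\mcal{M}_3|\,2^{-n[I(X_1X_2X_3;B)-2\delta]}$ by surrounding $\bar\rho^{\otimes n}$ with $\Pi_{\bar\rho}^n$ and using $\Pi_{\bar\rho}^n\bar\rho^{\otimes n}\Pi_{\bar\rho}^n\leq 2^{-n[H(B)-\delta]}\Pi_{\bar\rho}^n$. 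A union bound then yields the claimed rate region after derandomization, with the usual $O(\delta)$ slack absorbed into the final inequalities.

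The main obstacle---and the reason this is stated as a conjecture rather than proved---is that the sandwich $P_{m_1,m_2,m_3}$ commits to a \emph{single} ordering of the nested projectors, $\{1\}\subset\{1,2\}\subset\{1,2,3\}$, whereas a clean bound on each of the seven error events wants a \emph{different} projector on the ``outside.'' In the two-sender case there are only three error events and the asymmetry between $\Pi_{m_1}^n$ and $\Pi_{m_2}^n$ could be handled by smoothing with $\Pi_{m_2}^n$ and absorbing the mismatch via the gentle lemma. With three senders the combinatorics is worse: for instance, to bound the $\{m_2'\neq m_2\}$ error one would like $\Pi_{m_1,m_3}^n$ (or $\Pi_{m_1}^n$ and $\Pi_{m_3}^n$) on the outside so that averaging $X_2^n$ produces a factor $2^{-n[I(X_2;B|X_1X_3)-2\delta]}$, but these projectors do not commute with the ones actually sitting in $P_{m_1,m_2,m_3}$, and there is no analogue of the smoothing substitution that simultaneously prepares the operator for all six ``partial-error'' averagings. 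Making the argument go through therefore seems to require either a genuinely new POVM construction (perhaps an iterated/sequential decoder, or an average over decoding orders, or Sen's ``tilting/augmentation'' technique) or a non-commutative operator inequality controlling $\Pi_A\Pi_B$ for typical projectors of states related by conditioning on different subsets of the senders. I expect this step to be where essentially all the difficulty lies; the rest of the error analysis is a routine, if lengthy, extension of Theorem~\ref{thm:sim-dec-two-sender}.
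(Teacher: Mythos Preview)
The paper does not prove this statement: Conjecture~\ref{conj:sim-dec} is left open throughout the thesis. The author explicitly notes in the discussion following Theorem~\ref{thm:sim-dec-two-sender} that the two-sender techniques ``do not generalize readily'' to three senders, and in the chapter's concluding discussion that ``because of the ad hoc nature of the proof of the two-sender simultaneous decoder, the ideas from the two-sender case cannot be applied to show that simultaneous decoding of three or more messages is possible.'' The conjecture is used only conditionally (in Theorem~\ref{thm:quantum-HK-region}), and the rest of the thesis works around it via the Chong--Motani--Garg reduction to two-sender decoding.

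Your proposal is therefore not wrong so much as it is a faithful account of why the statement remains a conjecture. The obstacle you identify in your final paragraph---that a single nested sandwich $P_{m_1,m_2,m_3}$ fixes one ordering of conditioning subsets, while the seven error events each want a different projector on the outside, and no smoothing substitution can simultaneously prepare the operator for all the required averagings---is precisely the difficulty the paper alludes to. Your earlier paragraphs, which claim that each of the seven error terms ``reduces'' to the desired exponential bound via the projector trick and positive-operator annihilation, overstate what actually goes through: for the error events not aligned with the chosen chain $\emptyset\subset\{1\}\subset\{1,2\}\subset\{1,2,3\}$ (e.g.\ $m_2'\neq m_2$ alone, or $m_1'\neq m_1,\,m_3'\neq m_3$), the projectors $\Pi^n_{m_1,m_3}$ or $\Pi^n_{m_3}$ that would be needed are simply absent from $P_{m_1,m_2,m_3}$, and inserting them via smoothing on the state side runs into the non-commutativity you describe. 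So the honest summary is that steps one and two of your outline are routine, but step three is not a proof sketch---it is a list of the bounds one would \emph{like} to obtain, together with a correct diagnosis of why the available tools do not deliver them.
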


	The importance of this conjecture stems from the fact that 
	it might be broadly useful for ``quantizing'' other results from 
	classical multiuser information theory \cite{FHSSW11}. 
	Indeed, many coding theorems in classical network information
	theory exploit a simultaneous decoding approach (sometimes known as jointly typical decoding)
	\cite{el2010lecture}. Also, Dutil and Hayden have recently put forward a related conjecture
	known as the ``multiparty typicality'' conjecture \cite{D11}, and it is likely that a proof of
	Conjecture~\ref{conj:sim-dec} could aid in producing a proof of the multiparty typicality conjecture
	or vice versa. % \cite{PatrickTalk}.
	The notion of a multiparty quantum typicality also appears
	in the problem of universal state merging \cite{bjelakovic2011universal}.
	Recent progress towards the proof of this conjecture can be found in \cite{PranabProof}.

	%and \comment{OTHER QUANTUM J-T THINGS?}
	%Patrick Banff talk
	%Ta-shma classical extractors secure agaist q advers.
	
	The conjecture naturally extends to $M$-senders, but we have described 
	the three-sender case because this is the form that will be required 
	for the Han-Kobayashi strategy discussed in Section~\ref{sec:HK}.

\section{Rate-splitting}
	\label{sec:mac-rate-splitting}

		Rate-splitting is another approach for achieving the rates of
		the classical multiple access channel capacity region
		\cite{urbanke-rate-splitting} which generalizes readily to the quantum setting 
		using the successive decoding approach in \cite{winter2001capacity}.
		
		\begin{lemma}[Quantum rate-splitting]
			%Recall the corner points $\alpha_p$ and $\beta_p$ of the MAC rate for 
			For a given $p=p_{X_1},p_{X_2}$, any rate pair 
			$(R_1,R_2)$ that lies in between the two corner points 
			of the MAC rate region $\alpha_p$ and $\beta_p$
			%	\bea \nonumber 
			%		\colvec{R_1 \\ R_2}_P 
			%			%&=& \lambda \colvec{R_1 \\ R_2}_A  + (1-\lambda)\colvec{R_1 \\ R_2}_B \\
			%		 	\!\!\!&\!\!\!\!=& \lambda \colvec{I(X_1;Y|X_2) \\ I(X_2;Y)}_{\beta_p} + (1-\lambda)\colvec{I(X_1;Y) \\ I(X_2;Y|X_1)}_{\alpha_p}
			%	\eea
			can be achieved if Sender 2 splits her message $m_2$ into 
			two parts $m_{2u}$ and $m_{2v}$ and encodes them with a \emph{split codebook} 
			and a mixing function
			$\left(\{u^n(m_{2u})\}_{m_{2u}},\{v^n(m_{2v})\}_{m_{2v}},f\right)$.
			The receiver decodes the messages in the order $m_{2u} \to m_1|m_{2u} \to m_{2v}|m_1m_{2u}$
			using successive decoding.
			%			The rates of the split codebooks are chosen so that
			The total rate for Sender~2 is the sum $R_2 = R_{2u}+R_{2v}$.
		\end{lemma}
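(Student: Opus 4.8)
The plan is to recognise the split construction as an \emph{auxiliary three-sender} classical-quantum multiple access channel and then to invoke the three-sender generalisation of Winter's successive-decoding theorem (the $M$-sender version of the proof technique of Theorem~\ref{thm:cqmac-capacity}) together with the classical rate-splitting argument of \cite{urbanke-rate-splitting}. First I would make the split explicit. Sender~2 writes $m_2=(m_{2u},m_{2v})$ with $|\mcal{M}_{2u}|=2^{nR_{2u}}$ and $|\mcal{M}_{2v}|=2^{nR_{2v}}$, generates two independent random codebooks $\{u^n(m_{2u})\}\sim\prod^n p_U$ and $\{v^n(m_{2v})\}\sim\prod^n p_V$, and transmits $x_2^n$ obtained by applying the mixing function $f\colon\mcal{U}\times\mcal{V}\to\mcal{X}_2$ letterwise, $x_{2i}=f(u_i,v_i)$; Sender~1 uses an ordinary random codebook $\{x_1^n(m_1)\}\sim\prod^n p_{X_1}$. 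The induced object is a three-sender c-q MAC $\mcal{N}'\colon(\mcal{U}\times\mcal{V}\times\mcal{X}_1)\to B$, $(u,v,x_1)\mapsto\rho^B_{x_1,f(u,v)}$, fed with the product distribution $p_Up_Vp_{X_1}$. Note that only a \emph{fixed-order} three-user decoder is needed here, so the argument does not rely on the (open) Conjecture~\ref{conj:sim-dec}.

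Second I would run a three-stage successive decoder, exactly the iterated construction from the proof sketch of Theorem~\ref{thm:cqmac-capacity} but with three layers and decode order $m_{2u}\to m_1\to m_{2v}$: a quantum instrument built by the square-root-measurement recipe from the projector sandwich $\Pi_{\bar\rho}^{n}\,\Pi_{\bar\rho_{u^n}}^{n}\,\Pi_{\bar\rho}^{n}$ recovers $m_{2u}$ and, by the gentle operator lemma (Lemma~\ref{lem:gentle-operator}), barely disturbs the residual state; a second instrument conditioned on $u^n(m_{2u})$ recovers $m_1$; a final POVM conditioned on $(u^n(m_{2u}),x_1^n(m_1))$ recovers $m_{2v}$. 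Invoking the quantum packing lemma (Lemma~\ref{lem:q-packing}) at each stage with the relevant side information and splitting the error with the Hayashi--Nagaoka inequality (Lemma~\ref{lem:HN-inequality}) as in the HSW proof, the expected average error probability vanishes provided
\begin{align*}
R_{2u}&\le I(U;B)_{\theta'}, \\
R_1&\le I(X_1;B|U)_{\theta'}, \\
R_{2v}&\le I(V;B|X_1U)_{\theta'},
\end{align*}
where $\theta'$ is the classical-quantum state with $U,V,X_1$ classical and output $\rho^B_{x_1,f(u,v)}$. Hence $R_1=I(X_1;B|U)_{\theta'}-\delta$, $R_2=R_{2u}+R_{2v}=\big(I(U;B)_{\theta'}+I(V;B|X_1U)_{\theta'}\big)-2\delta$ is achievable with a single code and no time-sharing. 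Chaining the three inequalities gives $R_1+R_2\le I(UVX_1;B)_{\theta'}$, and since $B$ depends on $(U,V,X_1)$ only through $(X_1,X_2)$ with $X_2=f(U,V)$, one has $I(UVX_1;B)_{\theta'}=I(X_1X_2;B)_\theta$; thus the achieved point always lies on the dominant facet $R_1+R_2=I(X_1X_2;B)_\theta$, for \emph{any} choice of split.

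Third, I would sweep the facet by interpolating the split, importing the construction of \cite{urbanke-rate-splitting}. The endpoints are immediate: with $U$ deterministic and $f(u,v)=v$ one gets $R_{2u}=0$, $R_1=I(X_1;B)_\theta$, $R_{2v}=I(X_2;B|X_1)_\theta$, i.e.\ the corner $\alpha_p$; with $V$ deterministic and $f(u,v)=u$ one gets $\beta_p$. Taking the one-parameter family of splits $(p_{U_t}p_{V_t},f_t)$, $t\in[0,1]$, of \cite{urbanke-rate-splitting}, each preserving the marginal $p_{X_2}$, depending continuously on $t$, and reducing at $t=0,1$ to the two deterministic endpoints, the achieved point equals $\big(I(X_1;B|U_t)_{\theta'_{t}},\,I(X_1X_2;B)_\theta-I(X_1;B|U_t)_{\theta'_{t}}\big)$ on the facet line, and $R_1(t)=I(X_1;B|U_t)_{\theta'_{t}}$ is continuous in $t$ because a Holevo information of a classical-quantum state is a continuous function of the finitely many input probabilities. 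By the intermediate value theorem $R_1(t)$ attains every value between $I(X_1;B)_\theta$ and $I(X_1;B|X_2)_\theta$, so every point of the segment $[\alpha_p,\beta_p]$ is achievable, and a standard derandomisation step yields a deterministic code. The main obstacle, and the only place where the argument departs from the point-to-point and HSW reasoning already in hand, is the bookkeeping of this last step --- verifying that the classical split family of \cite{urbanke-rate-splitting} interpolates continuously between the two deterministic corners; but this involves only entropic identities (chain rule, data processing, continuity) that Holevo quantities obey formally like their Shannon analogues, so it transfers directly, the genuinely quantum content being entirely absorbed into the already-established three-layer successive decoder.
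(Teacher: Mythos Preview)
Your proposal is correct and follows essentially the same approach as the paper: view the split as a three-sender c-q MAC, apply Winter's fixed-order successive decoding (three stages) to obtain $R_{2u}\le I(U;B)$, $R_1\le I(X_1;B|U)$, $R_{2v}\le I(V;B|UX_1)$, observe via the chain rule that the achieved point sits on the dominant facet, and then sweep the facet using the one-parameter split family of \cite{urbanke-rate-splitting}. Your write-up is in fact more explicit than the paper's sketch --- you spell out the gentle-operator/packing-lemma mechanics at each stage and make the continuity/intermediate-value argument for the sweep overt --- but the underlying idea is identical.
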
 
		
		The rate-split codebook consists of two random codebooks
		generated from $p_U$ and $p_V$ and a mixing function such that $f(U,V)=X_2$
		\cite{urbanke-rate-splitting}\footnote{
		Alternately, the mixing can be performed using a
		\emph{switch random variable}, S,
		which is a \emph{shared randomness} resource (denoted $[cc]$)
		between Sender 2 and the receiver \cite{rimoldi2001generalized}.
		}%
		.
		The rate splitting coding strategy for the two sender  quantum multiple access channel
		consists of a successive decoding strategy for the following three channels:
		\begin{align} 
			(U^n, V^n, X^n_1, X^n_2 ) 	&\rightarrow \rho_{X^n_1,X^n_2}^{B^n},	\\
			(U^n, V^n, X^n_1, X^n_2 ) 	&\rightarrow ( U^n, \rho_{X^n_1,X^n_2}^{B^n}) , \\
			(U^n, V^n, X^n_1, X^n_2 ) 	&\rightarrow ( U^n, X_1^n,  \rho_{X^n_1,X^n_2}^{B^n}).
		\end{align}
		The codebooks are constructed with the following rates:
		\begin{align}
		 	R_{2u} 	&=   I(U;B) -\delta,	\\
			R_{1}	&=   I(X_1;B|U) -\delta, \\
			R_{2v}	&=   I(V;B| UX_1 ) -\delta.
		\end{align}
		Observe that the resulting rate pair $(R_1, R_2)=(R_1,R_{2u}+R_{2v})$
		is close to the \emph{dominant facet} of the rate region, which
		is defined as $R_1+R_2=I(X_1X_2|B)$, since:
		\begin{align*}
		  R_1 + R_2 
		    	 & =   R_{2u} + R_1+R_{2v} \\
			 & = I(U;B) -\delta + I(X_1;B|U) -\delta +   I(V;B| UX_1 ) -\delta  \\
			 & = I(X_1X_2|B) - 3\delta.
		\end{align*}
		By varying the choice of the distributions $p_U$ and $p_V$
		and choosing the rates rates of the split-codebooks appropriately,
		we can achieve all the rates of the dominant facet, and therefore
		all the rates of the region.

		\vspace{-1mm}
		The choice of rate split $R_{2u} \leftrightarrow R_{2v}$ 
		depends on the properties of the channel for which we are coding.
		This dependence limits the usefulness of the rate-splitting strategy 
		in situations where there are multiple receivers.
		In general, we cannot choose the rates of the split codebooks such
		that they will be optimal for two receivers.
		Receiver 1 whose output is the system $\rho_{x_1,x_2}^{B_1}$ 
		would want the rates of the codebooks to be set 
		at $(R_{2u},R_{2v}) = (I(U;B_1),  I(V;B_1| UX_1 ) )$,
		whereas Receiver 2, with outputs $\rho_{x_1,x_2}^{B_2}$ 
		would want to set $(R_{2u},R_{2v}) = (I(U;B_2),  I(V;B_2| UX_1 ) )$.
		We will comment on this further in the next chapter. \\

\section{Example of a quantum multiple access channel}

We now show an example of a simple quantum multiple access channel
for which we can compute the capacity region.

%If you want to make it more interesting, modify the channel as follows:

%(0,0) -> |0>
%(0,1) -> |->
%(1,0) -> |+>
%(1,1) -> |1>

%Then every row and column of your channel matrix contains a pair of non-orthogonal vectors.

%You still have H(B|X_1,X_2) = 0. Moreover H(B|X_1) and H(B|X_2) are bounded above by 0.6009 for all distributions. The uniform distribution achieves the upper bound.

%Likewise, R_1 + R_2 <= 1 because the output is a qubit. The uniform distribution achieves this too.

%The achievable rate region for the uniform input distribution is, therefore, the capacity region.

\medskip

\begin{example}						\label{ex:2QMAC} 

	Consider the channel
	%	two-sender classical-quantum multiple
	%	access channel, $\mcal{N}^{X_1,X_2 \to B}$,
	%	  and calculate its capacity region. 
	%
	%	The channel 
	that takes two binary variables $x_1$ and $x_2$ as inputs and outputs 
	one of the four ``BB84'' states. The following table shows the channel 
	outputs for the different possible inputs.
	\begin{align*}
		\begin{tabular}{c || ll}
		 			& $x_1 = 0$ 		& $x_1 = 1$		\\
		\hline
		\hline
		$x_2 = 0$		& $\ket{0}^B$		& $\ket{+}^B$	\\
		$x_2 = 1$		& $\ket{-}^B$		& $\ket{1}^B$		
		\end{tabular}
	%		00 \rightarrow \ket{0}	 
	%			\qquad 	& \qquad
	%		01 \rightarrow \ket{0}	 
	%		00 &   \left\vert 0\right\rangle ,\ \ \ \ \ 001\rightarrow
	%		\left\vert +\right\rangle ,\ \ \ \ \ 010\rightarrow\left\vert 1\right\rangle
	%		,\ \ \ \ \ 011\rightarrow\left\vert -\right\rangle ,\\
	\end{align*}
	The classical-quantum state on which we evaluate information quantities is
	\begin{align*}
	\rho^{X_1X_2B} 
	 	& \equiv
	\sum_{x_1,x_2=0}^{1} 
	 p_{X_1}\!\left( x_1\right)  
	 p_{X_2}\!\left(  x_2\right)  
	\left\vert x_1\right\rangle \left\langle x_1\right\vert ^{X_1}
	\otimes
	\left\vert x_2\right\rangle \left\langle x_2\right\vert ^{X_2}
	\otimes
	\psi_{x_1,x_2}^{B},
	\end{align*}
	where $\psi_{x_1,x_2}^{B}$ is one of 
	$\ketbra{0}{0}$, $\ketbra{1}{1}$, $\ketbra{+}{+}$ or 
	$\ketbra{-}{-}$
	depending on the choice of the input bits $x_1$ and $x_2$. 
	The conditional entropy 
	$H\left(  B|X_1X_2\right)  _{\rho}$ vanishes for this state because the
	state is pure when conditioned on the classical registers $X_1$ and $X_2$.
	We choose
	$p_{X_1}\!\left(  x_1\right)$ and $p_{X_2}\!\left(x_2\right)  $ 
	to be the uniform distribution. 
	This gives the following state on $X_1$, $X_2$, and $B$:
	\begin{align*}
		\rho^{X_1X_2B} 
		&=
		\frac{1}{4}\bigg[
			\ketbra{00}{00} \!\otimes\! \ketbra{0}{0}
			+
			\ketbra{01}{01} \!\otimes\! \ketbra{-}{-}
			 \ifthenelse{\boolean{BOOKFORM}}{ \\[-2mm] & \qquad \qquad  }{}
			+
			\ketbra{10}{10} \!\otimes\! \ketbra{+}{+}			
			+
			\ketbra{11}{11} \!\otimes\! \ketbra{1}{1}	
		\bigg].
	\end{align*}
	From this state we can calculate the reduced 
	density matrix $\rho^{X_2B}=\Tr_{X_1}[ \rho^{X_1X_2B} ]$ by taking the partial
	trace over the $X_1$ system:
	\begin{align*}
		\rho^{X_2B} 
		 & =
		\frac{1}{2}\bigg[
			\ketbra{0}{0}^{X_2} \!\otimes \tfrac{1}{2}\!
				\left( \ketbra{0}{0} + \ketbra{+}{+} \right)^{B}
			 \ifthenelse{\boolean{BOOKFORM}}{ \\[-3mm] & \qquad \qquad  }{}
			+
			\ketbra{1}{1}^{X_2} \!\otimes \tfrac{1}{2}\!
				\left( 	\ketbra{-}{-} +  \ketbra{1}{1}  \right)^{B}
		\bigg],
	\end{align*}
	from which we can determine that
	the conditional entropy $H\left(  B|X_2\right)_{\rho}$ takes its maximum value of
	$H_{2}\!\left(  \cos^{2}\left(  \pi/8\right)  \right)  $ when $p_{X_1}\left(  x_1\right)$
	and $p_{X_2}\left(x_2\right)$ are uniform.

	Taking the partial trace over $X_2$ we obtain the state
	\begin{align*}
		\rho^{X_1B} 
		&=
		%		\Tr_{X_2}[ \rho^{X_1X_2B}  ] =
		\frac{1}{2}\bigg[
			\ketbra{0}{0}^{X_1} \!\otimes \tfrac{1}{2}\! 
				\left( \ketbra{0}{0} + \ketbra{-}{-} \right)^{B}
			+
			\ketbra{1}{1}^{X_1} \!\otimes \tfrac{1}{2}\! 
				\left( 	\ketbra{+}{+} +  \ketbra{1}{1}  \right)^{B}
		\bigg],
	\end{align*}
	from which we can observe that $H(B|X_1)=H_{2}\!\left(  \cos^{2}\left(  \pi/8\right)  \right)$.

	Thus, the capacity region for this channel is:
	\begin{align*}
	R_{1} &  \leq H_{2}\!\left(  \cos^{2}\!\left(  \pi/8\right)  \right) \approx 0.6009, \\
	R_{2} &  \leq H_{2}\!\left(  \cos^{2}\!\left(  \pi/8\right)  \right) \approx 0.6009, \\
	R_{1}+R_{2} &  \leq 1.
	\end{align*}

\begin{figure}[ptb]
\begin{center}
	\includegraphics[width=0.6\textwidth]{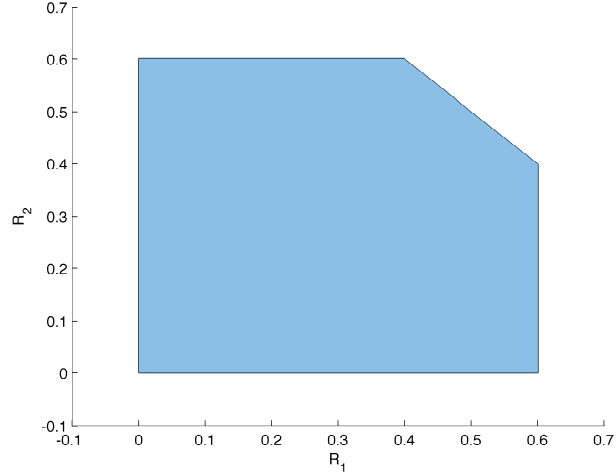}
\end{center}
\caption{The capacity region for the multiple access channel in Example~\ref{ex:2QMAC}. }%
\label{fig:Q2MAC_example}%
\end{figure}

\end{example}

%!TEX root = thesis.tex

\section{Discussion}

This concludes our exposition on the quantum multiple access channel.
The techniques used in the proof of Theorem~\ref{thm:sim-dec-two-sender}
are the tools 
that will be used throughout the remainder of this thesis.
We review them here for the convenience of the reader and
in order to highlight them in isolation from the technicalities 
in the proof of Theorem~\ref{thm:sim-dec-two-sender}.

The first idea is the POVM construction with layered typical projectors:
\be
%\Lambda_{m_1,m_2}  \ \propto  \ 
    \PIMACavg \
    \PIMACone \
    \PIMAConetwo \
    \PIMACone \ 
    \PIMACavg.
\ee
We call this a \emph{projector sandwich}.
Observe that the more specific projectors are on the inside.
%surrounded by the less specific ones.
Each of the projectors seems to be necessary in some part of the proof,
and this layering of the projectors ensures that the averaging
can be performed.

The second idea that makes the quantum simultaneous decoder
possible is the \emph{state smoothing trick}, which is to 
perform the error analysis with the unnormalized state:
\be
    \tilde{\rho}_{m_1,m_2} \ \ \equiv \ \  \PIMACtwo \ \rho_{m_1,m_2} \ \PIMACtwo,
    \label{smoothed-rho-bis}
\ee
which is close to the original state, but has the $X_2^n(m_2)$ non-typical
parts of it trimmed off.

The third idea is to use equation \eqref{propertyBBB}
%of the typical projector
%$\PIMAConetwo$ 
in order to obtain the bound
\begin{align}
    \PIMAConetwo 
    & \ \ \leq \ \ 
        2^{n[H(B|X_1X_2) + \delta] } 
        \rho^B_{m_1,m_2}.
\end{align}
We will call this the \emph{projector trick} \cite{GLM10,S11a,FHSSW11}.
%and it is important because it allows us to make the state \emph{appear} 
%exactly in the appropriate place inside the trace allowing  expression in the expression:

Because of the \emph{ah hoc} nature of the proof of the two-sender simultaneous
decoder, the ideas from the two-sender case cannot be applied to show
that simultaneous decoding of three or more messages is possible.
The techniques used in the proof are sufficiently general for the 
analysis of many problems of quantum network information theory:
quantum interference channels (Chapter~\ref{chapter:IC}),
quantum broadcast channels (Chapter~\ref{chapter:BC}),
and quantum relay channels (Chapter~\ref{chapter:RC}).
%In the next chapter we will discuss how to apply these techniques
%to the quantum interference channel.
%%
%In Chapter QBC [FIXME] we will 
%
%and QRC

%!TEX root = thesis.tex

\chapter{Interference channels}
											\label{chapter:IC}

	In an ideal world, when a sender and a receiver wish to communicate,
	the only obstacle they face is the presence of the background noise.
	Real-world communication scenarios, however,
	often involve multiple senders and multiple receivers
	sending information at the same time and in a shared
	communication medium.
	The receivers have to contend not only
	with the background noise but  also with the interference 
	caused by the other transmissions.
	The interference channel (IC) is a model for the
	effects of this \emph{crosstalk},
	which occurs 
	whenever a communication channel is shared.

	%
%	We will discuss coding strategies  for the interference channel
%	and 
%	and derive 	
%	 that can 
%	be used 
%	communication rates that 
	%	whenever
	%	multiple senders send information
	%	on a shared channel.
	%	 occurs in communication systems.

%	The development of theoretical and practical \emph{interference cancellation}
%	techniques have therefore been the subject of research
%	interest in the area of communications engineering.
	%
	%	The techniques of information theory 

%!TEX root = thesis.tex

\section{Introduction}
												\label{sec:QIC-intro}

	Interference is a big problem for all modern multiuser communication systems.
	In order to avoid interference, techniques such as 
	frequency division multiple access (FDMA)
	and time division multiple access (TDMA)
	can be used to ensure that the senders never transmit at the same time
	and in the same frequency band.
	Another approach is to use code division multiple access (CDMA)
	and allow users to transmit
	at the same time, but their signal power is 
	randomly spread over large sections of the spectrum
	so as to make it look like white noise.
	%	
	%	by treating the interfering signals as noise.

	Rather than treating the interference as noise, a receiver could
	instead decode the interfering signal and then ``subtract'' it
	from the received signal in order to reduce (or even remove) the interference.
	%
	%Because of the numerous practical applications they will have, 
	We call this approach \emph{interference cancellation},
	and such strategies are the main theme of this chapter.

	Note that the interference channel problem differs from the 
	multiple access channel problem since in this case
	the  multiple access communication is not \emph{intended}.
	A receiver in the interference channel problem
	is not \emph{required} to decode the interfering 
	messages, but he will be able to achieve better communication rates
	if he does so.
	All the decoding strategies discussed in this 
	chapter use some form of interference cancellation as 
	part of the decoding strategy.
%	they choose to decoding the interfering messages.
%	Note that the receivers in the interference channel problem are not
%	required to decode all the messages
%	differs from t
%	The classical discrete memoryless interference channel (IC) is described by a triple 
%	$(\mathcal{X}_1\!\times\!\mathcal{X}_2, p(y_1,y_2|x_1,x_2), \mathcal{Y}_1\!\times\!\mathcal{Y}_2)$,
%	where $\mathcal{X}_i$ is a finite set of possible input symbols for Sender $i$ and $\mathcal{Y}_j$ is
%	the set of possible output symbols for Receiver $j$.

	\subsection{Applications}
	
		The interference channel is an excellent model for many 
		practical communication scenarios where medium contention is an issue.

		\begin{example}[Next-generation WiFi routers]			\label{ex:nex-gen-wifi}
			Consider two 
			neighbours who want to connect to their respective WiFi routers.
			Suppose that the communication happens in the same frequency band (radio channel).
			Suppose further that the neighbours' laptops are located such that 
			they are close to their neighbour's WiFi router and far from their own.
			In such a situation, the \emph{interference} signal
			% crosstalk % between the wireless transmissions
			will be stronger than their own signal.
			%
			%		The receivers have to mitigate not only against 
			%		the noise, but also against the interference due to the
			%		other user's transmissions who's effects are equally disruptive.
			%
			%		Neighbour~1 
			Because the interference signal is ``masking'' the intended signal,
			it would be possible for the neighbours to decode it,
			and then \emph{cancel} its effects.
			Thus, we see that it can be to a neighbour's advantage to decode 
			wireless packets which are not intended for him.
			%		interfering signal.				
			%		%
			%		
			%		As theorists, we can appreciate the fact that the channel
			%		output combined with the side information of what else 
			%		was sent on the channel is more predictable then
			%		the original channel.
			%
			Decoding messages not intended for us 
			can increase the communication rate from the intended sender.
			Note that to implement such a strategy in practice would require 
			a re-engineering of the physical layer of transmission protocols.
		\end{example}

		Interference also plays an important role in
		digital subscriber line (DSL) internet connections.
		The twisted pair copper wires of the telephone system
		were not originally designed to carry high frequency
		and high bandwidth signals,
		and so there is a significant amount of  crosstalk on the wires 
		\emph{en route} to the phone company premises.
		%		yet in order for a modern VDSL standards can push tens of mega
		%
		Cross-channel interference is in fact the current
		limiting factor which imposes speed limits on the 
		order of 30Mb/s.
		The next generation VDSL technology
		includes the \texttt{G.vector} standard,
		which is essentially an interference
		%		\footnote{
		%		Technically speaking, if it is the same company
		%		which serves both clients then the
		%		will not be two receivers, but a single one.
		%		The DSL problem therefore corresponds
		%		more directly to the vector Gaussian multiple 
		%		access channel 
		%		%		vector Gaussian broadcast channel.	
		%		\cite{ginis2002vectored}.
		%		The idea remains, however, that interference cancellation is a good thing.
		%		}
		 cancellation
		scheme for a vector additive white Gaussian channel \cite{ginis2002vectored,oksman2010tu}.
		The use of the new \texttt{G.vector} VDSL standard for interference 
		mitigation will allow speeds of up to 100Mb/s to the home.

		Interestingly, Shannon's first paper on multiuser communication channels
		was on ``Two-way communication channels'', 
		which can model the simultaneous transmission of information
		in both directions over a phone line \cite{S61}.
		Shannon anticipated the importance of NEXT (near-end crosstalk) and 
		FEXT (far-end crosstalk) to communication systems fifty years in advance.
		Clearly, he was a man ahead of his times!

	\subsection{Review of classical results}

%	The theory of this problem has been studied for more than 30 years,
%	in particular for channels with two senders and two receivers \cite{Sato77,Carleial78}.
%	channels with ``very strong'' interference, and it is optimal for this class of channels \cite{carleial1975case}.
%	%
%	%	in which the crosstalk is so strong that the receivers can fully decode
%	%	the other interference  
%	%	which is one of the special cases where the exact capacity of the 
%	%	interference channel is known .
%	% maybe say I(X_s;Y|X_i) rather than I(X_s;Y)
%	% INNER BOUNDS
%	Still, the receivers can achieve better communication rates using this side information
%	when decoding the messages intended for them.
%	%partial decoding still gives us partial interference mitigation.
%	%using this strategy rather than treating the interference as noise.
%	%	In fact, t
%	%	, the approach this side information available 
%	%	at the decoder there is an improvemenh
%	%	in the communications rate can be achievedonly part of it.
%	The best achievable rate region for the general interference
%	channel is based on partial decoding
%	of the interference and is due to Han and Kobayashi \cite{HK81}.
%	%	and their solution relies on partial decoding of the interfering signals to partially mitigate
%	%	the noise due another senders' transmission.

	% Classical results
	%%%%%%%%%%%%%%%%%%%%%%%%%%%%%%%%%%%%%%%%%%%%%%%%%
%	The study of the interference channel dates as far back as Shannon,
%	who recognized the practical importance of this communication scenario  \cite{S61}.
	% 
	The seminal papers by Carleial \cite{Carleial78} and 
	Sato \cite{Sato77} defined the interference channel problem 
	in its present form and established many of the fundamental results.
	Finding the capacity region of the general discrete memoryless interference channel (DMIC) is 
	still an open problem, but there are certain special cases where the 
	capacity can be calculated.
	For channels with ``strong'' \cite{sato1981capacity} 
	and ``very strong'' \cite{carleial1975case} interference,
	the full capacity region can be calculated. 
	%[FIXME]: clarify who did what...
	The capacity-achieving decoding strategies for both of the above
	special cases require the receivers to completely decode 
	the interfering messages.
	%Carleial offered the first surprising result for the interference
	%channel~\cite{carleial1975case}, by demonstrating that each sender can
	%achieve the same rates of communication as if there is no interference at all if the interference from the
	%other sender's transmission is \textquotedblleft very
	%strong.\textquotedblright\ Carleial's solution is to have each receiver decode
	%the other sender's message first and follow by decoding the partner sender's
	%message, rather than each receiver simply treating the other sender's
	%transmission as noise. Thus, Carleial's strategy demonstrates that we can
	%achieve improved communication rates by taking advantage of interference
	%rather than treating it as an obstacle. 

	%Sato then gave a full characterization
	%of the capacity of the Gaussian interference channel in the setting of
	%\textquotedblleft strong\textquotedblright%
	%\ interference~\cite{sato1981capacity}, by appealing to an earlier result of
	%Ahlswede regarding the capacity of a compound multiple access
	%channel~\cite{A74}. 

	For an arbitrary interference channel, it may only be possible to \emph{partially}
	decode the interfering signal.
	%	An early version of this idea appeared in \cite{Carleial78} for
	%	the Gaussian case.	
	The Han-Kobayashi rate region $\mcal{R}_{\text{HK}}$,
	which is achieved by using partial interference cancellation,
	is the best known achievable rate region
	for the general discrete memoryless interference channel~\cite{HK81}. 
	Recently, Chong, Motani and Garg used a different encoding scheme 
	to obtain an achievable rate region, $\mcal{R}_{\text{CMG}}$, which contains
	the Han-Kobayashi rate region \cite{CMG06}.
	Soon afterwards Kramer proposed a compact description of the Han-Kobayashi 
	rate region, $\mcal{R}_{\text{HK}}^c$, which involved fewer constraints \cite{K06}.
	Han and Kobayashi published a comment
	regarding the Fourier-Motzkin elimination procedure
	used to derive the bounds \cite{kobayashi2007further},
	but the question remained whether the above rate regions
	are all equivalent or whether one is strictly larger
	than the others.
	%	 the CMG rate region
	%	$\mcal{C}_{\text{HK}} \subsetneq \mcal{C}_{\text{CMG}}$.
	%	whether remained about the equivalence
	%	problem still showing that while 
	%	the on the two rate regions 
	%
	The matter was finally settled by Chong, Motani, Garg and Hesham El Gamal,
	who showed that all three rate regions are in fact equivalent:
	\be
	\mcal{R}_{\text{HK}} \equiv 
	 \mcal{R}_{\text{CMG}} \equiv
	\mcal{R}_{\text{HK}}^c,
	\ee
	when the union is taken over all possible input distributions \cite{CMGE08}.

	There has been comparatively less work on proving outer bounds
	on the capacity region for general discrete memoryless 
	interference channels \cite{Sato77,Carleial83}.
	%
%	Of these, the outer bound of Sato  \cite{Sato77} is proved a simple outer bound 
%	by adapting his previous results on the broadcast channel \cite{Sato1978}.
%	A further outer bound was presented in \cite{Carleial83}.

% GAUSSIAN
%	Etkin \emph{et al.}  \cite{etkin2007gaussian} proved an outer bound
%	on the Gaussian interference channel that is very close 
%	to the Han-Kobayashi inner bound.
% \cite{costa1987capacity}

		\ifthenelse{\boolean{BOOKFORM}}
		{
		\vspace{-2mm}
		}
		{}

	\subsection{Quantum interference channels}

		\ifthenelse{\boolean{BOOKFORM}}
		{
		\vspace{-2mm}
		}
		{}

		In this chapter, we apply and extend insights from classical 
		information theory to the study of the quantum interference channel (QIC):
		%
		%	  \emph{cc-qq interference channel} 
		\be
		(\cX_1 \times \cX_2, \ 
			\mathcal{N}^{X_1X_2 \to B_1B_2}\!\left(x_1,x_2\right) 
			\equiv 
			\rho^{B_1B_2}_{x_1,x_2}, \ 
			\mathcal{H}^{B_1} \otimes \mathcal{H}^{B_2}),
		\ee
		which is a model for a general communication network with two classical inputs and
		a quantum state $\rho^{B_1B_2}_{x_1,x_2}$ as output.
		%
%	
%	allows us to study the information decoding problem 
%	are more general communication models than classical  for  scenarios  channels which involve quantum
%	degrees of freedom.
%	This theme continues for the sduty
		The classical-quantum interference channel 
		can model physical systems such as fibre-optic cables and 
		free space optical communication channels \cite{GSW11bosonic}.

		\begin{wrapfigure}{r}{0pt}
	\begin{tikzpicture}[node distance=1.8cm,>=stealth',bend angle=45,auto]

	  \begin{scope}
		% QIC channel
		\node [cnode] (ICTx1) [ label=left:Tx1   ]                            {\footnotesize $x_1$};
		\node [cnode] (ICTx2) [ label=left:Tx2, below of=ICTx1,yshift=+5mm]		{\footnotesize $x_2$};
		\node [qnode] (ICRx1) [ label=right:Rx1, right of=ICTx1]	{\tiny $\rho^{B_1}_{x_1,x_2}$}
			edge  [pre]             		node[swap]  	{$\lightning$}	(ICTx1)
			edge  [pre,draw=red]								(ICTx2);
		\node [qnode] (ICRx2) [ label=right:Rx2, right of=ICTx2] {\tiny $\rho^{B_2}_{x_1,x_2}$}
			edge  [pre,draw=red]								(ICTx1)
			edge  [pre]             node		   {$\lightning$}		(ICTx2) ;
	  \end{scope}
	  \begin{pgfonlayer}{background}
	    \filldraw [line width=4mm,join=round,black!10]
	      ([xshift=-3mm,yshift=+2mm]ICTx1.north -| ICTx1.east) rectangle ([xshift=+3mm]ICRx2.south -| ICRx2.west);
	  \end{pgfonlayer}

	\end{tikzpicture}

	\caption{\small 
	The quantum interference channel $\rho^{B_1B_2}_{x_1,x_2}$.
%	 (QICs),
%	QIC $\equiv \left\{ \rho^{B_1B_2}_{x_1,x_2} \right\}$ 
	}
	\label{fig:qic}
		\end{wrapfigure}
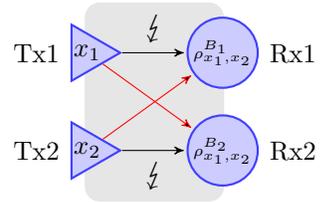

%	A simpler channel is the \emph{classical-quantum} \mbox{(\textit{c-q})} interference channel,
%	where only the outputs are quantum.% states.

%
%	\begin{definition}%[cc-qq interference network]
%	            A two party \emph{classical-quantum interference channel} is a triple 
%	            $(\cX_1 \times \cX_2, \mathcal{N}^{X_1X_2 \to B_1B_2}\!\left(x_1,x_2\right) \equiv \rho^{B_1B_2}_{x_1,x_2},
%	              \mathcal{H}^{B_1} \otimes \mathcal{H}^{B_2})$,
%	            which models a  general communication network with two classical inputs and
%	            a quantum state $\rho^{B_1B_2}_{x_1,x_2}$ as output.
%	            %when the inputs are $x_1$ and $x_2$.
%	\end{definition}

	We fully specify a \textit{cc-qq} interference channel by the set of output states it produces 
	$\left\{\rho^{B_1B_2}_{x_1,x_2}\right\}_{x_1\in \cX_1, x_2 \in \cX_2}$
	for each possible combination of inputs. 
	Since Receiver~1 does not have access to the $B_2$ part of the state $\rho_{x_1,x_2}^{B_{1}B_{2}}$,
	we model his state as 
	$\rho_{x_1,x_2}^{B_{1}} = \text{Tr}_{B_2}\!\!\left[  \rho_{x_1,x_2}^{B_{1}B_{2}}  \right]$,
	where $\text{Tr}_{B_2}$ denotes the partial trace over Receiver~2's system. 
	%	state with the reduced density operator 
	Similarly, the output state for Receiver~2 is given by 	
	$\rho_{x_1,x_2}^{B_{2}} = \text{Tr}_{B_1}\!\!\left[  \rho_{x_1,x_2}^{B_{1}B_{2}}  \right]$.

	A classical interference channel with transition probability function  
	$p(y_1,y_2|x_1,x_2)$ is a special case of 
	the \textit{cc-qq} channel where the output states are of the form
	$\rho^{B_1B_2}_{x_1,x_2} = \sum_{y_1, y_2} p(y_1,y_2|x_1,x_2) 
	\proj{y_1}^{B_1}\!\otimes\!\proj{y_2}^{B_2}$ where
	$\{\ket{y_1}\}$ and $\{\ket{y_2}\}$ are orthonormal bases of $\cH^{B_1}$ and $\cH^{B_2}$.

	\subsection{Information processing task}

		The task of communication over an interference channel can be described as follows.
		Using $n$ independent uses of the
		channel, the objective is for Sender~1 to communicate with Receiver~1 at a rate $R_1$
		  and for Sender~2 to communicate with Receiver~2 at a rate $R_2$.

		If there exists an $(n,R_1,R_2,\epsilon)$-code for the classical-quantum 
		interference channel, then the following conversion is possible:
		\be
			n \cdot \mcal{N}^{X_1X_2 \to B_1B_2}  
			\ \ \overset{ ( 1 -\epsilon)}{\longrightarrow} \ \ 
			nR_1 \cdot  [c^1 \to c^1]
			\ + \
			nR_2 \cdot  [c^2 \to c^2]. \nonumber
		\ee		
		Note that we are only interested in the communication 
		rates from the sender to the intended receiver,
		and we ignore the communication capacity of the
		crosslinks: $[c^1 \to c^2]$ and $[c^2 \to c^1]$.
		%
		%	diagonal communications however the interference in the channel has a lot of
		%	inter
		%
%		If the crosstalk in the channel is significant, however, 
%		then it is to the Receivers' advantage to decode the
%		interfering signal.
%		Indeed, the two special cases where the capacity of the
%		interference channel can be calculated both involve 
%		the Receivers decoding messages not intended from them
%		and performing \emph{interference cancellation}.

%		to obtain communication rates of $R_1$ from Sender 1 to Receiver 1,
%		and $R_2$ from Sender 2 to Receiver 2.
%		The aim of the interference channel task is to use $n$ independent copies of the
%		channel %, denoted $\mathcal{N}^{\otimes n}$ 
%		in order to obtain communication rates of $R_1$ from Sender 1 to Receiver 1,
%		and $R_2$ from Sender 2 to Receiver 2.
		%
		More specifically, Sender~1 chooses a message $m_1$ from a message set
		$\mathcal{M}_1\equiv \left\{  1,2,\ldots,|\mathcal{M}_1|\right\} $ where $|\mathcal{M}_1|=2^{nR_{1}}$, 
		and Sender~2 similarly chooses a message $m_2$ from a message set 
		$\mathcal{M}_2 \equiv \left\{  1,2,\ldots,|\mathcal{M}_2|\right\}  $ 
		where $|\mathcal{M}_2|=2^{nR_{2}}$. 
		Senders~1 and 2 encode their messages as codewords $x_1^{n}\!\left(  m_1\right)\in \mathcal{X}_1^n$
		and $x_2^{n}\!\left(  m_2\right) \in \mathcal{X}_2^n$ respectively,
		which are then input to the channel.
		%   &  \equiv x_{1}\left(  m_1\right)x_{2}\left(l\right)\cdots\ x_{n}\!\left(  m_1\right)  ,\\
		%		x_2^n(m_2)   &  \equiv y_{1}\left(  m\right)  \ y_{2}\left(
		%		m\right)  \ \cdots\ y_{n}\left(  m\right)  .
		%		\end{align*}
		%
		The output of the channel is an $n$-fold tensor product state of the form:
		\be
			\mathcal{N}^{\otimes n}\!\left( x_1^{n}(m_1), x_2^{n}(m_2) \right)
			\equiv			
			\rho_{x_2^{n}\left(  m_1\right),  x_2^{n}\left(  m_2\right)  }^{B_{1}^{n}B_{2}^{n}} \ \  
			\in  \mcal{D}(\mathcal{H}^{B_1^n B_2^n}).
		\ee
%		\be
%			\rho_{x_{11}\!\left(  m_1\right)  ,x_{21}\!\left(  m_2\right)  }^{B_{1,1}B_{2,1}}
%			\otimes
%			\cdots
%			\otimes
%			\rho_{x_{1n}\!\left(  m_1\right)  ,x_{2n}\!\left(m_2\right)  }^{B_{1,n}B_{2,n}}.
%		\ee

%		Receiver~1 has access to the systems $B_{1,i}$ for all $i$, and Receiver~2 has
%		access to the systems $B_{2,i}$.

%		\textbf{Decoding}. 

\begin{figure}[ptb]
\begin{center}
\includegraphics[width=0.6\textwidth]{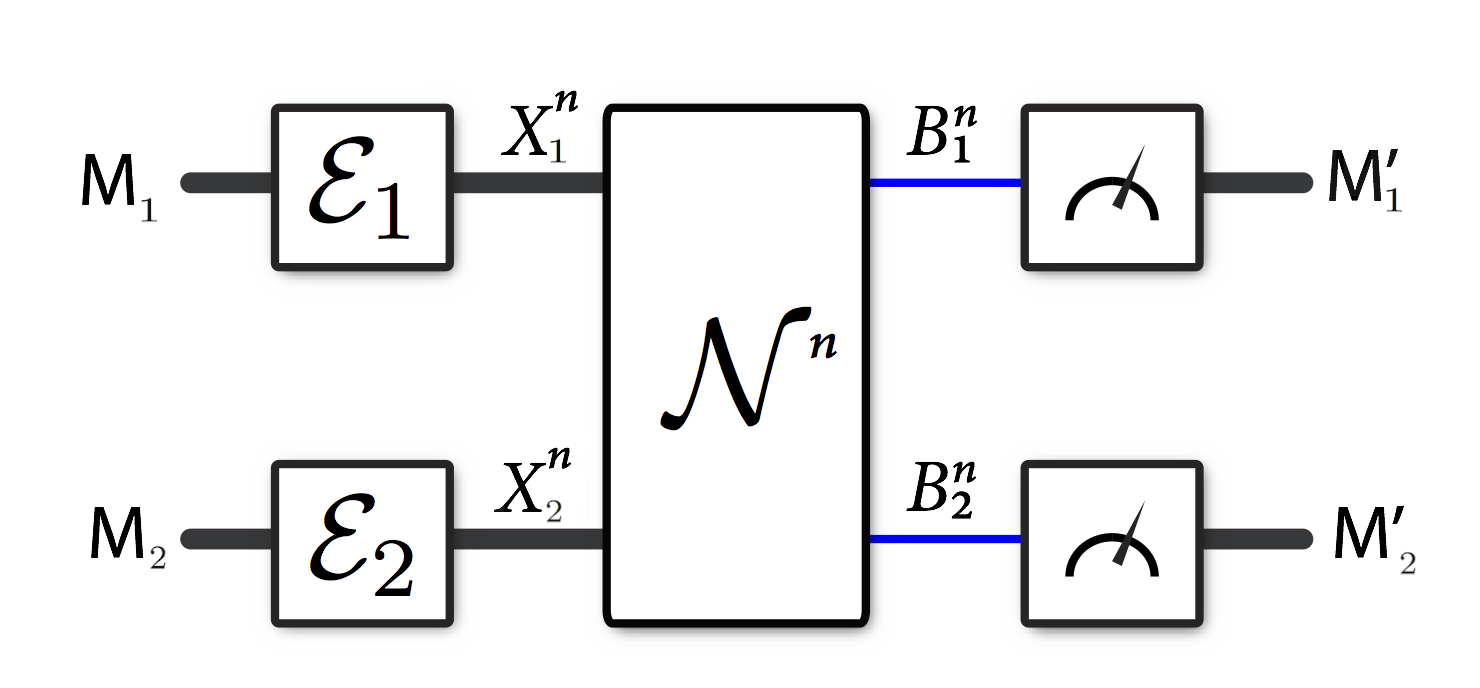}
\end{center}
\caption{
Diagram showing the parts of a classical-quantum interference channel 
\emph{code} for $n$ copies of the channel.
% for associated with the communication task.
%Let $\mathcal{N}$ represent the quantum interference channel with two
%classical inputs $X_1$ and $X_2$ and two quantum outputs $B_{1}$ and $B_{2}$.
Sender~1 selects a message $m_1$ to transmit (modeled by a random variable $M_1$),
and Sender~2 selects a message $m_2$ to transmit (modeled by $M_2$). Each sender
encodes their message as an $n$-symbol codeword suitable for transmission
over the channel. The receivers each perform a 
quantum measurement in order to decode the messages that their partner sender transmitted.}
\label{fig:info-task}%
\end{figure}

%		Receiver~1 performs a measurement on his systems in order
%		to determine the message that Sender~1 transmits, and Receiver~2 similarly
%		performs a measurement to obtain Sender~2's message. More specifically,
		\noindent
		To decode the message $m_1$ intended for him,
		Receiver~1 performs a positive operator-valued measure (POVM) 
		$\left\{ \Lambda_{m_1}\right\}  _{m_1\in\left\{  1,\ldots,|\mathcal{M}_1|\right\}  }$ on
		the system $B_{1}^n$, the output of which we denote $M^{\prime}_1$. 
		For all $m_1$, $\Lambda_{m_1}$ is a positive semidefinite operator and $\sum_{m_1}\Lambda_{m_1}=I$.
		Receiver~2 similarly performs a POVM 
		$\left\{  \Gamma_{m_2}\right\} _{m_2\in\left\{  1,\ldots ,|\mathcal{M}_2|\right\}  }$
		  on the system  $B_{2}^n$,
		and the random variable associated with this outcome is denoted $M^{\prime}_2$.
				
		%		Figure~\ref{fig:info-task}\ depicts all of these steps.%
		%		\begin{figure}
		%		[ptb]
		%		\begin{center}
		%			\includegraphics[
		%			natheight=1.932900in,
		%			natwidth=4.427000in,
		%			%height=1.5316in,
		%			width=3.4705in
		%			]%
		%			{interference-info-proc-task.pdf}%
		%			\caption{The information processing task for the quantum interference channel.
		%			Let $\mathcal{N}$ represent the quantum interference channel with two
		%			classical inputs $X$ and $Y$ and two quantum outputs $B_{1}$ and $B_{2}$.
		%			Sender~1 selects a message $l$ to transmit (modeled by a random variable $L$),
		%			and Sender~2 selects a message $m$ to transmit (modeled by $M$). Each sender
		%			encodes their message as a codeword and transmits the codeword over many
		%			independent uses of a quantum interference channel (the figure does not
		%			display the many uses). The receivers each receive the quantum outputs of the
		%			channel and perform a measurement to determine the message that their partner
		%			sender transmitted.}%
		%			\label{fig:info-task}%
		%		\end{center}
		%		\end{figure}

		An error occurs whenever Receiver~1's measurement outcome is different from the message sent by Sender~1
		($M^{\prime}_1 \neq m_1$) or Receiver~2's measurement outcome is different from the message sent by Sender~2 ($M^{\prime}_2 \neq m_2$).
		%		 decoded by Receiver~2 is not
		%		equal to the message send by Sender~2.
		%
		The overall probability of error for message pair $(m_1,m_2)$ is
		\begin{align*}
			p_{e}\!\left(  m_1,m_2\right)   
				&  \equiv
				\Pr\left\{  
					(M^{\prime}_1,M^{\prime}_2)\neq(m_1,m_2) %\ |\ M_1=m_1,\ M_2=m_2
				\right\} \\
				&  =
				\text{Tr}\!
				\left\{  
					\left(  I-\Lambda_{m_1}\otimes\Gamma_{m_2}\right)
					\rho_{x_2^{n}\left(  m_1\right)  x_2^{n}\left(  m_2\right)  }^{B_{1}^{n}B_{2}^{n}} 
				\right\},
		\end{align*}
		where the measurement operator $\left(  I-\Lambda_{m_1}\otimes\Gamma_{m_2}\right)$ represents
		the complement of the correct decoding outcome.

		    \begin{definition}%[quantum interference channel code]
			An $(n,R_1,R_2,\epsilon)$ code for the interference channel consists
			of two codebooks 
			$\{x^n_1(m_1)\}_{m_1\in \mathcal{M}_1}$
			and 
			$\{x^n_2(m_2)\}_{m_2\in \mathcal{M}_2}$,
			and two decoding POVMs  
			$\left\{ \Lambda_{m_1}\right\}_{m_1\in \mathcal{M}_1}$ 
			and 
			$\left\{  \Gamma_{m_2}\right\}_{m_2\in \mathcal{M}_2}$,
			such that the average probability of error 
			$\overline{p}_{e}$ is bounded from above by $\epsilon$:%
			\begin{align}
				\overline{p}_{e}  \!
				&  \!\equiv \!
					\frac{1}{|\mathcal{M}_1||\mathcal{M}_2|}\sum_{m_1,m_2}p_{e}\!\left(  m_1,m_2\right) %\nonumber  \\
				%			& \! = \!\!
				%				\frac{1}{|\mathcal{M}_1||\mathcal{M}_2|}
				%				\!\!\!\!\!
				%				\sum_{ \ \ m_1,m_2}
				%				\!\!
				%				\text{Tr}\!
				%				\left\{  
				%					\!\left(  I\!-\!\Lambda_{m_1}\!\otimes\!\Gamma_{m_2}\right)
				%					\rho_{x_2^{n}\left(  m_1\right)  x_2^{n}\left(  m_2\right)  }^{B_{1}^{n}B_{2}^{n}} \!
				%				\right\}  \nonumber \\
				 \leq \epsilon.
			\end{align}

		    \end{definition}

%		[Omar: I would remove this paragraph]
%		
%		In this paper, we use the term ``coding strategy'' to refer to specifically to the 
%		construction of the decoding POVMs and assume that the codebooks used for encoding are 
%		standard random codebooks where each codeword $x^n$ consisting of 
%		$n$ symbols independently drawn from a specified distribution $p_X(x)$.
		
		A rate pair $\left(  R_{1},R_{2}\right)  $ is \textit{achievable} if there exists
		an $\left(  n,R_{1}-\delta,R_{2}-\delta,\epsilon\right)  $ quantum interference channel code
		for all $\epsilon,\delta>0$ and sufficiently large $n$. The channel's \textit{capacity
		region} %of the quantum interference channel
		  is the closure of the set
		of all achievable rates.

	\subsubsection{Interference channel as two disinterested MAC sub-channels}
	
		The quantum interference channel described by 
		$(\mathcal{X}_1 \times \mathcal{X}_2,  \rho_{x_1,x_2}^{B_1B_2}, 
		\mathcal{H}^{B_1} \otimes \mathcal{H}^{B_2})$
		induces two quantum multiple access (QMAC) sub-channels.
		%\MACone is associated with the state $\rho_{x_1,x_2}^{B_1}$ that is available at 
		% Receiver 1 and is obtained by partial trace over
		%the system of Receiver~2: %in the output state of the interference channel 
		%$\rho_{x_1,x_2}^{B_1} = \mathop{\textrm{Tr}}_{B_2}\!\!\left\{\rho_{x_1,x_2}^{B_1B_2}\right\}$.
		More specifically \MACone is the channel to Receiver~1 given by
		$(\mathcal{X}_1 \times \mathcal{X}_2,  \rho_{x_1,x_2}^{B_1}= 
		\mathop{\textrm{Tr}}_{B_2}\!\!\left\{\rho_{x_1,x_2}^{B_1B_2}\right\}, \mathcal{H}^{B_1} )$,
		and \MACtwo 
		is the channel to Receiver~2 defined by $(\mathcal{X}_1 \times \mathcal{X}_2,  \rho_{x_1,x_2}^{B_2}, \mathcal{H}^{B_2} )$.
		Thus, one possible coding strategy for the interference channel is to build a codebook for each 
		multiple access channel that is decodable for \emph{both} receivers. 
%		%
		For this reason, the coding theorems which we developed for quantum multiple access channels 
		in Chapter~\ref{chapter:MAC} will play an important role in this chapter.
		
		Note however that the IC \emph{problem specification} 
		does not require that Receiver 1 
		be able to decode $m_2$ correctly 
		%($M^{\prime}_2 = m_2$) 
		nor does it specify that Receiver 2 needs to be able to decode
		the message sent by Sender 1 correctly,  %($M^{\prime\prime}_1 = m_1$),
		though most interesting coding strategies involve at 
		least partial decoding of the crosstalk messages.			
		If we take the logical \texttt{and} of the 
		two MAC subtasks, i.e., we require both receivers to be able 
		to decode the messages from both senders,
		then this communication task
		is known as the \emph{compound multiple access channel} problem 
		%or the $(p, T_{22}, II)$ problem as it is called when in the 
		%		original Ahlswede paper where it is solved 
		\cite{Ahlswede1974}.

	%	There is a strong connection between the multiple access
	%	channel and the interference channel because the interference channel induces
	%	a multiple access channel to each of its receivers~\cite{HK81,el2010lecture}, and for this reason, it is
	%	important to understand two different decoding approaches for obtaining the capacity of
	%	the multiple access channel. 
	%	In our forthcoming discussions, we speak of multiple
	%	access channels with two senders and one receiver, but we state our conjecture
	%	regarding the existence of a quantum simultaneous decoder
	%	for a channel with three senders and one receiver---the Han-Kobayashi coding strategy
	%	requires this form of the conjecture~\cite{HK81}.

	%	In the remainder of this  section we discuss
	%	\ifthenelse{\boolean{WITHHMINS}}{four}{three} 
	%	different decoding strategies for the quantum 
	%	multiple access channel.

\subsection{Chapter overview}

	In this chapter, we use the theorems from Chapter~\ref{chapter:MAC}
	for quantum multiple access channels to prove coding 
	theorems for quantum interference channels.

	In Section~\ref{sec:capacity-results}, we prove capacity theorems 
	for two special cases of the interference channel.
	In Theorem~\ref{thm:carleial} we calculate the capacity region of the quantum
	interference channel with ``very strong''  interference (see Definition~\ref{def:very-strong})
	using the successive decoding strategy from
	Theorem~\ref{thm:cqmac-capacity}.
	In Theorem~\ref{thm:strong-int}, we prove the capacity of the channels with ``strong'' 
	interference (see Definition~\ref{def:strong}) using the simultaneous decoding strategy 
	derived in Theorem~\ref{thm:sim-dec-two-sender}.
	%
%	The matching outer bounds follow from 
%	interference cancellation
%	The optimality part of both of these results 
%	comes 
%	Both of these theorems requires the full 
%	 decoding of all messages 
%	the decoding of the messages from both 
%	as in Example~\ref{ex:nex-gen-wifi}.

%	Sometimes we cannot fully decode the interference signals.
%	In that case a partial interference canellation strategy can be sued,
%	where only part of the interfering messages are decoded.
%	%
	In Section~\ref{sec:HK} we discuss the quantum
	Han-Kobayashi coding strategy, 
	where the messages of the senders are split into two parts 
	so that 
%	 a public part and a common
%	and both receivers are able to decode the common parts
%	and 
	the receivers can perform partial interference cancelation~\cite{HK81}.
	The quantum Han-Kobayashi coding strategy (Theorem~\ref{thm:quantum-HK-region}) requires 
	the use of quantum simultaneous decoding for multiple access channels 
	with three senders which we described in Conjecture~\ref{conj:sim-dec}.
	%	
	%	who'
	%	It is not obvious how to extend the techniques used to prove 
	%	%two-sender simultaneous decoder from 
	%	Theorem~\ref{thm:sim-dec-two-sender} to the three-sender case.
	%	We formulate  concerning the existence of a quantum 
	%	simultaneous decoder for three-sender quantum multiple access channels.
	%	[WRITEME]
	%	HK -- Conj -- so CMG via 2MAC -- can we do even without 2sim mAC ? -- 
	%	afterthough SEN
	
	The main contribution of this chapter is to show that the rates of
	the Han-Kobayashi rate region can be achieved without the need for Conjecture~\ref{conj:sim-dec}.
	We will show this in Section~\ref{sec:QCMGvia2MAC},
	where we present an achievability proof for the quantum Chong-Motani-Garg 
	rate region which only uses the two-message simultaneous 
	decoding technique from Theorem~\ref{thm:sim-dec-two-sender}.
	%
	%	for codebooks constructed the superposition coding
	%
	Recall that the Chong-Motani-Garg region is equivalent to the Han-Kobayashi region.
	%	thus showing that all the rates of the Chong-Motani-Garg region are achievable
	%	implies that the rates of the Han-Kobayashi rate region are achievable.
	%
	%	We see therefore, that it is possible to achieve the 
	%	it is actually possible rate region 
	%	is equivalent to the  rate region.
	%
	%	because its proof requires no new 	
	%
	%	We will show that simultaneous decoding of two messages at a time 
	%	is sufficient to achieve all rates in the CMG rate region.
	%
	%	Thus, an achievability proof of the CMG rate region
	%	can be obtained using a variant of the
	%	simultaneous decoding techniques which we developed in Section~\ref{sec:mac-simult-decoding}.
	%		
	%	showing that finite set of three decoding strategies involving only

	Note that the achievability of the quantum Chong-Motani-Garg
	rate region was first proved by Sen in \cite{S11a}
	%	 proved the achievability of the quantum 
	using a different error analysis technique based on an \emph{intersection projector}
	and a careful analysis of the geometric properties of the CMG rate region.
	%
	%	The analysis in \cite{S11a} uses  in order to show that
	%	the quantum decoding is possible.
	The alternate proof given in Section~\ref{sec:QCMGvia2MAC}
	uses the simultaneous decoding techniques developed in 
	Section~\ref{sec:mac-simult-decoding} and an interesting geometric 
	argument  by Eren \Sasoglou \ \cite{sasoglu2008successive}.

	The arguments in Section~\ref{sec:QCMGvia2MAC} show that
	we can reduce the decoding requirements from three-message simultaneous
	decoding to two-message simultaneous decoding and still achieve all
	the rates in the Han-Kobayashi rate region.
	Perhaps, it might be 	possible to remove the need for a simultaneous decoder altogether.
	Can the Han-Kobayashi rate region be achieved using only successive decoding?
	In Section~\ref{sec:rate-succ-decoding}, we discuss the difference between 
	interference channel codes (both classical and quantum)
	based on successive decoding and those based on simultaneous decoding.
	In particular, we show that rate-splitting strategies
	based on successive decoding are not a good choice for interference channel codes,
	contrary to what has been claimed elsewhere
	\cite{sasoglu2008successive,yagi2011multi}.

	Finally, we obtain Theorem~\ref{thm:sato-weaker},
	which is a quantum analogue of Sato's outer bound
	for the interference channel.
	%	which gives an outer bound for 
	%	all channels by reducing the QIC to the quantum multiple access
	%	channel problem.

%!TEX root = thesis.tex

\section{Capacity results for special cases}
	\label{sec:capacity-results}

	In this section, we consider decoding strategies
	where the receivers decode the messages 
	from both senders.
	We show that this decoding strategy is optimal for the special cases 
	of the interference channel with  ``very strong'' and  	``strong'' interference.
	
%%		Thus, a possible coding strategy for the interference channel is to pick an encoding strategy 
%%		that such that the messages will be decodable for \emph{both} receivers. 

%		In Section \ref{sec:mac-succ-decoding} on the multiple access channel, we saw that a successive 
%		decoding strategy can be used to achieve certain rate tuples. Then, by time-sharing between the 
%		different codes achieving these rates, it is possible to construct good codes for the full capacity 
%		region of the multiple access channel. 
%		As mentioned earlier, the interference channel can be 
%		naturally seen as two multiple access channels with the same senders. 
%		
%		
%		It is important to see here that we have to use the \emph{same} encoding for both multiple access channels,

%		
%		For this reason, using the \emph{existence} of good 
%		codes achieving all tuples in the capacity region is not enough: we also want the code itself not to 
%		depend on the particular channel considered. This is why time-sharing between successive 
%		decoding strategies is not well-adapted here. In fact, decoding a code constructed by time-sharing 
%		between two codes $\cC_1$ and $\cC_2$ assumes that both $\cC_1$ and $\cC_2$ are decodable, 
%		and these codes---that are obtained using successive decoding---do in general depend on the 
%		properties of the channel for which one is coding.
%		

\subsection{Very strong interference case}
%	Rates achievable by successive decoding}
  		\label{sec:rate-succ-decoding-IC}

		%which corresponds to the corner point 
		%$\alpha_p^{(1)}=(I(X_1;B_1),I(X_2;B_1|X_1))$  of the \MACone rate region,   
%		Let the decoding ordering of Receiver 1 be represented by a permutation $\pi_1$:
%		$\pi_1=(1,2)$ when decoding in the order $m_1 \to m_2$, 
%		and $\pi_1=(2,1)$ for the alternate decoding
%		order.
		% that achieves the rate point $\beta_p^{(1)}=(I(X_1;B_1|X_2),I(X_2;B_1))$.
		%
%		We similarly let $\pi_2=(1,2)$ and $\pi_2=(2,1)$ denote the two decode orderings
%		for Receiver~2.
		% that achieve the points $\alpha_p^{(2)}$ and $\beta_p^{(2)}$ in
		%		which  correspond 
		%		to the dominant vertices of 
		%the MAC$_2$ rate region.
		%
		If we use a successive decoding strategy at both receivers,
		and calculate the best possible rates that are compatible 
		with both receivers' ability to decode, we obtain an achievable rate region.
		% the messages intended for them,
		Consider the decoding strategy where Receiver~1 decodes
		in the decode order $m_2 \to m_1|m_2$ and Receiver~2 decodes in the 
		order $m_1 \to m_2|m_1$.
		In this case, we know that the messages are decodable for Receiver~1 
		provided $R_1 \leq I(X_1; B_1|X_2)$ and $R_2 \leq I(X_2; B_1)$. 
		Receiver~2 will be able to decode provided $R_1 \leq I(X_1; B_2)$
		and $R_2 \leq I(X_2; B_2|X_1)$.
		Thus, the rate pair 
		$R_1 \leq  \min\{ I(X_1; B_1|X_2), I(X_1;B_2) \}$, 
		$R_2 \leq  \min\{I(X_2; B_1), I(X_2; B_2|X_1)\}$ 
		is achievable for the interference channel.

	On the other hand, the rate $R_1 \leq I(X_1;B_1|X_2)$ is the 
	optimal rate Receiver~1 could possibly achieve, 
	since this rate corresponds the message $m_1$
	being decoded second \cite{winter2001capacity}.
	%	, and hence with the full
	%	benefit of interference cancellation 
	%
	Similarly the rate $R_2 \leq I(X_2;B_2|X_1)$  is an upper bound
	on the rates achievable between Sender~2 and Receiver~2.

	We now define a special class of interference channels,
	where the achievable rate region obtained using the above successive 
	decoding strategy matches the outer bound.
	
		\begin{definition}[Very strong interference]	\label{def:very-strong}
		An interference channel with \emph{very strong} interference \cite{carleial1975case},
		is such that for all input distributions $p_{X_1}$ and $p_{X_2}$,
		\begin{align}	
			I\left(  X_{1};B_{1}|X_{2}\right)    &  \leq I\left(  X_{1}%
			;B_{2}\right), \label{eq:VSI-1}\\
			I\left(  X_{2};B_{2}|X_{1}\right)  &  \leq I\left(  X_{2}%
			;B_{1}\right).  \label{eq:VSI-2}%
				%			I\left(  X_{1};B_{1}|X_{2}\right)  _{\theta}  &  \leq I\left(  X_{1}%
				%			;B_{2}\right)  _{\theta},\label{eq:VSI-1}\\
				%			I\left(  X_{2};B_{2}|X_{1}\right)  _{\theta}  &  \leq I\left(  X_{2}%
				%			;B_{1}\right)  _{\theta}. \label{eq:VSI-2}%
		\end{align}
		\end{definition}

		The information inequalities in (\ref{eq:VSI-1})-(\ref{eq:VSI-2}) imply that the
		interference is so strong, that it is possible for each receiver to decode the
		other sender's message before decoding the message intended for him. 
		These conditions
		are a generalization of Carleial's conditions for a classical Gaussian
		interference channel~\cite{carleial1975case,el2010lecture}.

		%		The capacity theorem then follows:		
		Thus, we can calculate the exact capacity region
		for the special case of the classical-quantum interference channel
		with very strong interference.

	\begin{figure}[htb]
	\begin{center}
		\includegraphics[width=0.6\textwidth]{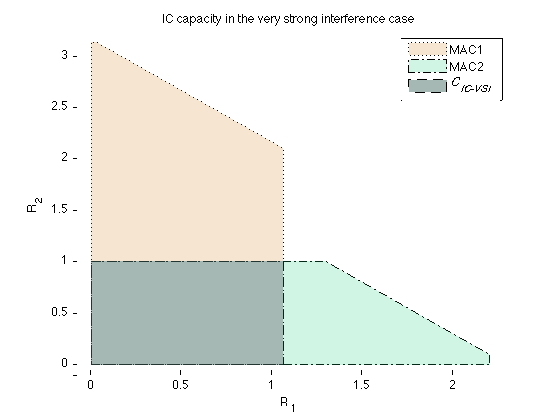}
		\caption{ %\small 
				The capacity region for a \emph{cc-qq} quantum
				interference channel which satisfies the
				``very strong'' interference conditions \eqref{eq:VSI-1}
				and \eqref{eq:VSI-2}.
				The figure also shows the capacity regions
				for the multiple access channel problems 
				associated with each receiver: \MACone and \MACtwo\!\!.
				The capacity region for the IC corresponds to their intersection.
				}
		\label{fig:ICcap-just-very-strong}
	\end{center}
	\end{figure}

		\begin{theorem}[Channels with very strong interference]
		\label{thm:carleial}
%		Let a ccqq quantum interference channel as in
%		(\ref{eq:ccqq-int}) be given, and suppose that it has \textquotedblleft very
%		strong\textquotedblright\ interference as in (\ref{eq:VSI-1}-\ref{eq:VSI-2}).
		The channel's capacity region is
		given by:
		%		 \mcal{R} the union of 
		%		all rates $R_{1}$ and $R_{2}$ satisfying the inequalities:
		\be
		%        		 \mathcal{C}_{\textrm{IC}^{vsi}} 
		%		\equiv  
		\bigcup_{p_Q,p_{X_{1}|Q},p_{X_{2}|Q} }
		\left\{
		(R_1,R_2) \in \mathbb{R}_+^2 
		\left| 
		\begin{array}{rcl}
			R_{1}  &  \leq & I\left(  X_{1};B_{1}|X_{2}Q\right)_{\theta},\\
			R_{2}  &  \leq & I\left(  X_{2};B_{2}|X_{1}Q\right)_{\theta}		
		\end{array}
		\right.
		\right\},
		\ee
		%		\begin{align*}
		%		R_{1}  &  \leq I\left(  X_{1};B_{1}|X_{2}Q\right)  _{\theta},\\
		%		R_{2}  &  \leq I\left(  X_{2};B_{2}|X_{1}Q\right)  _{\theta},
		%		\end{align*}
		where the 
		mutual information quantities are calculated 
		with respect to a state $\theta^{QX_1X_2B}$
		 of the form:
		\be
		\sum_{x_1,x_2,q}\!\!\!
		p_{Q}(q)
		p_{X_1|Q}\!\left(  x_1|q\right)
		p_{X_2|Q}\!\left(  x_2|q\right) 
		\ketbra{q}{q}^{Q}
		\!\,\otimes\!\,
		\ketbra{x_1}{x_1}^{X_1}
		\!\,\otimes\!\,
		\ketbra{x_2}{x_2}^{X_2}
		\!\,\otimes\!\,
		\rho_{x_1,x_2}^{B}\!.
		\ee
		
		\end{theorem}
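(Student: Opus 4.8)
Here is my plan for proving the characterisation of the capacity region of the very-strong-interference quantum interference channel. I will prove the two inclusions separately. The ``outer'' inclusion (converse) is a generic interference-channel outer bound that does \emph{not} use the interference hypothesis at all; the ``inner'' inclusion (achievability) is where Definition~\ref{def:very-strong} enters, via the observation that the target rate pair lands inside \emph{both} of the QMAC capacity regions induced by the two receivers, so each receiver can decode both messages.

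\textbf{Converse.} Suppose $(R_1,R_2)$ is achievable, with a sequence of $(n,R_1,R_2,\epsilon_n)$ codes, $\epsilon_n\to 0$. First I apply Fano's inequality to Receiver~1's measurement, $H(M_1|M_1')\leq 1+nR_1\epsilon_n=:n\delta_n$, and the Holevo bound (data processing for $B_1^n\to M_1'$), to get $nR_1\leq I(M_1;B_1^n)+n\delta_n$. Since $M_1$ and $M_2$ are independent and $X_2^n$ is a function of $M_2$, and since $M_1\to(X_1^n,X_2^n)\to B_1^n$ is a Markov chain, one has $I(M_1;B_1^n)\leq I(M_1;B_1^n|M_2)\leq I(X_1^n;B_1^n|X_2^n)$. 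I single-letterise by subadditivity of von Neumann entropy: $H(B_1^n|X_2^n)\leq\sum_i H(B_{1i}|X_{2i})$ while $H(B_1^n|X_1^nX_2^n)=\sum_i H(B_{1i}|X_{1i}X_{2i})$ because the output state factorises given both codewords, hence $I(X_1^n;B_1^n|X_2^n)\leq\sum_i I(X_{1i};B_{1i}|X_{2i})$. Introducing a uniform time-sharing variable $Q\in[1\!:\!n]$ independent of the code and setting $X_1:=X_{1Q}$, $X_2:=X_{2Q}$, $B_1:=B_{1Q}$ gives $R_1\leq I(X_1;B_1|X_2Q)+\delta_n$; crucially, the conditional law of $(X_1,X_2)$ given $Q$ is automatically a product distribution because $X_1^n\perp X_2^n$, so the induced state has the form stipulated in the theorem. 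The symmetric argument at Receiver~2 gives $R_2\leq I(X_2;B_2|X_1Q)+\delta_n$, and $n\to\infty$ places $(R_1,R_2)$ in the stated region.

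\textbf{Achievability.} Fix $p_Q,p_{X_1|Q},p_{X_2|Q}$ and target $R_1=I(X_1;B_1|X_2Q)-\delta$, $R_2=I(X_2;B_2|X_1Q)-\delta$. The key claim is that this pair lies in the QMAC capacity region of $\MACone$ (the channel $(\mcal{X}_1\times\mcal{X}_2,\rho^{B_1}_{x_1,x_2},\mcal{H}^{B_1})$) with coded time-sharing over $Q$, and symmetrically in that of $\MACtwo$. For $\MACone$ this requires checking $R_2\leq I(X_2;B_1|X_1Q)$ and $R_1+R_2\leq I(X_1X_2;B_1|Q)$. Averaging the very-strong inequality \eqref{eq:VSI-2} over the conditional input laws gives $I(X_2;B_2|X_1Q)\leq I(X_2;B_1|Q)\leq I(X_2;B_1|X_1Q)$, the last step because $X_1\perp X_2$ given $Q$; the chain rule $I(X_1X_2;B_1|Q)=I(X_2;B_1|Q)+I(X_1;B_1|X_2Q)$ then dispatches the sum-rate constraint since $R_2<I(X_2;B_1|Q)$. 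The $\MACtwo$ check is identical using \eqref{eq:VSI-1}. Given this, I generate a single pair of codebooks $\{x_1^n(m_1)\},\{x_2^n(m_2)\}$ by coded time-sharing relative to $q^n$, and have Receiver~1 run the successive decoder of Theorem~\ref{thm:cqmac-capacity} (in its coded-time-sharing form, obtained from that proof by conditioning all typical projectors on $Q^n$, exactly as Corollary~\ref{cor:two-sender-QSD} is obtained from Theorem~\ref{thm:sim-dec-two-sender}) on $\MACone$, and Receiver~2 run the corresponding decoder on $\MACtwo$. Since the expected average error over the random codebook vanishes at Receiver~1 and, separately, at Receiver~2, the union bound plus derandomisation yields one deterministic code good for both receivers; one may equally well use the two-sender simultaneous decoder of Corollary~\ref{cor:two-sender-QSD} at each receiver.

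\textbf{Anticipated obstacles.} The individual steps are standard, so the difficulty is organisational rather than conceptual. The main points requiring care are: (i) making the \emph{same} pair of codebooks serve both receivers, which forces a single time-sharing schedule $Q$ and hence the simultaneous membership check in both QMAC regions rather than an independent treatment of the two receivers; and (ii) passing cleanly from the unconditional inequalities \eqref{eq:VSI-1}--\eqref{eq:VSI-2} (assumed for all input distributions) to the $Q$-conditioned versions actually invoked, together with assembling the coded-time-sharing variant of Winter's theorem from the proof of Theorem~\ref{thm:cqmac-capacity}. No technique beyond the QMAC machinery of Chapter~\ref{chapter:MAC} should be needed.
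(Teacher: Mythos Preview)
Your proposal is correct and follows essentially the same approach as the paper: split the interference-channel error into two QMAC subproblems, verify via the very-strong-interference inequalities that the target corner point lies in both QMAC regions (so each receiver can successively decode the interfering message first), and invoke Winter's QMAC result with coded time-sharing; the converse is the standard single-receiver bound. The paper carries out the error-splitting explicitly at the operator level via the inequality $(I-\Lambda\otimes\Gamma)\leq(I-\Lambda)\otimes I+I\otimes(I-\Gamma)$ (its Lemma~\ref{lem:operator-union-bound}) rather than at the probability level, and is terser about the converse (deferring to the MAC converse of Theorem~\ref{thm:cqmac-capacity}), but these are presentational differences only.
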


		An intuitive interpretation of this result is the seemingly counterintuitive statement that,
		for channels with very strong interference, 
		the capacity is the same as if there were no interference \cite{carleial1975case}.

		\begin{proof}
		We require the receivers to decode the messages for both senders.
		The average probability of error for the interference channel code is given by:
		\begin{align}
		\overline{p}_{e}  \!
			&  \equiv 
			\frac{1}{|\mathcal{M}_1||\mathcal{M}_2|}
				\sum_{m_1,m_2}
				p_e(m_1,m_2) \nonumber \\
				%				\Pr\left\{  
				%					(M^{\prime}_1,M^{\prime}_2)\neq(m_1,m_2) %\ |\ M_1=m_1,\ M_2=m_2
				%				\right\} \\
			 &  \overset{\mdingone}{=}
				p_e(m_1,m_2)  \nonumber \\
				%				\Pr\left\{  
				%					(M^{\prime}_1,M^{\prime}_2)\neq(m_1,m_2) 
				%				\right\} \\
			&  =
				\text{Tr}\!
				\left[  
					\left(  I-\Lambda_{m_1,m_2}^{B_1^n}\otimes\Gamma^{B_2^n}_{m_1,m_2}\right)
					\rho_{x_2^{n}\left(  m_1\right)  x_2^{n}\left(  m_2\right)  }^{B_{1}^{n}B_{2}^{n}} 
				\right],
				\label{eqn:IC-err-fixed-m}
		\end{align}
		where equality \dingone comes from the symmetry of the codebook construction:
		it is sufficient to perform the error analysis for a fixed message pair $(m_1,m_2)$.
		
		\noindent
		Next, we use the following lemma, which is a kind of operator union bound \cite{FQSW}.
		\begin{lemma} 
			\label{lem:operator-union-bound}
			For any operators  $0 \leq P^A, Q^B \leq I$, we have:
			%		be positive operators less than the identity, 
			%		%		such that $0 \leq P^A \leq I^A$, $0 \leq Q^B \leq I^B$
			%		then:
			\be
				(I^{AB} - P^A\!\otimes\!Q^B) \leq (I^{A}\! -\! P^{A})\!\otimes\! I^{B} + I^{A}\!\otimes\!(I^{B} - Q^{B}).
				\label{eqn:operator-union-bound}
			\ee
		\end{lemma}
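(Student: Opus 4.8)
The plan is to prove the operator inequality $(I^{AB} - P^A\!\otimes\!Q^B) \leq (I^{A}\! -\! P^{A})\!\otimes\! I^{B} + I^{A}\!\otimes\!(I^{B} - Q^{B})$ by a direct algebraic manipulation. First I would expand the right-hand side: $(I^A - P^A)\otimes I^B + I^A \otimes (I^B - Q^B) = I^{AB} - P^A\otimes I^B + I^{AB} - I^A \otimes Q^B = 2I^{AB} - P^A\otimes I^B - I^A\otimes Q^B$. Subtracting the left-hand side, the claim is equivalent to showing that the operator
\be
D \equiv \left( 2I^{AB} - P^A\otimes I^B - I^A\otimes Q^B \right) - \left( I^{AB} - P^A\otimes Q^B \right) = I^{AB} - P^A\otimes I^B - I^A\otimes Q^B + P^A\otimes Q^B
\ee
is positive semidefinite.

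The key observation is that $D$ factors as a tensor product of two positive operators. Indeed,
\be
D = \left( I^A - P^A \right) \otimes \left( I^B - Q^B \right),
\ee
which one checks by expanding the right-hand side. Since $0 \leq P^A \leq I^A$ we have $I^A - P^A \geq 0$, and similarly $I^B - Q^B \geq 0$; the tensor product of two positive semidefinite operators is positive semidefinite, so $D \geq 0$, which is exactly the desired inequality. I do not anticipate any real obstacle here — the only point requiring a word of care is the standard fact that $A \geq 0$ and $B \geq 0$ imply $A\otimes B \geq 0$ (which follows from the spectral decomposition of each factor, or from the fact that $A\otimes B = (\sqrt{A}\otimes\sqrt{B})^\dagger(\sqrt{A}\otimes\sqrt{B})$). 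I would state this fact in one line and conclude. This lemma will then be applied in the error analysis with $P^A = \Lambda_{m_1,m_2}^{B_1^n}$ and $Q^B = \Gamma_{m_1,m_2}^{B_2^n}$ to split the joint decoding error into a sum of two single-receiver error terms, each of which can be analyzed using the quantum multiple access channel techniques of Chapter~\ref{chapter:MAC}.
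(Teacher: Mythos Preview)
Your proof is correct and essentially identical to the paper's: both arguments hinge on the single observation that $(I^A - P^A)\otimes(I^B - Q^B) \geq 0$, followed by the obvious algebraic rearrangement. The only cosmetic difference is that the paper starts from this tensor product and expands, while you compute the difference of the two sides first and then factor it; the content is the same.
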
%
		\begin{proof}[Proof of Lemma~\ref{lem:operator-union-bound}]
		Starting from $P^A \leq I$ and $Q^B \leq I$, we obtain $0 \leq (I-P^A)$ and $0 \leq (I-Q^B)$ which
		can be combined to obtain:
		\begin{align*}
			0   & \leq (I-P^A)\otimes(I-Q^B) \\
			     & = I^{AB}  - P^A\otimes I^B  -   I^A\otimes Q^B + P^A \otimes Q^B.
		\end{align*}
		The inequality \eqref{eqn:operator-union-bound} follows by moving 
		the term $P^A \otimes Q^B$ to the left hand side and adding a term $I^{AB}$ to both sides.
		\end{proof}

		When applied to the current problem, the inequality \eqref{eqn:operator-union-bound} gives:		
		\[
			\left(  
				I^{B^n_1B^n_2}-\Lambda_{m_1,m_2}^{B^n_1}\otimes\Gamma_{m_1,m_2}^{B^n_2}
			\right)
			\leq
			\left(  I^{B^n_1} -\Lambda_{m_1,m_2}^{B^n_1}\right) \!\otimes\! I^{B^n_2}
			+ I^{B^n_1} \!\otimes\! \left( I^{B^n_2} - \Gamma_{m_1,m_2}^{B^n_2}\right),
		\]
		which in turn allows us to split expression \eqref{eqn:IC-err-fixed-m} into two terms:
		\begin{align*}
		\overline{p}_{e}  \!
			&  =
				\text{Tr}_{B^n_1B^n_2}\!
				\left[  
					\left(  I-\Lambda^{B^n_1}_{m_1,m_2}\otimes\Gamma^{B^n_2}_{m_1,m_2}\right)
					\rho_{x_2^{n}\left(  m_1\right)  x_2^{n}\left(  m_2\right)  }^{B_{1}^{n}B_{2}^{n}} 
				\right], \\
			& \leq 
				\text{Tr}_{B^n_1B^n_2}\!
				\left[
					\left(  I-\Lambda^{B^n_1}_{m_1,m_2} \right)
					\rho_{x_2^{n}\left(  m_1\right)  x_2^{n}\left(  m_2\right)  }^{B_{1}^{n}B_{2}^{n}} 
				\right]
				+
				\text{Tr}_{B^n_1B^n_2}\!
				\left[ 
					\left(  I-\Gamma^{B^n_2}_{m_1,m_2}\right)
					\rho_{x_2^{n}\left(  m_1\right)  x_2^{n}\left(  m_2\right)  }^{B_{1}^{n}B_{2}^{n}} 
				\right] \\
			& = 				
				\text{Tr}_{B^n_1}\!
				\left[ 
					\left(  I-\Lambda^{B^n_1}_{m_1,m_2} \right)
					\rho_{x_2^{n}\left(  m_1\right)  x_2^{n}\left(  m_2\right)  }^{B_{1}^{n}} 
				\right]
				\ \ \ + 
				\text{Tr}_{B^n_2}\!
				\left[  
					\left(  I-\Gamma^{B^n_2}_{m_1,m_2}\right)
					\rho_{x_2^{n}\left(  m_1\right)  x_2^{n}\left(  m_2\right)  }^{B_{2}^{n}} 
				\right]\!.	
		\end{align*}
		
		Each of the above error terms is associated with the probability of error for one of the receivers.
		The decoding problem for each receiver corresponds to a multiple access channel (MAC)
		problem.
		We can use the successive decoding techniques from  
		Theorem~\ref{thm:cqmac-capacity} %\cite{winter2001capacity}, 
		to show that the decoding 
		at the rates $R_1 \leq I(X_1;B_1|X_2)$, $R_2 \leq I(X_2;B_2|X_1)$ 	will succeed.
		
		Receiver~1 will decode in the order $m_2 \to m_1|m_2$. 
		During the first decoding step Receiver~1 decodes the interfering message $m_2$
		and we know that this is possible because the rate $R_2 \leq I(X_2;B_1)$, which is
		guaranteed by \eqref{eq:VSI-1}.
		In the second step, Receiver~1 now decodes the message from Sender~1 
		given full knowledge of the transmission of Sender~2,
		which is possible any rate $R_1  \leq I(X_1;B_1|X_2)$. 
		Receiver~2 decodes in the order $m_1 \to m_2|m_1$ in order to use
		full interference cancellation and achieve the rate $R_2 \leq I(X_2;B_2|X_1)$.
		
		The  outer bound  follows from the converse part of Theorem~\ref{thm:cqmac-capacity},
		since the individual rates are optimal in the two MAC sub-channels \cite{carleial1975case}.
	\end{proof}

%!TEX root = thesis.tex

%\bigskip

\begin{example}
	\label{ex:theta-SWAP}
	We now consider an example of a \emph{cc-qq} quantum interference channel with two
	classical inputs and two quantum outputs and calculate its capacity region using 
	Theorem~\ref{thm:carleial} \cite{FHSSW11}.
	The \textquotedblleft$\theta$-SWAP\textquotedblright\  
	channel $\mcal{N}:\{0,1\}^2 \to \mathbb{C}^4$ is described by:
	\begin{align}
	00  &  \rightarrow\left\vert 00\right\rangle ^{B_{1}B_{2}}%
	,\label{eq:patrick-example-1}\\
	01  &  \rightarrow\cos\left(  \theta\right)  \left\vert 01\right\rangle
	^{B_{1}B_{2}}+\sin\left(  \theta\right)  \left\vert 10\right\rangle
	^{B_{1}B_{2}},\\
	10  &  \rightarrow-\sin\left(  \theta\right)  \left\vert 01\right\rangle
	^{B_{1}B_{2}}+\cos\left(  \theta\right)  \left\vert 10\right\rangle
	^{B_{1}B_{2}},\\
	11  &  \rightarrow\left\vert 11\right\rangle ^{B_{1}B_{2}}.
	\label{eq:patrick-example-4}%
	\end{align}
	%The first classical input is for Sender~1, and the second classical input is
	%for Sender~2. This transformation results if the two senders input one of the
	%four classical states $\left\{  \left\vert 00\right\rangle ,\left\vert
	%01\right\rangle ,\left\vert 10\right\rangle ,\left\vert 11\right\rangle
	%\right\}  $ to a \textquotedblleft$\theta$-SWAP\textquotedblright\ unitary
	%transformation that takes this computational basis to the output basis in
	%(\ref{eq:patrick-example-1}-\ref{eq:patrick-example-4}).

	\begin{figure}[ptb]
	\begin{center}
		\includegraphics[width=0.6\textwidth]{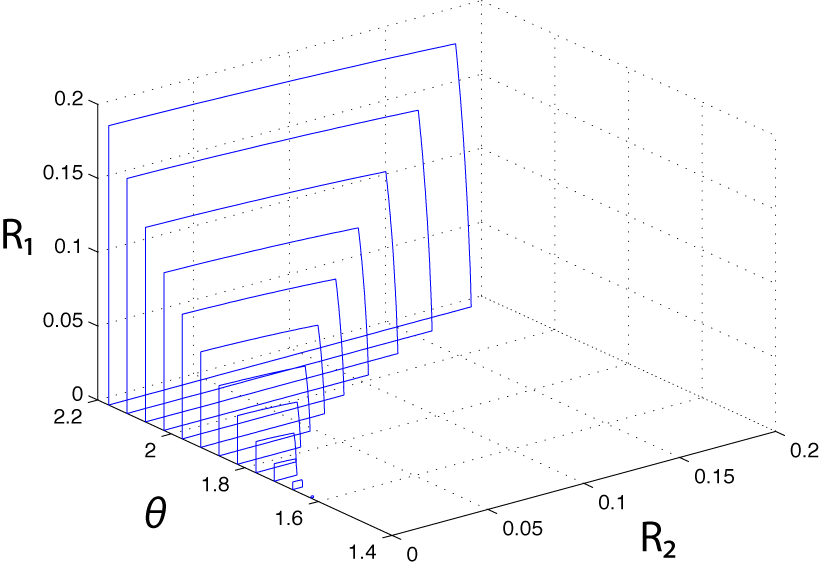}%
		\caption{The capacity region of the \textquotedblleft$\theta$%
		-SWAP\textquotedblright\ interference channel for various values of $\theta$
		such that the channel exhibits \textquotedblleft very strong\textquotedblright%
		\ interference. The capacity region is largest when $\theta$ gets closer to
		2.18, and it vanishes when $\theta=\pi/2$ because the channel becomes a full
		SWAP (at this point, Receiver~$i$ gets no information from Sender~$i$, where
		$i\in\left\{  1,2\right\}  $). }%
		\label{fig:patrick-example}%
	\end{center}
	\end{figure}
	We would like to determine an interval for the parameter $\theta$ for which
	the channel exhibits \textquotedblleft very strong\textquotedblright%
	\ interference. In order to do so, we need to consider classical-quantum
	states of the following form:%
	\be
	\rho^{X_1 X_2 B_{1}B_{2}}\equiv\sum_{x_1,x_2=0}^{1}p_{X_1}\!\left(  x_1\right)  p_{X_2}\!\left(
	x_2\right)  \left\vert x_1\right\rangle \!\!\left\langle x_1\right\vert ^{X_1}%
	\otimes \left\vert x_2\right\rangle\!\!\left\langle x_2\right\vert ^{X_2}\otimes
	\psi_{x_1,x_2}^{B_{1}B_{2}},\label{eq:patrick-example-cq-state}%
	\ee
	where $\psi_{x_1,x_2}^{B_{1}B_{2}}$ is one of the pure output states in
	(\ref{eq:patrick-example-1})-(\ref{eq:patrick-example-4}). We should then check
	whether the conditions in (\ref{eq:VSI-1})-(\ref{eq:VSI-2}) hold for all
	distributions $p_{X_1}\!\left(  x_1\right)  $ and $p_{X_2}\!\left(  x_2\right)  $. We can
	equivalently express these conditions in terms of von Neumann entropies as
	follows:%
	\begin{align*}
	H\!\left(  B_{1}|X_2\right)  _{\rho}-H\!\left(  B_{1}|X_1 X_2\right)  _{\rho} &  \leq
	H\!\left(  B_{2}\right)  _{\rho}-H\!\left(  B_{2}|X_1\right)  _{\rho},\\
	H\!\left(  B_{2}|X_1\right)  _{\rho}-H\!\left(  B_{2}| X_1 X_2\right)  _{\rho} &  \leq
	H\!\left(  B_{1}\right)  _{\rho}-H\!\left(  B_{1}|X_2\right)  _{\rho},
	\end{align*}
	and thus, it suffices to calculate six entropies for states of the form in
	(\ref{eq:patrick-example-cq-state}). After some straightforward calculations,
	we find that:
	% in (\ref{eq:entropies-1})-(\ref{eq:entropies-last})%
	%\begin{figure*}
	\begin{align*}
	\!\!\!\!H\!\left(  B_{1}|X_1 X_2\right)  _{\rho} \!&  \!=H\!\left(  B_{2}|X_1 X_2\right)  _{\rho
	}=\left(  p_{X_1}\!\!\left(  0\right)  p_{X_2}\!\!\left(  1\right)  +p_{X_1}\!\!\left(
	1\right)  p_{X_2}\!\!\left(  0\right)  \right)  H_{2}\!\left(  \cos^{2}\!\!\left(
	\theta\right)  \right)\!, \\ %\label{eq:entropies-1}\\
	H\!\left(  B_{1}\right)_{\rho}   &  =H_{2}\!\left(  p_{X_1}\!\left(  0\right)  +\left(
	p_{X_1}\!\left(  1\right)  p_{X_2}\!\left(  0\right)  -p_{X_1}\!\left(  0\right)
	p_{X_2}\!\left(  1\right)  \right)  \sin^{2}\left(  \theta\right)  \right)  ,\\
	H\!\left(  B_{2}\right)_{\rho}   &  =H_{2}\!\left(  p_{X_2}\!\left(  0\right)  +\left(
	p_{X_1}\!\left(  0\right)  p_{X_2}\!\left(  1\right)  -p_{X_1}\!\left(  1\right)
	p_{X_2}\!\left(  0\right)  \right)  \sin^{2}\left(  \theta\right)  \right)  ,\\
	H\!\left(  B_{2}|X_1\right)_{\rho}   &  =p_{X_1}\!\left(  0\right)  H_{2}\!\left(  p_{X_2}\!\left(
	1\right)  \cos^{2}\left(  \theta\right)  \right)  +p_{X_1}\!\left(  1\right)
	H_{2}\!\left(  p_{X_2}\!\left(  0\right)  \cos^{2}\left(  \theta\right)  \right)
	,\\
	H\!\left(  B_{1}|X_2\right)_{\rho}   &  =p_{X_2}\!\left(  0\right)  H_{2}\!\left(  p_{X_1}\!\left(
	1\right)  \cos^{2}\left(  \theta\right)  \right)  +p_{X_2}\!\left(  1\right)
	H_{2}\!\left(  p_{X_1}\!\left(  0\right)  \cos^{2}\left(  \theta\right)  \right)  , %\label{eq:entropies-last}
	\end{align*}
	where $H_{2}\!\left(  p\right)  $ is the binary entropy function. We numerically
	checked for particular values of $\theta$ whether the conditions
	(\ref{eq:VSI-1})-(\ref{eq:VSI-2}) hold for all distributions $p_{X_1}\!\left(
	x_1\right)  $ and $p_{X_2}\!\left(  x_2\right)  $, and we found that they hold when
	$\theta\in\left[  0.96,2.18\right]  \cup\left[  4.10,5.32\right]  $ (the latter
	interval in the union is approximately a shift of the first interval by $\pi
	$). The interval $\left[  0.96,2.18\right]  $ contains $\theta=\pi/2$, the
	value of $\theta$ for which the capacity should vanish because the
	transformation is equivalent to a full SWAP (the channel at this point has
	\textquotedblleft too strong\textquotedblright\ interference). We 
	compute the capacity region given in Theorem~\ref{thm:carleial}\ for several
	values of $\theta$ in the interval $\theta\in\left[  \pi/2,2.18\right]  $ (it
	is redundant to evaluate for other intervals because the capacity region is
	symmetric about $\pi/2$ and it is also equivalent for the two $\pi$-shifted
	intervals $\left[  0.96,2.18\right]  $ and $\left[  4.1,5.32\right]  $).
	Figure~\ref{fig:patrick-example}\ plots these capacity regions for several
	values of $\theta$ in the interval $\left[  \pi/2,2.18\right]  $.%
\end{example}

%!TEX root = thesis.tex

\subsection{Strong interference case}
% Rates achievable by two-sender simultaneous decoding}
\label{sec:strong-int}

%		\begin{figure}
%	\begin{center}
%	
%		\subfigure[Very strong interference.]
%		{
%		\includegraphics[width=0.45\textwidth]{images/IC_cap_VSI.png}
%		}%
%		%
%		\subfigure[Strong interference.]
%		{
%		\includegraphics[width=0.45\textwidth]{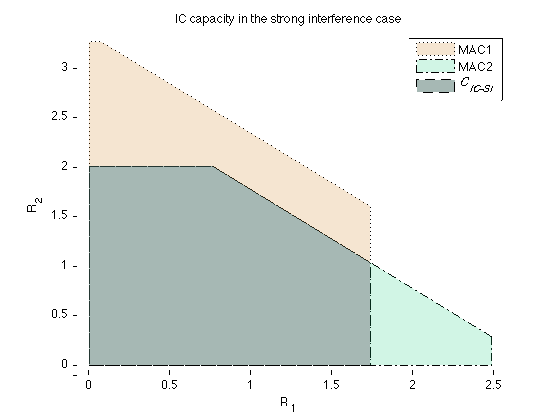}
%		}
%		\caption{\small 
%				The capacity region for the cc-qq interference channel
%				in two special cases where the crosstalk in the interference-channel
%				is stronger than the signal. 
%				}
%		\label{fig:ICcap-very-strong-and-strong}
%	\end{center}
%	\end{figure}

		The %two-sender 
		simultaneous decoder from 
		Theorem~\ref{thm:sim-dec-two-sender} allows us to calculate the 
		capacity region for the following special case of the quantum interference channel.
		
		\begin{definition}[Strong interference]	\label{def:strong}
		A quantum interference channel with \emph{strong} interference \cite{sato1981capacity,costa1987capacity} 
		is one for which the following conditions hold:
		\begin{align}
			I\left(  X_{1};B_{1}|X_{2}\right)    &  \leq  I\left(  X_{1}	;B_{2}|{X_{2}}\right), \label{eq:SI-1}\\
			I\left(  X_{2};B_{2}|X_{1}\right)  &  \leq    I\left(  X_{2}	;B_{1}|{X_{1}}\right),  \label{eq:SI-2}%
				%			I\left(  X_{1};B_{1}|X_{2}\right)  _{\theta}  &  \leq I\left(  X_{1}%
				%			;B_{2}\right)  _{\theta},\label{eq:VSI-1}\\
				%			I\left(  X_{2};B_{2}|X_{1}\right)  _{\theta}  &  \leq I\left(  X_{2}%
				%			;B_{1}\right)  _{\theta}. \label{eq:VSI-2}%
		\end{align}
		for all input distributions $p_{X_1}$ and $p_{X_2}$.
		\end{definition}

		\begin{figure}[htb]
		\begin{center}
			\includegraphics[width=0.6\textwidth]{}
			\caption{ %\small 
					The capacity region for a \emph{cc-qq} quantum
					interference channel which satisfies the
					``strong'' interference conditions \eqref{eq:SI-1}
					and \eqref{eq:SI-2}.
					The figure also shows the capacity regions
					for the multiple access channel problems 
					associated with each receiver: \MACone and \MACtwo\!\!.
					The capacity region corresponds to the intersection.
		%				of the two QMAC rate regions.
					}
			\label{fig:ICcap-just-strong}
		\end{center}
		\end{figure}

		\begin{theorem}[Channels with strong interference]
		\label{thm:strong-int}
%		Let a ccqq quantum interference channel as in
%		(\ref{eq:ccqq-int}) be given, and suppose that it has \textquotedblleft very
%		strong\textquotedblright\ interference as in (\ref{eq:VSI-1}-\ref{eq:VSI-2}).
		The channel's capacity region is:
%		 the union of 
%		all rates $R_{1}$ and $R_{2}$ satisfying the inequalities:
%		\begin{align*}
%		R_{1}  &  \leq I\left(  X_{1};B_{1}|X_{2}Q\right)  _{\theta},\\
%		R_{2}  &  \leq I\left(  X_{2};B_{2}|X_{1}Q\right)_{\theta}, \\
%		R_1 + R_{2}  &  \leq \min\{ I\left(X_{1}X_{2};B_{1}|Q\right), I\left(X_{1}X_{2};B_{2}|Q\right)_{\theta} \},
%		\end{align*}
%		where the union is over input distributions $p_{X_{1}|Q}\ p_{X_{2}|Q}\ p_Q$.
		\be
		%        		 \mathcal{C}_{\textrm{IC}^{vsi}} 
		%		\equiv  
		\bigcup_{p_Q,p_{X_{1}|Q}, \atop p_{X_{2}|Q} } \!\!
		\left\{
		(R_1,R_2) \in \mathbb{R}_+^2 
		\left| \!
		\begin{array}{rcl}
		R_{1}  &  \leq & I\!\left(  X_{1};B_{1}|X_{2}Q\right)_{\theta},\\
		R_{2}  &  \leq & I\!\left(  X_{2};B_{2}|X_{1}Q\right)_{\theta}, \\
		R_1 + R_{2}  &  \leq & 
			\min\left\{ \begin{array}{c}\!\! I\!\left(X_{1}X_{2};B_{1}|Q\right)_\theta \!\! \\
							         \!\! I\!\left(X_{1}X_{2};B_{2}|Q\right)_{\theta} 
						\!\! \end{array}	 \right\} \!\!\!
		\end{array}
		\right.
		\right\},
		\label{ineqs:stong-int}
		\ee
		%		\begin{align*}
		%		R_{1}  &  \leq I\left(  X_{1};B_{1}|X_{2}Q\right)  _{\theta},\\
		%		R_{2}  &  \leq I\left(  X_{2};B_{2}|X_{1}Q\right)  _{\theta},
		%		\end{align*}
		where the 
		mutual information quantities are calculated 
		with respect to a state $\theta^{QX_1X_2B}$
		 of the form:
		\be
		\!\!\sum_{x_1,x_2,q}\!\!
		p_{Q}(q)
		p_{X_1|Q}\!\left(  x_1|q\right)
		p_{X_2|Q}\!\left(  x_2|q\right) 
		\ketbra{q}{q}^{Q}
		\!\,\otimes\!\,
		\ketbra{x_1}{x_1}^{X_1}
		\!\,\otimes\!\,
		\ketbra{x_2}{x_2}^{X_2}
		\!\,\otimes\!\,
		\rho_{x_1,x_2}^{B}.
		\ee
		
		\end{theorem}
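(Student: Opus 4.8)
The plan is to prove both inclusions by reducing the interference channel to the two quantum multiple access sub-channels it induces, \MACone with outputs $\rho^{B_1}_{x_1,x_2}$ and \MACtwo with outputs $\rho^{B_2}_{x_1,x_2}$, and then using the strong-interference hypothesis to collapse the relevant regions. For achievability, fix an auxiliary distribution $p_Q$ together with conditional input distributions $p_{X_1|Q}$ and $p_{X_2|Q}$, and generate a single pair of random codebooks $\{X_1^n(m_1)\}$ and $\{X_2^n(m_2)\}$ by the coded time-sharing construction preceding Corollary~\ref{cor:two-sender-QSD}, relative to a fixed shared sequence $Q^n$. Receiver~1 applies the coded-time-sharing simultaneous decoding POVM $\{\Lambda_{m_1,m_2}^{B_1^n}\}$ obtained from Corollary~\ref{cor:two-sender-QSD} for \MACone, and outputs $m_1$ via $\Lambda_{m_1}\equiv\sum_{m_2}\Lambda_{m_1,m_2}$; Receiver~2 does the symmetric thing on $B_2^n$ with a POVM $\{\Gamma_{m_1,m_2}^{B_2^n}\}$ for \MACtwo and $\Gamma_{m_2}\equiv\sum_{m_1}\Gamma_{m_1,m_2}$. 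By Lemma~\ref{lem:operator-union-bound}, the interference-channel error $p_e(m_1,m_2)=\Tr[(I-\Lambda_{m_1}\otimes\Gamma_{m_2})\,\rho^{B_1^nB_2^n}_{x_1^n(m_1),x_2^n(m_2)}]$ is at most the sum of the two receivers' individual errors; and since $I-\Lambda_{m_1}=I-\sum_{m_2'}\Lambda_{m_1,m_2'}\le I-\Lambda_{m_1,m_2}$, each individual error is dominated by the error of the corresponding joint-message simultaneous decoder. By Corollary~\ref{cor:two-sender-QSD} the expected error of Receiver~$k$'s decoder vanishes whenever $(R_1,R_2)$ lies in the pentagon of the sub-channel to Receiver~$k$; linearity of expectation then produces a single deterministic code that is good for both receivers simultaneously as long as $(R_1,R_2)$ lies in the intersection of the two pentagons. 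Finally, because the inequalities \eqref{eq:SI-1}--\eqref{eq:SI-2} are assumed for every product input distribution, they continue to hold conditioned on $Q$, so $I(X_1;B_1|X_2Q)_\theta\le I(X_1;B_2|X_2Q)_\theta$ and $I(X_2;B_2|X_1Q)_\theta\le I(X_2;B_1|X_1Q)_\theta$; the intersection of the two pentagons therefore collapses precisely to the region \eqref{ineqs:stong-int}. Derandomization, and expurgation if a maximal-error guarantee is wanted, finishes the direct part.

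For the converse I would combine a quantum Fano inequality with a genie argument. Decodability of $m_1$ at Receiver~1 and the independence $M_1\perp M_2$ give $nR_1\le I(M_1;B_1^n|M_2)+n\epsilon_n\le I(X_1^n;B_1^n|X_2^n)+n\epsilon_n$, and symmetrically $nR_2\le I(X_2^n;B_2^n|X_1^n)+n\epsilon_n$; the usual single-letterization with a uniform time index $Q$ turns these into the two individual bounds of \eqref{ineqs:stong-int}. For the sum-rate bound, granting the multi-letter strong-interference inequality $I(X_1^n;B_1^n|X_2^n)\le I(X_1^n;B_2^n|X_2^n)$ together with its symmetric counterpart, one gets $n(R_1+R_2)\le I(X_1^n;B_1^n|X_2^n)+I(X_2^n;B_2^n)+2n\epsilon_n\le I(X_1^nX_2^n;B_2^n)+2n\epsilon_n$ by a chain rule, and the genie at Receiver~2 likewise gives the $B_1^n$ version; single-letterizing yields $R_1+R_2\le\min\{I(X_1X_2;B_1|Q)_\theta,\,I(X_1X_2;B_2|Q)_\theta\}$.

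The step I expect to be the main obstacle is exactly that multi-letter upgrade in the converse: showing that the single-letter hypotheses \eqref{eq:SI-1}--\eqref{eq:SI-2}, imposed for all product inputs, force $I(X_1^n;B_1^n|X_2^n)\le I(X_1^n;B_2^n|X_2^n)$ for arbitrary independent $X_1^n,X_2^n$. Classically this is the Sato/Costa--El~Gamal chain-rule-plus-induction argument; quantizing it means replacing the classical mutual-information identities by their von Neumann analogues while keeping track of the fact that the conditioning systems $B_1^{i-1}$ and $B_2^{i-1}$ are now genuinely quantum. A convenient route is to first record the trivial conditional strengthening $I(X_1;B_1|X_2W)_\theta\le I(X_1;B_2|X_2W)_\theta$ for an arbitrary classical side variable $W$, obtained from the hypothesis by conditioning on each value of $W$, and then to peel off one coordinate at a time using the memoryless product structure $\rho^{B_1^nB_2^n}_{x_1^n,x_2^n}=\bigotimes_i\rho^{B_{1i}B_{2i}}_{x_{1i},x_{2i}}$. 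The achievability direction, by contrast, is essentially a packaging of Theorem~\ref{thm:sim-dec-two-sender} and Corollary~\ref{cor:two-sender-QSD} together with the operator union bound, so I anticipate no real difficulty there.
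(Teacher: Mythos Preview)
Your proposal is correct and follows essentially the same route as the paper: the achievability argument splits the interference-channel error into two MAC sub-problems via the operator union bound (Lemma~\ref{lem:operator-union-bound}), applies the coded-time-sharing simultaneous decoder of Corollary~\ref{cor:two-sender-QSD} at each receiver, and uses the strong-interference hypothesis to collapse the intersection of the two pentagons to the stated region; the converse proceeds via Fano and the Costa--El~Gamal multi-letter upgrade, which the paper simply defers to the classical references. Your write-up in fact supplies more detail on the converse than the paper does, and you have correctly identified the multi-letter extension of \eqref{eq:SI-1}--\eqref{eq:SI-2} as the substantive step.
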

		
		The capacity region is the intersection of the MAC rate regions
		for the two receivers which corresponds to the condition that we
		choose the rates such that each receiver can decode both $m_1$ and $m_2$.
		See Figure~\ref{fig:ICcap-just-strong}.		
		
		\begin{proof}
			The first part of the proof is analogous to the proof of Theorem~\ref{thm:carleial}
			for the interference channel with very strong interference.
			We use Lemma~\ref{lem:operator-union-bound} to split the error analysis
			for the interference channel decoding task into two multiple access channel
			decoding tasks, one for each receiver.
			
			%	We require each receiver to decode both messages, and 		
			The key difference with Theorem~\ref{thm:carleial} is that 
			for the strong interference case, we require the decoders to use the simultaneous decoding approach 
			from Theorem~\ref{thm:sim-dec-two-sender} and \emph{coded time-sharing} codebooks as described in
			Corollary~\ref{cor:two-sender-QSD}.
			The rate pairs described by the inequalities in  \eqref{ineqs:stong-int} are decodable by
			both receivers. 
			Therefore, these rates are achievable for the interference channel problem.
			
			The proof of the outer bound for Theorem~\ref{thm:strong-int} follows from the outer bound 
			in Theorem~\ref{thm:cqmac-capacity} and an argument similar to the one used in
			the classical case \cite{costa1987capacity} (see also \cite[page 6--13]{el2010lecture}).
		\end{proof}

%!TEX root = thesis.tex

\section{The quantum Han-Kobayashi rate region}
														\label{sec:HK}

		%	An early version of this idea appeared in \cite{Carleial78} for
		%	the Gaussian case.

		For general interference channels, the Han-Kobayashi coding strategy 
		gives the best known achievable rate region \cite{HK81} and involves
		\emph{partial} decoding of the interfering signal.
		%
		%
		%		The message of Sender~1 will be split into two parts 
		%	The interference channel communication task is to transmit
		%	the message $m_1 = (m_{1c},m_{1p})$ from Sender 1 to Receiver 1,
		%	and to transmit the message $m_2 = (m_{2c},m_{2p})$ from Sender 2 to Receiver 2.
		%
		%	It is not required that the receivers decode messages not intended for them,
		%	but they are allowed to do so if  this allows them to achieve better rates 
		%	for the messages intended for them.
		%
		%	This is the key theme in all works on interference channels:
		%	interfering messages can be decoded and their codewords
		%	``subtracted'' from the channel, in order to obtain an effective
		%	channel with less noise.
		Instead of using a standard codebook 
		% $\{x^n_1(m_1)\}_{m_1\in \mathcal{M}_1}$ at a rate $R_1 \equiv \frac{1}{n}\log|\mathcal{M}_1|$ 
		to encode her message $m_1$,
		%Like the most significant bits and the least significant bits
		Sender~1 splits  her message into two parts: a \emph{personal} message $m_{1p}$ 
		%encoded using a random  codebook $\{u^n_1(m_{1p})\}_{m_{1p}\in \mathcal{M}_{1p}}$
		and a common message $m_{1c}$.
		% encoded into 
		%$\{w^n_1(m_{1c})\}_{m_{1c}\in \mathcal{M}_{1c}}$. 
		%
		Assuming that Receiver~1 is able to decode both of these messages,
		the net rate from Sender~1 to Receiver~1
		will be the sum of the rates of the split codebooks: $R_1 = R_{1p}+R_{1c}$.
		%		In terms of rates, this means that the sum rate $R_{1p}+R_{1c}$ should 
		%		be equal to the original rate . 
		%		So long as Receiver~1 can decode both parts $m_{1p}$ and $m_{1c}$,
		%		he can reconstruct the original message $m_1$.
		The benefit of using a split codebook\footnote{
			Note that the Han-Kobayashi strategy 
			is also referred to as a \emph{rate-splitting} in the literature.
			In this document we reserve this term \emph{rate-splitting}
			for the use of a split codebook and successive decoding
			as in \cite{urbanke-rate-splitting} and \cite{rimoldi2001generalized}.
			%			 will consistently refer to 
			%			split codebooks} and 
			%			  we use the term
		}, is that Receiver~2 can 
		decode Sender~1's common message $m_{1c}$ and 
		achieve a better communication rate
		% to $R_2=I(X_2;B_2|W_1)$ 
		by using interference cancellation.
		Because only part of the interfering message is used, we call 
		this \emph{partial} interference cancellation.
		Sender~2 will also split her message $m_2$ into two parts: $m_{2p}$ and $m_{2c}$.
		%
		%		Thus, the common messages will be decoded by both receivers,
		%		whereas the personal parts will only be decoded by the intended receiver.

	\paragraph{Codebook construction:}

% way. Let q=(qÕ*Ô; . . , q(Ò)) be a random sequence of 2Ó dis- tributed according to the probability II:, ,pp(qCf)), and let {uli	=(~~~);.	.	,	u$)):	i=	17.	* * 2 L,}	be	L,	independent random sequences of %; each of which is distributed according to the conditional probability n:=,p,l(uj:)lq(Ô)) given q. Similarly, let {uzj =(z#;	. . , u!$)): j= 1;. *, L,},
%E,Ô(lll),or <Pr{Ef(lll)}+
%U	E,(ikm)
%ikm+=lll 2	Pr{E,(ikm)~E,(111)}.
%ikm#lll
%{wlk	= (wlÕ;), . - . , w{;)):	k = 1,.	. . , N,}, (w&. *,wiz):	m= 1;. *, N2} be independent sequences	of	%;,	ÒIl

%,respectively,	each	of
%{Wan = random which	is
%x Pr{Edikm)lWll))
%distributed	according	to	II := , pu,( II~=,p,,(~,(:!jq(~)),	II~=,pw2(w$~Iq(Ô)),	given	q,	respec- tively.

		%		The overall codebook is generated from the
		Consider the auxiliary random variables $Q,U_1,W_1,U_2,W_2$ 
		and the class of Han-Kobayashi probability distributions, 
		$\mathcal{P}_{HK}$, which factorize as
		 $p_{HK}(q,u_1,w_1,x_1,u_2,w_2,x_2)$ $= p(q)$$p(u_1|q)$ $p(w_1|q)p(x_1|u_1,w_1)
		 p(u_2|q)p(w_2|q)p(x_2|u_2,w_2)$,
		where $p(x_1|u_1,w_1)$ and $p(x_2|u_2,w_2)$ are degenerate
		probability distributions that correspond to deterministic functions $f_1$ and $f_2$,
		$f_i\colon\mathcal{U}_i\times\mathcal{W}_i \to \mathcal{X}_i$, which are used to 
		combine the values of $U$ and $W$ to produce a symbol $X$ suitable
		as input to the channel.
		
		We generate the random codebooks in the following manner:
		\begin{itemize}
		\item
		Randomly and independently generate
		a sequence $q^n$ according to $\prod\limits_{i=1}^{n}p_{Q}\!\left(  q_{i}\right)$.

		\item 
		Randomly and independently generate $2^{nR_{1c}}$
		sequences $w_{1}^{n}\!\left(  m_{1c} \right)$, $m_{1c} \in\left[1: 2^{nR_{1c}}\right]$ 
		conditionally on the sequence $q^n$ 
		according to  $\prod\limits_{i=1}^{n}p_{W_{1}|Q}\!\left(  w_{1i}|q_i\right)$.

		\item 
		Randomly and independently generate $2^{nR_{1p}}$
		sequences $u_{1}^{n}\!\left(  m_{1p} \right)$, $m_{1p} \in\left[1: 2^{nR_{1p}}\right]$ 
		conditionally on the sequence $q^n$ 
		according to  $\prod\limits_{i=1}^{n}p_{U_{1}|Q}\!\left(  u_{1i}|q_i\right)$.

		\item 
		Apply the function $f_1$ symbol-wise to the codewords
		$w_{1}^{n}\!\left(  m_{1c} \right)$
		and
		$u_{1}^{n}\!\left(  m_{1p} \right)$
		to obtain the codeword
		$x_{1}^{n}\!\left(  m_{1c}, m_{1p} \right)$.
		
		\item
		We generate the common and personal codebooks for Sender~2 in a similar fashion
		and combine them using $f_2$ to obtain $x_2^n(m_{2c}, m_{2p})$.

		\end{itemize}

	\paragraph{Decoding:}

		When the split codebooks are used for the interference channel,
		we are effectively coding for an interference network with four inputs
		and two outputs.
		We can think of the decoding performed by each of the
		receivers as two multiple access channel (MAC) decoding subproblems.
		We will denote the achievable rate regions for the MAC sub-problems
		as $\MAConeRHK$ and $\MACtwoRHK$.
		The task for Receiver~1 is to decode
		the messages $(m_{1p},m_{1c},m_{2c})$, and thus
		the sub-task $\MAConeRHK$ corresponds to a three-sender multiple
		access channel, the rate region for which is described by
		seven inequalities on the rate triples $(R_{1p},R_{1c},R_{2c})$.
		The decoding task for Receiver~2, $\MACtwoRHK$, 
		is similarly described 
		by seven inequalities on the rates $(R_{1c},R_{2c},R_{2p})$.

		We perform Fourier-Motzkin elimination on the 
		inequalities of the MAC rate regions for the two receivers
		in order to eliminate the variables  
		$R_{1p}$, $R_{1c}$, $R_{2p}$ and $R_{2c}$ and replacing them 
		with the sum variables
		\be
			R_{1}  = R_{1p}+R_{1c}, \qquad 
			R_{2}  = R_{2p}+R_{2c}.
		\ee
		At each step in the Fourier-Motzkin elimination process, we use the 
		information theoretic properties in order to eliminate redundant inequalities. 
		The result is the Han-Kobayashi rate region.

		\begin{theorem}[Quantum Han-Kobayashi rate region]
		\label{thm:quantum-HK-region}
		%		Let $\rho^{B_1B_2}_{x_1,x_2}$ be an interference channel and let
		%		$p \in \mathcal{P}_{HK}$ be a probability distribution which induces the 
		%		averages output states to be commuting as in Theorem~\ref{thm:sim-dec-two-sender},
		Consider the region: %The quantum Han-Kobayashi rate region:
		        \begin{equation}
		        		\nonumber
		        		\HKR^o(\mcal{N}) % = G_{HK}  = G_{CMG} 
					\equiv %\operatorname{conv}   no need for conv -- just closure
					% which is implied since we are using \leq s
						\bigcup_{ p_{\text{HK}} \in \mathcal{P}_{\text{HK}} \atop f_1,f_2 } %{p(x_1|u_1,w_1)p(u_1|q)p(w_1|q) p(x_2|u_2,w_2,q)p(u_2|q)p(w_2|q)p(q)} 
						\{ (R_1,R_2) \in \bbR^2 | \text{ \emph{Eqns. (HK1) - (HK9) }} \} 
		        \end{equation}%
		        \vspace{-11mm}
			%		        where
			%		        \be
			%		        		\mcal{R}^o_{\text{HK}}(\mcal{N},p) = 
			%		        \ee
		        \begin{align}
		            R_1 		&\leq		I(U_1W_1;B_1|W_2Q)   							\tag{HK1} \\
		            R_1 		&\leq		I(U_1;B_1|W_1W_2Q) + I(W_1;B_2|U_2W_2Q)	            \tag{HK2} \\
		            R_2 		&\leq		 \qquad \qquad  \qquad \qquad \ \ \ \ I(U_2W_2;B_2|W_1Q)    							\tag{HK3} \\
		            R_2 		&\leq		 I(W_2;B_1|U_1W_1Q) + I(U_2;B_2|W_1W_2Q) 		\tag{HK4} \\
		            R_1 + R_2	&\leq		I(U_1W_1W_2;B_1|Q) +  I(U_2;B_2|W_1W_2Q)		\tag{HK5}\\
		            R_1 + R_2	&\leq		  I(U_1;B_1|W_2W_1Q) +I(U_2W_2W_1;B_2|Q) 	\tag{HK6} \\
		            R_1 + R_2	&\leq		I(U_1W_2;B_1|W_1Q)   + I(U_2W_1;B_2|W_2Q) 		\tag{HK7} \\
		            2R_1 + R_2	&\leq		I(U_1;B_1|W_1W_2Q) +  I(U_2W_1;B_2|W_2Q)
		            					\nonumber \\ 			& 		\quad 
								\!\!+ \! I(U_1W_1W_2;B_1|Q)  	 \tag{HK8} \\
		            R_1 + 2R_2	&\leq		I(U_1W_2;B_1|W_1Q) + I(U_2;B_2|W_2W_1Q) 
		            					\nonumber \\  	&		\qquad \qquad	\qquad \qquad \quad \ \
								+  I(U_2W_2W_1;B_2|Q)		\tag{HK9} 
		            %R_1,R_2	&\geq 	0 \nonumber
		        \end{align}
		        where the information theoretic quantities are taken with respect 
			to a state $\theta^{U_{1}U_{2}W_{1}W_{2}B_{1}B_{2}}$ of the form:
			\begin{align}
			& 
				\sum_{q,u_{1},u_{2},\atop w_{1},w_{2}	}
				\hspace*{-2mm}
				p_{Q}\!\left(  q\right)  
				p_{U_{1}|Q}\!\left(  u_{1}|q\right)  
				p_{U_{2}|Q}\!\left(  u_{2}|q\right)  
				p_{W_{1}|Q}\!\left(  w_{1}|q\right)  
				p_{W_{2}|Q}\!\left(  w_{2}|q\right) 
				\left\vert q\right\rangle \!\! \left\langle q\right\vert ^{Q} 	\!\otimes\! \nonumber  \\[-2mm]
			&  \quad \ 
				\!\otimes\!
				\left\vert u_{1}\right\rangle \!\! \left\langle u_{1}\right\vert ^{U_{1}}
				\!\otimes\!
				\left\vert u_{2}\right\rangle \!\! \left\langle u_{2}\right\vert ^{U_{2}} 
				\!\otimes\!	
				\left\vert w_{1}\right\rangle \!\! \left\langle w_{1}\right\vert ^{W_{1}}
				\!\otimes\!
				\left\vert w_{2}\right\rangle \!\! \left\langle w_{2}\right\vert ^{W_{2}}
				\!\otimes\!
				\rho_{f_{1}\left(  u_{1},w_{1}\right)  ,f_{2}\left(  u_{2},w_{2}\right)  }^{B_{1}B_{2}}
				\nonumber %\label{eq:ipsi-EA-interference-state}
			\end{align}
			is an achievable rate region provided Conjecture~\ref{conj:sim-dec} holds.			
%			and $f_{1}:U_{1}\times W_{1}\rightarrow\mathcal{X}_{1}$ and $f_{2}:U_{2}\times
%			W_{2}\rightarrow\mathcal{X}_{2}\mathcal{\ }$are arbitrary functions. A rate
%			region is achievable if for all $\epsilon>0$ and sufficiently large $n$, there
%			exists a code with vanishing average error probability:%

%		\begin{equation}
%		\frac{1}{L_{1}L_{2}M_{1}M_{2}}\sum_{i,j,k,m}\text{Tr}\left\{  \left(
%		I-\Lambda_{i,k,m}^{\left(  1\right)  }\otimes\Lambda_{j,k,m}^{\left(
%		2\right)  }\right)  \rho_{f_{1}^{n}\left(  u_{1}^{n}\left(  i\right)
%		,w_{1}^{n}\left(  k\right)  \right)  ,f_{2}^{n}\left(  u_{2}^{n}\left(
%		j\right)  ,w_{2}^{n}\left(  m\right)  \right)  }\right\}  \leq\epsilon,
%		\label{eq:int-channel-bound}%
%		\end{equation}

%		where $\rho_{f_{1}^{n}\left(  u_{1}^{n}\left(  i\right)  ,w_{1}^{n}\left(
%		k\right)  \right)  ,f_{2}^{n}\left(  u_{2}^{n}\left(  j\right)  ,w_{2}%
%		^{n}\left(  m\right)  \right)  }$ represents the encoded state, $i$ is a
%		\textquotedblleft personal\textquotedblright\ message of Sender~1, $k$ is a
%		\textquotedblleft common\textquotedblright\ message of Sender~1, $j$ is a
%		\textquotedblleft personal\textquotedblright\ message of Sender~2, $m$ is a
%		\textquotedblleft common\textquotedblright\ message of Sender~2, 

		\end{theorem}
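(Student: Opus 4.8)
The plan is to quantize the classical Han--Kobayashi achievability argument, using the three-sender quantum simultaneous decoder of Conjecture~\ref{conj:sim-dec} as the basic decoding primitive. The codebook construction is the one described above: generate a time-sharing sequence $q^n$, then conditionally on $q^n$ generate the common codebooks $\{w_1^n(m_{1c})\}$, $\{w_2^n(m_{2c})\}$ and the personal codebooks $\{u_1^n(m_{1p})\}$, $\{u_2^n(m_{2p})\}$ i.i.d.\ from the appropriate conditional marginals of $p_{\textrm{HK}}$, and form the channel inputs $x_1^n = f_1(u_1^n,w_1^n)$ and $x_2^n = f_2(u_2^n,w_2^n)$ symbol-wise. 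Receiver~1 will decode $(m_{1p},m_{1c},m_{2c})$ from $B_1^n$, treating $u_2^n$ as noise to be averaged over, and Receiver~2 will decode $(m_{1c},m_{2c},m_{2p})$ from $B_2^n$, treating $u_1^n$ as noise. From the point of view of Receiver~1 this is precisely a three-sender classical-quantum multiple access channel with ``senders'' $U_1,W_1,W_2$, output $B_1$, and a shared time-sharing register $Q$; symmetrically for Receiver~2 with senders $W_1,W_2,U_2$ and output $B_2$.

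First I would reduce the joint interference-channel error to two independent multiple-access error events. Since the decoding POVMs $\{\Lambda_{m_1}\}$ and $\{\Gamma_{m_2}\}$ act on the disjoint systems $B_1^n$ and $B_2^n$, the complement of the correct outcome has the product-operator form $I - \Lambda\otimes\Gamma$, and Lemma~\ref{lem:operator-union-bound} gives $\overline{p}_e \leq \text{Tr}[(I-\Lambda^{B_1^n})\rho^{B_1^n}] + \text{Tr}[(I-\Gamma^{B_2^n})\rho^{B_2^n}]$ exactly as in the proofs of Theorems~\ref{thm:carleial} and~\ref{thm:strong-int}. Each of these two terms is the average error probability of a three-sender quantum simultaneous decoding problem, so I would invoke the coded-time-sharing version of Conjecture~\ref{conj:sim-dec} (the three-sender analogue of Corollary~\ref{cor:two-sender-QSD}, obtained by conditioning every typical projector on $Q^n$ and taking the expectation over $Q^n$ in the error analysis). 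Applied to Receiver~1 it shows that the triple $(R_{1p},R_{1c},R_{2c})$ is decodable provided it lies in the region $\MAConeRHK$, i.e.\ it obeys the seven inequalities obtained from Conjecture~\ref{conj:sim-dec} by the substitution $X_1\!\mapsto\!U_1$, $X_2\!\mapsto\!W_1$, $X_3\!\mapsto\!W_2$, $B\!\mapsto\!B_1$, conditioned on $Q$; applied to Receiver~2 it gives the analogous region $\MACtwoRHK$ in $(R_{1c},R_{2c},R_{2p})$. Hence every $(R_{1p},R_{1c},R_{2c},R_{2p})\in\mathbb{R}_+^4$ in $\MAConeRHK\cap\MACtwoRHK$ is achievable for the four-codebook interference code.

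Next I would project the polyhedron cut out by these fourteen Holevo-information inequalities together with the nonnegativity constraints $R_{1p},R_{1c},R_{2p},R_{2c}\ge 0$ onto the $(R_1,R_2)$-plane, under the substitution $R_1 = R_{1p}+R_{1c}$, $R_2 = R_{2p}+R_{2c}$, by eliminating $R_{1p},R_{1c},R_{2p},R_{2c}$ via Fourier--Motzkin elimination. This is the same linear computation as in the classical case; at each stage redundant inequalities are discarded using the chain rule for Holevo information and the nonnegativity $I(\cdot;B_j|\cdot)_\theta\ge 0$ of conditional Holevo information, both of which hold for the state $\theta^{U_1U_2W_1W_2B_1B_2}$ because the auxiliaries $Q,U_1,W_1,U_2,W_2$ enter only as classical registers. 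The surviving constraints are exactly (HK1)--(HK9) (the elimination is governed by the same polymatroid structure underlying Kramer's compact description and the Chong--Motani--Garg--El~Gamal equivalence, so it transfers verbatim with Shannon quantities replaced by Holevo quantities with respect to $\theta$). Taking the union over $p_{\textrm{HK}}\in\mathcal{P}_{\textrm{HK}}$ and over $f_1,f_2$ then yields $\HKR^o(\mcal{N})$.

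The main obstacle is that the three-sender simultaneous decoder is only conjectured, which is precisely why the statement is conditional on Conjecture~\ref{conj:sim-dec}; a secondary point is to verify that its coded-time-sharing refinement follows from the conjecture by the same projector-conditioning argument used for Corollary~\ref{cor:two-sender-QSD}. Beyond that, the only delicate bookkeeping is the Fourier--Motzkin step and the check that every redundant-inequality elimination is justified by a genuine entropic inequality for $\theta$; I expect no surprises there, since the needed inequalities are exactly the classical ones once mutual informations are read as Holevo quantities, the output systems $B_1,B_2$ playing the role of the classical outputs $Y_1,Y_2$.
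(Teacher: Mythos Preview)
Your proposal is correct and mirrors the paper's proof almost exactly: the paper also applies Lemma~\ref{lem:operator-union-bound} to split the interference-channel error into two three-sender MAC subproblems, invokes Conjecture~\ref{conj:sim-dec} at each receiver to obtain the POVMs $\{\Lambda_{m_{1p},m_{1c},m_{2c}}\}$ and $\{\Gamma_{m_{1c},m_{2c},m_{2p}}\}$, and then (as discussed in the surrounding text) performs Fourier--Motzkin elimination on the combined MAC constraints to arrive at (HK1)--(HK9). Your write-up is in fact more detailed than the paper's terse proof, particularly regarding the coded-time-sharing refinement and the entropic justification of the redundant-inequality removals.
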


		\begin{figure}
		[htb]
		\begin{center}
		\includegraphics[
		%		natheight=1.932900in,
		%		natwidth=4.540300in,
		%
		width=0.5\textwidth
		]%
		{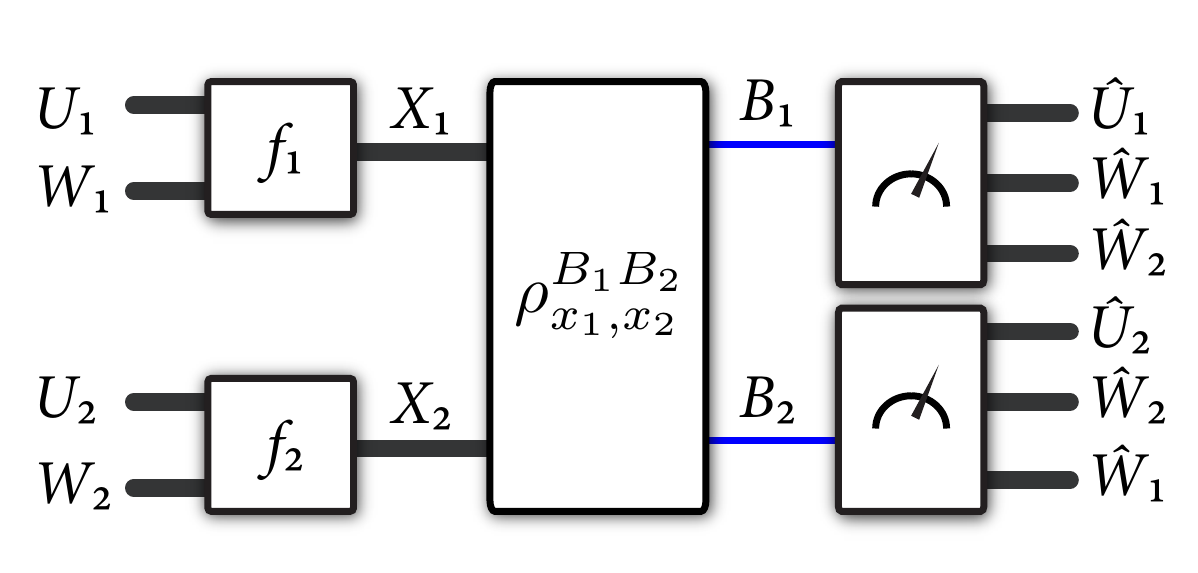}%
		\caption{The random variables used in the 
		Han-Kobayashi coding strategy. Sender~1 selects codewords
		according to a \textquotedblleft personal\textquotedblright\ random variable
		$U_{1}$ and a \textquotedblleft common\textquotedblright\ random variable
		$W_{1}$. She then acts on $U_{1}$ and $W_{1}$ with some deterministic function $f_{1}$ that
		outputs a variable $X_{1}$ which serves as a classical input to the
		interference channel. Sender~2 uses a similar encoding. Receiver~1 performs a
		measurement to decode both variables of Sender~1 and the common random
		variable $W_{2}$ of Sender~2. Receiver~2 acts similarly. The advantage of
		this coding strategy is that it makes use of interference in the channel by
		having each receiver partially decode what the other sender is transmitting.
		Theorem~\ref{thm:quantum-HK-region} gives the rates that are achievable assuming that
		Conjecture~\ref{conj:sim-dec}\ holds.}%
		\label{fig:han-kob-code}%
		\end{center}
		\end{figure}

		Each of the inequalities (HK1)-(HK9) describes 
		some limit imposed on the personal or common rates of the two senders.
		For example, (HK1) corresponds to the maximum rate at which $m_{1p}$ 
		and $m_{1c}$ can be decoded by Receiver~1 \emph{given} that he has
		already decoded $m_{2c}$.
		Other inequalities correspond to mixed bounds, 
		in which one of the terms comes from a constraint on Receiver~1
		and the other from a constraint on Receiver~2.
		An example of this is (HK2) which comes from the bound 
		on Receiver~1's ability to decode $m_{1p}$ (given $m_{1c}$ and $m_{2c}$)
		and a bound from Receiver~2's ability to decode $m_{1c}$
		(given $m_{2c}$ and $m_{2p}$)\footnote{
										% 
%										Technically speaking, 
										Receiver~2 is
										not required to decode the common
										message of Sender~1, but the 
										Han-Kobayashi strategy does require this
										condition despite the fact there could be
										no interference cancellation benefits 
										for doing so, given that Receiver~2 has
										already decoded the messages $m_{2c}$ and $m_{2p}$.
										This should serve as a hint that the Han-Kobayashi
										decoding requirements can be relaxed. 
										We will discuss this further in the next 
										section.  }.
		
		%
		%
		%	The full details of the proof appear in \cite{kobayashi2007further} where two
		%	additional inequalities are used in the description of the region $\HKR$.
		Note that the original description of the rate region given
		by Han and Kobayashi in \cite{HK81} and later in \cite{kobayashi2007further}
		contained two extra inequalities.
		Chong \emph{et al.} showed that these extra inequalities are redundant,
		and so the best description of $\HKR$ involves only nine inequalities as above
		\cite{CMGE08}.

		\begin{proof}		
		The proof is in the same spirit as the original result of Han and Kobayashi \cite{HK81}.
		%
		%For general interference channels, however, o
		The first step is to use the Lemma~\ref{lem:operator-union-bound} to obtain:
		\[
		\begin{split}
			\left(  
				I^{B^n_1B^n_2}-\Lambda_{m_{1p},m_{1c},m_{2c}}^{B^n_1}\otimes\Gamma_{m_{1c},m_{2c},m_{2p}}^{B^n_2}
			\right) \\
			\leq 
			\left(  I^{B^n_1} -\Lambda_{m_{1p},m_{1c},m_{2c}}^{B^n_1}\right) \!\otimes & I^{B^n_2}
			+ I^{B^n_1} \!\otimes\! \left( I^{B^n_2} - \Gamma_{m_{1c},m_{2c},m_{2p}}^{B^n_2}\right),
		\end{split}
		\]
		which allows us to bound the error analysis for the interference channel task
		in terms of the error analysis for two MAC sub-channels.
		Our result is conditional on Conjecture~\ref{conj:sim-dec} 
		%		regarding the existence of a quantum simultaneous decoder for 
		%		three-sender multiple access channels 
		for the construction of the decoding POVMs for each MAC sub-channel:
		$\left\{\Lambda_{m_{1p},m_{1c},m_{2c}} \right\}$
		for Receiver~1, and $\left\{  \Gamma_{m_{1c},m_{2c},m_{2p}}  \right\} $ 
		for Receiver~2.
		%Refer to \cite{FHSSW11} for the proof.
		\end{proof}

	\bigskip

	At the very least, observe that Theorem~\ref{thm:quantum-HK-region} depends
	on Conjecture~\ref{conj:sim-dec} for its proof.
	While we do not doubt that the conjecture will ultimately turn out
	to be true, the fact remains that our result is conditional on an
	unproven conjecture, which is somewhat unsatisfactory.
	
	In order to remedy this shortcoming, we searched for other approaches
	which could be used to prove that the rates of the quantum Han-Kobayashi
	rate region are achievable.
	First, we proved that the quantum Han-Kobayashi rate region is achievable
	for a special class of interference channels where the output states commute.
	We also derived an achievable rate region described in terms of 
	min-entropies \cite{Renner05PHD,TomamichelPHD}, 
	which is in general smaller than the Han-Kobayashi 
	rate region. These results are well documented in \cite{FHSSW11}.
								 % IT IS OK AS IS, I THINK 
	Another approach which we studied %, and on which we will report here,
	is the use of a rate-splitting and successive decoding approach in order to achieve
	the rates of the Han-Kobayashi rate region.
	We attempted to adapt the results of \Sasoglou \ in \cite{sasoglu2008successive},
	which claimed, erroneously, that the rate-splitting strategy can be used
	in order to achieve the Chong-Motani-Garg (CMG) rate region.
	Recall that the Chong-Motani-Garg rate region is equivalent to
	the Han-Kobayashi rate region \cite{CMGE08}.
	In fact, as we will see shortly, the Chong-Motani-Garg approach
	is simply
	a specific coding strategy to carry out the Han-Kobayashi partial interference
	cancellation idea.

	The analysis in \cite{sasoglu2008successive} is in two parts.
	The first part is a geometric argument,
	henceforth referred to as the \emph{\Sasoglou \ argument},
	which shows that there is a many-to-one mapping between the
	rates of the split codebooks $(R_{1p},R_{1c},R_{2c},R_{2p} )$,
	and the resulting rates $(R_1,R_2)$ for the interference channel task.
	In the second part of the analysis, \Sasoglou \ describes a strategy for the use of rate-splitting and successive
	decoding for the common message.
	The common-message codebook for one sender is split so 
	as to accommodate one of the receivers 
	assuming the common-message codebook of the other sender is not split.
	However, if both users split their common-message codebooks, the rates cannot be chosen,
	in general, so as to achieve all the rates of the Chong-Motani-Garg rate region.
	We will comment on this further in Section~\ref{sec:rate-succ-decoding}.
	
	While rate-splitting and successive decoding turned out to be a dead end in our quest
	for the quantum Hon-Kobayashi region,
	the \emph{\Sasoglou \ argument} and the use of two-sender
	simultaneous decoding turns out to be sufficient 
	in order to show the achievability of the quantum Chong-Motani-Garg rate region.
	%	
	%	Han-Kobayashi rate region 
	%	of the .
	This will be the subject of Section~\ref{sec:QCMGvia2QMACproof} below.

	%	The error with this construction 
	%	the attempts to constructs
	%	a codebook
	%	a particular  a 
	%	The argument continues to show that 
	%	  achievable  points for the split 
	%	points for the equivalence
	%	of certain rate points 
	%	argument
	%	
	%	 , which turned out to be a false  first approach was to use an alternate 
	%	analyzed several different 

%!TEX root = thesis.tex

\section{The quantum Chong-Motani-Garg rate region}
											\label{sec:QCMGvia2MAC}

	The achievability of the quantum Chong-Motani-Garg (CMG) rate region 
	was recently proved by Sen using novel geometric ideas for the 
	``intersection subspace'' of projectors and a ``sequential decoding'' technique~\cite{S11a}.
	In this section we will describe the CMG coding strategy 
	and state Sen's result in Theorem~\ref{thm:QCMG}.
	In Section~\ref{sec:QCMGvia2MAC}, we will provide an alternate proof of this result 
	based on the \Sasoglou \ argument \cite{sasoglu2008successive} 
	and the two-sender simultaneous decoding techniques from Theorem~\ref{thm:sim-dec-two-sender}.
%	To do so, we reproduce the geometrical argument of \cite{sasoglu2008successive} which
%	shows a decoding strategy that requires only  two-sender quantum simultaneous decoding,
%	something which we know how to do in the quantum setting.
	
%	In this section we show how to achieve the quantum Han-Kobayashi \cite{HK81,CMGE08} achievable
%	rate region without using the conjecture regarding the existence of 
%	three-sender simultaneous decoding.

	The differences between the Chong-Motani-Garg coding strategy 
	and the Han-Kobayashi coding strategy are: (1) 
	the different way the senders' codebooks are constructed 
	and (2) the relaxed decoding requirements for the two receivers.
	We discuss these next.
	%	 followed by the statement of
	%	Sen's theorem on the quantum Chong-Motani-Garg rate region
	%	(Theorem~\ref{thm:QCMG}).
	%	 and then proceed to give the alternate proof
	%	using only the techniques developed thus far in the thesis.
	
	\paragraph{Codebook construction:}

		The codebooks  are constructed using 
		the \emph{superposition coding} technique,
		which was originally developed by Cover 
		in the context of the classical broadcast channel~\cite{C72}.
		The idea behind this encoding strategy is to first generate a 
		set of \emph{cloud centers} for each common message
		and then choose the satellite codewords for the personal messages 
		relative to the cloud centers.

		Let $Q,W_1,W_2$ be auxiliary random variables and let
		$\mcal{P}_{\textrm{CMG}}$ be the class of probability density functions
		which factorize as 
		$p_{\textrm{CMG}}(q,w_1,x_1,w_2,x_2)$ $= p(q)$ $p(w_1|q)$ $p(x_1|w_1,q)$ $p(w_2|q)$ $p(x_2|w_2,q)$.
		To construct the codebook we proceed as follows:
		\begin{itemize}
		\item
		First  randomly and independently generate
		a sequence $q^n$ according to $\prod\limits_{i=1}^{n}p_{Q}\!\left(  q_{i}\right)$.

		\item 
		Randomly and independently generate $2^{nR_{1c}}$
		sequences $w_{1}^{n}\!\left(  m_{1c} \right)$, \ $m_{1c}~\in~\left[1: 2^{nR_{1c}}\right]$ 
		conditionally on the sequence $q^n$ 
		according to  $\prod\limits_{i=1}^{n}p_{W_{1}|Q}\!\left(  w_{1i}|q_i\right)$.

		\item
		Next, for each message $m_{1c}$, we randomly and independently generate $2^{nR_{1p}}$  
		conditional codewords  $x_{1}^{n}\!\left(  m_{1p} | m_{1c} \right)$,  $m_{1p} \in\left[1: 2^{nR_{1p}}\right]$,
		$m_{1c} \in [2^{nR_{1c} }]$
		according to  the product conditional probability distribution $\prod\limits_{i=1}^{n}p_{X_{1}|W_1Q}\!\left(  x_{1i}|w_{1i}(m_{1c}),q_i\right)$.

		\item
		We generate the common and personal codebooks for Sender~2 in a similar fashion.
		First generate $\{ w_2^n(m_{2c}) \}$, ${m_{2c} \in [2^{nR_{2c} }] }$ 
		according to $\prod^n p_{W_2|Q}$ 
		and then generate $\{ x_2^n(m_{2p}| m_{2c} ) \}$, $m_{2p} \in [2^{nR_{2p} }]$, $m_{2c} \in [2^{nR_{1c} }] $
		conditionally on $w^n_2(m_{2c})$ according to $\prod^n p_{X_2|W_2Q}$.

		\end{itemize}

	\paragraph{Decoding for the MAC subproblems:}
		%
		%		When the split codebooks are used for the interference channel,
		%		we are effectively coding for an interference network with four inputs
		%		and two outputs.
		%
		%		We can think of the decoding performed by each of the
		%		receivers as two multiple access channel (MAC) decoding subproblems.
		The decoding task for each of the receivers is associated with 
		a multiple access channel subproblem.
		We will denote the achievable rate regions for the MAC sub-problems
		for a fixed input distribution $\pCMG \in \mcal{P}_{\textrm{CMG}}$
		as $\MAConeR\cNpCMG$ and $\MACtwoR\cNpCMG$.

		Consider the decoding task for Receiver~1.
		The messages to be decoded are $(m_{1p},m_{1c},m_{2c})$,
		while the effects of the message $m_{2p}$ superimposed on top of
		the codeword for $m_{2c}$ are considered as noise to be averaged over.
		The desired achievable rate region $\MAConeR \cNpCMG$ % \subset \mathbb{R}^3$  
		is defined as follows:
%		by the following set of inequalities:
%		3-dimensional polyhedron:		
		%		  $\MAConeR$ and $\MACtwoR$ 
		%		defined as follows:
	        \be
	        		\!\!\!\MAConeR\cNpCMG \  \triangleq  
					\!\!\!\!\!\!\!\!\!\!\bigcup_{p(x_1|w_1,q)p(w_1|q) \atop p(x_2|w_2,q)p(w_2|q)p(q)} \!\!\!\!\!\!\!\!\!\!\!\!
					\{ (R_{1p},R_{1c},R_{2c}) \in \mathbb{R}_+^3 | \text{ Eqns (a1)-(d1) below} \} \nonumber 
	        \ee%
		\begin{align}
			R_{1p}  & \leq I\left(  X_{1};B_{1}|W_{1}W_{2}Q\right)  & \triangleq I(a_1), 			\tag{a1} \label{daRegionOne} \\
			R_{1p}+R_{1c}  & \leq I\left(  X_{1};B_{1}|W_{2}Q\right)  &\triangleq I(b_1),  			\tag{b1}\\
			R_{1p}+R_{2c}  & \leq I\left(  X_{1}W_{2};B_{1}|W_{1}Q\right)  &\triangleq I(c_1),  	\tag{c1}\\
			R_{1p}+R_{1c}+R_{2c}  & \leq I\left(  X_{1}W_{2};B_{1}|Q\right) &\triangleq I(d_1) 	\tag{d1}.
		\end{align}
		The mutual information quantities are calculated with respect to the following state:
		\begin{align}
			%			\theta^{W_{1}X_1W_{2}B_{1}}
			%				\equiv
				\!\!\!\! & \sum_{q,w_{1},\atop  x_{1}, w_{2}	} \!\!
				p%_{Q}
				\!\left(  q\right)  
				p%_{W_{1}}
				\!\left(  w_{1}|q\right)  
				p%_{X_1|W_{1}}
				\!\left(  x_1| w_{1},q\right)  
				p%_{W_{2}}
				\!\left(  w_{2}|q\right) \times  \\[-4mm]
				%p%_{X_2|W_{2}}
				%\!\left(  x_2| w_{2}\right) 
				& \qquad \qquad 
				\ketbra{q}{q}^{Q}
				\otimes
				\ketbra{w_1}{w_1}^{W_1}
				\otimes
				\ketbra{x_1}{x_1}^{X_1}
				\otimes
				\ketbra{w_2}{w_2}^{W_2}
				\otimes
				\rho_{x_1,w_2}^{B_1},
			\nonumber
		\end{align}
		where 
		\be
			\rho_{x_1,w_2}^{B_1}
			\equiv
			 \sum_{x_2} p\!\left(  x_2| w_{2}\right)   \Tr_{B_2}\!\left[ \rho_{x_1,x_2}^{B_1B_2} \right]
		\ee
		is the effective code state for Receiver~1. 
		It is the average over the random variable $X_2$ (since we treat $m_{2p}$ as noise) 
		and the partial trace over the degrees of freedom associated with Receiver~2.

		%	In our case the following inequalities are missing:
		%	\begin{itemize}
		%		\item We omit $R_{2c} \leq I(W_2;B_1|X_1W_1)$ because Receiver~1 is not interested
		%			in decoding the message $m_{2c}$ unless it helps him as side information.
		%			This inequality is a constraint for the case when the message $m_{2c}$ 
		%			is to be decoded last, 
		%			so we do not need to take it into consideration.
		%			This is called \emph{relaxed decoding} \cite{CMG06}.
		%			% MARK SAYS:
		%			%this argument doesn't hold for the quantum case. 
		%			%you have to argue that it is possible from the 
		%			%perspective of Hayashi-Nagaoka etc.				
		%		\item The inequality $R_{1c}+R_{2c}   \leq I\left(  W_1W_2;B_{1}|X_{1}Q\right)$ which 
		%			corresponds to the message $m_{1c}$ being decoded after $m_{1p}$,
		%			does not apply since we are using superposition encoding.
		%			
		%		\item Similarly $R_{1c}   \leq I\left(  W_1;B_{1}|X_{1}W_2Q\right)$ does not apply.
		%	\end{itemize}

		The rate region for Receiver~2 is similarly described by:
		\be
	        		\MACtwoR\cNpCMG \ \  \triangleq
					\!\!\!\!\!\bigcup_{p(x_1|w_1,q)p(w_1|q)\atop p(x_2|w_2,q)p(w_2|q)p(q)} \!\!\!\!\!
					\{ (R_{2p},R_{2c},R_{1c}) \in \mathbb{R}_+^3 | \text{ Eqns (a2)-(d2) below} \} 
		\ee
		\begin{align}
			R_{2p}  & \leq I\left(  X_{2};B_{2}|W_{1}W_{2}Q\right)  & \triangleq I(a_2), 			\tag{a2}\\
			R_{2p}+R_{2c}  & \leq I\left(  X_{2};B_{2}|W_{1}Q\right)  & \triangleq I(b_2), 			\tag{b2}\\
			R_{2p}+R_{1c}  & \leq I\left(  X_{2}W_{1};B_{2}|W_{2}Q\right)  & \triangleq I(c_2), 		\tag{c2}\\
			R_{2p}+R_{2c}+R_{1c}  & \leq I\left(  X_{2}W_{1};B_{2}|Q\right) & \triangleq I(d_2),	\tag{d2} 
		\end{align}
		with respect to a code state in which the variable $X_1$ is treated
		as noise and a partial trace over the system $B_1$ is performed.

		Observe that the above MAC rate regions are described only by four inequalities,
		rather than by seven inequalities like the multiple access channel 
		with three senders (cf.~Conjecture~\ref{conj:sim-dec}).
		Two of the rate constraints do not appear because we are using the 
		superposition encoding technique and always decode $m_{1c}$ before $m_{1p}$.
		A third inequality can be dropped if we recognize that Receiver~1 
		is not \emph{really} interested in decoding $m_{2c}$; he is only decoding $m_{2c}$ 
		to serve as side information which will help him decode the messages $m_{1c}$ and 
		$m_{1p}$ intended for him. % (partial interference cancellation).
		%
%		For the classical CMG strategy,  we can simply drop the rate constraint associated with 
%		the $m_{2c}$ decoding error given $m_{1c}$ and $m_{1p}$ were decoded correctly.
		%
		This is called \emph{relaxed decoding},
		and allows us to drop the constraint associated the
		decoding of $m_{2c}$ after $m_{1c}$ and $m_{1p}$ \cite{CMG06}.
		The relaxed decoding approach cannot be applied directly to the quantum case,
		and so a different decoding strategy is required~\cite{S11a}. 
		We postpone the discussion about the decoding strategies of 
		the receivers until the end of this section.
 
		% stitching them together

		We are now in a position to describe the Chong-Motani-Garg rate region
		$\CMGR$, which is obtained by combining the constraints from $\MAConeR$ and $\MACtwoR$.
	  	Recall that, for the interference channel problem, we are interested in the 
		\emph{total} rates achievable between each sender and the corresponding receiver.
		For Receiver~1, we have a net rate of $R_1 = R_{1c}+R_{1p}$ and similarly for 
		Receiver~2 we have $R_2 = R_{2c}+R_{2p}$.
		Consider the projection ${\mathbf \Pi}$
		which takes the 4-tuple of rates $(R_{1p},R_{1c}, R_{2c},  R_{2p})$ to the 
		space of net rates $(R_1,R_2)$:
		\be \colvec{R_1 \\ R_2} = 		
			\left[
			\begin{array}{c}
				R_{1p}+R_{1c} \\
				R_{2p}+R_{2c} 
			\end{array}
			\right]
			=
			\underbrace{
			\left[
			\begin{array}{cccc}
			  1 & 1  & 0 & 0 \\
			  0 & 0  & 1 & 1
			\end{array}
			\right]
			}_{\mathbf \Pi}
			\colvec{ \ R_{1p}\  \\ R_{1c} \\ R_{2c} \\ R_{2p} }.
			\label{eqn:FM_elim}
		\ee
		The Chong-Motani-Garg rate region for the interference channel
		is obtained by taking the union over all input distributions of
		the intersection between the two MAC rate regions,
		followed by the projection ${\mathbf \Pi}$ to obtain:
		\be
			\CMGR(\mcal{N})
			\equiv 
			{\mathbf \Pi}
			\left(
				\bigcup_{\pCMG \in \mcal{P}_{\textrm{CMG}} }
				\MAConeR\cNpCMG \ 
				\cap \ 
				\MACtwoR\cNpCMG
			\right).
			\label{eqn:CMG-as-UNION}
		\ee

	%	\subsection{Fourier-Motzkin elimination}
		
		Equivalently, it is possible to compute the intersection of the 
		two MAC rate regions by performing Fourier-Motzkin elimination
		on the inequalities from equations (a1)-(d1) and (a2)-(d2).
		By taking all possible combinations of the inequalities in the two MAC subproblems,
		we obtain the equivalent set of inequalities in the two dimensional space $(R_1,R_2)$.
		The resulting achievable rate region has the following form:
		%		 called the Chong-Motani-Garg rate region:
		\begin{theorem}[Quantum Chong-Motani-Garg rate region \cite{S11a}]
				\label{thm:QCMG}
		The following rate region is achievable for the quantum interference channel:
	        \begin{align}
	        		\CMGR(\mcal{N}) % = G_{HK}  = G_{CMG} 
				\ & \triangleq   \!\!\!\!\!\!\!
					\bigcup_{p(x_1|w_1,q)p(w_1|q) \atop p(x_2|w_2,q)p(w_2|q)p(q)} \!\!\!\!\!\!\!\!
					\{ (R_1,R_2) \in \mathbb{R}_+^2 | \text{ Eqns. (CMG1)-(CMG9) hold. } \} 
	        \end{align}
	        \vspace{-3mm}
	        \begin{align}
	            R_1 		&\leq		I(X_1;B_1| W_2 Q)   %= I(b_1) 			
	            					\tag{CMG1} \\
	            R_1 		&\leq		I(X_1; B_1|W_1W_2Q) + I(X_2W_1; B_2|W_2Q)  %= I(a_1) + I(c_2)			
	            					\tag{CMG2} \\
	            R_2 		&\leq		I(X_2;B_2| W_1 Q) %= I(b_2) 			
	            					\tag{CMG3} \\
	            R_1 		&\leq		I(X_1W_2; B_1|W_1Q) + I(X_2; B_2|W_1W_2Q)  %= I(c_1) + I(a_2)			
	            					\tag{CMG4} \\
	            R_1 + R_2	&\leq		I(X_1W_2; B_1Q) + I (X_2; B_2|W_1W_2Q) % = I(d_1)+I(a_2) 			
	            					\tag{CMG5} \\
	            R_1 + R_2	&\leq		I (X_1; B_1|W_1W_2Q) + I (X_2W_1; B_2Q)  %= I(a_1)+I(d_2) 			
	            					\tag{CMG6} \\
	            R_1 + R_2	&\leq		I (X_1W_2; B_1|W_1Q) + I(X_2W_1; B_2|W_2Q)  %= I(c_1) + I(c_2) 			
	            					\tag{CMG7} \\
	            2R_1 + R_2	&\leq		I (X_1W_2; B_1|Q) + I (X_1; B_1|W_1W_2Q) + I(X_2W_1;B_2|W_2Q)  %= I(d_1)+I(a_1)+I(c_2)
	            					\tag{CMG8} \\
	            R_1 + 2R_2	&\leq		I (X_2; B_2|W_1W_2Q) + I (X_2W_1; B_2|Q) +  I (X_1W_2; B_1|W_1Q) %= I(a_2) + I(d_2) + I(c_1)  
	            					\tag{CMG9}
	        \end{align}
		        where the information theoretic quantities are taken with respect 
			to a state of the form $\theta^{QW_{1}X_1W_{2}X_2B_{1}B_{2}}\equiv$
			\begin{align}
									\label{eq:CMG-code-state}
			& 	\hspace*{-4mm}  
				\sum_{q,w_{1}, w_{2}, \atop x_1,x_2}
				\hspace*{-2mm}
				p_{Q}\!\left(  q\right)  
				p_{W_{1}|Q}\!\left(  w_{1}|q\right)  
				p_{W_{2}|Q}\!\left(  w_{2}|q\right) 
				p_{X_{1}|W_1Q}\!\left(  x_{1}|w_1,q\right)  
				p_{X_{2}|W_2Q}\!\left(  x_{2}|w_2,q\right)  \nonumber  \\[-2mm]
			&  \qquad \quad 
				\left\vert q\right\rangle \! \left\langle q\right\vert ^{Q}
				\!\otimes\!
				\left\vert w_{1}\right\rangle \! \left\langle w_{1}\right\vert ^{W_{1}}
				\!\otimes\!
				\left\vert w_{2}\right\rangle \! \left\langle w_{2}\right\vert ^{W_{2}}
				\!\otimes\!
				\left\vert x_{1}\right\rangle \! \left\langle x_{1}\right\vert ^{X_{1}}
				\!\otimes\!
				\left\vert x_{2}\right\rangle \! \left\langle x_{2}\right\vert ^{X_{2}} 
				\!\otimes\!
				\rho_{x_1,x_2}^{B_{1}B_{2}}. \nonumber
			\end{align}

		\end{theorem}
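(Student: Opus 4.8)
The plan is to reduce the quantum interference channel coding problem to two separate quantum multiple access channel (QMAC) coding problems --- one per receiver --- and then to discharge each of these using only the two-sender simultaneous decoder of Theorem~\ref{thm:sim-dec-two-sender} (in its coded time-sharing form, Corollary~\ref{cor:two-sender-QSD}), aided by the geometric argument of \Sasoglou \ \cite{sasoglu2008successive}. First I would fix an input distribution $\pCMG \in \mcal{P}_{\textrm{CMG}}$ and generate the superposition codebooks $q^n$, $\{w_1^n(m_{1c})\}$, $\{x_1^n(m_{1p}|m_{1c})\}$, $\{w_2^n(m_{2c})\}$, $\{x_2^n(m_{2p}|m_{2c})\}$ exactly as described above the theorem statement, so that the channel output is the tensor-product state $\rho^{B_1^nB_2^n}$ built from $x_1^n(m_{1p}|m_{1c})$ and $x_2^n(m_{2p}|m_{2c})$, with Receiver~$i$ holding only $B_i^n$. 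Applying the operator union bound (Lemma~\ref{lem:operator-union-bound}) to $I - \Lambda_{m_{1p},m_{1c},m_{2c}}^{B_1^n}\otimes\Gamma_{m_{1c},m_{2c},m_{2p}}^{B_2^n}$ bounds the average error probability of the interference-channel code by the sum of two terms, the $B_1^n$ term being the error of a decoder that must recover $(m_{1c},m_{1p})$ while averaging over the interfering $m_{2p}$, and symmetrically for $B_2^n$. Since the two codebooks are independent given $q^n$, the two errors can be analyzed separately, and the rate constraints they impose are precisely $\MAConeR\cNpCMG$ (inequalities (a1)--(d1)) and $\MACtwoR\cNpCMG$ (inequalities (a2)--(d2)); note these regions each have only four constraints because of the superposition structure and because Receiver~$i$ need only decode the opposing cloud center as side information.

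The heart of the argument is to show that each of these QMAC subproblems can be solved with a decoder that never recognizes more than two codeword families simultaneously. For Receiver~1, recovering the satellite message $m_{1p}$ once the cloud center $w_1^n(m_{1c})$ is known is an HSW-style step (Theorem~\ref{thm:HSWtheorem}) against the conditionally typical projector of $\rho^{B_1^n}_{x_1^n,w_2^n}$ and requires no simultaneity; the genuinely simultaneous task is the joint recognition of the two cloud centers $m_{1c}$ and $m_{2c}$, where $m_{2c}$ is decoded only as side information --- the classical ``relaxed decoding'' device that ordinarily discards the constraint $R_{2c}\le I(W_2;B_1|X_1W_1Q)$. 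Here I would invoke the \Sasoglou \ geometric argument: it exhibits a many-to-one correspondence between the split-codebook rates $(R_{1p},R_{1c},R_{2c},R_{2p})$ and the net rates $(R_1,R_2)$, and it lets every face of $\MAConeR\cNpCMG\cap\MACtwoR\cNpCMG$ be realized by a configuration --- possibly obtained by time-sharing over a few choices of input distribution and decode order --- in which, at each receiver, only \emph{two} codeword families need to be decoded simultaneously, the remaining recognitions being carried out by successive HSW-type steps. For each such configuration the decoding POVM is built as a projector sandwich of the relevant conditionally typical projectors (the most specific, $\Pi_{\rho^{B_1^n}_{x_1^n,w_2^n}}$, innermost; the partial averages such as $\Pi_{\bar\rho^{B_1^n}_{x_1^n}}$ and $\Pi_{\bar\rho^{B_1^n}_{w_1^n,w_2^n}}$ in the middle; $\Pi_{\bar\rho^{\otimes n}}$ outermost), normalized by the square-root measurement, and the error analysis of Theorem~\ref{thm:sim-dec-two-sender} applies verbatim: one smooths the output state by the outer typical projector, applies Hayashi--Nagaoka (Lemma~\ref{lem:HN-inequality}), controls the ``correct detector fails'' term with Lemma~\ref{lem:trace-inequality} and the gentle operator lemma (Lemma~\ref{lem:gentle-operator}), and controls the ``wrong detector clicks'' terms with the projector trick and the rank bound on conditionally typical projectors. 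The four exponents produced are exactly (a1)--(d1), and Receiver~2 is handled by the mirror-image argument, yielding (a2)--(d2).

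Finally I would assemble the region: for fixed $\pCMG$ every rate 4-tuple in the interior of $\MAConeR\cNpCMG\cap\MACtwoR\cNpCMG$ is achievable by a single code, so applying the projection ${\mathbf \Pi}$ of \eqref{eqn:FM_elim} and taking the union over $\pCMG$ gives, as in \eqref{eqn:CMG-as-UNION}, exactly the region cut out by (CMG1)--(CMG9) after Fourier--Motzkin elimination of $R_{1p},R_{1c},R_{2c},R_{2p}$; a derandomization step as in the proof of Theorem~\ref{thm:HSWtheorem} then produces a deterministic code. The hard part will be the second step: the classical relaxed-decoding trick does not transfer to quantum decoders (a naive three-message simultaneous decoder would force us to assume Conjecture~\ref{conj:sim-dec}), so the real content is making the \Sasoglou \ geometric reduction dovetail with the layered-projector and state-smoothing error analysis of Theorem~\ref{thm:sim-dec-two-sender} --- that is, verifying that every vertex of the CMG polytope is reached by some admissible two-message configuration at each receiver, and that the finite time-sharing needed to do so does not shrink the achievable region.
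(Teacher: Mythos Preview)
Your proposal is essentially the paper's own approach: operator union bound to split into two MAC subproblems, the \Sasoglou\ geometric argument to reduce to rate configurations needing only two-message simultaneous decoding, and the error analysis of Theorem~\ref{thm:sim-dec-two-sender}. Two points of divergence are worth noting.

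First, you describe only one decoding configuration --- simultaneously decode the two cloud centers $(m_{1c},m_{2c})$, then recover $m_{1p}$ by an HSW step given $W_1W_2$. The paper actually needs \emph{three} decoding strategies at each receiver, chosen according to which facet the rate point lies on after the rate-moving operation (Lemma~\ref{lem:sasoglou-moving-bd-points}): on the $a_1$ facet the order is $(W_1,W_2)\to X_1|W_1W_2$ as you describe; on the $c_1$ facet it is $W_1\to (X_1|W_1,\,W_2|W_1)$, i.e.\ the two-message simultaneous decoder is applied to $(X_1,W_2)$ conditional on $W_1$, not to $(W_1,W_2)$; and a degenerate sub-case $c'$ simply runs $W_1\to X_1|W_1$ and ignores $W_2$. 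Your wording ``the genuinely simultaneous task is the joint recognition of the two cloud centers'' misses the $c$-plane case, which is necessary to cover the whole boundary.

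Second, the mechanism is not time-sharing over input distributions and decode orders, but the rate-moving operation of Definition~\ref{def:rate-moving}: one observes that the projection ${\mathbf \Pi}$ is many-to-one, and that any boundary point of $\CMGR\cNpCMG$ has a preimage $(R_{1p},R_{1c},R_{2c},R_{2p})$ lying on $a_i\cup c_i$ for both $i$; this is Lemma~\ref{lem:sasoglou-moving-bd-points}, whose proof requires tracking the coupled motion of $P_1$ and $P_2$ since they share the common rates $R_{1c},R_{2c}$. Once on $a_i\cup c_i$, a \emph{single} code with the appropriate decode order achieves the point, and interior points follow by resource wasting, not time-sharing.
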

		
		The classical CMG rate region is known to be equivalent 
		to the Han-Kobayashi rate region \cite{CMGE08}.
		Thus, Sen's achievability proof for the rates of the
		Chong-Motani-Garg rate region is also a proof of the
		quantum Han-Kobayashi rate region.

		\subsubsection{Quantum relaxed decoding}

		Let us consider more closely the relaxed decoding approach 
		that is employed by Receiver~1 in the \emph{classical case}.
		The decoding strategy for Receiver~1 is to use jointly typical decoding
		and search the codebooks 
		$\{w_1^n(m_{1c})\}$, $\{x_1^n(m_{1p}|m_{1c})\}$ and  $\{w_2^n(m_{2c})\}$
		for messages $(m_{1c},m_{1p},\hat{m}_{2c})$ such that
		\[
			\bigg( 
				w_1^n(m_{1c}), \ x_1^n(m_{1p}|m_{1c}), \ w_2^n(\hat{m}_{2c}), \
				Y_1^n
			\bigg)
			\ \in \  \mcal{J}^{(n)}_\delta(W_1,X_1,W_2,Y_1).
		\]
		If such messages are found, the decoder will output $m_1 = (m_{1c},m_{1p})$.
		This decoding is \emph{relaxed} because the above condition
		can be satisfied for some $\hat{m}_{2c}$ which is not necessarily the correct $m_{2c}$ 
		transmitted by Sender~2.
		
		The use of the relaxed decoding strategy allows us to drop the
		following constraint:
		\be
			R_{2c} \leq I(W_2;B_1|W_1X_1),
		\ee
		which corresponds to the message $m_{2c}$ being decoded last,
		given the side information of $m_{1c}$ and $m_{1p}$.
		%		and hence $X^n_1(m_

		The relaxed decoding strategy does not generalize readily to the case where a 
		quantum decoding is to be performed \cite{S11a}. 
		For each message triple $(m_{1c},m_{1p},m_{2c})$,
		we could define the measurement $\{ \Lambda_{m_{1c},m_{1p},m_{2c} } \}$,
		but how does one combine the measurement operators 
		$\{ \Lambda_{m_{1c},m_{1p},\hat{m}_{2c} } \}$, $\hat{m}_{2c} \in [2^{nR_{2c} }]$ 
		to form a ``relaxed measurement''?
		Indeed, the usual quantum measurements we use are ones that 
		``ask specific questions'' and for which one outcome is more likely than the others.
		This allows us to use the gentle operator lemma which tells us that
		the our measurement disturbs the system only marginally.

		Sen sidestepped the difficulty of asking a ``vague'' question by
		using two different decoding strategies depending on which 
		rates we want to achieve.
		Receiver~1 will either decode $m_{2c}$ or ignore it altogether. 
		The set of achievable rates for Receiver~1
		$(R_{1p},R_{1c},R_{2c}) \in \mathbb{R}_+^3$
		obtained by Sen is described as follows: 
		\[
	        \begin{array}{rcl}
			R_{2c} 		&\leq& I(W_2; B_1 | X_1 ),		\\
			R_{1p} 		&\leq& I(X_1;B_1|W_1W_2),		\\
			R_{1c}+R_{1p} &\leq& I(X_1;B_1|W_2), 		\\
			R_{2c}+R_{1p} &\leq& I(X_1W_2;B_1|W_1), 		\\
	    \!\!\!R_{1c}+R_{2c}+R_{1p} &\leq& I(X_1W_2;B_1),
		\end{array}
		\ \  \textrm{OR}  
	        \begin{array}{rcl}
			R_{2c} 		&\geq& I(W_2; B_1 | X_1 ),	\\
			R_{1p} 		&\leq& I(X_1:B_1|W_1),		\\
			R_{1c}+R_{1p} &\leq& I(X_1;B_1). 		
		\end{array}
		\]
		Note that the region is not convex. 
		To achieve the rates on the left hand side,
		Sen developed a novel three-sender simultaneous decoding measurement.
		%		which uses an intersection projector measurements is used.
		The rates on the right hand side correspond to a \emph{disinterested MAC} problem,
		in which the message $m_{2c}$ will not be decoded.
		%
		%	The achievable rate regions obtained by the two decoding strategies
		%	is larger than the achievable rate region defined by in equations (a1)-(d1),
		%	which define $\MAConeR\cNpCMG$, which implies that the all the rates
		%	of the CMG rate region are achievable through these strategies.
		%	%
		%	Thus 
		%
		After taking the intersection of the achievable rate regions
		for Receiver~1 and Receiver~2 and applying the projection as in \eqref{eqn:CMG-as-UNION},
		Sen obtained a region which is equivalent to the quantum CMG rate region \cite{S11a}.
		
		In the next section we will describe another route to prove the 
		achievability of the quantum CMG %Han-Kobayashi 
		rate region.
		We will show that the use of three-sender simultaneous decoding is not necessary.
		Each of the receivers will use one of \emph{three} different decoding strategies
		that only require two-sender simultaneous decoding and,
		in combination, these decoding strategies achieve all the rates
		$(R_1,R_2) \in \CMGR \cNpCMG$.
		
		%	The properties of quantum measurements 
		%	In order to use measurement outcomes has very high probability.
		%	%
		%	So for example 
		%	Quantum measurements are

		%	Furthermore, the rate region for Receiver~1 as described above would requires the use
		%	of a three-sender quantum simultaneous decoding from Conjecture~\ref{conj:sim-dec}.
		%
		%	The combination of these difficulties bars the road for a direct ``quantization'' of
		%	the CMG coding strategy with relaxed decoding to the quantum case.
		%	Nevertheless, we have 

	%	[FIXME] %%%% INTUITION 
	%	
	%	Essentially the (a) strategy should work in all cases -- I think.
	%	The entire CMG1 rate region can be though of as the 
	%	intersection of many different pentagons parallel to the (a) plane.
	%	
	%	For a given $\pCMG$ the position of the (a1),(b1),(c1) and (d1) facets 
	%	is fixed in R3.
	%	%
	%	We are then allowed to pick the rates $(R_{1p},R_{1c},R_{2c},R_{2p})$ for the IC.
		%
		
		%	The (a) strategy says that a simultaneous decoding of $(m_{1c},m_{2c})$ should be
		%	the first step -- but the achievable rates for this task also depends on the $R_{1p}$,
		% assuming we use 3-sender simultaneous decoding.
		%	The higher the rate $R_{1p}$ is, the more \emph{noise} the personal message
		%	and makes the decoding of $m_{1c}$ and $m_{2c}$ difficult.
		%	%
		%	Why doesn't $R_{2c}$ affect the shape of the rate region then?

\section{Quantum CMG rate region via two-sender simultaneous decoding}
		\label{sec:QCMGvia2QMACproof}

		In the original Han-Kobayashi paper \cite{HK81} and the subsequent Chong-Motani-Garg papers
		\cite{CMG06, CMGE08}, the decoding strategy is to use the three-sender simultaneous decoder.
		This strategy allows for \emph{all possible} interference cancellation scenarios.
		An example of a \emph{specific} decoding strategy would be to decode 
		the interference message $m_{2c}$ simultaneously with $m_{1c}$
		and then decode $m_{1p}$ last using the side information from both common messages.
		We denote this $(m_{1c},m_{2c}) \to m_{1p}|m_{1c}m_{2c}$.
		%(in order to obtain the best possible $R_{1c}$ rate).
		Another example would be to decode $m_{1p}$ and $m_{2c}$ simultaneously after 
		having decoded $m_{1c}$ first: $m_{1c} \to (m_{1p}|m_{1c}, \ m_{2c}|m_{1c})$.
		%
		%However if we don't want to be specific The most general strategy would  is to try to decode all three messages 
		%at the same and this is  
		%
		Simultaneous decoding is a catchall strategy that subsumes all of the above specific strategies.
		However, as we saw in Chapter~\ref{chapter:MAC}, 
		the existence of a simultaneous decoder for a general three-sender QMAC
		is still an open problem (Conjecture~\ref{conj:sim-dec}).
		%		 in quantum information theory \cite{FHSSW11}.
		%		
		%		Sen For the quantum case Sen managed to construct a special three sender \cite{S11a}		
		%		This poses a significant obstacle for our task of generalizing the 
		%		ideas of \emph{partial} interference decoding \cite{HK81} to the quantum setting.
		It would therefore be desirable to find some specific quantum decoding strategy (or a set of strategies like in \cite{S11a}),
		which can be used to achieve all the rates of the quantum CMG rate region.
		%
		%		%	Simultaneous jointly typical decoders are more computationally expensive than the simpler
		%		%	\emph{successive decoding} strategies, but are in general more powerful.
		%		%
		%		%	Note that in order to achieve all these rates on the boundary of this region:
		%		%	$(R_{1p},R_{1c},R_{2c}) \in a_1 \cup b_1 \cup c_1 \cup d_1$,
		%		%	then simultaneous decoding of three messages is required.
		%

		In this section, we will extend the geometrical argument presented in \cite{sasoglu2008successive}, 
		%by \Sasoglou
		to do away with the need for the simultaneous decoding of three messages.
		We will show that the quantum two-sender simultaneous decoder from 
		Theorem~\ref{thm:sim-dec-two-sender} 
		is sufficient to achieve the quantum Han-Kobayashi rate region.
		
		%	\subsection{Sketch of proof}

		Observe that in equation \eqref{eqn:FM_elim} only the sum rate $R_{1c}+R_{1p}$ is of importance
		for Receiver~1. The relative values of $R_{1c}$ and $R_{1p}$ are not important ---
		only their sum (provided that all the inequalities (a1)-(a4) are satisfied).
		This fact implies that we are allowed a certain freedom in the way we choose the rates of the codebooks
		for the interference channel.
		We define this freedom more formally as follows:
		
		\medskip 
		%rate-trading ?
		\begin{definition}[Rate moving operation]
																	\label{def:rate-moving}
			%			The common rate $R_{1c}$ represents transmission of messages that are 
			%			decodable at both receivers. We can always logically ignore the fact that
			%			receiver 2 can decode this message and thus interpret part of this rate 
			%			as a private rate decodable only at receiver 1.
			%			More precisely, 
			Let $\pCMG$ %$p(w_1)p(x_1|w_1)p(w_2)p(x_2|w_2)$
			be the probability distribution used to construct CMG codebooks.
			Let  $\mcal{C}$ and $\mcal{C}^\prime$ be two codebooks with rates
			\begin{align}
				\mcal{C}  			 & : \ \ (R_{1p}, R_{1c}, R_{2c},R_{2p}) \\
				\mcal{C}^\prime 	& : \ \  (R_{1p}+\delta_1, R_{1c}-\delta_1, R_{2c}-\delta_2,R_{2p}+\delta_2),
			\end{align}
			such that the rates of both codebooks satisfy all the inequalities (a1)-(d1) and (a2)-(d2),
			then they achieve the same rate pair $(R_1,R_2) \in \CMGR\cNpCMG$.
			Such a transformation of rate tuples is called a \emph{rate moving} operation.
			%			
			%			Then there 
			%			th can be interpreted as
			%			a rate point $(R_{1p}+\delta, R_{1c}-\delta,  R_{2c})$ provided the new rates
			%			are compatible with all the other inequalities defining the rate region.
		\end{definition}
		
		In words, we say that to achieve the rate pair $(R_1,R_2)$ for the interference channel,
		we are free to \emph{move} the rate points so as to decrease the 
		common rates and increase the personal rates.
		Intuitively, such a transformation is interesting because decreasing the common rates
		will make the decoding task easier overall, since \emph{both} receivers have to decode 
		the common messages whereas only a single receiver needs to decode the personal part.
		The idea for this \emph{rate moving} operation is due to Eren \Sasoglou \ \cite{sasoglu2008successive}.

		\bigskip

		To show the achievability of the Chong-Motani-Garg rate region, $\CMGR(\mcal{N})$,
		it is sufficient to show that we can achieve points on the boundary of the region,
		which we will denote as $\delCMGR(\mcal{N})$.
		In fact, it is sufficient to achieve points on the non-vertical, non-horizontal
		boundary of the rate region which we will denote 
		$\delCMGRpr(\mcal{N}) \subseteq \delCMGR(\mcal{N})$.
		This region is illustrated in Figure~\ref{fig:cmg-heptagon}~(b).
		We refer to the facets that make up the $\delCMGRpr(\mcal{N})$ as the 
		\emph{dominant facets} of the CMG rate region in analogy with
		the dominant facet of the multiple access channel capacity region.
		
		We now state the main theorem of this section:
	
		\begin{theorem}[The dominant facets of the QCMG are achievable]
				\label{thm:quantum-CMG-achievable}
			Any rate pair $(R_1,R_2) \in \delCMGRpr\cNpCMG$ of 
			the non-horizontal, non-vertical facets of the CMG rate region
			is achievable for the quantum  interference channel
			$\left(  \mathcal{X}_{1}\times\mathcal{X}_{2}\ ,\ \mcal{N}(x_1,x_2)\equiv\rho_{x_{1},x_{2}}^{B_1B_2} \ , 
			\ \mathcal{H}^{B_1} \otimes \mathcal{H}^{B_2}\right)$.
		\end{theorem}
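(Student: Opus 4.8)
The plan is to show that the three--message decoding nominally required of each receiver in the Chong--Motani--Garg scheme can always be carried out by a combination of point-to-point HSW decoding (Theorem~\ref{thm:HSWtheorem}) and the two-sender coded--time-sharing simultaneous decoder (Corollary~\ref{cor:two-sender-QSD}), provided the rate split is chosen appropriately, and then to invoke the rate-moving operation of Definition~\ref{def:rate-moving} to guarantee that for every point of $\delCMGRpr\cNpCMG$ such a choice works at \emph{both} receivers at once. Throughout I would use the superposition-coded codebook construction described above: a time-sharing sequence $q^n$, common codebooks $\{w_1^n(m_{1c})\}$, $\{w_2^n(m_{2c})\}$, and personal codebooks $\{x_1^n(m_{1p}|m_{1c})\}$, $\{x_2^n(m_{2p}|m_{2c})\}$ superposed on the common ones.

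The first observation is that, because $x_1^n$ is generated from $w_1^n$, the pair $(m_{1c},m_{1p})$ behaves from Receiver~1's point of view like a single codeword governed by $p_{X_1^n|Q^n}$, so once we agree to decode $m_{1c}$ and $m_{1p}$ either jointly or consecutively, Receiver~1 faces at worst a \emph{two}-sender multiple access decoding problem. I would equip Receiver~1 with one of three strategies, each built only from HSW decoders and the two-sender simultaneous decoder: (I) simultaneously decode $(m_{1c},m_{2c})$, then decode $m_{1p}$ with $m_{1c},m_{2c}$ as side information; (II) decode $m_{1c}$, then simultaneously decode $(m_{1p},m_{2c})$; (III) the ``disinterested'' strategy that decodes $m_{1c}$ then $m_{1p}$ while averaging over the interfering codeword $x_2^n$ and never committing to a value of $m_{2c}$. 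Using Lemma~\ref{lem:operator-union-bound} to split the interference-channel error into a Receiver~1 term and a Receiver~2 term, an error analysis that mirrors the proofs of Theorem~\ref{thm:HSWtheorem} and Theorem~\ref{thm:sim-dec-two-sender} --- with the gentle-operator lemma (Lemma~\ref{lem:gentle-operator}) controlling the disturbance of the first decoding stage --- shows that strategy~(I) achieves, essentially, the tuples cut out by (a1), (b1), (d1) together with $R_{1c}\le I(W_1;B_1|W_2Q)$ and $R_{1c}+R_{2c}\le I(W_1W_2;B_1|Q)$; strategy~(II) those cut out by (a1), (c1), (d1) together with $R_{1c}\le I(W_1;B_1|Q)$; and strategy~(III) those with $R_{1c}+R_{1p}\le I(X_1;B_1|Q)$, $R_{1p}\le I(X_1;B_1|W_1Q)$ and $R_{2c}$ unconstrained. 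Symmetric statements hold for Receiver~2. The union of the three regions need not exhaust $\MAConeR\cNpCMG$, but it contains every face of $\MAConeR\cNpCMG$ along which the common rate $R_{1c}$ is sufficiently small --- and that is precisely the part the rate-moving operation will let us reach.

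For the geometric step, given $(R_1,R_2)\in\delCMGRpr\cNpCMG$, equation~\eqref{eqn:CMG-as-UNION} supplies an input distribution $\pCMG$ and a $4$-tuple $(R_{1p},R_{1c},R_{2c},R_{2p})\in\MAConeR\cNpCMG\cap\MACtwoR\cNpCMG$ with ${\mathbf \Pi}(R_{1p},R_{1c},R_{2c},R_{2p})=(R_1,R_2)$. The fibre ${\mathbf \Pi}^{-1}(R_1,R_2)\cap\MAConeR\cNpCMG\cap\MACtwoR\cNpCMG$ is a (possibly degenerate) polytope on which $R_{1c}$ and $R_{2c}$ may be decreased at the expense of $R_{1p}$ and $R_{2p}$ by the rate-moving operation; I would slide to a vertex of this fibre and argue, by the \Sasoglou\ argument and a finite case analysis over which of (CMG1)--(CMG9) contains $(R_1,R_2)$, that at such a vertex the common rates are small enough that $(R_{1p},R_{1c},R_{2c})$ lands in one of Receiver~1's three regions while $(R_{2p},R_{2c},R_{1c})$ simultaneously lands in one of Receiver~2's three regions. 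Assigning each receiver the matching strategy, combining the two MAC error bounds through Lemma~\ref{lem:operator-union-bound} and the union bound, and derandomizing then completes the proof.

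The hard part will be this last matching: one must verify that the rate-moving freedom is always enough to place \emph{both} receivers in a ``simple'' region at the same time, for every dominant facet. This is exactly where \Sasoglou's original reduction to \emph{successive} decoding breaks down --- there the two receivers' preferred rate splits are incompatible --- and it is what makes two-sender \emph{simultaneous} decoding essential, since decoding the two common messages jointly removes the ordering constraint responsible for the incompatibility. I expect the cleanest route is a facet-by-facet verification, using the equivalence $\CMGR\equiv\HKR$ and the explicit list (CMG1)--(CMG9) to keep the case analysis finite, rather than attempting a closed-form description of the fibre polytope.
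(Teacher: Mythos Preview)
Your proposal is correct and uses the same essential ingredients as the paper: the three decoding strategies you call (I), (II), (III) are exactly the paper's Cases~$a$, $c$, $c'$ in Lemma~\ref{lem:sasoglou-succ-canc-for-ac}; the rate-moving operation plays the same role; and Lemma~\ref{lem:operator-union-bound} is used the same way to separate the receivers.

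The one place where the paper is sharper than your outline is the ``hard part'' you flag at the end. You propose a facet-by-facet case analysis over (CMG1)--(CMG9) in the two-dimensional region; the paper instead works entirely in the three-dimensional geometry of the MAC sub-regions $\MAConeR$ and $\MACtwoR$. It first proves a structural lemma (Lemma~\ref{lemma:geometry-of-MAConeR}) on how the facets $a_i,b_i,c_i,d_i$ can meet, then shows (Lemma~\ref{lem:sasoglou-moving-bd-points}) that rate-moving can always carry $P_1$ and $P_2$ \emph{simultaneously} onto $a_1\cup c_1$ and $a_2\cup c_2$; the case analysis is over pairs of facets in $\{a_i,b_i,c_i,d_i\}$, not over (CMG1)--(CMG9), and is short because the geometry lemma eliminates most combinations. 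Once both points are on $a\cup c$, Lemma~\ref{lem:sasoglou-succ-canc-for-ac} assigns a decoder to each (with the $c$/$c'$ split decided by whether $R_{1p}$ exceeds $I(X_1;B_1|W_1Q)$), and the simultaneous compatibility of the two receivers' choices is automatic rather than something to be re-verified per CMG facet. Your route would work but is messier; the paper's decomposition is what makes the matching step clean.
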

		
		As a corollary of the above theorem, we can say that 
		the quantum Chong-Motani-Garg rate region is achievable.
		Any point in the interior of the CMG rate region $\CMGR\cNpCMG$,
		is \emph{dominated} by some point on the 
		non-vertical, non-horizontal dominant facets of the boundary $\delCMGRpr\cNpCMG$.
		Therefore, we can achieve all other points of the rate region 
		 by resource wasting.

		\begin{figure}[hptb]
			\begin{center}
			\subfigure[The CMG achievable rate region. ]{
				\includegraphics[width=0.45\textwidth]{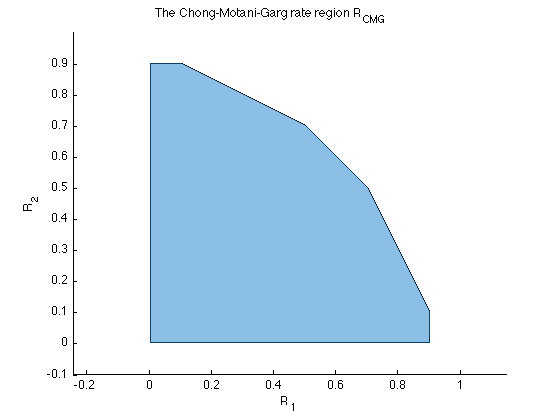}
			}\ \ \ \ %
			\subfigure[The non-horizontal, non-vertical dominant facets of the CMG rate region,
					$\delCMGRpr$, which are achievable by two-sender simultaneous 
					decoding, are shown in bold.]{
				\includegraphics[width=0.45\textwidth]{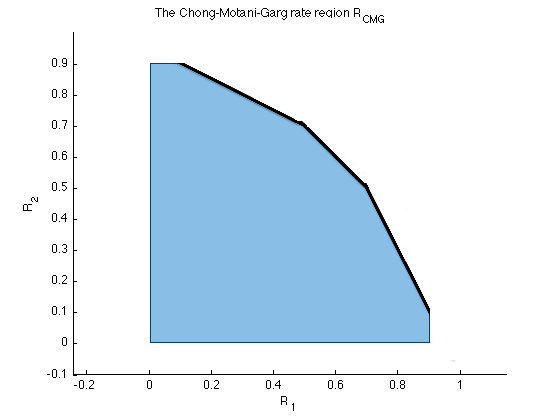}
			}
			\caption{\small 
				The CMG achievable rate region for a given input distribution $p(q)p(w_1,x_1|q)p(w_2,x_2|q)$
				in general has the shape of a heptagon.
				The region is bounded by the two rate positivity conditions
				and each of the other facets corresponds to one of the inequalities (CMG1)-(CMG9).}
				%% MAC1 usual rate constraints
				%%   a1          b1       c1        d1
				%A1=[1 0 0 0  ;  1 1 0 0; 1 0 1 0;  1 1 1 0];
				%b1=[0.2      ;  1.0      ; 0.7    ;  1.0    ];
				% and symmetric MAC2
				% d_1 + a_2 is active on the (1,1) direction 
			\label{fig:cmg-heptagon}%
			\end{center}
		\end{figure}

		The proof of Theorem~\ref{thm:quantum-CMG-achievable} 
		is somewhat long, so we have broken it up into several lemmas.
		Below we give a brief sketch of the steps involved:
		\begin{itemize}
%
%			\item		We will actually show how to achieve the quantum Chong-Motani-Garg
%					rate region, which is known to be equivalent to the Han-Kobayashi region
%					when all possible input distributions are considered \cite{CMGE08}.
%
%			\item		We use a probability distribution $p(w_1)p(x_1|w_1)\; p(w_2)p(x_2|w_2)$
%					to generate the input codebooks: $x_1^n(m_{1p},m_{1c})$
%					and $x_2^n(m_{2p},m_{2c})$.
					
			\item		In Section~\ref{sec:geometry-of-CMG-region}, we will discuss the geometry 
					of the achievable rate regions $\MAConeR\cNpCMG$
					and $\MACtwoR\cNpCMG$ for the two receivers.
					We state Lemma~\ref{lemma:geometry-of-MAConeR}, which
					identifies the relative placement of the inequalities (a1)-(d1)
					by using the properties of mutual information quantities $I(a_1)$ 
					through $I(d_1)$.					
%					and 
%					Fourier-Motzkin projection which is used to obtain the CMG
%					rate region $\CMGR$.
			
			\item		In Section~\ref{subsec:sasoglu-argument}, we will show that
					any rate pair $(R_1,R_2) \in \delCMGRpr$ 
					%	 on one of the dominant facets of $\CMGR$, 
					can be achieved using codebooks with rates that 
					lie either on the (a) or (c) planes of the MAC rate regions.
					To show this statement, we will prove 
					Lemma~\ref{lem:sasoglou-moving-bd-points}
					which describes a procedure in which we use 
					\emph{rate moving} to transfer any rate point on the (b) or (d) planes 
					to an equivalent rate point on the (a) or (c) planes.
					
			\item		In Section~\ref{subsec:two-simult-decoding-for-ac}, we prove
					that 	the receivers can use two-sender quantum simultaneous decoding
					to achieve any rate on the planes (a) and (c).
					More precisely, there are three possible decode orderings that
					may be used.
					%	Let  $(\alpha,\beta)$ denote the simultaneous decoding of messages
					%	$\alpha$ and $\beta$ from the output of the channel 
					%	and let $\alpha \to \beta$ denote the successive decoding of the messages.
					%
					Lemma~\ref{lem:sasoglou-succ-canc-for-ac} shows that the following
					three decoding strategies (shown for Receiver~1)
					are \emph{sufficient} %\cite{sasoglu2008successive} 
					to achieve the rates in the CMG rate region:
					\begin{description}
						\item[Case a:]	 $(m_{1c},m_{2c} ) \to m_{1p}|m_{1c}m_{2c}$,
						\item[Case c:]	 $m_{1c}  \to (m_{1p}|m_{1c}, \ m_{2c}|m_{1c})$,
						\item[Case c':]	 $m_{1c}  \to m_{1p}|m_{1c}$.
					%					\item[Case a:]	 $(W_1,W_2) \to X_1|W_1W_2 $ 
					%					\item[Case c:]	 $W_1 \to (X_1|W_1, W_2|W_1)$ 
					%					\item[Case c':]	 $W_1 \to X_1|W_1$ 					
					\end{description}

					%		
					%		
					%\item		Finally, in Section~\ref{subsec:quantum-lemmas-for-ac} 
					%		we show the construction of the quantum simultaneous decoding
					%		required for cases (a), (c) and (c').
	%				 and  will detail how the quantum simultaneous docoding
					
		\end{itemize}

	%
	%		Thus, to better understand the properties of $\CMGR$ we can study the geometry of 
	%		$\MAConeR$ and $\MACtwoR$.
	
	% OLD SHORT EXPLANATIONS 
	%
	%	\Sasoglou's paper shows us how to achieve points  that lie on the $a_i$ and $c_i$ planes 
	%	by using only two-simultaneous MAC decoding at the receivers (see Lemma \ref{lem:sasoglou-succ-canc-for-ac}).
	%	In general, however, the optimal Han-Kobayashi rates $(R_1,R_2) \in \CMGR$ could
	%	lie anywhere in $a_1 \cup b_1 \cup c_1 \cup d_1$ (see Figure \ref{fig:shash1}) 
	%	thus it is not clear a priori that 
	%	we can achieve the entire Han-Kobayashi without using a three-sender simultaneous decoding.	
	%
	%	This turns out not to be a problem because \Sasoglou \ also shows 
	%	how to move any points from ``bad'' planes $b_i$ and $d_i$ to an
	%	equivalent point in one of the good planes $a_i$ and $c_i$ 
	%	while leaving the $(R_1,R_2)$ rates unchanged.
	%

	%\subsection{Geometry of Sasoglou}

	\subsection{Geometry of the CMG rate region}			\label{sec:geometry-of-CMG-region}

		For a general input distribution $\pCMG$, %$p(x_1|w_1,q)p(w_1|q) p(x_2|w_2,q)p(w_2|q)p(q)$,  
		the CMG rate region $\CMGR\cNpCMG$ and the two MAC subproblem rate regions could 
		take on different shapes
		depending on the relative values of the mutual information quantities 
		$I(a_1)$, $I(b_1)$, $I(c_1)$, $I(d_1)$, $I(a_2)$, $I(b_2)$, $I(c_2)$ and $I(d_2)$.
		
		In his paper \cite{sasoglu2008successive}, \Sasoglou \  develops a powerful intuition 
		for dealing with the polyhedra that describe their boundaries 
		$\delCMGR\cNpCMG$, $\delMAConeR\cNpCMG$ and $\delMACtwoR\cNpCMG$.
		Define the two-dimensional facets $a_1,b_1,c_1, d_1$ % \in \mathbb{R}_+^3$ %(R_{1p},R_{1c},R_{2c})$ 
		which make up the region boundary.
		Each facet is a subset of the plane in $\mathbb{R}^3$ associated with 
		the equality condition of inequalities (a1), (b1), (c1) and (d1),
		which correspond to the rate constraints of Receiver 1.
		The boundary of the region $\MAConeR\cNpCMG$ can be written as 
		$\delMAConeR\cNpCMG = a_1 \cup b_1 \cup c_1 \cup d_1$.

		%	i.e., the set of optimal H-K rate points $P_1=(R_{1p},R_{1c},R_{2c})$ lie inside
		%	one the planes $a_1,b_1,c_1, d_1$.

		%		
		%	        A sample rate-region is shown in Figure \ref{fig:shash1}.
		%	        
	        We can visualize the three dimensional rate region $\MAConeR\cNpCMG$  
	        as in Figure \ref{fig:shash1} below.
		%The same statement can be made for the region $\MACtwoR$.
		
		\begin{figure}[hptb]
			\begin{center}
			\includegraphics[width=0.6\textwidth]{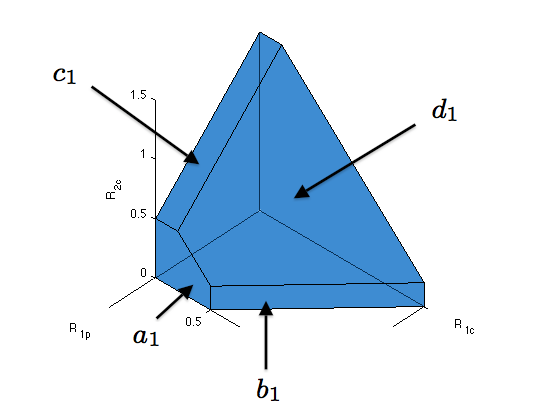}
			\caption{The achievable rate region $\MAConeR\cNpCMG$ and its bounding facets
			$a_1,b_1,c_1$, and $d_1$.
			Each surface is associated with the equality condition in one of the equations
			(a1), (b1), (c1) and (d1) from page \pageref{daRegionOne}.  }%
			\label{fig:shash1}%
			\end{center}
		\end{figure}

		%		\comment{Also should make a comment that if in $b_2$ plane and decrease $R_{1c}$ then
		%				 you stay in $b_2$ plane. Same for $a_2	$. }

		This shape of the rate region is governed by 
		%We can infer a great deal about the rate region's geometry by considering the relations between
		the information-theoretic quantities on the right hand side of equations (a1) through (d1).
		The following relations establish the geometry of the rate-region $\MAConeR\cNpCMG$
		which hold for any input distribution.

		\begin{lemma}[Geometry of  $\MAConeR\cNpCMG$]  \label{lemma:geometry-of-MAConeR}
			The information-theoretic quantities from equations (a1), (b1), (c1) and (d1)
			satisfy the following inequalities:
			\bea
				I(a_1) \leq		& I(b_1)	&	  \leq I(d_1),		\label{polymatr1} \\
				I(a_1) \leq	 	&I(c_1)& 	  \leq I(d_1),		\label{polymatr2} \\
				I(a_1)+I(d_1)  \ \ \  	&\leq& 	\ \ \ I(b_1) + I(c_1).		 	\label{polymatr3}
			\eea
		\end{lemma}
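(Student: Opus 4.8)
The plan is to prove each of the three inequalities \eqref{polymatr1}, \eqref{polymatr2}, \eqref{polymatr3} directly from the definitions of the Holevo information quantities $I(a_1),I(b_1),I(c_1),I(d_1)$ and the basic entropic tools available in the excerpt: the chain rule for mutual information, non-negativity of conditional Holevo information, and the data-processing / strong subadditivity inequality \eqref{strong-subadditivity}. Recall the definitions: $I(a_1) = I(X_1;B_1|W_1W_2Q)$, $I(b_1) = I(X_1;B_1|W_2Q)$, $I(c_1) = I(X_1W_2;B_1|W_1Q)$, and $I(d_1) = I(X_1W_2;B_1|Q)$, all with respect to the code state $\theta^{QW_1X_1W_2X_2B_1B_2}$, in which the classical variables factor as $p(q)p(w_1|q)p(x_1|w_1,q)p(w_2|q)$. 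The key structural fact I would use repeatedly is the Markov structure: conditioned on $Q$, the pair $(W_1,X_1)$ is independent of $(W_2,X_2)$, so that $I(X_1;W_2|W_1Q)=0$ and $I(W_2;W_1|Q)=0$; this lets me freely insert or delete such variables inside the conditioning.

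For \eqref{polymatr1}, the plan is to write $I(b_1) = I(X_1;B_1|W_2Q)$ and expand using the chain rule after introducing $W_1$: since $I(X_1W_1;B_1|W_2Q) = I(W_1;B_1|W_2Q) + I(X_1;B_1|W_1W_2Q)$ and also $I(X_1W_1;B_1|W_2Q) = I(X_1;B_1|W_2Q) + I(W_1;B_1|X_1W_2Q)$, I get $I(b_1) = I(X_1;B_1|W_1W_2Q) + I(W_1;B_1|W_2Q) - I(W_1;B_1|X_1W_2Q)$. The cleaner route is to note $I(X_1;B_1|W_1W_2Q) \le I(X_1W_1;B_1|W_2Q)$, but $X_1$ is a function of nothing forcing this; instead I would observe $B_1$ depends on the inputs only through $(X_1,X_2)$, and use that $W_1$ contributes no information about $B_1$ once $X_1$ is fixed in the sense that $I(W_1;B_1|X_1W_2Q)=0$ because the channel output is conditionally independent of $W_1$ given $X_1$ (since $x_1$ already determines the input together with $x_2$, and $w_1 \to x_1 \to$ output). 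This gives $I(b_1) = I(a_1) + I(W_1;B_1|W_2Q) \ge I(a_1)$, since conditional Holevo information is non-negative. For $I(b_1)\le I(d_1)$: $I(d_1) = I(X_1W_2;B_1|Q)$ and I expand $I(X_1W_2;B_1|Q) = I(W_2;B_1|Q) + I(X_1;B_1|W_2Q) \ge I(X_1;B_1|W_2Q) = I(b_1)$, again by non-negativity. The same template handles \eqref{polymatr2}: $I(c_1) = I(X_1W_2;B_1|W_1Q) = I(W_2;B_1|W_1Q) + I(X_1;B_1|W_1W_2Q) \ge I(a_1)$ and $I(d_1) = I(W_1;B_1|Q) + I(X_1W_2;B_1|W_1Q) \ge I(c_1)$.

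For the submodularity-type inequality \eqref{polymatr3}, $I(a_1) + I(d_1) \le I(b_1) + I(c_1)$, the plan is to substitute the chain-rule expansions derived above. Using $I(b_1) = I(a_1) + I(W_1;B_1|W_2Q)$ and $I(c_1) = I(a_1) + I(W_2;B_1|W_1Q)$ and $I(d_1) = I(a_1) + I(W_1;B_1|Q) + I(W_2;B_1|W_1Q)$, the claimed inequality \eqref{polymatr3} reduces after cancellation to $I(W_1;B_1|Q) \le I(W_1;B_1|W_2Q)$. This is where the Markov structure does the real work: since $W_1 - (W_1Q) - W_2$ is trivially Markov and more relevantly $I(W_1;W_2|Q)=0$, one has $I(W_1;B_1W_2|Q) = I(W_1;W_2|Q) + I(W_1;B_1|W_2Q) = I(W_1;B_1|W_2Q)$, while also $I(W_1;B_1W_2|Q) = I(W_1;B_1|Q) + I(W_1;W_2|B_1Q) \ge I(W_1;B_1|Q)$ by non-negativity of $I(W_1;W_2|B_1Q)$. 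Chaining these gives exactly $I(W_1;B_1|Q) \le I(W_1;B_1|W_2Q)$, completing the proof.

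The main obstacle I anticipate is justifying the identities $I(W_1;B_1|X_1W_2Q)=0$ and $I(W_2;B_1|X_2W_1Q)=0$ carefully — i.e.\ that the channel output $B_1$ is conditionally independent of the auxiliary "cloud center" variable given the actual channel input. This is immediate from the structure of the code state: $B_1$ is obtained by applying $\mathrm{Tr}_{B_2}$ to $\rho^{B_1B_2}_{x_1,x_2}$, which depends on $(w_1,w_2,q)$ only through $(x_1,x_2)$, so in the classical--quantum state the $B_1$ register is a function of $(X_1,X_2)$ alone and carries no further dependence on $W_1,W_2,Q$. I would state this as a short preliminary observation (a consequence of the factorization $p(x_1|w_1,q)$ being the only link, and $B_1 = \mathcal{N}^{X_1X_2\to B_1}(x_1,x_2)$) and then the three inequalities follow mechanically from chain rules and non-negativity of (conditional) Holevo information. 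Everything else is bookkeeping with the chain rule, and no strong subadditivity beyond the non-negativity of conditional mutual information (which is \eqref{strong-subadditivity}) is actually needed.
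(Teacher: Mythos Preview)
Your proof is correct and, if anything, slightly cleaner than the paper's. The paper (Appendix~\ref{apdx:CMG-MAC-geom}) proceeds by expanding each mutual information into von~Neumann entropies and shuffling terms, invoking along the way the identity $H(X_1W_1)=H(X_1)$ (i.e.\ that $W_1$ is determined by $X_1$) together with ``conditioning reduces entropy'' for \eqref{polymatr1}--\eqref{polymatr2}, and the raw strong-subadditivity inequality $H(B_1W_1W_2)+H(B_1)\le H(B_1W_1)+H(B_1W_2)$ for \eqref{polymatr3}. You instead stay at the mutual-information level, using chain rules plus the Markov fact $I(W_1;B_1\,|\,X_1W_2Q)=0$ (which follows directly from the channel depending on the auxiliary variables only through $(X_1,X_2)$), and reduce \eqref{polymatr3} to $I(W_1;B_1|Q)\le I(W_1;B_1|W_2Q)$ via $I(W_1;W_2|Q)=0$ and $I(W_1;W_2|B_1Q)\ge 0$. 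Your route avoids the entropy bookkeeping and does not need the deterministic assumption $H(W_1|X_1)=0$; both proofs ultimately rest on the same independence structure and on non-negativity of conditional mutual information.
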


		Geometrically $I(a_1) \leq I(b_1)$ indicates that the plane containing $b_1$ 
		intersects the plane containing $a_1$ in the positive octant.
		Similarly $I(b_1) \leq I(d_1)$ indicates that  the plane containing $d_1$ intersects the plane 
		containing $b_1$ inside $\mathbb{R}_+^3$.
		Equation \eqref{polymatr2} dictates that the plane containing $c_1$ intersects the plane
		containing $a_1$ and that the plane containing $d_1$ intersects the plane of $c_1$.
		Finally, equation \eqref{polymatr3} states that $I(a_1)+I(d_1)  \leq I(b_1) + I(c_1)$, which means that
		the rate constraint on the sum $2R_{1p}+R_{1c}+R_{2c}$ obtained by adding (a1) and (d1) is tighter
		than the rate constraint  obtained by adding (b1) and (c1).
		If we define the sets $A=\{1p,1c \}$ and $B=\{ 1p, 2c \}$ and $\rho(X)$ to be the information-theoretic 
		quantities of the right hand side, then equation % (a1) to (d1) 
		\eqref{polymatr3} 
		has a super-modular polymatriod structure $ \rho(A \cap B) + \rho(A \cup B) \leq \rho(A) + \rho(B)$.
		The proof of Lemma~\ref{lemma:geometry-of-MAConeR}
		is given in Appendix~\ref{apdx:CMG-MAC-geom}.

	\subsection{\Sasoglou \ argument}		\label{subsec:sasoglu-argument}

		Let the rate pair $(R_1,R_2) \in \delCMGRpr\cNpCMG$ be part of the non-horizontal,
		non-vertical boundary of the two dimensional rate region $\CMGR\cNpCMG$.
		This rate pair is associated (non-uniquely) to a pair of points 
		$P_1=(R_{1p},R_{1c},R_{2c})$ and $P_2=(R_{2p},R_{2c},R_{1c})$
		on the boundaries of  the respective regions 
		$\MAConeR\cNpCMG$ and $\MACtwoR\cNpCMG$.

		\bigskip
		%		\Sasoglou \ uses the geometric properties of the rate regions as described 
		%		by equations \eqref{polymatr1}, \eqref{polymatr2} and \eqref{polymatr3} 
		%		and the rate moving operation,
		%
		\begin{claim} %[Optimal rates lie on boundaries]	
			\label{lem:opt-rates-on-bdrs}
		If the two-dimensional rate pair $(R_1,R_2) \in \delCMGRpr\cNpCMG$ 
		is the projection of the 
		points $P_1=(R_{1p},R_{1c},R_{2c})$ and $P_2=(R_{2p},R_{2c},R_{1c})$
		via the mapping in \eqref{eqn:FM_elim},
		%		 be two points such that
		%			\bea
		%				R_1 	&=&  	R_{1p} + R_{1c}, \\
		%				R_2  &=&  	R_{2p} + R_{2c}.
		%			\eea
		%			
		then $P_1 \in \delMAConeR\cNpCMG$ and $P_2 \in \delMACtwoR\cNpCMG$.
		\end{claim}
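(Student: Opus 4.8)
The plan is to prove the claim by contradiction, exploiting two structural features: the region $\MACtwoR\cNpCMG$ places no constraint on the coordinate $R_{1p}$ (and symmetrically $\MAConeR\cNpCMG$ places none on $R_{2p}$), and the planar region $\CMGR\cNpCMG$ is downward-closed because every inequality (CMG1)--(CMG9) has nonnegative coefficients on $R_1$ and $R_2$.

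First I would fix the ambient setting. Regard $\MAConeR\cNpCMG$ and $\MACtwoR\cNpCMG$ as cylinders inside $\mathbb{R}_+^4$ with coordinates $(R_{1p},R_{1c},R_{2c},R_{2p})$: the former is cut out by (a1)--(d1) together with positivity of $R_{1p},R_{1c},R_{2c}$ and is free in $R_{2p}$, while the latter is cut out by (a2)--(d2) together with positivity of $R_{2p},R_{2c},R_{1c}$ and is free in $R_{1p}$. The hypothesis says that the $4$-tuple $v\equiv(R_{1p},R_{1c},R_{2c},R_{2p})$ assembled from $P_1$ and $P_2$ (which share the coordinates $R_{1c},R_{2c}$) lies in $\MAConeR\cNpCMG\cap\MACtwoR\cNpCMG$, with $\mathbf{\Pi}(v)=(R_1,R_2)$ via \eqref{eqn:FM_elim}, and $(R_1,R_2)\in\delCMGRpr\cNpCMG$; recall that $\CMGR\cNpCMG$ is, by definition, $\mathbf{\Pi}$ applied to this intersection (the per-distribution version of \eqref{eqn:CMG-as-UNION}).

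Next I would run the perturbation argument. Suppose, for contradiction, that $P_1\notin\delMAConeR\cNpCMG$, i.e.\ $P_1$ lies in the topological interior of the polytope $\MAConeR\cNpCMG$. Then for all small enough $\varepsilon>0$ the point $(R_{1p}+\varepsilon,R_{1c},R_{2c})$ still lies in $\MAConeR\cNpCMG$; since membership in $\MACtwoR\cNpCMG$ is unaffected by $R_{1p}$, the perturbed tuple $v'\equiv(R_{1p}+\varepsilon,R_{1c},R_{2c},R_{2p})$ lies in $\MAConeR\cNpCMG\cap\MACtwoR\cNpCMG$, hence $\mathbf{\Pi}(v')=(R_1+\varepsilon,R_2)\in\CMGR\cNpCMG$. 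Symmetrically, if $P_2\notin\delMACtwoR\cNpCMG$ one obtains $(R_1,R_2+\varepsilon)\in\CMGR\cNpCMG$. To finish I would derive a contradiction from the location of $(R_1,R_2)$ on the boundary: a point of $\delCMGRpr\cNpCMG$ lies on one of the ``dominant'' facets of the heptagon of Figure~\ref{fig:cmg-heptagon}, which come from (CMG5)--(CMG9) and take the form of a supporting hyperplane $aR_1+bR_2=c$ with $a>0$ and $b>0$; moving to $(R_1+\varepsilon,R_2)$ raises the left side by $a\varepsilon>0$, and moving to $(R_1,R_2+\varepsilon)$ raises it by $b\varepsilon>0$, so neither perturbed point can satisfy (CMG5)--(CMG9), contradicting that they lie in $\CMGR\cNpCMG$. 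Therefore $P_1\in\delMAConeR\cNpCMG$ and $P_2\in\delMACtwoR\cNpCMG$.

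The one step deserving care, and the only real obstacle, is the justification that every point of $\delCMGRpr\cNpCMG$ sits on a facet whose outer normal is strictly positive in \emph{both} coordinates --- this is exactly why the statement is restricted to the non-vertical, non-horizontal boundary. Concretely: the inequalities (CMG1), (CMG2), (CMG4) each bound $R_1$ alone (vertical facets, normal $\propto(1,0)$), (CMG3) bounds $R_2$ alone (horizontal facet, normal $\propto(0,1)$), and only (CMG5)--(CMG9) mix $R_1$ and $R_2$ with strictly positive coefficients; so after discarding the vertical and horizontal facets, what remains of the boundary --- precisely $\delCMGRpr\cNpCMG$ --- is covered by the latter, and the coefficient check is immediate. (Abstractly the same thing holds: $\CMGR\cNpCMG$ is a downward-closed full-dimensional polytope in $\mathbb{R}_+^2$, so its Pareto boundary splits into maximal faces that are either axis-parallel or carry a strictly positive normal, and $\delCMGRpr$ is by definition the union of the latter.) No use of Lemma~\ref{lemma:geometry-of-MAConeR} is needed here; that lemma enters only in the subsequent arguments that pin $P_1$ and $P_2$ down to specific facets among $a_1,\dots,d_1$.
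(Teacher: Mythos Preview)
Your proof is correct and follows essentially the same contradiction argument as the paper: perturb the personal rate $R_{ip}$ by $\varepsilon$, observe the resulting point still lies in the intersection of the two MAC regions, and conclude that $(R_1+\varepsilon,R_2)$ (or $(R_1,R_2+\varepsilon)$) is achievable, contradicting membership in $\delCMGRpr$. Your version is somewhat more explicit than the paper's on two points the paper leaves implicit: first, that perturbing $R_{1p}$ does not affect membership in $\MACtwoR$ (since that region places no constraint on $R_{1p}$); second, the facet analysis showing that every point of $\delCMGRpr$ lies on a supporting hyperplane with strictly positive normal in both coordinates, so that increasing either coordinate alone exits the region --- the paper simply asserts the contradiction without spelling this out.
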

		%		\begin{proof}

		Suppose that this were not the case ---
		that is, we assume that at least one of the points, $P_i$ is not 
		on the boundary of its region $\partial\mathcal{R}^i_{\textrm{CMG}}\cNpCMG$.
		Suppose, for a contradiction, 
		that $P_i$ is in the interior of $\mathcal{R}^i_{\textrm{CMG}}\cNpCMG$,
		then 	there must exist a ball of achievable rates of size $\delta$ around $P_i$.
		This means that we would
		be able to increase the private rate to $R'_{ip}=R_{ip}+\delta$ for some 
		$\delta > 0$.
		% until we reach the boundary.
		%and $R'_{2p}=R_{2p}+\delta_2$ 
		The resulting point $P^\prime_i$ will be still be achievable so long as
		we stay within the region $\mathcal{R}^i_{\textrm{CMG}}\cNpCMG$.
		However, such a $\delta$ displacement leads to an increase the sum rate 
		$R^\prime_i = R^\prime_{ip} + R^\prime_{ic}  =  R_{ip}+\delta + R_{ic}= R_i+\delta$.
		This 	contradicts our initial assumption that $(R_1,R_2) \in \delCMGRpr\cNpCMG$.
		%
		%	\comment{Pranab said: For this argument to work, we must boundary to mean 
		%			non-vertical and non-horizontal parts. \\
		%			Can we state it in a different way such that $\delCMGR$ the entire boundary? }
		%%		\end{proof}
		Therefore, Claim~\ref{lem:opt-rates-on-bdrs} must be true,
		and this means that it is sufficient to show how to achieve 
		all the rates on the boundary of the rate regions
		$\partial\mathcal{R}^i_{\textrm{CMG}}\cNpCMG = a_i \cup b_i \cup c_i \cup d_i$.
		% [FIXME] same issue facets vs planes

		\bigskip

		A priori, we have to consider all possible starting combinations of the points 
		$P_i \in a_i \cup b_i \cup c_i \cup d_i$.
		However, using the \emph{rate moving} operation (Definition~\ref{def:rate-moving}), we can move any
		point in $b_i \cup d_i \setminus a_i \cup c_i$ to an equivalent  point in $a_i \cup c_i$
		as illustrated in Figure~\ref{fig:shash2}.
		\begin{figure}[hptb]
			\begin{center}
			\includegraphics[width=0.5\textwidth]{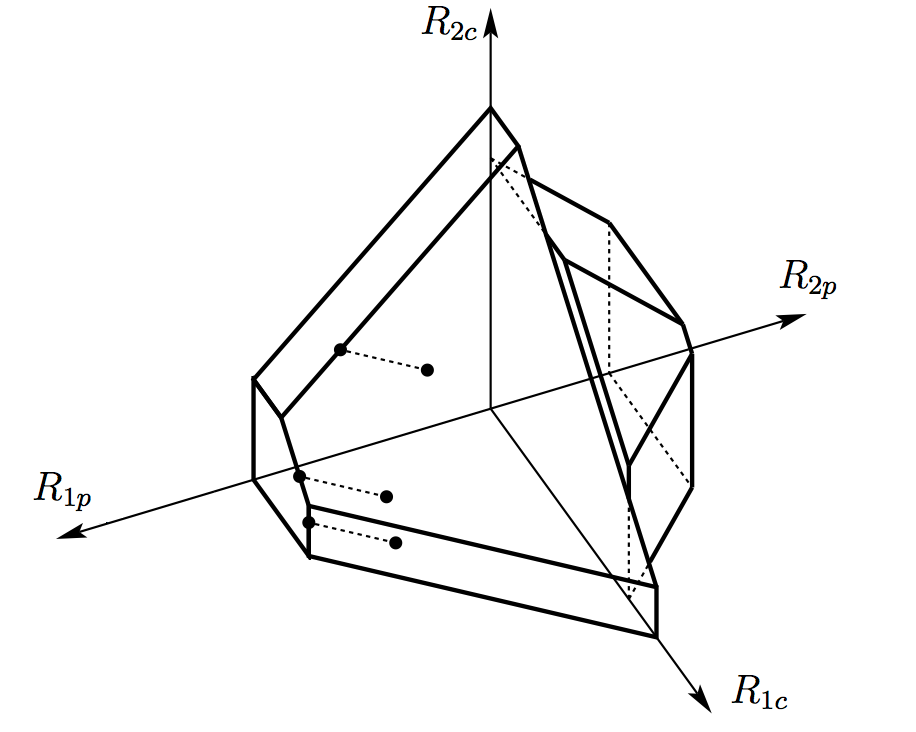}
			\caption{Moving points on the $b_1$ and $d_1$ facets to equivalent
					points on $a_1$ and $c_1$.}
			\label{fig:shash2}%
			\end{center}
		\end{figure}

		\begin{lemma}[Moving points \cite{sasoglu2008successive}]		
										\label{lem:sasoglou-moving-bd-points}
			Any point $P_i$ that lies on one of the planes $b_i \cup d_i \setminus a_i \cup c_i$
			%			there exists operations which 
			can be converted to a different point $P'_i$ on one of the  %``good'' 
			planes $a_i \cup c_i$,  while leaving the sum rate $(R_1,R_2)$ unchanged.
		\end{lemma}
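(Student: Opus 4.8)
The statement to prove is Lemma~\ref{lem:sasoglou-moving-bd-points}: a rate point $P_i$ on the facets $b_i \cup d_i$ but not on $a_i \cup c_i$ can be converted to an equivalent point $P_i'$ on $a_i \cup c_i$ with the same projected pair $(R_1,R_2)$. I will argue for Receiver~1 (the argument for Receiver~2 is symmetric by relabelling), so $P_1 = (R_{1p}, R_{1c}, R_{2c})$. The key observation, already highlighted after equation~\eqref{eqn:FM_elim}, is that the \emph{rate moving} operation of Definition~\ref{def:rate-moving} preserves $(R_1, R_2)$ as long as it keeps both $P_1$ inside $\MAConeR\cNpCMG$ and $P_2$ inside $\MACtwoR\cNpCMG$: increasing $R_{1p}$ by $\delta_1$ while decreasing $R_{1c}$ by $\delta_1$ leaves $R_1 = R_{1p}+R_{1c}$ fixed, and simultaneously decreasing $R_{2c}$ by $\delta_2$ while increasing $R_{2p}$ by $\delta_2$ leaves $R_2 = R_{2p}+R_{2c}$ fixed. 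The whole content of the lemma is therefore: starting on a $b_1$ or $d_1$ facet, one can slide along these rate-moving directions until one of the constraints (a1) or (c1) becomes tight, without ever violating (a1)--(d1) for Receiver~1 or (a2)--(d2) for Receiver~2.

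First I would dispose of the $b_1$ case. The facet $b_1$ is the plane $R_{1p}+R_{1c} = I(b_1)$; note this quantity does not involve $R_{2c}$ at all. Increase $R_{1p}$ by $\delta_1$ and decrease $R_{1c}$ by $\delta_1$. The sum $R_{1p}+R_{1c}$ is unchanged, so inequality (b1) stays tight; (c1) and (d1), which only bound sums that are non-decreasing or unchanged under this move, or have slack by Lemma~\ref{lemma:geometry-of-MAConeR}, remain satisfied --- here I would invoke $I(a_1)\le I(b_1)\le I(d_1)$ and $I(a_1)+I(d_1)\le I(b_1)+I(c_1)$ to check (c1) explicitly. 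Keep pushing $\delta_1$ until either $R_{1c}$ hits $0$ (in which case $P_1$ now lies on a coordinate plane, which one checks is contained in $a_1$ or $c_1$ once we also exploit that $R_{1p}=I(b_1)\le I(c_1)$), or until $R_{1p}$ reaches $I(a_1)$ given the other coordinates, i.e.\ until (a1) becomes tight and we land on $a_1$. Because we only ever \emph{decreased} $R_{1c}$ and left $R_{2c}$ alone, the Receiver~2 constraints (a2)--(d2), in which $R_{1c}$ appears only in (c2) and (d2) on the left-hand side, are only made easier; nothing is violated on that side.

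Next the $d_1$ case, which is the genuinely interesting one because the $d_1$ plane $R_{1p}+R_{1c}+R_{2c}=I(d_1)$ involves $R_{2c}$. Here I would use the full rate-moving freedom with \emph{both} $\delta_1$ and $\delta_2$: decrease $R_{1c}$ by $\delta_1$, increase $R_{1p}$ by $\delta_1$, decrease $R_{2c}$ by $\delta_2$, increase $R_{2p}$ by $\delta_2$. To stay on the $d_1$ plane I would couple the two, e.g.\ keep $\delta_1$ free but arrange $R_{1p}+R_{1c}+R_{2c}$ to decrease only if it is allowed to, i.e.\ we are moving \emph{into} the region off the $d_1$ facet, which is still fine because we only need $P_1\in\MAConeR\cNpCMG$, not $P_1\in d_1$. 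Concretely: push $R_{1p}$ up and $R_{1c}$ down; this decreases the left side of (d1) — wait, $R_{1p}+R_{1c}$ is unchanged so the $d_1$ sum is unchanged — so instead I would also decrease $R_{2c}$ (raising $R_{2p}$), which strictly decreases the $d_1$ sum and moves $P_1$ to the interior side of $d_1$, while (a1) and (c1) are approached as $R_{1p}$ grows. Continue until (a1) or (c1) is tight: by the polymatroid inequalities of Lemma~\ref{lemma:geometry-of-MAConeR} the intersection of the $d_1$ plane with the region's boundary is precisely the union of edges lying in $a_1\cup b_1\cup c_1$, so this sliding terminates on one of those. The simultaneous decrease of $R_{2c}$ is exactly what keeps Receiver~2's constraints valid, and in fact improves its sum-rate slack; one must only verify $R_{2c}$ and $R_{1c}$ stay non-negative, which one arranges by stopping the slide at the first coordinate plane if reached earlier.

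\textbf{The main obstacle.} The delicate point is the bookkeeping in the $d_1$ case: one must choose the ratio $\delta_1:\delta_2$ (or decide to move them in separate sub-steps) so that \emph{all eight} inequalities (a1)--(d1), (a2)--(d2) stay satisfied throughout the path, and so that the process provably reaches $a_1\cup c_1$ rather than stalling. This is where Lemma~\ref{lemma:geometry-of-MAConeR} (and its Receiver~2 analogue) does the real work --- the super-modular polymatroid structure $I(a_1)+I(d_1)\le I(b_1)+I(c_1)$ guarantees that the facet $d_1$ is "above" $a_1$ and $c_1$ in the right sense, so that decreasing the common rates necessarily brings us down onto $a_1$ or $c_1$. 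I would organise the argument by first showing the move is locally legal (all inequalities are preserved for small $\delta$), then showing it can be run until an (a1)- or (c1)-type constraint becomes tight, citing the geometry lemma for termination; the case $R_{1c}\to 0$ or $R_{2c}\to 0$ before that is a clean boundary sub-case handled by inclusion of the coordinate planes in $a_i\cup c_i$.
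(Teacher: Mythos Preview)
Your treatment of the $b_1$ case is essentially correct (though the remark about the coordinate plane $R_{1c}=0$ being ``contained in $a_1$ or $c_1$'' is wrong --- it isn't; what is true is that, since $I(a_1)\le I(b_1)$, the constraint (a1) becomes tight before $R_{1c}$ reaches zero, so the issue never arises).

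The $d_1$ case, however, has a genuine gap. You propose to take $\delta_2>0$, i.e.\ to decrease $R_{2c}$ and correspondingly \emph{increase} $R_{2p}$, claiming this ``keeps Receiver~2's constraints valid''. It does not: raising $R_{2p}$ tightens (a2), $R_{2p}\le I(a_2)$, and there is nothing in your argument preventing $R_{2p}$ from already sitting at $I(a_2)$. In fact the paper shows exactly this must happen: using Claim~\ref{lem:opt-rates-on-bdrs} (which follows from the assumption $(R_1,R_2)\in\delCMGRpr$), if $P_1\in d_1\setminus(a_1\cup b_1\cup c_1)$ then one argues by contradiction that $P_2\in a_2$ --- were $P_2$ in $c_2$ or $d_2$, one could strictly increase $R_2$, contradicting optimality, and $P_2\in b_2$ would force $P_1\in a_1\cup b_1$. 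Once $P_2\in a_2$ is established, the paper uses \emph{only} the $\delta_1$ move: slide along $d_1$ by increasing $R_{1p}$ and decreasing $R_{1c}$ until (a1) or (c1) saturates. There is no need to leave the plane $d_1$; the facets $a_1$ and $c_1$ intersect $d_1$, so the slide terminates on $a_1\cup c_1$. Since $R_{2p}$ and $R_{2c}$ are untouched and $R_{1c}$ only decreases, all of Receiver~2's constraints remain valid.

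The structural point you are missing is that the lemma is stated for points coming from $\delCMGRpr$, and the paper's proof exploits this optimality at the outset to pin down the possible \emph{pairs} of facet-locations $(P_1,P_2)$ to a short list --- $b_1\times a_2$, $a_1\times b_2$, $b_1\times b_2$, $d_1\times a_2$, $a_2\times d_1$ --- and then handles each by an explicit move using only $\delta_1$ (or only $\delta_2$, or the two in sequence). Your attempt to choose ``the ratio $\delta_1:\delta_2$'' so that all eight inequalities hold simultaneously cannot succeed in general without this prior restriction, precisely because increasing $R_{2p}$ can violate (a2).
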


		In order to be precise, we have to study the effects of the rate 
		moving operation on both points $P_1$ and $P_2$ simultaneously.
		%		 at a time
		%		Note that the proof moving 	the pr
		This is because the \emph{same} rates $R_{1c}$ and $R_{2c}$ appear in the common 
		coordinates of both $P_1$ and $P_2$.
		%	we cannot move each point independently:
		%	the points $P_1$ and $P_2$ are \emph{coupled} by the common rates.
		The reasoning behind the proof of Lemma~\ref{lem:sasoglou-moving-bd-points} 
		is reminiscent of the argument used to prove Claim~\ref{lem:opt-rates-on-bdrs}.
		The details are given in Appendix~\ref{sec:sasoglou-moving-bd-points}.

		Lemma~\ref{lem:sasoglou-moving-bd-points} is important because in the 
		next section we will show
		how to achieve the rates in the facets $a_i$ and $c_i$ using 
		two-sender quantum simultaneous decoding.
		This means that we can construct a decoder that achieves all the rates
		for the quantum Chong-Motani-Garg rate region without the need for a three sender
		simultaneous decoder from Conjecture~\ref{conj:sim-dec}.

%\ifthenelse{\boolean{SASOGLUDETAILS}}{
%%	
%%	\hrule
%%	\comment{ I think I can drop this section because the above heuristic argument is enough }
%	
%	}{} % end Sasoglu details

	\subsection{Two-message simultaneous decoding is sufficient for the rates of the facets $a_i$ and $c_i$}		
											\label{subsec:two-simult-decoding-for-ac}
	
		In this section we show how to achieve the rates on the $a_1$ and 
		$c_1$ facets using only two-sender simultaneous decoding.

		% Omar says : mention distribution $p$
%		\begin{lemma}[Successive cancellation for $a$ and $c$ planes \cite{sasoglu2008successive}]
%			\label{lem:sasoglou-succ-canc-for-ac}
%            		Fix a distribution $p(w_1, w_1, w_2, w_2)$ and a function $f : \cU \times \cW \to \cX$ and let
%			the rate pair $(R_1,R_2) \in \delCMGR$ come from a common-personal rate points
%			$P_1=(R_{1p},R_{1c},R_{2c}) \in \delMAConeR$ and $P_2=(R_{2p},R_{2c},R_{1c})\in \delMACtwoR$
%			such that
%			\be
%				(P_1, P_2) \in a_1 \cup c_1 \times a_2 \cup c_2,
%			\ee
%			%\be
%			%	(R_{1p}+R_{1c}, R_{2p}+R_{2c})=(R_1,R_2) \in \delCMGR.   \label{assume-optimal-rate-pair}
%			%\ee
%			then the rate $(R_1,R_2)$ can be achieved via a rate-splitting
%			approach that only uses successive cancellation.
%		\end{lemma}

		% Second version
		\begin{lemma}[Two-simultaneous decoding for $a$ and $c$ planes] \label{lem:sasoglou-succ-canc-for-ac}
            		Fix an input distribution $\pCMG\in \mcal{P}_{\textrm{CMG}}$
			%p(w_1)p(x_1|w_1)\ p(w_2)p(x_2|w_2)$ %and a function $f : \cU \times \cW \to \cX$ 
			and let the rate pair $(R_1,R_2) \in \delCMGR\cNpCMG $ come from the rate triples 
			$P_1=(R_{1p},R_{1c},R_{2c}) \in \delMAConeR\cNpCMG $ and $P_2=(R_{2p},R_{2c},R_{1c})\in \delMACtwoR\cNpCMG $
			such that
			\be
				(P_1, P_2) \in a_1 \cup c_1 \times a_2 \cup c_2.
			\ee
			%\be
			%	(R_{1p}+R_{1c}, R_{2p}+R_{2c})=(R_1,R_2) \in \delCMGR.   \label{assume-optimal-rate-pair}
			%\ee
			Then the rate $(R_1,R_2)$ is achievable for the QIC using 
			two-sender quantum simultaneous decoding.
		\end{lemma}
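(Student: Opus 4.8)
The plan is to reduce the interference-channel decoding to two independent multiple-access decoding tasks, and then to show that for each of the facets $a_i$ and $c_i$ there is an explicit decoding schedule for the corresponding receiver that uses only the two-sender simultaneous decoder of Theorem~\ref{thm:sim-dec-two-sender}.

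\textbf{Reduction step.} As in the proofs of Theorem~\ref{thm:carleial} and Theorem~\ref{thm:strong-int}, I would apply the operator union bound (Lemma~\ref{lem:operator-union-bound}) to the pair of decoding POVMs $\{\Lambda_{m_{1p},m_{1c},m_{2c}}\}$ and $\{\Gamma_{m_{1c},m_{2c},m_{2p}}\}$, so that the average error probability of the CMG code is bounded by the sum of the error probabilities of a MAC task on $B_1^n$ (for Receiver~1) and a MAC task on $B_2^n$ (for Receiver~2). Because the codebooks are built by superposition coding over the time-sharing sequence $q^n$, each of these is of the form handled by Theorem~\ref{thm:sim-dec-two-sender} together with the coded-time-sharing Corollary~\ref{cor:two-sender-QSD}; all typical projectors below are taken conditionally on $q^n$. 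By the left/right symmetry of the quantum interference channel it then suffices to describe Receiver~1's decoder for the two cases $P_1\in a_1$ and $P_1\in c_1$, Receiver~2 being treated identically with $1\leftrightarrow 2$.

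\textbf{Facet $a_1$ (Case a).} When (a1) is the active constraint I would have Receiver~1 first perform a two-sender simultaneous decode of the common-message pair $(m_{1c},m_{2c})$, regarding $\{w_1^n(m_{1c})\}$ and $\{w_2^n(m_{2c})\}$ as the two independent senders and the effective output state $\rho^{B_1}_{w_1,w_2}$ (the partial trace over $B_2$, averaged over the personal codewords treated as noise) as the quantum output; then, conditioned on the recovered $w_1^n,w_2^n$, perform an HSW-type decode (Theorem~\ref{thm:HSWtheorem}) of $m_{1p}$ from $\{x_1^n(m_{1p}|\hat m_{1c})\}$. Theorem~\ref{thm:sim-dec-two-sender}/Corollary~\ref{cor:two-sender-QSD} make the first step succeed when $R_{1c}\le I(W_1;B_1|W_2Q)$, $R_{2c}\le I(W_2;B_1|W_1Q)$ and $R_{1c}+R_{2c}\le I(W_1W_2;B_1|Q)$, and the second step succeeds when $R_{1p}\le I(X_1;B_1|W_1W_2Q)=I(a_1)$; the gentle-operator lemma (Lemma~\ref{lem:gentle-operator}) guarantees the first measurement barely disturbs the residual state, exactly as in the successive-decoding argument for Theorem~\ref{thm:cqmac-capacity}. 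Substituting $R_{1p}=I(a_1)$ and using the chain rule, I would check that these three constraints become $R_{1c}+R_{1p}\le I(b_1)$, $R_{2c}+R_{1p}\le I(c_1)$, and a relation that is implied by (d1) by virtue of the super-modular inequality \eqref{polymatr3} of Lemma~\ref{lemma:geometry-of-MAConeR}; hence Case a attains every point of the $a_1$ facet, and in particular the given $P_1$.

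\textbf{Facet $c_1$ (Cases c and c').} When (c1) is active, whether Receiver~1 can afford to decode $m_{2c}$ depends on the value of $R_{2c}$. I would split into: (Case c) Receiver~1 decodes $m_{1c}$ alone from $\{w_1^n(m_{1c})\}$ against the fully-averaged $\rho^{B_1}$, then performs a two-sender simultaneous decode of $(m_{1p},m_{2c})$ conditioned on $w_1^n(m_{1c})$, treating $\{x_1^n(\cdot|m_{1c})\}$ and $\{w_2^n(\cdot)\}$ as the two senders; (Case c') Receiver~1 folds all of Sender~2's signal into the noise and simply decodes $m_{1c}$ then $m_{1p}|m_{1c}$. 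In both sub-cases Theorem~\ref{thm:sim-dec-two-sender} (or HSW for Case c') plus Lemma~\ref{lem:gentle-operator} control the steps, and translating the success conditions through the chain rule and Lemma~\ref{lemma:geometry-of-MAConeR} shows that the two sub-regions together cover the $c_1$ facet, the switch from Case c to Case c' occurring precisely where $R_{1p}+R_{2c}\le I(X_1W_2;B_1|W_1Q)=I(c_1)$ stops being the binding inequality. The delicate part of the write-up — and the step I expect to be the main obstacle — is not any individual error analysis (each is a routine reuse of the two-sender computation already carried out) but the polyhedral bookkeeping: one must pin down exactly which of Cases a, c, c' applies for a given $P_i$ on $a_i\cup c_i$, verify with no gap at the $c/c'$ boundary that the chosen schedule's rate constraints reduce to or are implied by the facet inequalities, and confirm that the two receivers' independent choices are consistent given that $R_{1c}$ and $R_{2c}$ are shared. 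I would organize this facet-by-facet, reusing the displacement argument behind Lemma~\ref{lem:opt-rates-on-bdrs} and Lemma~\ref{lem:sasoglou-moving-bd-points} to keep all intermediate rate tuples on the relevant boundaries.
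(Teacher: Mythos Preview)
Your proposal is essentially the paper's proof: the same three decoding schedules (Case~a: $(m_{1c},m_{2c})\to m_{1p}$; Case~c: $m_{1c}\to(m_{1p},m_{2c})$; Case~c$'$: $m_{1c}\to m_{1p}$), the same reduction via Lemma~\ref{lem:operator-union-bound}, and the same use of the two-sender simultaneous decoder at each step. Two small corrections: in Case~a the sum-rate constraint $R_{1c}+R_{2c}\le I(W_1W_2;B_1|Q)$, after adding back $R_{1p}=I(a_1)$, is \emph{exactly} (d1) via the chain rule, so no appeal to \eqref{polymatr3} is needed; and on the $c_1$ facet the inequality (c1) is tight throughout by definition, so the c/c$'$ threshold is not where ``(c1) stops being binding'' but rather at $R_{1p}=I(X_1;B_1|W_1Q)$ --- above that value Case~c applies (and one must also argue, as the paper does in Appendix~\ref{apdx:redundant-ineq}, that (b1) is then redundant given (a1) and the reduced form of (d1)), below it Case~c$'$ suffices.
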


		\begin{proof}
		Our analysis is similar to \cite{sasoglu2008successive},
		but we are not going to use a rate-splitting strategy.

		{\bf Achieving points in a:} 
			Consider a point $P_1 \in a_1$, which implies
			\begin{align}
				R_{1p}  & = I\left(  X_{1};B_{1}|W_{1}W_{2}Q\right) , \label{eq:R1ptight} \\
				R_{1p}+R_{1c}  & \leq I\left(  X_{1};B_{1}|W_{2}Q\right)  , \\
				R_{1p}+R_{2c}  & \leq I\left(  X_{1}W_{2};B_{1}|W_{1}Q\right) ,  \\
				R_{1p}+R_{1c}+R_{2c}  & \leq I\left(  X_{1}W_{2};B_{1}|Q\right).
			\end{align}
			
			We can subtract equation \eqref{eq:R1ptight} from the inequalities below it to obtain
			a new set of inequalities
			\begin{align}
				R_{1p}  & = I\left(  X_{1};B_{1}|W_{1}W_{2}Q\right),   \\
				R_{1c}  & \leq  I\left(  W_{1};B_{1}|W_{2}Q\right) 				\label{eqn:caseAone}
							= I\left(  X_{1};B_{1}|W_{2}Q\right)-  I\left(  X_{1};B_{1}|W_{1}W_{2}Q\right)  , \\
				R_{2c}  & \leq I\left(  W_{2};B_{1}|W_{1}Q\right)  				\label{eqn:caseAtwo}
							= I\left(  X_{1}W_{2};B_{1}|W_{1}Q\right)-I\left(  X_{1};B_{1}|W_{1}W_{2}Q\right) ,  \\
				R_{1c}+R_{2c}  & \leq  I\left(  W_{1} W_{2};B_{1}|Q\right) 			\label{eqn:caseAthree}
							= I\left(  X_{1}W_{2};B_{1}|Q\right)- I\left(  X_{1};B_{1}|W_{1}W_{2}Q\right).
			\end{align}			
			
			Looking at equations \eqref{eqn:caseAone}-\eqref{eqn:caseAthree} we see that 
			the rates $(R_{1c},R_{2c})$  have the form of a MAC rate region 
			with inputs $W_1 \sim p(w_1|q), W_2 \sim p(w_2|q)$ and output $B_1$.
			We will perform the decoding in the following order at Receiver~1: $(W_{1},W_{2}) \to X_1|W_1W_2$.

						Consider the quantum channel 
			\be
				w_1, w_2 	\to \rho_{w_1,w_2}^{B_1},			\label{chan:cloud-centers}
			\ee
			where $\rho_{w_1,w_2}^{B_1}$ is defined as the average output
			state assuming superposition encoding of the random variables
			$x_1$ and $x_2$ will be performed:
			\be
				\rho_{w_1,w_2}^{B_1}  \equiv \sum_{x_1} \sum_{x_2} p(x_1|w_1) p(x_2|w_2) \rho_{x_1,x_2}^{B_1}.
			\ee
			
			The decoding strategy for Receiver~1 when the rates are on the facet $a_1$
			correspond to the use of the two-message simultaneous decoder 
			(Theorem~\ref{thm:sim-dec-two-sender})
			on the channel shown in \eqref{chan:cloud-centers}.

			After the common parts have been decoded, 
			Receiver~1 will use a conditional HSW decoder to decode the message encoded in $X_1$.

		{\bf Achieving points in c:} 
			Consider a point $P_1 \in c_1$, which implies that the constraint on the $R_{1p}+R_{2c}$ 
			inequality is tight.
			\begin{align}
				R_{1p}  & \leq I\left(  X_{1};B_{1}|W_{1}W_{2}Q\right) , \label{eqn:ctighta} \\
				R_{1p}+R_{1c}  & \leq I\left(  X_{1};B_{1}|W_{2}Q\right)  , \label{eqn:ctightb} \\
				R_{1p}+R_{2c}  & = I\left(  X_{1}W_{2};B_{1}|W_{1}Q\right) , \label{eqn:ctightc}  \\
				R_{1p}+R_{1c}+R_{2c}  & \leq I\left(  X_{1}W_{2};B_{1}|Q\right). \label{eqn:ctightd} \\
			\intertext{ 	If we subtract \eqref{eqn:ctightc} from \eqref{eqn:ctightd} 
					% and add the negative of \eqref{eqn:ctightc} to \eqref{eqn:ctightb} 
					we obtain the following equivalent set of inequalities. }
				R_{1p}  			& \leq I\left(  X_{1};B_{1}|W_{1}W_{2}Q\right) , \label{eqn:ctighta2} \\
				R_{1p}+R_{1c}  	& \leq I\left(  X_{1};B_{1}|W_{2}Q\right)  , \label{eqn:ctightb2} \\
				R_{1p}+R_{2c}  	& = I\left(  X_{1}W_{2};B_{1}|W_{1}Q\right) , \label{eqn:ctightc2}  \\
				R_{1c}  			& \leq I\left(W_{1};B_{1}|Q\right) 
								 =  I\left(  X_{1}W_{2};B_{1}|Q\right)-I\left(  X_{1}W_{2};B_{1}|W_{1}Q\right)  
								 \label{eqn:ctightd2}
			\end{align}

			The constraint on the sum rate $R_{1p}+R_{1c}$ imposed by equation \eqref{eqn:ctightb2} 
			is less tight than the sum rate constraint obtained by adding equations 
			\eqref{eqn:ctighta2} and \eqref{eqn:ctightd2},				
			therefore we will drop equation \eqref{eqn:ctightb2}  from the remainder of the argument.
			The accuracy of this statement can be verified starting from $I(W_1;W_2|B_1) \geq 0$
			and rearranging the terms.
			See Appendix~\ref{apdx:redundant-ineq} for the details.

			The decoding strategy depends on the position of the point 
			$P_1$ lying within the $c_1$ plane. 
			We will treat two cases separately.
			\begin{description}
				
				\item[Case $c$:]
						Suppose $R_{1p}$ is such that: 
						\be
							I(X_1;B_1|W_1Q) \leq R_{1p}.   	 \label{eqn:lower-bnd-assume}
						\ee
						If we subtract this lower bound on $R_{1p}$ from equation \eqref{eqn:ctightc2}
						we can obtain an upper bound on $R_{2c}$.
						We also have an upper bound on $R_{1p}$ from \eqref{eqn:ctighta2} 
						and a bound on the sum rate $R_{1p}+R_{2c}$ from \eqref{eqn:ctightc2}.
						This gives us the following rate constraints:
						\begin{align}
							R_{1p}  			& \leq I\left(  X_{1};B_{1}|W_{1}W_{2}Q\right) , \tag{\ref{eqn:ctighta2} }  \\
							R_{2c}  			& \leq I\left(  W_{2};B_{1}|X_{1}Q\right) 
								= \eqref{eqn:ctightc2} -\eqref{eqn:lower-bnd-assume}  , \label{eqn:ctightc3}  \\
							R_{1p}+R_{2c}  	& = I\left(  X_{1}W_{2};B_{1}|W_{1}Q\right).  \tag{\ref{eqn:ctightc2}} \\
							R_{1c}			&  \leq I\left(W_{1};B_{1}|Q\right)  
						\end{align}
						
						\Sasoglou \ recognizes the rate constraints on $(R_{1p}, R_{2c})$ 
						in equations \eqref{eqn:ctighta2}, \eqref{eqn:ctightc3}
						and \eqref{eqn:ctightc2} to correspond to the dominant facet of a
						MAC rate region for a channel with inputs 
						$X_1 \sim p(x_1|w_1,q), W_2 \sim p(w_2|q)$ and output $(W_1,B_1)$. 
						In other words we have a special channel where $W_1$ is available
						as side information for Sender~1 and Receiver~1.
						The decode order is given by: $W_1 \to (X_1|W_1,\ W_2|W_1)$.

			To achieve rates on the plane $c_1$, Receiver~1 will first use a standard HSW decoder
			to decode the message $m_{1c}$ encoded in $W_1$
			and then apply the simultaneous decoding as stated in the following lemma:
			\end{description}%
			
			\begin{lemma}[Conditional simultaneous decoding]
			
				Let $\{ w_1^n(\ell_1) \}_{\ell_1 \in [2^{nR_{1\alpha} }] }$ be a codebook generated
				according to $\prod^n p_{W_1}$,
				and let $\{ x_1^n(m_1| w_1^n(\ell_1) ) \}_{m_1 \in [2^{nR_{1\beta} }] , \ell_1 \in [2^{nR_{1\alpha} }] }$
				be a conditional codebook generated according to $\prod^n p_{X_1|W_1}$.
				Similarly for Sender 2, we define codebooks 
				$\{ w_2^n(\ell_2) \}_{\ell_2 \in [2^{nR_{2\alpha} }] }$ 
				and another  $\{ x_2^n(m_2| w_2^n(\ell_2) ) \}_{m_2 \in [2^{nR_{2\beta} }] , \ell_2 \in [2^{nR_{1\alpha} }] }$
				generated according to $\prod^n p_{W_2}$ and $\prod^n p_{X_2|W_2}$.
				Suppose these codebooks are used on $n$ copies of the quantum multiple access 
				channel $\rho_{x_1,x_2}$, resulting in the map:
				\be
					(W_1^n, X_1^n,W_2^n, X_2^n) 
					\ \ \longrightarrow \ \
					\rho^n_{X_1^n|W_1^n,X_2^n|W_2^n}.
				\ee
				
				Consider the case where $W_1^n$ is known to the receiver, 
				and $X_2^n$ is considered as noise (averaged over).
				This situation corresponds to the following map:
				\be
					(W_1^n, X_1^n,W_2^n) 
					\ \ \longrightarrow \ \
					(W_1^n, \rho^n_{X_1^n|W_1^n,W_2^n} ),
				\ee
				where we defined $\rho^n_{X_1^n|W_1^n,W_2^n} \equiv 
					 \mathop{\mathbb{E}}_{X_2^n} \rho^n_{X_1^n|W_1^n,X_2^n|W_2^n}$,
				or in terms of the channel outputs:
				\be
					\rho^n_{X_1^n|W_1^n,W_2^n} 
					= 
					\bigotimes_{i=1}^n \left( 
						\sum_{x_2} 
							p_{X_2|W_2}(x_2|W_{2i}) 
							\rho_{X_{1i},x_2}.	
						\right).
				\ee
				
				An achievable rate region for the pair $(R_{1\beta}, R_{2\alpha})$ is described
				%				\comment{can I say capacity?}
				by: %the following set of inequalities:
				\begin{align}
					R_{1\beta} 			  &\leq I(X_1;B|W_1W_2), \\
					 R_{2\alpha} 			 &\leq I(W_2;B|X_1W_1) = I(W_2;B|X_1), \\
					R_{1\beta} + R_{2\alpha}  &\leq I(X_1W_2;B|W_1),
				\end{align}
				where the mutual information quantities are with respect to the state:
				%				with the system $X_2$ traced out:
				\be
					\!\!\theta^{W_1X_1W_2B}
					\equiv \!\!\!\!\!
					\sum_{w_1,x_1,w_2} \!\!\!\!\!
					p(w_1,x_1)p(w_2) 
					\ketbra{w_1}{w_1}^{W_1} 
					\!\otimes\!\,
					\ketbra{x_1}{x_1}^{X_1} 
					\!\otimes\!\,					
					\ketbra{w_2}{w_2}^{W_2} 
					\!\otimes\!\,					
					\rho^{B}_{x_1,w_2}.
%					\left( p_{X_2|W_2}(x_2|w_2) \rho^B_{x_1,x_2} \right).
				\ee
			\end{lemma}
			
			\begin{proof}
				The proof is similar to the two-sender MAC simultaneous decoding   
				from Theorem~\ref{thm:sim-dec-two-sender}.
%				with the small modification that now we have a conditional distribution.
			\end{proof}

			\begin{description}
				\item[Case $c^\prime$:]
						Now  suppose that $R_{1p} \leq I(X_1;B_1|W_1Q)$,
						then the trivial successive decoding  strategy is sufficient.
						Receiver 1 will decode in the order $W_1 \to X_1$.
%						Note that in this strategy we do not decode the message $m_{2c}$ encoded in $W_2$.

					The decoding for is done sequentially using HSW decoding.
					Receiver~1 decodes the message $m_{1c}$ first,
					followed by $m_{1p}$.
					The decoding in this case is similar to the successive decoding
					used in Theorem~\ref{thm:cqmac-capacity}.
					The interfering messages $m_{2c}$ and $m_{2p}$ are treated as noise.%
			\end{description}%
		%	\comment{ BUG?
		%	I don't understand why we can just drop equation \eqref{eqn:ctightb2} from our considerations.
		%	%
		%	I think the full set of inequalities for the case (c) is as follows:
		%	\begin{align}
		%		R_{1p}  			& \leq I\left(  X_{1};B_{1}|W_{1}W_{2}Q\right),  \\
		%		R_{2c}  			& \leq I\left(  W_{2};B_{1}|X_{1}Q\right) ,   \\
		%		R_{1p}+R_{2c}  	& = I\left(  X_{1}W_{2};B_{1}|W_{1}Q\right).   \\
		%		{\color{blue} R_{1c}  	}		& \  { \color{blue} \leq I\left(W_{1};B_{1}|Q\right)  } \\
		%		{\color{red} R_{1p}+R_{1c} } 	& \ { \color{red} \leq I\left(  X_{1};B_{1}|W_{2}Q\right)  }.  \label{eqn:bugg}	
		%	\end{align}
		%	%
		%	How do we make sure that equation \eqref{eqn:bugg} will be satisfied?
		%	}
			%			\comment{ Note that inequality \eqref{eqn:ctightc2} is used twice. Is this OK?}			
		\end{proof}

		Thus we see that the combination of Lemma~\ref{lemma:geometry-of-MAConeR},
		Lemma~\ref{lem:sasoglou-moving-bd-points}, and Lemma~\ref{lem:sasoglou-succ-canc-for-ac}
		shows that the quantum Chong-Motani-Garg rate region is achievable using
		only two-sender simultaneous decoding.

%

%	\subsection{Quantum simultaneous decoding}			\label{subsec:quantum-lemmas-for-ac}
%
%		\subsubsection{Decoding for a}
%			
%		\subsubsection{Decoding for c}
%
%		\subsubsection{Decoding for $c^\prime$}
%					

%!TEX root = thesis.tex

\section{Successive decoding strategies for interference channels}
\label{sec:rate-succ-decoding}
	
	We report on some results concerning achievable
	rate regions for the interference channel that use the successive decoding
	approach.

	\subsection{Time-sharing strategies}
	
%	An inner bound strategy for the interference channel is to require
%	that both receivers can decode the messages of both senders, i.e., we require that
%	the two senders code for the two multiple access channels to the two receivers.
	In Section \ref{sec:mac-succ-decoding} on the multiple access channel, we 
	saw that a successive decoding strategy can be used to achieve all the rates
	on the dominant vertices of the rate region.
		Recall that for a fixed choice of encoding distribution 
		$p\equiv p_{X_1}(x_1)p_{X_2}(x_2)$, the two-sender
		QMAC capacity region has the shape of a pentagon with 
		two extreme points $\alpha_p \equiv (I(X_1;B), I(X_2;B|X_1))$ 
		and $\beta_p \equiv (I(X_1;B|X_2), I(X_2;B))$,
		which correspond to the rates achievable by successive decoding
		in two different orders.
	To achieve the rates in the convex hull of these points,
	we can use time-sharing between different codes achieving these rates.
	%	 it is possible to construct good codes for the full capacity region of the 
	%	multiple access channel. 
	%
	%

	\medskip

	\begin{definition}[Time-sharing]
		%
		%The coding strategy for achieving the rate pair $A$ is to first decode the 
		%message $M_2$ and then decode $M_1$ using $M_2$ as side information.
		%To achieve the point $B$ we decode in the opposite order.
		%
		Given two codebooks $\cC_1$ and $\cC_2$ with rates 
		corresponding to rate points $\alpha_p$ and $\beta_p$
		and a desired rate point $P %(R_1,R_2) 
		\in \mathrm{conv}(\alpha_p, \beta_p)$,
		we will have
		\be
			P=t \alpha_p + (1-t)\beta_p,
		\ee
		for some $t \in \mathbb{R}$,
		which we call the \emph{time-sharing} parameter.
		We can achieve the rates of a point $P^* \approx P$
		if we use the rational time-sharing parameter $t^* \approx t$, $t^*\equiv \frac{M}{N} \in \mathbb{Q}$
		and the following strategy:
		during each $N$ block-uses of the channel, use codebook $\cC_1$
		during $M$ of them and during the remaining $N-M$ uses of the channel,
		use codebook~$\cC_2$.
	\end{definition}

%Suppose we want to achieve some rate pair $P=(R_1,R_2)$ which is an $(\lambda,1-\lambda)$ mixture of the points $\alpha_p$ and $\beta_p$:
%%
%\begin{align}
%\left[\!\begin{array}{c}
% R_1 \\ R_2
%\end{array}\!\right]_P 
%%  &= \lambda 
%%  \left[\!\begin{array}{c}
%%  R_1 \\ R_2
%%  \end{array}\!\right]_\alpha  + (1- \lambda)
%%  \left[\!\begin{array}{c}
%%  	R_1 \\ R_2
%% \end{array}\!\right]_\beta \\
%  &=
%   \lambda
%	\left[\!\begin{array}{c}
%		I(X_1;Y) \\ I(X_2;Y|X_1)
%	\end{array}\!\right]
% + 
% (1- \lambda)
%  \left[\!\begin{array}{c}
%  	I(X_1;Y|X_2) \\ I(X_2;Y)
%  \end{array}\!\right].
%\end{align}
%As the parameter $\lambda$ varies from $0$ to $1$ we span the whole dominant segment,
%and therefore achieve any of the rate points in between $\alpha_p$ and $\beta_p$.
%%

	%	To obtain an inner bound for the interference channel, one could try to use these codes
	%	for the two induced multiple access channels.
	% 	However, this 
	The time-sharing strategy is not well-adapted for the interference channel.
	This is because the rates of the corner points of the achievable rate regions
	for the two receivers are not necessarily the same.
	The time-sharing strategy that works for one of the receivers
	%induced multiple access channels 
	might not work for the other one.

	%	As a result, the choice of rates 
	%	for fixed rates $R_1$ and $R_2$. 
	%	In addition, decoding a codebook constructed by time-sharing between 
	%	two codebooks $\cC_1$ and $\cC_2$ assumes that both $\cC_1$ and $\cC_2$ 
	%	are decodable,
	%	however 

	%, and these codes
	%---that are obtained using successive 
	%decoding---
	%do in general depend on the properties of the channel for which 
	%one is coding. 
%	As a result,  Receiver~1 

It is however possible to use successive decoding strategies for an interference channel in the following way. We start by considering a strategy where both receivers are asked to decode both messages, i.e., we are dealing with the compound multiple access channel. Such a strategy defines an achievable
rate region known as the ``successive decoding inner bound'' 
for the interference channel (cf. page 6-7 of Ref.~\cite{el2010lecture}).
%	Suppose that Receiver~1 starts 
%	by decoding the message of Sender~2 and then the message of Sender~1, 
%	and Receiver~2 does the same. 
%	We can describe the decode orderings of the receivers by the two permutations
%	$\pi_1 = (2,1)$ and  $\pi_2 = (2,1)$.
%	In this case, we know that the random code 
%	defined by picking $2^{nR_1}$ and $2^{nR_2}$ codewords independently 
%	according to the product distributions $p^n_{X^n_1}$ and $p^n_{X^n_2}$ is 
%	decodable on average for Receiver~1 provided $R_1 < I(X_1; B_1|X_2)$ 
%	and $R_2 < I(X_2; B_1)$. 
%	%
%	Moreover, it is decodable on average for 
%	Receiver~2 provided $R_1 < I(X_2; B_2|X_1)$ and $R_2 < I(X_2; B_2)$. 
%	Thus, the rate pairs $R_1 < \min\{I(X_1; B_1|X_2), I(X_1; B_2|X_2\})$ and 
%	$R_2 <  \min\{I(X_2; B_1), I(X_2; B_2)\}$ are all achievable for the interference 
%	channel. 
%	Recall that Receiver~2 is actually not interested in the message 
%	sent by Sender~1. The only reason to decode the message of Sender~1 is to 
%	be able to decode the message of Sender~2 at a higher rate. It is thus 
%	useless to require Receiver~2 to decode the message of Sender~1 after 
%	decoding the message of Sender~2. 
%	
%	The above ordering shows that the rate pairs $R_1, R_2$ where 
%	$R_1 < I(X_1; B_1|X_2)$ and $R_2 < \min\{I(X_2; B_1),$ $I(X_2; B_2)\}$ are all 
%	achievable for the interference channel. 

	Consider all possible decode orderings that could 
	be used by the two receivers:
	\be
        \begin{array}{ll}	 
		\pi_1:  m_2 \to m_1|m_2,  		&\qquad 		\pi_2: m_2, \\
		\pi_1: m_2  \to m_1|m_2,  		&\qquad 		\pi_2: m_1 \to m_2|m_1, \\
		\pi_1: m_1,    				&\qquad 		\pi_2: m_1 \to m_2|m_1, \\		
		\pi_1: m_1,    				&\qquad 		\pi_2: m_2. 
        	\end{array}
	\ee
	Using each of these, we can achieve rates arbitrarily close to the following points:
        \begin{align}        	 
		P_1		&= (I(X_1; B_1|X_2), \min\{I(X_2; B_1), I(X_2; B_2) \} ), \label{eq:sd-1}\\
		P_2		&= (\min\{I(X_1; B_1|X_2), I(X_1;B_2)\}, \nonumber\\
		& \,\,\,\,\,\,\,\,\,\,\,\min\{I(X_2;B_1), I(X_2;B_2|X_1)\}), \label{eq:sd-2}\\
		P_3		&= (\min\{I(X_1;B_1), I(X_1;B_2)\}, I(X_2;B_2|X_1)),\label{eq:sd-3} \\
		P_4		&=(I(X_1; B_1), I(X_2; B_2)). \label{eq:sd-4}
        	\end{align}        
	We can use time-sharing between these different 
	codes for the interference channel to obtain all other 
	rates in $\mathrm{conv}(P_1,P_2,P_3,P_4)$.
	This achievable rate region is illustrated in the RHS of Figure \ref{fig:jt-succ-decoder}.

	\begin{figure}[ptb]
	\begin{center}
	\includegraphics[width=0.75\textwidth]{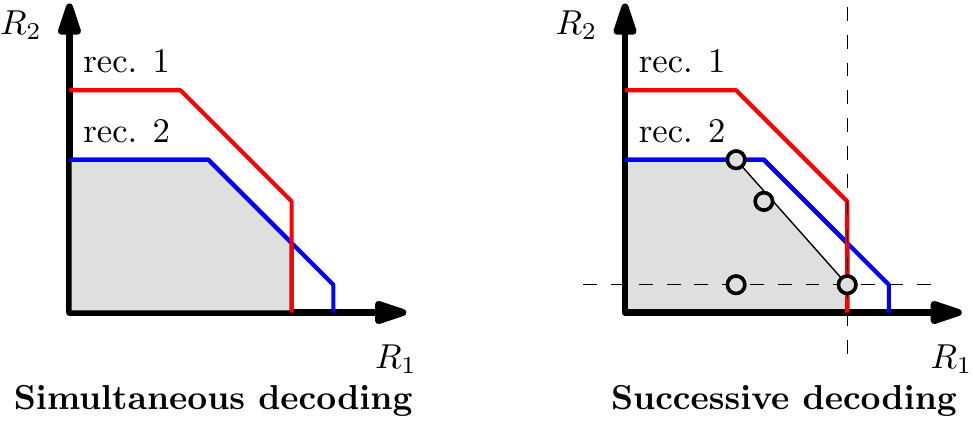}%
	\caption{These plots show achievable rates regions for the interference 
	channel for simultaneous decoding and successive decoding strategies with 
	fixed input distributions. Using a simultaneous decoding strategy, it is 
	possible to achieve the intersection of the two regions of the corresponding 
	multiple access channels. Using a successive decoding strategy, we obtain 
	four achievable rate points that correspond to the possible decoding orders 
	for the two multiple access channels. The solid red and blue lines outline the different 
	multiple access channel achievable rate regions, and the shaded gray areas 
	outline the achievable rate regions for the two different decoding strategies.
	}%
	\label{fig:jt-succ-decoder}%
	\end{center}
	\end{figure}

	\subsection{Split codebook strategies}
							\label{sec:rate-succ-decoding-rs}

		We can improve the successive decoding region described in 
		Section~\ref{sec:rate-succ-decoding} if we use split codebooks.
		Inspired by the Han-Kobayashi strategy we make the senders split their 
		messages into two parts:
		the messages of Sender~1 will be $m_{1p}$ and $m_{1c}$,
		and the messages of Sender~2 will be  $m_{2p}$ and $m_{2c}$.
		As in the Han-Kobayashi strategy, the use of the split codebooks 
		induces two three-sender multiple access channels.
		Receiver~1 is required to decode the set of messages 
		$m_{1p}, m_{1c}$ and $m_{2c}$ using successive decoding,
		and there are six different decode orderings he can use.

		Let the decoding ordering of Receiver 1 be represented by a permutation $\pi_1$
		on the set three elements $\{ 1p, 1c, 2c \}$.
		For example, the successive decoding of 
		the messages in the order $m_{2c} \to m_{1c}|m_{2c} \to m_{1p}|m_{1c}m_{2c}$
		will be denoted as the permutation $\pi_1=(2c,1c,1p)$.		
		%		when decoding in the order $m_1 \to m_2$, 
		%		and $\pi_1=(2,1)$ for the alternate decoding
		%		order.

		We can naturally use all  $6 \times 6$ pairs of decoding orders to obtain a set of achievable 
		rate pairs. 

		\begin{proposition}
			Consider the rate point $P$ associated with the decode ordering $\pi_1$ for Receiver~1 and
			$\pi_2$ for Receiver~2:
			\be
				P=\left(R^{(1)}_{1p} + \min\{R^{(1)}_{1c},R^{(2)}_{1c}\}, \  \min\{R^{(1)}_{2c},R^{(2)}_{2c}\} + R^{(2)}_{2p} \right), 
				\nonumber
			\ee
			where the rate constraints for Receiver $j$ satisfy
			%$R^{(j)}_{\pi_j(2)} = I(X_{\pi_j(2)};B|X_{\pi_j(1)})$ for $j \in \{1,2\}$, is achievable.
			%
		        \begin{align}        	 
				R^{(j)}_{\pi_j(1)}		&\leq I(X_{\pi_j(1)};B_j), \\
				R^{(j)}_{\pi_j(2)}		&\leq I(X_{\pi_j(2)};B_j|X_{\pi_j({1})}), \\
				R^{(j)}_{\pi_j(3)}		&\leq 
					\left\{ 	
						\begin{array}{ll}
							I(X_{\pi_j(3)};B_j|X_{\pi_j({1})}X_{\pi_j({2})}) \ \ \ 
								&\mathrm{  if } \ \pi_j(3)={jc} \ \ \mathrm{ or }\ \  \pi_j(3)={jp} 	\\
							\infty, \ \ \
								& \mathrm{  otherwise }
						\end{array}
					\right.
		        	\end{align}        
			The rate pair $P$ is achievable for the quantum interference channel,
			for all permutations $\pi_1$ of the set of indices  $(1p,1c,2c)$ 
			%of message that  Receiver~1 will decode, 
			and for all permutations $\pi_2$ of the set  $(2p,2c,1c)$.
		\end{proposition}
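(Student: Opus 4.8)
The plan is to reduce the claim to the split-codebook / successive-decoding argument that has already been carried out piecemeal elsewhere in the chapter, namely the superposition-coding codebook construction used for the Chong-Motani-Garg region together with the point-to-point HSW decoder of Theorem~\ref{thm:HSWtheorem} and the successive-decoding analysis of Theorem~\ref{thm:cqmac-capacity}. Since the senders' codebooks are exactly the CMG split codebooks (common messages $m_{1c},m_{2c}$ encoded in $W_1,W_2$ and personal messages $m_{1p},m_{2p}$ superimposed as $X_1,X_2$), the use of these codebooks on the quantum interference channel $\rho^{B_1B_2}_{x_1,x_2}$ induces, for Receiver~1, a three-sender classical-quantum MAC with inputs $W_1,W_2,X_1$ and output $B_1$ (averaging over $X_2$ and tracing out $B_2$), and symmetrically for Receiver~2. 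First I would fix a permutation $\pi_1$ of $(1p,1c,2c)$ and $\pi_2$ of $(2p,2c,1c)$, and observe that the overall error event for the IC code splits, via Lemma~\ref{lem:operator-union-bound} applied to $I^{B_1^nB_2^n}-\Lambda\otimes\Gamma$, into the sum of two MAC error events, one per receiver, exactly as in the proof of Theorem~\ref{thm:carleial}.

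Next I would handle each receiver's three-sender MAC decoding task \emph{purely by iterated point-to-point HSW decoding} in the order dictated by $\pi_j$, rather than by any simultaneous decoder; this is why Conjecture~\ref{conj:sim-dec} is not needed here. Concretely, for Receiver~$1$ with ordering $\pi_1$: the first message $m_{\pi_1(1)}$ is decoded by an HSW decoder for the channel whose output is the fully-averaged state (all other senders treated as noise), so any rate $R^{(1)}_{\pi_1(1)}\le I(X_{\pi_1(1)};B_1)$ works by Theorem~\ref{thm:HSWtheorem}; after decoding, the gentle operator lemma (Lemma~\ref{lem:gentle-operator}) guarantees the post-measurement state is within $2\sqrt{\epsilon}$ in trace distance of the original, so the second-stage decoder, now conditioned on the recovered codeword, sees effectively the same state; it is an HSW decoder for the channel with the first variable as side information, giving $R^{(1)}_{\pi_1(2)}\le I(X_{\pi_1(2)};B_1|X_{\pi_1(1)})$; a further application of Lemma~\ref{lem:gentle-operator} lets the third stage proceed, with $R^{(1)}_{\pi_1(3)}\le I(X_{\pi_1(3)};B_1|X_{\pi_1(1)}X_{\pi_1(2)})$ \emph{when} $\pi_1(3)\in\{1c,1p\}$; when $\pi_1(3)=2c$ there is simply no third measurement stage at all (Receiver~1 never needs $m_{2c}$ if it has already recovered both of its own parts), so the constraint is vacuous ($\infty$). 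This is precisely the chained-instrument argument from the proof sketch of Theorem~\ref{thm:cqmac-capacity}, extended from two to three stages, and the number of gentle-lemma invocations is bounded by a constant, so the accumulated disturbance is still $O(\sqrt\epsilon)$. The same analysis with $1\leftrightarrow2$ gives Receiver~2's constraints in terms of $\pi_2$.

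Finally I would assemble the net rates. Receiver~1 recovers $m_{1p},m_{1c}$ and also recovers some version of $m_{2c}$ at a rate it can support, call it $R^{(1)}_{2c}$; Receiver~2 recovers $m_{2p},m_{2c}$ and a version of $m_{1c}$ at rate $R^{(2)}_{1c}$. Since the \emph{same} common codebooks carry $m_{1c}$ (decoded by both receivers) and $m_{2c}$ (decoded by both), the usable common rates are $\min\{R^{(1)}_{1c},R^{(2)}_{1c}\}$ and $\min\{R^{(1)}_{2c},R^{(2)}_{2c}\}$, while the personal rates $R^{(1)}_{1p}=R_{1p}$ and $R^{(2)}_{2p}=R_{2p}$ are constrained only by their own receiver; the net pair is $(R_{1p}+\min\{R^{(1)}_{1c},R^{(2)}_{1c}\},\ \min\{R^{(1)}_{2c},R^{(2)}_{2c}\}+R_{2p})$, which is $P$. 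A union bound over the (finitely many) decoding stages and over wrong messages at each stage, followed by derandomization exactly as in the proof of Theorem~\ref{thm:HSWtheorem}, shows $\overline p_e\le\epsilon'$ for large $n$, so $P$ is achievable; ranging over all $\pi_1,\pi_2$ gives the proposition. The main obstacle is the bookkeeping for the ``$\min$'' of the two common rates: one must be careful that the codebook for $m_{1c}$ is generated once, at a single rate $R_{1c}\le\min\{R^{(1)}_{1c},R^{(2)}_{1c}\}$, and that both receivers' decoders are analyzed against \emph{that} rate — the error analysis for each receiver is otherwise independent, but the rate constraints are coupled only through this $\min$, and I would make that coupling explicit before taking expectations over the random code.
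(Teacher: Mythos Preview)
The paper does not give a standalone proof of this proposition; it is stated as the evident extension of Winter's successive-decoding argument (Theorem~\ref{thm:cqmac-capacity}) from two to three decoding stages, combined with the operator union bound of Lemma~\ref{lem:operator-union-bound} to separate the two receivers' error events. Your proposal identifies exactly these ingredients, and the overall structure --- iterated HSW decoders chained via the gentle operator lemma, the vacuous constraint when the other sender's common message is scheduled last, and the coupling of the two receivers through a single common-codebook rate $R_{1c}\le\min\{R^{(1)}_{1c},R^{(2)}_{1c}\}$ --- is correct and matches what the paper implicitly relies on.

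One point needs fixing. You take the codebooks to be CMG-style superposition codes, with $x_1^n(m_{1p}\mid m_{1c})$ generated conditionally on the cloud center $w_1^n(m_{1c})$. With that choice the three ``inputs'' $W_1,X_1,W_2$ at Receiver~1 are not conditionally independent: $X_1$ already determines $W_1$, so a permutation such as $\pi_1=(1p,2c,1c)$, which asks Receiver~1 to recover $m_{1p}$ \emph{before} $m_{1c}$, does not reduce to a point-to-point HSW problem --- the codeword carrying $m_{1p}$ changes with the still-unknown $m_{1c}$, and there is no single ``$m_{1p}$-averaged'' product channel to apply Theorem~\ref{thm:HSWtheorem} to. The paper's lead-up to the proposition says the split is ``inspired by the Han-Kobayashi strategy,'' and indeed you should use the HK codebooks of Section~\ref{sec:HK}: independent auxiliaries $U_1,W_1,U_2,W_2$ (given $Q$) combined by deterministic mixing functions $x_i=f_i(u_i,w_i)$. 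Then the three streams at each receiver are genuinely conditionally independent, all $6\times 6$ orderings are on equal footing, and your three-stage successive-decoding analysis goes through verbatim for each. With that substitution the proof is complete.
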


	\begin{figure}[htb]
	\begin{center}
	\includegraphics[width=\textwidth]{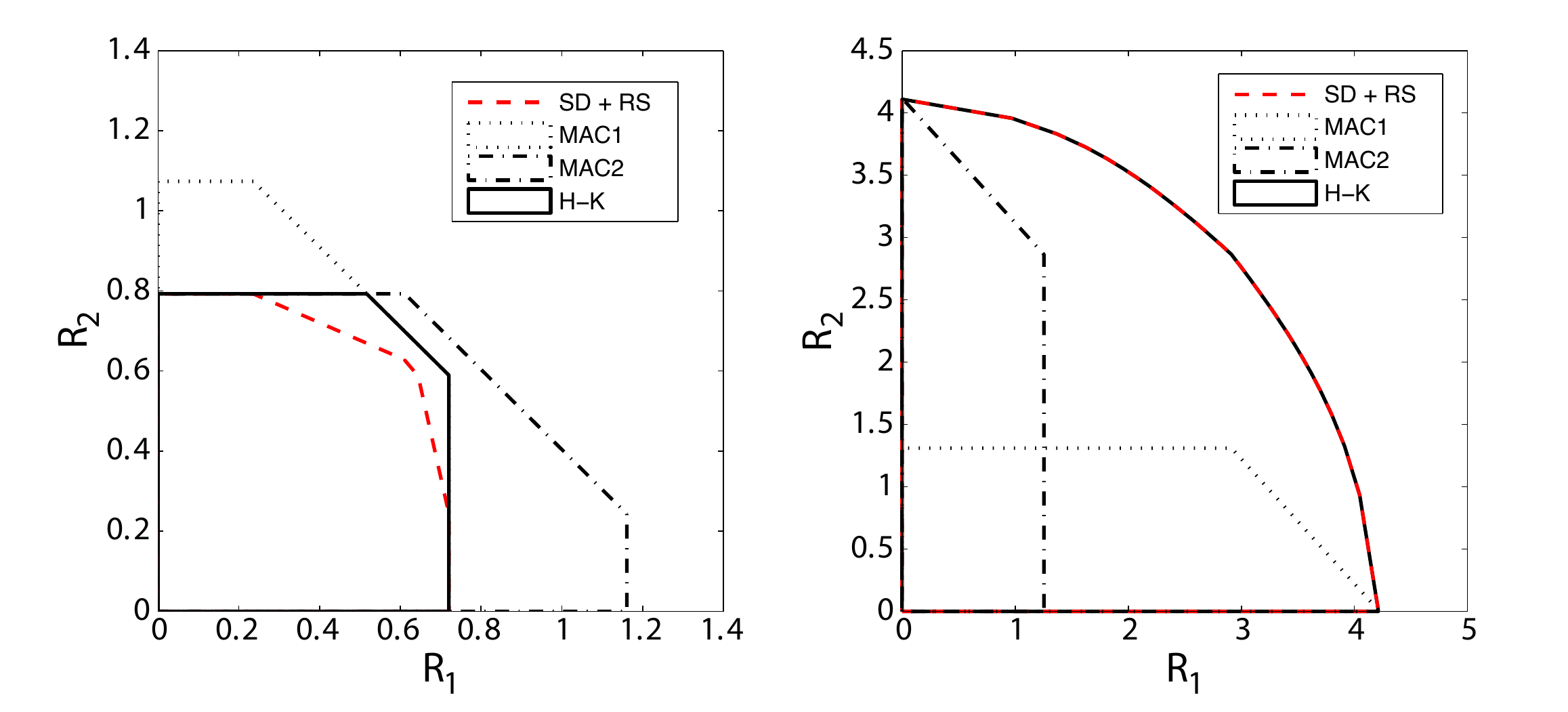}%
	\caption{These two figures plot rate pairs that the senders and receivers in a classical
	Gaussian interference channel can achieve using successive decoding and rate-splitting (SD+RS). The figures
	compare these rates with those achievable by the Han-Kobayashi (HK) coding strategy, while also plotting
	the regions corresponding to the two induced multiple access channels to each
	receiver (MAC1 and MAC2). The LHS figure
	demonstrates that, for a particular choice of signal to noise (SNR)
	 and interference to noise (INR)\cite{etkin2007gaussian}
	 parameters (SNR1 = 1.7, SNR2 = 2, INR1 = 3.4, INR2 = 4),
	successive decoding with rate-splitting does not perform as well as the Han-Kobayashi strategy. The RHS figure
	demonstrates that, for a different choice of parameters 
	(SNR1 =  343, SNR2 = 296, INR1 = 5, INR2 = 5),
	the two strategies perform equally well.}
	\label{fig:jt-succ-decoder-rate-splitting}
	\end{center}
	\end{figure}

%	Of course, it is possible to split the inputs even further, 
%	leading to two six-user multiple access channels. 
%	An interesting open question is to determine whether 
%	such a strategy can achieve the full Han-Kobayashi region.

	The rate region described by the convex hull of the points $P$
	is generally smaller than the Han-Kobayashi region as 
	illustrated in Figure~\ref{fig:jt-succ-decoder-rate-splitting}.
	Note that the split-codebook and successive decoding strategy 
	works pretty well in the low interference regime.
	An interesting open problem is whether we can achieve all rates 
	of the Han-Kobayashi region
	by splitting each sender's message
	into more than two parts and using only successive decoding.

	In particular, we want to know whether the capacity of the interference
	channel with strong interference can be achieved using only successive
	decoding. 
	Alternately, it would be interesting to prove that successive decoding
	is \emph{not} sufficient in order to achieve all the capacity in the 
	strong interference regime for any number of splits and any possible 
	decode order.
	We know that the time-sharing, 
	rate-splitting \cite{sasoglu2008successive} and
	generalized time-sharing  \cite{yagi2011multi}
	strategies do not work for the interference channel,
	but is it possible to show a negative result for all 
	successive decoding strategies?
	This question is explored further in \cite{FS12noRSforIC}.

%!TEX root = thesis.tex

\section{Outer bound}

		We will close this chapter by giving a simple outer bound 
		for the capacity of general quantum interference channels
		analogous to the classical result by Sato \cite{Sato77}.

		\begin{theorem}[Quantum Sato outer bound\cite{SavovICprojectECSE}]
		\label{thm:sato-weaker} Consider the Sato region defined as follows:
		\begin{equation}
		\mathcal{R}_{\text{Sato}}(\mathcal{N})\triangleq  \!\!\!\!
		\bigcup_{p_Q(q)p_{1}(x_{1}|q)p_{2}(x_{2}|q)} \!\!\!\!
		\{(R_{1},R_{2})
		\in \mathbb{R}_+^2 | \text{ Eqns \eqref{GsatoOne}-\eqref{GsatoThree} below}  \ \}, 
		\label{region:Gsato}%
		\end{equation}
		\vspace{-5mm}
		%where $R_{1}$ and $R_{2}$ are rates satisfying the following
		%inequalities:%
		\begin{align}
		R_{1}  &  \leq I(X_{1};B_{1}|X_{2}Q)_{\theta},				\label{GsatoOne}\\
		R_{2}  &  \leq I(X_{2};B_{2}|X_{1}Q)_{\theta},				\label{GsatoTwo}\\
		R_{1}+R_{2}  &  \leq I(X_{1}X_{2};B_{1}B_{2}|Q)_{\theta}.		\label{GsatoThree}
		\end{align}
		The entropic quantities are with respect to the state $\theta^{QX_{1}X_{2}B_{1}B_{2}} \equiv$
		\be
		\sum_{q,x_{1},x_{2}}
		p_Q(q)p_{1}(x_{1}|q)p_{2}(x_{2}|q) \ 
		|q\rangle\langle q|^{Q}\otimes 
		|x_{1}\rangle\langle x_{1}|^{X_{1}}\otimes|x_{2}\rangle\langle
		x_{2}|^{X_{2}}\otimes\rho_{x_{1}x_{2}}^{B_{1}B_{2}}.
		\ee
		Then the region $\mathcal{R}_{\text{Sato}}(\mcal{N})$ 
		is an outer bound on the
		capacity region of the quantum interference channel.
		\end{theorem}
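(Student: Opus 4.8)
The plan is to follow the classical converse argument of Sato, but with two modifications adapted to the quantum setting: first, replace Fano's inequality with its quantum analogue (the quantum Fano inequality), and second, handle the two bounds \eqref{GsatoOne}--\eqref{GsatoTwo} by an imaginary ``genie-aided'' argument in which we imagine giving each receiver the other sender's message, and handle the sum-rate bound \eqref{GsatoThree} by imagining that a \emph{single} super-receiver holds the joint system $B_1^n B_2^n$. Since the interference channel capacity region is a subset of the capacity region of any channel obtained by giving receivers extra side information (adding inputs to a decoder only helps), all three relaxations are legitimate outer-bounding steps.

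First I would fix any $(n, R_1, R_2, \epsilon)$ code achieving rates close to a point $(R_1,R_2)$ in the capacity region, with message random variables $M_1, M_2$ uniform on $\mathcal{M}_1, \mathcal{M}_2$, codewords $X_1^n(M_1), X_2^n(M_2)$, and output state $\rho^{B_1^n B_2^n}$. For the first bound, I would imagine a genie reveals $M_2$ (hence $X_2^n$) to Receiver~1; then Receiver~1 must decode $M_1$ from $(B_1^n, X_2^n)$ with error probability $\le \epsilon$, so by the quantum Fano inequality $H(M_1 | B_1^n X_2^n) \le n\epsilon_n$ with $\epsilon_n \to 0$. Then
\begin{align*}
nR_1 = H(M_1) &= H(M_1|X_2^n) \\
&= I(M_1; B_1^n | X_2^n) + H(M_1|B_1^n X_2^n) \\
&\le I(X_1^n; B_1^n | X_2^n) + n\epsilon_n,
\end{align*}
using independence of $M_1$ and $X_2^n$, the data-processing inequality for the Markov chain $M_1 \to X_1^n \to B_1^n$ (conditioned on $X_2^n$), and Fano. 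The single-letterization $I(X_1^n;B_1^n|X_2^n) \le \sum_{i=1}^n I(X_{1i}; B_{1i} | X_{2i})$ follows from the fact that the channel is memoryless and a standard chain-rule/super-additivity argument for the Holevo quantity of product channels; introducing the time-sharing variable $Q$ uniform on $[1:n]$ then packages this as $n\, I(X_1;B_1|X_2 Q)_\theta$. The bound \eqref{GsatoTwo} is symmetric. For \eqref{GsatoThree}, I would bound $n(R_1+R_2) = H(M_1 M_2)$ and note that from $(B_1^n, B_2^n)$ one can decode $(M_1,M_2)$ with error $\le 2\epsilon$ (union bound over the two decoders), so quantum Fano gives $H(M_1 M_2 | B_1^n B_2^n) \le n\epsilon_n'$, and
\begin{align*}
n(R_1+R_2) &= I(M_1 M_2; B_1^n B_2^n) + H(M_1 M_2 | B_1^n B_2^n) \\
&\le I(X_1^n X_2^n ; B_1^n B_2^n) + n\epsilon_n',
\end{align*}
which single-letterizes to $n\, I(X_1 X_2 ; B_1 B_2 | Q)_\theta$ the same way. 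Finally, since $Q$ enters only through the product form $p_Q p_1(x_1|q) p_2(x_2|q)$ of the code state $\theta$, taking the union over all such distributions and letting $n\to\infty$ (so $\epsilon, \epsilon_n, \epsilon_n' \to 0$) shows that $(R_1, R_2) \in \mathcal{R}_{\text{Sato}}(\mathcal{N})$, hence the capacity region is contained in $\mathcal{R}_{\text{Sato}}(\mathcal{N})$.

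The main obstacle I anticipate is the single-letterization of the Holevo/mutual-information terms, specifically $I(X_1^n ; B_1^n | X_2^n) \le \sum_i I(X_{1i}; B_{1i}|X_{2i})$ and its analogue for the joint output. In the classical case this is immediate from the chain rule plus the Markov structure, but in the quantum case one must be careful that $B_1^n$ is a quantum system and invoke subadditivity of von Neumann entropy and the fact that for the memoryless channel $\rho^{B_1^n}_{x_1^n, x_2^n} = \bigotimes_i \rho^{B_{1i}}_{x_{1i}, x_{2i}}$, so that $H(B_1^n | X_1^n X_2^n) = \sum_i H(B_{1i}|X_{1i}X_{2i})$ exactly while $H(B_1^n|X_2^n) \le \sum_i H(B_{1i}|X_{2i})$ by subadditivity of conditional entropy for classical-quantum states. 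The quantum Fano inequality and the monotonicity of Holevo information under the encoding $M_1 \to X_1^n$ are standard, so those steps should be routine; the bookkeeping with the genie and with $Q$ is exactly as in the classical proof.
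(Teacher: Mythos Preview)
Your proposal is correct and follows essentially the same approach as the paper: the paper's proof is a one-paragraph sketch observing that any interference-channel code induces three quantum MAC subproblems (to Receiver~1, to Receiver~2, and to a fictitious joint receiver holding $B_1^nB_2^n$) and then invokes the converse part of Winter's QMAC capacity theorem (Theorem~\ref{thm:cqmac-capacity}) as a black box for each. Your genie-aided and super-receiver arguments are precisely those three reductions, and your Fano $+$ data-processing $+$ subadditivity single-letterization is exactly the content of the QMAC converse that the paper delegates to Winter's result; in particular, the ``main obstacle'' you anticipate (the single-letter step via $H(B_1^n|X_1^nX_2^n)=\sum_i H(B_{1i}|X_{1i}X_{2i})$ from the memoryless product structure and $H(B_1^n|X_2^n)\le\sum_i H(B_{1i}|X_{2i})$ from subadditivity of von~Neumann entropy on a cq state) is handled correctly in your sketch and is the only nontrivially quantum ingredient.
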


		This proof follows from the observation 
		that any code for the quantum interference channel also gives codes for three
		quantum multiple access channel subproblems: one for Receiver~1, another
		for Receiver~2, and a third for the two receivers considered together.
		We obtain the outer bound in Theorem~\ref{thm:sato-weaker}
		by using the outer bound on the quantum multiple access channel rates 
		from Theorem~\ref{thm:cqmac-capacity} for each of these channels.

%
%observes
%that any code for the quantum interference channel also gives codes for three
%quantum multiple access channel subproblems, one for Receiver~1, another
%for Receiver~2, and a third for the two receivers considered together.
%Thus, if we have an outer bound on the underlying quantum
%multiple access channel capacities \cite{winter2001capacity}, then we can
%trivially get an outer bound on the quantum interference channel capacity. We
%omit the following theorem's proof because of its similarity to Sato's proof.

%!TEX root = thesis.tex

\section{Discussion}

In this chapter we saw how the coding techniques and theorems
which we obtained in Chapter~\ref{chapter:MAC} can be applied
to prove coding theorems for the quantum interference channel.

The key takeaway is that interference is not noise,
and that it can be advantageous to the receivers to decode
messages in which they are not interested.
For Receiver~1, knowing the other user's transmissions
allows him to increase the rate at which he can decodel
going from  $I(X_1;B_1) 		=  H(B_1) - H(B_1|X_1)$
to the improved rate of $I(X_1;B_1|X_2) 	=  H(B_1|X_2) - H(B_1|X_1X_2)$.

%\begin{align}
%   &I(X_1;B_1) 		=  H(B_1) - H(B_1|X_1), \\
%\intertext{to the improved rate of:}
%   &I(X_1;B_1|X_2) 	=  H(B_1|X_2) - H(B_1|X_1X_2).
%\end{align}

Because some of our results concerned special cases of the 
interference channel problem, it is worthwhile to review our
overall progress towards the characterization of the 
capacity region of the general quantum interference channel $\ICcap$.
For general interference channels we have:
\[
	\mathcal{R}_{\mathrm{succ}}(\mcal{N})
	\ \subsetneq  	
	\mathcal{R}_{\mathrm{sim}}(\mcal{N})
	\ \subsetneq  	
	\mathcal{R}^o_{\mathrm{HK}}(\mcal{N})
	\equiv
	%\mathcal{R}^c_{\mathrm{HK}}
	%\equiv
	\mathcal{R}_{\mathrm{CMG}}(\mcal{N})
	\subseteq 
	\ICcap
	 \subseteq 
	\mathcal{R}_{\mathrm{Sato}}(\mcal{N}).
\]

In the special case of the interference channel with very strong
interference, the rate region achievable by successive decoding
achieves the capacity $\mathcal{R}_{\mathrm{succ}}(\mcal{N}) = \ICcap$.
In the special case of strong interference, the 
rate region achievable by simultaneous decoding 
is optimal $\mathcal{R}_{\mathrm{sim}}(\mcal{N}) = \ICcap$.

%We proved the CMG inner bound for the general interference channel
%channels. This is 

An interesting research question would be to investigate whether splitting
the messages into more than two parts, that is, turning the two-user IC into
a multiple-input multiple-output (MIMO) IC, can improve on the rates that are achievable using the
Han-Kobayashi strategy.
%the use of multiple splits,

%The key A key new tool which we 
In this chapter, we used the superposition coding technique to construct 
the codebooks for the CMG coding strategy.
%important component  was in the analysis important component for the quantum  rate region.
%
We will use this technique again in the next chapter in the context of 
the quantum broadcast channel.
%
%and will use the superposition coding technique again.

%
%Later we will also consider QRC

%

%!TEX root = thesis.tex

% macros

\def\PIavg{\Pi^n_{\bar{\rho},\delta}}
\def\PIone{\Pi_{u^n_1(\ell_1)}}
\def\PIonepr{\Pi_{u^n_1(\ell_1^{\prime})}}

\def\PIsigUone{\Pi_{\sigma_{u^n_1(\ell_1)}} }
\def\PIsigUonePr{\Pi_{\sigma_{u^n_1(\ell_1^\prime)}} }

\def\rhoNulul{\rho_{\ell_1,\ell_2} }% {\rho_{u^n_1(\ell_1),u^n_2(\ell_2) } }
\def\rhouu{\rho^{B_1B_2}_{u_1,u_2}}
\def\rhoNuu{\rho^{B_1}_{u^n_1,u^n_2}}
\def\omgu{\omega^{B_1}_{u_1}}
\def\omgNul{\omega_{\ell_1}} % {\omega^{B_1}_{u_1(\ell_1)}}
\def\omgNulpr{\omega_{\ell'_1} } % {\omega^{B_1}_{u_1(\ell'_1)} }

\def\ExpUone{ \mathop{\mathbb{E}}_{U_1} }
\def\ExpUtwo{ \mathop{\mathbb{E}}_{U_2} }
\def\ExpUboth{ \mathop{\mathbb{E}}_{U_1,U_2} }
\def\ExpUtwoGone{ \mathop{\mathbb{E}}_{U_2|U_1} }

\def\AtypGua{\mcal{A}^{(m_a)}_{Z_{1p}|u_a,\delta}}
\def\AtypGuaun{\mcal{A}^{(m_a)}_{Z_{1p}|u_au^n_2,\delta}}

% superposition coding 
\def\ExpXW{\mathop{\mathbb{E}}_{\scriptscriptstyle X^{\!n}\!\!,W^{\!n}}}
\def\ExpW{\mathop{\mathbb{E}}_{\scriptscriptstyle W^{\!n}}}
\def\ExpXGW{\mathop{\mathbb{E}}_{\scriptscriptstyle X^{\!n}\!|\!W^{\!n}}}

\def\PIrhoW{\Pi_{\!W^{\!n}\!(m)} }
\def\PIrhoWone{\Pi_{\!W^{\!n}\!(1)} }

\chapter{Broadcast channels}
								\label{chapter:BC}

	How can a broadcast station communicate separate messages to two 
	receivers using a single antenna?
	The two message streams must somehow be ``mixed'' during the 
	encoding process so that the transmitted codewords will contain
	the information intended for both receivers.
	In this chapter we apply two codebook construction ideas from the 
	chapter on interference channels to build codebooks for the quantum broadcast channel.

	%implement such mixing.
	%	 encoding functions.
	%	this new communication setting for 
	%	
	The Chong-Motani-Garg construction used \emph{superposition encoding} to encode 
	a ``personal''  message (satellite codeword) on top of a ``common'' message (cloud center).
	In Section~\ref{sec:superposition-coding} we will use the superposition coding technique to 
	encode a ``personal'' message for one of the receivers on top of a ``common'' message for both receivers.
	Such a choice of encoding is well suited for broadcast channels where one of the receivers'
	signals is stronger than the other. We can pick the rate of the common message 
	so as to be decodable by the receiver with the weaker reception,
	and use the left-over capacity to the better receiver to transmit a personal message for him.
	The superposition coding technique was originally developed in this context \cite{C72}.
	
	Another approach to constructing the mixing of the information streams is
	to use two separate codebooks and an arbitrary mixing function that combines them
	as in the Han-Kobayashi coding strategy.
	%	function to combine the codebooks.
	%
	%	Suppose the auxiliary random variables are $U_1$ and $U_2$,
	%	and we g	
	%	auxiliary random vars
	%	$f(u_1,u_2)=x$
	The Marton coding scheme presented in Section~\ref{sec:marton} uses this approach.
	%	and involves an addition \emph{binning}  
		
	%	
	%	
	%	state the two new theorems about classical-quantum
	%	broadcast channels.

	%
	%	We wi
	%	The Han-Kobayashi strategy was more hands off 
	%	was encoded codeword for the
	%	 message  for 
	%	one message on top 
	%	developed 

	%	In this section we state the two new theorems about classical-quantum
	%	broadcast channels.

	%	The remainder of this section is dedicated to the 

\section{Introduction}

	The general broadcast communication scenario with two receivers
	involves the transmission 	of up to three separate information streams.
	To illustrate the communication problem, 
	consider the situation described in Figure~\ref{fig:BC-mt-royal} where
	the television station wants to transmit multiple streams of television 
	programming to two separate receivers.

	\begin{figure}[hb]
	\begin{center}
	
	\begin{tikzpicture}[node distance=2.0cm,>=stealth',bend angle=45,auto,scale=1.0, every node/.style={scale=1.0}]

	  \begin{scope}
		% c-q P-to-P channel
		\node [draw=black,anchor=south west,inner sep=0] (bglayer) at (0,0) {\includegraphics[width=13cm]{} };
%		\draw [anchor=south west,help lines] (0,0) grid (13,4.4);
%		\node [draw=red] (aaa) at (0,0) {x};
		\node [cnode] (Tx) at (1.2,3.3)  [ label=left:Tx,    ]                            {\footnotesize $x$} ; %{\scriptsize $x_1$};
		\node [cnode] (Rx1) at (6.4,1.7) [ label=below:${\color{white}\textrm{Rx1}}$] {$y_1$} 
			edge [pre,draw=green,line width=2pt]  node[] {} (Tx);
		\node [cnode] (Rx2) at (10.5,2.1) [ label=below:${\color{white}\textrm{Rx2}}$] {$y_2$} 
			edge [pre,draw=green,line width=2pt]  node[] {} (Tx);		
		\draw[->,draw=red,line width=2pt]	([yshift=-1cm]Tx)	-- ([yshift=-1.5mm]Rx1.west);
		\draw[->,draw=blue,line width=2pt]	([yshift=+3cm]Tx)	-- ([yshift=+2mm]Rx2.west);
		% {\scriptsize y}
%			edge  [pre]             node[swap]  {$\mcal{N}^{A\to B}$}    (Tx) ;
	  \end{scope}
%	  \begin{pgfonlayer}{background}
%	    \filldraw [line width=4mm,join=round,black!10]
%	      ([xshift=0mm,yshift=0.5mm]Tx.north -| Tx.east) rectangle ([xshift=1mm]Tx.south -| Rx.west);
%	  \end{pgfonlayer}
	\end{tikzpicture}
	\end{center}
	\caption{	The broadcast channel. The sender wishes to transmit three separate
			information streams: an English language TV station for Receiver~1,
			a French language TV station for Receiver~2 and a
			weather TV station which is of interest to both receivers.}
	\label{fig:BC-mt-royal}
	\end{figure}

	Suppose that in each block, the antenna has to transmit 
	a common message $m \in [1:2^{nR}]$ intended for both receivers
	and personal messages $m_1 \in [1:2^{nR_1}]$ and $m_2 \in [1:2^{nR_2}]$
	each intended for one of the receivers.
	The task is therefore described by the following resource transformation: 
	\[
		n \cdot \mcal{N}^{X \to Y_1Y_2}  
		\ \ \overset{ ( 1 -\epsilon)}{\longrightarrow} \ \ 
		nR_1 \cdot  [c \to c^1]
		\ + \
		nR \cdot  [c \to c^1c^2]
		\ + \
		nR_2 \cdot  [c \to c^2].
	\]
	What are the achievable rate triples $(R_1,R,R_2)$ for this communication task?
	
	Note that the everyday usage of the word \emph{broadcast} presumes
	that only a common message is to be transmitted to all receivers.
	%the common message  only 
	%
	If only a common message is to be transmitted, that is, we are looking for rates
	of the form $(0,R,0)$, the broadcast channel problem reduces to the compound 
	point-to-point channel problem and the capacity is given
	by the %supremum over all input distributions 
	minimum of the rates achievable for the receivers.
	In order to make the problem interesting from the information theory perspective,
	we have to consider the case where at least one personal message is to be transmitted.
	%	 we simply have to
	%	The interesting problems in information theory, however, always involve
	%	at last one personal rate.

	\subsection{Previous work}

	A wide body of research exists in classical information theory
	on the study of broadcast channels.
	An excellent review of this research is presented in \cite{cover1998comments}.
	The broadcast channel is also covered in textbooks \cite{CT91,el2010book,el2010lecture}.
	In the classical case, %For the classical broadcast channel with two receivers, 
	two of the best known strategies  %\cite{el2010lecture} 
	for transmitting information over broadcast channels are 
	%	If the signal at on receiver is stronger than the other, an extra message can be 
	%encoded to that receiver by using 
	superposition coding \cite{C72,B73,korner1977general} and Marton over-binning using correlated
	auxiliary random variables \cite{M79}.
	Sections \ref{sec:superposition-coding} and \ref{sec:marton} of this chapter 
	are dedicated to the generalization
	of these coding strategies to classical-quantum broadcast channels.

	%	The 
	%	\cite{DHL10}	

	%	Bosonic channels are model of optical communication channels. 
	%	 transmission of photons Optical broadcast channels have also been studied .
	
	%	The bosonic broadcast channel is a model of the faithful  discussed in  discusses classical communication over a
	%				 bosonic , and

	\subsection{Quantum broadcast channels}

		Previous work on quantum broadcast channels includes \cite{YHD2006,guha2007classical,DHL10}.
		In \cite{YHD2006}, the authors consider both classical and quantum communication over quantum broadcast channels
		and prove a superposition coding inner bound similar to our Theorem~\ref{thm:sup-coding-inner-bound}.
		There has also been research on quantum broadcast channels in two other settings:
		quantum-quantum channels \cite{DHL10} and bosonic broadcast channels \cite{guha2007classical}.
		The Marton rate region for the quantum-quantum broadcast channel was developed in \cite{DHL10}.
		The authors use \emph{decoupling techniques} \cite{FQSW,AHS07,dupuis2010phd} in order to 
		show the Marton achievable rate region with no common message
		for \emph{quantum} communication\footnote{
			Note that the well known \emph{no cloning} theorem of quantum information precludes 
			the possibility of a quantum common message: $[q \to q^1q^2]$,
			where the quantum information of some system controlled by the 
			sender is faithfully transferred to \emph{two} receivers.
			%
			%	This is because the quantum information in a system $\sigma^A$
			%	is defined with respect to a reference system which is chosen 
			%	such that together the $A$ system and the $R$ system form a pure state $\ket{\psi}^{AR}$.
			%	We say $\ket{\psi}^{AR}$ is a purification of $\sigma^A$ and we have $\sigma^A = \Tr_R[ \ketbra{\psi}{\psi| ]$.
			See \cite{YHD2006} for more comments on this issue.				
			%	In order to generalize the notion of quantum broadcast 
			%	possibility communication tasks the possibility for the noiseless a common rate
		}.

		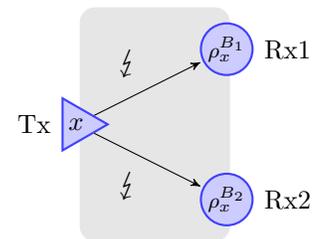
\begin{wrapfigure}{r}{0pt}
		\begin{tikzpicture}[node distance=2.0cm,>=stealth',bend angle=45,auto]

		  \begin{scope}
			% QBC channel
			\node [cnode] (Tx) [ label=left:Tx,yshift=-10mm   ]                            {\footnotesize $x$};
			\node [qnode] (Rx1) [ label=right:Rx1, right of=Tx,yshift=+10mm]	{\scriptsize $\rho^{B_1}_{x}$}
				edge  [pre]             node[swap]  		{$\lightning$}    (Tx);
			\node [qnode] (Rx2) [ label=right:Rx2, right of=Tx,yshift=-10mm] {\scriptsize $\rho^{B_2}_{x}$}
				edge  [pre]             node		   {$\lightning$}         (Tx) ;
		  \end{scope}
		  \begin{pgfonlayer}{background}
		    \filldraw [line width=4mm,join=round,black!10]
		      ([xshift=-2mm]Rx1.north -| Tx.east) rectangle ([xshift=+2mm]Rx2.south -| Rx2.west);
		  \end{pgfonlayer}

		\end{tikzpicture}
		\caption{A quantum broadcast channel $\rho^{B_1B_2}_x$. }
		\label{fig:QBC}
		\end{wrapfigure}

		We define a classical-quantum-quantum broadcast channel as the triple:%
		\begin{equation}
			(\mcal{X},\mcal{N}(x)\equiv \rho_{x}^{B_{1}B_{2}}, \mcal{H}^{B_1B_2}),	
			\label{eq:cqq-broadcast}%
		\end{equation}
		where $x$ is a classical letter in an alphabet $\mathcal{X}$ and 
		$\rho_{x}^{B_{1}B_{2}}$ is a density operator on the tensor product Hilbert
		space for systems $B_{1}$ and $B_{2}$. The model is such that when the sender
		inputs a classical letter $x$, Receiver~1 obtains system $B_{1}$, and
		Receiver~2 obtains system $B_{2}$.
		Since Receiver~1 does not have access to the $B_2$ part of the state $\rho_{x}^{B_{1}B_{2}}$,
		we model his state as $\rho_{x}^{B_{1}} = \text{Tr}_{B_2}\!\!\left[  \rho_{x}^{B_{1}B_{2}}  \right]$,
		where $\text{Tr}_{B_2}$ denotes the partial trace over Receiver~2's system.

	\subsection{Information processing task}

		The task of communication over a broadcast channel 
		is to use $n$ independent instances of the channel in order to communicate
		classical information to Receiver~1 at a rate~$R_1$, to Receiver~2 at a rate $R_2$,
		and to both receivers at a rate $R$.
		More specifically, the sender chooses a triple of messages $(m_1,m,m_2) \in[1:2^{nR_1}]\times[1:2^{nR}]\times[1:2^{nR_2}]$, 
		%		from message sets
		%		$\mathcal{M}_i\equiv \left\{  1,2,\ldots,|\mathcal{M}_i|\right\} $, where $|\mathcal{M}_i|=2^{nR_{i}}$,
		and encodes these messages into an $n$-symbol codeword $x^{n}\!\left( m_1,m,m_2\right)\in \mathcal{X}^n$
		suitable as input for the $n$ channel uses.

		The output of the channel is a quantum state %an $n$-fold tensor-product state 
		of the form:
		\begin{align}
			\mathcal{N}^{\otimes n}\!\left( x^{n}(m_1,m,m_2) \right)
			& \equiv			
			\rho_{x^{n}\left(  m_1,m, m_2\right)  }^{B_{1}^{n}B_{2}^{n}} \ \  \in  \ \mcal{D}(\mathcal{H}^{B_1^n B_2^n}),
			%			= \bigotimes_{i=1}^n \rho^{B_1B_2}_{x_i(m_1,m_2)} 
		\end{align}
		  where
		  $
		  \rho_{x^{n}}^{B_{1}^{n}B_{2}^{n}} \equiv \rho_{x_1}^{B_{11}B_{21}} \otimes
		  \cdots \otimes \rho_{x_n}^{B_{1n}B_{2n}}.
		  $
		To decode the common message $m$ and the message $m_1$ intended specifically for him,
		Receiver~1 performs a POVM
		$\left\{ \Lambda_{m_1,m}\right\}$, $m_1\in\left[  1,\ldots,|\mathcal{M}_1|\right]$,
		$m\in\left[  1,\ldots,|\mathcal{M}|\right]$, on
		the system $B_{1}^n$, the output of which we denote $(M^{\prime}_1,M^{\prime})$. 
		%		
		%		For all $m_1$, $\Lambda_{m_1}$ is a positive operator and $\sum_{m_1}\Lambda_{m_1}=I$.
		Receiver~2 similarly performs a POVM 
		$\left\{  \Gamma_{m,m_2}\right\}$, $m_2\in\left\{  1,\ldots ,|\mathcal{M}_2|\right\}$,
		$m\in\left[  1,\ldots,|\mathcal{M}|\right]$ on the system  $B_{2}^n$,
		and his outcome is denoted $(M^{\prime\prime},M^{\prime\prime}_2)$.
		
		An error occurs whenever either of the receivers decodes one 
		of the messages incorrectly.
	%		 Receiver~1's measurement outcome is different from the message sent by Sender~1
	%		($M^{\prime}_1 \neq m_1$) or Receiver~2's measurement outcome is different from the message sent by Sender~2 ($M^{\prime}_2 \neq m_2$).
		%
		%
		%
		The probability of error for a particular message triple $(m_1,m,m_2)$ is
		\begin{align*}
			p_{e}\!\left(  m_1,m,m_2\right)   
%				&  \equiv
%				\Pr\left\{  
%					(M^{\prime}_1,M^{\prime}_2)\neq(m_1,m_2) %
%				\right\} \\
%				& 
				\equiv
				\text{Tr}\!
				\left\{  
					\left(  I-\Lambda_{m_1,m}\otimes\Gamma_{m,m_2}\right)
					\rho_{x^{n}\left(  m_1,m,m_2\right)  }^{B_{1}^{n}B_{2}^{n}} 
				\right\},
		\end{align*}
		where the measurement operator $\left(  I-\Lambda_{m_1,m}\otimes\Gamma_{m,m_2}\right)$ represents
		the complement of the correct decoding outcome.

		    \begin{definition}%
			An $(n,R_1,R,R_2,\epsilon)$ classical-quantum broadcast channel code  consists
			of a codebook
			$\{x^n(m_1,m,m_2)\}$,
			$m_1\in \mathcal{M}_1$, $m\in \mathcal{M}$, $m_2\in \mathcal{M}_2$
			%$|\mathcal{M}_i|=2^{nR_i}$, 
			and two decoding POVMs  
			$\left\{ \Lambda_{m_1,m}\right\}_{m_1\in \mathcal{M}_1,m\in \mathcal{M}}$ 
			and 
			$\left\{  \Gamma_{m,m_2}\right\}_{m\in \mathcal{M},m_2\in \mathcal{M}_2}$
			such that the average probability of error 
			$\overline{p}_{e}$ is bounded from above as
%			by $\epsilon$:%
			\begin{align}
				\overline{p}_{e}  
				&  \equiv 
					\frac{1}{|\mathcal{M}_1| |\mathcal{M}| |\mathcal{M}_2|}\sum_{m_1,m,m_2}p_{e}\!\left(  m_1,m,m_2\right) %
				 \leq \epsilon.
			\end{align}

		    \end{definition}  
		
		We say that a rate pair $\left(  R_{1},R,R_{2}\right)  $ is \textit{achievable} if there exists
		an $\left(  n,R_{1}-\delta,R-\delta,R_{2}-\delta,\epsilon\right)  $ quantum broadcast channel code
		for all $\epsilon,\delta>0$ and sufficiently large $n$. 

		A broadcast channel code with \emph{no common message}
		is a special case of the above communication task where
		the rate of the common message is set to zero: $(n,R_1,0,R_2,\epsilon)$.
		Alternately, we could choose not to send a personal message for Receiver~2 
		and obtain codes of the form $(n,R_1,R,0,\epsilon)$,
		which is known as the broadcast channel with a \emph{degraded message set} 
		%	The capacity region for this case is known in the classical setting 
		\cite{korner1977general}.

\subsection{Chapter overview}

%	We prove that these strategies can be adapted to the quantum setting.

	In this chapter, we derive two achievable rate regions 
	for classical-quantum broadcast channels
	%		 prove two achievable rate regions 
	%	We do so by constructing appropriate decoding measurements based
	%	on particular combinations of typical projectors.
	by exploiting the error analysis techniques developed
	in the context of quantum multiple access channels
	(Chapter~\ref{chapter:MAC}) and quantum interference channels (Chapter~\ref{chapter:IC}).
	%(we should clarify that our
	%  approach goes beyond standard HSW coding).
	%
	
	In Section~\ref{sec:superposition-coding}, we establish the achievability of the 
	rates in the superposition coding rate region (Theorem~\ref{thm:sup-coding-inner-bound}).
	We use a quantum simultaneous decoder at one of the receivers.
	Yard {\it et al}. independently proved the quantum superposition coding inner bound \cite{YHD2006},
	but our proof is arguably simpler and more in the spirit of its classical analogue~\cite{el2010lecture}.

	In Section~\ref{sec:marton} we prove that the quantum Marton rate region with no common message is achievable
	(Theorem~\ref{thm:marton-no-common}).
	The Marton coding scheme is based on the idea of \emph{over-binning} and using correlated
	auxiliary random variables \cite{M79}.
	The sub-channels to each receiver are essentially point-to-point,
	but it turns out that the \emph{projector trick} technique seems to be necessary
	in our proof.
	The Marton coding scheme gives the best known achievable rate region for 
	the classical-quantum broadcast channel.
	
	%        1. Alternative proof of superposition coding inner bound [Cover?]
	%            - 
	%        2. Proof of Marton strategy for region for c [Marton]
	%        Both: say how our construction is different from HSW:
	%            - proj. trick		

%	We discuss two techniques for transmitting classical information
%	over quantum broadcast channels. 
%	The first technique is a quantum generalization of the superposition coding scheme
%	for the classical broadcast channel.
%	We use a quantum simultaneous decoder and obtain a simpler proof of the rate region 
%	recently published by Yard \textit{et al.} in independent work.

%	Our second result is a quantum 
%	Both results exploit recent advances in quantum simultaneous decoding developed 
%	in the context of quantum interference channels.		

%!TEX root = thesis.tex

\section{Superposition coding inner bound}	\label{sec:superposition-coding} 

		%		Consider channels $x \to \rho^{BC}_x$.
		%		We prove the superposition coding inner bound:
		
		%		,
		One possible strategy for the broadcast channel is to send a message
		at a rate that is low enough that both receivers are able to decode.
		Furthermore, if we assume that Receiver~1 has a better reception signal, 
		then the sender can encode a further message \emph{superimposed} on top of the common message
		that Receiver 1 will be able to decode \emph{given} the common message.
		The sender encodes the common message at rate $R$ using a codebook
		generated from a probability distribution $p_W(w)$ 
		and the additional message for Receiver~1 at rate $R_1$ using a conditional
		codebook with distribution $p_{X|W}(x|w)$.
		This is known as the superposition coding strategy \cite{C72,B73}.
		%
		% NOTE:  It would be more logical to call the common rate R instead of R
		%  		 or to make a comment that we ignore the fact that Rx1 can also decode m ...
		%   
		%

		\begin{theorem}[Superposition coding inner bound] 	\label{thm:sup-coding-inner-bound}
		Let $W$ be an auxiliary random variable, let 
		$p=p_{X|W}(x|w)p_W(w)$ be an arbitrary code distribution and let 
		$(\mcal{X}, \rho_x^{B_1B_2}, \mcal{H}^{B_1B_2} )$ be a classical-quantum broadcast channel.
		The superposition coding rate region $\mcal{R}_{\mathrm{SC}}(\mcal{N},p)$
		consists of all rate pairs $\left(  R_{1},R\right)$ such that:
		\begin{align}
			R_1 & \leq I(X; B_1 | W)_\theta, \\
			R  &\leq I(W; B_2)_\theta, \\
			R_1 + R  &\leq I(X;B_1)_\theta,
		\end{align}
		is achievable for the quantum broadcast channel.
		The information quantities are with respect to a state $\theta^{WXB_1B_2} $ of the form:
		\be
			\sum_{w,x} 
				p_W(w)p_{X|W}(x|w) \
				\ketbra{w}{w}^{W}
				\otimes
				\ketbra{x}{x}^{X}
				\otimes
				\rho_x^{B_1B_2}.\label{eq:code-state-QBC}
		\ee		
		\end{theorem}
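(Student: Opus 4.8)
The plan is to combine the codebook construction described just before the theorem statement with the two-sender quantum simultaneous decoder of Theorem~\ref{thm:sim-dec-two-sender}, applied at Receiver~1, together with a straightforward HSW-style decoder at Receiver~2. First I would fix a code distribution $p=p_W(w)p_{X|W}(x|w)$ and generate the codebook in two layers: randomly and independently generate $2^{nR}$ cloud-center sequences $w^n(m)$ according to $\prod_{i=1}^n p_W(w_i)$, and then, for each $m$, generate $2^{nR_1}$ satellite sequences $x^n(m_1,m)$ according to $\prod_{i=1}^n p_{X|W}(x_i|w_i(m))$. The transmitted codeword for the message pair $(m_1,m)$ is $x^n(m_1,m)$, producing the channel output $\rho^{B_1^nB_2^n}_{x^n(m_1,m)}$, from which Receiver~1 sees $\rho^{B_1^n}_{x^n(m_1,m)}$ and Receiver~2 sees $\rho^{B_2^n}_{x^n(m_1,m)}$.

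\textbf{Receiver~2's decoder.} Receiver~2 only needs to recover $m$. Averaging the output state over $X$ conditioned on $W$, define $\omega^{B_2}_{w}\equiv\sum_x p_{X|W}(x|w)\,\mathrm{Tr}_{B_1}[\rho^{B_1B_2}_x]$, so that the effective channel $w\mapsto\omega^{B_2}_w$ is a point-to-point classical-quantum channel. Applying the HSW achievability argument (Theorem~\ref{thm:HSWtheorem}) — the square-root measurement built from the projector sandwich $\Pi_{\bar\omega}\,\Pi_{\omega_{w^n(m)}}\,\Pi_{\bar\omega}$, analyzed via the Hayashi-Nagaoka inequality (Lemma~\ref{lem:HN-inequality}), the gentle operator lemma (Lemma~\ref{lem:gentle-operator}), and the quantum packing lemma — shows that Receiver~2 decodes $m$ with vanishing error provided $R\le I(W;B_2)_\theta-\delta$. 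The one point to note is that the codewords $w^n(m)$ are i.i.d.\ according to $\prod^n p_W$, exactly as required by HSW.

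\textbf{Receiver~1's decoder.} Receiver~1 must recover the pair $(m_1,m)$. Here I would invoke Theorem~\ref{thm:sim-dec-two-sender}: the decoding task is formally a two-sender quantum multiple access channel whose ``Sender~1'' transmits $w^n(m)$ and whose ``Sender~2'' transmits the conditional satellite codeword, with the subtlety that the satellite codebook is generated \emph{conditionally} on $w^n(m)$ rather than independently. This is precisely the conditional (``coded time-sharing'') version of the QMAC simultaneous decoder — Corollary~\ref{cor:two-sender-QSD} — where the auxiliary register plays the role of $Q$; one builds the projector sandwich with all typical projectors conditioned on $W^n(m)$ and takes the expectation over the random cloud centers as well. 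Carrying over the three error-event bounds from the proof of Theorem~\ref{thm:sim-dec-two-sender} (with $X_1\to W$, $X_2\to X$, $B\to B_1$), the expected average error probability at Receiver~1 vanishes provided
\begin{align*}
R_1 &\le I(X;B_1|W)_\theta-\delta,\\
R &\le I(W;B_1)_\theta-\delta,\\
R_1+R &\le I(WX;B_1)_\theta-\delta = I(X;B_1)_\theta-\delta,
\end{align*}
the last equality because $W\to X\to B_1$ forms a Markov chain on the code state $\theta^{WXB_1B_2}$, so $I(WX;B_1)=I(X;B_1)$ and the middle constraint $R\le I(W;B_1)$ is implied by $R\le I(W;B_2)$ together with the sum constraint whenever $I(W;B_2)\le I(W;B_1)$ — but in general it is simply absorbed since it is dominated by the combination of the other constraints in the region's definition. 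Finally, splitting the overall error by Lemma~\ref{lem:operator-union-bound} ($I-\Lambda_{m_1,m}\otimes\Gamma_m \le (I-\Lambda_{m_1,m})\otimes I + I\otimes(I-\Gamma_m)$), the two analyses combine additively, and derandomization yields a code achieving the stated region.

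\textbf{Main obstacle.} The delicate point is the conditional codebook generation at Receiver~1: one must verify that the error analysis of Theorem~\ref{thm:sim-dec-two-sender} goes through when the second codebook is not independent but is drawn from $\prod^n p_{X|W}$ given the first. This is exactly what Corollary~\ref{cor:two-sender-QSD} handles by conditioning every typical projector on $Q^n$ and taking an outer expectation over $Q^n$; the non-obvious check is that the cross-term expectations (the $\mathbf{(E1)}$ and $\mathbf{(E12)}$ bounds, which exploit independence of the wrong codeword from the sent one) still factor correctly — they do, because for a \emph{wrong} common message $m'\ne m$ the satellite codeword $x^n(\cdot,m')$ is conditionally independent of everything sent, while for the right common message with a wrong personal message the satellite codeword is conditionally independent given $w^n(m)$. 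The second, more bookkeeping-level obstacle is confirming that the three MAC-type constraints at Receiver~1 collapse, via the Markov structure of $\theta$, to the single constraint $R_1+R\le I(X;B_1)_\theta$ plus $R_1\le I(X;B_1|W)_\theta$ stated in the theorem, so that the intersection with Receiver~2's constraint $R\le I(W;B_2)_\theta$ gives exactly $\mcal{R}_{\mathrm{SC}}(\mcal{N},p)$.
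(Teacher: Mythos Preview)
Your high-level strategy is exactly right and matches the paper's: a simultaneous decoder at Receiver~1 for $(m_1,m)$, an HSW decoder at Receiver~2 for $m$, combined via Lemma~\ref{lem:operator-union-bound}. The gap is in how you obtain Receiver~1's constraints.

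Theorem~\ref{thm:sim-dec-two-sender} requires the two codebooks to be \emph{independent}; here $x^n(m_1,m)$ is generated conditionally on $w^n(m)$. Corollary~\ref{cor:two-sender-QSD} does not fix this: there $Q$ is shared randomness carrying no message, whereas you want $W$ to carry the common message $m$. If you literally substitute $X_1\to W$, $X_2\to X$ into the MAC constraints you obtain $R\le I(W;B_1|X)_\theta=0$ (by the Markov chain $W-X-B_1$ on $\theta$), not $R\le I(W;B_1)$. The middle constraint you wrote down therefore has no derivation, and it is not redundant either: take $R_1=0$ and a channel with $I(W;B_2)>I(W;B_1)$; then $R=I(W;B_2)$ lies in the theorem's region but violates $R\le I(W;B_1)$, so your ``absorption'' argument fails.

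What the paper does---and what a correct adaptation of the simultaneous-decoder ideas yields---is to build the sandwich $P_{m_1,m}=\Pi\,\Pi_{W^n(m)}\,\Pi_{X^n(m_1,m)}\,\Pi_{W^n(m)}\,\Pi$ directly and analyze only \emph{two} error types: (i)~correct $m$, wrong $m_1$; (ii)~wrong $m$, any $m_1'$. For superposition codes, once $m'\neq m$ the entire satellite codebook $\{x^n(\cdot,m')\}$ is independent of the transmitted codeword, so ``wrong $m$, correct $m_1$'' has the same distribution as ``both wrong'' and merges into the sum-rate event; no separate $R\le I(W;B_1)$ constraint appears. Event~(i) is handled by the projector trick $\Pi_{X^n(m_1',m)}\le 2^{n[H(B_1|WX)+\delta]}\rho_{X^n(m_1',m)}$, then averaging $X^n|W^n$ to get $\sigma_{W^n(m)}$ and using $\Pi_{W^n(m)}\sigma_{W^n(m)}\Pi_{W^n(m)}\le 2^{-n[H(B_1|W)-\delta]}\Pi_{W^n(m)}$, which gives $R_1\le I(X;B_1|W)$. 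Event~(ii) averages everything to $\bar\rho^{\otimes n}$ and uses the outer $\Pi$, giving $R_1+R\le I(WX;B_1)=I(X;B_1)$ directly. With only these two constraints from Receiver~1 plus $R\le I(W;B_2)$ from Receiver~2, you get exactly $\mcal{R}_{\mathrm{SC}}(\mcal{N},p)$ with nothing to argue away.
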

%
%	 	\comment{ DO we need to discuss the optimization over $p_W(w)p_{X|W}(x|w) $ ? }
%		
%		  I think it is OK not to discuss it. This should be clear, but please explain in detail in your thesis.
%		
%		\comment{ Cut this? 
%		This coding strategy is optimal for degraded broadcast channels where receiver
%		$B_2$ is a degraded version of $B_1$
%		(assuming the outputs of the channels commute).}
%		
%		  I think you're right in saying that we should cut the above statement. Explain in your thesis.
%
% 

		The superposition coding strategy allows us to construct
		codes for the broadcast channel of the form $(n,R_1,R,0,\epsilon)$,
		which have no personal message for Receiver~2.
		The task is therefore described as follows:
		\be
			n \cdot \mcal{N}^{X \to B_1B_2}  
			\ \ \overset{ ( 1 -\epsilon)}{\longrightarrow} \ \ 
			nR_1 \cdot  [c \to c^1]
			\ + \
			nR \cdot  [c \to c^1c^2],
		\ee
		where $[c \to c^1c^2]$ denotes the noiseless transmission of one bit 
		to both receivers.

		\begin{proof}
		The new idea in the proof is to exploit superposition coding
		and a quantum simultaneous decoder for the decoding of
		the first receiver \cite{C72,B73} 
		instead of the quantum successive decoding used in \cite{YHD2006}.
		We use a standard HSW\ decoder for the second receiver \cite{H98,SW97}.
		%
		% ALTERNATE WORDING: 
		%	quantum simultaneous  decoding strategy and ... strategy

	\noindent
	\textbf{Codebook generation.} 
		We randomly and independently generate
		$2^{nR}$ sequences $w^{n}\!\left(  m\right)$ according to the product
		distribution $%p_{W^{n}}\!\left(  w^{n}\right)  \equiv
		\prod\limits_{i=1}^{n}p_{W}\!\left(w_{i}\right)$.
		For each sequence $w^{n}\!\left(  m\right)$, we then randomly and
		conditionally independently generate $2^{nR_1}$ sequences 
		$x^{n}\!\left(m_{1},m\right)$ according to the product distribution: 
		$%p_{X^{n}|W^{n}}\!\left(  x^{n}|w^{n}\!\left(  m\right)  \right)  \equiv
		\prod\limits_{i=1}^{n}p_{X|W}\!\left(  x_{i}|w_{i}\!\left(  m\right)  \right)$.
		%
		%		The sender then transmits the codeword $x^{n}\!\left(  m_{1},m\right)  $ if
		%		she wishes to send $\left(  m_{1},m\right)  $.

	\noindent
	\textbf{POVM\ Construction for Receiver~1}. 
	 	We now describe the POVM that 
		Receiver~1 employs in order to decode the transmitted messages. 
		First consider the state we obtain from (\ref{eq:code-state-QBC}) by tracing over
		the $B_{2}$ system:%
		\[
		\rho^{WXB_{1}}=\sum_{w,x}p_{W}\!\left(  w\right)  \ p_{X|W}\!\left(  x|w\right)
		\ \left\vert w\right\rangle\!\left\langle w\right\vert ^{W}\otimes\left\vert
		x\right\rangle\!\left\langle x\right\vert ^{X}\otimes\rho_{x}^{B_{1}}.
		\]
		Consider the following two averaged states:
		\begin{align*}
		\sigma_{w^n}^{B_{1}^n}
			&\equiv 
			\sum_{x^n}p_{X^n|W^n}\!\left(  x^n |w^n\right)  
						\rho^{B_1^n}_{x^n}
			=
			\bigotimes_{i=1}^n
						\left(
							\sum_{x}p_{X|W}\!\left(  x|w_i\right)\rho_{x}^{B_{1}}
						\right)
					\ifthenelse{\boolean{BOOKFORM}}{ \\[-1mm] & = }{=}
					\mathbb{E}_{X^n|w^n}\!\left\{ \rho_{X^n}^{B^n_1} \right\},	
													%\label{eq:sigmaBC} 
													\\
		\bar{\rho}^{\otimes n}
			&\equiv 
			\sum_{w_n,x^n}p_{W^n}(w^n)p_{X^n|W^n}\!\left(  x^n |w^n\right)  
						\rho^{B_1^n}_{x^n}
			=
			\bigotimes_{i=1}^n
						\left(
							\sum_{w,x}p%_W
							(w)p%_{X|W}
							\!\left(  x|w\right)\rho_{x}^{B_{1}}
						\right)
					\ifthenelse{\boolean{BOOKFORM}}{ \\[-1mm] & = }{=} 
					\mathop{\mathbb{E}}_{W^n, X^n} \!\!\left\{ \rho_{X^n}^{B^n_1}  \right\}.	
														%\label{eq:avgOutputBC} 
			\end{align*}	
		We now introduce the following shorthand notation to denote the conditionally typical projectors
		with respect to the output state $\rho_{X^{n}\left(m_{1},m\right)}^{B_{1}^{n}}$ and
		the two averaged states defined above: 
%		Define the	 conditionally typical projectors:
%		where we have made the abbreviations%
		\begin{align*}
		\Pi_{X^{n}\left(  m_{1},m\right)  }    \equiv\Pi_{\rho_{X^{n}\left(
		m_{1},m\right)  },\delta}^{B_{1}^{n}}, \qquad
		\PIrhoW     \equiv\Pi_{\sigma_{W^{n}\left(
		m\right)  },\delta}^{B_{1}^{n}}, \qquad 
		\Pi   \equiv\Pi_{\rho,\delta}^{B_{1}^{n}}.
		\end{align*}
%		The above projectors are  typical projectors \cite[Section 14.2.1]{wilde2011book}
%		  defined with respect to the states $\bar{\rho}^{\otimes n}$, $\sigma^{B_{1}^{n}}_{W^{n}\left(
%		m\right)}$, and $\rho^{B_{1}^{n}}_{X^{n}\left(m_{1},m\right)}$.
%		Further tracing over the $X$ system gives%
%		\begin{align*}
%		\rho^{WB_{1}}  &  =\sum_{w}p_{W}\!\left(  w\right)  \ \left\vert w\right\rangle\!
%		\left\langle w\right\vert ^{W}\otimes\sigma_{w}^{B_{1}},
%		\end{align*}
%		where $\sigma_{w}^{B_{1}}\equiv \sum_{x}p_{X|W}\!\left(  x|w\right)
%		\rho_{x}^{B_{1}}$.
		Receiver~1 will decode using a POVM\ $\{\Lambda_{m_{1},m}\}$ 
		defined as the square root measurement:
		\begin{align}
			\Lambda_{m_{1},m} &  \equiv 
			\left(  
				\sum_{k_1,k}P_{k_{1},k}
			\right)^{\!\!\!-\frac{1}{2}}
			P_{m_{1},m}
			\left(
				\sum_{k_1,k}P_{k_{1},k}
			\right)^{\!\!-\frac{1}{2}}, \label{eq:square-root-POVM-generic-QBC} 
		\end{align}
		based on the following positive operators:
		\be
		P_{m_{1},m}\equiv\Pi\ \PIrhoW %
		\ \Pi_{X^{n}\left(  m_{1},m\right)  }\ \PIrhoW \ \Pi. \label{eq:rec-1-POVM}
		\ee
		
		Note the \emph{projector sandwich} structure with the more specific projectors on the inside.
		We have seen this previously in the construction of the simultaneous decoder POVM for the
		quantum multiple access channel.
		%which we previously 

	\noindent
	\textbf{POVM\ Construction for Receiver~2}. 		
		Consider now the state in equation (\ref{eq:code-state-QBC}) from the point of view 
		of Receiver~2. 
		If we trace over the $X$ and $B_{1}$
		systems, we obtain the following state:%
		\begin{align*}
			\rho^{WB_{2}}  
			=
			\sum_{w}  p_{W}\left(  w\right)  
			\left\vert w\right\rangle \! \left\langle w\right\vert ^{W}\otimes\sigma_{w}^{B_{2}},
		\end{align*}
		where $\sigma_{w}^{B_{2}}\equiv \sum_{x}p_{X|W}\left(x|w\right)  \rho_{x}^{B_{2}}$.
		Define also the state
		\be 
			\bar{\rho} \equiv \sum_{w,x}p_W(w)p_{X|W}\!\left(  x|w\right)\rho_{x}^{B_{2}}.
		\ee

		The second receiver uses a standard square root measurement:
		\begin{align}
			\Lambda_{m} &  \equiv 
			\left(  
				\sum_{k}P_{k}
			\right)^{\!\!\!-\frac{1}{2}}
			P_{m}
			\left(
				\sum_{k}P_{k}
			\right)^{\!\!-\frac{1}{2}}, \label{eq:square-root-POVM-generic-QBC2} 
		\end{align}		
		based on the following positive operators:
		\be
			P_{m}^{B_{2}^{n} }
			=\Pi_{\bar{\rho},\delta}^{B_{2}^{n} }\ 
			\Pi_{\sigma_{W^{\!n}\!(m)},\delta}^{B_{2}}\ 
			\Pi_{\bar{\rho},\delta}^{B_{2}^{n}}, \label{eq:rec-2-POVM}
		\ee
		where the above projectors are typical projectors defined with respect to
		the states 
		  $\sigma^{B_{2}^{n}}_{W^{n}\left(m\right)}$ %on $B_{2}^{n}$ 
		  and $\bar{\rho}^{\otimes n}$.

	\noindent
	\textbf{Error analysis for Receiver~1}.		
		We now analyze the expectation of the average error probability for the first
		receiver with the POVM defined in \eqref{eq:square-root-POVM-generic-QBC}:
		\begin{multline*}
		\qquad 
		\ExpXW \left\{  \frac{1}{M_{1}M_{2}}\sum_{m_{1},m}\text{Tr}\left\{  \left(  I-\Gamma_{m_{1},m}^{B_{1}^{n}}\right)
		\rho_{X^{n}\left(  m_{1},m\right)  }^{B_{1}}\right\}  \right\}  \\
		= \ \ \frac{1}{M_{1}M_{2}}\sum_{m_{1},m}\ExpXW \left\{
		\text{Tr}\left\{  \left(  I-\Gamma_{m_{1},m}^{B_{1}^{n}}\right)
		\rho_{X^{n}\left(  m_{1},m\right)  }^{B_{1}}\right\}  \right\}.
		\qquad
		\end{multline*}
		Due to the above exchange between the expectation and the average and the
		symmetry of the code construction (each codeword is selected randomly and
		independently), it suffices to analyze the expectation of the average error
		probability for the first message pair $\left(  m_{1}=1,m=1\right)  $,
		i.e., the last line above is equal to
		$
%		\frac{1}{M_{1}M_{2}}\sum_{m_{1},m}\ExpXW \left\{
%		\text{Tr}\left\{  \left(  I-\Gamma_{m_{1},m}^{B_{1}^{n}}\right)
%		\rho_{X^{n}\left(  m_{1},m\right)  }^{B_{1}}\right\}  \right\} \\
%		=
		  \ExpXW \left\{  \text{Tr}\left\{  \left(  I-\Gamma
		_{1,1}^{B_{1}^{n}}\right)  \rho_{X^{n}\left(  1,1\right)  }^{B_{1}}\right\}
		\right\}  .
		$
		Using the Hayashi-Nagaoka operator inequality 
		(Lemma~\ref{lem:HN-inequality} on page \pageref{lem:HN-inequality}),
		%		that applies to any
		%		positive operators $S$ and $T$ such that $S\leq I$:%
		%		\[
		%		I-\left(  S+T\right)  ^{-1/2}S\left(  S+T\right)  ^{-1/2}\leq2\left(
		%		I-S\right)  +4T.
		%		\]
		%		Setting%
		%		\begin{align*}
		%		S &  =\Pi_{1,1}^{\prime},\\
		%		T &  =\sum_{\left(  m_{1},m\right)  \neq\left(  1,1\right)  }\Pi
		%		_{m_{1},m}^{\prime},
		%		\end{align*}
		%		and applying the above operator inequality gives%
		we obtain the following upper bound on this term:
		\begin{align}
%		\ExpXW 
%		\left\{  \text{Tr}\left\{  \left(  I-\Gamma_{1,1}^{B_{1}^{n}}\right)  \rho_{X^{n}\left(  1,1\right)  }^{B_{1}}\right\}
%		\right\}   \\
		%\leq
		 \ExpXW \left\{  
		 	\text{Tr}\left[  
				\left(  I-\Gamma_{1,1}^{B_{1}^{n}}\right)  
				\rho_{X^{n}\left(  1,1\right)  }^{B_{1}}
				\right]
			\right\}
		& \leq 
		2  \ExpXW \left\{  \text{Tr}\left\{  \left(
		I-P_{1,1}\right)  \rho_{X^{n}\left(  1,1\right)  }^{B_{1}}\right\}
		\right\} \nonumber \\
		& \qquad 
		+4 \!\!\!\!\!\!\!\! \sum_{\left( m_{1},m\right)  \neq\left(  1,1\right)  }\!\!\!\!\!\!\!\!\!\!
		\ExpXW \!\!\left\{  \text{Tr}\left\{  P_{m_{1},m} \ \ 
		\rho_{X^{n}\left(  1,1\right)  }^{B_{1}}\right\}  \right\}. \label{eq:after-HN} 
		\end{align}
		
		We begin by bounding the term in the first line above. Consider the following
		chain of inequalities:%
		\begin{align*}
		\ExpXW \!\! \left\{  \text{Tr}\left\{  \Pi_{1,1}^{\prime}
		\rho_{X^{n}\left(  1,1\right)  }^{B_{1}}\right\}  \right\}   
		&  =
		\ExpXW \!\!
		\left\{  \text{Tr}\left\{  \Pi\ \PIrhoWone
		\Pi_{X^{n}\left(  1,1\right)  }\ \PIrhoWone\ \Pi
		\ \rho_{X^{n}\left(  1,1\right)  }^{B_{1}}\right\}  \right\}  \\
		&  \geq\ExpXW \!\!\left\{  \text{Tr}\left\{  \Pi_{X^{n}\left(
		1,1\right)  }\ \ \rho_{X^{n}\left(  1,1\right)  }^{B_{1}}\right\}  \right\}
		\\
		&  \ \ \ \ \ -\ExpXW \!\! \left\{  \left\Vert \rho_{X^{n}\left(
		1,1\right)  }^{B_{1}}-\Pi\ \rho_{X^{n}\left(  1,1\right)  }^{B_{1}}%
		\ \Pi\right\Vert _{1}\right\}  \\
		&  \ \ \ \ \ -\ExpXW \!\! \left\{  \left\Vert \rho_{X^{n}\left(1,1\right)  }^{B_{1}}
		-\PIrhoWone\ \rho_{X^{n}\left(1,1\right)  }^{B_{1}}\ \PIrhoWone\right\Vert _{1}\right\}
		\\
		&  \geq1-\epsilon-4\sqrt{\epsilon},
		\end{align*}
		where the first inequality follows from the inequality%
		\begin{equation}
		\text{Tr}\left\{  \Lambda\rho\right\}  \leq\text{Tr}\left\{  \Lambda
		\sigma\right\}  +\left\Vert \rho-\sigma\right\Vert _{1}, \label{eq:trace-inequality}
		\end{equation}
		which holds for all $\rho$, $\sigma$, and $\Lambda$ such that $0\leq
		\rho,\sigma,\Lambda\leq I$. 
		The second inequality follows from the \emph{gentle operator lemma for 
		ensembles} (see Lemma~\ref{lem:gentle-operator}) and the properties of typical projectors
		  for sufficiently large $n$.

		%		The above then implies the following inequality:%
		%		\[
		%		\ExpXW \left\{  \text{Tr}\left\{  \left(  I-\Pi_{1,1}%
		%		^{\prime}\right)  \rho_{X^{n}\left(  1,1\right)  }^{B_{1}}\right\}  \right\}
		%		\leq\epsilon+4\sqrt{\epsilon},
		%		\]
		%		which in turn places an upper bound on the expectation of the average error
		%		probability in (\ref{eq:after-HN}):%
		%		\begin{align}
		%		& \ExpXW \left\{  \text{Tr}\left\{  \left(  I-\Gamma
		%		_{1,1}^{B_{1}^{n}}\right)  \ \rho_{X^{n}\left(  1,1\right)  }^{B_{1}}\right\}
		%		\right\}  \nonumber \\
		%		& \leq
		%		2\left(  \epsilon+4\sqrt{\epsilon}\right)  +4 \hspace{-6mm}\sum_{\left(m_{1},m\right)  \neq\left(  1,1\right)  }
		%		\hspace{-6mm}
		%		\ExpXW \left\{
		%		\text{Tr}\left\{  P_{m_{1},m}\ \rho_{X^{n}\left(  1,1\right)	}^{B_{1}}\right\}  \right\}  .
		%		\end{align}
		We now focus on bounding the second term of (\ref{eq:after-HN}). We can expand this term
		as follows:%
		\begin{align}
		\sum_{\left( m_{1},m\right)  \neq\left(  1,1\right)  }\!\!\!\!\!\!\!\!\!
		& \ExpXW \left\{  \text{Tr}\left\{  P_{m_{1},m}
		\rho_{X^{n}\left(  1,1\right)  }^{B_{1}}\right\}  \right\} \nonumber \\
		&= 
			\sum_{m_{1}\neq1}\ExpXW \left\{  \text{Tr}\left\{  P_{m_{1},1}\ 
			\rho_{X^{n}\left(  1,1\right)  }^{B_{1}}\right\}  \right\} \qquad \qquad  \label{eq:ERR1QBC}  
            \tag{$\mathbf{E1}$}\\[-1mm]
		&\qquad \qquad +\sum_{\substack{m_{1}, \\ m \neq1}}\ExpXW \left\{
		\text{Tr}\left\{  P_{m_{1},m}\ \rho_{X^{n}\left(  1,1\right)
		}^{B_{1}}\right\}  \right\}. \label{eq:ERR2QBC} \tag{$\mathbf{E2}$}
		\end{align}

		We will now compute the expectation of the first the term, $\mathbf{(E1)}$,
		with respect to the code randomness:
%		in the above expression.
%		It leads to the following bound:
		{\allowdisplaybreaks
		\begin{align*}
		\ifthenelse{\boolean{BOOKFORM}}{
		& \!\!\!\!\!\!\ExpXW \left\{
			\mathbf{(E1)}
			\right\} 	 = \\
		}
		{
		\ExpXW \left\{
			\mathbf{(E1)}
			\right\} 		
		}
		&=    \sum_{m_{1}\neq1}
			\ExpXW \left\{  \text{Tr}\left\{
			P_{m_{1},1}\ \rho_{X^{n}\left(  1,1\right)  }^{B_{1}}\right\}
			\right\}  \\
		&  =\sum_{m_{1}\neq1}\ExpXW   \text{Tr}\left\{
			\Pi\ \PIrhoWone\ \Pi_{X^{n}\left(  m_{1},1\right)  }
			\ \PIrhoWone\ \Pi\ \rho_{X^{n}\left(  1,1\right)  }^{B_{1}}\right\}    \\
		& \leq
			2^{n\left[  H\left(  B_{1}|WX\right)  +\delta\right]  }  \sum_{\!m_{1}\neq1}\!
			\ExpXW \!
			\left\{  \text{Tr}\!\left[  \Pi\ \PIrhoWone 
			\rho_{X^{n}\left(  m_{1},1\right)  }
			\PIrhoWone 
			\Pi\ \rho_{X^{n}\left(  1,1\right)  }^{B_{1}}\right] \! \right\}  \\
		&  =
			2^{n\left[  H\left(  B_{1}|WX\right)  +\delta\right]  } 
			\sum_{\!m_{1}\neq1}
			\ExpW
			\bigg\{  \text{Tr}[   
			\PIrhoWone
			\ExpXGW \left\{  \rho_{X^{n}\left(m_{1},1\right)  }\right\}  \ 
			\PIrhoWone\  \\[-5mm]
			& \hspace{74mm} \Pi\ \ExpXGW 
			\left\{  \rho_{X^{n}\left(  1,1\right)  }^{B_{1}}\right\} 
			\Pi\ ] \bigg\} \\[3mm]
%		\end{align*}%
%		%
%		\vspace{-6mm}
%		\begin{align*}
		&  =
		2^{n\left[  H\left(  B_{1}|WX\right)  +\delta\right]  } 
		\sum_{m_{1}\neq 1} \!\!
		\ExpW \!\!
		\left\{  \text{Tr}\left\{  
			\Pi\ 
			\PIrhoWone  \sigma_{W^{n}\left(  1\right)  }\ \PIrhoWone\ 
			\Pi\ \sigma_{W^{n}\left(  1\right)  }
		\right\}  \right\}  \\
		&  \leq
		2^{n\left[  H\left(  B_{1}|WX\right)  +\delta\right]  } \,2^{-n\left[ H\left(  B_{1}|W\right)  -\delta\right]  }
		%		2^{-n\left[ H\left(  B_{1}|W\right)  -\delta\right]  }\ 
		\sum_{m_{1}\neq1} \!\!
		\ExpW \!\!
		\left\{  \text{Tr}\left\{  \Pi\ 
		\PIrhoWone\ \Pi\ \sigma_{W^{n}\left(  1\right)  }\right\}  \right\}  \\
		&  \leq
		2^{n\left[  H\left(  B_{1}|WX\right)  +\delta\right]  }\,
		2^{-n\left[H\left(  B_{1}|W\right)  -\delta\right]  }\ 
		\sum_{m_{1}\neq1}
		\ExpW \!\!
		\left\{  \text{Tr}\left\{  \sigma_{W^{n}\left(  1\right)  }\right\}
		\right\}  \\
		&  \leq
		2^{-n\left[  I\left(  X;B_{1}|W\right)  -2\delta\right]  }\ |\mcal{M}_{1}|,
		\end{align*}}%
		%		where we define $K \equiv 2^{n\left[  H\left(  B_{1}|WX\right)  +\delta\right]  }$.
		The first inequality is due to the \emph{projector trick} inequality %\cite{GLM10,S11a,FHSSW11}
		which states that:
		\begin{align}
			\Pi_{X^{n}\left(  m_{1},1\right)  }
		& \leq
			2^{n\left[  H\left(  B_{1}|WX\right)  +\delta\right]  }\,\rho_{X^{n}\left( m_1,1\right)  }^{B_{1}}. \label{eq:projector-trick}
		%		\Pi_{x^{n},y^{n}}  & \leq2^{n\left[  H\left(  B|XY\right)  +\delta\right]
		%		}\ \Pi_{x^{n},y^{n}}\ \rho_{x^{n},y^{n}}\ \Pi_{x^{n},y^{n}}\\
		%		& = 2^{n\left[  H\left(  B|XY\right)  +\delta\right]
		%		}\ \Pi_{x^{n},y^{n}}\ \sqrt{\rho_{x^{n},y^{n}}} \sqrt{\rho_{x^{n},y^{n}}} \ \Pi_{x^{n},y^{n}}\\
		%		& = 2^{n\left[  H\left(  B|XY\right)  +\delta\right]
		%		}\ \sqrt{\rho_{x^{n},y^{n}}} \Pi_{x^{n},y^{n}} \sqrt{\rho_{x^{n},y^{n}}} \\
		%		& \leq2^{n\left[  H\left(  B|XY\right)  +\delta\right]  }\ \rho_{x^{n},y^{n}}.
		\end{align}
		The second inequality follows from the properties of typical projectors:
		\be
			\PIrhoWone  \sigma_{W^{n}\left(  1\right)  }\ \PIrhoWone \leq 2^{-n\left[H\left(  B_{1}|W\right)  -\delta\right]  }\PIrhoWone.
		\ee
%		The last inequality follows from
%		\[
%		\Pi_{x^{n}}\ \Pi\ \Pi_{y^{n}}\ \Pi\ \Pi_{x^{n}}\leq\Pi_{x^{n}}\ \Pi
%		\ \Pi_{x^{n}}\leq\Pi_{x^{n}}\leq I.
%		\]

		We now consider the expectation of the second term $\mathbf{(E2)}$ with respect to the random choice 
		of codebook.
		{\allowdisplaybreaks
		\begin{align*}
		\ExpXW \left\{
			\mathbf{(E2)}
			\right\}
		&=
		\sum_{\substack{m_{1}, \\ m \neq1}} 
		\ExpXW \!\!
		\left\{  \text{Tr}\left\{
		P_{m_{1},m}\ \rho_{X^{n}\left(  1,1\right)  }^{B_{1}}\right\}
		\right\}  \\
		&  = 
		\sum_{\substack{m_{1}, \\ m \neq1}} 
		\ExpXW \!\!\!
		\left\{  \text{Tr}\!
		\left[  \Pi \Pi_{W^{\!n}(m)  } \Pi_{X^{n}\left(  m_{1},m\right)  } \PIrhoW \Pi\ \rho_{X^{n}\left(1,1\right)  }^{B_{1}}\right]  \right\}  \\
		&  = 
		\sum_{\substack{m_{1}, \\ m \neq1}}
		\text{Tr}\left[  \ExpXW 
		\bigg\{  \PIrhoW \ \Pi_{X^{n}\left(  m_{1},m\right)  }\ \PIrhoW \right\}  \\[-5mm]
		& \hspace{47mm}  \Pi 
		\ExpXW \left\{  \rho_{X^{n}\left(  1,1\right) }^{B_{1} } \right\}  \Pi\  \bigg]  \\
		&  = 
		\sum_{\substack{m_{1}, \\ m \neq1}} 
		\text{Tr}\left\{  \ExpXW \!\!\left\{  \PIrhoW  \Pi_{X^{n}\left(  m_{1},m\right)  } \PIrhoW \right\}  
		\ \Pi  \bar{\rho}^{\otimes n} \Pi  \right\} \\
		&  \leq
		2^{-n\left[  H\left(  B_{1}\right)  -\delta\right]  } 
		\sum_{\substack{m_{1}, \\ m \neq1}} 
		\text{Tr}\!\!\left[ \! \ExpXW \!\!\! \left\{  
			\PIrhoW  \Pi_{X^{n}\left(  m_{1},m\right)  }
			\PIrhoW \right\} \! \Pi 
		\right]  \\
		&  =
		2^{-n\left[  H\left(  B_{1}\right)  -\delta\right]  } 
		\sum_{\substack{m_{1}, \\ m \neq1}} \!\!
		\ExpXW \!   \text{Tr}\left[  
		\Pi_{X^{n}\left(  m_{1},m\right)  } \PIrhoW 
		\Pi \PIrhoW \right]   
						\end{align*}
		\begin{align*}
		&  \leq
		2^{-n\left[  H\left(  B_{1}\right)  -\delta\right]  }
		\sum_{m	\neq1,\ m_{1}}\ExpXW 
		\left\{  \text{Tr}\left\{  \Pi_{X^{n}\left(  m_{1},m\right)  }\right\}  \right\}  \\
		&  \leq
		2^{-n\left[  H\left(  B_{1}\right)  -\delta\right]  }\ 
		2^{n\left[	H\left(  B_{1}|WX\right)  +\delta\right]  }\ 
		|\mcal{M}_{1}| |\mcal{M}_{2}|\\
		&  =
		2^{-n\left[  I\left(  WX;B_{1}\right)  -2\delta\right]  }\  |\mcal{M}_{1}| |\mcal{M}_{2}| \\
		& =
		2^{-n\left[  I\left(  X;B_{1}\right)  -2\delta\right]  }\ |\mcal{M}_{1}| |\mcal{M}_{2}|.
		\end{align*}}%
		The equality $I(WX;B_1)=I(X;B_1)$ follows from
		the way the codebook is constructed (the quantum Markov chain $W-X-B$).
		This completes the error analysis for the first receiver.

	\noindent
	\textbf{Error analysis for Receiver~2}.		
		The proof for the second receiver is analogous to the point-to-point
		HSW theorem. 
		%		The appendix gives the details for this and ties
		%		  the coding theorem together so that the sender and two receivers can agree on
		%		  a strategy that has asymptotically vanishing error probability in the large $n$ limit.
		%		For the second receiver, the decoding error analysis 
		%		follows from the HSW\ coding theorem. 
		%
		The following bound holds for the 
		expectation of the average error probability for the second receiver if $n$ is
		sufficiently large:%
		\begin{align*}
		 \ExpXW &  \left\{  \frac{1}{|\mcal{M}_{2}|}\sum_{m}\text{Tr}%
		\left\{  \left(  I-\Lambda_{m}^{B_{2}^{n}}\right)  \rho_{X^{n}\left(
		m_{1},m\right)  }^{B_{2}^{n}}\right\}  \right\}  \\
		&  \ \ =\ExpW \left\{  \frac{1}{|\mcal{M}_{2}|}\sum_{m}\text{Tr}\left\{
		\left(  I-\Lambda_{m}^{B_{2}^{n}}\right)  \ExpXGW \left\{
		\rho_{X^{n}\left(  m_{1},m\right)  }^{B_{2}^{n}}\right\}  \right\}
		\right\} \\
%				\end{align*}
%				\begin{align*}
		&  \ \ =\ExpW \left\{  \frac{1}{|\mcal{M}_{2}|}\sum_{m}\text{Tr}\left\{
		\left(  I-\Lambda_{m}^{B_{2}^{n}}\right)  \sigma_{W^{n}\left(
		m\right)  }^{B_{2}^{n}}\right\}  \right\}  \\
		&  \ \ \leq 2\left(  \epsilon+2\sqrt{\epsilon}\right)  +4 \left[  2^{-n\left[
		I\left( W;B_{2}\right)  -2\delta\right]  }\  |\mcal{M}_{2}|\right]  .
		\end{align*}
		%where $\epsilon>0$.
		
%		Thus, our final bound on the expectation of the average error probability for
%		the first receiver is%
%		\begin{multline}
%		\ExpXW \left\{  \text{Tr}\left\{  \left(  I-\Gamma
%		_{1,1}^{B_{1}^{n}}\right)  \rho_{X^{n}\left(  1,1\right)  }^{B_{1}}\right\}
%		\right\}  \leq
%		2\left(  \epsilon+4\sqrt{\epsilon}\right)   \\
%		+4\left[
%		2^{-n\left[  I\left(  X;B_{1}|W\right)  -2\delta\right]  }\ M_{1}+2^{-n\left[
%		I\left(  X;B_{1}\right)  -2\delta\right]  }\ M_{1}\ M_{2}\right]. \nonumber
%		\end{multline}

		Putting everything together, the joint POVM\ performed by both receivers is of
		the form:
		$
		\Gamma_{m_{1},m}^{B_{1}^{n}}\otimes\Lambda_{m'}^{B_{2}^{n}},
		$
		and the expectation of the average error probability for both receivers is
		bounded from above as%
		\begin{align*}
		&  \!\!\!\!\!\!\!\!\!\!\ExpXW   \frac{1}{|\mcal{M}_{1}| |\mcal{M}_{2}|}\sum_{m_{1},m%
		}\text{Tr}\left\{  \left(  I-\Gamma_{m_{1},m}^{B_{1}^{n}}\otimes
		\Lambda_{m}^{B_{2}^{n}}\right)  \rho_{X^{n}\left(  m_{1},m\right)
		}^{B_{1}^{n}B_{2}^{n}}\right\}   \\
		&  \leq\ExpXW \left\{  \frac{1}{|\mcal{M}_{1}| |\mcal{M}_{2}|}\sum_{m_{1},m%
		}\text{Tr}\left\{  \left(  I-\Gamma_{m_{1},m}^{B_{1}^{n}}\right)
		\rho_{X^{n}\left(  m_{1},m\right)  }^{B_{1}^{n}}\right\}  \right\} \\
		&  \ \ \ \ \ \ +\ExpXW \left\{  \frac{1}{|\mcal{M}_{1}| |\mcal{M}_{2}|}%
		\sum_{m_{1},m}\text{Tr}\left\{  \left(  I-\Lambda_{m}^{B_{2}^{n}%
		}\right)  \rho_{X^{n}\left(  m_{1},m\right)  }^{B_{2}^{n}}\right\}
		\right\} \\
		&  \leq 4  \epsilon+12\sqrt{\epsilon}  + 4\ \left[
		2^{-n\left[  I\left(  W;B_{2}\right)  -2\delta\right]  }\  |\mcal{M}_{2}|\right] \\
		& \qquad 4\left[  2^{-n\left[
		I\left(  X;B_{1}|W\right)  -2\delta\right]  }\ |\mcal{M}_{1}| +2^{-n\left[  I\left(
		X;B_{1}\right)  -2\delta\right]  }\ |\mcal{M}_{1}| |\mcal{M}_{2}| \right] ,
		\end{align*}
		where the first inequality uses the operator union bound from Lemma~\ref{lem:operator-union-bound}:
		\be
		I^{B_{1}^{n}B_{2}^{n}}-\Gamma_{m_{1},m}^{B_{1}^{n}}\otimes\Lambda_{m%
		}^{B_{2}^{n}} \nonumber 
		\leq\left(  I^{B_{1}^{n}B_{2}^{n}}-\Gamma_{m_{1},m}%
		^{B_{1}^{n}}\otimes I^{B_{2}^{n}}\right)  +\left(  I^{B_{1}^{n}B_{2}^{n}%
		}-I^{B_{1}^{n}}\otimes\Lambda_{m}^{B_{2}^{n}}\right).
		\ee
		Thus, as long
		as the sender chooses the message sizes $|\mcal{M}_{1}|$ and $|\mcal{M}_{2}|$ such that
		$|\mcal{M}_{1}|    \leq2^{n\left[  I\left(  X;B_{1}|W\right)  -3\delta\right]  }$,
		$|\mcal{M}_{2}|\leq2^{n\left[  I\left(  W;B_{2}\right)  -3\delta\right]  }$,
		and $|\mcal{M}_{1}| |\mcal{M}_{2}|   \leq2^{n\left[  I\left(  X;B_{1}\right)  -3\delta\right]  }$,
		then there exists a particular code with asymptotically vanishing average
		error probability in the large $n$ limit.
		\end{proof}

		%

		%		Note that the superposition coding inner bound can equivalently be 
		%		expressed as the following rates:
		%                \begin{align}
		%                        R'_1 & \leq I(X; B_1 | W) \\
		%                        R + R'_2  &\leq I(W; B_2) \\
		%                        R + R'_1 + R'_2  &\leq I(X;B_1),
		%                \end{align}
		%		where $R$ denotes the rate of a common message that is decoded
		%		by both receivers.
		%                Indeed if Receiver 1 chooses to ignore the common message,
		%                and lump it into the message for Receiver 2 we would get
		%                $(R_1,R)=(R'_1, R + R'_2)$.
		%            

		Taking the union over all possible choices of input
		distribution $p_{WX}(w,x)$ gives us the superposition coding inner bound:
		$\mcal{R}_{\mathrm{SC}}(\mcal{N}) \equiv \bigcup_{p_{WX}} \mcal{R}_{\mathrm{SC}}(\mcal{N},p_{WX})$.

		%		It is also possible to obtain an outer bound for the degraded
		%		QBC \cite{YHD2006}

	%!TEX root = thesis.tex

\section{Marton coding scheme}		\label{sec:marton}

	We now prove that the Marton inner bound is achievable
	for quantum broadcast channels.
	%	Allowing for correlated input distributions can increase
	%	one user's rate, while restricting the sum rate
	%	(this in turn penalizes the other user's rate).
	%
	%We encode the messages 
	The Marton scheme depends on auxiliary random variables $U_1$ and $U_2$, 
	\emph{binning}, and  the properties of strongly\footnote{
			The notion of \emph{strong} typicality or \emph{frequency} typicality 
			differs from the entropy typicality we have used until now.
			See \cite[Section 14.2.3]{wilde2011book}.} 
 	typical sequences and projectors. % to prove the following theorem.
	\begin{theorem}[Marton inner bound] \label{thm:marton-no-common}
		Let %$\{ \mcal{N}(x) \equiv 
		$\{\rho^{B_1B_2}_x \}$ %\mcal{H}^{B_1} \otimes \mcal{H}^{B_2} )$
		be a classical-quantum broadcast channel and let $x=f(u_1,u_2)$ be a deterministic function.
		The following rate region is achievable:
		\bea
			R_1 	&\leq&	I(U_1; B_1)_\theta,  \nonumber \\
			R_2	&\leq&	I(U_2; B_2)_\theta,  \\ 
		R_1+R_2	&\leq&	I(U_1; B_1)_\theta + I(U_2; B_2)_\theta - I(U_1; U_2)_\theta,  \nonumber
		\eea			
		where the information quantities are with respect to the state:
		$$
			\theta^{U_1U_2B_1B_2} 
			=\!
			\sum_{u_1,u_2} \!
				p(u_1,u_2) 
				\ketbra{u_1}{u_1}^{U_1}
				\otimes
				\ketbra{u_2}{u_2}^{U_2}
				\otimes
				\rho_{f(u_1,u_2)}^{B_1B_2}.
		$$
	\end{theorem}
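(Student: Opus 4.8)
\textbf{Proof plan for Theorem~\ref{thm:marton-no-common}.}
The plan is to combine Marton's classical over-binning idea with the error-analysis machinery developed for the quantum multiple access channel in Theorem~\ref{thm:sim-dec-two-sender}. First I would set up the codebook construction: pick rates $\tilde R_1 = R_1 + \delta$ and $\tilde R_2 = R_2 + \delta$ for the two ``over-binned'' codebooks, so that Sender~1's message $m_1 \in [1:2^{nR_1}]$ is associated with a bin of $2^{n(\tilde R_1 - R_1)}$ candidate sequences $u_1^n(m_1, k_1)$, and similarly $u_2^n(m_2, k_2)$ for Sender~2. The $u_1^n$ sequences are drawn i.i.d.\ from $p_{U_1}$ and the $u_2^n$ from $p_{U_2}$ (the marginals of the joint $p_{U_1U_2}$). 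The encoder, given $(m_1,m_2)$, searches for a pair of indices $(k_1,k_2)$ such that $(u_1^n(m_1,k_1), u_2^n(m_2,k_2))$ is \emph{strongly jointly typical} with respect to $p_{U_1U_2}$; by the mutual covering lemma (the classical statement, which I would invoke or include in an appendix) such a pair exists with high probability provided $(\tilde R_1 - R_1) + (\tilde R_2 - R_2) \geq I(U_1;U_2)_\theta + \delta'$. The sender then transmits $x^n = f^n(u_1^n(m_1,k_1), u_2^n(m_2,k_2))$ symbol by symbol.

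The decoding is essentially point-to-point for each receiver, so the next step is to build, for Receiver~1, a POVM $\{\Lambda_{m_1}\}$ that decodes $m_1$ from $\rho^{B_1^n}$. The natural construction is a square-root (pretty-good) measurement built from a projector sandwich: for each pair $(m_1,k_1)$ let $P_{m_1,k_1} = \Pi_{\bar\rho_1,\delta}^{B_1^n}\, \Pi_{\rho_{u_1^n(m_1,k_1)},\delta}^{B_1^n}\, \Pi_{\bar\rho_1,\delta}^{B_1^n}$, where $\bar\rho_1 = \sum_{u_1} p_{U_1}(u_1) \rho_{u_1}^{B_1}$ and $\rho_{u_1}^{B_1} = \sum_{u_2} p_{U_2|U_1}(u_2|u_1)\, \mathrm{Tr}_{B_2}[\rho_{f(u_1,u_2)}^{B_1B_2}]$ is the effective channel state seen by Receiver~1 when input $u_1$ is used. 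Receiver~1 declares $m_1$ if the outcome $(m_1,k_1)$ for some $k_1$ occurs. Apply the Hayashi--Nagaoka inequality (Lemma~\ref{lem:HN-inequality}) to split the error into the ``correct projector fails'' term---handled by the gentle operator lemma (Lemma~\ref{lem:gentle-operator}) and the typicality properties \eqref{eqn:TypP-prop-one}--\eqref{eqn:TypP-prop-three}---and the ``wrong codeword clicks'' term. For the latter, since wrong $u_1$-sequences are independent of $\rho^{B_1^n}$, the standard HSW packing bound gives a factor $|\mcal{M}_1| \cdot 2^{n(\tilde R_1 - R_1)} \cdot 2^{-n[I(U_1;B_1)_\theta - \delta]}$, which vanishes provided $\tilde R_1 < I(U_1;B_1)_\theta$. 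An identical argument at Receiver~2 gives $\tilde R_2 < I(U_2;B_2)_\theta$, and a joint POVM $\Lambda_{m_1}\otimes\Gamma_{m_2}$ together with the operator union bound (Lemma~\ref{lem:operator-union-bound}) combines the two error analyses.

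Combining all the constraints: from the covering requirement $(\tilde R_1 - R_1) + (\tilde R_2 - R_2) \geq I(U_1;U_2)_\theta$ and from the decoding requirements $\tilde R_1 \leq I(U_1;B_1)_\theta$, $\tilde R_2 \leq I(U_2;B_2)_\theta$; eliminating the slack variables $\tilde R_i$ by Fourier--Motzkin gives exactly $R_1 \leq I(U_1;B_1)_\theta$, $R_2 \leq I(U_2;B_2)_\theta$, and $R_1 + R_2 \leq I(U_1;B_1)_\theta + I(U_2;B_2)_\theta - I(U_1;U_2)_\theta$. A derandomization step then produces a fixed code. The main obstacle I anticipate is the interaction between the \emph{strong} (frequency) typicality needed for the mutual covering lemma and the \emph{entropy} typicality used throughout the rest of the thesis: the encoder's covering step is most cleanly argued with strong typicality, but the conditionally typical projectors $\Pi_{\rho_{u_1^n},\delta}^{B_1^n}$ and the gentle-operator estimates have been stated for entropy-typical projectors. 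I would resolve this either by using strongly typical projectors throughout the receiver's analysis (invoking the versions in \cite[Section 14.2.3]{wilde2011book}) and re-deriving the needed bounds, or by carefully checking that the covering lemma argument can be pushed through with the weaker entropy-typical notion; the former is cleaner and is the route I would take. A secondary technical point is verifying that the ``effective'' codeword-dependent states $\rho_{u_1^n}^{B_1^n}$ are genuinely tensor-product (they are, since $f$ and the channel act symbol-wise and $u_2^n$ is chosen independently per symbol), so that the typical-projector machinery applies verbatim.
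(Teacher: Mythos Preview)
Your high-level plan matches the paper's approach (over-binning, mutual covering for the encoder, square-root measurement with a two-layer projector sandwich at each receiver, Hayashi--Nagaoka split, derandomization). However, there is a genuine gap in your ``secondary technical point,'' and it is not secondary at all.

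You write that the effective state seen by Receiver~1 is $\rho_{u_1}^{B_1}=\sum_{u_2}p_{U_2|U_1}(u_2|u_1)\rho_{f(u_1,u_2)}^{B_1}$ and that this is tensor-product because ``$u_2^n$ is chosen independently per symbol.'' This is false: the encoder transmits $x^n=f^n(u_1^n,u_2^n)$ for one \emph{specific} codeword $u_2^n$, so the state arriving at Receiver~1 is $\rho_{u_1^n,u_2^n}^{B_1^n}$, not the $U_2$-averaged state $\omega_{u_1^n}^{B_1^n}$. Your projector $\Pi_{\rho_{u_1^n}}^{B_1^n}$ is the conditionally typical projector for $\omega_{u_1^n}$, so the ``correct detector clicks'' term requires $\Tr\!\big[\Pi_{\omega_{u_1^n}}\rho_{u_1^n,u_2^n}^{B_1^n}\big]\geq 1-\epsilon$, which is \emph{not} an immediate typicality property: $\rho_{u_1^n,u_2^n}^{B_1^n}$ and $\omega_{u_1^n}^{B_1^n}$ are different states. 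This is exactly where the paper invokes Lemma~\ref{lemma-avg-typ-proj-works} (Property~14.2.7 in \cite{wilde2011book}): when $(u_1^n,u_2^n)$ are \emph{strongly} jointly typical, the actual output $\rho_{u_1^n,u_2^n}$ is well supported by the projector for the averaged state. So strong typicality is not merely a convenience for the covering step; it is the mechanism that makes the decoder's first error term controllable, and this is the real reason the paper switches to strong typicality here.

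A related issue affects your ``standard HSW packing'' for the wrong-codeword term. Averaging $\rho_{u_1^n(\ell_1),u_2^n(\ell_2)}^{B_1^n}$ over the codebook randomness (which is the \emph{product} of marginals $p_{U_1^n}p_{U_2^n}$) does not produce $\bar\rho^{\otimes n}$, since $\bar\rho$ is defined via the \emph{joint} $p_{U_1U_2}$. The paper sidesteps this by applying the projector trick $\Pi_{\omega_{u_1^n(\ell_1')}}\leq 2^{n[H(B_1|U_1)+\delta]}\omega_{u_1^n(\ell_1')}$ first, then averaging the \emph{wrong} codeword $u_1^n(\ell_1')$ over its marginal to obtain $\bar\rho$, and finally bounding $\Tr[\Pi_{\bar\rho}\,\rho_{\ell_1,\ell_2}]\leq 1$ without ever averaging the received state. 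You should make this explicit; the paper itself notes that ``the projector trick technique seems to be necessary'' here.
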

	
	The coding scheme in Theorem~\ref{thm:marton-no-common} is a broadcast channel
	code with no common message: $(n,R_1,0,R_2,\epsilon)$.
	The information processing task is described by:
	\be
		n \cdot \mcal{N}^{X \to B_1B_2}  
		\ \ \overset{ ( 1 -\epsilon)}{\longrightarrow} \ \ 
		nR_1 \cdot  [c \to c^1]
		\ + \
		nR_2 \cdot  [c \to c^2].
	\ee

\begin{proof} Consider the classical-quantum broadcast channel 
	%$\left(\mcal{X}, 
	$\{\mcal{N}(x) \equiv \rho^{B_1B_2}_x \}$,
	%, \mcal{H}^{B_1} \otimes \mcal{H}^{B_2} \right)$,
	and a deterministic mixing function: $f: \mcal{U}_1 \times \mcal{U}_2  \to \mcal{X}$.	
	%	which outputs symbols $x=f(u_1,u_2)$ suitable for the channel input.
	Using the mixing function as a pre-coder to the broadcast channel $\mcal{N}$,
	we obtain a channel $\mcal{N'}$ defined as:
	\be 
	  %\left(\mcal{U}_1\times\mcal{U}_2, 
	  \mcal{N}'(u_1,u_2) 
%	  \equiv \mcal{N}(f(u_1,u_2)) 
	  \equiv \rho^{B_1B_2}_{f(u_1,u_2)}  
	  \equiv \rhouu.
	  %\mcal{H}^{B_1} \otimes \mcal{H}^{B_2} \right). \nonumber
	\ee

\noindent	
\textbf{Codebook construction}. Define two auxiliary indices $\ell_1 \in [1:L_1]$, $L_1= 2^{n[I(U_1;B_1)-\delta]}$ 
	and  $\ell_2 \in [1:L_2]$, $L_2 = 2^{n[I(U_2;B_2)-\delta]}$.
	For each $\ell_1$ generate an i.i.d.~random sequence $u_1^n(\ell_1)$ according to $p_{U_1^n}(u_1^n)$.
%	and post-select on the sequence being $\delta$ strongly typical (draw a new sequence if necessary):
%	\[
%	  u_1^n(\ell_1) \in \mcal{A}^n_{p_{U_1},\delta}, \qquad \forall \ell_1.
%	\]
%	%
	Similarly we choose $L_2$ random i.i.d.~sequences $u_2^n(\ell_2)$ according to $p_{U_2^n}(u_2^n)$.
%	from the $U_2^n$ 
%	strongly typical set:
%	\[
%	  u_2^n(\ell_2) \in \mcal{A}^n_{p_{U_2},\delta}, \qquad \forall \ell_2.
%	\]	
	%
	Partition the sequences $u_1^n(\ell_1)$ into $2^{nR_1}$ different bins $B_{m_1}$.
	Similarly, partition the sequences $u_2^n(\ell_2)$ into $2^{nR_2}$ bins $C_{m_2}$.
	For each message pair $(m_1,m_2)$,
	%$\{ u^n(m_1,m_2) \in \mcal{U}_1^n\times\mcal{U}_2^n \}_{m_1 \in \mcal{M}_1, m_2 \in \mcal{M}_2}$,
	the sender selects a sequence $\big(u_1^n(\ell_1),u_2^n(\ell_2) \big) \in  \left( B_{m_1} \times C_{m_2} \right) \ \cap  \
	\mcal{A}^n_{p_{U_1U_2},\delta}$,
	such that each sequence is taken from the appropriate bin and the sender demands that they are 
	strongly jointly typical   %(see Appendix~\ref{apdx:strong-typicality}), 
    and otherwise declares failure.
%	
%	%	consider all sequence pairs that can be constructed 
%	%	from the sequences in bin $B_{m_1}$ and those in bin $C_{m_2}$, such that the
%	%	resulting concatenated sequence is jointly typical. $G_{m_1,m_2} \equiv $
%	\[
%	   \big(u_1^n(\ell_1),u_2^n(\ell_2) \big) \in  \left( B_{m_1} \times C_{m_2} \right) \ \cap  \
%			\mcal{A}^n_{p_{U_1U_2},\delta}.
%	\] 
%	%
%	%	The codebook $\{ u^n(m_1,m_2) \}_{m_1 \in \mcal{M}_1, m_2 \in \mcal{M}_2}$
%	%	is constructed by picking $u^n(m_1,m_2)$ to be an arbitrary element of $G_{m_1,m_2}$ for 
%	%	each message pair $(m_1,m_2)$.
	The codebook $x^n(m_1,m_2)$ is deterministically
	constructed from $\big(u_1^n(\ell_1),u_2^n(\ell_2) \big)$ by applying the function $x_i=f(u_{1i},u_{2i})$.
	%	 to the elements of the codebook symbol-wise:
	%	\[
	%	  x^n(m_1,m_2) %= x_1\cdots x_n 
	%	  = f(u_{11}(m_1),u_{21}(m_2)) \cdots f(u_{1n}(m_1),u_{2n}(m_2)).
	%	\]

	%	
	%	which will be used as input to the channel $\mcal{N}'$ when the Sender desires to send 
	%	message $i$ to Receiver 1, and message $j$ to Receiver 2.
	
	%	Randomly select a set of $L_1$ sequences $u_1^n(\ell_1)$ from the set $\mcal{A}^n_{p_{U_1},\delta}$,
	%	and $L_2$ sequences $u_2^n(\ell_2)$ from $\mcal{A}^n_{p_{U_2},\delta}$.

	%	%
	%	We will index by $i$ the different messages intended for Receiver 1:
	%	$i \in \mcal{M}_1=[1,\ldots, 2^{nR_1}]$ and similarly
	%	we denote $j$ the message intended for Receiver 2:
	%	$j \in \mcal{M}_2=[1,\ldots, 2^{nR_2}]$.
	%
	
\noindent
\textbf{Transmission}. Let $(\ell_1,\ell_2)$ denote the pair of indices of the joint sequence $(u_1^n(\ell_1), u_2^n(\ell_2))$ 
	which was chosen as the codeword for message $(m_1,m_2)$.
	% the zip( ) function in python
	%
	Expressed in terms of these indices the output of the channel is
	\be
		\rho_{u_1^n(\ell_1),u_2^n(\ell_2)}^{B_1^nB_2^n}
			=
				\bigotimes_{i \in [n]} \rho_{f(u_{1i}(\ell_1),u_{2i}(\ell_2))}^{B_1B_2} 
				%				\otimes 
				%				\rho_{f(u_{11}(\ell_1),u_{21}(\ell_2))}^{B_1B_2}
				%				\otimes
				%				\cdots
				%				\otimes 
				%				\rho_{f(u_{1n}(\ell_1),u_{2n}(\ell_2))}^{B_1B_2} \\
			\equiv \rho_{\ell_1,\ell_2}.
	\ee
%	The code-averaged output state (NOT CODE-AVERAGED---CHANGE LATER) for Receiver 1 is
	Define the following average states for Receiver~1:
%	\be
%	  \bar{\rho}^{\otimes n}
%		%		=
%		%		\sum_{u_1}  p(u_1) \omgu
%		= \sum_{u_1}  p(u_1) 
%			\Bigg( \underbrace{
%					\sum_{u_2}
%			p_{U_2|U_1}(u_2|u_{1}) \ 
%			\rho_{u_{1}(\ell_1),u_2} }_{\omega_{u_1}}
%			\Bigg)^{\otimes n},
%		%		= \sum_{u_1,u_2} p(u_1,u_2) \rho_{u_1,u_2}^{B_1}.
%		\label{avg-both-single-copy}
%	\ee
%	where we defined $\omgu \equiv \sum_{u_2} p_{U_2|U_1}(u_2|u_1) \rho^{B_1}_{u_1,u_2}$.
	\be
	  \omgu  \equiv  \sum_{u_2} p_{U_2|U_1}(u_2|u_1) \rho^{B_1}_{u_1,u_2}, \qquad \label{avg-both-single-copy} 
	  \bar{\rho}  \equiv  \sum_{u_1}  p(u_1) \omgu.
	  \ee

\noindent
\textbf{Decoding}. 
	%	Each of the receivers will apply a POVM on the quantum systems they receive
	%	in order to recover the message intended for them.
	%	The message not intended for them is considered unknown (but correlated) noise.
	The detection POVM for Receiver 1, \ $\left\{  \Lambda_{\ell_1}\right\}_{\ell_1 \in [1,\ldots,L_1]}$,
	is constructed by using the square-root measurement 
	as in \eqref{eq:square-root-POVM-generic} based on the following combination
	of strongly typical projectors: 
		%	%
		%	\begin{align}
		%	\Lambda_{\ell_1} &  \!\! \equiv \!
		%	\left(  
		%		\sum_{k}\Pi_{k}^{\prime}
		%	\right)^{\!\!\!-\frac{1}{2}}
		%	\!\! \Pi_{\ell_1}^{\prime}
		%	\left(
		%		\sum_{k}\Pi_{k}^{\prime}
		%	\right)^{\!\!-\frac{1}{2}}\!\!, %\label{eq:square-root-POVM} 
		%	\nonumber
		%	\end{align}
		%	where % the operator % positive $\Pi_k$ 
\begin{align}
%	$
	%
	%\Pi_k &  \equiv\Pi_{\rho,\delta}^{n}\ \Pi_{\rho_{y^{n}\left(
	%m\right)  },\delta}^{n}\ \PIone%
	%\ \PIonetwo%
	%\ \PIone\ \Pi_{\rho_{y^{n}\left(
	%m\right)  },\delta}^{n}\ \Pi_{\rho,\delta}^{n},\nonumber
	\Pi_{\ell_1}^{\prime} 
		  \equiv 
		\PIavg \ \PIone \ \PIavg.
		%		\Pi_{\bar{\theta}^{B_1},\delta}^{n} 
		%		 \PIone
		%		\Pi_{\bar{\theta}^{B_1} ,\delta}^{n},     
		%\label{eq:proj-sandwitch} 
	\end{align}
	%
	%
	%The typical projectors 	
	% On the inside, we have
    The outcome of the measurement will be denoted $L^\prime_1$.
	 The projectors $\PIone$ and $\PIavg$ are 
	defined with respect to the states 
	$\omega_{u_1^n(\ell_1)  }$ and $\bar{\rho}^{\otimes n}$ 
	given in (\ref{avg-both-single-copy}).
	%	On the outside, we have the projector 
	%	$\Pi_{\bar{\rho},\delta}^{n}$, which is the typical
	%	projector for the % tensor power 
	%	state  defined in~(\ref{avg-both-single-copy}).
	%
	Note that we use {\it strongly} typical projectors in this case 
    as defined in 
    %Appendix~\ref{apdx:strong-typicality} and 
    \cite[Section 14.2.3]{wilde2011book}.
	Knowing $\ell_1$ and the binning scheme, Receiver~1 can deduce
	the message $m_1$ from the bin index.
	Receiver~2 uses a similar decoding strategy to obtain 
	$\ell_2$ and infer $m_2$.

\noindent	
\textbf{Error analysis}. An error occurs if one (or more) of the following events occurs.
	\begin{description}
	\item[$\mathbf{(E0)}$:] An encoding error occurs whenever there is no jointly typical sequence in $B_{m_1} \times C_{m_2}$
				for some message pair $(m_1,m_2)$.
	\item[$\mathbf{(E1)}$:] A decoding error occurs at Receiver 1 if $L_1^\prime \neq \ell_1$.
	\item[$\mathbf{(E2)}$:] A decoding error occurs at Receiver 2 if $L_2^\prime \neq \ell_2$.
	\end{description}

	The probability of an encoding error $\mathbf{(E0)}$ is bounded like in the 
	classical Marton scheme  	\cite{M79,el2010lecture,cover1998comments}. 
	To see this, we use Cover's counting argument \cite{cover1998comments}. 
	The probability that two random sequences $u_1^n$, $u_2^n$ chosen according to the 
	marginals are jointly typical is $2^{-nI(U_1;U_2)}$ and since there
	are $2^{n[I(U_1;B_1)-R_1]}$ and 
	$2^{n[I(U_2;B_2)-R_2]}$ sequences in each bin, 
	the expected number of jointly typical sequences
	that can be constructed from each combination of bins is
	\be
		2^{n[I(U_1;B_1)-R_1]} 2^{n[I(U_2;B_2)-R_2]}2^{-nI(U_1;U_2)}.
	\ee
	Thus, if we choose $R_1 + R_2 +\delta \leq I(U_1; B_1) + I(U_2; B_2) - I(U_1; U_2)$,
	 then the expected number of
	 %there will be
	% likely be at least one
    strongly jointly typical sequences in $B_{m_1} \times C_{m_2}$ is much larger than one.

	To bound the probability of error event $\mathbf{(E1)}$,
	we use the Hayashi-Nagaoka operator inequality (Lemma~\ref{lem:HN-inequality}):
	\begin{align}
		\Pr(\mathbf{E1})  
		& =
		\frac{1}{L_1} \sum_{\ell_1}
		\Tr\left[  (I - \Lambda_{\ell_1})  \rhoNulul  \right] \nonumber \\
		& \leq
		\frac{1}{L_1} \sum_{\ell_1}
		\bigg(
			2\underbrace{ \Tr\!\left[  (I - \PIavg \PIone \PIavg )  \rhoNulul  \right]}_{(T1) }  %\label{t1-error} 
			\nonumber \\[-2mm]
		& \qquad \qquad \qquad\qquad\qquad \ \ \ \  	+ 
			4 \underbrace{ \sum_{\ell'_1\neq \ell_1} \Tr\!\left[   \PIavg \PIonepr \PIavg   \rhoNulul  \right] }_{(T2)}
		\bigg).  \nonumber  \\[-7mm] \nonumber %\label{t2-error} \tag{T2} 
%		& = 2\times  T_1 + 4 \times T_2
	\end{align}%
	Consider the following lemma \cite[Property 14.2.7]{wilde2011book}.
	%	 which
	%	states that the code state $\rhoNulul$ is well supported by the two 
	%	constituent measurement operators $\PIone$ and $\PIavg$ \cite{wilde2011book}.
	%on which our decoding POVM is based.
	\begin{lemma}	\label{lemma-avg-typ-proj-works}
	When $u_1^n(\ell_1)$ and $u_2^n(\ell_2)$ are strongly jointly typical,
    the state $\rhoNulul$ 
    is well supported by both the averaged
    and conditionally typical projector in the sense that:
	$
		\Tr\!\big[ \PIavg \ \rhoNulul \big] \geq 1 - \epsilon, \ \forall \ell_1, \ell_2,
	$
	and 
	%	for all $\ell_1, \ell_2$.
	%	\end{lemma}
	%	\begin{lemma}[Conditionally typical projector works without expectation]
	$
		\Tr\!\left[ \PIone \ \rhoNulul \right] \geq 1 - \epsilon, \  \forall \ell_2,
	$
	%	for all $\ell_2$.	
	\end{lemma}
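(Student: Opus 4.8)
The plan is to prove Lemma~\ref{lemma-avg-typ-proj-works} by reducing it to standard properties of strong (frequency) typical sequences and their associated typical projectors, exactly as in the classical-quantum literature. The key observation is that when $u_1^n(\ell_1)$ and $u_2^n(\ell_2)$ are strongly jointly typical, the joint type of the pair is close to $p_{U_1U_2}$, and hence the conditional type of $u_2^n$ given $u_1^n$ is close to $p_{U_2|U_1}$. First I would recall that the output state is the tensor product $\rho_{\ell_1,\ell_2} = \bigotimes_{i=1}^n \rho^{B_1}_{f(u_{1i}(\ell_1),u_{2i}(\ell_2))}$ after tracing out $B_2$, so that each single-site factor is one of the finitely many states $\rho^{B_1}_{u_1,u_2}$.

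For the averaged projector $\Pi_{\bar\rho}$, the standard result (analogous to Property~\eqref{eqn:TypP-prop-one}, but for strongly typical projectors against a state that is \emph{not} the exact tensor power) is that $\Tr[\Pi_{\bar\rho}\,\rho_{\ell_1,\ell_2}]\ge 1-\epsilon$ whenever the empirical distribution of the conditioning data is $\delta$-close to the intended distribution. Concretely, the empirical frequency with which the pair of letters $(u_1,u_2)$ appears along the codeword $(u_1^n(\ell_1),u_2^n(\ell_2))$ is, by strong joint typicality, within $\delta$ of $p_{U_1U_2}(u_1,u_2)$; averaging the corresponding $\rho^{B_1}_{u_1,u_2}$ with these empirical weights therefore gives an expected single-site state that is $O(\delta)$-close to $\bar\rho$. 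Appealing to the property of strongly typical projectors applied to such a ``nearby'' product state (as in \cite[Property 14.2.7]{wilde2011book}), we get $\Tr[\Pi_{\bar\rho}\,\rho_{\ell_1,\ell_2}]\ge 1-\epsilon$ for all $\ell_1,\ell_2$, proving the first inequality.

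For the conditionally typical projector $\Pi_{u_1^n(\ell_1)}$, I would use the same idea but now condition on $u_1^n(\ell_1)$: this projector is built from the state $\omega_{u_1^n(\ell_1)} = \bigotimes_i \omega^{B_1}_{u_{1i}(\ell_1)}$ where $\omega^{B_1}_{u_1} = \sum_{u_2} p_{U_2|U_1}(u_2|u_1)\rho^{B_1}_{u_1,u_2}$. Strong joint typicality of $(u_1^n(\ell_1),u_2^n(\ell_2))$ implies that, restricted to the positions $i$ where $u_{1i}(\ell_1)=u_1$, the empirical distribution of $u_{2i}(\ell_2)$ is within $O(\delta)$ of $p_{U_2|U_1}(\cdot|u_1)$. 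Hence the actual single-site output state at those positions, averaged over those positions, is close to $\omega^{B_1}_{u_1}$, and the strongly-typical-projector property again yields $\Tr[\Pi_{u_1^n(\ell_1)}\,\rho_{\ell_1,\ell_2}]\ge 1-\epsilon$ for every $\ell_2$ (note the bound is independent of $\ell_2$, which is the key point, since $\ell_1$ is fixed but $\ell_2$ ranges). This is precisely the content of \cite[Property 14.2.7]{wilde2011book}.

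The main obstacle is bookkeeping the relationship between strong joint typicality of the classical sequences and the typicality of the resulting quantum output state against these \emph{averaged} states rather than exact i.i.d.\ states---i.e., making precise the ``closeness'' step and invoking the right robustness property of strongly typical projectors. This is exactly why the lemma is phrased for \emph{strong} typicality rather than entropy typicality: strong typicality controls the empirical letter frequencies, which is what lets us argue that the mixed single-site state is close to $\bar\rho$ (resp.\ $\omega^{B_1}_{u_1}$). Once that robustness property is cited, both bounds follow immediately, so the proof is short; the only care needed is to state the closeness quantitatively and to track that the error parameter $\epsilon$ can be made arbitrarily small by choosing the typicality parameter $\delta$ small and $n$ large.
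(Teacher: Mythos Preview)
Your proposal is correct and aligns with the paper's approach: the paper does not give an independent proof of this lemma but simply cites \cite[Property 14.2.7]{wilde2011book}, and your argument is precisely an unpacking of why that property applies here---strong joint typicality of $(u_1^n,u_2^n)$ controls the empirical letter frequencies so that the output state is well supported by the strongly typical projectors for $\bar\rho$ and $\omega_{u_1^n}$. Your explanation of why \emph{strong} typicality (rather than entropy typicality) is needed is exactly the point.
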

	
	% TODO: \begin{proof} .... \end{proof}
	
	To bound the first  term (T1), we use the  following  argument:
	%	 based 	on the above lemma:
	\begin{align}
	 1 - (T1) 
		& =
		\Tr\!\left[  \PIavg \PIone \PIavg \  \rhoNulul  \right]  \nonumber \\
		& =
		 \Tr\!\left[   \PIone \ \PIavg   \rhoNulul  \PIavg  \right] \nonumber  \\		
		& \geq
		%		\ExpUone \left\{  \ExpUtwoGone \left\{ 
		 \Tr\!\left[   \PIone \ \rhoNulul     \right] 
		 %  \right\} \right\} 
		 %\nonumber \\ 
		%& \qquad \quad  	-
			- \|  \PIavg \rhoNulul \PIavg - \rhoNulul \|_1   \nonumber \\	
		& \geq
		 (1 - \epsilon) - 2\sqrt{\epsilon},
	\end{align}
	where the inequalities follow from \eqref{eq:trace-inequality} and
	Lemma~\ref{lemma-avg-typ-proj-works}.
	This use of Lemma~\ref{lemma-avg-typ-proj-works} demonstrates why 
	the Marton coding scheme selects the 
	sequences $u_1^n(\ell_1)$ and $u_2^n(\ell_2)$ such that they are strongly jointly typical.
	
	To bound the second term, we begin by applying a variant of the projector trick from 
	\eqref{eq:projector-trick}. For what follows, note that the expectation $\ExpUboth$
	over the random code 
	is with respect to the product distribution $p_{U_1^n}(u_1^n) p_{U_2^n}(u_2^n)$:%
	{\allowdisplaybreaks
	\begin{align*}
		 \ExpUboth\!\!\left\{  (T2) \right\}  
		& =
		\ExpUboth \left\{ \sum_{\ell'_1\neq \ell_1} 
			\Tr\!\left[   \PIavg \Pi_{U^n_1(\ell_1^\prime)}  \PIavg \   \rhoNulul  \right]  \right\} \\
		%					Technically speaking should never see lowercase u since
		%					
		& \leq
		2^{n[H(B_1|U_1)+\delta]} \ExpUboth \left\{ \sum_{\ell'_1\neq \ell_1} 
			\Tr\!\left[    \PIavg \ \omgNulpr  \ \PIavg   \ \rhoNulul  \right]	 \right\}.\\
	\intertext{We continue the proof using averaging over the choice of codebook and
		  	the properties of typical projectors:}
		& = 
		2^{n[H(B_1|U_1)+\delta]} 
		\ExpUtwo \sum_{\ell'_1\neq \ell_1} 
			\Tr\!\left[ \PIavg \  \ExpUone\!\left\{ \omgNulpr \right\} \  \PIavg \ExpUone\!\left\{\rhoNulul\right\}   \right]	 \\			
		& = 
		2^{n[H(B_1|U_1)+\delta]} 
		\ExpUtwo \sum_{\ell'_1\neq \ell_1} 
			\Tr\!\left[ \PIavg \  \bar{\rho} \  \PIavg \ \ExpUone\!\left\{\rhoNulul\right\}    \right]	 \\		
		& \leq 
		2^{n[H(B_1|U_1)+\delta]} 2^{-n[H(B_1)- \delta]} 
		\ExpUboth \sum_{\ell'_1\neq \ell_1} 
			\Tr\!\left[  \PIavg \ \rhoNulul   \right]	 \\
		& \leq 
		2^{n[H(B_1|U_1)+\delta]} 2^{-n[H(B_1)- \delta]} 
		\ExpUboth \sum_{\ell'_1\neq \ell_1} 1 \\
%		& \leq 
%		2^{n[H(B_1|U_1)+\delta]} 2^{-n[H(B_1)- \delta]}  L_1\\
		& \leq
		|\mcal{L}_1| \  2^{-n[I(U_1;B_1) - 2\delta]}.
	\end{align*}}%
	Therefore, if we choose $2^{nR_1} = |\mcal{L}_1| \leq  2^{n[I(U_1;B_1) - 3\delta]}$, the probability of 
	error will go to zero in the asymptotic limit of many channel uses.
	The analysis of the event $\mathbf{(E2)}$ is similar.
	\end{proof}

	\section{Discussion}

		We established two achievable rate regions for the classical-quantum broadcast channel.
		%	 with 	two receivers.
		%
		In each case a fundamentally different coding strategy was used.

		The \emph{superposition coding} strategy is a very
		powerful coding technique for encoding two ``layers'' of messages in 
		the same codeword.
		%		 	other communication contexts.
		Recall that the codebooks in the Chong-Motani-Garg coding strategy
		were also constructed using the superposition coding technique.
		%		 of the personal message on top of the common message.
		In the next chapter, we will use this technique %again, 
		to build codes for the relay channel.
		% discuss 		
		 % partial interference cancellation strategy
	
		The \emph{binning}
		%		\footnote{Not to be confused with the source coding technique of the same name.
		%		% \emph{binning} techniques used in .
		%		In this case we use \emph{channel coding binning},
		%		or \emph{Marton binning}.
		%		%		, which is sometimes also   called \emph{multicoding} \cite{el2010lecture}.
		%		}
		 strategy used in the Marton scheme is also applicable more widely.
		It can be used every time two
		uncorrelated messages must be encoded into a single codeword.
		From the point of view of Receiver~1, the messages intended
		for Receiver~2 are seen as random noise.
		%		and the best kind of interference
		%		is the kind that is correlated with what you are sending.
		%		but if we force them to be correlated
		By using the correlated variables $(U_1,U_2) \sim p(u_1,u_2)$
		to construct the codebooks we can obtain better rates than 
		would be possible if independent codebooks were used.
		This is because the ``noise'' codewords are now correlated with the 
		messages for Receiver~1 and thus helping him with the communication task.
		%		then 
		%		the messages he wants 
		%
		%		By using the correlated 
		%		because
		%		the interference from tcan pick the rates  
		%		By introducing correlations between the codewords for the
		%		two messages,  codebook with correlations between the messages
		%		the two transmissions 
		%		of the independent streams, we make their 
		%		 correlated between codebooks, the messages codewords messages  with correlated random variables 
		%		transmitted to produce  

		Note that the above two techniques can be combined to give
		the quantum Marton coding scheme with a common message \cite{TakeokaMartonSup}.

%!TEX root = thesis.tex

% PDF QRC macros %%%%%%%%%%%%

% states 
\def\rhoFULLatRE{ \rho_{x^{n}\left(  m_{j},l_{j},l_{j-1}\right)  ,x_{1}^{n}\left(l_{j-1}\right)  }^{B_{1\left(  j\right)  }^{n}} }
\def\rhojRE{\rho_{m_j,\ell_j,\ell_{j-1}}^{B_{1(j)}^n }}
\def\rhobarRE{\bar{\rho}_{\ell_j,\ell_{j-1}}^{1(j)}}

\def\rhoRX{\rho^{(j)}_{m_j\ell_j\ell_{j\!-\!1}}\!\!\!\otimes\!\!\!\;\rho^{(j+1)}_{m_{j\!+\!1}\ell_{j\!+\!1}\ell_{\!j}} }
% too complicated...
%\def\rhoRX{\rho_{m_j,\ell_j,\ell_{j-1}, \atop m_{j+1},\ell_{j+1},\ell_{j} }^{B^n_{(j)}B^n_{(j+1)} }}
\def\rhojRX{\rho_{m_j,\ell_j,\ell_{j-1}}^{(j)}}
\def\rhojRXmjpr{\rho_{m^\prime_j,\ell_j,\ell_{j-1}}^{(j)}}
\def\rhojBOTH{\rho_{m_j,\ell_j,\ell_{j-1}}^{(j)}}
\def\rhojjRX{\rho_{m_{j+1},\ell_{j+1},\ell_{j}}^{(j+1)}}
\def\rhojjBOTH{\rho_{m_{j+1},\ell_{j+1},\ell_{j}}^{(j+1)}}
\def\rhobarRX{\bar{\rho}_{\ell_j,\ell_{j-1}}^{(j)}}
\def\rhodbarRX{\dbar{\rho}_{|\ell_{j-1}}^{(j)}}

\def\rhoFULLatRXj{ \rho_{x^{n}\left(  m_{j},l_{j},l_{j-1}\right)  ,x_{1}^{n}\left(l_{j-1}\right)  }^{B_{\left(  j\right)  }^{n}} }
\def\rhoFULLatRXjj{ \rho_{x^{n}\left(  m_{j+1},l_{j+1},l_{j}\right)  ,x_{1}^{n}\left(l_{j}\right)  }^{B_{\left(  j+1\right)  }^{n}} }

\def\dbar#1{\bar{\bar{#1}}}

% RElay projectors
\def\PIRElj{\Pi_{\sigma_{\ell_j|\ell_{j-1} } }^{} }
\def\PIREljpr{\Pi_{\sigma_{\ell^\prime_j|\ell_{j-1} } }^{} }

\def\PIREavg{\Pi_{\bar{\sigma}_{|\ell_{j-1}} }^{} }

% RXceiver projectors
% j
\def\PIjRXmjlj{\Pi_{\rho_{m_j,\ell_j|\ell_{j-1} } }^{(j)} }
\def\PIjRXmjprlj{\Pi_{\rho_{m^\prime_j,\ell_j|\ell_{j-1} } }^{(j)} }
\def\PIjRXmjprljpr{\Pi_{\rho_{m^\prime_j,\ell^\prime_j|\ell_{j-1} } }^{(j)} }
\def\PIjRXlj{\Pi_{\bar{\rho}_{\ell_j|\ell_{j-1} } }^{(j)} }
\def\PIjRXljpr{\Pi_{\bar{\rho}_{\ell^\prime_j|\ell_{j-1} } }^{(j)} }
\def\PIjRXavg{\Pi_{\dbar{\rho}_{|\ell_{j-1} } }^{(j)} }
% j+ 1
\def\PIjjRXlj{\Pi_{\tau_{\ell_j } }^{(j+1)} }
\def\PIjjRXljpr{\Pi_{\tau_{\ell^\prime_j } }^{(j+1)} }
\def\PIjjRXavg{\Pi_{\bar{\tau}  }^{(j+1)} }

%\def\PIRElj{\Pi_{\ell_j|\ell_{j-1}}^{B_{1(j)}^n} }

% positive operators P
\def\PRE{P_{\ell_j|\ell_{j-1}}^{B^n_{1(j)} } }
\def\PREpr{P_{\ell_j^\prime|\ell_{j-1}}^{B^n_{1(j)} } }
\def\PRX{P_{m_j,\ell_j|\ell_{j-1}}^{B^n_{(j)} B^n_{(j+1)} } }
\def\PRXljmj{P_{m_j,\ell_j|\ell_{j-1}}^{B^n_{(j)} B^n_{(j+1)} } }
\def\PRXljmjpr{P_{m^\prime_j,\ell_j|\ell_{j-1}}^{B^n_{(j)} B^n_{(j+1)} } }
\def\PRXljprmjpr{P_{m^\prime_j,\ell^\prime_j|\ell_{j-1}}^{B^n_{(j)} B^n_{(j+1)} } }

\def\PRXj{P_{m_j,\ell_j|\ell_{j-1}  }^{B^n_{(j)}} }
\def\PRXjmjpr{P_{m^\prime_j,\ell_j|\ell_{j-1}  }^{B^n_{(j)}} }
\def\PRXjljprmjpr{P_{m^\prime_j,\ell^\prime_j|\ell_{j-1}  }^{B^n_{(j)}} }
\def\PRXjj{P_{\ell_j}^{B^n_{(j+1)}} }
\def\PRXjjpr{P_{\ell^\prime_j}^{B^n_{(j+1)}} }

% LAMBDAS
\def\GAMRE{\Gamma_{\ell_j|\ell_{j-1}  }^{B^n_{1(j)} } }
\def\LAMRX{\Lambda_{m_j,\ell_j|\ell_{j-1}  }^{B^n_{(j)}B^n_{(j+1)} } }
\def\LAMRXprpr{\Lambda_{m^\prime_j,\ell^\prime_j|\ell_{j-1}  }^{B^n_{(j)}B^n_{(j+1)} } }
\def\LAMRXj{\Lambda_{m_j,\ell_j|\ell_{j-1}  }^{B^n_{(j)} } }
\def\LAMRXjj{\Lambda_{\ell_j|\ell_{j-1}  }^{B^n_{(j+1)} } }

% Expectations 
\def\ExpXgUXone{ \mathop{\mathbb{E}}_{X^n|U^n\!X_1^n } }
\def\ExpUXone{ \mathop{\mathbb{E}}_{U^n\!X_1^n } }
\def\ExpUXgXone{ \mathop{\mathbb{E}}_{U^n\!X^n|X_1^n } }
\def\ExpUgXone{ \mathop{\mathbb{E}}_{U^n|X_1^n } }
\def\ExpALL{ \mathop{\mathbb{E}}_{U^n\!X^n\!X_1^n } }

\def\ExpXone{ \mathop{\mathbb{E}}_{X_1^n} }

\chapter{Relay channels}
						\label{chapter:RC}
	
	\vspace{-3mm}
	Suppose that a source wishes to communicate with a remote destination
	and that a relay station is available which can decode the messages 
	transmitted by the source during one time slot and \emph{forward} them to the destination
	during the next time slot.
	With the relay's help, the source and the destination can improve communication rates
	because the destination can decode the intended messages in parallel from the channel 
	outputs during two consecutive time slots.
	In this way, useful information is received both from the source and the relay.

	\begin{wrapfigure}{r}{0pt}
	\begin{tikzpicture}[node distance=2.0cm,>=stealth',bend angle=45,auto]

	  \begin{scope}
		% QRelay channel
		\node [cnode] (RCTx) [ label=left:S,yshift=-20mm   ]                            {\footnotesize $x$};
		\node [cnode] (RCRERx) [ label=above:$\ \ \ \ \ \ \textrm{R}$, above of=RCTx,xshift=+8mm ]		{$y_1$}
			edge  [pre]             		node[swap]	  	{$\lightning$}	(RCTx);
		\node [cnode] (RCRETx) [above of=RCTx,xshift=+14mm ]		{\footnotesize $x_1$};
		\node [cnode] (RCRx)  [ label=right:D, below of=RCRERx,xshift=+16mm]	{$y$}
			edge  [pre]             		node	  	{$\lightning$}	(RCTx)
			edge  [pre]				node[swap]  	{$\lightning$}	(RCRETx);
	  \end{scope}
	  \begin{pgfonlayer}{background}
	    \filldraw [line width=4mm,join=round,black!10]
	      ([xshift=+1mm,yshift=+2mm]RCRERx.south -| RCTx.north) rectangle ([xshift=+2.5mm]RCRx.south -| RCRx.west);
	  \end{pgfonlayer}
	  
	\end{tikzpicture}

	\caption{The classical relay channel.
	% QRC $\equiv \left\{ \rho^{B_1B}_{x,x_1} \right\}$ 
	}
	\label{QRC}
	\end{wrapfigure}
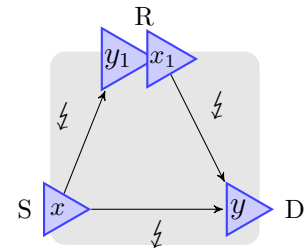
	The discrete memoryless relay channel is %(DMRC) 
	a probabilistic model for a communication scenario with
	a \emph{source}, a \emph{destination} and a cooperative \emph{relay} station.
	The channel is modelled as a two-input
	two-output conditional probability distribution 
	\be
		p(y_1,y|x,x_1),
	\ee
	where $x$ is the input of the source,
	$y_1$ and $x_1$ are the received symbol and transmitted symbol of the relay, 
	and  $y$ is the output at the destination.
	%	 whenever the source and
	%	relay input symbols $x$ and $x_1$.
	This relay channel model is very general and contains 
	many of the other ideas presented in this thesis.
	The transmission of the source towards the relay and the destination
	is a kind of broadcast channel, whereas the decoding at the destination 
	is an instance of the multiple access channel.
	%		 variant of the multiple
	%		access channel because there are two incoming message streams to decode.
	%		The  
	These correspondences can inform our choice of coding strategies,
	but in order to take full advantage of the communication network
	we must build a \emph{relay channel code} which aims to achieve the best
	overall rate from the source to the destination.		
	%		but the relay channel cannot be reduced to either the MAC problem or the BC problem
	%		because our \emph{intent} in the
	%		First there is the multiple access problem 
	%		a subsumes 
	%		With a suitable choice of $p(y_1,y|x,x_1)$ we can model the multiple access

	In this chapter, we will review some of the coding strategies for 
	the classical relay channel and then show that the
	\emph{partial decode-and-forward} strategy can be applied to
	the classical-quantum relay channel.
	%
	%Note that Since there are two senders and two receivers in this  the source and the relay are senders in this scenario 
	Note that we depart from the usual naming conventions for senders and receivers.
	We do so because both the source and the relay act as senders in our scenario,
	so more specific identifiers are necessary.
	%
	%This is also the naming convention generally used in literature.

	\vspace{-1mm}

	\section{Introduction}

		Consider two villages located in a valley that wish to establish a communication
		link between them using a direct link and also with the help of a 
        radio tower on a nearby mountain peak.
		We can setup a relay station on the tower, which decodes the messages from 
		the source village and retransmits them towards the destination village.
		Assuming the villagers only have access to point-to-point communication technologies,
		they now have two obvious options.
		Either they send information on the \emph{direct transmission} link,
		or they use \emph{full relaying}, where all their communication happens
		via the tower.
		In the first case, the tower is not used at all and in the second case
		the direct link is not used at all.
		
		It is worthwhile to examine the exact timing associated with the information
		flow in the latter scenario, since it is the first appearance of a multi-hop communication protocol.
		Let us assume that the source wants to send the string ``\texttt{constitution}'' to the destination.
		Assume that we use codewords of size $n$, and that each character is encoded in a separate
		codeword. The source and the relay have transmit codebooks $\{X_s^n(a)\}, \{X_r^n(a)\}$, $a \in \mcal{ASCII}$.

		The \emph{direct transmission} strategy will make $12n$ uses of the channel.
		The transmissions of the source will be 
		$[ X_s^n(\texttt{c}), X_s^n(\texttt{o}), X_s^n(\texttt{n}),  \ldots, X_s^n(\texttt{n})]$
		in each block.
		The relay will transmit a fixed codeword during this time.
		The destination will simply use a point-to-point decoder to extract the messages.
		The rate achievable using this strategy is given by:
		\be
			R \leq \sup_{p(x),x_1} I(X;Y|X_1 = x_1).
			\label{eq:direct-coding-bd}
		\ee
		
		The \emph{full relaying} strategy will use the channel $13n$ times,
		where the need for an extra block of transmission is introduced 
		by the decoding delay at the relay.
		During the 13 blocks, the transmissions of the source will be 
		$[ X_s^n(\texttt{c}), X_s^n(\texttt{o}), X_s^n(\texttt{n}),  \ldots, X_s^n(\texttt{n}), \emptyset]$,
		whereas the transmissions of the relay are one block behind:
		$[ \emptyset, X_r^n(\texttt{c}), X_r^n(\texttt{o}),  \ldots,$ $X_r^n(\texttt{o}), X_r^n(\texttt{n}) ]$.
		The source simply has no more messages to send during block 13,
		whereas the relay has no information to forward during the first block,
		so both parties will stay silent during these different times.
		The rates that are achievable by this approach are:
		\be
		 	R \leq \sup_{p(x),p(x_1)} \min\{ I(X_1;Y), I(X;Y_1|X_1) \}.
		\ee
		This corresponds to the minimum of the point-to-point capacities
		of the two legs of the transmission.
		Note that the second mutual information term is conditional on $X_1$,
		since the relay knows its own transmit signal.
		%	
		%	
		%	the entire block of $n$ uses of the channel is necessary
		%	the relay channel model takes into account the processing delay
		%	that would be necessary to decode a message and retransmit 
		%	
		%	This is because there is an extra time it takes one use of the channel to for each of the messages 
		%	 during the first block the Sender will transmit $X_s^n(\texttt{c})$
		%	whereas the relay will st
		%	$X_r^n(\texttt{c})$
		%	To see this  and during each
		%	block 
		%	these scenarios. in 

		Surely a better strategy must exist than the ones described above.
		How can we use both the direct link and the relayed link at the same time?
		
		%		is via the 
		%		tops   available, and the village mayors
		%		has connections with people who work in 

	\subsection{Classical relay channel coding strategies}

		%	destination from   can decodes two outputs of the channel in pairs,		

		%	he can extract both the relay signal and destination signal 
		%	 signal can effectively be extracted from
		%	 binforma
		%	a sliding-window 
		%	blocks of two slots   Using the signal from the relay,  destination 

		%	effective signal as seen by the
		
		%		The relay channel problem was first proposed in 
		%		\cite{cover1979capacity}, where Cover and El Gamal developed
		%		several of the important results.
		%		
		%		has been studied extensively in the context of
		%		classical information theory , %kramer2005cooperative,
		%		xie2005achievable, el2010lecture}. 
		%		There, 
		%
		Two important families of coding strategies exist for relay channels: 
		\emph{compress and forward} and \emph{decode and forward} \cite{cover1979capacity,el2010lecture}.
		%
		% COMPRESS AND FORWARD 
		%

		In compress-and-forward strategies, the relay does not try to decode the message
		from his received signal $Y_1^n$, but simply searches for a 
		close sequence $\hat{Y}_1^n$ chosen from a predetermined compression codebook. 
		To continue the example from the previous section,
		suppose that the relay's decoding simply tries to determine whether 
		the transmitted message is a vowel or a consonant.
		This partial information about the message 
		%		An index corresponding to $\hat{Y}_1^n$
		is then forwarded to the destination during the next block, 
		encoded into a codeword $x_1^n(\mathbf{s})$, $\mathbf{s} \in \{ \textrm{consonant}, \textrm{vowel} \}$
		to serve as \textbf{s}ide-information for the decoding at the destination.

		Compress and forward strategies are appropriate in situations where
		the direct link between the source and the destination is stronger
		than the link from the source to the relay.
		In such a situation it would be disadvantageous to require that the messages
		from the source be fully decoded by the relay.
		Still, if the relay decodes \emph{something} and forwards this information
		to the destination,  better rates are achievable than if we simply 
		chose to not use the relay as in the direct coding approach \cite{el2010lecture}.
		%
		%
		%		The relay can still be used to compute a \emph{hash} 
		%		of the message
		%		By making the relay decode some message 
		%
		%		
		%		A compress-and-forward strategy still allows us to 
		%		 since this would impose a
		%		constraint use a rate that
		%		can be decoded by both the receiver
		%		If we require the tranmsitted messages 

		In a decode-and-forward strategy, each of the 
		transmitted messages is decoded by the relay and retransmitted during the
		next block.
		Using this strategy, the destination can  
		decode useful information both from the source and the relay.
		In this way we could achieve the maximum possible throughput
		to the destination $I(X,X_1;Y)$.
		
		There are at least three decoding strategies that can be used
		by the destination: backwards decoding, sequential decoding with 
		binning at the relay, or collective decoding of consecutive output blocks of the channel (joint decoding).
		All three decoding techniques for the decode-and-forward strategy achieve 
		the same rate:
		\be
			R \leq \max_{p(x,x_1)} \min\{ \ I(X,X_1;Y), \ I(X;Y_1|X_1) \}.
			\label{eq:dec-fwd-rate}
		\ee
		We will focus on the collective decoding strategy.

		To illustrate the collective decoding strategy let us consider
		again the situation in which the source village is transmitting 
		the string ``\texttt{constitution}'' to the destination village.
		%		, but	this time using the decode-and-forward strategy.
		The transmission will take 13 block-uses of the channel.
		Figure~\ref{fig:coherent-dec-fwd} illustrates the flow of information 
		for the character $\texttt{n}$ which happens during the third and fourth
		block-uses of the channel.
		%		  in the decode-and-forwards strategy during two consecutive block-uses
		%		of the channel.
		%
		%		These are the blocks during which the transmission 
		During the third and the fourth transmission blocks,
		the destination has collected the output variables $(Y^n_{(3)},Y^n_{(4)})$
		and will perform a decoding operation on both outputs collectively.
		The rate $I(X,X_1;Y)$ is obtained from the decomposition 
		$I(X,X_1;Y)=I(X;Y|X_1)+I(X_1;Y)$, where the second term
		will come from the probability of making a mistake when
		decoding $x_{1(4)}^n(\texttt{n})$ from $Y^n_{(4)}$ 
		and the first terms comes from the probability 
		of wrongly decoding $x_{(3)}^n(\texttt{n})$ from $Y^n_{(3)}$.
		%
		%		The probability of making the same mistake in both cases
		%		is  AND
		%
		%		The power of the collective decoding approach comes
		%		from the requirement  the
		%		fact that the probability of  lies in the
		%		fact that for a decoding error to occur 
		%		the use of a independent codebooks for each transmission 
		%		decoding   iscan be achieved,
		%		consider a successive 
		%		Extracted from the 

		%		and use joint-decoding to find the unique character that is consistent
		%		with received 
		%		 information 
		%		are going to trace the transmission of the charac
		%		collective decoding strategy works 
		%		consider the snapshots of the communication in .
		%		
		%		Suppose 
		%		 (as in joint on outputs of two blocks: $(Y^n_b,Y^n_{b+1})$, 
		% 		not joint inputs $X_1^n, X^n$ or are there coding strategies like that too?)

		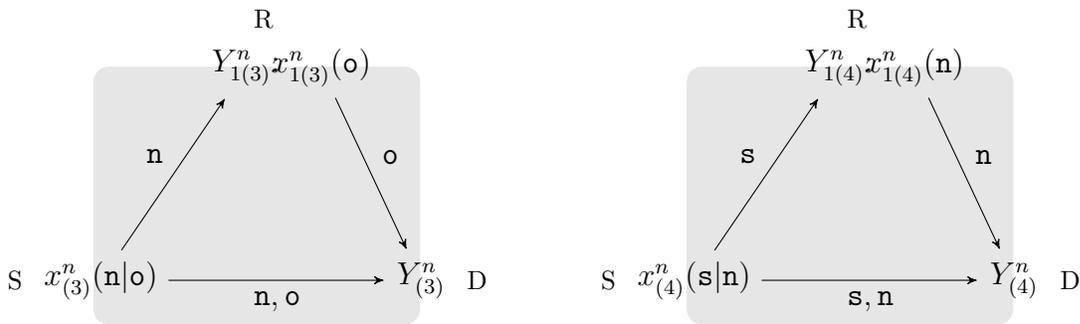
\begin{figure}[h]
		
		\begin{center}
		\subfigure[During block 3, the relay will transmit its codeword ``\texttt{o}'',
				which we assume was received in the previous block.
				The source transmits a codeword $x^n(\texttt{n}|\texttt{o})$
				which is chosen from a \emph{coherent} codebook.]{
		\begin{tikzpicture}[node distance=4cm,>=stealth',bend angle=45,auto]
		  \begin{scope}
			\node [] (RCTx) [ label=left:S,yshift=-20mm   ]                            {$x_{(3)}^n(\texttt{n}|\texttt{o})$};
			\node [] (RCRERx) [ label=above:$\ \ \ \ \textrm{R}$, above right of=RCTx,xshift=-9mm ]		{$Y^n_{1(3)}$:}
				edge  [pre]             		node[swap]	  	{$\texttt{n}$}	(RCTx);
			\node [] (RCRETx) [above right of=RCTx,xshift=+1mm ]		{$x_{1(3)}^n(\texttt{o})$};
			\node [] (RCRx)  [label=right:D,below right of=RCRERx,xshift=-5mm]	{$Y^n_{(3)}$}
				edge  [pre]             		node	  	{$\texttt{n},\texttt{o}$}	(RCTx)
				edge  [pre]				node[swap]  	{$\texttt{o}$}	(RCRETx);
		  \end{scope}
		  \begin{pgfonlayer}{background}
		    \filldraw [line width=4mm,join=round,black!10]
		      ([xshift=+1mm,yshift=+2mm]RCRERx.south -| RCTx.north) rectangle ([xshift=+2.5mm]RCRx.south -| RCRx.west);
		  \end{pgfonlayer}  
		\end{tikzpicture}
		} \ \ \ \ \ \ 
		\subfigure[During block 4, the relay will transmit its codeword for ``\texttt{n}'',
				which we assume was received in the previous block.
				The source transmits a codeword $x^n(\texttt{s}|\texttt{n})$.
                ]{
		\begin{tikzpicture}[node distance=4cm,>=stealth',bend angle=45,auto]
		  \begin{scope}
			\node [] (RCTx) [ label=left:S,yshift=-20mm   ]                            {$x_{(4)}^n(\texttt{s}|\texttt{n})$};
			\node [] (RCRERx) [ label=above:$\ \ \ \ \textrm{R}$, above right of=RCTx,xshift=-9mm ]		{$Y^n_{1(4)}$:}
				edge  [pre]             		node[swap]	  	{$\texttt{s}$}	(RCTx);
			\node [] (RCRETx) [above right of=RCTx,xshift=+1mm ]		{$x_{1(4)}^n(\texttt{n})$};
			\node [] (RCRx)  [label=right:D,below right of=RCRERx,xshift=-5mm]	{$Y^n_{(4)}$}
				edge  [pre]             		node	  	{$\texttt{s},\texttt{n}$}	(RCTx)
				edge  [pre]				node[swap]  	{$\texttt{n}$}	(RCRETx);
		  \end{scope}
		  \begin{pgfonlayer}{background}
		    \filldraw [line width=4mm,join=round,black!10]
		      ([xshift=+1mm,yshift=+2mm]RCRERx.south -| RCTx.north) rectangle ([xshift=+2.5mm]RCRx.south -| RCRx.west);
		  \end{pgfonlayer}  
		\end{tikzpicture}
		}
		\end{center}
		
		\caption{	Information flow in the relay network during the third and fourth transmission blocks
				of the string ``\texttt{constitution}''. 
		}
		\label{fig:coherent-dec-fwd}

		\end{figure}

	Observe that the optimization in \eqref{eq:dec-fwd-rate} is taken over all 
	joint input distributions $p_{XX_1}(x,x_1)$,
	which would seem to contradict the assumption that the source and the relay
	are different parties and cannot synchronize their encoding.
	%	This should raise some eyebrows in 
	%
	Recall that in the multiple access channel problem, the assumption that the 
	senders act independently translated to the optimization over
	all product distributions $p_{X_{1}}\!(x_1)p_{X_{2}}\!(x_2)$ in \eqref{eq:QMACunion}.

	The change from  $p_{X}(x)p_{X_1}(x_1)$ to $p_{XX_1}(x,x_1)$
	is allowed because the source uses a \emph{coherent} codebook.
	The codewords for the relay are chosen according to $p_{X_1}(x_1)$,
	%
	%	Assuming that the message pipeline is functioning,	
	 whereas the codewords for the sender are chosen according to
	 $p_{X|X_1}(x|x_1)$ conditional on the codeword of the relay.
	 But how could the source possibly know what the relay will be transmitting
	 during each time instant?
	 No telepathic abilities are necessary --- only optimism.
	The source knows what the relay will be transmitting
	because, if the protocol is working, it should be the codeword 
	from the previous block.
	 %
	%	 Assuming the message pipeline is full, then
	%	During each block, the source actually knows 
	%	
	%	 better rates can be achieved.
	%	 using a codebook $\{ x_1^n \}$.
	%codebook $\{ x^n \} \sim \prod_{i=1}^n p(x_i|x_{1i})$,
	%	which will give a more general input code distribution
	%	$p(x,x_1)=p(x|x_1)p(x_1)$. 
	%	 which is relevant when
	%	optimizing bounds of the form $I(XX_1;Y)$.	 
	%	instead of being restricted to product distributions $p(x)p(x_1)$.

		%
		%
		The \emph{partial} decode-and-forward strategy differs from the decode-and-forward 
		strategy in that it requires the relay to decode only \emph{part of} 
		the message from the source \cite{cover1979capacity}. 
		The idea is similar to the \emph{partial} interference cancellation
		strategy used by Han and Kobayashi for the interference channel \cite{HK81}, 
		which is its contemporary.
		%should come as no surprise to 
		%	and re-transmit that part to the destination it in the next frame.
		%
		%
		
%		%
%		In this setting, single-letter formulas characterize the 
%		capacity of point-to-point \cite{H98,SW97} and
%		multiple-access channels \cite{winter2001capacity} 
%		and give achievable rates for other network channels 
%		\cite{FHSSW11,S11a,SW11}.
%		%	other problems of  network information theory 
%		%	\cite{YHD2006,FHSSW11,SW11}.
%		%for classical-quantum channels  

%	

%	The partial decode-and-forward strategy is a variant of the decode-and-forward 
%	strategy in that it has the relay decode only
%	\emph{part} of the message from the source \cite{cover1979capacity}.
%	This is similar to the idea behind the Han-Kobayashi coding
%	strategy \cite{HK81}, which can be traced back further to \cite{Carleial78}.
%	%
%	Clearly partial-decoding was in the spirit of the times.

\subsection{Quantum relay channels}

		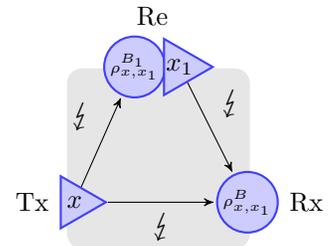
\begin{wrapfigure}{r}{0pt}
		\begin{tikzpicture}[node distance=1.8cm,>=stealth',bend angle=45,auto]

		  \begin{scope}
			% QRelay channel
			\node [cnode] (RCTx) [ label=left:Tx,yshift=-20mm   ]                            {\footnotesize $x$};
			\node [qnode] (RCRERx) [ label=above:$\ \ \ \ \textrm{Re}$, above of=RCTx,xshift=+8mm ]		{\tiny $\rho^{B_1}_{x,x_1}$}
				edge  [pre]             		node[swap]	  	{$\lightning$}	(RCTx);
			\node [cnode] (RCRETx) [above of=RCTx,xshift=+14mm ]		{\footnotesize $x_1$};
			\node [qnode] (RCRx)  [ label=right:Rx, below of=RCRERx,xshift=+15mm]	{\tiny $\rho^{B}_{x,x_1}$}
				edge  [pre]             		node	  	{$\lightning$}	(RCTx)
				edge  [pre]				node[swap]  	{$\lightning$}	(RCRETx);
		  \end{scope}
		  \begin{pgfonlayer}{background}
		    \filldraw [line width=4mm,join=round,black!10]
		      ([xshift=+1mm,yshift=+2mm]RCRERx.south -| RCTx.north) rectangle ([xshift=+2.5mm]RCRx.south -| RCRx.west);
		  \end{pgfonlayer}
		  
		\end{tikzpicture}

		\caption{The quantum relay channel $\rho^{B_1B}_{x,x_1}$. 
		}
		\label{fig:QRC}

		\end{wrapfigure}

	A classical-quantum relay channel $\mcal{N}$ is a map with two classical inputs
	$x$ and $x_1$ and two output quantum systems $B_1$ and $B$.
	For each pair of possible input symbols $(x,x_1)\in \mcal{X} \times \mcal{X}_1$, the channel prepares
	a density operator $\rho^{B_1B}_{x,x_1}$ defined on the
	 tensor-product Hilbert space $\mcal{H}^{B_1}\otimes \mcal{H}^{B}$:
	\be
		\rho^{B_1B}_{x,x_1} \equiv   \mcal{N}^{XX_1\to B_1B}(x,x_1),
		\label{eqn:relay-chan-def}
	\ee
	 where $B_1$ is the relay output and
	$B$ is the destination output.

	\subsection{Chapter overview}

		In this chapter we develop the partial decode-and-forward strategy 
		for classical-quantum relay channels \cite{savov2011partial}. 
		This \emph{partial} decoding at the relay is a more general strategy than
		the \emph{full} decode-and-forward strategy
		in the same way that the partial interference cancellation strategy for
		the interference channel (the Han-Kobayashi strategy)
		was more general than a full interference cancellation strategy.
		%

%	We prove the partial decode-and-forward inner bound for classical-quantum relay channels.

		Our results are the first extension of the quantum
		 simultaneous decoding techniques used in \cite{FHSSW11,S11a} 
		to multi-hop networks. 
		The decoding is based on a novel ``sliding-window'' quantum measurement
		(see \cite{carleial1982multiple,xie2005achievable}) which 
		involves a collective measurement on two consecutive blocks of the output
		in order to extract information from both the Sender and the relay.

		The next section will describe the coding strategy in more detail
		and state our results.
		The proof is given in Section~\ref{sec:PDFproof}.

\section{Partial decode-and-forward strategy}
				\label{sec:PDF-for-QRC}

		The idea for the code construction is to use a split codebook strategy where the
		source decomposes the message set into the Cartesian product of two 
		 different sets $\mathcal{L}$ and $\mathcal{M}$. We
		can think of the set $\mathcal{L}$ consisting of common messages that both the
		relay and the destination decode, while the set $\mathcal{M}$ consists of
		personal messages that only the destination decodes.

		%		We begin with a definition.
	In the context of our coding strategy,
	we analyze the average probability of error at the relay:
	%		for a particular message $ell_j$:
	\begin{align*}
		\bar{p}_{e}^{R}%\!\left( \ell_j\right)   
%				&  \equiv
%				\Pr\left\{  
%					(M^{\prime}_1,M^{\prime}_2)\neq(m_1,m_2) %
%				\right\} \\
%				& 
		\equiv 
			\frac{1}{|\mathcal{L}|}\sum_{\ell_j}					
			\text{Tr}\!
			\left\{  
				\left(  I-\Gamma^{B_{1(j)}^{n}} _{\ell_j} \right)
				\rho_{\ell_j }^{B_{1(j)}^{n}} 
			\right\},
	\end{align*}
	and the average probability of error %on message pair $(m_j,\ell_j)$
	at the destination:
	\begin{align}
		\label{eqn:avgpe-Rx}
		\bar{p}^{D}_{e}%\!\left(  m_j,\ell_j\right)   
%				&  \equiv
%				\Pr\left\{  
%					(M^{\prime}_1,M^{\prime}_2)\neq(m_1,m_2) %
%				\right\} \\
%				& 
		\!  \equiv \! 
			\frac{1}{|\mathcal{M}||\mathcal{L}|}\sum_{m_j,\ell_j}
			\text{Tr}\!
			\left[ \! 
				\left(\!  I-\Lambda_{m_j,\ell_j}^{B_{(j)}^{n} B_{(j+1)}^{n} }  \right)
				\rho_{m_j,\ell_j  }^{B_{(j)}^{n} B_{(j+1)}^{n} } 
			\right].
	\end{align}
	The operators $\left(  I-\Gamma_{\ell_j} \right)$ 
	and $\left(  I-\Lambda_{m_j,\ell_j}\right)$
	correspond to the complements of the correct decoding outcomes.

	    \begin{definition}%
		An $(n,R,\epsilon)$ partial decode-and-forward code for the quantum relay channel consists
		of two codebooks %$\mcal{C} = \left\{ 
		$\{x^n(m_j,\ell_j)\}_{m_j\in \mathcal{M}, \ell_j \in \mathcal{L}}$ and
		$\{x_1^n(\ell_j)\}_{\ell_j \in \mathcal{L}}$
		%$|\mathcal{M}_i|=2^{nR_i}$, 
		and decoding POVMs  
		$\left\{ \Gamma_{\ell_j}\right\}_{\ell_j\in \mathcal{L}}$ (for the relay)
		and 
		$\left\{  \Lambda_{m_j,\ell_j}\right\}_{m_j\in \mathcal{M}, \ell_j \in \mathcal{L}}$ (for the destination),
		such that the average probability of error
		is bounded from above as 
		$\overline{p}_{e} = \bar{p}^{R}_{e} + \bar{p}^{D}_{e} \leq \epsilon$.
		%%			by $\epsilon$:%
		%\begin{align}
		%	\overline{p}_{e}  
		%	&  \equiv 
		%		\frac{1}{|\mathcal{M}||\mathcal{L}|}\sum_{m_j,\ell_j}p_{e}\!\left(  m_j,\ell_j\right) %
		%	 \leq \epsilon.
		%\end{align}
	    \end{definition}  
	
	A rate $R$ is \textit{achievable} if there exists
	an $\left(  n,R-\delta,\epsilon\right)$ quantum relay channel code
	for all $\epsilon,\delta>0$ and sufficiently large $n$.

	The theorem below captures the main result of this chapter.
	
	\begin{theorem}[Partial decode-and-forward inner bound]		\label{thm:PDF-for-QRC}
		Let $\{ \rho_{x,x_1} \}$ be a cc-qq relay channel as in \eqref{eqn:relay-chan-def}.
		Then a rate $R$ is achievable, provided that
		%		provided by the partial decode 	forward technique obeys 
		the following inequality holds:
		\be
		R\leq\max_{
		%p\!\left(  u|x_{1}\right)  p\!\left(  x_{1}\right)  p\!\left(x|x_{1},u\right)
		p\!\left(  u,x,x_{1}\right)  }\min\left\{  
		{ 
		\begin{array}{l}
		I\!\left(  XX_{1};B\right)  _{\theta}, \\
		I\!\left( U;B_{1}|X_{1}\right)  _{\theta}+I\!\left(  X;B|X_{1}U\right)  _{\theta} \\
		\end{array}
		}
		\right\},
		\ee
		where the information quantities are with respect to the %following
		classical-quantum state
		\begin{equation}
		\theta^{UXX_{1}B_{1}B}\equiv
		\sum_{x,u,x_{1}}
		p\!\left(  u,x,x_{1}\right)
		%p\!\left(  x|x_{1},u\right)p\!\left(  u|x_{1}\right)  p\!\left(  x_{1}\right)  
		\left\vert u\right\rangle\!\!
		\left\langle u\right\vert ^{U}\otimes\left\vert x\right\rangle\!\!\left\langle
		x\right\vert ^{X}\otimes\left\vert x_{1}\right\rangle\!\!\left\langle
		x_{1}\right\vert ^{X_{1}}\otimes\rho_{x,x_{1}}^{B_{1}B}.\label{eq:code-state}%
		\end{equation}
	\end{theorem}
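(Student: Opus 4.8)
The plan is to use block-Markov superposition encoding at the source, a conditional HSW decoder at the relay, and a ``sliding-window'' decoder at the destination that performs a collective measurement on two consecutive output blocks. Fix a distribution $p(u,x,x_1)$ as in \eqref{eq:code-state} and split the message into a common part $\ell\in\mathcal{L}$, $|\mathcal{L}|=2^{nR_{\mathcal{L}}}$, and a personal part $m\in\mathcal{M}$, $|\mathcal{M}|=2^{nR_{\mathcal{M}}}$, with $R=R_{\mathcal{L}}+R_{\mathcal{M}}$. First I would generate the codebooks: for each $\ell_{j-1}$ draw $x_1^n(\ell_{j-1})$ i.i.d.\ from $p_{X_1}$; conditionally draw $u^n(\ell_j|\ell_{j-1})$ i.i.d.\ from $p_{U|X_1}$; and conditionally draw $x^n(m_j,\ell_j|\ell_{j-1})$ i.i.d.\ from $p_{X|UX_1}$. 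Transmission runs over $B$ blocks: in block $j$ the source sends $x^n(m_j,\ell_j|\ell_{j-1})$ while the relay, having decoded $\ell_{j-1}$ at the end of block $j-1$, sends $x_1^n(\ell_{j-1})$; this one-block delay is what makes the source's codebook coherent with the relay's and legitimizes the joint optimization over $p(u,x,x_1)$. The overall rate is $R(1-1/B)\to R$ as $B\to\infty$.

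Second, the relay decoding. At the end of block $j$ the relay holds the state $\rho_{\ell_j}^{B_{1(j)}^n}$ (with the known index $\ell_{j-1}$ left implicit) and applies the square-root measurement built from the projector sandwich $\Pi_{\bar\sigma_{|\ell_{j-1}}}\,\Pi_{\sigma_{\ell_j|\ell_{j-1}}}\,\Pi_{\bar\sigma_{|\ell_{j-1}}}$, where $\sigma_{\ell_j|\ell_{j-1}}$ is the $B_1$-marginal of the channel output averaged over $m_j$, and $\bar\sigma_{|\ell_{j-1}}$ is additionally averaged over $\ell_j$. This is simply a conditional HSW decoder, with $X_1^n=x_1^n(\ell_{j-1})$ playing the role of side information; an error analysis copying that of Theorem~\ref{thm:HSWtheorem}, together with the quantum packing lemma (Lemma~\ref{lem:q-packing}) used with that side information as in the successive-decoding step of Theorem~\ref{thm:cqmac-capacity}, shows that the expected relay error probability vanishes provided $R_{\mathcal{L}}\le I(U;B_1|X_1)_\theta-\delta$.

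Third, the sliding-window decoder at the destination, which is the heart of the argument. After block $j+1$ the destination holds the outputs of blocks $j$ and $j+1$ and has already recovered $\ell_{j-1}$ in the previous window; the key point is that $\ell_j$ is encoded twice, through $u^n(\ell_j|\ell_{j-1})$ in block $j$ and through the relay's codeword $x_1^n(\ell_j)$ in block $j+1$. I would decode $(m_j,\ell_j)$ jointly by a square-root measurement on $B_{(j)}^n\otimes B_{(j+1)}^n$ built from the product operator $P^{B_{(j)}^n}_{m_j,\ell_j|\ell_{j-1}}\otimes P^{B_{(j+1)}^n}_{\ell_j}$, where the first factor is the three-layer projector sandwich $\Pi^{(j)}_{\bar{\bar\rho}_{|\ell_{j-1}}}\,\Pi^{(j)}_{\bar\rho_{\ell_j|\ell_{j-1}}}\,\Pi^{(j)}_{\rho_{m_j,\ell_j|\ell_{j-1}}}\,\Pi^{(j)}_{\bar\rho_{\ell_j|\ell_{j-1}}}\,\Pi^{(j)}_{\bar{\bar\rho}_{|\ell_{j-1}}}$ (state, then average over $m_j$, then average over $m_j$ and $\ell_j$) and the second is the two-layer sandwich $\Pi^{(j+1)}_{\bar\tau}\,\Pi^{(j+1)}_{\tau_{\ell_j}}\,\Pi^{(j+1)}_{\bar\tau}$, with $\tau_{\ell_j}$ the block-$(j+1)$ output averaged over $m_{j+1},\ell_{j+1}$; this reuses the more-specific-projectors-inside ordering of Theorem~\ref{thm:sim-dec-two-sender}. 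The error analysis would: (i) first smooth the true state by compressing its block-$j$ component onto $\Pi^{(j)}_{\bar\rho_{\ell_j|\ell_{j-1}}}$, paying a trace-norm penalty controlled by Lemma~\ref{lem:trace-inequality} and Lemma~\ref{lem:gentle-operator}; (ii) apply the Hayashi-Nagaoka inequality (Lemma~\ref{lem:HN-inequality}) to peel off the ``correct detector fails'' term, bounded as in Theorem~\ref{thm:HSWtheorem} via the gentle operator lemma; and (iii) bound the ``wrong detector clicks'' sum over $(m_j',\ell_j')\ne(m_j,\ell_j)$, which breaks into the case $m_j'\ne m_j$, $\ell_j'=\ell_j$ (only the block-$j$ factor is active, the block-$(j+1)$ factor absorbing under $0\le P\le I$, giving $R_{\mathcal{M}}\le I(X;B|X_1U)_\theta$), the case $\ell_j'\ne\ell_j$ (both factors involve fresh codewords, so the overlap factorizes and the projector trick plus the rank bound on conditionally typical projectors give $R_{\mathcal{L}}\le I(U;B|X_1)_\theta+I(X_1;B)_\theta$), and the doubly-wrong case, which is dominated by these. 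Combining the destination constraints with the relay constraint and using $R=R_{\mathcal{L}}+R_{\mathcal{M}}$ together with the Markov identity $I(U;B|X_1)_\theta+I(X;B|X_1U)_\theta=I(X;B|X_1)_\theta$ (so that $I(U;B|X_1)_\theta+I(X_1;B)_\theta+I(X;B|X_1U)_\theta=I(XX_1;B)_\theta$) yields exactly $R\le\min\{I(XX_1;B)_\theta,\ I(U;B_1|X_1)_\theta+I(X;B|X_1U)_\theta\}$.

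Finally I would union-bound the error over the $B$ blocks, carefully tracking the chain $\ell_1,\ell_2,\dots$ so that a single mis-decoded common index does not silently corrupt all later blocks, and then derandomize to extract one good code; letting $U=X$ specializes the result to the decode-and-forward inner bound and letting $X_1$ be deterministic specializes it to the direct-coding bound \eqref{eq:direct-coding-bd}, as claimed after the theorem. The hard part will be step (iii) of the destination analysis: getting the collective two-block measurement to cooperate with the message split so that in \emph{every} error term the appropriate averaged state can be brought inside the trace, even though the various typical projectors do not commute. The block-$j$ factor demands precisely the state-smoothing and projector-trick manoeuvres that made the two-sender simultaneous decoder of Theorem~\ref{thm:sim-dec-two-sender} go through, but now they must be executed inside a tensor product over two blocks, and one must check that the block-$(j+1)$ factor --- morally a conditional HSW decoder for $\ell_j$ through $X_1$ --- can be treated essentially independently of the block-$j$ factor so that no cross-block interference spoils the bounds.
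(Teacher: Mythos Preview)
Your proposal follows the paper's proof essentially step for step: block-Markov superposition codebooks from $p(x_1)p(u|x_1)p(x|u,x_1)$, a conditional HSW decoder at the relay yielding $R_{\mathcal L}\le I(U;B_1|X_1)$, and a sliding-window square-root measurement at the destination built from the tensor product of exactly the two projector sandwiches you describe, analyzed via Hayashi--Nagaoka and the operator union bound.

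Two small corrections to step~(iii). First, the state-smoothing step is not needed here (and the paper does not use it): because the codebook is hierarchical, the middle layer $\Pi^{(j)}_{\bar\rho_{\ell_j|\ell_{j-1}}}$ is already present in the sandwich, so in the ``$m_j'\neq m_j$, $\ell_j'=\ell_j$'' case the projector trick on the innermost projector followed by averaging over $X^n|U^nX_1^n$ lands you directly on $\bar\rho_{\ell_j|\ell_{j-1}}$ sandwiched by its own typical projector. Second, your accounting of the ``$\ell_j'\neq\ell_j$'' case is slightly off. That case should absorb \emph{all} $m_j'$ (there is no separate ``doubly wrong'' case, since $x^n(m_j',\ell_j'|\ell_{j-1})$ is regenerated conditional on $u^n(\ell_j'|\ell_{j-1})$ and is therefore fresh whenever $\ell_j'\neq\ell_j$), so the sum has $|\mathcal L||\mathcal M|$ terms and yields a \emph{sum-rate} constraint. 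Moreover the block-$j$ factor contributes $2^{-nI(UX;B|X_1)}$, not $2^{-nI(U;B|X_1)}$: both $u^n$ and $x^n$ are fresh, the rank bound is on $\Pi^{(j)}_{\rho_{m_j',\ell_j'|\ell_{j-1}}}$ giving $2^{nH(B|UXX_1)}$, and the outer projector $\Pi^{(j)}_{\bar{\bar\rho}_{|\ell_{j-1}}}$ squeezes the averaged state to $2^{-nH(B|X_1)}$. Combined with the block-$(j{+}1)$ factor $2^{-nI(X_1;B)}$ this gives $R_{\mathcal L}+R_{\mathcal M}\le I(UX;B|X_1)+I(X_1;B)=I(XX_1;B)$ directly, without needing to add back $I(X;B|UX_1)$ via the chain rule. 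Your final bound is correct; only the intermediate bookkeeping needs this adjustment.
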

	
	Our code construction employs codebooks 
	$\{ x_1^n \}$,
	$\{ u^n \}$, and 	$\{ x^n \}$
	generated according to the distribution $p(x_1)p(u|x_1)p(x|u,x_1)$.
	%
	%	Let the rate $R=R_m + R_\ell$ and let $\ell$ and $m$ correspond to the public
	%	and private 
	%
	%	By using a quantum simultaneous decoding strategy (sliding-window decoding)
	%	two blocks we are able to avoid the use of binning.	
	%	To prove this theorem we use a partial decode-and-forward strategy with no binning
	%	with classical codebooks generated according to $p(x_1)p(u|x_1)p(x|u,x_1)$,
	We split the message for each block into two parts 
	$%W_j=
	(m,\ell) \in \mathcal{M}\times\mathcal{L}$
	such that we have $R = R_m + R_\ell$.
	%	We use codebooks generated according to $p(x_1)p(u|x_1)p(x|u,x_1)$
	%	such that during Block $j$ the ...
	The relay fully decodes the message $\ell$ and re-encodes it
	directly (without using binning) in the next block. 
	The destination exploits a ``sliding-window'' decoding strategy
	\cite{carleial1982multiple,xie2005achievable} 
	by performing a collective measurement on two consecutive blocks. In this approach,
	the message pair $(m_j,\ell_j)$ sent during block $j$ is decoded 
	from the outputs of blocks $j$ and $j+1$,
	using an  ``{\sc and}-measurement.''
	%	from the signal of the current block and the next.
	%	using the side information from the next block.
	%
	%	The term $I(XX_1;B)$ comes from the joint decoding 
	%
	%	from  by performing a measurement on two blocks of the output 	
	%	We will use the term \emph{sliding-window} to describe codes that
	%	simultaneously on at least two outputs .
	%	multiple blocks of the 
	%	Codes that decode collectively
	%	While the sliding-window decoding is commonplace in classical information
	%	theory, it has not been previously used never been done for quantum 

% I AM TAKING OUT SINCE IT IS NOT THAT IMPORTANT
%

	%\subsection{Quantum decode and forward}
	%
	% Could mention...
	% 
	%	\begin{theorem}[Quantum decode and forward inner bound]
	%		\be
	%			R \leq \max_{p(x,x_1)} \min\{ \ I(X,X_1;B)_\rho, \ I(X;B_1|X_1)_\rho \}
	%		\ee
	%		where the mutual information quantities are taken with respect to the 
	%		state
	%		\be
	%			\rho^{XX_1B_1B} 
	%			= 	\sum_{x,x_1} \underbrace{ p_{X|X_1}(x|x_1) p_{X_1}(x_1) }_{p_{X,X_1} }
	%				\ketbra{x}{x}^{X} \otimes \ketbra{x_1}{x_1}^{X_1}
	%				\otimes 
	%				\rho_{x,x_1}^{B_1B}.
	%		\ee
	%		%		and an optimization is performed over all $p(x,x_1)$.
	%	\end{theorem}
	%
	%	Backwards decoding...

	%	\comment{
	%	Maybe we need instead to define error events $\mcal{E}_1$ at the relay
	%	and $\mcal{E}_2$ at receiver? and then talk about the and of the two errors 
	%	(repeat the union-bound argument from Kim-El-Gamal notes p 17-22,17-23? }

	\section{Achievability proof}
										\label{sec:PDFproof}
	%	\begin{proof}

	We divide the channel uses into many blocks and build codes
	in a randomized, block-Markov manner within each block. 
	The channel is used for $b$ blocks, each indexed by $j \in \{1,\ldots,b\}$.
	Our error analysis shows that:
	\begin{itemize}

	\item 
	The relay can decode  the message $\ell_{j}$ during  block $j$.

	\item 
	The destination can simultaneously decode $(m_j,\ell_j)$ 
	from a collective measurement on the output systems 
	of blocks $j$  and $j+1$.
	
	\end{itemize}

	%	We refer to this measurement as the 
	The error analysis at the relay is similar
	to that of the Holevo-Schumacher-Westmoreland theorem \cite{H98,SW97}.
	The message $\ell_{j}$ can
	be decoded reliably % (with asymptotically vanishing error as $n$ becomes large)
	if the rate $R_{\ell}$ %of the common message set 
	obeys the following inequality:
	\be
	R_{\ell}\leq I\left(  U;B_{1}|X_{1}\right) _{\theta}.
	\label{eqn:bound-from-relay}
	\ee
	%	Because our code constriction is identical in each block,
	%	this also establishes the proof of \ding{172}.	
	%	Using the appropriate typical projectors and 
	%	the square-root measurement, it can be shown that
	%	the relay will be able to decode provided the rate $R_\ell$
	%	satisfies the inequality
	%	\[
	%		R_\ell < I(U;B_1|X_1).
	%	\]
	%

	The decoding at the destination is a variant of the quantum simultaneous  
	decoder from Theorem~\ref{thm:sim-dec-two-sender}.
	%	
	%	 but it is the first instance of a quantum decoder that performs a
	%	measurement simultaneously on two different blocks. 
	To decode the message $(m_j,\ell_j)$, the destination performs a
	``sliding-window'' decoder, implemented 
	as an ``{\sc and}-measurement'' on the outputs of blocks $j$ and $j+1$.
	%The use of collective imultaneous decoding is necessary because we do not use binning... }
	%
	This coding technique does not require binning at the relay or backwards decoding
	at the destination \cite{carleial1982multiple,xie2005achievable}.
	
	In this section, we give the details of the coding strategy
	and analyze the probability of error for the destination and the relay.
	
	%	\comment{ To be honest though -- should we say that the decoding could be
	%			done in three seprarate steps without using simultanous decoding
	%			OR the and measurement? \\
	%			%
	%			Decode $l_j$ form BLOCK(j+1), \\
	%			Decode $l_j$ from BLOCK(j) \\
	%			   \ \ reject if not match. \\
	%			Decode $m_j$ given $l_j$ from BLOCK(j) \\
	%			\ \\
	%			I am not sure, but if the above sequential decoding scheme doesn't work,
	%			then we can make a stronger statement like: simultaneous decoding is NECESSARY 
	%			}

	%	Both the relay and the destination are assumed to have previously
	%	 $\ell_{j-1}$.

\textbf{Codebook construction}. 
%\subsection{Codebook construction} aa
%\subsubsection{Codebook construction} as
%\paragraph{Codebook construction} aaa
Fix a code distribution $p(u,x,x_1)=$ $p(x_1)p(u|x_1)p(x|x_1,u)$ and
 independently generate a different codebook for each block $j$
 as follows:
\begin{itemize}
\item 
Randomly and independently generate $2^{nR_{\ell}}$
sequences $x_{1}^{n}\!\left(  \ell_{j-1}\right)$, $\ell_{j-1} \in\left[1: 2^{nR_{\ell}}\right]$,
according to  $\prod\limits_{i=1}^{n}p%_{X_{1}}
\!\left(  x_{1i}\right)$.

%For each $\ell_{j}\in\left[  2^{nR_{\ell}}\right]$, the relay generates a codeword 
%randomly and independently according to the product distribution 
%$p_{X_{1}^{n}}\left(x_{1}^{n}\right)  =
%We will denote the re codebook as $\mathcal{C}_{X_{1}}\equiv\{x_{1}^{n}(\ell_{j})\}$.
%

\item 
For each $x_{1}^{n}\!\left(  \ell_{j-1}\right)$, 
randomly and independently generate 
$2^{nR_{\ell}}$  sequences $u^{n}\!\left(  \ell_{j},\ell_{j-1}\right)$,
$\ell_{j} \in\left[1: 2^{nR_{\ell}}\right]$
according to %the distribution
$\prod\limits_{i=1}^{n}p%_{U}
\left(  u_{i}|x_{1i}\!\left(  \ell_{j-1}\right)
\right)  $. 

\item
For each $x_{1}^{n}\!\left(  \ell_{j-1}\right)$
and each corresponding $u^{n}\!\left(  \ell_{j},\ell_{j-1}\right)$,
randomly and independently generate
$2^{nR_{m}}$  sequences 
$x^{n}\!\left(  m_{j},\ell_{j},\ell_{j-1}\right) $, $m_{j} \in\left[1: 2^{nR_{m}}\right]$,
according to the distribution:
$\prod\limits_{i=1}^{n}
p%_{X|X_{1}U}
\big(  x_{i}  | x_{1i}\!\left(  \ell_{j-1}\right)  ,u_{i}\!\left(\ell_{j},\ell_{j-1}\right)  \big)$.

	% used to encode the
	%common message $\ell_{j}\in\left[  2^{nR_{\ell}}\right]$ conditional 
	%on the codeword $x_1^n(\ell_{j-1})$, that is, for each 
	%$\ell_{j}$ and $\ell_{j-1}$, the source chooses a codeword 
	%randomly and conditionally independently 
	%	
	%	Finally, for each $\ell_{j}$ and $\ell_{j-1}$ and for each personal message
	%	$m_{j}\in\left[  2^{nR_{m}}\right]  $, the source chooses a codeword
	%	$x^{n}\!\left(  m_{j},\ell_{j},\ell_{j-1}\right)  $ randomly and conditionally
	%	independently according to the following product distribution:
	%	$
	%	%\be
	%	%\begin{split}
	%	%p_{X^{n}|X_{1}^{n}U^{n}}\left(  x^{n}|x_{1}^{n}\left(  \ell_{j-1}\right)
	%	%,u^{n}\left(  \ell_{j},\ell_{j-1}\right)  \right)  =  \qquad \\
	%	\prod\limits_{i=1}^{n}%
	%	p%_{X|X_{1}U}
	%	\left(  x_{i} \ | x_{1,i}\!\left(  \ell_{j-1}\right)  ,u_{i}\!\left(\ell_{j},\ell_{j-1}\right)  \right)
	%	$.
	%\end{split}
	%\ee

\end{itemize}

\medskip
\textbf{Transmission}. The transmission of $(m_{j},\ell_{j})$ to the destination
happens during blocks $j$ and $j+1$ as illustrated in Figure~\ref{fig:partial-dec-fwd}.
At the beginning of block $j$, we assume
that the relay has correctly decoded the message $\ell_{j-1}$. During
block $j$, the source inputs the new messages $m_{j}$ and $\ell_{j}$, and the
relay forwards the old message $\ell_{j-1}$. That is, their inputs to the channel
for block $j$ are the codewords $x^{n}\!\left(  m_{j},\ell_{j},\ell_{j-1}\right)  $
and $x_{1}^{n}\!\!\left(  \ell_{j-1}\right)  $, leading to the following state at the
channel outputs:%
\[
\rhojBOTH \equiv
\rho_{x^{n}\left(  m_{j},\ell_{j},\ell_{j-1}\right)  ,x_{1}^{n}\left(
\ell_{j-1}\right)  }^{B_{1\left(  j\right)  }^{n}B_{\left(  j\right)  }^{n}}.
\]
% NO NEED

During block $j+\!1$, the source transmits $(m_{j+1},\ell_{j+1})$ given $\ell_j$, whereas the
relay sends $\ell_{j}$, leading to the state:
\[
\rhojjBOTH \equiv 
\rho_{x^{n}\left(  m_{j+1},\ell_{j+1},\ell_{j}\right)  ,x_{1}^{n}\left(
\ell_{j}\right)  }^{B_{1\left(  j+1\right)  }^{n}B_{\left(  j+1\right)  }^{n}}.
\]
Our shorthand notation is such that the states are identified
by the messages that they encode, and the codewords are
implicit.
%
%	The coding scheme for blocks $j$ and $j+1$ is as follows:%
%	\[
%	\begin{split}
%	x^{n}\!\left(  m_{j},\ell_{j},\ell_{j-1}\right)  ,x_{1}^{n}\!\left(  \ell_{j-1}\right)
%	,x^{n}\!\left(  m_{j+1},\ell_{j+1},\ell_{j}\right)  ,x_{1}^{n}\!(  \ell_{j}) \\
%	\rightarrow
%	\rho_{x^{n}\left(  m_{j},\ell_{j},\ell_{j-1}\right)  ,x_{1}^{n}\left(
%	\ell_{j-1}\right)  }^{B_{1\left(  j\right)  }^{n}B_{\left(  j\right)  }^{n}%
%	}\otimes\rho_{x^{n}\left(  m_{j+1},\ell_{j+1},\ell_{j}\right)  ,x_{1}^{n}\left(
%	\ell_{j}\right)  }^{B_{1\left(  j+1\right)  }^{n}B_{\left(  j+1\right)  }^{n}}.
%	\end{split}
%	\]

	%\medskip
	%\textbf{Decoding at the relay}. %POVM\ Construction at the relay}. 
	%
	% MOVED TO TEXT IN SECTION HEADER
	%
	%	We now need to determine a decoding POVM\ that
	%	the relay can perform during block $j$ to decode the common message $\ell_{j}$
	%	(we assume that he decodes the message $\ell_{j-1}$ from the previous block
	%	correctly). 
	%
	%	During block $j$, the relay wants to decode the 
	%	message $\ell_j$ encoded in $u^n(\ell_j,\ell_{j-1})$,
	%	given the knowledge of the message $\ell_{j-1}$ from the previous block.
	%

		\begin{figure}[htb]
		
		\begin{center}
		\subfigure[During block 2, the relay will transmit its codeword $x_{1(2)}^n(\texttt{o})$.
				We assume ``\texttt{o}'' was correctly decoded by the relay during the previous block.
				The source transmits a codeword $x_{(2)}^n(\texttt{n},\texttt{s},\texttt{o})$.
				]{
		\begin{tikzpicture}[node distance=4cm,>=stealth',bend angle=45,auto,scale=0.73, every node/.style={scale=0.73}]
		  \begin{scope}
			\node [] (RCTx) [ label=left:S,yshift=-20mm   ]                            {$x_{(2)}^n(\texttt{n},\texttt{s},\texttt{o})$};
			\node [] (RCRERx) [ label=above:$\ \ \ \ \textrm{R}$, above right of=RCTx,xshift=-9mm ]		{$Y^n_{1(2)}$:}
				edge  [pre]             		node[swap]	  	{$\texttt{s}$}	(RCTx);
			\node [] (RCRETx) [above right of=RCTx,xshift=+1mm ]		{$x_{1(2)}^n(\texttt{o})$};
			\node [] (RCRx)  [label=right:D,below right of=RCRERx,xshift=-5mm]	{$Y^n_{(2)}$}
				edge  [pre]             		node	  	{$\texttt{n},\texttt{s},\texttt{o}$}	(RCTx)
				edge  [pre]				node[swap]  	{$\texttt{o}$}	(RCRETx);
		  \end{scope}
		  \begin{pgfonlayer}{background}
		    \filldraw [line width=4mm,join=round,black!10]
		      ([xshift=+1mm,yshift=+2mm]RCRERx.south -| RCTx.north) rectangle ([xshift=+2.5mm]RCRx.south -| RCRx.west);
		  \end{pgfonlayer}  
		\end{tikzpicture}
		} \ \ \ \ \ \ 
		\subfigure[During block 3, the relay will transmit its codeword $x_{1(3)}^n(\texttt{s})$,
				which encodes the message $\ell_2=$``\texttt{s}'' transmitted by the source during block~2.
				The source transmits the codeword $x_{(3)}^n(\texttt{t},\texttt{i},\texttt{s})$.
                ]{
		\begin{tikzpicture}[node distance=4cm,>=stealth',bend angle=45,auto,scale=0.73, every node/.style={scale=0.73}]
		  \begin{scope}
			\node [] (RCTx) [ label=left:S,yshift=-20mm   ]                            {$x_{(3)}^n(\texttt{t},\texttt{i},\texttt{s})$};
			\node [] (RCRERx) [ label=above:$\ \ \ \ \textrm{R}$, above right of=RCTx,xshift=-9mm ]		{$Y^n_{1(3)}$:}
				edge  [pre]             		node[swap]	  	{$\texttt{i}$}	(RCTx);
			\node [] (RCRETx) [above right of=RCTx,xshift=+1mm ]		{$x_{1(3)}^n(\texttt{s})$};
			\node [] (RCRx)  [label=right:D,below right of=RCRERx,xshift=-5mm]	{$Y^n_{(3)}$}
				edge  [pre]             		node	  	{$\texttt{t},\texttt{i},\texttt{s}$}	(RCTx)
				edge  [pre]				node[swap]  	{$\texttt{s}$}	(RCRETx);
		  \end{scope}
		  \begin{pgfonlayer}{background}
		    \filldraw [line width=4mm,join=round,black!10]
		      ([xshift=+1mm,yshift=+2mm]RCRERx.south -| RCTx.north) rectangle ([xshift=+2.5mm]RCRx.south -| RCRx.west);
		  \end{pgfonlayer}  
		\end{tikzpicture}
		}
		\end{center}
		
		\caption{	Information flow in the relay network during the second and third transmission blocks
				of the string ``\texttt{co ns ti tu ti on}'' when using the partial decode-and-forward strategy.
				The messages for each block (two characters) are encoded by the Sender using 
				a codebook $x^n(m_j,\ell_j, \ell_{j-1})$ during block $j$.
				The messages pairs $(m_j,\ell_j)$ for the seven uses of the channel are:
				$\{ (\texttt{c},\texttt{o}), (\texttt{n},\texttt{s}), (\texttt{t},\texttt{i}), (\texttt{t},\texttt{u}), (\texttt{t},\texttt{i}),  (\texttt{o},\texttt{n}),
				(\emptyset,\emptyset) \}$
				The source codebook depends on the current message pair $(m_j,\ell_j)$ as well
				as the message $\ell_{j-1}$ of the previous block,
				%				 (in order for the source transmissions
				%				to be \emph{coherent} with the relay transmissions).
				%
				so the transmitted codewords during the seven blocks are:
				{ \footnotesize
				$\{ x_{(1)}^n(\texttt{c},\texttt{o},\emptyset), 
				x_{(2)}^n(\texttt{n},\texttt{s},\texttt{o}), 
				x_{(3)}^n(\texttt{t},\texttt{i},\texttt{s}), 
				x_{(4)}^n(\texttt{t},\texttt{u},\texttt{i}), 
				x_{(5)}^n(\texttt{t},\texttt{i},\texttt{u}),  
				x_{(6)}^n(\texttt{o},\texttt{n},\texttt{i}),
				x_{(7)}^n(\emptyset,\emptyset,\texttt{n}) \}$} and
				$\{ x_{1(1)}^n(\emptyset), 
				x_{1(2)}^n(\texttt{o}), 
				x_{1(3)}^n(\texttt{s}), 
				x_{1(4)}^n(\texttt{i}), 
				x_{1(5)}^n(\texttt{u}),  
				x_{1(6)}^n(\texttt{i}),
				x_{1(7)}^n(\texttt{n}) \}$.
		}
		\label{fig:partial-dec-fwd}

		\end{figure}

	\subsection{Decoding at the destination}
	
%	\textbf{Decoding at the destination}. 
	%POVM\ Construction at the receiver}
	%
	We now determine a decoding POVM\ that the destination can perform on the output
	systems spanning blocks $j$ and $j+1$. 
	%	The receiver is trying to recover both the
	%	common message $\ell_{j}$ and the personal message $m_{j}$. 
	%	Consider that the state
	%	from which the receiver will obtain the common message $\ell_{j}$ and the
	%	personal message $m_{j}$ spans both blocks $j$ and $j+1$:%
	%	\be
	%	\begin{split}
	%	x^{n}\!\left(  m_{j},\ell_{j},\ell_{j-1}\right)  ,x_{1}^{n}\!\left(  \ell_{j-1}\right)
	%	,x^{n}\!\left(  m_{j+1},\ell_{j+1},\ell_{j}\right)  ,x_{1}^{n}\!(  \ell_{j}) \\
	%	\rightarrow \ \ 
	%	\rho_{x^{n}\left(  m_{j},\ell_{j},\ell_{j-1}\right)  ,x_{1}^{n}\left(
	%	\ell_{j-1}\right)  }^{B_{\left(  j\right)  }^{n}}\otimes\rho_{x^{n}\left(
	%	m_{j+1},\ell_{j+1},\ell_{j}\right)  ,x_{1}^{n}(  \ell_{j})  }^{B_{\left(
	%	j+1\right)  }^{n}}.
	%	\end{split}
	%	\ee
	The destination is trying to recover messages $\ell_{j}$ and $m_{j}$ given knowledge
	of $\ell_{j-1}$.
	%	 so the codeword $x^{n}\left(  m_{j+1},\ell_{j+1},\ell_{j}\right)  $
	%	corresponding to the future messages $\ell_{j+1}$ and $m_{j+1}$ will act as noise.

	First let us consider forming decoding operators for block $j+1$.
	Consider the state obtained by tracing over the systems $X$, $U$, and $B_{1}$
	in (\ref{eq:code-state}):%
	\[
	\theta^{X_1B} = 
	\sum_{x_{1}}p\!\left(  x_{1}\right)  \left\vert x_{1}\right\rangle\!\! \left\langle
	x_{1}\right\vert ^{X_{1}}\otimes \tau_{x_{1}}^{B},
	\]
	where $
	%	\[
	\tau_{x_{1}}^{B}\equiv\sum_{u,x} p\!\left(x|x_{1},u\right) p\!\left(  u|x_{1}\right)    \rho_{x,x_{1}}^{B}$.
	%	\]
	Also, let $\bar{\tau}^{B}$ denote the following state:
	$
	\bar{\tau}^{B}\equiv\sum_{x_{1}}p\!\left(  x_{1}\right)  \tau_{x_{1}}^{B}.
	$
	Corresponding to the above states are conditionally typical projectors of the
	following form:%
	\begin{align*}
	\PIjjRXlj & \equiv
	\Pi_{\tau_{x_{1}^{n}(  \ell_{j})}, \delta}^{B_{\left(  j+1\right)  }^{n}}, \qquad %\\
	\PIjjRXavg   \equiv \Pi_{\bar{\tau}^{\otimes n} \!, \delta}^{B_{\left(  j+1\right)  }^{n}},
	\end{align*}
	which we combine to form the positive operator:
	\begin{align}
		\PRXjj
		&\equiv
		\PIjjRXavg \ \PIjjRXlj \ \PIjjRXavg, \label{eqn:PRX-def-II}
	\end{align}		
	that acts on the output systems $B^{n}_{(j+1)}$ of block $j+1$.

	\smallskip

	Let us now form decoding operators for block $j$. 
	Define the conditional typical
	projector for the state
	$\rhojRX$ 
	%	$\rho_{x^{n}\left(  m_{j},l_{j},l_{j-1}\right)
	%	,x_{1}^{n}\left(  l_{j-1}\right)  }^{B_{\left(  j\right)  }^{n}}$ 
	as%
	\be
	\PIjRXmjlj
	\equiv \ 
	\Pi_{\rho_{x^{n}(  m_{j},l_{j},l_{j-1})  ,x_{1}^{n}(l_{j-1}), \delta  }}^{B_{\left(  j\right)  }^{n}}.
	\ee	
	%	and note that its rank is approximately equal to $2^{nH\left(  B|X_{1}%
	%	UX\right)  _{\theta}}$. 
	The state obtained from
	\eqref{eq:code-state} by tracing over $X$ and $B_{1}$ is
	\[
	\theta^{UX_1B}=
	\sum_{u,x_{1}}p\!\left(  u|x_{1}\right)  p\!\left(  x_{1}\right)  \left\vert
	u\right\rangle\!\! \left\langle u\right\vert ^{U}\otimes\left\vert x_{1}
	\right\rangle \!\!\left\langle x_{1}\right\vert ^{X_{1}}\otimes\bar{\rho}_{u,x_{1}}^{B},
	\]
	where $\bar{\rho}_{u,x_{1}}^{B}\equiv\sum_{x}p\!\left(  x|x_{1},u\right)  \rho_{x,x_{1}}^{B}$.
	We can trace out over $U$ as well to obtain the doubly averaged state
	$\dbar{\rho}_{x_{1}}^{B}\equiv\sum_{u,x}p\!\left(  x|x_{1},u\right)p\!\left(  u|x_{1}\right)  \rho_{x,x_{1}}^{B}$.
%	\ee
	
	The following conditionally typical projectors  will be
	used in the decoding: 
	\begin{align*}
	\PIjRXlj 
	& \equiv
	 \Pi_{\bar{\rho}_{u^{n}(  l_{j},l_{j-1})  ,x_{1}^{n}(	l_{j-1})}\!,\delta  }^{B_{\left(  j\right)  }^{n}}, \ \ \ \ \ \ 
	\PIjRXavg
	 \equiv
	\  \Pi_{\dbar{\rho}_{x_{1}^{n}(  l_{j-1})  },\delta}^{B_{\left(  j\right)  }^{n}%
	}.
	\end{align*}
	%\[
	%\Pi_{\rho_{x_{1}^{n}\left(  l_{j-1}\right)  }}^{B_{\left(  j\right)  }^{n}%
	%}\ \Pi_{\tau_{u^{n}\left(  l_{j},l_{j-1}\right)  ,x_{1}^{n}\left(
	%l_{j-1}\right)  }}^{B_{\left(  j\right)  }^{n}}\ \Pi_{\rho_{x^{n}\left(
	%m_{j},l_{j},l_{j-1}\right)  ,x_{1}^{n}\left(  l_{j-1}\right)  }}^{B_{\left(
	%j\right)  }^{n}}\ \Pi_{\tau_{u^{n}\left(  l_{j},l_{j-1}\right)  ,x_{1}%
	%^{n}\left(  l_{j-1}\right)  }}^{B_{\left(  j\right)  }^{n}}\ \Pi_{\rho
	%_{x_{1}^{n}\left(  l_{j-1}\right)  }}^{B_{\left(  j\right)  }^{n}}\otimes
	%\Pi_{\bar{\rho}}^{B_{\left(  j+1\right)  }^{n}}\ \Pi_{\rho_{x_{1}^{n}\left(
	%l_{j}\right)  }}^{B_{\left(  j+1\right)  }^{n}}\ \Pi_{\bar{\rho}}^{B_{\left(
	%j+1\right)  }^{n}}.
	%\]
	We can then form a positive operator 
	``sandwich'':
	\begin{align}
	\!\!
	\PRXj 
	\!\!\!   &\equiv 
	\PIjRXavg \PIjRXlj \PIjRXmjlj \!\!\PIjRXlj \!\!\!\PIjRXavg\!\!\!. \label{eqn:PRX-def-I} 
	\end{align}
	%\begin{align*}		
	%\PRXj
	%%\Pi_{l_{j},m_{j}}^{\prime B_{\left(  j\right)  }^{n}} 
	%	&  \equiv\Pi_{\rho_{x_{1}^{n}\left(  l_{j-1}\right)  }}^{B_{\left(  j\right)  }^{n}}%
	%\ \Pi_{\tau_{u^{n}\left(  l_{j},l_{j-1}\right)  ,x_{1}^{n}\left(
	%l_{j-1}\right)  }}^{B_{\left(  j\right)  }^{n}}\ \Pi_{\rho_{x^{n}\left(
	%m_{j},l_{j},l_{j-1}\right)  ,x_{1}^{n}\left(  l_{j-1}\right)  }}^{B_{\left(
	%j\right)  }^{n}}\ \Pi_{\tau_{u^{n}\left(  l_{j},l_{j-1}\right)  ,x_{1}%
	%^{n}\left(  l_{j-1}\right)  }}^{B_{\left(  j\right)  }^{n}}\ \Pi_{\rho
	%_{x_{1}^{n}\left(  l_{j-1}\right)  }}^{B_{\left(  j\right)  }^{n}},\\
	%%\Pi_{l_{j}}^{\prime B_{\left(  j+1\right)  }^{n}} 
	%\PRXjj
	%&  \equiv
	%%\PIjjRXavg
	%%\PIjjRXlj
	%%\PIjjRXavg
	%%\quad
	%\Pi_{\bar{\rho}}^{B_{\left(  j+1\right)  }^{n}}\ \Pi_{\rho_{x_{1}^{n}\left(  l_{j}\right)  }%
	%}^{B_{\left(  j+1\right)  }^{n}}\ \Pi_{\bar{\rho}}^{B_{\left(  j+1\right)
	%}^{n}}.
	%\end{align*}
	Finally, we combine the positive operators from 
	\eqref{eqn:PRX-def-II} and \eqref{eqn:PRX-def-I} to form the
	``sliding-window'' positive operator:
	\be
		\PRX=\PRXj \otimes \PRXjj,
		\label{eqn:theANDP}
	\ee			
	from which we can build the destination's %square-root 
	measurement $\LAMRX$ using the square-root normalization. % in  \eqref{eq:square-root-POVM-generic}. 
	This measurement is what we call the ``{\sc and}-measurement.''
	%	$\Lambda_{l_{j,}m_{j}}^{B_{\left(  j\right)  }^{n}B_{\left(  j+1\right)  }^{n}}$.
	
	%	EXPLANATIONS 
	%
	%The error analysis now can proceed in the usual way, by applying the
	%Hayashi-Nagaoka operator inequality, averaging over the choice of random
	%choice in each block, and exploiting the projector trick inequality.
	%Specifically, 

	%
	%	\comment{Where do we use this? I guess in the non-typical term...}
	%	

\medskip
\textbf{Error analysis at the destination}. 
	In this section, we prove that the destination can correctly 
	decode the message pair $(m_j,\ell_j)$ by employing the measurement 
	$\{\LAMRX\}$ on the output state $\rhoRX$ spanning blocks $j$ and $j+1$.
	 % given the knowledge of $\ell_{j-1}$.
	%		 \equiv \rhoFULLatRXj \otimes \rhoFULLatRXjj.
	%	$
	%	The decoding POVM is constructed from
	%	the operators $\PRX=\PRXj \otimes \PRXjj$,
	%	defined in \eqref{eqn:PRX-def-II} and \eqref{eqn:PRX-def-I}.
	%
	%	\be
	%		\PRXj= \PIjRXavg \PIjRXlj  \PIjRXmjlj  \PIjRXlj \PIjRXavg.
	%	\ee
	%	and 
	%	\be
	%		\PRXjj= \PIjjRXavg \ \PIjjRXlj \ \PIjjRXavg.
	%	\ee
	%
	%
	%Also, we should realize that the following types of things will happen in the
	%proof:
	%\[
	%\begin{split}
	%\text{Tr}\left\{  \left(  \Pi^{B_{\left(  j\right)  }^{n}}\otimes
	%\Pi^{B_{\left(  j+1\right)  }^{n}}\right)  \left(  \rho^{B_{\left(  j\right)
	%}^{n}}\otimes\rho^{B_{\left(  j+1\right)  }^{n}}\right)  \right\} \\
	%=\text{Tr}\left\{  \Pi^{B_{\left(  j\right)  }^{n}}\rho^{B_{\left(  j\right)
	%}^{n}}\right\}  \text{Tr}\left\{  \Pi^{B_{\left(  j+1\right)  }^{n}}%
	%\rho^{B_{\left(  j+1\right)  }^{n}}\right\}  .
	%\end{split}
	%\]
	%
	%
	%
	%	The resulting measurement is \emph{doubly} simultaneous.
	%	The message $m_j$ and $\ell_j$ are 
	%	since both
	%	The measurement in the system $B^n_{(j)}$ part both in the inputs
	%	and out
	%	   that in thr coconstruction  an {\sc AND} measurement 
	%
	%	
	The %expectation of the 
	average probability of error for the destination is given
	in \eqref{eqn:avgpe-Rx}.
	%:%
	%\[
	%%\mathbb{E}_{X^{n},X_{1}^{n},U^{n}}
	%%\ExpALL
	%\bar{p}^{D}
	%%(\ell_j,m_j) 
	%\!\! \equiv \!
	%%\left\{  
	%\frac{1}{|\mcal{L}||\mcal{M}|}
	%\!\!
	%\sum_{\ell_{j},m_{j}}
	%\!
	%%\\Lambda_{l_{j,}m_{j}}^{B_{\left(  j\right)  }^{n}B_{\left(  j+1\right)}^{n}}
	%\!\text{Tr}\!\left[ \! \left(I   - \!\LAMRX \right)  
	%%\left(  
	%\rhoRX
	%%\rhojRX \!\!\otimes \!\rhojjRX
	%%\rho_{x^{n}\left(  m_{j},\ell_{j},l_{j-1}\right)
	%%,x_{1}^{n}\left(  l_{j-1}\right)  }^{B_{\left(  j\right)  }^{n}}\otimes
	%%\rho_{x^{n}\left(  m_{j+1},l_{j+1},\ell_{j}\right)  ,x_{1}^{n}\left(
	%%\ell_{j}\right)  }^{B_{\left(  j+1\right)  }^{n}}
	%%\right)
	%  \right]\!.
	%%    \right\}  .
	%\]
	%
	%	where $\left(I   - \!\LAMRX \right)$ is the operator that corresponds
	%	to a wrong outcome and $I \equiv I^{B_{\left(  j\right)  }^{n}B_{\left(  j+1\right)
	%	}^{n}}$.
	%	%
	%By the symmetry of our code construction, it is sufficient to
	 For now, we consider
	the error analysis for a single message pair $(m_j,\ell_j)$:
	%
	%We begin by applying the Hayashi-Nagaoka operator inequality 
	%on the average probability of error at the destination.
	\begin{align}
	 \bar{p}_e^{D}	
	 & \equiv
	\text{Tr}\!\left[ 
		\left(I   - \!\LAMRX \right)  
		\rhoRX
		%	\rhojRX \!\!\otimes \!\rhojjRX
	  \right]. 	\nonumber \\
	& \leq  2\;
	%\ExpALL
	%\left\{  
	%\frac{1}{LM}\sum_{\ell_{j},m_{j}}
	\text{Tr}\left\{  \left(  I %^{B_{\left(  j\right)  }^{n}B_{\left(j+1\right)  }^{n}}
	-\PRX
	%-\Pi_{\ell_{j},m_{j}}^{\prime B_{\left(  j\right)  }^{n}}
	%\otimes\Pi_{\ell_{j}}^{\prime B_{\left(  j+1\right)  }^{n}}
	\right)  \ \ 
	%\left(
	%\rho_{x^{n}\left(  m_{j},\ell_{j},l_{j-1}\right)  ,x_{1}^{n}\left(
	%l_{j-1}\right)  }^{B_{\left(  j\right)  }^{n}}\otimes\rho_{x^{n}\left(
	%m_{j+1},l_{j+1},\ell_{j}\right)  ,x_{1}^{n}(  \ell_{j})  }^{B_{\left(
	%j+1\right)  }^{n}}
	%\rhojRX \otimes \rhojjRX
	\rhoRX
	%\right)  
	\right\}   \tag{\textrm{I}}
	%\right\}  
	\\
	&  \qquad +4 \!\!\!\!\!\!\!\!\!\! 
	%\ExpALL
	%\left\{  \frac{1}{LM}\sum_{\ell_{j},m_{j}}
	\sum_{\left(  \ell_{j}^{\prime},m_{j}^{\prime}\right)  \neq\left(\ell_{j},m_{j}\right)  }
	\!\!\!\!\!\!\!
	\text{Tr}
	\left\{  
	%\left(  
	%\Pi_{\ell_{j}^{\prime}%
	%,m_{j}^{\prime}}^{\prime B_{\left(  j\right)  }^{n}}\otimes\Pi_{\ell_{j}^{\prime
	%}}^{\prime B_{\left(  j+1\right)  }^{n}}\right)  
	\PRXljprmjpr \ \
	%\left(
	%  \rho_{x^{n}\left(
	%m_{j},l_{j},l_{j-1}\right)  ,x_{1}^{n}\left(  l_{j-1}\right)  }^{B_{\left(
	%j\right)  }^{n}}\otimes\rho_{x^{n}\left(  m_{j+1},l_{j+1},l_{j}\right)
	%,x_{1}^{n}\left(  l_{j}\right)  }^{B_{\left(  j+1\right)  }^{n}}
	%\rhojRX \otimes \rhojjRX
	\rhoRX
	%\right)
	\right\}\!,  \tag{\textrm{II}}
	%\right\}  .
	\end{align}
	where we used the Hayashi-Nagaoka inequality (Lemma~\ref{lem:HN-inequality})
	to decompose the error operator $(I-\LAMRX)$ into two
	components:
	(\textrm{I}) a term related to the probability that the correct detector
	 does not ``click'': $(  I-\PRX )$,
	and (\textrm{II}) another term related to the probability that a wrong detector ``clicks'': 
	$\sum_{\left(  \ell_{j}^{\prime},m_{j}^{\prime}\right)} \PRXljprmjpr$,
	$\left(  \ell_{j}^{\prime},m_{j}^{\prime}\right) \neq \left(  \ell_{j},m_{j}\right)$.
	These two errors are analogous to the classical error events 
	in which an output sequence $y^n$ is either not jointly typical with the transmitted codeword
	 or happens to be jointly typical with another codeword.

	We will bound the expectation %under code randomness 
	of the average probability of error $\ExpALL\!\!\left\{ \bar{p}_e^{D} \right\}$
	by bounding the expectation of the average probability 
	for the two error terms: $\ExpALL\!\!\left\{  (\textrm{I}) \right\}$
	and $\ExpALL\!\!\left\{  (\textrm{II}) \right\}$.
	
%	
%	%	
%	\begin{lemma}[Gentle Operator Lemma for Ensembles \cite{itit1999winter}] 
%	Let $\left\{  p\!\left(  x\right)  ,\rho_{x}\right\}$ be an ensemble
%	and let $\bar{\rho} \equiv\sum_{x}p\!\left(  x\right)  \rho_{x}$. %expected density operator.
%	If an operator $\Lambda$, where $0 \leq \Lambda \leq I$, 
%	has high overlap with the average state, % $\bar{\rho}$:
%	$\mathrm{Tr}\left[  \: \Lambda \: \bar{\rho} \: \right]  \geq1-\epsilon$,
%	then the subnormalized state $\sqrt{\Lambda}\rho_{x}\sqrt{\Lambda}$ is close in 
%	%		in expected 
%	trace distance to the original state $\rho_{x}$ on average: %
%	$ 
%	%		\[
%	\mathbb{E}_{X}\left\{  \left\Vert \sqrt{\Lambda}\rho_{X}\sqrt{\Lambda}%
%	-\rho_{X}\right\Vert _{1}\right\}  \leq2\sqrt{\epsilon}.
%	$
%	%		\]
%	\label{lem:gentle-operator}
%	\end{lemma}	
%		

	The first term (\textrm{I}) is bounded by using the properties of typical projectors and 
	the operator union bound from Lemma~\ref{lem:operator-union-bound}, 
	% (page \pageref{lem:operator-union-bound}),
	which allows us to analyze the errors for the two blocks separately.
	Because $0\leq \PRXj \leq I$ and $0\leq \PRXjj \leq I$, we have:
	%	Applying Lemma~\ref{lem:operator-union-bound} for the operators $\PRXj$ and $\PRXjj$, we obtain:
	\be
		\left(  I-\PRXj \!\!\!\otimes \!\PRXjj \right) 
		 \leq
			\left(  I-\PRXj  \right) 
			+
			\left(  I- \PRXjj \right).
		\label{PDF-union-bound}
	\ee

	We use the definition of $\PRX$ from \eqref{eqn:theANDP} and 
	the inequality \eqref{PDF-union-bound} to obtain:
	\begin{align*}
	&\!\!\!\!\!\!\!\text{Tr}\!\left[  
		\left(  I-\PRX \right)  \ 
		\rhoRX  
	\right]  
	\\[2mm]
	& =
		\text{Tr}\!\left[  
			\left(  I-\PRXj \!\!\!\otimes \!\PRXjj \right) 
			\rhoRX  
		\right]  \\[3mm]
	& \leq
		\underbrace{
		\text{Tr}\!\left[  
			\left(  I-\PRXj  \right) 
			\rhojRX 
		\right] 
		}_{\alpha}
		\underbrace{
		\text{Tr}\!\left[  
			\rhojjRX
		\right]
		}_{=1}	+ 
		 \\
		& \qquad \qquad \qquad \qquad + 
		\underbrace{
		\text{Tr}\!\left[  
			\rhojRX 
		\right] 
		}_{=1}
		\underbrace{			
		\text{Tr}\!\left[  
			\left(  I- \PRXjj \right) 
			\rhojjRX  
		\right]
		}_\beta,
	%& \leq
	%	\text{Tr}\left[  
	%		\left(  I-\PRXj  \right) 
	%		\rhojRX 
	%	\right]  \\
	%	& \quad \qquad \qquad + \text{Tr}\left[  
	%		\left(  I- \PIjjRXavg  \PIjjRXlj \PIjjRXavg \right) 
	%		\rhojjRX  
	%	\right]  \\
	\end{align*}%
	where we defined the error terms $\alpha$ and $\beta$ associated
	with block $j$ and block $(j+1)$.
	%	 Lemma~\ref{lem:operator-union-bound}.

We proceed to bound the term $\beta$ as follows:
\begin{align*}
\beta 
&=	
	\text{Tr}\!\left[  
		\left(  I- \PRXjj \right) 
		\rhojjRX  
	\right]  \\
&=
	\text{Tr}\!\left[  
		\left(  I- \PIjjRXavg \PIjjRXlj  \PIjjRXavg \right) 
		\rhojjRX  
	\right]  \\
&=
	1 - 
	\text{Tr}\!\left[  
		\PIjjRXavg \PIjjRXlj  \PIjjRXavg \ 
		\rhojjRX  
	\right]  \\
&\leq
	1 -
	\text{Tr}\!\left[  
	  \PIjjRXlj \ 
		\rhojjRX  
	\right]  \\
&\qquad \qquad \qquad 
  +  \left\| \PIjjRXavg \rhojjRX \PIjjRXavg   - \rhojjRX \right\|_1,
\end{align*}
where the inequality follows from Lemma~\ref{eqn:tr-trick}. 
We will analyze the terms labeled $\alpha$ and $\beta$ separately.

By taking the expectation over the code randomness,
we obtain the upper bound:
\begin{align*}
\ExpALL\!\!\!\left\{ \beta \right\}
&=
	1 - 
	\ExpXone
	\text{Tr}\!\left[  
	  \PIjjRXlj \! 
		\ExpUXgXone \!\!\!\left\{ \rhojjRX   \right\}
	\right]  \\
& \qquad \quad 
  +  \ExpALL \left\| \PIjjRXavg \rhojjRX \PIjjRXavg   - \rhojjRX \right\|_1 \\
&\leq 
	1 - (1-\epsilon) + 2\sqrt{\epsilon}.
\end{align*}
The %first term in the 
inequality follows from 
$\ExpUXgXone \!\left\{ \rhojjRX   \right\} = \tau_{\ell_j}$,
the % following 
properties of typical projectors:
$\ExpXone\text{Tr}[  \PIjjRXlj \ \tau_{\ell_j} ] \geq 1- \epsilon$,
$\text{Tr}[  \PIjjRXavg \ \bar{\tau} ] \geq 1- \epsilon$
and Lemma~\ref{lem:gentle-operator}.

The error term $\alpha$ is bounded in a similar fashion.

\medskip
We can split the sum in the second error term (\textrm{II})
as follows:
% $\sum_{\left(  \ell_{j}^{\prime}%
%,m_{j}^{\prime}\right)  \neq\left(  \ell_{j},m_{j}\right)  }\left(  \cdot\right)
%=\sum_{m_{j}^{\prime}\neq m_{j}}\left(  \cdot\right)  +\sum_{\ell_{j}^{\prime
%}\neq \ell_{j},\ m_{j}^{\prime}}\left(  \cdot\right)  $:
\begin{align*}
& \!\!\!\!\!\!\!
\sum_{\left(  \ell_{j}^{\prime},m_{j}^{\prime}\right)  \neq\left(\ell_{j},m_{j}\right)  }
\!\!\!\!\!
\text{Tr}\!
\left[
	\PRXljprmjpr \ \ \rhoRX
\right] \\
& \ \ \ \ = 
\underbrace{
\sum_{m_{j}^{\prime}\neq m_{j}}
\!\!\!\!
\text{Tr}\!
\left[
	\PRXljmjpr \ \ \rhoRX
\right] 
}_{(\textrm{A})}
\\
& \qquad \qquad + \!
\underbrace{
\sum_{l_{j}^{\prime}\neq l_{j},\ m_{j}^{\prime}}
\!\!\!\!\!
\text{Tr}\!
\left[ 
	\PRXljprmjpr \ \ \rhoRX
\right]
}_{(\textrm{B})}.
\end{align*}

% (we also remove theaverage over $l_{j}$ and $m_{j}$ to simplify the analysis):%
	%	\begin{align*}
	%	&  
	%	\ExpALL
	%	%\mathbb{E}_{X^{n},X_{1}^{n},U^{n}}
	%	\left\{  \sum_{m_{j}^{\prime}\neq m_{j}}
	%	\text{Tr}\left\{  \left(  \Pi_{l_{j},m_{j}^{\prime}}^{\prime B_{\left(
	%	j\right)  }^{n}}\otimes\Pi_{l_{j}}^{\prime B_{\left(  j+1\right)  }^{n}%
	%	}\right)  
	%	\left(  
	%	\rhojRX \otimes \rhojjRX
	%	%\rho_{x^{n}\left(  m_{j},l_{j},l_{j-1}\right)  ,x_{1}%
	%	%^{n}\left(  l_{j-1}\right)  }^{B_{\left(  j\right)  }^{n}}\otimes\rho
	%	%_{x^{n}\left(  m_{j+1},l_{j+1},l_{j}\right)  ,x_{1}^{n}\left(  l_{j}\right)
	%	%}^{B_{\left(  j+1\right)  }^{n}}
	%	\right)  \right\}  \right\}  +\\
	%	&  
	%	\qquad \ExpALL
	%	%\mathbb{E}_{X^{n},X_{1}^{n},U^{n}}
	%	\left\{  
	%	\sum_{l_{j}^{\prime}\neq l_{j},\ m_{j}^{\prime}}
	%	\text{Tr}\left\{  \left(  \Pi_{l_{j}^{\prime},m_{j}^{\prime}}^{\prime B_{\left(  j\right)  }^{n}}\otimes\Pi_{l_{j}^{\prime
	%	}}^{\prime B_{\left(  j+1\right)  }^{n}}\right)  
	%	\left(  
	%	\rhojRX \otimes \rhojjRX
	%	%\rho_{x^{n}\left(
	%	%m_{j},l_{j},l_{j-1}\right)  ,x_{1}^{n}\left(  l_{j-1}\right)  }^{B_{\left(
	%	%j\right)  }^{n}}\otimes\rho_{x^{n}\left(  m_{j+1},l_{j+1},l_{j}\right)
	%	%,x_{1}^{n}\left(  l_{j}\right)  }^{B_{\left(  j+1\right)  }^{n}}
	%	\right)
	%	\right\}  \right\}  .
	%	\end{align*}

We now analyze the two terms $(\textrm{A})$ and $(\textrm{B})$ separately.

\paragraph{Matching $\ell_j$, wrong $m_j$.} 
    Assuming $\ell_j$ is decoded correctly, we show that  
    the message $m_j$ will be decoded correctly
    provided $R_m < I(X;B|UX_1)=H(B|UX_1)-H(B|UXX_1)-\delta$.
	We will use the following properties of typical projectors:
	\begin{align}
	\PIjRXmjprlj \!\!\leq  & 2^{n[H(B|UXX_1)+\delta ]}  \rhojRXmjpr, 
	%		&2^{-n[H(B|UXX_1)+\delta ]}  \PIjRXmjlj   \nonumber \\
	% 			& \qquad \quad  \leq  \  \PIjRXmjlj  \rhojRX  \PIjRXmjlj, 
			\label{eqn:for-proj-trick}\\
		  \!\!\!\!\PIjRXlj \! \rhobarRX \PIjRXlj   \!\! \leq & 2^{-n[H(B|UX_1)-\delta] }  \PIjRXlj. \label{eqn:for-proj-trick2}
	\end{align}
Consider the first term:%
{\allowdisplaybreaks
\begin{align*}
(\textrm{A}) &=   
\sum_{m_{j}^{\prime}\neq m_{j}}
\!\!\!
\text{Tr}\!
\left[
	\PRXljmjpr \ \ \rhoRX
\right] \\
& = \!\! 
	%\ExpALL
	%\mathbb{E}_{X^{n},X_{1}^{n},U^{n}}
	%\left\{  
	\sum_{m_{j}^{\prime}\neq m_{j}}
	\!\! \text{Tr}\!
	\left[  
	\left(  
	\PRXjmjpr \! \otimes \! \PRXjj
	%\Pi_{l_{j},m_{j}^{\prime}}^{\prime B_{\left(j\right)  }^{n}}\otimes\Pi_{l_{j}}^{\prime B_{\left(  j+1\right)  }^{n}}
	\right)  
	%\left(  
	\rhojRX \!\otimes \!\rhojjRX
	%\rho_{x^{n}\left(  m_{j},l_{j},l_{j-1}\right)  ,x_{1}%
	%^{n}\left(  l_{j-1}\right)  }^{B_{\left(  j\right)  }^{n}}\otimes\rho
	%_{x^{n}\left(  m_{j+1},l_{j+1},l_{j}\right)  ,x_{1}^{n}\left(  l_{j}\right)
	%}^{B_{\left(  j+1\right)  }^{n}}
	%\right)  
	\right]  
	%\right\}  
	\\
&  \leq \!\!
	%\!\!\!\!\ExpALL \!\!
	%\mathbb{E}_{X^{n},X_{1}^{n},U^{n}}
	%\left\{  
	\sum_{m_{j}^{\prime}\neq m_{j}}
	\!\! \text{Tr} \! \left[  %\left(  
	\PRXjmjpr %\Pi_{l_{j},m_{j}^{\prime}}^{\prime B_{\left(j\right)  }^{n}}
	\! \otimes \! I^{B_{\left(  j+1\right)  }^{n}} %\right)  
	\ \ 
	%	\left(
	\rhojRX \otimes \rhojjRX
	%\rho_{x^{n}\left(  m_{j},l_{j},l_{j-1}\right)  ,x_{1}^{n}\left(
	%l_{j-1}\right)  }^{B_{\left(  j\right)  }^{n}}\otimes\rho_{x^{n}\left(
	%m_{j+1},l_{j+1},l_{j}\right)  ,x_{1}^{n}\left(  l_{j}\right)  }^{B_{\left(
	%j+1\right)  }^{n}}
	%	\right)  
	\right]  %\right\}  
	\\
&  = \!\!
	%\ExpALL
	%\mathbb{E}_{X^{n},X_{1}^{n},U^{n}}
	%	\left\{  
	\sum_{m_{j}^{\prime}\neq m_{j}	}
	\!\! \text{Tr}\!
	\left[  
	\PRXjmjpr
	%	\Pi_{l_{j},m_{j}^{\prime}}^{\prime B_{\left(  j\right) }^{n}}
	\ \
	\rhojRX
	%\rho_{x^{n}\left(  m_{j},l_{j},l_{j-1}\right)  ,x_{1}^{n}\left(
	%l_{j-1}\right)  }^{B_{\left(  j\right)  }^{n}} 
	\right]
	%	\right\}  
	\\[-3mm]
&  =
	%\ExpALL
	%\mathbb{E}_{X^{n},X_{1}^{n},U^{n}}
	%	\left\{  
	\!\!
	\sum_{m_{j}^{\prime}\neq m_{j}	}
	\!\!
	\text{Tr}\!
	\Bigg[
		%	\begin{array}
		%	[c]{l}%
	\PIjRXavg
	\!
	\underbrace{
	\PIjRXlj
	\!\!\!
	\overbrace{\PIjRXmjprlj \!\!\!\!\! }^{\textrm{\dingone}} %\times \\
	\ 
	\PIjRXlj
	\!\!
	}_{\textrm{\dingtwo}}
	\PIjRXavg
	%	\Pi_{\rho_{x_{1}^{n}\left(  l_{j-1}\right)  }}^{B_{\left(  j\right)  }^{n}%
	%	}\ \Pi_{\tau_{u^{n}\left(  l_{j},l_{j-1}\right)  ,x_{1}^{n}\left(
	%	l_{j-1}\right)  }}^{B_{\left(  j\right)  }^{n}}\ \Pi_{\rho_{x^{n}\left(
	%	m_{j}^{\prime},l_{j},l_{j-1}\right)  ,x_{1}^{n}\left(  l_{j-1}\right)  }%
	%	}^{B_{\left(  j\right)  }^{n}}\ \times\\
	%	\Pi_{\tau_{u^{n}\left(  l_{j},l_{j-1}\right)  ,x_{1}^{n}\left(  l_{j-1}%
	%	\right)  }}^{B_{\left(  j\right)  }^{n}}\ \Pi_{\rho_{x_{1}^{n}\left(
	%	l_{j-1}\right)  }}^{B_{\left(  j\right)  }^{n}}
	\rhojRX \!
	%	\rho_{x^{n}\left(	m_{j},l_{j},l_{j-1}\right)  ,x_{1}^{n}\left(  l_{j-1}\right)  }^{B_{\left(	j\right)  }^{n}}%
		%	\end{array}
	\Bigg]  %\right\}  
\end{align*}}%
We now upper bound expression \dingone using \eqref{eqn:for-proj-trick}
%expressed
%as the projector trick $$
and take the conditional expectation with respect to $X^n$:
\[
	\ExpXgUXone\!\!\left\{ \rhojRXmjpr \right\} = \rhobarRX,
\]
which is independent of the state $\rhojRX$ since $m^\prime_j \neq m_j$.
The resulting expression in \dingtwo has the state 
$\rhobarRX$ sandwiched between its
typical projector on both sides, and so we can use \eqref{eqn:for-proj-trick2}.
After these steps, we obtain the upper bound:
\begin{align}
 \!\!\!\! \ExpXgUXone\!\!\left\{  (\textrm{A}) \right\}  
&\leq \ 
2^{n\left[   H\left(  B|XUX_{1}\right)  +\delta\right]  }  \ 
2^{-n\left[  H\left(B|UX_{1}\right)  -\delta\right]  } \times \nonumber  \\[-3mm] 
& \quad \quad
\times \ExpXgUXone
	\!\!\sum_{m_{j}^{\prime}\neq m_{j}	}\!\!\!
	\text{Tr}\!
	\left[
		\PIjRXavg
	%		\underline{ 
		 \PIjRXlj
	%		 }
		\PIjRXavg
	\
		\rhojRX 	
	\right]   \nonumber  \\[4mm]
%2^{n\left[  H\left(  B|XUX_{1}\right)  +\delta\right]  }\times\\
&  \leq \ 
	2^{n\left[  H\left(  B|XUX_{1}\right)  +\delta\right]  }
	2^{-n\left[  H\left(B|UX_{1}\right)  -\delta\right]  } 
	\sum_{m_{j}^{\prime}\neq m_{j}	} \text{Tr}\!\left[ \ \rhojRX \ \right]
	 \nonumber  \\[2mm]
& \leq \ 
	|\mcal{M}| \  %\times
	2^{-n\left[  I\left(  X;B|UX_{1}\right)-2\delta\right]  }
	.
	\label{eqn:bound-on-Rm}
\end{align}
The second inequality follows because each operator
inside the trace is positive semidefinite and less than or equal to the identity.

\paragraph{Wrong $\ell_j$ (and thus wrong $m_j$).}
	We obtain the requirement  
	$R \equiv R_\ell + R_m \leq I(XX_1;B)=I(X_1;B)+I(UX;B|X_1)$
	from the ``{\sc and}-measurement'' and the following inequalities:
	\begin{align}
		\ExpALL \Tr[ \PIjjRXlj ] & \leq 2^{n[H(B|X_1)+\delta ]}, \\[1mm]
		 \PIjjRXavg \  \bar{\tau}  \ \ \PIjjRXavg & \leq 2^{-n[H(B)-\delta] } \PIjjRXavg, \\[3mm]
		\ExpALL \Tr[ \PIjRXmjlj ] & \leq 2^{n[H(B|UXX_1)+\delta ]},  \label{eqn:rhobar-proj-size}\\[1mm]
		 \PIjRXavg \  \rhodbarRX  \ \PIjRXavg & \leq 2^{-n[H(B|X_1)-\delta] } \PIjRXavg.
		 \label{eqn:rhodbar-sandwich} 
	\end{align}	
%	

	%It is helpful to decompose the mutual information $I\left(  XX_{1};B\right)
	%_{\theta}$ as%
	%\begin{align*}
	%I\left(  XX_{1};B\right)  _{\theta} &  =I\left(  X_{1};B\right)  _{\theta
	%}+I\left(  X;B|X_{1}\right)  _{\theta}\\
	%&  =I\left(  X_{1};B\right)  _{\theta}+I\left(  XU;B|X_{1}\right)  _{\theta}.
	%\end{align*}
	%This decomposition corresponds to the coding scheme.

Consider the following term:%
\begin{align*}
\!\!(\textrm{B}) 
&=	
	\sum_{\ell_{j}^{\prime}\neq \ell_{j},m_{j}^{\prime}}
	\!\!\!\!\!
	\text{Tr}\!
	\left[ 
		\PRXljprmjpr \ \ \rhoRX
	\right] \\
& =   
	%\ExpALL
	%\mathbb{E}_{X^{n},X_{1}^{n},U^{n}}
	%\left\{  
	\!\!
	\sum_{\ell_{j}^{\prime}\neq \ell_{j},m_{j}^{\prime}}
	\!\!\!\!\text{Tr}\!
	\left[  
	\left(  
	\PRXjljprmjpr \!\! \otimes \! \PRXjjpr
	%\Pi_{l_{j},m_{j}^{\prime}}^{\prime B_{\left(j\right)  }^{n}}\otimes\Pi_{l_{j}}^{\prime B_{\left(  j+1\right)  }^{n}}
	\right)  
	%\left(  
	\rhoRX
%	\rhojRX \!\otimes \!\rhojjRX
	%\rho_{x^{n}\left(  m_{j},l_{j},l_{j-1}\right)  ,x_{1}%
	%^{n}\left(  l_{j-1}\right)  }^{B_{\left(  j\right)  }^{n}}\otimes\rho
	%_{x^{n}\left(  m_{j+1},l_{j+1},l_{j}\right)  ,x_{1}^{n}\left(  l_{j}\right)
	%}^{B_{\left(  j+1\right)  }^{n}}
	%\right)  
	\right]  
	%\right\}  
	\\	
& = 
	\sum_{\!\! \ell_{j}^{\prime}\neq \ell_{j},m_{j}^{\prime} \ }
	\!\!
	\underbrace{
	\!\!\!\!\text{Tr}\!
	\left[  
		\PRXjljprmjpr 
		\rhojRX
	\right] 
	}_{(\textrm{B1})}
	\underbrace{
	\!\text{Tr}\!
	\left[  	
		\PRXjjpr
		 \rhojjRX
	\right]  
	}_{(\textrm{B2})}.
\end{align*}
We want to calculate the expectation of the term $(\textrm{B})$ with respect to
the code randomness $\ExpALL$.   
The random variables in different blocks are independent, and
so we can analyze the expectations
of the factors $(\textrm{B1})$ and $(\textrm{B2})$ separately. 

Consider first the calculation in block $j$, which leads
to the following bound on the expectation of the factor $(\textrm{B1})$:

{\allowdisplaybreaks 
\begin{align*}
\ExpALL \!  \left\{ (\textrm{B1}) \right\}  
&= 
	\ExpALL \! \left\{ 
	\text{Tr}\!
	\left[  
		\PRXjljprmjpr 
		\rhojRX
	\right] 
	\right\} \\
& = \ExpALL
	%\left\{  
	\text{Tr}\left[
	\begin{array}[c]{l}%
		\PIjRXljpr
		\PIjRXmjprljpr
		\PIjRXljpr \times \\
		\qquad  \qquad \quad
		\PIjRXavg
		\rhojRX 
		\PIjRXavg
	\end{array}
	\right]
	%\right\}  
	\\
& = 
\ExpXone
	\text{Tr}\left[
	\begin{array}[c]{l}
		\displaystyle \ExpUXgXone \{
		\PIjRXljpr
		\PIjRXmjprljpr
		\PIjRXljpr \} \times \\
		\qquad  \quad 
		\PIjRXavg
		\underbrace{
		\displaystyle \ExpUXgXone \!\! \left\{ \
		\rhojRX  \
		 \right\}
		 }_{\ \textrm{\dingthree}  }
		  \PIjRXavg
	\end{array}
	\right]
	\\
& = 
\ExpXone
	\text{Tr}\left[
	\begin{array}
	[c]{l}%
	\displaystyle \ExpUXgXone \{
	\PIjRXljpr
	\PIjRXmjprljpr
	\PIjRXljpr \} \times \\
	\qquad  \qquad  \qquad
	\underbrace{
	\PIjRXavg
	\rhodbarRX
	 \PIjRXavg
	 }_{\textrm{\ \dingfour}}
	\end{array}
	\right]
	\\	
& \leq 
2^{-n\left[  H\left(  B|X_{1}\right)  -\delta\right]  }
\!\!\!\!\!\ExpALL \!\!\!\!\!\!\!
	\text{Tr}\!\left[ \!\!
	\begin{array}[c]{l}%
	\PIjRXljpr \!\!
	\PIjRXmjprljpr \!\!
	\PIjRXljpr  \!
		\PIjRXavg
	\!\!\!\!
	\end{array}
	\right]
	\\	
& \leq 
2^{-n\left[  H\left(  B|X_{1}\right)  -\delta\right]  }
\!\!\!\!\!\ExpALL \!\!\!
	\text{Tr}\left[
	\PIjRXmjprljpr
	\right]
	\\	
& \leq
	2^{-n\left[  H\left(  B|X_{1}\right)  -\delta\right]  }
	\!\!\!\ExpALL \!\!\!
	2^{n\left[H\left(  B|X_{1}UX\right)  +\delta\right]  }\\
& =
	2^{-n\left[  I\left(  UX;B|X_{1}\right)  -2\delta\right]  }.
\end{align*}}%
The result of the expectation in \dingthree is $\rhodbarRX$,
and we can bound the expression in  \dingfour using \eqref{eqn:rhodbar-sandwich}.
The first inequality follows because all the other terms in the trace are positive semidefinite 
operators less than or equal to the identity. The final inequality follows from \eqref{eqn:rhobar-proj-size}.

Now we consider the expectation of the second term:
{\allowdisplaybreaks 
\begin{align*}
 \ExpALL  \!  \left\{ (\textrm{B2}) \right\}  
& = 
	\ExpALL \!\!
	% \mathbb{E}_{X^{n},X_{1}^{n},U^{n}}
	 \left\{  \text{Tr}\!
	 \left[ 
	 \PRXjjpr \ 
	% \Pi_{l_{j}^{\prime}}^{\prime B_{\left(  j+1\right)  }^{n}}
	\rhojjRX
	%\rho_{x^{n}\left(
	%m_{j+1},l_{j+1},l_{j}\right)  ,x_{1}^{n}\left(  l_{j}\right)  }^{B_{\left(
	%j+1\right)  }^{n}}
	\right] \right\}    \\
&  =\text{Tr}\!\left[  
\ExpALL
\left\{  
 \PRXjjpr 
 \right\} \ 
% \Pi_{l_{j}^{\prime}}^{\prime B_{\left(  j+1\right)}^{n}}
\ExpALL
\left\{  
\rhojjRX
%\rho_{x^{n}\left(
%m_{j+1},l_{j+1},l_{j}\right)  ,x_{1}^{n}\left(  l_{j}\right)  }^{B_{\left(
%j+1\right)  }^{n}}
\right\}  \right]  \\
&  =\text{Tr}\!\left[  
	\ExpALL
	\left\{  
	 \PRXjjpr 
	 \right\} \ 
	 %\Pi_{l_{j}^{\prime}}^{\prime B_{\left(  j+1\right)}^{n}}
	\bar{\tau}^{\otimes n}
	\right]  \\
&  =
	\ExpALL
	\text{Tr}\!\left[ 
	\PIjjRXavg
	\PIjjRXljpr \
	\PIjjRXavg \
	%\Pi_{\bar{\rho}}^{B_{\left(  j+1\right)  }^{n}}\ \Pi_{\rho_{x_{1}^{n}\left(  l_{j}^{\prime}\right)  }}^{B_{\left(j+1\right)  }^{n}}\ \Pi_{\bar{\rho}}^{B_{\left(  j+1\right)  }^{n}}\ 
	\bar{\tau}^{\otimes n}
	\right]  \\
&  =
	\ExpALL
	\text{Tr}\!\left[  
	%\Pi_{\rho_{x_{1}^{n}\left(  l_{j}^{\prime}\right)  }%
	%}^{B_{\left(  j+1\right)  }^{n}}\ \Pi_{\bar{\rho}}^{B_{\left(  j+1\right)
	%}^{n}}\ 
	%\hspace{10.2mm}
	\PIjjRXljpr  \
	\PIjjRXavg 
	\bar{\tau}^{\otimes n}
	\PIjjRXavg
	%\ \Pi_{\bar{\rho}}^{B_{\left(  j+1\right)  }^{n}}
	\right]  \\
&   \leq
	2^{-n\left[  H\left(  B\right)  -\delta\right]  }
	\ExpALL
	\text{Tr}\!\left[
	\PIjjRXljpr \
	\PIjjRXavg
	%\Pi_{\rho_{x_{1}^{n}\left(  l_{j}^{\prime}\right)  }}^{B_{\left(  j+1\right)
	%}^{n}}\ \Pi_{\bar{\rho}}^{B_{\left(  j+1\right)  }^{n}}\right\}  \\
	%&  \leq2^{-n\left[  H\left(  B\right)  -\delta\right]  }\text{Tr}\left\{
	%\Pi_{\rho_{x_{1}^{n}\left(  l_{j}^{\prime}\right)  }}^{B_{\left(  j+1\right)
	%}^{n}}
	\right]  \\
&  \leq
	2^{-n\left[  H\left(  B\right)  -\delta\right]  }
	2^{n\left[  H\left(B|X_{1}\right)  +\delta\right]  }
%\\
%&
  =
	2^{-n\left[  I\left(  X_{1};B\right)  -2\delta\right]  }.
\end{align*}}

Combining the upper bounds on $(\textrm{B1})$ and $(\textrm{B2})$  %(\ref{eq:last-error-term}) 
gives our final upper bound: % on (\ref{eq:last-error-term}):%
\begin{align}
\ExpALL \!  \left\{ (\textrm{B}) \right\}   
& = 
\ExpALL 
\sum_{\ell_{j}^{\prime}\neq \ell_{j},m_{j}^{\prime}}
(\textrm{B1}) \times (\textrm{B2}) \nonumber \\
& \leq
	\sum_{\ell_{j}^{\prime}\neq \ell_{j},\ m_{j}^{\prime}}
	2^{-n\left[  I\left(  UX;B|X_{1}\right) -2\delta\right]  } 
	\times
	2^{-n\left[  I\left(X_{1};B\right)  -2\delta\right]  }	
	\nonumber  \\
& \leq 
	|\mcal{L}||\mcal{M}| \  2^{-n\left[  I\left(  X_{1};B\right)  +I\left(  UX;B|X_{1}\right)  -4\delta\right]  }.
	\label{eqn:bound-on-R}
\end{align}
By choosing the size of 
%rates $R_\ell$ and $R_m$,
%such that the sizes of the 
message sets to satisfy equations %\eqref{eqn:bound-from-relay},
\eqref{eqn:bound-on-Rm} and 
\eqref{eqn:bound-on-R},
the expectation of the average probability of error at the destination
becomes arbitrarily small for $n$ sufficiently large.
%\QED

%

%\subsection{CUT MATERIAL}
%	
%	Because the codebooks in each block are independent, we 
%	know that the joint event of decoding of messages $(m_j,\ell_j)$
%	in Block $j$ and decoding $\ell_j$ in Block $j+1$ are independent
%	and can be analyzed separately.
%	%
%	
%	The operator $\PRX$ is in some sense a logical
%	{\sc AND} operation, which means the error probabilities
%	multiply.

%We use combine Fourier-Motzkin Theorem~1 follows from the co

\subsection{Decoding at the relay}

In this section we give the details of the POVM construction
and the error analysis for the decoding at the relay.

\noindent
\textbf{POVM\ Construction}. 
	%
	%	We now need to determine a decoding POVM\ that
	%	the relay can perform during block $j$ to decode the common message $\ell_{j}$
	%	(we assume that he decodes the message $\ell_{j-1}$ from the previous block
	%	correctly). 
	%
	During block $j$, the relay wants to decode the 
	message $\ell_j$ encoded in $u^n(\ell_j,\ell_{j-1})$,
	given the knowledge of the message $\ell_{j-1}$ from the previous block.
	%
	%	In this case, the system $X$ acts as noise and must be averaged over
	%	and the system $B$	
	Consider the state obtained by tracing over the systems $X$
	and $B$ in (\ref{eq:code-state}):
	\[
	\theta^{UX_1B_1} =
	\sum_{u,x_{1}}p\!\left(  u|x_{1}\right)  p\!\left(  x_{1}\right)  \left\vert
	u\right\rangle \!\!\left\langle u\right\vert ^{U}\otimes\left\vert x_{1}%
	\right\rangle \!\!\left\langle x_{1}\right\vert ^{X_{1}}\otimes\sigma_{u,x_{1}%
	}^{B_{1}},
	\]
	where $
	%	\[
	\sigma_{u,x_{1}}^{B_{1}}\equiv\sum_{x}p\!\left(  x|x_{1},u\right)  \text{Tr}_{B}\!\left[  \rho_{x,x_{1}}^{B_{1}B}\right]$.
	%	\]
	Further tracing over the system $U$ leads to the state%
	\[
	\theta^{X_1B_1} =
	\sum_{x_{1}}p\!\left(  x_{1}\right)  \left\vert x_{1}\right\rangle \!\!\left\langle
	x_{1}\right\vert ^{X_{1}}\otimes\bar{\sigma}_{x_{1}}^{B_{1}},
	\]
	where $
	%	\[
	\bar{\sigma}_{x_{1}}
	%=\omega_{x_{1}}^{B_{1}}
	\equiv\sum_{u}p\!\left(  u|x_{1}\right)  \sigma_{u,x_{1}%
	}^{B_{1}}$.
	%	\]
	Corresponding to the above conditional states are conditionally typical
	projectors of the following form%
	\begin{align*}
	\PIRElj 
	& \equiv
	  \Pi_{\sigma_{u^{n}\left(  \ell_{j},\ell_{j-1}\right)  ,x_{1}^{n}\left(
	\ell_{j-1}\right)  }}^{B_{1\left(  j\right)  }^{n}}, \qquad %\\
	\PIREavg 
	\equiv 
	 \Pi_{\bar{\sigma}_{x_{1}^{n}\left(  \ell_{j-1}\right)  }}^{B_{1\left(  j\right)
	}^{n}}.
	\end{align*}
	%	so that the projectors $\Pi_{\sigma_{u^{n}\left(  \ell_{j},\ell_{j-1}\right)
	%	,x_{1}^{n}\left(  \ell_{j-1}\right)  }}^{B_{1\left(  j\right)  }^{n}}$ can be
	%	used to detect $\ell_{j}$ given knowledge of $\ell_{j-1}$, while the projector
	%	$\Pi_{\omega_{x_{1}^{n}\left(  \ell_{j-1}\right)  }}^{B_{1\left(  j\right)  }%
	%	^{n}}$ is helpful for cutting off the large eigenvalues of code-averaged
	%	states. 
	The relay constructs a square-root measurement $\{ \Gamma_{\ell_j|\ell_{j-1}} \}$
	using the following positive operators:
	%	 of the following form:%
	\be
	\PRE \equiv
	\PIREavg \PIRElj \PIREavg.
	%\Pi_{\omega_{x_{1}^{n}\left(  \ell_{j-1}\right)  }}^{B_{1\left(  j\right)  }^{n}%
	%}\ \Pi_{\sigma_{u^{n}\left(  \ell_{j},\ell_{j-1}\right)  ,x_{1}^{n}\left(
	%\ell_{j-1}\right)  }}^{B_{1\left(  j\right)  }^{n}}\ \Pi_{\omega_{x_{1}%
	%^{n}\left(  \ell_{j-1}\right)  }}^{B_{1\left(  j\right)  }^{n}}.
	\ee
%	\be
%		\Gamma_{\ell_j|\ell_{j-1}}
%		=
%		\left( \sum_{\ell_j^\prime  \neq  \ell_j} \PREpr \right)^{-1/2}
%		\PRE
%		\left( \sum_{\ell_j^\prime  \neq  \ell_j} \PREpr \right)^{-1/2}.
%	\ee
	
\noindent
\textbf{Error analysis}. 
	In this section we show that during block $j$ the relay will be able to decode the message $\ell_{j}$ 
	from the state $\rhoFULLatRE$,
%	 using a measurement (POVM) constructed from the positive operators 
%	$\PRE \equiv \PIREavg \PIRElj  \PIREavg$ 
	provided 
	the rate $R_\ell < I(U;B_1|X_1)=H(B_1|X_1)-H(B_1|UX_1)-\delta$.
	The bound follows from the following properties of typical projectors:
	\begin{align}
		\Tr[ \PIRElj ] &\leq 2^{n[H(B_1|UX_1)+\delta ]},  \label{RE-typ-1}\\
%		\intertext{and} 
		 \PIREavg  \bar{\sigma}  \ \PIREavg & \leq 2^{-n[H(B_1|X_1)-\delta] } \PIREavg. \label{RE-typ-2}.
	\end{align}
%	and a reasoning analogous to the classical packing lemma \cite{el2010lecture}.

	Recall that the average probability of error at the relay is given by:
	\begin{align*}
		\bar{p}_{e}^{R}
		\equiv 
			\frac{1}{|\mathcal{L}|}\sum_{\ell_j}					
			\text{Tr}\!
			\left\{  
				\left(  I-\Gamma^{B_{1(j)}^{n}} _{\ell_j|\ell_{j-1}} \right)
				\rhojRE
				%		\rho_{\ell_j }^{B_{1(j)}^{n}} 
			\right\}.
	\end{align*}
	
%	By the symmetry of the codebook construction, it suffices to 
%	study
	We consider the probability of error for a single message $\ell_j$
	and begin by applying the Hayashi-Nagaoka operator inequality
	(Lemma~\ref{lem:HN-inequality}) to split the error
	into two terms:
	\begin{align*}
	 \bar{p}_e^{R}
	 & \equiv
	\text{Tr}\!\left[ 
		\left(I   - \!\GAMRE \right)  
		\rhojRE
	  \right] \\
	& \leq  2
	\underbrace{
	\text{Tr}\left[  \left(  I 
	-\PRE
	\right)   \ 
	\rhojRE 
	\right]  
	}_{(\textrm{I})}
	\  + \ 4 \!
	\underbrace{
	\sum_{\ell_{j}^{\prime}   \neq \ell_{j}  }
	\!\!
	\text{Tr}
	\left[  
	\PREpr
	\
	\rhojRE
	\right]
	}_{(\textrm{II})}
	\!.
	\end{align*}
	
	We will bound the expectation %under the code randomness
	of the average probability of error by bounding the 
	individual terms.
	We bound the first term as follows:
	\begin{align*}
	(\text{I})  
	&=	
		\text{Tr}\left[  
			\left(  I  -\PRE\right)   \ 
			\rhojRE 
		\right] \\
	&=
		\text{Tr}\!\left[  
			\left(  I- \PIREavg \PIRElj \PIREavg \right) 
			\rhojRE  
		\right]  \\
	&=
		1 - 
		\text{Tr}\!\left[  
			\PIREavg \PIRElj \PIREavg \ 
			\rhojRE  
		\right]  \\
	&\leq
		1 -
		\text{Tr}\!\left[  
		  \PIRElj \ 
			\rhojRE  
		\right] 
	  +  \left\| \PIREavg \rhojRE \PIREavg   - \rhojRE \right\|_1,
	\end{align*}
	where the inequality follows from Lemma~\ref{eqn:tr-trick}.

	By taking the expectation over the code randomness
	we obtain the bound
	{\allowdisplaybreaks 
	\begin{align*}
	 \!\! \ExpALL   \!\!(\text{I})  
	&=	
		1 - 
		\ExpUXone
		\text{Tr}\!\left[  
		  \PIRElj \! 
			\ExpXgUXone \!\!\!\left\{ \rhojRE   \right\}
		\right]  \\
	& \qquad % \hspace{-2mm}  
	  +   \!\!\ExpALL \left\| \PIREavg \rhojRE \PIREavg   - \rhojRE \right\|_1 \\
	& =
		1 - 
		\ExpUXone
		\text{Tr}\!\left[  
		  \PIRElj \! 
		  	\sigma_{\ell_j,\ell_{j-1}}
			%			\ExpXgUXone \!\!\!\left\{ \rhojRE   \right\}
		\right]  \\
	& \qquad  % \hspace{-2mm}  
	  +  \!\! \ExpALL \left\| \PIREavg \rhojRE \PIREavg   - \rhojRE \right\|_1 \\
	& \leq
		1 - 
		\ExpUXone
		\text{Tr}\!\left[  
		  \PIRElj \! 
		  	\sigma_{\ell_j,\ell_{j-1}}
			%			\ExpXgUXone \!\!\!\left\{ \rhojRE   \right\}
		\right]  + 2 \sqrt{\epsilon}  \\
	&\leq 
		1 - (1-\epsilon) + 2\sqrt{\epsilon} = \epsilon + 2\sqrt{\epsilon}.
	\end{align*}
	}
	The first inequality follows from Lemma~\ref{lem:gentle-operator}
	and the property
	\be
		\ExpUXone
		 \text{Tr}\!\left[  \PIREavg \ \bar{\sigma} \right] \geq 1- \epsilon.
	\ee
	The second inequality follows from:
	\be
		\ExpUXone \text{Tr}\!\left[ 	
			\PIRElj  \sigma_{\ell_j,\ell_{j-1}} 
		 \right] 
		 \geq 1- \epsilon.
	\ee
	
	\medskip
	To bound the second term we proceed as follows:
	\begin{align*}
	 \ExpALL  \!  \left\{ (\textrm{II}) \right\}  
	 & = 
		\ExpALL \sum_{\ell_{j}^{\prime}   \neq \ell_{j}  }
		\!
		\text{Tr}
		\left[  
		\PREpr
		\
		\rhojRE
		\right] \\
	 & = 
		\ExpXone \sum_{\ell_{j}^{\prime}   \neq \ell_{j}  }
		\!\!
		\text{Tr}
		\left[  
		\ExpUXgXone \!\!\left\{
		\PREpr
		\right\}
		\!\!
		\ExpUXgXone \{
		\rhojRE
		\}
		\right] \\
	& = 
		\ExpXone
		\sum_{\ell_{j}^{\prime}   \neq \ell_{j}  }
		\!\!
		\text{Tr}
		\left[  
		\ExpUXgXone \!\!\left\{
		\PREpr
		\right\}
		\ 
		\bar{\sigma}_{|\ell_{j-1}} %_{\ell_j,\ell_{j-1}} 
		\right].
	\end{align*}		
	The expectation can be broken
	up because $\ell_j^\prime \neq \ell_j$
	and thus the $U^n$ codewords are
	independent.
	We have also used 
	\be
		\ExpUXgXone \! \left\{
		\rhojRE
		\right\}
		=
		\bar{\sigma}_{|\ell_{j-1}}.
	\ee

	We continue by expanding the operator
	$\PREpr$ as follows:
	{\allowdisplaybreaks 
	\begin{align*}		
	 \hspace{0.17\textwidth}
	 & = \!\!
		\ExpALL
		\sum_{\ell_{j}^{\prime}   \neq \ell_{j}  }
		\!
		\text{Tr}\!
		\left[  
		\PIREavg \!\!\PIREljpr \!\!\PIREavg
		\bar{\sigma}_{|\ell_{j-1}} %_{\ell_j,\ell_{j-1}} 
		\right] \\
	 & = \!\!
		\ExpALL
		\sum_{\ell_{j}^{\prime}   \neq \ell_{j}  }
		\!
		\text{Tr}\!
		\left[  
		\qquad  \PIREljpr 
		\underbrace{
		\PIREavg
		\bar{\sigma}_{|\ell_{j-1}}
		\PIREavg
		}_{\textrm{\dingfive}}
		\right] \\
	 & \leq \!\!
		\ExpALL
		\sum_{\ell_{j}^{\prime}   \neq \ell_{j}  }
		\!
		\text{Tr}
		\left[  
		 \PIREljpr  \
 	 	2^{-n[H(B_1|X_1)-\delta] }
		\PIREavg
		\right] \\
	 & \leq 
	 	2^{-n[H(B_1|X_1)-\delta] }
		\ExpALL
		\sum_{\ell_{j}^{\prime}   \neq \ell_{j}  }
		\!
		\text{Tr}\!\left[  
		 \PIREljpr   
		%		\PIREavg
		\right] \\
	 & \leq 
	 	2^{-n[H(B_1|X_1)-\delta] }
		\ExpALL
		\sum_{\ell_{j}^{\prime}   \neq \ell_{j}  }
		\!\!
		2^{n[H(B_1|UX_1)+\delta ]} \\
	 & \leq 
	 	| \mcal{L}| \
	 	2^{-n[I(U;B_1|X_1)-2\delta] }.
	\end{align*}}%
	The first inequality follows from using \eqref{RE-typ-2} on the expression \dingfive\!\!\!.
	The second inequality follows from the fact that $\PIREavg$ is a positive semidefinite
	operator less than or equal to the identity. More precisely we have
	\begin{align*}
		\text{Tr}\!\left[  \PIREljpr \PIREavg \right]
		& = 
			\text{Tr}\!\left[  \PIREljpr \PIREavg \PIREljpr  \right] \\
		& \leq 
			\text{Tr}\!\left[  \PIREljpr  I \ \PIREljpr \right] \\
		& = 
			\text{Tr}\!\left[  \PIREljpr \right].
	\end{align*}
	The penultimate inequality follows from
	 \eqref{RE-typ-1}.

	Thus if we choose
	$R_\ell \leq I(U;B_1|X_1)-3\delta$,
	we can make the expectation 
	of the average probability of error 
	at the relay vanish in the limit of
	many uses of the channel.
	%	 is bounded
	%	above by $\epsilon$
	%	\begin{align*}
	%	\ExpALL \!\! \left\{
	%	 \bar{p}_e^{D}
	%	 \right\} 
	%	 %
	%	 & =
	%	 	2 \!\!\!\ExpALL \!\!\!\!(\textrm{I}) \ \ + \ \ 4 \!\!\!\ExpALL\!\!\!\! (\textrm{II})
	%	\end{align*}

\medskip
\textbf{Proof conclusion}.
Note that the \emph{gentle operator lemma for ensembles} is used
several times in the proof.
First, it is used to guarantee that the effect of
acting with one of the projectors from the ``measurement sandwich'' 
does not disturb the state too much.
Furthermore, because each of the output blocks is operated
on twice: we depend on the gentle operator lemma 
to guarantee that the disturbance to the state during the first 
decoding stage is asymptotically negligible if the correct 
messages are decoded.

%\end{proof}
%\subsection{Combined probability of error}
%	Err at relay... err at Rx...
%	

%!TEX root = thesis.tex

\section{Discussion}
	\label{sec:QRC-discussion}

	In this chapter, we established the achievability of the rates given by
	the partial decode-and-forward strategy,
	thus extending the study of classical-quantum channels
	to multi-hop scenarios.
	
	The new techniques from this chapter are the use of
	the \emph{coherent codebooks} and the  {\sc and}-measurement,
	which collectively decodes messages from two blocks of the output of the channel.

	We obtain the decoding-and-forward inner bound as a corollary of Theorem~\ref{thm:PDF-for-QRC}.
	%also have the following corollary.
	\begin{corollary}[Decode-and-forward strategy for quantum relay channel]
	The rates $R$ satisfying 
	\be
		R \leq \max_{p(x,x_1)} \min\{ \ I(X,X_1;B)_{\theta}, \ I(X;B_1|X_1)_{\theta} \}
	\ee
	%	with 
	where the mutual information quantities are taken with respect to the 
	state
	\be
		\theta^{XX_1B_1B} 
		= 	\sum_{x,x_1} \underbrace{ p_{X|X_1}(x|x_1) p_{X_1}(x_1) }_{p_{X,X_1} }
			\ketbra{x}{x}^{X} \otimes \ketbra{x_1}{x_1}^{X_1}
			\otimes 
			\rho_{x,x_1}^{B_1B}.
	\ee
	are achievable for quantum relay channels by setting $X=U$ in Theorem~\ref{thm:PDF-for-QRC}.
	\end{corollary}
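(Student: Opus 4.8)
The plan is to read off this corollary as the special case $U=X$ of Theorem~\ref{thm:PDF-for-QRC}. Concretely, I would restrict the union over code distributions $p(u,x,x_1)=p(x_1)\,p(u|x_1)\,p(x|u,x_1)$ appearing in that theorem to those for which $p(x|u,x_1)$ is the deterministic conditional concentrated at $x=u$; this is a legitimate member of the distribution class, so every rate obtained in this sub-family is a fortiori achievable for the relay channel. Under this choice the auxiliary register $U$ carries exactly the same data as the source input $X$, and after discarding the now-redundant copy the code state of \eqref{eq:code-state} collapses to
\be
	\theta^{XX_1B_1B}
	=\sum_{x,x_1}p_{X|X_1}(x|x_1)\,p_{X_1}(x_1)\,
		\ketbra{x}{x}^{X}\otimes\ketbra{x_1}{x_1}^{X_1}\otimes\rho_{x,x_1}^{B_1B},
\ee
which is precisely the state in the statement.

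Next I would evaluate the two terms inside the minimum of Theorem~\ref{thm:PDF-for-QRC} on this reduced state. The sum-rate term $I(XX_1;B)_\theta$ is manifestly unchanged. For the other term I would use $U=X$ to write $I(U;B_1|X_1)_\theta=I(X;B_1|X_1)_\theta$ and $I(X;B|X_1U)_\theta=I(X;B|X_1X)_\theta$, and observe that the last quantity vanishes because conditioning the classical-quantum state on the register $X$ leaves $X$ deterministic, so $H(X|X_1X)=0$ and hence $I(X;B|X_1X)_\theta=0$. Therefore the partial decode-and-forward bound specializes to $R\le\max_{p(x,x_1)}\min\{I(XX_1;B)_\theta,\ I(X;B_1|X_1)_\theta\}$, which is the asserted rate and the classical-quantum analogue of the decode-and-forward formula in \eqref{eq:dec-fwd-rate}.

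Finally, I would add a remark on the operational picture, since it confirms consistency: with $p(x|u,x_1)=\delta_{x,u}$ the satellite codeword $x^n(m_j,\ell_j,\ell_{j-1})$ coincides with $u^n(\ell_j,\ell_{j-1})$ for every value of $m_j$, so one sets $R_m=0$ and $R=R_\ell$, the relay decodes the entire block message, and the ``sliding-window'' destination decoder of Section~\ref{sec:PDFproof} reduces to ordinary decode-and-forward; its error analysis goes through unchanged, with the sum over $m_j'\neq m_j$ (the term $(\mathrm{A})$) now an empty sum. There is no real obstacle here: the only points worth checking explicitly are that the deterministic conditional $p(x|u,x_1)$ is admissible within the distribution class of Theorem~\ref{thm:PDF-for-QRC} and that $I(X;B|X_1X)_\theta=0$ is being applied to a genuine classical-quantum state, both of which are immediate.
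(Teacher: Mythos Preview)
Your proposal is correct and takes exactly the approach the paper intends: the paper does not give a separate proof beyond the remark ``by setting $X=U$ in Theorem~\ref{thm:PDF-for-QRC}'', and your argument is precisely the natural unpacking of that substitution, including the key observation that $I(X;B|X_1U)_\theta$ vanishes when $U=X$. Your operational remark about $R_m=0$ and the empty $(\mathrm{A})$-sum is a nice consistency check that goes slightly beyond what the paper spells out.
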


	Note also that 
	%	the decode-and-forward achievable rate region is a special
	%	case of the partial-decode-and-forward strategy 
	%	which corresponds to setting $U=X$.
	%	%
	%	Also, 
	setting the $x_1$ to a fixed input in Theorem~\ref{thm:PDF-for-QRC} 
	would give us a quantum direct coding inner bound similar to the
	one from equation \eqref{eq:direct-coding-bd}.
	%	 $X_1=\emptyset$ we get the direct coding inner bound.

	% COMPRESS AND FORWARD 
	%
	%	In compress-and-forward strategies, the Relay does not try to decode the message
	%	from his received signal $Y_1^n$, but simply searches for a 
	%	close sequence $\hat{Y}_1^n$ chosen from a predetermined compression codebook. 
	%	An index corresponding to $\hat{Y}_1^n$
	%	is then forwarded to the Destination during the next block, encoded into a codeword $x_1^n$.  
	An interesting open question is
	to determine a compress-and-forward strategy for the quantum setting.
	This could possibly involve combining results from quantum source coding and quantum 
	channel coding \cite{datta2011quantum,jointSRCandQCH}.
	%	The operation performed at the Relay is related to quantum 
	%	rate-distortion [cite N.Datta].
	%
	% quantum relays
	%	 [this is a bit of an afterthought---maybe should
	%	  only appear in the conclusion in open questions] 

	Another avenue for research would be to consider \emph{quantum communication} 
	and \emph{entanglement distillation} scenarios on a quantum relay network.
	%
	%	which include quantum communication and have also been con
	Further research in this area would have applications 
	for the design of quantum repeaters \cite{collins2005quantum, dutil2011multiparty}.
	%	opeThe notion of a quantum relay has been considered
	%	in the context of resources distribution for quantum communication

	%	A problem similar to relay channel has been studied for quantum
	%	ressources: entanglement distillation, 
	%	entanglement of assistance [Dutil], 
	%	quantum repeaters? ...
	%
	%	An interesting direction to pursue is to analyze quantum communication
	%	scenarios and to connect with the entanglement generation 
	%	results with ours.

	%	The sliding-window decoding is reminiscent of the quantum
	%	resource inequality formalism where certain rates are catalytic.
	%	%
	%	\comment{ Has the sliding window type of operation has also been used in
	%	in the problem of unitary gate capacities ? }

%!TEX root = thesis.tex

\chapter{Bosonic interference channels}

									\label{chapter:bosonic}

	Optical communication links form the backbone of the 
	information superhighway which is the Internet.
	A single optical fiber can carry hundreds of gigabits 
	of data per second over long distances thanks to 
	the excellent light-transmission properties of glass materials.
	Free-space optical communication is also possible
	at rates of hundreds of megabits per second \cite{tolker2002orbit}.

	% when 
	%line-of-sight is available. 
	%although the limits 50 megabits pers second
	 
	An optical communication system consists of
	a modulated source of photons, the optical channel
	(or more generally the \emph{bosonic} channel, since photons are bosons),
	and an optical detector. 
%	Figure~\ref{fig:exampleBosonic}
	Figure~\ref{fig:real-world-MAC} on page \pageref{fig:real-world-MAC} 
	illustrates 
	an example of such a communication system.
%	
%	\begin{figure}[htbp]
%	\begin{center}
%	\includegraphics[width=0.6\textwidth]{images/FSO_skyscrapers.png}
%	\caption{An example of a free-space optical communication link between two buildings.}
%	\label{fig:exampleBosonic}
%	\end{center}
%	\end{figure}

	As information theorists, we are interested in 
	determining the ultimate limits on the rates for 
	communication over  such channels.
	% imposed bythe 
	%
	For each possible combination of the optical encoding
	and optical decoding strategies, we obtain a different
	communication model for which we can calculate the capacity.
	More generally, we are interested in the \emph{ultimate}
	capacity of the bosonic channel as permitted by the laws of physics.
	For this purpose we must optimize over all possible
	encoding and decoding strategies, both practical
	and theoretical.
	%	which in and calculate the capacity
	%

	In this chapter we present a quantum treatment of a
	free-space optical interference channel.
	%with diffraction-limited loss and a thermal background.
	%The theoretical investigation of the 
	%The channel capacity problem 
	% light which is modulated by
	%an encoding scheme 
	%  the optical channel and
	%a (also known as
	%Current models for optical communication channel 
	%However, the way we use the channel is suboptimal
	%we just send photons and count the clicks in the detector
	%Semiclassical models for multiple-user optical communication cannot assess the
	We consider the performance of laser-light encoding (coherent light) in
	conjunction with three detection strategies: 
	(1) homodyne, (2) heterodyne, 	and (3) joint detection. 
	In Section~\ref{sec:bosonic-preliminaries}, we will introduce some basic notions 
	of quantum optics which are required for the remainder of the chapter.
	In Section~\ref{sec:bosonic-channels} we will discuss previous results
	on bosonic quantum channels and describe the known capacity formulas
	for point-to-point free-space bosonic channels for the three detection strategies.
	In Section~\ref{sec:bosonicIC} we define the bosonic interference channel
	model and calculate the capacity region for the special cases
	of  ``strong'' and ``very strong'' interference for each detection strategy.
 	We also establish the Han-Kobayashi achievable rate regions for 
	homodyne, heterodyne and joint detection.
%	 strategies.

%	%
%	%	Joint detection outperforms both homodyne and heterodyne
%	%	detection whenever the channel exhibits \textquotedblleft very
%	%	strong\textquotedblright\ interference. 
%	We adapt our results from Chapter~\ref{chapter:QIC} 
%	and 
%	%	as defined by 
%	%	We determine the capacity region for homodyne or heterodyne detection 
%	%	when the channel has \textquotedblleft strong\textquotedblright\ interference,
%	%	and we conjecture the existence of a
%	%	joint detection strategy that outperforms the former two strategies in this case. 
%	%
%	%	In these latter cases, we determine
%	%	achievable rate regions if the receivers employ a recently discovered
%	%	min-entropy quantum simultaneous decoder.

%	

\section{Preliminaries}
						\label{sec:bosonic-preliminaries}

\subsection{Gaussian channels}

	We begin by introducing some notation.
	%	Consider a random variable $X$ distributed according 
	Define the real-valued Gaussian probability density function
	with mean $\mu$ and variance $\sigma^2$ as follows:
	\be
		%		X ~ 
		\mcal{N}_{\mathbb{R}}(x; \mu, \sigma^2)
			\equiv 		
		\frac{1}{\sqrt{2\pi\sigma^2}}\, e^{\frac{-(x-\mu)^2}{2\sigma^2}}
		\ \ \in \ \mcal{P}(\mathbb{R}).
	\ee
	Define also the circularly symmetric complex-valued Gaussian distribution
	\be
		\!\!\!\mcal{N}_{\mathbb{C}}(z; \mu, \sigma^2)
			\!\equiv \! 		
		\frac{1}{2 \pi \sigma^2 }\, e^{ \frac{ - \left| z-\mu \right|^2 }{ 2\sigma^2 }  }
		\!\!\equiv \!
		\frac{1}{\sqrt{2\pi\sigma^2}} e^{\!\frac{-(x-\operatorname{Re}\left\{  \mu \right\})^2}{2\sigma^2}}\!\!
		\frac{1}{\sqrt{2\pi\sigma^2}} e^{\!\frac{-(y-\operatorname{Im}\left\{  \mu \right\})^2}{2\sigma^2}}
		 \in \mcal{P}(\mathbb{C}),	
	\ee
	where we identify $z=x+iy$ and assume that the 
	variance parameter is real-valued $\sigma^2 \in \mathbb{R}$.
	Note also that in the complex-valued case,
	the quantity $\sigma^2$ represents the \emph{variance per real dimension};
	a variable $Z \sim \mcal{N}_{\mathbb{C}}(\mu, \sigma^2)$ will
	have variance $\operatorname{Var}\{Z\} \equiv \mathbb{E}_Z \left[ |Z-\mu|^2 \right] = 2\sigma^2$.

	The additive white Gaussian noise (AWGN) channel is a communication model
	where the input and output are continuous random variables
	and the noise is Gaussian.
	Let $X$ be the random variable associated with the input of the channel.
	Then the output variable $Y$ will be:
	\be
		Y = X + Z,
	\ee
	where $Z \sim \mcal{N}_{\mathbb{R}}(0, N )$ is a Gaussian random 
	variable with zero-mean and variance~$N$.
	As in the discrete memoryless case, we can use a 
	codebook $\{ x^n(m) \}$, $m \in [1:2^{nR}]$,
	%	for each of the possible transmit messages.
	with codewords generated randomly and independently 
	%	If a random codebook 
	%	
	%	we will generate a random codebook
	%	for the channel 
	according to a probability density function $\prod^n p_X(x)$.
	%		
	%	is based on 
	%	We assume that the channel is used for $n$ 
	%
	Furthermore we impose an \emph{average power constraint}
	on the codebook:
	\be
		\ExpX \left\{ \frac{1}{n}\sum_{i=1}^n X_i^2 \right\}  \ \leq \ P.
	\ee
	%	In the absence of any constraints, we could choose to make
	%	the input signal very strong so as 

	The channel capacity is calculated using the \emph{differential entropy}, $h\!:\! \mcal{P}(\mathbb{R}) \to \mathbb{R}$,
	which plays the role of the Shannon entropy for continuous random variables.
	We know from Shannon's channel capacity theorem (Theorem~\ref{thm:shannon-ch-cap})
	that a rate $R$ is achievable provided it is less than the mutual information
	of the joint probability distribution induced by the input distribution
	and the channel: $(X,Y) \sim p_X p_{Y|X}$.
	For any choice of input distribution $p_X$, the following rate is achievable:
	\begin{align}
	R \leq  I(X;Y) &= h(Y) - h(Y|X)  \nonumber \\
		  	&= h(Y)-h(X+Z|X) \nonumber \\
	 	  	&= h(Y)-h(Z|X) \nonumber \\
		  	&= h(Y) - h(Z). \label{eqn:gaussian-capacity}
	\end{align}
	The last equality follows because the noise $Z$ is assumed to be independent
	of the input $X$.
	It can be shown that a Gaussian distribution with variance $P$
	is the optimal choice of input distribution \cite{CT91}.
	%	if we want to maximize the above quantity,
	%	since it makes the term $h(Y)$ as large as possible \cite{CT91}.
	%
	Furthermore, when we choose $X \sim  \mcal{N}_{\mathbb{R}}(0,P)$
	it is possible to compute the above expression exactly
	and obtain the capacity:
	\be
		C = \frac{1}{2}\log_2\left( 
			1 + \frac{P}{N}
			\right) \qquad \textrm{   [bits/use].}
		%		{p_X^* f(x) \text{ s.t. }E \left( X^2 \right) \leq P} I(X;Y)
		%		C = \max_{p_X^* f(x) \text{ s.t. }E \left( X^2 \right) \leq P} I(X;Y)
		\label{eq:gauss-chan}
	\ee
	We will refer to the ratio $P/N$ as the \emph{signal to noise ratio}.
	We sometimes abbreviate this expression as: $\gamma(\textrm{SNR}) \equiv 
	\frac{1}{2}\log_2\left( 1 + \textrm{SNR} \right)$.
	The above formula is one of the great successes of classical
	information theory.
	
	The Gaussian multiple access channel is defined as:
	\be
		Y \ = \ \sqrt{\alpha} X_1 + \sqrt{\beta} X_2  + Z,
	\ee
	where $\alpha,\beta \in \mathbb{R}$ are the \emph{gain coefficients}
	and $Z \sim \mcal{N}_{\mathbb{R}}(0, N )$
	is an additive Gaussian noise term with average power $N$.
	When input power constraints $\ExpXone \left\{ \frac{1}{n}\sum_{i=1}^n X_{1i}^2 \right\}  \leq  P_1$
	and $\ExpXtwo \left\{ \frac{1}{n}\sum_{i=1}^n X_{2i}^2 \right\}  \leq  P_2$
	are imposed, the capacity region is given by:
	\be
		        		 C_{\textrm{MAC}} \!
		\equiv  \!
		%\left( 
		\left\{ \!
		(R_1,R_2) \in \mathbb{R}_+^2 
		\left| 
		\begin{array}{rcl}					
		            R_1             &\!\!\leq& \!\!\!   I(X_1;Y|X_2) \!=\! \frac{1}{2}\!\log_2\!\left(  1 + \frac{\alpha P_1}{N} \right) \\
		            R_2             &\!\!\leq& \!\!\!   I(X_2;Y|X_1) \!=\! \frac{1}{2}\!\log_2\!\left(  1 + \frac{\beta P_2}{N} \right) \\
		            \!\! R_1+R_2   &\!\!\leq& \!\!\!   I(X_1X_2;Y) \ \!=\! \frac{1}{2}\!\log_2\!\left(  1 + \frac{\alpha P_1+\beta P_2}{N} \right) \!\!\!\!\!
		           \end{array}
		          \right.
		 \right\}\!\!.
		 \nonumber
	 \ee
	 Each of the constraints on the capacity region has an intuitive interpretation 
	 in terms of signal to noise ratios.
	 In this context, we also have the expression $I(X_1;Y)=\frac{1}{2}\log_2\left(  1 + \frac{\alpha P_1}{N+\beta P_2} \right)$,
	 in which the unknown codewords of the second transmitter are treated as contributing to the noise.
	%	 We will sometimes use the abbreviation $\gamma(x) = \frac{1}{2}\log_2(1+x)$.

\subsection{Introduction to quantum optics}
											\label{sec:q-optics}

	Photons are excitations of the electromagnetic field.
	We say that photons are \emph{bosons} because they obey Bose-Einstein statistics:
	they are indistinguishable particles that are symmetric under exchange\footnote{
	The wave function describing two photons $p_1$ and $p_2$ is even under exchange
	of the two particles: $\psi(p_1,p_2)=\psi(p_2,p_1)$.}.
	Multiple bosons with the same energy can occupy the same quantum state.
	This is in contrast with \emph{fermions} which obey Pauli's exclusion principle.
	%	the same 
	%	place in space.	
	%	particles may occupy a set of available discrete
	%	
	%	which means that ththe laws of physics that describe them 
	%	are symmetric under exchange.
	%	
	%	they are described by physics 
	%	which means that two  indistinguishable  so they are called they are kind of boson.
	%	The defining property of a boson is its 
	%	The name comes from their 
	%	The 
	%	Bosons are one of the two fundamental
	%
	Bosonic channels are channels in which the inputs and the outputs are bosons. 

	In this section, we will introduce some background material
	on quantum optics which is needed for the rest of the presentation
	in this chapter.
	%	\subsection{Preliminaries}
		Recall that the states of quantum systems are described by density operators $\sigma,\rho \in \mcal{D}(\mcal{H})$,
		where $\mcal{H}$ is a Hilbert space.
		Unitary quantum operations act by conjugation,
		so that by applying $U$ to $\sigma$ we obtain $\rho = U \sigma U^\dag$ as output.
		The expectation value of some operator $\hat{A}$ when the system
		is in the state $\rho$ is denoted $\langle \hat{A} \rangle = \Tr[ \hat{A} \rho ]$.
		
		Let $\rho_0 = \ketbra{0}{0}$ be the \emph{vacuum state} of one mode of 
		the electromagnetic field.
		We define $\hat{a}^\dag$ to be the \emph{creation operator} for that mode.
		Applying $\hat{a}^{\dag}$ to the vacuum state we obtain the first excited state:
		\be
			\ketbra{1}{1} = \hat{a}^{\dag} \ketbra{0}{0} \hat{a},
		\ee
		and this process can be iterated to create further excitations in the field.
		The Hermitian conjugate of the creation operator is the \emph{annihilation}
		operator which takes away excitations from the field.
		More generally, we have
		\begin{align}
			a \ket{n}  &=\sqrt{n} \; \ket{n-1}, \\
			a^\dag \ket{n}  &=\sqrt{n+1}\; \ket{n+1}. \\
		\end{align}%
		The state space $\ketbra{0}{0}, \ketbra{1}{1}, \ketbra{2}{2}, \ketbra{3}{3}, \ldots$ is known 
		as \emph{Fock space} and it is infinite dimensional. 
		The creation and annihilation operators obey the commutation relation $[\hat{a},\hat{a}^{\dag}]  =1$.
		
		The real part and the imaginary part of the operator $\hat{a}$ are defined
		as the $x$ quadrature and the $p$ quadrature:
		\be
			\hat{X} = \frac{ \hat{a} + \hat{a}^\dag}{\sqrt{2}},
			\qquad
			\hat{P} = \frac{ \hat{a} - \hat{a}^\dag}{i\sqrt{2}},
		\ee
		and we have $[\hat{X},\hat{P}]=i$.
        % TODO: explain abotu X and P quadratures and 
		
		If we want to measure how many excitations are in the field,
		we use the \emph{number operator} $\hat{N}=\hat{a}^\dag\hat{a}$.
		If the field is in excitation level $n$, the expected number of 
		excitations will be:
		\be
			\langle \hat{N} \rangle  = \Tr\left[ \hat{a}^\dag\hat{a}  \ketbra{n}{n} \right] = n.
		\ee		
		%		In this way we can always excite higher and higher 
		%		Further applications of $\hat{a}^{\dag}$
		%		and further 

		%
		The Hamiltonian that describes one non-interacting mode of the electromagnetic field is given by:
		\be
			\hat{H} = \hbar \omega \left( \hat{a}^\dag \hat{a} + \frac{1}{2} \right). %  \ + \ \hat{V},
		\ee 
		%		where $\hat{V}$ is the potential, which describes the boundary conditions
		%		as well as how the photons will interact with other particles.
		The Hamiltonian is important because it gives the time evolution operator 
		$U(t)\equiv e^{i\hat{H}t}$ and the energy of the system:
		$E_\rho \equiv \langle \hat{H} \rangle = \Tr[ \hat{H}  \rho ]$.
		Observe that the system has energy even when it is in the vacuum state:
		%		and there is no potential ($\hat{V}=0$).
		\be
			E_0 = \Tr[ \hat{H} \ketbra{0}{0} ] = \bra{0} \hat{H} \ket{0} 
			= \hbar \omega  \bra{0} \left( \hat{a}^\dag \hat{a} + \frac{1}{2} \right)   \ket{0} = \frac{\hbar \omega}{2}.
		\ee
		This is known as the zero-point energy or vacuum energy.

		% Quadratures 
		%		
		%		$\hat a = X+iP$
		%		
		%		$\hat a^\dagger = X-iP$

	\subsection{Coherent states}
	
		%		If we turn on an interaction potential of the form
		%		$\hat{V} = \lambda \hat{a}^\dat + \lambda^\dag \hat{a}$ will lead to
		%		a
		A composite system exhibits coherence if all its components somehow
		coincide with each other.
		This could be either coincidence in time, space coherence,  phase coherence
		or quantum coherence.
		An example of the latter is the process of \emph{stimulated emission} of photons 
		which occurs inside a laser.
		All new photons are created exactly ``in phase'' with the other photons inside
		the laser.
		Over time the number of photons in the laser will grow,
		but they will all have the same frequency, phase and polarization.
		
		The coherent state $\ket{\alpha}$ describes an oscillation of the electromagnetic field.
		In general $\alpha \in \mathbb{C}$ and we have $\alpha = |\alpha|e^{i\phi}$,
		where $|\alpha|$ is the amplitude of the oscillation and $\phi$ is the initial phase.
		In the Fock basis, the coherent state $\ket{\alpha}$ is written as:
		\begin{align}
			\ket{\alpha}
			& = e^{-{|\alpha|^2\over2}}\sum_{n=0}^{\infty}{\alpha^n\over\sqrt{n!}}|n\rangle  \\
			& = e^{-{|\alpha|^2\over2}}\left[
				\ket{0}  \ + \  |\alpha|e^{i\phi}\ket{1} \ + \  \frac{|\alpha|^2}{\sqrt{2}}e^{2i\phi}\ket{2}
					\ + \ \frac{|\alpha|^3}{\sqrt{6}}e^{3i\phi}\ket{3} \ + \ \cdots \
				\right].
			%			=e^{-{|\alpha|^2\over2}}e^{\alpha\hat a^\dagger}|0\rangle	.
		\end{align}
		The output of a laser is coherent light: the excitations at all energy levels will
		have the same phase. Coherent states remain coherent over time:
		$\ket{\alpha(t)} \equiv U(t)\ket{\alpha} = e^{i\omega t/2} \ket{ |\alpha| e^{i(\phi - \omega t)} }$.
		
		A coherent state can also be defined in terms of 
		the unitary \emph{displacement operator} which acts as:
		\be
			\ket{\alpha}	\ = \ D({\alpha})\ket{0} = \exp{\alpha \hat{a}^\dag - \alpha^* \hat{a} }\ket{0}.
		\ee
		Note that in some respect $D({\alpha})$ is similar to the creation 
		operator $\hat{a}^\dag$, since it creates excited states from the vacuum state.

		%		P(n)= |\langle n|\alpha \rangle |^2 =e^{-\langle n \rangle}\frac{\langle n \rangle^n}{n!}
		
		%	
		%		$
		%		|\alpha\rangle=e^{\alpha \hat a^\dagger - \alpha^*\hat a}|0\rangle = D(\alpha)|0\rangle
		%		$
		%		
		%		he unitary shifting or 
		%		
		%		 \be D_{\alpha}=e^{\alpha a^{\dagger}-\alpha^* a} \ee

		%		causes a shift of the operator  \be a \ee

		%		 \be D^{\dagger}_{\alpha}aD_{\alpha}=e^{-\alpha a^{\dagger}+\alpha^* a}ae^{\alpha a^{\dagger}-\alpha^* a}=a+\alpha \ee

		%		for an arbitrary complex number  \be \alpha \ee.

		%		The displacement operator has the following properties

		%		 \be D^{\dagger}(\alpha)=D^{-1}(\alpha)=D(-\alpha) \ee

		%		 \be D^{\dagger}(\alpha)aD(\alpha)=a+\alpha \ee

		%		 \be D^{\dagger}(\alpha)a^{\dagger}D(\alpha)=a^{\dagger}+\alpha^* \ee

\section{Bosonic channels}
												\label{sec:bosonic-channels}

	%The principal goals of information theory are to determine the ultimate limits
	%on reliable communication and to find ways of approaching these limits in
	%practice. 
	Point-to-point optical communication using laser-light modulation in
	conjunction with direct-detection and coherent-detection receivers has been
	studied in detail using the semiclassical theory of photodetection~\cite{GK95}. 
	This approach treats light as a classical electromagnetic field, and the
	fundamental noise encountered in photodetection is the shot noise associated
	with the discreteness of the electron charge.

	These semiclassical treatments for systems that exploit
	classical-light modulation and conventional receivers (direct, homodyne, or
	heterodyne) have had some success,
	but we should recall that electromagnetic waves are quantized, and
	the correct assessment of systems that use non-classical light sources
	and/or general optical measurements requires a full quantum-mechanical
	framework~\cite{S09}. There are several recent theoretical studies on the
	point-to-point~\cite{GGLMSY04, guha2011structured}, broadcast~\cite{guha2007classical} and
	multiple-access~\cite{Y05} bosonic channels.
	%	 with linear loss and a thermal 	background. 
	These studies have shown that quantum communication rates (Holevo rates)
	surpass what can be obtained with conventional receivers.
	%	Prior work has established that Holevo information rates are achievable for
	%	information transmission on general quantum point-to-point~\cite{SW97,H98},
	%	broadcast~\cite{YHD06} and multiple-access~\cite{YHD08} channels. In each
	%	case, the Holevo information rates are an upper bound to the Shannon rates
	%	computed for any specific transmitter-modulation receiver-measurement pair.
	For the general quantum channel, attaining Holevo information rates may
	require collective measurements (a joint detection) across all the output systems of
	the channel.

	Before stating our results on the bosonic interference channel,
	we will briefly review some results on point-to-point bosonic channels
	in the next subsection.
	
	\subsection{Channel model}
	
		%		Before moving on to optical interference channels
		The free-space optical communication channel is a physically realistic
		model for the propagation of photons from transmitter to receiver.
		We assume that a transmitter aperture of size $A_t$ is placed at
		a distance $L$ from a receiver aperture of size $A_r$, and that
		we are using $\lambda$-wavelength laser light for the transmission.

		\begin{figure}[hb]
		\begin{center}
		\includegraphics[width=0.45\textwidth]{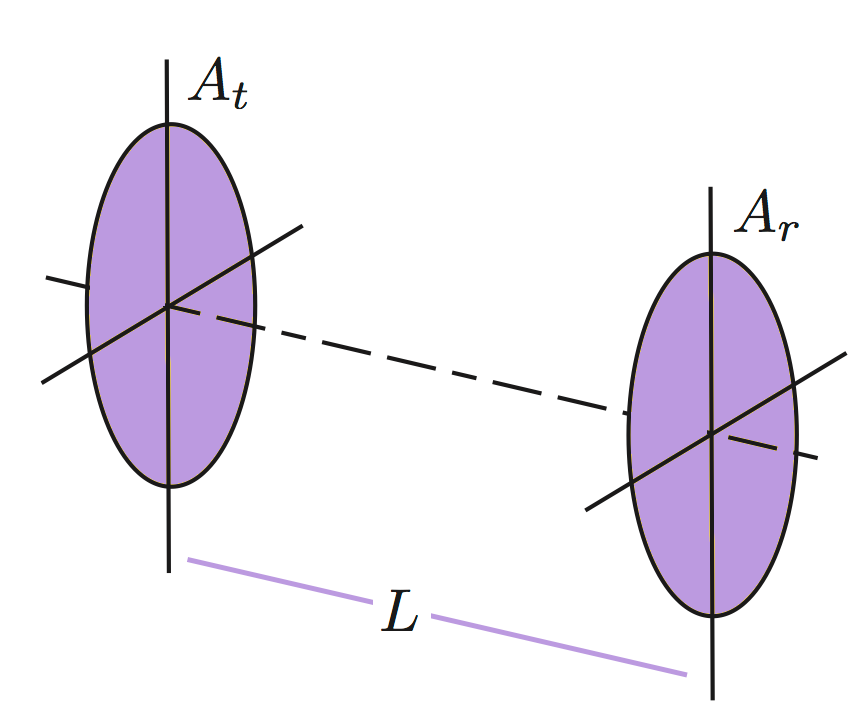}
		\caption{The free-space optical communication channel.
		Two apertures of area $A_t$ and $A_r$ are placed $L$ distance apart.
		The channel decomposes into different modes of propagation.
		We model the channel as a transformation from an annihilation operator 
		on the transmit side to an annihilation operator at the receiver side.
				}		
		\label{fig:apertures}	
		\end{center}
		\end{figure}

		To analyze the communication capacity of the bosonic channel,
		we can decompose the problem into finding the capacity for each 
		of the spatial modes of propagation, which will in general have different
		transmissivity coefficients $\eta$. 
		In the far-field propagation regime, which is when we have $A_tA_r/(\lambda L)^2 \ll 1$,
		only two orthogonal spatial modes (one for each polarization degree of freedom)
		will have significant power transmissivity. 
		We will analyze the channel for a single mode (one choice of polarization).

		The channel input is an electromagnetic field mode
		with annihilation operator $\hat{a}$, and the channel output
		is another mode with annihilation operator $\hat{b}$.
		The channel map is described by:
		\be
			\hat{b}\ = \ \sqrt{\eta}\ \hat{a} \ +\  \sqrt{ 1 -\eta}\ \hat{\nu},
		\ee
		in which $\hat{\nu}$ is associated with the noise of the
		environment and the parameter $\eta$, $0 \leq \eta \leq 1$, 
		models the channel transmissivity.
		
		We say that a channel is \emph{pure-loss} if the environmental noise  $\hat{\nu}$ 
		is in the vacuum state $\ketbra{0}{0}$.
		A channel has \emph{thermal noise} if the mode $\hat{\nu}$ 
		is in the thermal state:
		\be
			\rho_t = \int d^2\alpha \frac{ \exp{ - |\alpha|^2/N_B} }{ \pi N_B } \ \ketbra{\alpha}{\alpha},
			\label{eqn:themral-state-alpha}
		\ee
		which is Gaussian mixture of coherent states with average photon number $N_B > 0$.
		One can also write the thermal state in the number basis as follows: 
		\be
			\rho_t =  \frac{ 1 }{ N_B +1  } 
			\sum_{n=0}^{\infty} 
			\left( \frac{ N_B }{ N_B +1  } \right)^n
			\ketbra{n}{n}.
			\label{eqn:themral-state-n-basis}
		\ee
		
		\vspace{-2mm}		
		
	\subsection{Encoding}
		\vspace{-2mm}
	
		We will use coherent state encoding of the information at the transmitter.
		The codebook consists of tensor products of vacuum states displaced
		randomly and independently by an amount drawn from a distribution $p_\alpha$:
                \begin{align*}
                	\alpha^n \sim \prod^n p_\alpha \ \
			\to \ \ 
			\left\vert \alpha_{1}\alpha_{2}\cdots \alpha_{n}\right\rangle  
			\equiv D(\alpha_1)\ket{0} \otimes D(\alpha_2)\ket{0} \otimes \cdots
			\otimes D(\alpha_n)\ket{0}.
	         \end{align*}
		This encoding strategy is chosen because it is simple to implement in practice,
		and also because 		
		%	since it does not require the displaced coherent states are the most classical 
		%	does not call for the entangled codewords 
		%	Coherent state encoding 
		it is known that it suffices to achieve the ultimate capacity of the bosonic channel \cite{GGLMSY04}.
		
		When homodyne detection will be used at the receiver,
		we will encode the information using only the $x$ quadrature.
		The displacements are chosen according to:
		\be
			\alpha \sim \mathcal{N}_{\mathbb{R}}\left( 0, N_S \right).
			%			 = \frac{1}{\pi N_S} \exp{ -|\alpha|^2/N_S }.
			\label{eq:real-in-prob-dist}
		\ee
		%
		%		The resulting codebook 
		%		by  states $n$-states $\{ x^n(m) \}$, $m \in [1:2^{nR}]$,
		%		
		The distribution is chosen so that it satisfies the constraint
		on the average number of input photons $\langle |\alpha|^2 \rangle \leq N_S$,
		which is the quantum analogue of the input power constraint
		for the AWGN channel.
		%		The resulting codewords symbol

		For heterodyne and joint detection, we will use
		both quadratures and choose the displacements according to
		a circularly-symmetric complex-valued Gaussian distribution:
		\be
			\alpha \sim \mathcal{N}_{\mathbb{C}}\left( 0, N_S/2 \right).
			%			 = \frac{1}{\pi N_S} \exp{ -|\alpha|^2/N_S }.
			\label{eq:cplx-in-prob-dist}
		\ee

	\subsection{Homodyne detection}
	
		Homodyne detection consists of combining on a beamsplitter 
		the incoming light and a local oscillator signal and measuring the resulting difference of the intensities.
		By tuning the relative phase between the incoming signal
		and the local oscillator it is possible to measure the incoming
		photons in any quadrature.

		When coherent state encoding is used with displacement values
		chosen as in \eqref{eq:real-in-prob-dist} and homodyne detection is used,
		the resulting channel is Gaussian:
		\begin{align*}
			Y 	&=   \sqrt{\eta} \alpha  + Z_{\textrm{hom}} ,
		\end{align*}
		where $Z_{\textrm{hom}} \sim \mathcal{N}_{\mathbb{R}}\left(0, \left(  2(1-\eta)N_{B}+1\right) / 4 \right)$.
		The \textquotedblleft$+1$\textquotedblright\ term in the noise variance
		arises physically from the zero-point fluctuations of the vacuum. 
		
		We can now use the general formula for the capacity of the AWGN channel from 
		\eqref{eq:gauss-chan} to obtain the capacity with homodyne detection:
		\be
			C_{\mathrm{hom}}=\frac{1}{2}\log\left(  1+\frac{4\eta{N_S} }{ 2(1-\eta)N_{B}+1  } \right) \ \ \textrm{ bits/use}.
		\ee

	\subsection{Heterodyne detection}

		The heterodyne detection strategy attempts to measure the incoming
		light in both quadratures.
		The sender inputs a coherent state  $\left\vert
		\alpha\right\rangle$ with $\alpha \in\mathbb{C}$.
		%		and that the receivers both
		%		perform heterodyne detection. 
		Heterodyne detection of the channel output 
		results in a classical complex Gaussian channel, 
		where the receiver output is a complex random variable $Y$
		described by:
		\begin{align}
			Y &=  \sqrt{\eta} \alpha + Z_{\textrm{het}},
		%		\operatorname{Re}\left\{  Z \right\}  
		%			&\sim\mathcal{N}_{\mathbb{R}}\left(  \sqrt{\eta}\operatorname{Re}\left\{  \alpha \right\},\ 
		%							  \left( (1 - \eta)N_{B}+1\right) \!/  2  \right), \\
		%		\operatorname{Im}\left\{  Z \right\}  
		%			&\sim\mathcal{N}_{\mathbb{R}}\left(  \sqrt{\eta}\operatorname{Im}\left\{  \alpha \right\},\ 
		%							  \left(  (1 - \eta)N_{B}+1\right) \!/  2  \right), 
		\end{align}
		%		where $m\in\left\{  1,2\right\}  $, $\mu_{1}\equiv \sqrt{\eta_{11}}\operatorname{Re}\left\{  \alpha_{1}\right\}  +\sqrt{\eta_{21}}
		%		\operatorname{Re}\left\{  \alpha_{2}\right\}  $, $\mu_{2}\equiv\sqrt{\eta
		%		_{12}}\operatorname{Re}\left\{  \alpha_{1}\right\}  +\sqrt{\eta_{22}%
		%		}\operatorname{Re}\left\{  \alpha_{2}\right\}  $, and the imaginary parts of
		%		$Z_{1}$ and $Z_{2}$ are distributed with the same variance as their real
		%		parts, and their respective means are $\sqrt{\eta_{11}}\operatorname{Im}%
		%		\left\{  \alpha_{1}\right\}  +\sqrt{\eta_{21}}\operatorname{Im}\left\{
		%		\alpha_{2}\right\}  $ and $\sqrt{\eta_{12}}\operatorname{Im}\left\{
		%		\alpha_{1}\right\}  +\sqrt{\eta_{22}}\operatorname{Im}\left\{  \alpha
		%		_{2}\right\}  $. 
		where $Z_{\textrm{het}} \sim \mathcal{N}_{\mathbb{C}}\left( 0, \left(  (1 - \eta)N_{B}+1\right)\!/2 \right)$.
		The capacity formula for this choice of detection strategy is given by:
		\be
			C_{\mathrm{het}}=\log\left(  1+ \frac{ \eta{N_S} }{  (1-\eta)N_{B}+1 }\right) \ \ \textrm{bits/use}.
		\ee
		The factor of 1/2 in the noise variances is due to the
		attempt to measure both quadratures of the field simultaneously~\cite{S09}.

	\subsection{Joint detection}

	%		The capacity of a single-mode pure-loss bosonic channel is equal to its
	%		maximum Holevo information~\cite{GGLMSY04}:%
	%		\begin{equation}
	%			g\left(  \eta N_S\right)  \equiv\left(  \eta N_S+1\right)  \log\left(
	%			\eta N_S+1\right)  -\eta N_S\log\left(  \eta N_S\right)
	%			{\text{   bits/use}},\label{eq:g-function}%
	%		\end{equation}
	%		where $\eta$ is the transmissivity and $N_S$ is the constraint on the
	%		average number of photons available at the encoder per mode~\cite{GGLMSY04}.
		%			\footnote{`Mode' here implies a temporal mode (time-orthogonal pulses).
		%			Achievable rates in this paper are in nats per channel use. We can multiply
		%			these by an actual modulation bandwidth, $\Delta\nu$ Hz to get actual data
		%			rates.} 
		The capacity of the single-mode lossy bosonic channel with thermal
		background noise is thought to be equal to the channel's Holevo information:%
		\begin{equation}
			\chi \ \equiv \ g\!\left( \eta N_S +\left(  1-\eta\right)  N_{B}\right)  -g\!\left(  \left(
			1-\eta\right)  N_{B}\right) \ \ \   {  \textrm{     bits/use}},\label{eq:thermal-ach-rate}%
		\end{equation}
		where $N_S$ and $N_{B}$ are the mean photon numbers per mode for the input signal
		and the thermal noise,
		and $g\!\left( N\right)  \equiv\left(  N+1\right)  \log\left(N+1\right)  -N\log\left(  N\right)$ 
		is the entropy of a thermal state with mean photon number $N$.
		The latter formula is easily obtained from \eqref{eqn:themral-state-n-basis}:
		\begin{align*}
			h(\rho_t) 	& = - \Tr\!\left[ \rho_t \log  \rho_t \right] 	\\
					& = -  \sum_{n=0}^\infty
								\frac{ 1 }{ N +1  } 
								\left( \frac{ N }{ N +1  } \right)^n
								\log \left( 
									\frac{ 1 }{ N +1  } 
									\left( \frac{ N }{ N +1  } \right)^n
									\right)
								 \\
					& =  \sum_{n=0}^\infty
								\frac{ 1 }{ N +1  } 
								\left( \frac{ N }{ N +1  } \right)^n
								\bigg[ 
									- n \log N
									+ (n + 1 )\log (N+1)
								\bigg] \\
					& =  	 (N+ 1)\log (N+1) - N \log N = g(N).
		\end{align*}
		%Analogous reasoning to \eqref{eqn:gaussian-capacity}
		This capacity formula from equation \eqref{eq:thermal-ach-rate} assumes a long-standing conjecture 
		regarding the minimum-output entropy of the thermal noise channel~\cite{GGLMSY04,GHLM10}. 

		It is known that 	joint-detection (collective) measurements over long codeword
		blocks are necessary to achieve the rates in equation \eqref{eq:thermal-ach-rate} 
		for both the pure-loss and the thermal-noise lossy bosonic channel \cite{guha2011structured,guha2012explicit}.
		Note, however, that %entangled inputs and other 
		quantum states of light
		are not necessary to achieve the rate $\chi$;
		coherent-state encoding is sufficient.

	\begin{figure}[htb]
		\begin{center}
		\includegraphics[width=0.5\textwidth]{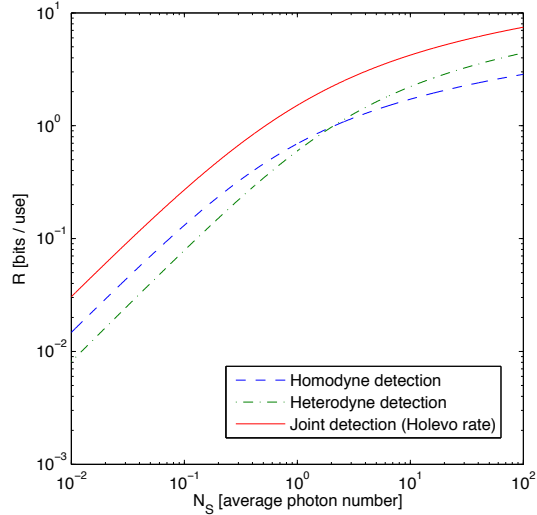}
		\caption{The achievable rates for the different decoding strategies:
		homodyne, heterodyne and joint detection in the low photon number regime
		$0.01 \leq \langle |\alpha|^2 \rangle = N_S \leq 100$.
		The channel has $\eta =0.9$ and $N_B = 1$.
		The joint detection strategy outperforms the classical strategies in which
		the outputs of the channel are measured individually, cf. Figure~\ref{fig:PtoP_example}.
		%		on page \pageref{fig:PtoP_example}.
		}			
		\label{fig:bosonic-capacities-plot}
		\end{center}
	\end{figure}

The rates achievable by the three different detection strategies
are illustrate in Figure~\ref{fig:bosonic-capacities-plot},
and on this we conclude our review of point-to-point bosonic communication.
In the next section, we consider the bosonic interference
channel with thermal-noise, particularly in the context of 
free-space terrestrial optical communications. 
%%We assume a coherent state encoding throughout this paper. 
%We find achievable rate regions for the two-sender two-receiver channel
%with coherent-detection receivers, for the \textquotedblleft very
%strong\textquotedblright\ and \textquotedblleft strong\textquotedblright%
%\ interference regimes. We find the rate region achievable with a
%joint-detection receiver (JDR) under \textquotedblleft very
%strong\textquotedblright\ interference. 
%We also determine a \textquotedblleft
%min-entropy\textquotedblright\ JDR-achievable rate region for the
%\textquotedblleft strong\textquotedblright\ interference regime by exploiting
%a recent result from Ref.~\cite{FHSSW11}, and 
%we find a conjectured JDR-achievable region were a conjecture from Ref.~\cite{FHSSW11} regarding
%quantum simultaneous decoding true (c.f, page 4-15 of
%Ref.~\cite{el2010lecture}\ for a classical simultaneous decoder). N
%ext, we
%evaluate the Han-Kobayashi rate region for homodyne and heterodyne detection
%in the general case, and we show an achievable rate region using a
%\textquotedblleft min-entropy\textquotedblright\ quantum simultaneous decoder
%from Ref.~\cite{FHSSW11}. Finally, we conjecture a Han-Kobayashi-like
%achievable rate region using Conjecture~2 from Ref.~\cite{FHSSW11}.
%Our results here differ from those in Ref.~\cite{FHSSW11}---there some of us
%considered the general quantum interference channel whereas here we
%consider specifically a bosonic interference channel.

%

\section{Free-space optical interference channels}
														\label{sec:bosonicIC}
															
	Consider now a scenario similar to the one described in Figure~\ref{fig:apertures}, 
	but now assume that there are two senders and two receivers.
	%	We will assume that the line-of-sight path exists and 
	%	 a range-$L$  free-space optical channel with hard
	%	circular transmit and receive apertures of areas $A_{t}$ and $A_{r}$
	%	respectively. Assume $\lambda$-center-wavelength quasi-monochromatic
	%	transmission. In the near-field propagation regime (Fresnel number product,
	%	$D_{f}\equiv A_{t}A_{r}/(\lambda{L})^{2}\gg1$), a normal-mode decomposition of
	%	the free-space optical channel yields $M\approx2D_{f}$ orthogonal spatio-polarization
	%	transmitter-to-receiver modes ($D_{f}$ spatial modes, each of two orthogonal
	%	polarizations) with near-unity transmitter-to-receiver power transmissivities
	%	($\eta_{m}\approx1$). In the far-field propagation regime ($D_{f}\ll1$), only
	%	two orthogonal spatial modes (one of each orthogonal polarization) have
	%	appreciable power transmissivity ($\eta\approx D_{f}$ for each mode).
	%
	Sender~1 modulates her information on the first spatial mode 
	of the transmitter-pupil, and Receiver~1 separates and demodulates
	information from the corresponding receiver-pupil spatial mode. 
	With perfect spatial-mode control at the transmitter and perfect mode separation at the
	receiver, the orthogonal spatial modes can be thought of as independent
	parallel channels with no crosstalk. However, imperfect (slightly
	non-orthogonal) mode generation or imperfect mode separation can result in
	crosstalk (interference) between the different channels.
	%Furthermore, atmospheric
	%turbulence effects, with or without adaptive optics at the receivers, may
	%result in stochastic crosstalk behavior.

	We will model the bosonic interference channel as a 
	passive linear mixing of the input modes 
	along with a 
	thermal environment adding zero-mean, isotropic Gaussian noise.
	%In this paper, we will only consider
	%passive and fixed (deterministic) interference between channels. Including
	%amplifiers (phase-insensitive and phase-sensitive) can be incorporated into
	%the model is straightforward, but unnecessary for this FSO setup. However,
	%with modeling of the spatial coherence functions of the turbulence-induced
	%crosstalk for a fixed set of modes (e.g., Laguerre-Gaussian or
	%Hermite-Gaussian modes), it would be interesting to compute the ergodic and
	%outage capacity regions.
	%\begin{figure}[ptb]
	%\centering
	%\includegraphics[width=0.5\columnwidth]{channelmodel.png}\caption{The
	%$M$-transmitter $M$-receiver bosonic interference channel model, with linear
	%loss and background thermal noise. $\left\{ {\hat a}_{k}\right\} $ and
	%$\left\{ {\hat b}_{k}\right\} $ are transmitter and receiver modes
	%respectively, and $\left\{ {\hat\nu}_{k}, {\hat\mu}_{k}\right\} $ are
	%environment modes, taken to be in zero-mean isotropic thermal states.}%
	%\label{fig:channelmodel}%
	%\end{figure}
	%
	The channel model is given by:
	\begin{align}
	\hat{b}_{1} &  =\sqrt{\eta_{11}}\hat{a}_{1}+\sqrt{\eta_{21}}\hat{a}_{2}%
	+\sqrt{\bar{\eta}_{1}}\hat{\nu}_{1},\\
	\hat{b}_{2} &  =\sqrt{\eta_{12}}\hat{a}_{1}-\sqrt{\eta_{22}}\hat{a}_{2}%
	+\sqrt{\bar{\eta}_{2}}\hat{\nu}_{2},
	\end{align}
	where $\eta_{11}, \eta_{12}, \eta_{21}, \eta_{22}, \bar{\eta}_{1}, \bar{\eta}_{2} \in \mathbb{R}_+$,
	$\sqrt{\eta_{11} \eta_{12}} = \sqrt{\eta_{21} \eta_{22}}$,
	$
	\bar{\eta}_{1}\equiv1-\eta_{11}-\eta_{21}$, and $\bar{\eta}_{2}%
	\equiv1-\eta_{12}-\eta_{22}$.
	The following conditions ensure that the network is passive:%
	\[
	\eta_{11}+\eta_{12}\leq1,\ \ \ \eta_{11}+\eta_{21}\leq1,\ \ \ \eta_{22}%
	+\eta_{21}\leq1,\ \ \ \eta_{22}+\eta_{12}\leq1.
	\]
	We constrain the mean photon number of the
	transmitters\ $\hat{a}_{1}$ and $\hat{a}_{2}$ to be $N_{S_{1}}$ and $N_{S_{2}%
	}$ photons per mode, respectively. The environment modes ${\hat{\nu}}_{1}$ and
	${\hat{\nu}}_{2}$ are in statistically independent zero-mean thermal states
	with respective mean photon numbers $N_{B_{1}}$ and $N_{B_{2}}$ per mode \cite{S09}.

	\subsection{Detection strategies}
	
		For a coherent state encoding and coherent\footnote{We refer to both homodyne and heterodyne strategies as \emph{coherent}
		strategies.} detection at both
		receivers, the above model is a special case of the Gaussian
		interference channel, and we can study its capacity regions in various
		settings by applying the known classical results from \cite{carleial1975case, sato1981capacity} 
		and \cite{HK81}. 
		
		If the senders prepare their inputs in coherent states $\left\vert \alpha
		_{1}\right\rangle $ and $\left\vert \alpha_{2}\right\rangle $, with
		$\alpha_{1},\alpha_{2}\in~{\mathbb{R}}$, and both receivers perform
		$x$-quadrature homodyne detection on their respective modes, the result is a
		classical Gaussian interference channel~\cite{S09}, where Receivers~1 and 2
		obtain respective conditional Gaussian random variables $Y_{1}$ and $Y_{2}$
		distributed as
		\begin{align*}
		Y_{1} &  \sim\mathcal{N}_{\mathbb{R}}\left(  \sqrt{\eta_{11}}\alpha_{1}+\sqrt{\eta_{21}%
		}\alpha_{2},\ \left(  2\bar{\eta}_{1}N_{B_{1}}+1\right) / 4 \right)
		,\\
		Y_{2} &  \sim\mathcal{N}_{\mathbb{R}}\left(  \sqrt{\eta_{12}}\alpha_{2}+\sqrt{\eta_{22}%
		}\alpha_{1},\ \left(  2\bar{\eta}_{2}N_{B_{2}}+1\right) / 4  \right)  ,
		\end{align*}
		where the \textquotedblleft$+1$\textquotedblright\ term in the noise variances
		arises physically from the zero-point fluctuations of the vacuum. Suppose that
		the senders again encode their signals as coherent states $\left\vert
		\alpha_{1}\right\rangle $ and $\left\vert \alpha_{2}\right\rangle $, but this
		time with $\alpha_{1},\alpha_{2}\in\mathbb{C}$, and that the receivers both
		perform heterodyne detection. This results in a classical complex Gaussian
		interference channel~\cite{S09}, where Receivers~1 and 2 detect respective
		conditional complex Gaussian random variables $Z_{1}$ and $Z_{2}$, whose real
		parts are distributed as
		\be
		\operatorname{Re}\left\{  Z_{m}\right\}  \sim\mathcal{N}_{\mathbb{R}}\left(  \mu
		_{m},\ \left(  \bar{\eta}_{m}N_{B_{m}}+1\right) \!/  2  \right)  ,
		\ee
		where $m\in\left\{  1,2\right\}  $, $\mu_{1}\equiv\sqrt{\eta_{11}%
		}\operatorname{Re}\left\{  \alpha_{1}\right\}  +\sqrt{\eta_{21}}%
		\operatorname{Re}\left\{  \alpha_{2}\right\}  $, $\mu_{2}\equiv\sqrt{\eta
		_{12}}\operatorname{Re}\left\{  \alpha_{1}\right\}  +\sqrt{\eta_{22}%
		}\operatorname{Re}\left\{  \alpha_{2}\right\}  $, and the imaginary parts of
		$Z_{1}$ and $Z_{2}$ are distributed with the same variance as their real
		parts, and their respective means are $\sqrt{\eta_{11}}\operatorname{Im}%
		\left\{  \alpha_{1}\right\}  +\sqrt{\eta_{21}}\operatorname{Im}\left\{
		\alpha_{2}\right\}  $ and $\sqrt{\eta_{12}}\operatorname{Im}\left\{
		\alpha_{1}\right\}  +\sqrt{\eta_{22}}\operatorname{Im}\left\{  \alpha
		_{2}\right\}  $. The factor of 1/2 in the noise variances is due to the
		attempt to measure both quadratures of the field simultaneously~\cite{S09}.
		
		% FIXME: Find Mark's comment in thesis and incorporate 
		% FIXME: remove repetitive phrase here and in p-to-p setting

%!TEX root = thesis.tex

\section{Very strong interference case}
	
	\label{sec:very-strong-int}

	%Carleial determined the capacity region of a
	%classical Gaussian interference channel in the case of \textquotedblleft very
	%strong\textquotedblright\ interference \cite{C75}. Suppose that $X_{1}$ and
	%$X_{2}$ are the input random variables for Senders~1 and 2 and that $B_{1}$
	%and $B_{2}$ represent the outputs for Receivers~1 and 2, respectively. Then
	%the most general way to state the condition for very strong interference is
	%that the following information inequalities should hold for all input
	%distributions $p_{X_{1}}\!\left(  x_{1}\right)  $ and $p_{X_{2}}\!\left(
	%x_{2}\right)  $~\cite{el2010lecture}:%
	%\begin{align}
	%I\left(  X_{1};B_{1}|X_{2}\right)   &  \leq I\left(  X_{1};B_{2}\right)
	%,\label{eq:VSI-1-bo}\\
	%I\left(  X_{2};B_{2}|X_{1}\right)   &  \leq I\left(  X_{2};B_{1}\right)  .
	%\label{eq:VSI-2-bo}%
	%\end{align}
	%Carleial proved the surprising result \cite{C75} that the capacity region of
	%the interference channel under this setting is the convex closure of positive
	%rate pairs $\left(  R_{1},R_{2}\right)  $ such that%
	%\begin{equation}
	%R_{1}\leq I\left(  X_{1};B_{1}|X_{2}\right)  ,\ \ \ \ R_{2}\leq I\left(
	%X_{2};B_{2}|X_{1}\right)  , \label{eq:carleial-rate-1}%
	%\end{equation}
	%for some input distributions $p_{X_{1}}\!\left(  x_{1}\right)  $ and $p_{X_{2}%
	%}\!\left(  x_{2}\right)  $. The coding strategy is simply for each receiver to
	%decode first what the other sender is transmitting, remove this signal, and
	%then decode the message intended for him.

	Recall the setting of the interference channel which we discussed
	in Section~\ref{sec:rate-succ-decoding-IC}, where the crosstalk between
	the communication links is so strong that the receivers can fully decode
	the interfering signal and ``subtract'' it from the received signal 
	to completely cancel its effects.
	The conditions in (\ref{eq:VSI-1}) and (\ref{eq:VSI-2}) translate to the following
	ones for the case of coherent-state encoding and coherent detection:%
	\begin{align*}
	\frac{\eta_{21}}{\eta_{22}}  &  \geq\frac{4^{i}\eta_{11}N_{S_{1}}+2^{i}%
	\bar{\eta}_{1}N_{B_{1}}+1}{2^{i}\bar{\eta}_{2}N_{B_{2}}+1},\\
	\frac{\eta_{12}}{\eta_{11}}  &  \geq\frac{4^{i}\eta_{22}N_{S_{2}}+2^{i}%
	\bar{\eta}_{2}N_{B_{2}}+1}{2^{i}\bar{\eta}_{1}N_{B_{1}}+1},
	\end{align*}
	and the capacity region becomes%
	\begin{align}
	R_{1}  &  \leq\frac{1}{2^{i}}\log\left(  1+\frac{4^{i}\eta_{11}N_{S_{1}}}%
	{2^{i}\bar{\eta}_{1}N_{B_{1}}+1}\right)  ,\label{eq:hom-VSI-cap-1}\\
	R_{2}  &  \leq\frac{1}{2^{i}}\log\left(  1+\frac{4^{i}\eta_{22}N_{S_{2}}}%
	{2^{i}\bar{\eta}_{2}N_{B_{2}}+1}\right)  , \label{eq:hom-VSI-cap-2}%
	\end{align}
	where $i=1$ for homodyne detection and $i=0$ for heterodyne detection.

	\begin{figure}[htb]
	\begin{center}
	\includegraphics[
	%natheight=2.753600in,
	%natwidth=5.813300in,
	%height=1.644in,
	width=0.8\textwidth
	]{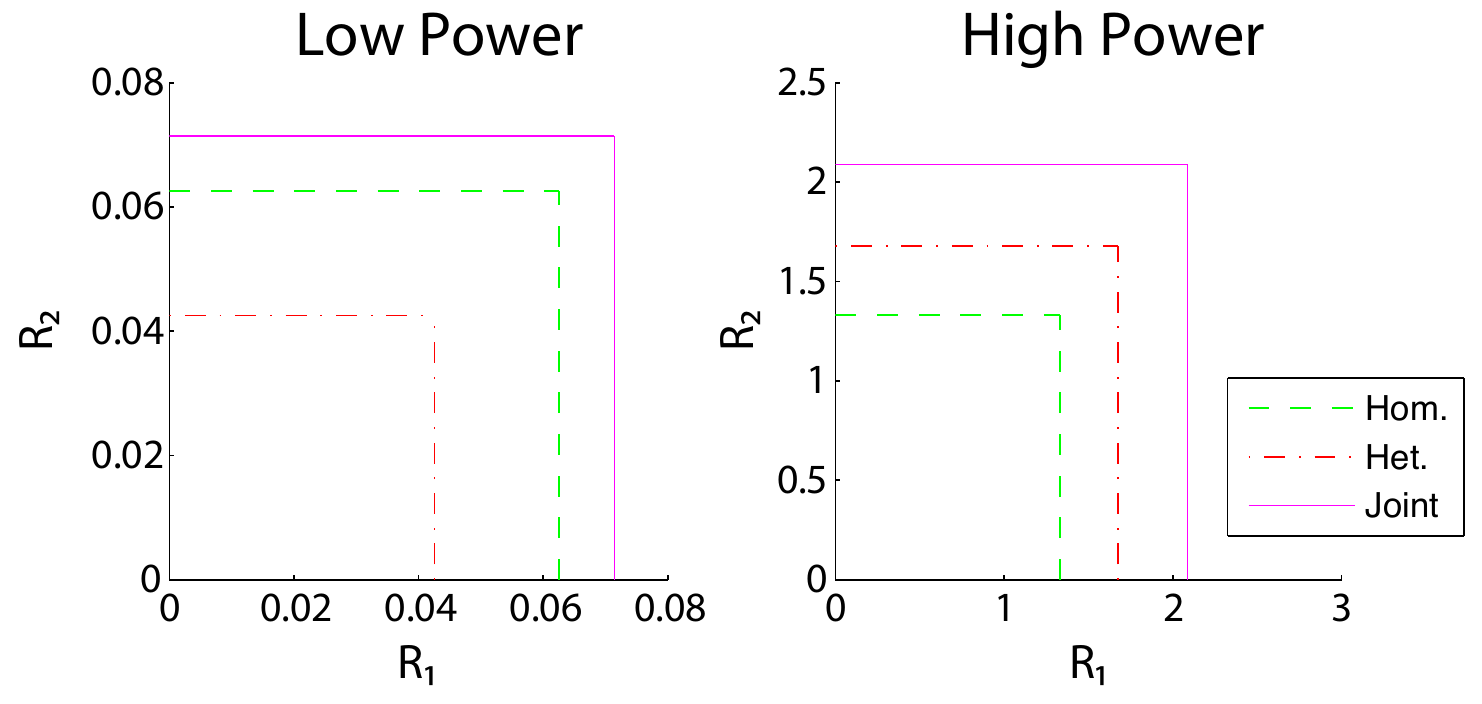}
	\end{center}
	\caption{Capacity regions for coherent-state encodings and coherent detection,
	and achievable rate regions for coherent-state encodings and joint detection
	receivers---both with $\eta_{11}=\eta_{22}= 1 / 16$ and $\eta_{12}%
	=\eta_{21}=1/2$ (\textquotedblleft very strong\textquotedblright%
	\ interference for coherent detection). 
	The LHS\ displays these regions in a low-power
	regime with $N_{S_{1}}=N_{S_{2}}=1$ and $N_{B_{1}}=N_{B_{2}}=1$, and the
	RHS\ displays these regions in a high-power regime where $N_{S_{1}}=N_{S_{2}%
	}=100$. Homodyne detection outperforms heterodyne detection in the low-power
	regime because it has a reduced detection noise, while heterodyne detection
	outperforms homodyne detection in the high-power regime because its has an
	increased bandwidth.}%
	\label{fig:carleial}%
	\end{figure}

	We can also consider the case when the senders employ coherent-state encodings
	and the receivers employ a joint detection strategy on all of their respective
	channel outputs. The conditions in (\ref{eq:VSI-1}) and (\ref{eq:VSI-2}) readily
	translate to this quantum setting where we now consider $B_{1}$ and $B_{2}$ to
	be quantum systems, and the information quantities in (\ref{eq:VSI-1}) and (\ref{eq:VSI-2}) become Holevo informations. 
	%Theorem~6\ in Ref.~\cite{FHSSW11}\ provides a
	%simple proof that Carleial's result applies in the quantum domain to these
	%Holevo informations. So, 
	The conditions in (\ref{eq:VSI-1}) and (\ref{eq:VSI-2})
	when restricted to coherent-state encodings translate to:
	\ifthenelse{\boolean{BOOKFORM}}
	{
	\begin{align*}
		& g\!\left(  \eta_{22}N_{S_{2}}+\bar{\eta}_{2}N_{B_{2}}\right)  -g\!\left(
		\bar{\eta}_{2}N_{B_{2}}\right)  \\
		&  \qquad\qquad \leq g\!\left(  \eta_{21}N_{S_{2}}+\eta_{11}N_{S_{1}}+\bar{\eta}_{1}N_{B_{1}%
		}\right)  -g\!\left(  \eta_{11}N_{S_{1}}+\bar{\eta}_{1}N_{B_{1}}\right)
		,
	%	\label{eq:VSI-g-1} 
		\nonumber \\
		& g\!\left(  \eta_{11}N_{S_{1}}+\bar{\eta}_{1}N_{B_{1}}\right)  -g\!\left(
		\bar{\eta}_{1}N_{B_{1}}\right)   \\
		& \qquad\qquad \leq g\!\left(  \eta_{12}N_{S_{1}}+\eta_{22}N_{S_{2}}+\bar{\eta}_{2}N_{B_{2}%
		}\right)  -g\!\left(  \eta_{22}N_{S_{2}}+\bar{\eta}_{2}N_{B_{2}}\right).
	%	\label{eq:VSI-g-2} 
		\nonumber
	\end{align*}	
	}
	{
	\begin{align}
		g\!\left(  \eta_{22}N_{S_{2}}+\bar{\eta}_{2}N_{B_{2}}\right)  -g\!\left(
		\bar{\eta}_{2}N_{B_{2}}\right)  
		&  \leq g\!\left(  \eta_{21}N_{S_{2}}+\eta_{11}N_{S_{1}}+\bar{\eta}_{1}N_{B_{1}%
		}\right)  -g\!\left(  \eta_{11}N_{S_{1}}+\bar{\eta}_{1}N_{B_{1}}\right)
		,
	%	\label{eq:VSI-g-1} 
		\nonumber \\
		g\!\left(  \eta_{11}N_{S_{1}}+\bar{\eta}_{1}N_{B_{1}}\right)  -g\!\left(
		\bar{\eta}_{1}N_{B_{1}}\right)  
		&  \leq g\!\left(  \eta_{12}N_{S_{1}}+\eta_{22}N_{S_{2}}+\bar{\eta}_{2}N_{B_{2}%
		}\right)  -g\!\left(  \eta_{22}N_{S_{2}}+\bar{\eta}_{2}N_{B_{2}}\right).
	%	\label{eq:VSI-g-2} 
		\nonumber
	\end{align}
	}
	where $g\!\left(  N\right)  \equiv\left(  N+1\right)  \log\left(
	N+1\right)  -N\log\left(  N\right)$ is the entropy of a thermal state with mean photon number $N$.
	
	An achievable rate region is then%
	\begin{align*}
	R_{1} &  \leq g\!\left(  \eta_{11}N_{S_{1}}+\bar{\eta}_{1}N_{B_{1}}\right)
	-g\!\left(  \bar{\eta}_{1}N_{B_{1}}\right)  ,\\
	R_{2} &  \leq g\!\left(  \eta_{22}N_{S_{2}}+\bar{\eta}_{2}N_{B_{2}}\right)
	-g\!\left(  \bar{\eta}_{2}N_{B_{2}}\right)  . 
	\end{align*}
	% where $g(\cdot)$ is defined in (\ref{eq:g-function}).
	These rates are achievable using a coherent-state encoding, but are not necessarily optimal
	(though they would be optimal if the minimum-output entropy conjecture from
	Refs.~\cite{GGLMSY04,GHLM10}\ were true). Nevertheless, these rates always
	beat the rates from homodyne and heterodyne detection.
	Figure~\ref{fig:carleial}\ shows examples of the achievable rate regions 
	for a bosonic interference channel with very strong interference.
	Both the low-power and high-power regimes are considered.
	Observe that the relative superiority of homodyne and heterodyne detection
	depend on power constraint and that the joint detection strategy always outperforms them.

\section{Strong interference case}
												\label{sec:bos-strong-int}
	
	Sato~\cite{sato1981capacity} 
	%and Han-Kobayashi~\cite{HK81} independently 
	determined the capacity of the classical Gaussian interference
	channel under \textquotedblleft strong\textquotedblright\ interference. 
	Theorem~\ref{thm:strong-int} from Chapter~\ref{chapter:IC} gives us
	the capacity region for quantum interference channels with strong interference.
	We will now apply these results in the context of the bosonic interference channel.

	The conditions for a channel to exhibit \textquotedblleft strong\textquotedblright\ interference 
	are given in equations (\ref{eq:SI-1}) and (\ref{eq:SI-2}), 
	%	if the
	%	following information inequalities hold for all $p_{X_{1}}\!\left(
	%	x_{1}\right)  $ and $p_{X_{2}}\!\left(  x_{2}\right)  $~\cite{el2010lecture}:%
	%	\begin{align}
	%	I\left(  X_{1};B_{1}|X_{2}\right)   &  \leq I\left(  X_{1};B_{2}|X_{2}\right)
	%	,\label{eq:SI-1-bo}\\
	%	I\left(  X_{2};B_{2}|X_{1}\right)   &  \leq I\left(  X_{2};B_{1}|X_{1}\right)
	%	.\label{eq:SI-2-bo}%
	%	\end{align}
	%	The capacity region of the classical interference channel under this setting
	%	is the convex closure of positive rate pairs $\left(  R_{1},R_{2}\right)  $
	%	such that \cite{sato1981capacity,HK81,el2010lecture}:%
	%	\begin{gather}
	%	R_{1}\leq I\left(  X_{1};B_{1}|X_{2}\right)  ,\ \ \ R_{2}\leq I\left(
	%	X_{2};B_{2}|X_{1}\right)  ,\label{eq:strong-rate-1}\\
	%	R_{1}+R_{2}\leq\min\left\{  I\left(  X_{1}X_{2};B_{1}\right)  ,I\left(
	%	X_{1}X_{2};B_{2}\right)  \right\}  .\label{eq:strong-rate-3}%
	%	\end{gather}
	%%	The conditions in (\ref{eq:SI-1}-\ref{eq:SI-2})
	and they translate to the following
	ones for coherent-state encoding and coherent detection:%
	\[
	\frac{\eta_{21}}{\eta_{22}}\geq\frac{2^{i}\bar{\eta}_{1}N_{B_{1}}+1}{2^{i}%
	\bar{\eta}_{2}N_{B_{2}}+1},\ \ \ \ \ \ \ \ \frac{\eta_{12}}{\eta_{11}}%
	\geq\frac{2^{i}\bar{\eta}_{2}N_{B_{2}}+1}{2^{i}\bar{\eta}_{1}N_{B_{1}}+1},
	\]
	and the capacity region becomes:
	%	has the two inequalities in (\ref{eq:hom-VSI-cap-1}%
	%	-\ref{eq:hom-VSI-cap-2}) and an additional bound on the sum rate:%
	\begin{align}
	R_{1}  &  \leq\frac{1}{2^{i}}\log\left(  1+\frac{4^{i}\eta_{11}N_{S_{1}}}%
	{2^{i}\bar{\eta}_{1}N_{B_{1}}+1}\right)  ,\label{eq:hom-SI-cap-1}\\
	R_{2}  &  \leq\frac{1}{2^{i}}\log\left(  1+\frac{4^{i}\eta_{22}N_{S_{2}}}%
	{2^{i}\bar{\eta}_{2}N_{B_{2}}+1}\right)  , \label{eq:hom-SI-cap-2} \\
	R_{1}+R_{2} & \leq\frac{1}{2^{i}}\min\left\{
	\begin{array}
	[c]{c}%
	\log\left(  1+4^{i}\frac{\eta_{11}N_{S_{1}}+\eta_{21}N_{S_{2}}}{2^{i}\bar{\eta
	}_{1}N_{B_{1}}+1}\right)  ,\\
	\ \ \ \log\left(  1+4^{i}\frac{\eta_{22}N_{S_{2}}+\eta_{12}N_{S_{1}}}{2^{i}%
	\bar{\eta}_{2}N_{B_{2}}+1}\right)
	\end{array}
	\right\},
	\end{align}
	where again $i=1$ for homodyne detection and $i=0$ for heterodyne detection.

                \begin{figure}[htb]
                \begin{center}
                \includegraphics[
%                natheight=4.452900in,
%                natwidth=10.773800in,
                % height=1.5411in,
                width=0.43\textwidth                ]{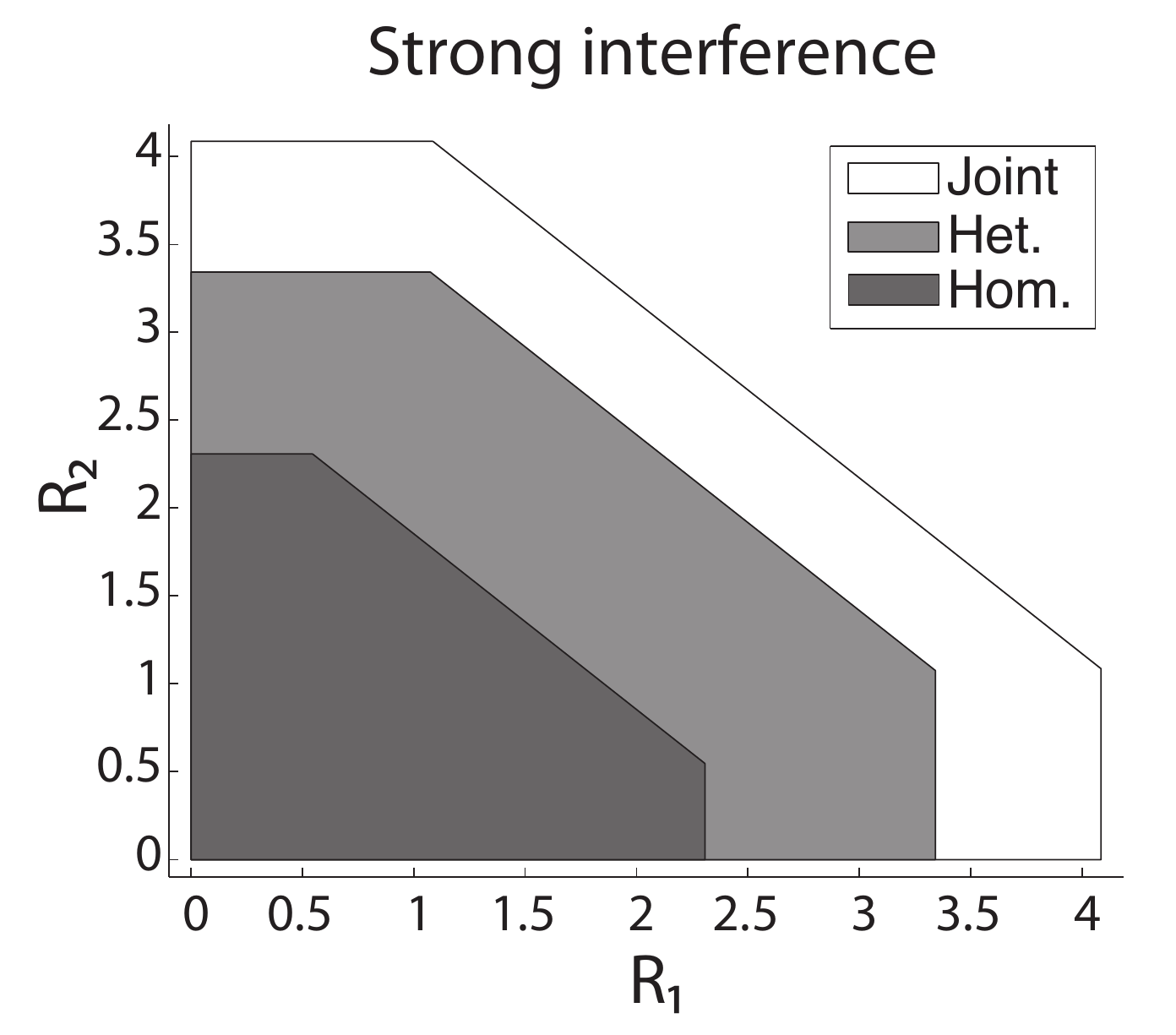}
                \end{center}
                \caption{
                The  figure depicts the \textquotedblleft strong\textquotedblright\ interference capacity 
                regions  in the high-power regime for homodyne and
                heterodyne detection, and joint detection.
		The channel in the figure is in the high-power regime:
                $N_{B_{1}}=N_{B_{2}}=1$, $\eta_{11}=\eta_{22}=0.3$, $\eta
                _{21}=\eta_{12}=0.6$, and $N_{S_{1}}=N_{S_{2}}=100$.
                Heterodyne detection outperforms homodyne detection in this case.
		}%
                \label{fig:strong-int}
                \end{figure}

We can also compute the achievable rate region using the joint detection strategy.
Figure~\ref{fig:strong-int}\ displays the different capacity and achievable
rate regions when a free-space interference channel exhibits \textquotedblleft
strong\textquotedblright\ interference.

\section{Han-Kobayashi rate regions}
	
	The Han-Kobayashi rate region is the largest known achievable rate region for the
	classical interference channel~\cite{HK81}. 
	The region was described in Theorem~\ref{thm:quantum-HK-region},
	and in Section~\ref{sec:QCMGvia2QMACproof} we established the achievability
	of the Chong-Motani-Garg, which is equivalent to the Han-Kobayashi rate region.
	%	The coding strategy to achieve
	%	this region is for each receiver to decode partially the other sender's
	%	message while fully decoding the partner sender's message. With this strategy,
	%	the four parties can choose to take advantage of channel interference while
	%	achieving the task of paired sender-receiver communication.
	%	A compact description of the Han-Kobayashi region comes from its reduction
	%	with a Fourier-Motzkin elimination algorithm~\cite{CMGE08}. It is the convex closure of all positive
	%	rate pairs $\left(  R_{1},R_{2}\right)  $ satisfying the following
	%	inequalities and the inequalities obtained from the ones below by swapping the
	%	indices 1 and 2:%
	%	\begin{align}
	%	R_{1} &  \leq I\left(  U_{1}W_{1};B_{1}|W_{2}\right)  ,\label{eq:HK-1}\\
	%	R_{1} &  \leq I\left(  U_{1};B_{1}|W_{1}W_{2}\right)  +I\left(  W_{1}%
	%	;B_{2}|U_{2}W_{2}\right)  ,\\
	%	R_{1}+R_{2} &  \leq I\left(  U_{1};B_{1}|W_{1}W_{2}\right)  +I\left(
	%	U_{2}W_{2}W_{1};B_{2}\right)  ,\\
	%	R_{1}+R_{2} &  \leq I\left(  U_{1}W_{2};B_{1}|W_{1}\right)  +I\left(
	%	U_{2}W_{1};B_{2}|W_{2}\right)  ,\\
	%	2R_{1}+R_{2} &  \leq I\left(  U_{1};B_{1}|W_{1}W_{2}\right)  +I\left(
	%	U_{1}W_{1}W_{2};B_{1}\right)  \nonumber\\
	%	&  \ \ \ \ \ \ \ \ \ +I\left(  U_{2}W_{1};B_{2}|W_{2}\right)  .\label{eq:HK-6}%
	%	\end{align}
	%	In the above, $U_{m}$ is the \textquotedblleft personal\textquotedblright%
	%	\ random variable of Sender~$m$, and $W_{m}$ is her \textquotedblleft
	%	common\textquotedblright\ random variable.

	The Han-Kobayashi coding strategy readily translates into a strategy for
	coherent-state encoding and coherent detection. Sender~$m$
	shares the total photon number $N_{S_{m}}$ between her personal message and
	her common message. Let $\lambda_{m}$ be the fraction of signal power that
	Sender~$m$ devotes to her personal message, and let $\bar{\lambda}_{m} \equiv (1-\lambda_m)$ denote
	the remaining fraction of the signal power that Sender$~m$ devotes to her common
	message. 
	
%	The inequalities above become the following ones for the case of
%	coherent-state encoding along with coherent detection:

% shortcut defs for HK + HOMODYNE
%%%%%%%%%%%%%%%%%%%%%%%%%%%%%%%%%%%%%%%%%%%%%%%%%%%%%%%%
\def\Pone{   \eta_{11}N_{S_{1}}   }								% total signal
\def\Ponep{   \lambda_1\eta_{11}N_{S_{1}}   }						% my personal signal
\def\Ponec{   \bar{\lambda}_1\eta_{11}N_{S_{1}}   }					% my common signal 
\def\NWone{   \bar{\lambda}_2\eta_{21}N_{S_{2}}   }					% other's common   (useful part of interference)
\def\NUone{   \lambda_2\eta_{21}N_{S_{2}}   }						% other's personal   (non useful interference)
\def\Ione{   \eta_{21}N_{S_{2}}   }								% interference  
\def\None{  \tfrac{1}{4} \left( 2 \bar{\eta}_1N_{B_{1}}+1 \right)  }	% pure noise

%%%%%%%%%%%%%%%%%%%%%%%%%%%%%%%%%%%%%%%%%%%%%%%%%%%%%%%%
\def\Ptwo{   \eta_{22}N_{S_{2}}   }								% total signal
\def\Ptwop{   \lambda_2\eta_{22}N_{S_{2}}   }						% my personal signal
\def\Ptwoc{   \bar{\lambda}_2\eta_{22}N_{S_{2}}   }					% my common signal 
\def\Utwo{   \bar{\lambda}_1\eta_{12}N_{S_{1}}   }					% other's common   (useful part of interference)
\def\NUtwo{   \lambda_1\eta_{12}N_{S_{1}}   }						% other's personal   (non useful interference)
\def\Itwo{   \eta_{12}N_{S_{1}}   }								% interference  
\def\Ntwo{  \frac{2}{4} \left( 1(2-\eta_{22}-\eta_{12})N_{B_{2}}+2 \right)  }	% pure noise

%% CONTINUE

%\subsection{Han-Kobayashi Achievable Rate Region}

\def\HKR{\mathcal{R}_{\text{HK}}}
\def\RSato{\mathcal{R}_{\text{Sato}}}
\def\bbR{\mathbb{R}}

	When Receiver~1 uses homodyne detection to decode the messages,
	we can identify the following components that are part of his received signal:
	\begin{align}
		\Ponep 	&=  	\textrm{ power of own personal message},  \label{eqn:qua-first}\\
		\Ponec 	&=  	\textrm{ power of own common message}, \\
		\Pone 	&=  	\textrm{ total own signal power}, \\
		\Ione 	&=  	\textrm{ total interference power}, \\			
		\NWone 	&= 	\textrm{ useful part of interference (other's common)}, \\
		\NUone 	&= 	\textrm{ non-useful interference (other's personal)}, \\
		N_1 		&=	\None  = 	\textrm{ noise power} \label{eqn:qua-last}, 
	\end{align}
	Similar expressions exist for  Receiver 2.

\def\Ctwo#1#2{\gamma\!\left( \frac{#1}{#2} \right)}

% shortcut defs for HK + HOMODYNE
%%%%%%%%%%%%%%%%%%%%%%%%%%%%%%%%%%%%%%%%%%%%%%%%%%%%%%%%
\def\None{  N_{1} }	% pure noise
\def\Ntwo{  N_{2} }	% pure noise

	Consider now the inequalities (HK1)-(HK9) which define the Han-Kobayashi rate region
	(see page~\pageref{thm:quantum-HK-region}).
	When we evaluate each of the mutual informations  
	for the signal and noise quantities \eqref{eqn:qua-first} - \eqref{eqn:qua-last},
	we obtain the Han-Kobayashi achievable rate region for the bosonic interference channel:
	%can be used to calculate the  when applied to our specific channel become:

	{ \allowdisplaybreaks 
	\ifthenelse{\boolean{BOOKFORM}}{\tiny}{\small}
        \begin{align}
            R_1 		&\leq		\Ctwo{\Pone}{ \NUone + \None } 
            					%I(U_1W_1;B_1|W_2)   							
            					\tag{BHK1} \\
            R_1 		&\leq		\Ctwo{ \Ponep }{ \NUone + \None } + \Ctwo{ \Utwo }{ \NUtwo + \Ntwo } 
            					%I(U_1;B_1|W_1W_2) + I(W_1;B_2|U_2W_2) 
						\tag{BHK2} \\
            R_2 		&\leq		 \Ctwo{ \Ptwo }{ \NUtwo + \Ntwo }
            					 %I(U_2W_2;B_2|W_1)
					 	\tag{BHK3} \\
            R_2 		&\leq		\Ctwo{ \Ptwop }{ \NUtwo + \Ntwo} + \Ctwo{ \NWone }{ \NUone + \None } 
            					%I(U_2;B_2|W_1W_2) + I(W_2;B_1|U_1W_1) 		
						\tag{BHK4} \\
            R_1 + R_2	&\leq		\Ctwo{ \Pone + \NWone }{ \NUone + \None} 
            									 + \Ctwo{ \Ptwop }{ \NUtwo + \Ntwo}
            					%I(U_1W_1W_2;B_1) + I(U_2;B_2|W_1W_2)		
						\tag{BHK5}\\
            R_1 + R_2	&\leq		\Ctwo{ \Ptwo + \Utwo }{ \NUtwo + \Ntwo } 
            									+ \Ctwo{ \Ponep }{ \NUone + \None}
            					%I(U_2W_2W_1;B_2) + I(U_1;B_1|W_2W_1) 	
						\tag{BHK6} \\
            R_1 + R_2	&\leq		\Ctwo{ \Ponep + \NWone }{ \NUone + \None} +
            					\Ctwo{ \Ptwop + \Utwo }{ \NUtwo + \Ntwo}
            					% I(U_1W_2;B_1|W_1)   + I(U_2W_1;B_2|W_2) 		
						\tag{BHK7} \\
            2R_1 + R_2	&\leq		\Ctwo{ \Pone + \NWone}{\NUone + \None}
            									+ \Ctwo{ \Ponep }{ \NUone + \None }
						+ \Ctwo{ \Ptwop + \Utwo }{ \NUtwo + \Ntwo }           
            					%I(U_1W_1W_2;B_1) + I(U_1;B_1|W_1W_2)  	
						%+  I(U_2W_1;B_2|W_2) 	 
						\tag{BHK8} \\
            R_1 + 2R_2	&\leq		\Ctwo{ \Ptwo + \Utwo}{ \NUtwo + \Ntwo} + 
            									\Ctwo{ \Ptwop }{ \NUtwo + \Ntwo}
						 + \Ctwo{ \Ponep + \NWone }{ \NUone + \None}
            					%I(U_2W_2W_1;B_2) + I(U_2;B_2|W_2W_1) 
						%+  I(U_1W_2;B_1|W_1)			
						\tag{BHK9}  
%            R_1,R_2	&\geq 	0 \nonumber
        \end{align}	
        }
	Note the shorthand notation used $\gamma(x) = \frac{1}{2}\log_2(1+x)$.

            \begin{figure}[htb]
            \begin{center}
            \includegraphics[width=0.6\textwidth]{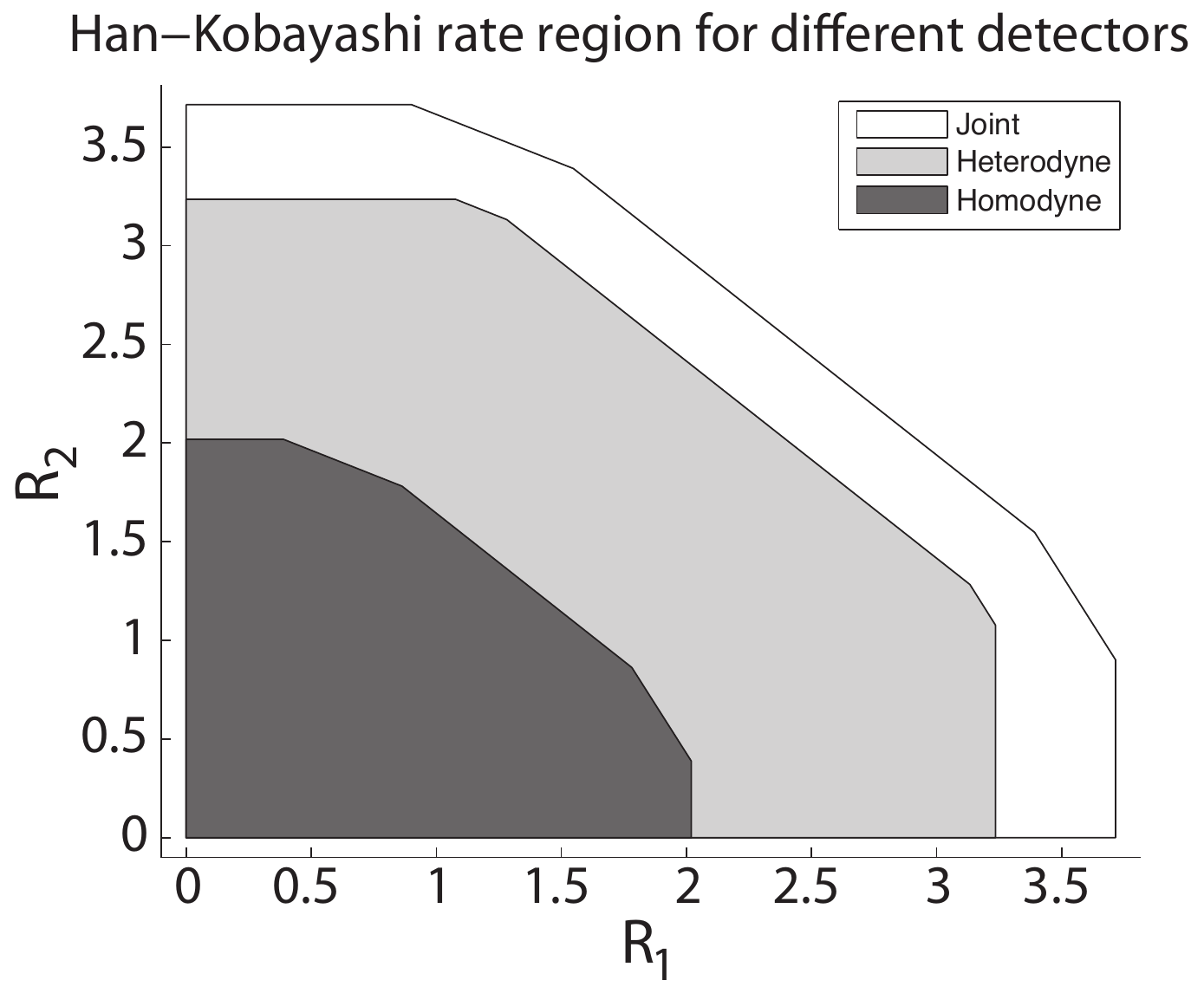}
            \end{center}
            \caption{
		  The figure depicts the achievable rate regions
	            by employing a Han-Kobayashi coding strategy for homodyne and heterodyne detection.
		  The channel parameters are $N_{S_{1}}=N_{S_{2}}=100$, 
		  $N_{B_{1}}=N_{B_{2}}=1$, $\eta_{11}=\eta_{22}=0.8$, and $\eta_{21}=\eta_{12}=0.1$. 
		All of these regions are with respect to
		a 10\%-personal, 90\%-common Han-Kobayashi power split.
%	            It also depicts what we can achieve by employing 49 variations of the min-entropy
%	            decoding strategy from Ref.~\cite{FHSSW11} and taking the convex hull of these achievable rate regions.
%	            Finally, it displays the conjectured region if a joint detection decoder
%	            were to exist (see Conjecture~2 of Ref.~\cite{FHSSW11}). 
            }
            \label{fig:h-k}
            \end{figure}

We can also calculate the shape of the Han-Kobayashi achievable rate region if the senders
employ coherent-state encodings and the receivers exploit heterodyne or joint detection receivers.
% (this again follows from Conjecture~2 of Ref.~\cite{FHSSW11}%
%\ regarding the existence of a quantum simultaneous decoder). 
A statement of the inequalities for the other detection strategies has been 
omitted, because they are similar to (BHK1)-(BHK9).
Figure~\ref{fig:h-k} shows the relative sizes of the Han-Kobayashi rate regions achievable with 
coherent detection and joint detector for a particular choice of input power split: $\lambda_m = 0.1$,
$\bar{\lambda}_m = 0.9$.

\section{Discussion}

The semiclassical models for free-space optical communication are not
sufficient to understand the ultimate limits on reliable communication rates,
for both point-to-point and multiuser bosonic channels. We presented a
quantum-mechanical model for the free-space optical interference channel and determined
achievable rate regions using three different decoding strategies for the receivers.
We also determined the Han-Kobayashi inner bound for homodyne, heterodyne 
and joint detection.

Several open problems remain for this line of inquiry. 
We do not know if a coherent-state encoding is in
fact optimal for the free-space interference channel---it might be that
squeezed state transmitters could achieve higher communication rates as in
\cite{Y05}. One could also evaluate the ergodic and outage capacity
regions based on the statistics of $\eta_{ij}$, which could be derived from
the spatial coherence functions of the stochastic mode patterns under
atmospheric turbulence.

%!TEX root = thesis.tex

%\section{Research program}

\chapter{Conclusion}
										\label{chapter:conclusion}
	
	The time has come to conclude our inquiry into the problems
	of quantum network information theory.
	We will use this last chapter to summarize our results
	and highlight the specific contribution of this thesis.
	%
%	In Section~\ref{concl:tech}, we single out the coding techniques 
%	developed in this thesis, which are likely to be useful 
%	for proving coding theorems in different contexts.
	%
	We will also % Finally, in Section~\ref{concl:future} we 
	discuss open problems and avenues for future research.

	\section{Summary}

		The present work demonstrates clearly that many of the problems of
		classical network information theory can be extended to the study
		of classical-quantum channels.
		Originally, we set out to investigate the network information 
		theory problems discussed in \cite{el1980multiple}.
		It is fair to say that we have been successful on that front,
		since we managed to develop coding strategies for
		multiple access channels (Chapter~\ref{chapter:MAC}), 	
		interference channels (Chapter~\ref{chapter:IC}),
		broadcast channels (Chapter~\ref{chapter:BC})
		and 		
		relay channels (Chapter~\ref{chapter:RC}),
		in the classical-quantum setting.

		Our proof techniques are a mix of classical and quantum ideas.
		On the classical side we have the standard tools of information theory
		like averaging, conditional averaging %probability distributions 
		and the use of the properties of typical sets.
		%		which in turn are the consequence of the law of large numbers.
		%
		On the quantum side we saw how to build a \emph{projector sandwich},
		which contains many layers of conditionally typical projectors,
		how to incorporate \emph{state smoothing}, which cuts out non-typical eigenvalues of a state,
		and the winning combination of the 
		square root measurement and the Hayashi-Nagaoka operator inequality.
		
		Above all, it is the quantum conditionally typical projectors 
		that played the biggest role in all our results.
		Conditionally typical projectors are truly amazing constructs,
		since they not only give us a basis in terms of which to analyze the quantum outputs,
		but also tell us exactly in which subspace we are likely to find the output states on average.

	\section{New results}
	
		Some of the results presented in this thesis have previously appeared
		in publications and some are original to this thesis.
		We will use this section to highlight the new results.
	
		The first contribution is the establishment of the classical/quantum packing lemmas
		using conditionally typical sets/projectors.
		While these packing lemmas are not new in themselves, 
		the proofs presented  
		%		via the conditionally typical projectors
		highlight the correspondences between the indicator functions
		for the classical conditionally typical sets and,
		their quantum counterparts, the conditionally typical projectors.
		%
		%	While it is possible to prove each random coding result 
		%	``from scratch'' by performing the error analysis, 
		The quantum packing lemma is an effort to abstract away 
		the details of the quantum decoding strategy 
		%techniques of quantum parts that are common in each proof 
		into a reusable component as is done in \cite{el2010lecture}.
		
		It is the author's hope that the classical and quantum packing lemmas presented 
		in this work, along with their proofs, can serve as a bridge for classical 
		information theorists to cross over to the quantum side.
		Alternately, we can say that there is only one side
		and interpret the move from classical Shannon theory to quantum
		Shannon theory as a type of system upgrade.
		Indeed, the change from indicator functions for the conditionally typical sets 
		to conditionally typical projectors can be seen in terms of the OSI layered model for network architectures:
		quantum coding techniques are a change in \emph{physical layer} (Layer 1) 
		protocols while the random coding approach of the 
		\emph{data link layer} (Layer~2) stays the same.
		Note that this analogy only works for the \emph{classical} communication problem,
		and that \emph{quantum} communication and \emph{entanglement-assisted} communication 
		are completely new problems in quantum Shannon theory, which have no direct classical analogues.
		
		The main original contribution of this thesis is the  
		achievability proof for the quantum Chong-Motani-Garg rate region,
		which requires only two-sender simultaneous decoding.
		By the equivalence 
		$\mathcal{R}^o_{\mathrm{HK}}(\mcal{N}) \equiv \mathcal{R}_{\mathrm{CMG}}(\mcal{N})$,
		we have established the achievability of the quantum Han-Kobayashi rate region.
		We can therefore close the book on the original 
		research question which prompted our investigation more than two years ago.

		An interesting open problem is to prove Conjecture~\ref{conj:sim-dec}
		on the simultaneous decoding for the three-sender quantum multiple access channels.
		This result would be a powerful building block for multiuser quantum Shannon theory.

		%		Another question worth investigating is to see if it is possible to improve the Han-Kobayashi rate region 
		%		by splitting the codebook into three or more parts?

		%		In that vein, we also pose the question whether rate-splitting 
		%		can achieve the capacity in the strong interference regime
		%		in the limit of many splits of the message.

%		
%	

%		This thesis presented results on the classical capacity 
%		
%		
%		Another work q-capacity and e-a  cap 
%		

%		
%		
%		
%		A classical information theorist might wonder: 
%		``What is different in quantum information theory?''
%		\cite{ElGamalQuestion}.
%		%
%		Perhaps now I can answer that quesiton
%		
%		1. Not the best model -- you need q do optics cables
%		2. theory is self-consistent
%		   The real-variable channel can have infinite capacity,
%		   without any quantity in the model going to infinity,
%		   which indicates that the model is innacurate.
%		   If we wanted to  		   
%		   as remrked by Ozawa Yuen.
%
%		Also, in revisiting classical results we can perhaps see
%		new ways of doing things.

% COMPRESS AND FORWARD 

%	In compress-and-forward strategies, the Relay does not try to decode the message
%	from his received signal $Y_1^n$, but simply searches for a 
%	close sequence $\hat{Y}_1^n$ chosen from a predetermined compression codebook. 
%	An index corresponding to $\hat{Y}_1^n$
%	is then forwarded to the Destination during the next block, encoded into a codeword $x_1^n$.  
%	
%	Another open question is to find a
%	compress-and-forward strategy for the quantum relay channel.

%	show why rate-splitting doesn't work

\appendix

%\fancyhead[LE]{\leftmark}
\fancyhead[RO]{\emph{Appendix \thechapter}}
\pagestyle{fancy}

%%%%%%%%%%%%%%%%%%%%%%%%%%%%%%%%%%%%%%%%%%%%%%%%
%

%!TEX root = thesis.tex

\chapter{Classical channel coding}
\label{apdx:classical-ch-coding}

This appendix contains the proof of the classical packing lemma
(Section~\ref{apdx:classical-coding-theorem}) and a brief review on 
some of the properties of typical sets.

\section{Classical typicality}
													\label{apdx:classical-typicality}
		
	In Section~\ref{sec:typ-review}, we presented a number of properties of typical sequences and
	typical sets that were used in the proof of the classical coding theorem.
	The reader is invited to consult \cite{CT91} and \cite{wilde2011book} for the proofs.
	
	In this section, we review the properties of conditionally typical 
	sets in a more general setting where an additional random variable
	$U^n$ is present. 
	This is the setting of the classical packing lemma, which will be 
	stated and proved in Section~\ref{apdx:classical-coding-theorem}.

	Consider the probability distribution $p_U(u)p_{X|U}(x|u) \in \mcal{P}(\mcal{U},\mcal{X})$ 
	and the channel $\mcal{N} = (\mcal{U}\times\mcal{X}, p_{Y|XU}(y|x,u), \  \mcal{Y})$.
	Let $(U^n,X^n)$ be distributed according to the product 
	distribution $\prod^n_{i=1} p_{U}(u_i)p_{X|U}(x_i|u_i)$.
	Let $Y^n$ denote the random variable that corresponds to the output
	of the channel when the inputs are $(U^n,X^n)$.

		\begin{figure}[hbt]
	\begin{center}
	{ \scriptsize
	\begin{tikzpicture}[node distance=2cm,>=stealth',bend angle=45,auto]
	  \tikzstyle{cnode}=[isosceles triangle,isosceles triangle apex angle=60,thick,draw=blue!75,fill=blue!20,minimum height=0.25cm] 
	  \tikzstyle{processing}=[rectangle,thick,draw=black,fill=white,minimum height=1.1cm] 
	  \tikzstyle{measurment}=[rectangle,thick,draw=blue!75,fill=blue!20,minimum height=0.3cm] 
	  \tikzstyle{qnode}=[circle,thick,draw=blue!75,fill=blue!20,minimum size=6mm] 
	  \tikzstyle{every label}=[black, font=\footnotesize]

	  \begin{scope}
		% c-c channel setup
		\node [cnode] (Un) [label=left:$\prod^n p_U$,label=above:$\  \in \mcal{U}^n  $]                {\footnotesize $U^n$};
		%		\node [cnode] (Xn) [above right of=Un,xshift=1cm]                {$X^n$}
%				edge  [pre]				node[swap]  	{$\ \ \ \ \prod^n p_{X|U}$}	(Un);
		\node [cnode] (tildeXn) [label=below:$\ \ \in \mcal{X}^n$, below right of=Un,xshift=1cm]                {\footnotesize $X^n$}
				edge  [pre]				node[]  	{$\ \ \ \ \ \ \ \prod^n p_{X|U}$}	(Un);
		\node [cnode] 
		(tildeYn) [ label=below:$\ \ \in \mcal{Y}^n$, right of=tildeXn, xshift=2cm]                {\footnotesize $Y^n$}
				edge  [pre]				node[swap]  	{$\mcal{N}^{n} \equiv \prod^n p_{Y|XU}$}	(Un);	
		\node [] (abovetildeYn) [right of=Un, xshift=3cm]                {};
		\draw[->]	(tildeXn)	-- (tildeYn);		
%		\node [processing] 
%		(TestOutTyp) [ %label=below:$\ \ \in \mcal{Y}^n$, 
%						right of=Un, xshift=6cm]                {$	\indicator_{ \left\{  \mathcal{T}^{(n)}_\epsilon(Y|U^n)	  \right\} }$};
		%%						edge  [pre]				node[swap]  	{}	(tildeYn);	
%		\node [processing] 
%		(DecXn) [ %label=below:$\ \ \in \mcal{Y}^n$, 
%						above right of=Un, xshift=6.6cm]                {$	\indicator_{ \left\{  \mathcal{T}^{(n)}_\epsilon(Y|X^n,U^n)	  \right\} }$};
		%						edge  [pre]				node[swap]  	{}	(tildeYn);	
		%						edge  [pre]				node[swap]  	{}	(tildeYn);	
		%		\draw[->,thick]	(tildeYn)	-- (DecXn.west);		
	  \end{scope}
	  \begin{pgfonlayer}{background}
	    \filldraw [line width=4mm,join=round,black!10]
	      ([xshift=-4mm,yshift=7mm]tildeXn.north -| tildeXn.east) rectangle (tildeYn.south -| tildeYn.west);
	    \filldraw [line width=4mm,join=round,black!10]
	      ([xshift=-4mm]Un.north -| Un.east) rectangle (abovetildeYn.south -| tildeYn.west);
	  \end{pgfonlayer}
	\end{tikzpicture}

	}

	\caption{An illustration of the conditional dependence between the random variables $(U^n,X^n,Y^n)$. 	}
	\label{fig:random-variables}
	\end{center}
	\end{figure}
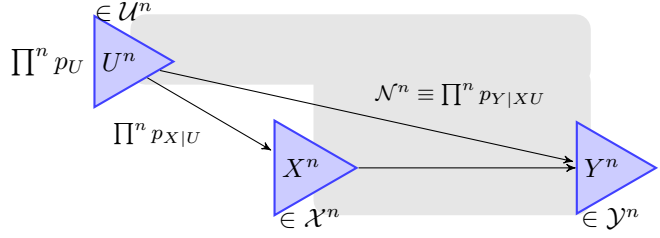

	\subsubsection{Conditionally typical sets}

		The input random variables $(U^n,X^n) \sim \prod_{i=1}^n p_U(u_i)p_{X|U}(x_i|u_i)$ 
		and the channel $\mcal{N}$ induce the following joint distribution:
		%		with a draw from the distribution 
		\be
			(U^n, X^n, Y^n)  \ \sim \ \prod_{i=1}^n p_U(u_i)p_{X|U}(x_i|u_i)p_{Y|XU}(y_i|x_i,u_i).
		\ee
		This corresponds to the assumption that the channel is
		memoryless, that is, the noise in the $n$ uses of the channel is independent $p_{Y^n|X^nU^n} = \prod^n p_{Y|XU}$.
		
		%	a channel $\mcal{N}^n \equiv p_{Y^n|X^n}$
		%	which consists of $n$ independent uses of the channel $\mcal{N}=p(y|x)$
		%	where $p_{Y^n|X^n} = \prod p(y|x)$
		
		For any $\delta >0$,	define two sets of entropy conditionally typical sequences:
		\begin{align}
			\!\!\cT^{(n)}_{\delta}\!(Y|x^n,u^n)
			& \!\equiv \!
			\left\{ \!
				y^n \!\in \!\mcal{Y}^n  \colon \! \left|  -\frac{\log p_{Y^n|X^nU^n}(y^n|x^n,u^n)}{n}  - H(Y|X,U) \right| \! 
				\leq 
				\delta %, \ \forall x_a \in\cX 
				%x^n \in \mcal{X}^n \ \big| \ |  \frac{N(x_a|x^n)}{n}  -p_X(x_a)| \leq \delta, \ \forall x_a \in\cX 
			\right\}\!\!, \\
		\cT^{(n)}_{\delta}\!(Y|u^n)
			 & \equiv 
			\left\{ 
				y^n \in \mcal{Y}^n  \colon \! \left|  -\frac{\log p_{Y^n|X^n}(y^n|u^n)}{n}  - H(Y|U) \right| \! 
				\leq 
				\delta %, \ \forall x_a \in\cX 
				%x^n \in \mcal{X}^n \ \big| \ |  \frac{N(x_a|x^n)}{n}  -p_X(x_a)| \leq \delta, \ \forall x_a \in\cX 
			\right\},%
		\end{align}
		where $H(Y|U)=- \sum_x p_U(u)p_{Y|U}(y|u) \log p_{Y|U}(y|u)$
		is the conditional entropy of the distribution $p_{Y|U}(y|u) = \sum_x p_{X|U}(x|u) p_{Y|XU}(y|x,u)$.

		By the definition of these typical sets, we have that the following bounds
		on the probability of the sequences within these sets:
		\begin{align}
		2^{-n[ H(Y|X,U)+\delta ]}  
		\leq & \ \ 
			p_{Y^n|X^n,U^n}(y^n|x^n,u^n) \ \ 
			\leq   2^{-n[H(Y|X,U)-\delta]}  
			\ifthenelse{\boolean{BOOKFORM}}
			{
			 \nonumber \\[-1mm]
			 & \qquad \qquad \qquad \qquad    \forall y^n \in \cT^{(n)}_{\delta}\!(Y|x^n,u^n),  
			\nonumber \\[3mm]
			}
			{
			 \quad   \forall y^n \in \cT^{(n)}_{\delta}\!(Y|x^n,u^n),  
			\nonumber \\[3mm]
			}
		2^{-n[ H(Y|U)+\delta ]}   
		\leq & \ 
			p_{Y^n|U^n}(y^n|u^n) \ \ 
			\!\!\! \leq   2^{-n[H(Y|U)-\delta]}  \quad   \forall y^n \in \cT^{(n)}_{\delta}\!(Y|u^n),  \label{weightBoundsYgXU}
		\end{align}
		for any sequences $u^n$ and $x^n$.

		The channel outputs are likely to be conditionally typical sequences.
		More precisely, we have that for any $\epsilon,\delta>0$, and sufficiently large $n$
		the expectations under $U^n$ and $X^n|U^n$  obey the bounds:
		\begin{align}
		\ExpUn 
		\NExpXgU   
			 \!\!\!
			& \displaystyle
			\sum_{ y^n \in \cT^{(n)}_{\delta}\!(Y|X^n,U^n) } \!\!\!\! \! p_{Y^n|X^nU^n}\!\left( y^n|X^n,U^n\right)  
			 \ \geq \ 1-\epsilon,   \label{cc4APDX} \\
		\ExpUn 
		%		\NExpXgU   
			 \!\!\!
			 &\ \ \displaystyle
			\sum_{ y^n \in \cT^{(n)}_{\delta}\!(Y|U^n) } \!\! p_{Y^n|U^n}\!\left( y^n|U^n\right)  
			\ \geq \   1-\epsilon.   \label{cc4APDX2}
		\end{align}

		Furthermore, we have the following bounds on the size 
		of these conditionally typical sets:	
		\begin{align}
		%			[1 - \epsilon] 2^{n[ H(Y|X,U)-\delta]} \leq \ \   
			\left|  \cT^{(n)}_{\delta}\!(Y|X^n,U^n)  \right| 
			\ \  & \leq 2^{n[ H(Y|X,U)+\delta]}, \label{cc6APDX} \\
			\left|  \cT^{(n)}_{\delta}\!(Y|U^n)  \right| 
			\ \  & \leq 2^{n[ H(Y|U)+\delta]}.  \nonumber %\label{cc6APDX2} 	
		\end{align}

	\subsubsection{Conditionally typical sets}

		Equations \eqref{weightBoundsYgXU} and \eqref{cc6APDX} will play a key role
		in the proof of the classical packing lemma in the next section.
		We restate these equations here in the language of indicator functions
		for the single and double conditionally typical sets:
		\begin{align}
				%		2^{-n[ H(Y|U)+\delta ]}   
				%		\indicator_{  \left\{    \tilde{Y}^n  \in   \mathcal{T}^{(n)}_\epsilon(Y|U^n)   \right\}}
				%	\leq & \ \ 
				\!\!\!\!\! p_{Y^n|U^n}(y^n|u^n) 
				\indicator_{  \left\{    y^n  \in   \mathcal{T}^{(n)}_\epsilon(Y|U^n)   \right\}}
				\ \ 
				& \leq   \ \ 
				2^{-n[H(Y|U)-\delta]} 
				\indicator_{  \left\{    y^n  \in   \mathcal{T}^{(n)}_\epsilon(Y|U^n)   \right\}},  \tag{$\textrm{\ref{weightBoundsYgXU}}^\prime$}
				%			 \qquad   \forall y^n \in \cT^{(n)}_{\delta}\!(Y|u^n),  \label{weightBoundsYgXU}
		\end{align}%
		and %
		\begin{align}
				%				\ExpUn 
				%				\NExpXgU   
				\sum_{y^n \in \mcal{Y}^n }
				\indicator_{  \left\{    y^n  \in   \mathcal{T}^{(n)}_\epsilon(Y|x^n,u^n)   \right\}}
				\ \  & \leq  \ \ 2^{n[ H(Y|X,U)+\delta]}.
				\tag{$\textrm{\ref{cc6APDX}}^\prime$}
		\end{align}

%!TEX root = thesis.tex

\section{Classical packing lemma}
											\label{apdx:classical-coding-theorem}

The packing lemma is a powerful tool for proving capacity theorems \cite{el2010lecture}.
We give a proof of a packing lemma which, instead of the usual jointly typical sequences
argument, uses the properties of conditionally typical sets.
This non-standard form of the packing lemmas is preferred because 
it highlights the similarities with its quantum analogue,
the quantum conditional packing lemma stated in Appendix~\ref{apdx:q-packing}.

\begin{lemma}[Classical conditional packing lemma]
												\label{lem:classical-packing}
	
	Let $p_U(u)p_{X|U}(x|u) \in \mcal{P}(\mcal{U},\mcal{X})$ be an arbitrary code distribution,
	and let $\mcal{N} = (\mcal{U}\times\mcal{X}, p_{Y|XU}(y|x,u), \  \mcal{Y})$ be a channel.
	%	and let $p_{Y|XU}(y|x,u)$ 	be the probability distribution.
	%
	Let $(U^n,X^n,\tilde{X}^n)$ be distributed according to 
	$\prod^n_{i=1} p_{U}(u_i)p_{X|U}(x_i|u_i)p_{X|U}(\tilde{x}_i|u_i)$.
	%	Let $X^n|U^n$ and $\tilde{X}^n|U^n$ be two independent random variables both distributed
	%	according to $\prod^n p_{X|U}$.
	%
	Let $\tilde{Y}^n$ denote the random variable that corresponds to the output
	of the channel when the inputs are $(U^n,\tilde{X}^n)$.
	Define $\mcal{E}_2$ to be the event that the output $\tilde{Y}^n$ 
	will be part of the conditionally typical set %for other input $(X^n,U^n)$ denoted
	%	has probability 
	%	Assuming that $\tilde{Y}^n$ is output-typical, the probability that it will fall 
	%	within the conditionally typical set 
	$\mathcal{T}^{(n)}_\epsilon(Y|X^n,U^n)$,
	given that it is part of the output-typical set $\tilde{Y}^n  \in \mathcal{T}^{(n)}_\epsilon(Y|U^n)$.
	We have that
	\begin{align}
		\mathop{\mathbb{E}}_{U^n, \atop X^n, \tilde{X}^n} 
		& \NPrYtildegXtilde
		\{
			\mcal{E}_2
		\}
		=  \nonumber \\
	& = \ExpUn
	\NExpX  
	 \NExpXtilde
	\NPrYtildegXtilde
	\!\! \left\{ \!
	 \left\{ 
		%  (X^n(m^\prime), 
	  \tilde{Y}^n
		%  )
	  \in 
	  \mathcal{T}^{(n)}_\epsilon(Y|X^n,U^n)
	  \right\}
	\cap
	 \left\{ 
		%  (X^n(m^\prime), 
	  \tilde{Y}^n
		%  )
	  \in 
	  \mathcal{T}^{(n)}_\epsilon(Y|U^n)
	  \right\} \!
	  \right\}	 \nonumber \\
	  &\leq 
	  2^{-n[ I(X;Y|U) - \delta(\epsilon) ]}.
	\end{align}
%	
%	the probability then the probability that it will overlap 
%	is given by  of the channel 
%	
%	 channel when  distributed according to $\prod^n p_{Y|XU}(\tilde{y}_i | \tilde{x}_i, u_i)$.

	Consider the random codebook $\{X^n(m) \}$, $m \in [1:2^{nR}]$
	 %\mcal{C}=\{X^n(1),X^n(2),\ldots, X^n(|\mcal{M}|) \}=\{X^n}$
	generated randomly and independently according to $\prod_{i=1}^n p_{X|U}(x_i|u_i)$.
	%	where the size  $|\mathcal{M}| \leq 2^{nR}$,
	%	there is a low probability that one codeword will be mistaken for another,
	There exists $\delta(\epsilon) \to 0$ as $\epsilon \to 0$ such that
	%	and the probability that the channel produces a 
	the probability that the conditionally typical decoding will 
	misinterpreting the channel output for $X^n(m)$
	incorrectly as produced by $X^n(m^\prime)$ for some $m^\prime \neq m$, that is,
	\be
		Y^n(m) \equiv \mcal{N}^n(U^n,X^n(m)),
		\quad
		Y^n(m)
		  \in 
		  \mathcal{T}^{(n)}_\epsilon(Y|X^n(m^\prime),U^n)
		  \   \mathrm{ and    } \ 
  		Y^n(m)
		  \in 
		  \mathcal{T}^{(n)}_\epsilon(Y|U^n),
		  \nonumber
	\ee
	%	\[
	%	\mcal{E}_{2m} \equiv
	%	\mathop{\mathbb{E}}_{U^n,\atop X^n(m),X^n(m^\prime)}
	%	\mathop{\mathrm{Pr}}_{ Y^n(m)\atop|X^n(m)U^n }
	%	 \left\{ \exists m^\prime | 
	%		  Y^n(m)
	%		  \in 
	%		  \mathcal{T}^{(n)}_\epsilon(Y|X^n(m^\prime),U^n), 
	%		  m^\prime \neq m
	%	  \right\}
	%	  \to 0,
	%	\]
	vanishes as $n \to \infty$, if $R < I(X;Y|U) - \delta(\epsilon)$,
	where the mutual information is calculated on the
	induced joint probability distribution 
	$(U,X,Y) \sim p_{UXY}(u,x,y)=p_{Y|XU}(y|x,u)p_{X|U}(x|u)p_{U}(u)$.

	%			%	with each 
	%			%	according to $\prod^n p_{X|U}$,
	%			%	the overall probability 
	%		sequences, each distributed randomly and independently according to $\prod_{i=1}^n p_{X|U}(x_i|u_i)$,
	%		of size up to:
	%		\be
	%			R  \leq I(X;B|U) -  \delta,
	%		\ee
	%			%	then there exists a decoding 

\end{lemma}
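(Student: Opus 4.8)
The plan is to prove the two assertions in sequence: first the probability bound on the event $\mcal{E}_2$, then the codebook/rate statement as a corollary via the union bound. Throughout I will work with the conditionally typical sets and their indicator-function properties $(\ref{weightBoundsYgXU}^\prime)$ and $(\ref{cc6APDX}^\prime)$ recorded in Appendix~\ref{apdx:classical-typicality}, rather than with jointly typical sequences. This keeps the argument in lockstep with the quantum packing lemma of Appendix~\ref{apdx:q-packing}.

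First I would fix a typical $u^n$ and condition on $X^n = x^n$; the key observation is that $\tilde{X}^n$ is drawn from $p_{X^n|U^n}(\cdot|u^n)$ \emph{independently} of $x^n$ given $u^n$, and the channel noise producing $\tilde{Y}^n$ from $(u^n,\tilde{X}^n)$ is also independent of $x^n$. Hence for fixed $x^n$, the conditional distribution of $\tilde{Y}^n$ given $u^n$ is exactly $p_{Y^n|U^n}(\cdot|u^n)$. Now I would write the probability of the intersection event as
\[
\sum_{y^n}
p_{Y^n|U^n}(y^n|u^n)\,
\indicator_{\{ y^n \in \mathcal{T}^{(n)}_\epsilon(Y|x^n,u^n)\}}\,
\indicator_{\{ y^n \in \mathcal{T}^{(n)}_\epsilon(Y|u^n)\}}.
\]
On the support of the second indicator I apply $(\ref{weightBoundsYgXU}^\prime)$ to bound $p_{Y^n|U^n}(y^n|u^n) \le 2^{-n[H(Y|U)-\delta]}$, then drop that indicator, and finally apply $(\ref{cc6APDX}^\prime)$ to bound $\sum_{y^n}\indicator_{\{ y^n \in \mathcal{T}^{(n)}_\epsilon(Y|x^n,u^n)\}} \le 2^{n[H(Y|X,U)+\delta]}$. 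This gives $2^{-n[H(Y|U)-H(Y|X,U)-2\delta]} = 2^{-n[I(X;Y|U)-2\delta]}$, uniformly in $x^n$ and in typical $u^n$; taking the triple expectation over $U^n, X^n, \tilde X^n$ preserves the bound (the small contribution from atypical $u^n$ is absorbed into $\delta(\epsilon)$). This yields the claimed inequality with $\delta(\epsilon) = 2\delta$, where $\delta \to 0$ as $\epsilon \to 0$.

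For the codebook statement I would proceed by the standard symmetrization-and-union-bound argument. By the symmetry of the random code construction it suffices to analyze a fixed transmitted message, say $m=1$; the probability that $Y^n(1)$ lands in $\mathcal{T}^{(n)}_\epsilon(Y|X^n(m'),U^n)$ (and in $\mathcal{T}^{(n)}_\epsilon(Y|U^n)$) for some specific $m'\neq 1$ is precisely an instance of the event $\mcal{E}_2$ analyzed above, since $X^n(m')$ is generated independently of $X^n(1)$ from the same conditional distribution $\prod_i p_{X|U}(\cdot|u_i)$ and independently of the channel noise realizing $Y^n(1)$. Hence each such term is at most $2^{-n[I(X;Y|U)-\delta(\epsilon)]}$, and summing over the at most $2^{nR}$ competing messages gives an overall bound $2^{nR} 2^{-n[I(X;Y|U)-\delta(\epsilon)]} = 2^{-n[I(X;Y|U)-\delta(\epsilon)-R]}$, which tends to $0$ whenever $R < I(X;Y|U) - \delta(\epsilon)$.

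The only genuine subtlety — and the step I would be most careful about — is the independence bookkeeping in the second part: one must verify that conditioned on $U^n$, the triple consisting of the codeword $X^n(1)$, the channel noise producing $Y^n(1)$, and the competing codeword $X^n(m')$ has exactly the joint law $\prod_i p_U(u_i)p_{X|U}(x_i|u_i)p_{Y|XU}(y_i|x_i,u_i)p_{X|U}(\tilde x_i|u_i)$ appearing in the statement of $\mcal{E}_2$, so that the first-part bound applies verbatim. Everything else is routine: the two indicator-function inequalities do all the real work, and no jointly typical set machinery is needed. I would close by noting that an expurgation step converts the average-error bound into a maximal-error bound if desired, exactly as in the proof of Theorem~\ref{thm:shannon-ch-cap}.
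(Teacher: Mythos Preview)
Your proposal is correct and follows essentially the same route as the paper's proof: marginalize out $\tilde{X}^n$ to reduce the conditional law of $\tilde{Y}^n$ to $p_{Y^n|U^n}$, apply the weight bound $(\ref{weightBoundsYgXU}^\prime)$ on the output-typical indicator, drop that indicator, apply the size bound $(\ref{cc6APDX}^\prime)$, and then union-bound over competing codewords. One small remark: your caveat about ``atypical $u^n$'' is unnecessary, since both $(\ref{weightBoundsYgXU}^\prime)$ and $(\ref{cc6APDX}^\prime)$ hold for \emph{all} sequences $u^n,x^n$ (the size bound follows directly from the lower probability bound in the definition of the typical set), so the pointwise estimate is already uniform and no splitting into typical/atypical $u^n$ is needed.
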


	\begin{figure}[htb]
	\begin{center}
	{ \scriptsize
	\begin{tikzpicture}[node distance=1.8cm,>=stealth',bend angle=45,auto]
	  \tikzstyle{cnode}=[isosceles triangle,isosceles triangle apex angle=60,thick,draw=blue!75,fill=blue!20,minimum height=0.3cm] 
	  \tikzstyle{processing}=[rectangle,thick,draw=black,fill=white,minimum height=1.1cm] 
	  \tikzstyle{measurment}=[rectangle,thick,draw=blue!75,fill=blue!20,minimum height=0.3cm] 
	  \tikzstyle{qnode}=[circle,thick,draw=blue!75,fill=blue!20,minimum size=6mm] 
	  \tikzstyle{every label}=[black, font=\footnotesize]

	  \begin{scope}
		% c-c channel setup
		\node [cnode] (Un) [label=left:$\prod^n p_U$]                {$U^n$};
		\node [cnode] (Xn) [above right of=Un,xshift=1cm]                {$X^n$}
				edge  [pre]				node[swap]  	{$\ \ \ \ \prod^n p_{X|U}$}	(Un);
		\node [cnode] (tildeXn) [below right of=Un,xshift=1cm]                {$\tilde{X}^n$}
				edge  [pre]				node[]  	{$\ \ \ \ \ \prod^n p_{X|U}$}	(Un);
		\node [cnode] 
		(tildeYn) [ label=below:$\ \ \in \mcal{Y}^n$, right of=tildeXn, xshift=0.6cm]                {$\tilde{Y}^n$}
				edge  [pre]				node[swap]  	{$\prod^n p_{Y|XU}$}	(Un);	
		\node [] (abovetildeYn) [right of=Un, xshift=3cm]                {};
		\draw[->]	(tildeXn)	-- (tildeYn);		
		\node [processing] 
		(TestOutTyp) [ %label=below:$\ \ \in \mcal{Y}^n$, 
						right of=Un, xshift=5.4cm]                {$	\indicator_{ \left\{  \mathcal{T}^{(n)}_\epsilon(Y|U^n)	  \right\} }$};
		%%						edge  [pre]				node[swap]  	{}	(tildeYn);	
		\node [processing] 
		(DecXn) [ %label=below:$\ \ \in \mcal{Y}^n$, 
						above right of=Un, xshift=5.6cm]                {$	\indicator_{ \left\{  \mathcal{T}^{(n)}_\epsilon(Y|X^n,U^n)	  \right\} }$};
		%						edge  [pre]				node[swap]  	{}	(tildeYn);	
		\node [and gate US,draw,logic gate inputs=nn, right of=Un, xshift=7.2cm] 
		(ANDgate)                {AND};
		%						edge  [pre]				node[swap]  	{}	(tildeYn);	
		\draw[->,thick]	(tildeYn)	-- (DecXn.west);		
		\draw[->,thick]	(tildeYn)	-- (TestOutTyp.west);	
		\draw[->,thick]	(DecXn.east)	-- (ANDgate.west);	
		\draw[->,thick]	(TestOutTyp.east)	-- (ANDgate.west);	
		\node [] 
		(pERR) [ %label=below:$\ \ \in \mcal{Y}^n$, 
						right of=ANDgate, xshift=-0.5cm]                {$\!\!\!\indicator_{ \{ \mcal{E}_2 \}}$}
						edge  [pre]				node[swap]  	{}	(ANDgate);	
	  \end{scope}
	  \begin{pgfonlayer}{background}
	    \filldraw [line width=4mm,join=round,black!10]
	      ([xshift=-4mm,yshift=7mm]tildeXn.north -| tildeXn.east) rectangle (tildeYn.south -| tildeYn.west);
	    \filldraw [line width=4mm,join=round,black!10]
	      ([xshift=-4mm]Un.north -| Un.east) rectangle (abovetildeYn.south -| tildeYn.west);
	  \end{pgfonlayer}
	\end{tikzpicture}

	}

	\caption{The classical packing lemma. Two random codewords $X^n$ and $\tilde{X}^n$
	are drawn randomly and independently conditional on a third 
	random variable $U^n$. %The distribution used for this 
	Assume that the random variable $U^n$ is also available at the receiver.
	What is the chance that the output of the channel which corresponds
	to $\tilde{X}$ and $U^n$ will falsely be recognized to be in the set
	of outputs which are likely to come from inputs: $X^n$ and $U^n$?
	%	 positive identification in the 
	%	so that the receiver can compute the set of conditionally typical sequences
	%	$\mathcal{T}^{(n)}_\epsilon(Y|U^n)$.
	%
	The receiver performs two tests on the output sequence~$\tilde{Y}^n$:
	(1) test membership in $\mathcal{T}^{(n)}_\epsilon(Y|U^n)$ and
	(2) test membership in $\mathcal{T}^{(n)}_\epsilon(Y|X^n,U^n)$.
	If both these are successful, the outcome will be a \emph{misidentification} error~$\mcal{E}_2$.
	}
	\label{fig:classical-pack-illustration}
	\end{center}
	\end{figure}

\bigskip

The description of the error event in the conditional packing lemma
contains four sources of randomness.
% in the above expressions.
First we have $U^n \sim \prod^n p_U$, then there are two independent draws 
from $\prod^n p_{X|U}$ to produce 
%random codewords  ne for the random codewords assigned to $m$, 
$X^n$ and $\tilde{X}^n$. %(for message $m$) and  
 %(for message $m^\prime$).
Finally, the channel-randomness produces $\tilde{Y}^n = \mcal{N}^n(U^n,\tilde{X}^n)$.
% $\tilde{X}^n$ gets pas
%and one for the randomness assigned =X^n(m)$ and one for $X^n(m^\prime)$),
%draws from the distribution $\prod_{i=1}^n p_{X}$
%and one draw from $\prod_{i=1}^n p_{Y|X}$ which
%produces the $\tilde{Y}^n = \mcal{N}^n(\tilde{X}^n)=\mcal{N}^n(X^n(m))$.
%
The fact that $\tilde{X}^n$ and $X^n$ are conditionally independent
given $U^n$ implies that $\tilde{Y}^n$ and $X^n$ are also conditionally independent
given $U^n$. 
The situation is illustrated in Figure~\ref{fig:classical-pack-illustration}.
% Note also that $\tilde{Y}^n$ is independent of other $X^n(m^\prime), \forall m^\prime \neq m$.

%
%\subsubsection{Illustration}

%
%This corresponds to a situation where we
%encode message $m$ using a random codebook
%\[
% \mcal{C}=\{ X^n(m) \} \sim \prod_{i=1}^n p_{X}\!\left( x_i \right),
%\]
%and we have
%\be
%	\tilde{Y}^n \sim \prod_{i=1}^n p_{Y|X}(y_i|X_i(m)),
%\ee
%a random variable that random both in terms of ``code randomness'' $p_{X^n} = \prod p_X$
%and ``channel randomness'' $p_{Y|X}(y|x)$.
%The packing lemma, gives us a bound on the error associated with 
%having the codeword $X^n(m^\prime)$ assigned to another message $m^\prime \neq m$ 
%producing the output of the channel $\tilde{Y}^n$. 

%\subsubsection{Proof}

\begin{proof} %[Proof of error bound $E_2$.]
We give an argument based on the properties of the output-typical sequences
and a cardinality bound on the conditionally typical sets.
Assume that the output sequence $\tilde{Y}^n=\mcal{N}^n(U^n,\tilde{X}^n)$ is   
output-typical ($ \in \mathcal{T}^{(n)}_\epsilon(Y|U^n)$), 
and happens to also fall in the conditionally typical set 
for some other codeword $\mathcal{T}^{(n)}_\epsilon(Y|X^nU^n)$.
%the output sequence $\tilde{Y}^n=\tilde{Y}^n(x^n(m))=\mcal{N}^n(x^n(m))$ is    
%jointly typical with the outcome,
This is described by the following event: 
\be
\mcal{E}_2 = 
 \left\{ 
	  \tilde{Y}^n
	  \in 
	  \mathcal{T}^{(n)}_\epsilon(Y|X^nU^n)
  \right\}
\cap
 \left\{ 
	  \tilde{Y}^n
	  \in 
	  \mathcal{T}^{(n)}_\epsilon(Y|U^n)
  \right\}.
\ee
%which could lead to an error of type $\mcal{E}_2$.% since $m^\prime \neq m$.

% FROM MARK
%{\scriptsize
%\begin{verbatim}
%Pr {E_m'}
%= E_{X(m), X(1}, Y} ( I ( Y in T^{Y | X(m)} ) * I ( Y in T^{Y} ) )
%= sum_{x(m), x(1), y}   p(x(m))   p(x(1))   p(y | x(1))   ( I ( y in T^{Y | x(m)} ) * I ( y in T^{Y} ) )
%= sum_{x(m), y}   p(x(m))   p(y)   ( I ( y in T^{Y | x(m)} ) * I ( y in T^{Y} ) )
%<= 2^{-n H(Y)} sum_{x(m), y}   p(x(m))  ( I ( y in T^{Y | x(m)} ) * I ( y in T^{Y} ) )
%<= 2^{-n H(Y)} sum_{x(m), y}   p(x(m))  ( I ( y in T^{Y | x(m)} ) )
%= 2^{-n H(Y)} sum_{x(m)}   p(x(m))  [ sum_{y} ( I ( y in T^{Y | x(m)} ) ) ]
%<= 2^{-n H(Y)} 2^{n H(Y|X)} sum_{x(m)}   p(x(m)) 
%= 2^{-n I(X;Y)}
%\end{verbatim}
%}

Now consider the expectation of the probability of 
the event $\mcal{E}_2$ under the code randomness:
{\allowdisplaybreaks \small
\begin{align*}
& \!\!\ExpUn 
\NExpX   
\NExpXtilde
\text{Pr}\left\{ \mcal{E}_2 \right\} = \\
& = 
	\ExpUn
	\NExpXgU  
	\NExpXtildegU
	\NPrYtildegXtildeU
	\!\! \left\{ 
	 \left\{ 
		  \tilde{Y}^n
		  \in 
		  \mathcal{T}^{(n)}_\epsilon(Y|X^nU^n)
	  \right\}
	\cap
	 \left\{ 
		  \tilde{Y}^n
		  \in 
		  \mathcal{T}^{(n)}_\epsilon(Y|U^n)
	  \right\}
	  \right\}
	   \\
& = 
	\ExpUn
	\NExpXgU  
	\NExpXtildegU
	\NExpYtildegXtildeU 
	\!\! 
	\indicator_{
		 \left\{ 
			  \tilde{Y}^n
			  \in 
			  \mathcal{T}^{(n)}_\epsilon(Y|X^nU^n)
		  \right\}
	  }
	\! \cdot \!\!
	\indicator_{
	 \left\{ 
		  \tilde{Y}^n
		  \in 
		  \mathcal{T}^{(n)}_\epsilon(Y|U^n)
	  \right\}
	  }
	   \\
& = 
	\ExpUn \!
	\NExpXgU  
	\sum_{ \tilde{x}^n    }      \!
	\sum_{ \tilde{y}^n    }     
	p_{X^n}( \tilde{x}^n|U^n) 
	p_{Y^n|X^nU^n}( \tilde{y}^n | \tilde{x}^n,U^n) \	
	\!\!\! 
	\indicator_{
		 \left\{ 
			  \tilde{y}^n
			  \in 
			  \mathcal{T}^{(n)}_\epsilon(Y|X^nU^n)
		  \right\}
	  }\!\!
	\indicator_{
	 \left\{ 
		  \tilde{y}^n
		  \in 
		  \mathcal{T}^{(n)}_\epsilon(Y|U^n)
	  \right\}
	  }
	   \\
&\overset{\mdingone}{=}
	\ExpUn
	\NExpXgU  
	\sum_{ \tilde{y}^n    }     
	p_{Y^n|U^n}( \tilde{y}^n|U^n ) \	
	\!\! 
	\indicator_{
		 \left\{ 
			  \tilde{y}^n
			  \in 
			  \mathcal{T}^{(n)}_\epsilon(Y|X^nU^n)
		  \right\}
	  }
	\cdot
	\indicator_{
	 \left\{ 
		  \tilde{y}^n
		  \in 
		  \mathcal{T}^{(n)}_\epsilon(Y|U^n)
	  \right\}
	  }
	   \\
&\overset{\mdingtwo}{\leq}
	\ExpUn
	\NExpXgU  
	\sum_{ \tilde{y}^n    }     	
	 2^{-n[H(Y|U)- \delta^{\prime}(\epsilon)] } 
	 \
	\indicator_{
		 \left\{ 
		  \tilde{y}^n
		  \in 
		  \mathcal{T}^{(n)}_\epsilon(Y|X^nU^n)
		  \right\}
	  }
	\cdot
	\indicator_{
	 \left\{ 
		  \tilde{y}^n
		  \in 
		  \mathcal{T}^{(n)}_\epsilon(Y|U^n)
	  \right\}
	  }
	   \\
&\overset{\mdingthree}{\leq}
	 2^{-n[H(Y|U)- \delta^{\prime}(\epsilon)] } \ 
	\ExpUn
	\NExpXgU  
	\sum_{ \tilde{y}^n    }     
	\indicator_{
		 \left\{ 
			  \tilde{y}^n
			  \in 
			  \mathcal{T}^{(n)}_\epsilon(Y|X^nU^n)
		  \right\}
	  }
	   \\
& = 
	 2^{-n[H(Y|U)- \delta^{\prime}(\epsilon)] } \ 
	\sum_{ u^n, x^n    }     
	p_{U^nX^n}( u^n, x^n) 
	\sum_{ \tilde{y}^n    }     
	\indicator_{
		 \left\{ 
			  \tilde{y}^n
			  \in 
			  \mathcal{T}^{(n)}_\epsilon(Y|x^nu^n)
		  \right\}
	  }
	   \\
& = 
	 2^{-n[H(Y|U)- \delta^{\prime}(\epsilon)] } \ 
	% \text{Pr}
	\sum_{ u^n, x^n    }     
	p_{U^nX^n}( u^n, x^n) 
	\left|
		  \mathcal{T}^{(n)}_\epsilon(Y|x^nu^n)
	\right|
	   \\
&\overset{\mdingfour}{\leq}
	2^{-n[H(Y|U)- \delta^{\prime}(\epsilon)] } \ 
	2^{n[H(Y|XU) +\delta^{\prime\prime}(\epsilon) ]}     \\
& =
	2^{-n[ I(X;Y|U) - \delta(\epsilon) ]}.
\end{align*}}%
The equality~\dingone follows from the definition of the conditional output  distribution:
\be
	p_{Y^n|U^n}( \tilde{y}^n|u^n ) 
	=
	\sum_{\tilde{x}^n}     
	p_{X^n|U^n}(\tilde{x}^n|u^n) 
	p_{Y^n|X^nU^n}( \tilde{y}^n | \tilde{x}^n,u^n).
\ee
Inequality~\dingtwo follows from the fact that sequence $\tilde{Y}^n$ is conditionally output-typical,
which means that $p(y^n|u^n) \leq 2^{-n[H(Y|U) - \delta]}$.
Inequality \dingthree is the consequence of dropping an indicator,
since in this way we could only be enlarging the set.
Inequality \dingfour follows from \eqref{weightBoundsYgXU}.

%% PART 2

	The second statement in the packing lemma follows from the independence of 
	the codewords and the union bound.
 	Let the random codebook $\{X^n(m) \}$, $m \in [1:2^{nR}]$ be 
	 %\mcal{C}=\{X^n(1),X^n(2),\ldots, X^n(|\mcal{M}|) \}=\{X^n}$
	generated randomly and independently according to $\prod_{i=1}^n p_{X|U}(x_i|U_i)$.
	Define $\{ \mcal{E}_2(m^\prime|m) \}$ to be the event that
	% What is the chance that the output of 
	the channel output when message $m$ is sent,
	$Y^n(m)=\mcal{N}^n(U^n,X^n(m))$ happens  to 
	fall in the conditionally typical set for some other
	codeword $\mathcal{T}^{(n)}_\epsilon(Y|X^n(m^\prime),U^n)$
	and is also output-typical ($ \in \mathcal{T}^{(n)}_\epsilon(Y|U^n)$).
	\be
	\!\!\mcal{E}_2(m^\prime|m)
	\equiv
	\!\! \left\{ \!
	 \left\{ 
		  Y^n(m)
		  \in 
		  \mathcal{T}^{(n)}_\epsilon(Y|X^n(m^\prime),U^n)
	  \right\}
	\! \cap \!
	 \left\{ 
		  Y^n(m)
		  \in 
		  \mathcal{T}^{(n)}_\epsilon(Y|U^n)
	  \right\} \!
	  \right\}\!\!.
	\ee

	If we define $\mathbf{(E2)}$ to be the total probability of misidentifications of this kind,
	we get:
	{\allowdisplaybreaks
 \begin{align*}
 \textrm{Pr}\!\left\{ \mathbf{(E2)}  \right\}
	& =
		 \textrm{Pr}\!\left\{ \bigcup_{m^\prime \in\mathcal{M},m^\prime \neq m }  \mcal{E}_2(m^\prime|m) \right\} \\
	& \overset{\mdingfive}{\leq}
		 \sum_{m^\prime \in\mathcal{M},m^\prime \neq m } \textrm{Pr}\!\left\{   \mcal{E}_2(m^\prime|m) \right\} \\
	& \overset{\mdingsix}{=}
		 \sum_{m^\prime \in\mathcal{M},m^\prime \neq m } \textrm{Pr}\!\left\{   \mcal{E}_2 \right\} \\
	 & \leq
		  \sum_{m^\prime \in\mathcal{M},m^\prime \neq m } 
		  2^{-n[ I(X;Y|U) - \delta(\epsilon) ]} \\
	 &\leq
		 |\mathcal{M}|2^{-n[ I(X;Y|U) - \delta(\epsilon) ]} \\
	 & =
		 2^{-n[ I(X;Y|U) - R - \delta(\epsilon) ]}.
\end{align*}
}
Inequality \dingfive uses the union bound.
Inequality \dingsix is true because the all the codewords of the codebook are picked independently.

Thus if we choose $R < I(X;Y) - \delta(\epsilon)$, the probability
of error will tend to zero as $n\to \infty$.
\end{proof}

\bigskip

The reader is now invited to review the Notation page (\pageref{notation-page}) in the beginning
of the thesis.
This table can be used as a bridge from classical information theory to 
the quantum information theory.
In Appendix~\ref{apdxB}, we will discuss the properties of conditionally typical projectors
and prove a quantum packing lemma which follows \emph{exactly} the same reasoning  
as in the classical packing lemma.

	%%% off %%\input{10.A.3_Strong_classical_typicality.tex}

%!TEX root = thesis.tex

\chapter{Quantum channel coding}
											\label{apdxB}

The first part of this appendix defines the quantum typical subspaces
and conditionally typical projectors associated with a quantum multiple 
access channel problem.
The second part of the appendix is the statement of 
the \emph{quantum packing lemma} which is a direct 
analogue of the classical packing lemma presented in Appendix~\ref{apdx:classical-coding-theorem}.
%
%We describe 
%is the quantum versions of the previous one.

%Some useful lemmas are stated too.

\section{Quantum typicality}
    \label{apdx:quantum-typicality}

	%	{\bf Quantum typicality\ \ }
	% typical subspaces
	The concepts of entropy, and entropy-typical sets generalize to the quantum 
	setting by virtue of the spectral theorem. 
	Let $\cH^B$ be a $d_B$ dimensional Hilbert space and let 
	$\rho^B  \in \mcal{D}(\cH^B)$ be the density matrix associated with a quantum state.
	The spectral decomposition of $\rho^B$ is denoted
	$\rho^B=U\Lambda U^\dag$ where $\Lambda$ is a diagonal
	matrix of positive real eigenvalues that sum to one.
	We identify the eigenvalues of $\rho^B$ with the probability 
	distribution $p_Y(y)=\Lambda_{yy}$ and write the 
	spectral decomposition as:
	\be
		\rho^B = \sum_{y=1}^{d_B} p_Y(y) \ket{e_{\rho;y}}\bra{e_{\rho;y}}^B
		\label{eq:spectral-decomp-rho}
	\ee
	where $\ket{e_{\rho;y}}$ is the eigenvector of $\rho^B$ corresponding to eigenvalue $p_Y(y)$.
	The von Neumann entropy of the density matrix $\rho^B$ is
	\be
		H(B)_\rho=-\Tr\{\rho^B\log\rho^B\}=H(p_Y).
	\ee

	Define the set of  $\delta$-typical eigenvalues according to the eigenvalue distribution $p_Y$
	\be
		\cT^{n}_{p_Y,\delta} 
		\! \equiv \!
		\left\{ \!
			y^n \in \mcal{Y}^n  \colon \! \left|  -\frac{\log p_{Y^n}(y^n)}{n}  - H(Y) \right| \! \leq \delta %, \ \forall x_a \in\cX 
			%x^n \in \mcal{X}^n \ \big| \ |  \frac{N(x_a|x^n)}{n}  -p_X(x_a)| \leq \delta, \ \forall x_a \in\cX 
		\right\}.%
	\ee
	For a given string $y^n = y_1y_2\ldots y_i\ldots y_n$ we define the 
	corresponding eigenvector as 
	\be
		\ket{e_{\rho;y^n}} = \ket{e_{\rho;y_1}} \otimes \ket{e_{\rho;y_2}} \otimes \cdots \otimes \ket{e_{\rho;y_n}},
	\ee
	where for each symbol where $y_i=b \in \{1,2,\ldots,d_B\}$ we select the b$^\textrm{th}$ eigenvector 
	$\ket{e_{\rho;b}}$.
	
	The typical subspace associated with the density matrix $\rho^B$
	is defined as
	\be
	{A}^n_{\rho, \delta}
		= \textrm{span} \!  \{  \ket{e_{\rho;y^n}} \colon y^n \in \cT_{p_Y,\delta}^n  \}.
	\ee
%	where $\ket{e_{\rho;y^n}}=$ is an orthonormal basis for $\left(\cH^B\right)^{\otimes n}$
%	build from the tensor product of elements basis elements $\ket{e_{\rho;a}}$
%	for each $i \in [n]$.
	The typical projector is defined as 
	\be
		\Pi^{n}_{\rho^B, \delta}
		= \sum_{y^n\in\cT_{p,\delta}^n}	    \ket{e_{\rho;y^n}}\! \bra{e_{\rho;y^n}}.
		%\ket{y^n_\rho}\bra{y^n_\rho}.
		%= \sum_{t\in\tau_{\delta}} 	\Pi^{A^n}_{t_\alpha}.
	\ee
	Note that the typical projector is linked twofold to the spectral decomposition
	of \eqref{eq:spectral-decomp-rho}: the sequences $y^n$ are selected 
	according to $p_Y$ and the set of typical vectors are build from tensor
	products of orthogonal eigenvectors $\ket{e_{\rho;y}}$.

	%In the following, we present the corresponding properties of quantum version
	%typical subspaces
	%that will be used later.
	Properties analogous to (\ref{cc1BG}) -- (\ref{cc3BG}) hold.
	For any $\epsilon,\delta>0$, and all sufficiently large $n$ we have
	\vspace{-4mm}
	\begin{eqnarray}
		\label{eqn:apdx-TypP-prop-one}
		&\!\!\!\!\!\!\!\Tr\{\rho^{\otimes n} \Pi^{n}_{\rho, \delta}\}  &\geq  1-\epsilon  \\
		\label{eqn:apdx-TypP-prop-two}
		2^{-n[ H(B)_\rho+\delta]}\Pi^n_{\rho, \delta} 
		\leq  
		&\!\!\Pi^n_{\rho, \delta} \rho^{\otimes n} \Pi^{n}_{\rho, \delta}
		\!\!&
		\leq 
		2^{-n[ H(B)_\rho-\delta ]}\Pi^{n}_{\rho, \delta}, \\
	\label{eqn:apdx-TypP-prop-three}
		[1 - \epsilon] 2^{n[ H(B)_\rho-\delta]}
		\leq  
		&\Tr\{\Pi^{n}_{\rho, \delta}\} &
		\leq 
		2^{n[ H(B)_\rho+\delta]}.
	\end{eqnarray}%
	\noindent
	The interpretation of \eqref{eqn:apdx-TypP-prop-two} is that the eigenvalues 
	of the  state $\rho^{\otimes n}$ are bounded between
	$2^{-n[ H(B)_\rho-\delta ]}$ and $2^{-n[ H(B)_\rho+\delta ]}$
	on the typical subspace ${A}^n_{\rho, \delta}$.
	
	\bigskip 
	
	\noindent
	{\bf Signal states\ \ }
	Consider now a set of quantum states $\{\rho_{x_a}\}$,
	%corresponds to a probability distribution on a finite set $\mcal{X}$
	%and a set of quantum states $\rho_{x_a}$, for each 
	$x_a \in \mcal{X}$.
	We perform the spectral decomposition of each $\rho_{x_a}$
	to obtain 
	\be
		\rho^B_{x_a}= \sum_{y=1}^{d_B} p_{Y|X}(y|x_a) \ket{e_{\rho_{x_a};y}}\bra{e_{\rho_{x_a};y}}^B,
		\label{eq:cond-spectral-decomp-rho}
	\ee
	where $p_{Y|X}(y|x_a)$  is the $y^\textrm{th}$ eigenvalue of  $\rho^B_{x_a}$ and 
	$\ket{e_{\rho_{x_a};y}}$ is the corresponding eigenvector.
	
	We can think of $\{\rho_{x_a}\}$ as a classical-quantum (\emph{c-q}) channel
	where the input is some $x_a \in \mcal{X}$ and the output is the 
	corresponding quantum state $\rho_{x_a}$.
	If the channel is memoryless, then for  each input sequence 
	$x^n=x_1x_2\cdots x_n$ we have the corresponding tensor product output state: 
	\be
		\rho_{x^n}^{B^n} = \rho^{B_1}_{x_1} \otimes \rho^{B_2}_{x_2} \otimes \cdots \otimes \rho^{B_n}_{x_n} = \bigotimes_{i=1}^n \rho^{B_i}_{x_i}.
	\ee
	To avoid confusion with the indices, we use $i\in[n]$ to denote the index of a symbol $x$
	in the sequence $x^n$ and $a \in [1,\ldots,|\mcal{X}|]$ 
	to denote the different symbols in the alphabet~$\mcal{X}$.

	% NEVER NEEDED -- used in strong typicality route -- but not for weak
	%	Using these we define conditionally typical set 
	%	\be
	%		\cT^{n}_{p_{Y|x_a},\delta} 
	%		\! \equiv \!
	%		\left\{ \!
	%			y^n % \in \mcal{Y}^n  
	%			\colon \! \left|  -\frac{\log(p_{Y^n|X^n}(y^n|(x_a)^n))}{n}  - H(Y|x_a) \right| \! \leq \delta %, \ \forall x_a \in\cX 
	%			%x^n \in \mcal{X}^n \ \big| \ |  \frac{N(x_a|x^n)}{n}  -p_X(x_a)| \leq \delta, \ \forall x_a \in\cX 
	%		\right\}.%
	%	\ee
	%	and conditionally typical 
	%	%subspace for $x_a$ will be
	%	%\be
	%	%	{A}^n_{\rho_{x_a}, \delta} =
	%	%\ee
	%	%and the corresponding typical 
	%	projector for $x_a$  is
	%	\be
	%		\Pi^{n}_{\rho^B_{x_a}, \delta}
	%		= \sum_{y^n\in\cT_{p_{Y|x_a},\delta}^n}	
	%		 \ket{e_{\rho_{x_a};y^n}} \! \bra{e_{\rho_{x_a};y^n}}.
	%		  %\ket{y^n_\rho}\bra{y^n_\rho}.
	%		%= \sum_{t\in\tau_{\delta}} 	\Pi^{A^n}_{t_\alpha}.
	%	\ee	

	\medskip
	\noindent
	{\bf Conditionally typical projector \ \ }
	% random code
%	Now suppose we have a random iid sequence $X^n=X_1X_2\cdots X_n$
%	where each symbol is generated from $p_{X}$. %^n}=\prod_i^n p_X(x_i)$.
	Consider the ensemble $\left\{  p_{X}\!\left(  x_a\right)  ,\rho_{x_a}\right\}$.
	The choice of distributions induces the following classical-quantum state:
	\be
		\rho^{XB} =
	       	\sum_{x_a} 
			p_{X}\!\left(x_{a}\right)
			%p(x_1)p(x_2) 
			\ketbra{x_a}{x_a}^{X} \!\!
			\otimes \!
			\rho^{B}_{x_a}.
	\ee

	We can now define the conditional entropy of this state as
	\be
		H (B|X)_\rho
		\equiv
		 \sum_{x_a \in \mcal{X}} 
			p_{X}(x_a) 
 			H (\rho_{x_a}),
	\ee	
	or equivalently, expressed in terms of the eigenvalues of the signal states,
	the conditional entropy becomes
	\be
		H (B|X)_\rho
		\equiv
		H(Y|X)
		\equiv
		\sum_{x_a}p_{X}(x_a)H(Y|x_a),
%		\sum_{x_a \in \mcal{X}} 
%			p_{X}(x_a) \sum_y p_{Y|X}(y|x_a)\log \frac{1}{p_{Y|X}(y|x_a)},
	\ee	
	where $H(Y|x_a) = - \sum_y p_{Y|X}(y|x_a) \log p_{Y|X}(y|x_a)$
	is the entropy of the eigenvalue distribution shown in \eqref{eq:cond-spectral-decomp-rho}.

	We define the $x^n$-conditionally typical projector
	as follows:
	% as the tensor product of
	%typical projectors selected according to the symbols $x_i$:
	\be
		\Pi^{n}_{\rho^B_{x^n}, \delta}
%		= \Pi^{n}_{\rho^B_{x_1}, \delta} \otimes \Pi^{n}_{\rho^B_{x_2}, \delta} \otimes \cdots \otimes \Pi^{n}_{\rho^B_{x_n}, \delta}
		=  \sum_{y^n \in \cT^{n}_{\rho^{B^n}_{x^n},\delta} }
			\ket{e_{\rho_{x^n};y^n}}\!\bra{e_{\rho_{x^n};y^n}},
		%{y^n\in\cT_{p_{Y|x^n},\delta}^n}	 \ket{y^n_\rho}\bra{y^n_\rho}.
		%= \sum_{t\in\tau_{\delta}} 	\Pi^{A^n}_{t_\alpha}.	
		\label{eqn:cond-typ-projector-def-APDX}
	\ee
	where the set of conditionally typical eigenvalues $\cT^{n}_{\rho^{B^n}_{x^n},\delta}$
	%that appears in \eqref{eqn:cond-typ-projector-def} 
	consists of 
	all sequences $y^n$ which satisfy: 
	%	are typical with respect to the above 
	%	tensor-product conditional distribution:	
	\be
		\cT^{n}_{\rho^{B^n}_{x^n},\delta} 
		\! \equiv \!
		\left\{ \!
			y^n %\in \mcal{Y}^n  
			\colon \! \left|  -\frac{\log p_{Y^n|X^n}(y^n|x^n)}{n}  - H(Y|X) \right| \! \leq \delta %, \ \forall x_a \in\cX 
			%x^n \in \mcal{X}^n \ \big| \ |  \frac{N(x_a|x^n)}{n}  -p_X(x_a)| \leq \delta, \ \forall x_a \in\cX 
		\right\},%
	\ee	
	with $p_{Y^n|X^n}(y^n|x^n) = \prod_{i=1}^n p_{Y|X}(y_i|x_i)$.
%		\label{conditional-typ-seqs}
%	\]
	
	%
	%	Since the individual projectors are expressed in 
	%      the corresponding eigenbases for $\rho_{x_i}$,
	The states $\ket{e_{\rho_{x^n};y^n}}$ are built from tensor products of eigenvectors
	for the individual signal states:
	\[
		\ket{e_{\rho_{x^n};y^n}} = \ket{e_{\rho_{x_1};y_1}} \otimes \ket{e_{\rho_{x_2};y_2}} \otimes \cdots \otimes \ket{e_{\rho_{x_n};y_n}},
	\]
	where the string $y^n = y_1y_2\ldots y_i\ldots y_n$ varies over different choices of bases for $\mcal{H}^B$.
	For each symbol  $y_i=b \in \{1,2,\ldots,d_B\}$ we select $\ket{e_{\rho_{x_a};b}}$:
	the b$^\textrm{th}$ eigenvector from the eigenbasis of $\rho_{x_a}$ corresponding to the letter $x_i = x_a \in \mathcal{X}$.
	%

%	For a given $x^n$ and a string $y^n = y_1y_2\ldots y_i\ldots y_n$ the corresponding eigenvalue is
	Analogous to the three properties \eqref{eqn:apdx-TypP-prop-one}, \eqref{eqn:apdx-TypP-prop-two}
	and \eqref{eqn:apdx-TypP-prop-three}, the conditionally typical projector obeys:
	\begin{eqnarray}
		&\!\!\!\!\!\!\!
			\ExpX \Tr\!\left[  \rho^B_{X^n} 	\ 	\Pi^{n}_{\rho^B_{X^n}, \delta} 	\right]	\!\!\! &\geq  1-\epsilon  
		\label{eqn:Cond-TypP-prop-one}
		\\
		\!\!\!\!\!\!2^{-n[ H(B|X)_\rho+\delta]} \Pi^{n}_{\rho^B_{x^n}, \delta}
		\leq \!\!\!\!\!\! 
		&\!\!   \Pi^{n}_{\rho^B_{x^n}, \delta} 	\ \rho^B_{x^n} \	\Pi^{n}_{\rho^B_{x^n}, \delta}
		\!\!\!\!\!\!\!\! & 
		\leq 
		2^{-n[ H(B|X)_\rho-\delta ]} \Pi^{n}_{\rho^B_{x^n}, \delta}, 
		\label{eqn:Cond-TypP-prop-two} \\
		{}[1 - \epsilon] 2^{n[ H(B|X)_\rho-\delta]}
		\leq \!\!\!\!\!\!
		&\ExpX \Tr\!\left[  \Pi^{n}_{\rho^B_{X^n}, \delta} \right] \!\!\!\!\!&
		\leq 
		2^{n[ H(B|X)_\rho+\delta]}.
		\label{eqn:Cond-TypP-prop-three}
	\end{eqnarray}%
	
%	
%	\be
%		\Tr\{  \Pi^{n}_{\rho^B_{x^n}, \delta} \} \leq 2^{n[H(B|X)_\rho + \delta] }.
%		\label{eqn:bound-on-size}
%	\ee

%	\bigskip
%	\noindent
%	{\bf Random code \ \ }
%	% single sender
%	Let 
%	and a random codebook $\{X^n(m)\}_{m \in \cM}$ generated 
%	iid according to the product distribution $p_{X^n}$.
%	%
%	For each message $m$ and corresponding codeword $x^n(m)$,
%	we have a conditionally typical projector $\Pi^{n}_{\rho^B_{x^n(m)}, \delta}$.

	\bigskip
	\noindent
	{\bf MAC code \ \ }
	% two senders
	Consider now a quantum multiple access channel 
	$(\mathcal{X}_1 \times \mathcal{X}_2,  \rho_{x_1,x_2}^{B}, \mathcal{H}^{B} )$
	and two input distributions $p_{X_1}$ and $p_{X_2}$.
	Define the random codebooks $\{X_1^n(m_1)\}_{m_1 \in \cM_1}$ and $\{X_2^n(m_2)\}_{m_2 \in \cM_2}$
	generated from the product distributions $p_{X_1^n}$ and $p_{X_2^n}$ respectively. 
	The choice of distributions induces the following classical-quantum state $\rho^{X_{1}X_{2}B}$
	\be
	       	\sum_{x_a,x_b} 
			p_{X_{1}}\!\left(x_a\right)p_{X_{2}}\!\left( x_b\right)
			%p(x_a)p(x_b) 
			\ketbra{x_a}{x_a\!}^{X_1} \!\!
			\otimes \!
			\ketbra{x_b}{x_b}^{X_2} \!
			\otimes \!
			\rho^{B}_{x_ax_b}.
	\ee	        	       
	and the averaged output states:
	\begin{align}
		\bar{\rho}_{x_a} &  \equiv
			\sum_{x_b}p_{X_2}\!\left(  x_b\right)  \rho_{x_a,x_b},\label{eq:rho_xap} \\
		\bar{\rho}_{x_b} &  \equiv
			\sum_{x_a}p_{X_1}\!\left(  x_a\right)  \rho_{x_a,x_b},\label{eq:rho_yap} \\
		\bar{\rho} &  \equiv
			\sum_{x_a,x_b}p_{X_1}\!\left(  x_a\right)  p_{X_2}\!\left(  x_b\right) \rho_{x_a,x_b}.\label{eq:rhoap}
	\end{align}
	
	The conditional quantum entropy $H(B|X_1X_2)_\rho$ is:
	\be
		H (B|X_1X_2)_\rho
		= \hspace*{-5mm}
		 \sum_{x_a \in \mcal{X}_1, x_b \in \mcal{X}_2} 
		 \hspace*{-5mm}
			%			p_{X_1}(x_a) 
			%			p_{X_2}(x_b) 
			%			H (B|X_1\!=\! x_a, X_2\!=\!x_b) 
			p_{X_1}(x_a) 
			p_{X_2}(x_b) 
			H (\rho_{x_a,x_b}),
	\ee
	and using the average states we define:
	\begin{align}
		H (B|X_1)_\rho
		&= 
		 \sum_{x_a \in \mcal{X}_1} 
			p_{X_1}(x_a) 
			H (\bar{\rho}_{x_a}),
	\\
		H (B|X_2)_\rho
		&=
		 \sum_{x_b \in \mcal{X}_2} 
			p_{X_2}(x_b) 
			H (\bar{\rho}_{x_b}),
	\\
		H (B)_\rho
		&=
		H(\bar{\rho}).
	\end{align}

	Similarly to equation \eqref{eqn:cond-typ-projector-def-APDX} and for each message pair $(m_1,m_2)$
	we define the conditionally typical projector for the encoded state $\rho^B_{x_1^n(m_1)x_2^n(m_2)}$ to be $\Pi^{n}_{\rho^B_{x_1^n(m_1)x_2^n(m_2)}, \delta}$.
%	\be
%		\Pi^{n}_{\rho^B_{x_1^n(m_1)x_2^n(m_2)}, \delta}
%		= \Pi^{n}_{\rho^B_{x_{11}(m_1)x_{21}(m_2)}} 
%			\otimes \cdots 
%			\otimes 
%			\Pi^{n}_{\rho^B_{x_{n1}(m_1)x_{n2}(m_2)}}.
%		%= \sum_{y^n\in\cT_{p_{Y|x^n},\delta}^n}	 \ket{y^n_\rho}\bra{y^n_\rho}.
%		%= \sum_{t\in\tau_{\delta}} 	\Pi^{A^n}_{t_\alpha}.	
%		\label{eqn:cond-typ-projector-twoms}
%	\ee
%	As in the single-sender case, we see that the conditionally typical projector
%	is built from the individual typical projectors for the states $\rho_{x_ax_b}$, and the choice of
%	which projector to use for each symbol is governed by the codewords $x^n_1(m_1)$ and 
%	$x^n_2(m_2)$. 
 	From this point on, we will not indicate the messages $m_1$, $m_2$ explicitly, 
	because the codewords are constructed identically for each message.
	% intuition: uncertainty about output known inputs -- this is the "uncertainty ball"
	% in the output space for each message
	
	Analogous to \eqref{eqn:bound-on-size}, the following upper bound applies:
	\be
		\mathbb{E}_{X^n_1X^n_2} \  \Tr\{  \Pi^{n}_{\rho^B_{X_1^nX_2^n}, \delta}  \} \leq 2^{n[H(B|X_1X_2)_\rho + \delta] },
	\ee
	and we can also bound from below the eigenvalues of the state 
	$\rho^B_{x_1^nx_2^n}$ as follows:
	\be
		2^{-n[H(B|X_1X_2)_\rho + \delta] } \Pi^{n}_{\rho^B_{x_1^nx_2^n}, \delta}  
		\leq 
		\Pi^{n}_{\rho^B_{x_1^nx_2^n}, \delta}  
		\rho^B_{x_1^nx_2^n}
		\Pi^{n}_{\rho^B_{x_1^nx_2^n}, \delta}
		\leq
		2^{-n[H(B|X_1X_2)_\rho - \delta] } \Pi^{n}_{\rho^B_{x_1^nx_2^n}, \delta}.
		\label{propertyBBB}
	\ee
	% intuition:  the eigenvalues of \rho on the typical subspace Pi are
	% all no smaller than 2^{-nH(...) } since  after multiplying by 2^{nH(...)}
	% they are all greater than one (proj. has evals =1)
	
%	and furthermore we can write 
%	\be
%	\begin{split}
%		&
%		\Pi^{n}_{\rho^B_{x_1^nx_2^n}, \delta}  
%		\sqrt{\rho^B_{x_1^nx_2^n}}
%		\sqrt{\rho^B_{x_1^nx_2^n}}
%		\Pi^{n}_{\rho^B_{x_1^nx_2^n}, \delta} 
%		= \\
%		& =
%		\sqrt{\rho^B_{x_1^nx_2^n}}
%		\Pi^{n}_{\rho^B_{x_1^nx_2^n}, \delta}
%		\sqrt{\rho^B_{x_1^nx_2^n}}
%		\leq \\
%		&
%		\sqrt{\rho^B_{x_1^nx_2^n}}
%		\sqrt{\rho^B_{x_1^nx_2^n}}
%		=
%		\rho^B_{x_1^nx_2^n}	
%	\end{split}
%	\ee

	We define conditionally typical projectors for each of the
	averaged states:
	\begin{align}
		\bar{\rho}_{x_1} &  \to
			\Pi^{n}_{\bar{\rho}^B_{x_1^n}, \delta},  \\
		\bar{\rho}_{x_2} &  \to
			\Pi^{n}_{\bar{\rho}^B_{x_2^n}, \delta}, \\
		\bar{\rho} &  \to
			\Pi^{n}_{\bar{\rho}^B, \delta}.
	\end{align}
	These projectors obey the standard eigenvalue upper bounds when
	acting on the states with respect to which they are defined:
	\begin{align}
		2^{-n[H(B|X_1)_\rho + \delta] }
		\Pi^{n}_{\bar{\rho}^B_{x_1^n}, \delta}
		\leq		
			\Pi^{n}_{\bar{\rho}^B_{x_1^n}, \delta}
			\bar{\rho}_{x_1^n}
			\Pi^{n}_{\bar{\rho}^B_{x_1^n}, \delta}
		& \leq
			2^{-n[H(B|X_1)_\rho - \delta] }
			\Pi^{n}_{\bar{\rho}^B_{x_1^n}, \delta},  \\
		2^{-n[H(B|X_2)_\rho + \delta] } 
		\Pi^{n}_{\bar{\rho}^B_{x_2^n}, \delta}
		\leq		
			\Pi^{n}_{\bar{\rho}^B_{x_2^n}, \delta}	
			\bar{\rho}_{x_2^n}
			\Pi^{n}_{\bar{\rho}^B_{x_2^n}, \delta}
		& \leq
			2^{-n[H(B|X_2)_\rho - \delta] } 
			\Pi^{n}_{\bar{\rho}^B_{x_2^n}, \delta},  \\			
		2^{-n[H(B)_\rho + \delta] }
		\Pi^{n}_{\bar{\rho}^B, \delta}
		\leq
			\Pi^{n}_{\bar{\rho}^B, \delta} \ 
			\bar{\rho}^B \
			\Pi^{n}_{\bar{\rho}^B, \delta}
		& \leq 
			2^{-n[H(B)_\rho - \delta] }
			\Pi^{n}_{\bar{\rho}^B, \delta}.
	\end{align}	
	% intuition: the eigenvalues of \rho are no greater than ...
	
	We have the following bounds on the rank of the conditionally typical projectors:
		\begin{align}
%		%
%		\mathbb{E}_{X^n_1X^n_2} \left[
%		\Tr\{
%			\Pi^{n}_{\rho^B_{X_1^nX_2^n}, \delta} 
%		\}
%		\right]		
%		& \geq 1 - \epsilon,						\label{eqn:QMACtypSupRANK} \\
		%
		%
		%
		\Tr\{
			\Pi^{n}_{\bar{\rho}^B_{X_1^n}, \delta} \
		\}
		& \leq 2^{n[H(B|X_1)_\rho + \delta] },						\label{eqn:QMACtypSupAvgOneRANK}\\
		\Tr\{
			\Pi^{n}_{\bar{\rho}^B_{X_2^n}, \delta} \			
		\}
		& \leq 2^{n[H(B|X_2)_\rho + \delta] }, 						\label{eqn:QMACtypSupAvgTwoRANK}	\\
		%
		%
		%
%		\mathbb{E}_{X^n_1X^n_2} \left[
		\Tr\{
			\Pi^{n}_{\bar{\rho}^B, \delta} \
		\}
%		\right]		
		& \leq 2^{n[H(B)_\rho + \delta] }.						\label{eqn:QMACtypSupAvgBothRANK}
	\end{align}

	The encoded state $\rho^B_{X_1^nX_2^n}$ is well supported by
	all the typical projectors on average:
	\begin{align}
		\mathbb{E}_{X^n_1X^n_2} \left[
		\Tr\{
			\Pi^{n}_{\rho^B_{X_1^nX_2^n}, \delta} \
			\rho^B_{X_1^nX_2^n}
		\}
		\right]		
		& \geq 1 - \epsilon,						\label{eqn:QMACtypSup} \\
		\mathbb{E}_{X^n_1X^n_2} \left[
		\Tr\{
			\Pi^{n}_{\bar{\rho}^B_{X_1^n}, \delta} \
			\rho^B_{X_1^nX_2^n}
		\}
		\right]
		& \geq 1 - \epsilon,						\label{eqn:QMACtypSupAvgOne}\\
		\mathbb{E}_{X^n_1X^n_2} \left[
		\Tr\{
			\Pi^{n}_{\bar{\rho}^B_{X_2^n}, \delta} \
			\rho^B_{X_1^nX_2^n}			
		\}
		\right]
		& \geq 1 - \epsilon, 						\label{eqn:QMACtypSupAvgTwo}	\\
		\mathbb{E}_{X^n_1X^n_2} \left[
		\Tr\{
			\Pi^{n}_{\bar{\rho}^B, \delta} \
			\rho^B_{X_1^nX_2^n}
		\}
		\right]		
		& \geq 1 - \epsilon.						\label{eqn:QMACtypSupAvgBoth}
	\end{align}
%	The expectation is taken over the random choice of codebooks
%	 $\{X_1^n\}$ and $\{X_2^n\}$.
		
%	The first of these follows from the definition of 	$\Pi^{n}_{\rho^B_{x_1^nx_2^n}, \delta}$
%	and therefore does not require the expectation.

\clearpage

	%	\input{11.B.2_Useful_lemmas.tex}
	
%!TEX root = thesis.tex

\section{Quantum packing lemma}
								\label{apdx:q-packing}

	\begin{lemma}
								\label{lem:q-packing}
	
	Let $p_U(u)p_{X|U}(x|u) \in \mcal{P}(\mcal{U},\mcal{X})$ be an arbitrary code distribution,
	and let $\mcal{N} = (\mcal{U}\times\mcal{X}, \rho_{u,x}, \mcal{H}^B)$ be a classical-quantum channel.
	Let $(U^n,X^n,\tilde{X}^n)$ be distributed according to 
	$\prod^n_{i=1} p_{U}(u_i)p_{X|U}(x_i|u_i)p_{X|U}(\tilde{x}_i|u_i)$.		
	%	Let $U^n$ be a random variable distributed according to $\prod^n p_{U}$.
	%	Let $X^n|U^n$ and $\tilde{X}^n|U^n$ be two independent random variables both distributed
	%	according to $\prod^n p_{X|U}$.
	%
	Consider the channel $\mcal{N}^\prime$ defined by the following map:
	\be
		\mcal{N}^\prime : (u^n, x^n) 	
		\to	
		\big(u^n, 
		\underbrace{ \rho^{B_1}_{u_1,x_1}\otimes\rho^{B_2}_{u_2,x_2}\otimes \cdots 
		\otimes \rho^{B_n}_{u_n,x_n} }_{
		\rho_{u^n,x^n}^{B^n} } \big),
	\ee
	where $u^n$ %is a random variable distributed according to $\prod_{i=1}^n p_{U}$,
	is available as side information to the receiver and the sender.
	Define the state $\bar{\rho}_{u^n}=\mathbb{E}_{X^n|u^n} \mcal{N}^\prime(u^n,X^n)$
	and the conditionally typical projectors $\Pi^{B^n}_{ \bar{\rho}_{u^n} }$ for the state $\bar{\rho}^{B^n}_{u^n}$
	and $\Pi^{B^n}_{ \rho_{u^n,x^n} }$ for the state $\rho^{B^n}_{u^n,x^n}$.
	%	Let $\tilde{Y}^n$ denote the random variable that corresponds to the output
	%	of the channel when the inputs are $(U^n,\tilde{X}^n)$.
	%	%
	
	We want to measure the expectation of the overlap between $\rho^{B^n}_{U^n, \tilde{X}^n}$
	and the operator $\Pi^{B^n}_{ \bar{\rho}_{U^n} } 	\Pi^{B^n}_{ \rho_{U^n,X^n} }  \Pi^{B^n}_{ \bar{\rho}_{U^n} }$
	associated with some $(U^n,X^n)$.
	We define this quantity to be:
	\be
		\mcal{E}_2	= \Tr\!\left[  \Pi^{B^n}_{ \bar{\rho}_{u^n} } 	\Pi^{B^n}_{ \rho_{u^n,x^n} }  \Pi^{B^n}_{ \bar{\rho}_{u^n} } \ \rho^{B^n}_{U^n, \tilde{X}^n} \right].
	\ee
	Then $\mcal{E}_2$ can be bounded as follows:
	\begin{align}
	\ExpUn
	\NExpXgU  \ 
	 \NExpXtildegU \
	\mcal{E}_2
%	& =
%	\ExpUn
%	\NExpXgU  \ 
%	 \NExpXtildegU \
%	\Tr\!\left[
%		\Pi^{B^n}_{ \bar{\rho}_{U^n} }
%		\ 
%		\Pi^{B^n}_{ \rho_{U^n,X^n} } 
%		\
%		\Pi^{B^n}_{ \bar{\rho}_{U^n} }
%		\ \ \
%		\rho^{B^n}_{U^n, \tilde{X}^n}
%	\right]  \\
	&\ \leq \ \ 		  
		2^{-n[ I(X;B|U) - \delta(\epsilon) ]}.
	\end{align}

	%%% PART TWO 

 	Let the random codebook $\{X^n(m) \}$, $m \in [1:2^{nR}]$ be 
	 %\mcal{C}=\{X^n(1),X^n(2),\ldots, X^n(|\mcal{M}|) \}=\{X^n}$
	generated randomly and independently according to $\prod_{i=1}^n p_{X|U}(x_i|U_i)$.
	%
	%
	%	Define $\{ \mcal{E}_{3}(m^\prime|m) \}$ to be the overlap 
	%	between the channel output when message $m$ is sent,
	%	$\rho_{U^n,X^n(m)}=\mcal{N}^n(U^n,X^n(m))$ 
	%	and the conditionally typical sandwich operator for some other
	%	codeword 
	%	$\Pi^{B^n}_{ \bar{\rho}_{U^n} }
	%		\ 
	%		\Pi^{B^n}_{ \rho_{U^n,X^n(m^\prime)} } 
	%		\
	%		\Pi^{B^n}_{ \bar{\rho}_{U^n} }$:
	Then there exists $\delta(\epsilon) \to 0$ as $\epsilon \to 0$ such that the
	 expectation of the total overlap between conditionally typical output spaces 
	 can be bounded from above as follows:
	\begin{align}
	%	\mcal{E}_{3}(m^\prime|m)
	\mathrm{(E2)}
	& \equiv \!\!\!\!\!
		\sum_{m^\prime \in\mathcal{M},m^\prime \neq m } \!\!\!\!
		\ExpUn
		\mathop{\mathbb{E}}_{X^n(m)|U^n} 
		\mathop{\mathbb{E}}_{X^n(m^\prime)|U^n} 
		\!\!\!\Tr\!\left[
			\Pi^{B^n}_{ \bar{\rho}_{U^n} }
			\ 
			\Pi^{B^n}_{ \rho_{U^n,X^n(m^\prime)} } 
			\
			\Pi^{B^n}_{ \bar{\rho}_{U^n} }
			\ 
			\rho^{B^n}_{U^n, X^n(m)}
		\right] \nonumber \\
	 &\leq
		 |\mathcal{M}|2^{-n[ I(X;Y|U) - \delta(\epsilon) ]}. \label{qpack-lem-bd}
	\end{align}
	%
	%	Then there exists 
	%	If we define (E2) to be the total overlap
	%	we get:
	% \begin{align*}
	%\bigcup_{m^\prime \in\mathcal{M},m^\prime \neq m }  \mathcal{E}_{3}(m^\prime|m) \right\} \\
	%	& \leq
	%		  \textrm{Pr}\!\left\{   \mathcal{E}_{3}(m^\prime|m) \right\} \\
	%	& =
	%		 \sum_{m^\prime \in\mathcal{M},m^\prime \neq m } \textrm{Pr}\!\left\{   \mathcal{E}_{3} \right\} \\
	%	 & \leq
	%		  \sum_{m^\prime \in\mathcal{M},m^\prime \neq m } 
	%		  2^{-n[ I(X;Y|U) - \delta(\epsilon) ]} \\
	%	 &\leq
	%		 |\mathcal{M}|2^{-n[ I(X;Y|U) - \delta(\epsilon) ]} \\
	%	 & =
	%		 2^{-n[ I(X;Y|U) - R - \delta(\epsilon) ]}.
	%\end{align*}
	Thus if we choose $R < I(X;B|U) - \delta(\epsilon)$, the quantity $\mathrm{(E2)}$ will  
	tend to zero as $n\to \infty$.

	\end{lemma}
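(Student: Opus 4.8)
The strategy is to mirror the classical packing lemma proof from Appendix~\ref{apdx:classical-coding-theorem} step by step, replacing indicator functions by conditionally typical projectors and the arguments about cardinality of typical sets by arguments about the \emph{rank} of conditionally typical projectors. The key analytic tools will be: the ``projector trick'' inequality \eqref{propertyBBB}, which lets us bound a conditionally typical projector above by an exponential multiple of the state it projects onto; the properties \eqref{eqn:Cond-TypP-prop-two}--\eqref{eqn:Cond-TypP-prop-three} and their averaged versions \eqref{eqn:QMACtypSupAvgOne}--\eqref{eqn:QMACtypSupAvgBoth}; and the cyclicity of trace together with the fact that $0 \le \Pi \le I$ for all typical projectors.

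\textbf{First step: conditioning on $U^n$ and decoupling the expectations.} Because $X^n$ and $\tilde X^n$ are drawn conditionally independently given $U^n$, the state $\rho^{B^n}_{U^n,\tilde X^n}$ averaged over $\tilde X^n \mid U^n$ is exactly the averaged state $\bar\rho_{U^n}$. I would carry out the $\tilde X^n$ expectation first, pushing it through the trace and past the three projectors $\Pi^{B^n}_{\bar\rho_{U^n}}$ and $\Pi^{B^n}_{\rho_{U^n,X^n}}$, which are functions of $U^n$ and $X^n$ only. This yields
\[
\ExpUn \NExpXgU \NExpXtildegU \mcal{E}_2
= \ExpUn \NExpXgU \Tr\!\left[ \Pi^{B^n}_{\bar\rho_{U^n}} \, \Pi^{B^n}_{\rho_{U^n,X^n}} \, \Pi^{B^n}_{\bar\rho_{U^n}} \, \bar\rho_{U^n} \right].
\]

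\textbf{Second step: the two typical-projector estimates.} Now use the averaged-state eigenvalue bound $\Pi^{B^n}_{\bar\rho_{U^n}} \, \bar\rho_{U^n} \, \Pi^{B^n}_{\bar\rho_{U^n}} \le 2^{-n[H(B|U)-\delta]} \Pi^{B^n}_{\bar\rho_{U^n}}$ (the analogue of $(\textrm{\ref{weightBoundsYgXU}}^\prime)$, i.e.\ inequality \dingtwo in the classical proof). After this substitution and using cyclicity of trace plus $\Pi^{B^n}_{\bar\rho_{U^n}} \le I$ to drop the remaining averaged projectors, the expression is bounded by $2^{-n[H(B|U)-\delta]}$ times $\ExpUn \NExpXgU \Tr[\Pi^{B^n}_{\rho_{U^n,X^n}}]$. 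Applying the rank bound $\ExpX \Tr[\Pi^{n}_{\rho^B_{X^n}, \delta}] \le 2^{n[H(B|X)_\rho + \delta]}$ (the analogue of $(\textrm{\ref{cc6APDX}}^\prime)$, i.e.\ inequality \dingfour) — applied here in its conditional-on-$U^n$ form — gives the single-pair bound $2^{-n[H(B|U)-H(B|XU)-2\delta]} = 2^{-n[I(X;B|U)-\delta(\epsilon)]}$ with $\delta(\epsilon) = 2\delta$.

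\textbf{Third step: union bound over the codebook.} For the statement \eqref{qpack-lem-bd}, I would note that for each $m' \ne m$ the codewords $X^n(m)$ and $X^n(m')$ are independent conditional on $U^n$, so each summand in $\mathrm{(E2)}$ equals the single-pair quantity just bounded; summing over the $|\mathcal{M}|-1 < |\mathcal{M}|$ incorrect messages gives $|\mathcal{M}| \, 2^{-n[I(X;B|U)-\delta(\epsilon)]} = 2^{-n[I(X;B|U) - R - \delta(\epsilon)]}$, which vanishes as $n \to \infty$ provided $R < I(X;B|U) - \delta(\epsilon)$. The main obstacle — though a mild one — is bookkeeping: making sure that every typical projector is defined \emph{conditionally on} $U^n$ throughout, so that the averaged versions of \eqref{eqn:Cond-TypP-prop-two} and \eqref{eqn:Cond-TypP-prop-three} apply cleanly, and keeping track of which projectors can be dropped at which stage (each drop costs nothing since they are sub-identity, but one must be careful to surround the state by its \emph{own} typical projectors before invoking an eigenvalue bound). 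Once the conditioning conventions are fixed, the calculation is a line-by-line transcription of the classical argument with $\indicator \mapsto \Pi$, $\sum_{y^n}\indicator \mapsto \Tr[\Pi]$, and $|\mathcal{T}| \mapsto \Tr[\Pi]$.
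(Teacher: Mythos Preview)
Your proposal is correct and follows essentially the same route as the paper's proof: average over $\tilde X^n\mid U^n$ to replace the state by $\bar\rho_{U^n}$, apply the eigenvalue bound $\Pi_{\bar\rho_{U^n}}\bar\rho_{U^n}\Pi_{\bar\rho_{U^n}}\le 2^{-n[H(B|U)-\delta]}\Pi_{\bar\rho_{U^n}}$, drop the remaining averaged projector via $\Pi\le I$ and cyclicity, and finish with the rank bound $\Tr[\Pi_{\rho_{U^n,X^n}}]\le 2^{n[H(B|XU)+\delta]}$ followed by the union bound. One small remark: you list the projector trick \eqref{propertyBBB} among your tools, but (as your own step-by-step shows) it is not needed here---the paper's proof of this lemma does not invoke it either.
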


	\begin{figure}[hbt]
	\begin{center}
	{ \scriptsize
	\begin{tikzpicture}[node distance=1.7cm,>=stealth',bend angle=45,auto]
	  \tikzstyle{cnode}=[isosceles triangle,isosceles triangle apex angle=60,thick,draw=blue!75,fill=blue!20,minimum height=0.3cm] 
	  \tikzstyle{processing}=[rectangle,thick,draw=black,fill=white,minimum height=1.1cm] 
	  \tikzstyle{measurment}=[rectangle,thick,draw=blue!75,fill=blue!20,minimum height=2cm] 
	  \tikzstyle{qnode}=[circle,thick,draw=blue!75,fill=blue!20,minimum size=6mm] 
	  \tikzstyle{every label}=[black, font=\footnotesize]

	  \begin{scope}
		% c-c channel setup
		\node [cnode] (Un) [label=left:$\prod^n p_U$]                {$U^n$};
		\node [cnode] (Xn) [above right of=Un,xshift=1cm]                {$X^n$}
				edge  [pre]				node[swap]  	{$\ \ \ \ \prod^n p_{X|U}$}	(Un);
		\node [cnode] (tildeXn) [below right of=Un,xshift=1cm]                {$\tilde{X}^n$}
				edge  [pre]				node[]  	{$\ \ \ \ \ \prod^n p_{X|U}$}	(Un);
		\node [qnode] 
		(tildeYn) [ label=below:$\ \ \in \mcal{H}^{B^n}$, right of=tildeXn, xshift=1cm]                {$\rho^{B^n}_{U^n,\tilde{X}^n}$}
				edge  [pre]				node[swap]  	{$\prod^n p_{Y|XU}$}	(Un);	
		\node [] (abovetildeYn) [right of=Un, xshift=3cm]                {};
		\draw[->]	(tildeXn)	-- (tildeYn);		
%		\node [measurment] 
%		(TestOutTyp) [ %label=below:$\ \ \in \mcal{Y}^n$, 
%						right of=Un, xshift=5.6cm]                {$\Pi_{ \bar{\rho}_{U^n}} $};
%		%%						edge  [pre]				node[swap]  	{}	(tildeYn);	
%		\node [measurment] 
%		(DecXn) [ %label=below:$\ \ \in \mcal{Y}^n$, 
%						above right of=Un, xshift=9cm]                {$\Pi_{\rho_{U^n\!\!,X^n}} $};
%		%						edge  [pre]				node[swap]  	{}	(tildeYn);	
%		\draw[->,thick]	(TestOutTyp.east)	-- (DecXn.west);		
%		\draw[->,thick]	(tildeYn)	-- (TestOutTyp.west);	
		\node [measurment] 
		(Sandwich) [ %label=below:$\ \ \in \mcal{Y}^n$, 
						right of=Un, xshift=5.8cm, yshift=0.5cm]                {$\Pi_{ \bar{\rho}_{U^n}}\ \Pi_{\rho_{U^n\!\!,X^n}} \ \Pi_{ \bar{\rho}_{U^n}}$};
		\draw[->,thick]	(tildeYn)	-- (Sandwich.west);	
		\node [] 
		(pERR) [ %label=below:$\ \ \in \mcal{Y}^n$, 
						right of=Sandwich, xshift=0.3cm]                {$\mcal{E}_2 $}
						edge  [pre]				node[swap]  	{}	(Sandwich);	
	  \end{scope}
	  \begin{pgfonlayer}{background}
	    \filldraw [line width=4mm,join=round,black!10]
	      ([xshift=-4mm,yshift=7mm]tildeXn.north -| tildeXn.east) rectangle ([xshift=4mm]tildeYn.south -| tildeYn.west);
	    \filldraw [line width=4mm,join=round,black!10]
	      ([xshift=-4mm]Un.north -| Un.east) rectangle ([xshift=4mm,yshift=-4mm]abovetildeYn.south -| tildeYn.west);
	  \end{pgfonlayer}
	\end{tikzpicture}

	}

	\caption{The quantum packing lemma. Two random codewords $X^n$ and $\tilde{X}^n$
	are drawn randomly and independently conditional on a third 
	random variable $U^n$. %The distribution used for this 
	Assume that the random variable $U^n$ is also available at the receiver.
	What is the chance that the output of the channel which corresponds
	to $\tilde{X}$ and $U^n$ will falsely be recognized to be in the set
	of outputs which are likely to come from inputs $X^n$ and $U^n$?
	%	 positive identification in the 
	%	so that the receiver can compute the set of conditionally typical sequences
	%	$\mathcal{T}^{(n)}_\epsilon(Y|U^n)$.
	%	%
	%	The receiver performs two tests on the output sequence~$\tilde{Y}^n$:
	%	(1) test membership in $\mathcal{T}^{(n)}_\epsilon(Y|U^n)$ and
	%	(2) test membership in $\mathcal{T}^{(n)}_\epsilon(Y|X^n,U^n)$.
	%	If both these are successful, the outcome will be a \emph{misidentification} error $\mcal{E}_2$.
	}
	\label{fig:quantum-pack-illustration}
	\end{center}
	\end{figure}

	To bound the expectation of the second term, define
	$\tilde{X}(m)$ and $X^n(m^\prime)$ to be the two 
	random codewords assigned to messages $m$ and $m^\prime$
	respectively.

	{\allowdisplaybreaks
	\begin{align*}
	\ExpUn
	\NExpXgU  \ 
	 \NExpXtildegU \
	\mcal{E}_2
	 & = 
		\ExpUn
		\NExpXgU  
		\NExpXtildegU
			\text{Tr}
			\left[  
			\Pi^{B^n}_{ \bar{\rho}_{U^n} }
			\Pi^{B^n}_{ \rho_{U^n,X^n} } 
			\Pi^{B^n}_{ \bar{\rho}_{U^n} }
			\
			\prhoMnew
			\right] \\
	 & = 
	\ExpUn
	\NExpXgU  
		\text{Tr}
		\left[  
		\Pi^{B^n}_{ \bar{\rho}_{U^n} }
		\Pi^{B^n}_{ \rho_{U^n,X^n} } 
		\Pi^{B^n}_{ \bar{\rho}_{U^n} }
		\NExpXtildegU
		\{
		\prhoMnew
		\}
		\right] \\
	 & \overset{\mdingone}{=} 
	\ExpUn
	\NExpXgU  
		\text{Tr}
		\left[  
		\Pi^{B^n}_{ \bar{\rho}_{U^n} }
		\Pi^{B^n}_{ \rho_{U^n,X^n} } 
		\Pi^{B^n}_{ \bar{\rho}_{U^n} }
		\bar{\rho}_{U^n}
		\right] \\
	 & = 
	\ExpUn
	\NExpXgU  
		\text{Tr}
		\left[  
		\Pi^{B^n}_{ \rho_{U^n,X^n} }  \
		\Pi^{B^n}_{ \bar{\rho}_{U^n} } 
		\bar{\rho}_{U^n}
		\Pi^{B^n}_{ \bar{\rho}_{U^n} }
		\right] \\
	%	& \overset{\textrm{\dingone}}{=} 
	%		\ExpX
	%		\text{Tr}
	%		\left[  
	%		\pPm
	%		\ 
	%		\bar{\rho}^{\otimes n}
	%		\right] \\
	%	 & =
	%		\ExpX
	%		\text{Tr}
	%		\left[  
	%		\pPIavg \pPImpr  \pPIavg \ 
	%		\bar{\rho}^{\otimes n}
	%		\right] \\
	%	 & = 
	%		\ExpX
	%		\text{Tr}
	%		\left[  
	%			\pPImpr \ \  
	%			\pPIavg
	%			\bar{\rho}^{\otimes n}
	%			\pPIavg
	%		\right] \\
	& \overset{\mdingtwo}{\leq} 	 
		2^{-n[H(B|U)-\delta] }
		\ExpUn
		\NExpXgU  
			\text{Tr}
			\left[  
			\Pi^{B^n}_{ \rho_{U^n,X^n} }  \
			\Pi^{B^n}_{ \bar{\rho}_{U^n} }
			\right] \\
	& \overset{\mdingthree}{\leq} 	 
	 	2^{-n[H(B|U)-\delta] }
		\ExpUn
		\NExpXgU  
		\text{Tr}
		\left[  
			\Pi^{B^n}_{ \rho_{U^n,X^n} }  \
		\right] \\
	& \overset{\mdingfour}{\leq} 	 
	 	2^{-n[H(B|U)-\delta] }
		2^{n[H(B|U,X)+\delta ]} \\
	%	 & \leq 
	%	 	| \mcal{M}| \
	%	 	2^{-n[I(X;B)-2\delta] }. \\
	%	& = 
	%		 2^{-n[H(Y|U)- \delta^{\prime}(\epsilon)] } \ 
	%		\sum_{ u^n, x^n    }     
	%		p_{U^nX^n}( u^n, x^n) 
	%		\sum_{ \tilde{y}^n    }     
	%		\indicator_{
	%			 \left\{ 
	%				  \tilde{y}^n
	%				  \in 
	%				  \mathcal{T}^{(n)}_\epsilon(Y|x^nu^n)
	%			  \right\}
	%		  }
	%		   \\
	%	& = 
	%		 2^{-n[H(Y|U)- \delta^{\prime}(\epsilon)] } \ 
	%		% \text{Pr}
	%		\sum_{ u^n, x^n    }     
	%		p_{U^nX^n}( u^n, x^n) 
	%		\left|
	%			  \mathcal{T}^{(n)}_\epsilon(Y|x^nu^n)
	%		\right|
	%		   \\
	%	&\overset{\textrm{\dingfour}}{\leq}
	%		2^{-n[H(Y|U)- \delta^{\prime}(\epsilon)] } \ 
	%		2^{n[H(Y|XU) +\delta^{\prime\prime}(\epsilon) ]}     \\
	& =
		2^{-n[ I(X;Y|U) - \delta(\epsilon) ]}.
	\end{align*}
	}

	%	The expectation can be broken
	%	up because $m^\prime \neq m$
	%	and thus the $U^n$ codewords are
	%	independent.
	%
	Equation \dingone is true by the definition $
		\NExpXtildegU \{
		\prhoMnew 		\}
		= 		\bar{\rho}_{U^n}$.
	The inequality \dingtwo uses the eigenvalue bound as in \eqref{eqn:Cond-TypP-prop-two}.
	The inequality \dingthree follows from 
	%	the fact that $\pPIavg$ is a positive
	%	operator less than the identity. More precisely we have
	\begin{align*}
		\Tr\!\left[
			\Pi^{B^n}_{ \rho_{U^n,X^n} }  \
			\Pi^{B^n}_{ \bar{\rho}_{U^n} }
		\right]
		& = 
			\Tr\!\left[
			\Pi^{B^n}_{ \rho_{U^n,X^n} }  \
			\Pi^{B^n}_{ \bar{\rho}_{U^n} } \
			\Pi^{B^n}_{ \rho_{U^n,X^n} }  
			\right] \\
		& \leq 
			\Tr\!\left[
				\Pi^{B^n}_{ \rho_{U^n,X^n} }  
				\ I \
				\Pi^{B^n}_{ \rho_{U^n,X^n} }  
			\right] \\
		& = 
			\Tr\!\left[
				\Pi^{B^n}_{ \rho_{U^n,X^n} }  
			\right].
	\end{align*}
	The inequality \dingfour follows from bound
	on the expected rank of the conditionally typical projector 
	like in \eqref{eqn:Cond-TypP-prop-three}.

\subsection*{Applications}

%We outline some of the uses for the quantum packing lemma.

\subsubsection{Holevo-Schumacher-Westmoreland (HSW) Theorem}
%==== Illustration ====
Given a channel $(\mcal{X}, \rho_x, \mcal{H})$, if we set:
\begin{itemize}
	\item 	$U = \emptyset$
	\item 	$p_U(u)p_{X|U}(x|u) = p_X(x)$
	\item 	$\rho_{u^n,x^n} =  \rho_{x^n}$
	\item 	$\Pi^{B^n}_{ \bar{\rho}_{u^n} } 	\Pi^{B^n}_{ \rho_{u^n,x^n} }  
			   \Pi^{B^n}_{ \bar{\rho}_{u^n} } =    \Pi_{\bar{\rho}} \Pi_{\rho_{x^n}} \Pi_{\bar{\rho}}$,
\end{itemize}
then the quantum packing lemma tells us how many conditionally typical
subspaces we can \emph{pack} inside the output-typical subspace before
they %the conditionally typical subspaces 
start to overlap too much.

\subsubsection{Successive decoding for the quantum multiple access channel}
Given a quantum multiple access channel $(\mcal{X}_1\times \mcal{X}_2, \rho_{x_1,x_2}, \mcal{H})$, 
we set:
\begin{itemize}
	\item 	$U = X_1$
	\item 	$p_U(u)p_{X|U}(x|u) = p_{X_1}(x_1)p_{X_2}(x_2)$
	\item 	$\rho_{u^n,x^n} =  \rho_{x_1^n,x_2^n}$
	\item 	$\Pi^{B^n}_{ \bar{\rho}_{u^n} } 	\Pi^{B^n}_{ \rho_{u^n,x^n} }  
			   \Pi^{B^n}_{ \bar{\rho}_{u^n} } =    \Pi_{\bar{\rho}_{x_1^n}} \Pi_{\rho_{x_1^n,x_2^n}} \Pi_{\bar{\rho}_{x_1^n}}$,
\end{itemize}
to obtain the bound on the rate $R_2$ when using the successive decoding $m_1 \to m_2|m_1$.
%
%$\rho_{u,x} = \rho_{x_1,x_2}$ 
%codebooks construction: $p_{X|U} = p_{X_2}$
%side information: $U^n= X_1^n$

%==== Intuition ====
\subsubsection{Superposition coding}

Consider the situation in which superposition encoding is used
to encode two messages $\ell$ and $m$ in a codebook
suitable for the channel $(\mcal{X}, \rho_x, \mcal{H})$:
% into a random codebook $X^n(\ell,m)$ 
%generated according to 
\[
 \{ W^n(\ell) \} \sim p_{W^n}(w^n), 
 \qquad 
 \{ X^n(\ell,m) \} \sim \prod_{i=1}^n p_{X|W}\!\left( x_i|w_i(\ell) \right).
\]
Consider the following substitutions: 
\begin{itemize}
	\item 	$U = W$
	\item 	$p_U(u)p_{X|U}(x|u) = p_{W}(w)p_{X|W}(x|w)$
	\item 	$\rho_{u^n,x^n} =  \rho_{x^n}$
	\item 	$\Pi^{B^n}_{ \bar{\rho}_{u^n} } 	\Pi^{B^n}_{ \rho_{u^n,x^n} }  
			   \Pi^{B^n}_{ \bar{\rho}_{u^n} } =    \Pi_{\bar{\rho}_{w^n}} \Pi_{\rho_{x^n}} \Pi_{\bar{\rho}_{w^n}}$.
\end{itemize}
%to obtain a bound on the rate $R_m$ for the superimposed message
%$m$ given the knowledge of $\ell$.
%
%$\rho_{u,x} = \rho_{x}$ 
%$p_{X|U} = p_{X|W}$
%in which case we will have $p_{\tilde{W}^n\tilde{Y}^n}=p_{U^nY^n}$.
The packing lemma gives us a bound on the error associated with decoding a wrong 
message $m$ (the satellite message) given that we correctly decoded $\ell$ (the cloud center).

%\subsubsection{CMG (a) plane decoding}

%!TEX root = thesis.tex

\chapter{Miscellaneous proofs}
								\label{appendix:misc-proofs}

This appendix  contains a series of proofs which were omitted 
from the text in Section~\ref{sec:QCMGvia2MAC} in order to make it more readable.

\section{Geometry of Chong-Motani-Garg rate region}			\label{apdx:CMG-MAC-geom}

We will now prove the inequalities from Lemma~\ref{lemma:geometry-of-MAConeR}
on the geometry of  $\MAConeR\cNpCMG$, the multiple access channel for Receiver~1
in the Chong-Motani-Garg coding strategy.
This inequality structure is important for the geometrical observations 
of the \Sasoglou \ argument.

\begin{proof}[Proof of Lemma~\ref{lemma:geometry-of-MAConeR}]
	If we expand the shorthand notation of equations \eqref{polymatr1} through 
	\eqref{polymatr3} we obtain the following inequalities.
	{\small
	\begin{align}
		I(X_1; B_1|W_1W_2Q)  	&\leq I(X_1; B_1|W_2Q)  \ \ \  \ \ \leq I(X_1W_2; B_1|Q), 	
			\label{exppolymatr1}\\
		I(X_1; B_1|W_1W_2Q) 	&\leq I(X_1W_2; B_1|W_1Q) \ \leq I(X_1W_2; B_1|Q), 
			\label{exppolymatr2}	 \\
		\!\!\!I(X_1; B_1|W_1W_2Q) + I(X_1W_2; B_1|Q) &\leq I(X_1; B_1|W_2Q) + I(X_1W_2; B_1|W_1Q). 
			\label{exppolymatr3}
	\end{align}
	}

	Observe that $W_2$ is independent from $W_1$ and $X_1$ thus
	\be
		H(X_1W_2) = H(X_1) + H(W_2), \quad 
%		H(X_1W_1W_2) = H(X_1W_1) + H(W_2), \quad
		H(W_1W_2) = H(W_1) + H(W_2).
	\ee 
	Also, since $X_1$ is obtained from $W_1$, we have $H(X_1) = H(X_1W_1)$
	and we can add or subtract the random variable $W_1$ next to $X_1$ as needed
	without changing the entropy.
	
	The get the first part of the inequality \eqref{polymatr1}, we observe 
	\begin{align*}
		I(X_1; B_1|W_1W_2) 	&= I(X_1; B_1W_2|W_1)  \\ 
							&= H(X_1W_1) + H(B_1W_2W_1) - H(X_1B_1W_2W_1) - H(W_1) \\
							&\ \ \ \ \ 			   -H(W_1W_2)					      +H(W_1W_2) \\
							&= H(X_1) +  [H(B_1W_2W_1)-H(W_1W_2)]  - H(X_1B_1W_2W_1)  \\
								& \qquad \qquad \qquad - H(W_1) +H(W_1) +H(W_2) \\
							&\leq H(X_1)+  [H(B_1W_2)-H(W_2)]  - H(X_1B_1W_2W_1) + H(W_2)   \\		
							&= [H(X_1) + H(W_2)] + H(B_1W_2)  - H(X_1B_1W_2W_1)  -H(W_2) \\																&= I(X_1; B_1|W_2),
	\end{align*}
	where inequality follows from $H(B_1|W_1W_2) \leq H(B_1|W_2)$ (conditioning cannot increase entropy).
	
	The second part of inequality \eqref{polymatr1}, follows from a similar observation using $H(B_1|W_2) \leq H(B_1)$.
	\begin{align*}
		I(X_1; B_1|W_2)	 	&= H(X_1W_2) + H(B_1W_2) - H(X_1B_1W_2) - H(W_2) \\
							&= H(X_1W_2) +  [H(B_1W_2)-H(W_2)]  - H(X_1B_1W_2) \\
							&\leq H(X_1W_2)+  [H(B_1)]  - H(X_1B_1W_2)   \\		
							&= I(X_1W_2; B_1).
	\end{align*}
	
	For the first part of \eqref{polymatr2} we repeat the above argument but with extra conditioning
	on the $W_1$ system.
	\begin{align*}
		I(X_1; & B_1 |W_1W_2)  = \\
							&=  H(X_1W_1W_2) + H(B_1W_1W_2) - H(X_1B_1W_1W_2) - H(W_1W_2) \\
							&= H(X_1W_1W_2) +  [H(B_1W_1W_2)-H(W_1W_2)]  - H(X_1B_1W_1W_2) \\
							&\leq H(X_1W_2)+  [H(B_1|W_1)]  - H(X_1B_1W_1W_2)   \\		
							&= 	 H(X_1W_2)+  H(B_1W_1)  - H(X_1B_1W_1W_2)  - H(W_1)  \\		
							&= 	 H(X_1W_1W_2)+  H(B_1W_1)  - H(X_1B_1W_1W_2)  - H(W_1)  \\		
							&= I(X_1W_2; B_1|W_1).
	\end{align*}			
	
	For the second part of \eqref{polymatr2} we have
	\begin{align*}
		I(X_1W_2; B_1|W_1) 	&= 		H(X_1W_1W_2)+  H(B_1W_1)  - H(X_1B_1W_1W_2)  - H(W_1)  \\
							&= 		H(X_1W_2)+  [H(B_1W_1)- H(W_1)]  - H(X_1B_1W_2)    \\
							&\leq		H(X_1W_2)+  H(B_1)  - H(X_1B_1W_2)    \\
							&= 		I(X_1W_2; B_1).
	\end{align*}
	
	Finally for inequality \eqref{polymatr3} we need to use the strong subadditivity relation
	\be
		H(B_1W_1W_2) + H(B_1) 		\leq		H(B_1W_1) + H(B_1W_2).
	\ee
	The steps are 
	{\allowdisplaybreaks \small
	\begin{align*}
		& \hspace{-1cm} I(X_1; B_1|W_1W_2) + I(X_1W_2; B_1) = \\
							& =		H(X_1W_1W_2) + H(B_1W_1W_2) - H(X_1B_1W_1W_2) - H(W_1W_2) \\
							&	\qquad + 	H(X_1W_2)+  H(B_1)  - H(X_1B_1W_2) \\
							&=		[H(B_1W_1W_2)+  H(B_1)] +   H(X_1W_1W_2) - H(X_1B_1W_1W_2) - H(W_1) -H(W_2) \\
							&	\qquad + 	H(X_1W_2)  - H(X_1B_1W_2) \\
							&\leq		[H(B_1W_1)+  H(B_1W_2)] +   H(X_1W_1W_2) - H(X_1B_1W_1W_2) - H(W_1) -H(W_2) \\
							&	\qquad + 	H(X_1W_2)  - H(X_1B_1W_2) \\ 
							%
							%&=		H(B_1W_1) +  H(X_1W_1W_2) - H(X_1B_1W_1W_2) - H(W_1) \\
							%&	\qquad + 	H(B_1W_2) +  H(X_1W_2)  - H(X_1B_1W_2) -H(W_2)  \\ 
							%									
							&=		 H(X_1W_1W_2)+  H(B_1W_1)  - H(X_1B_1W_1W_2)  - H(W_1) \\
							&	\qquad  + H(X_1W_2) + H(B_1W_2) - H(X_1B_1W_2) - H(W_2) \\
							%
							%&=		I(X_1; B_1|W_2) + I(X_1W_2; B_1|W_1). 
							&=		I(X_1W_2; B_1|W_1) + I(X_1; B_1|W_2).
	\end{align*}
	}
			
	This completes the proof of Lemma \ref{lemma:geometry-of-MAConeR}.
\end{proof}

	\section{Detailed explanation concerning moving points} 		
												\label{sec:sasoglou-moving-bd-points}

		In Section~\ref{subsec:sasoglu-argument} we used 
		 Lemma~\ref{lem:sasoglou-moving-bd-points}
		to show that we can move any point on the (b) or (d) planes
		to an equivalent point on the (a) or (c) planes.
		We now give the proof.

		\begin{proof}
			We have to show how to move any point in 
			$b_i \cup d_i \setminus a_i \cup c_i$ to an equivalent 
			point in $a_i \cup c_i$. 
			Because the rates $R_{1c}$ and $R_{2c}$ appear in the coordinates of both $P_1$ and $P_2$,
			we cannot move each point independently.
			Indeed \Sasoglou \ points out that the points $P_1$ and $P_2$ are \emph{coupled} by the
			common rates.
			
			A priori, we have to consider all possible starting combinations the points
			However, using the following observations we can restrict the number of possibilities 
			significantly. 
			%Before we get to this, consider the following observations.
			\begin{enumerate}
			
				\item		If $P_1 \in b_1 \setminus a_1$, then $P_2 \in a_2 \cup b_2$. \\
						\noindent The fact that $P_1 \in b_1 \setminus a_1$ implies that equation (b1) is tight
						\be
							R_{1p}+R_{1c} = I(b_1), %I\left(  X_{1};Y_{1}|W_{2}Q\right), 
						\ee
						and (a1) is loose
						\be
							R_{1p}   <  I(a_1). %I\left(  X_{1};Y_{1}|W_{1}W_{2}Q\right).
						\ee
						Then there exists $\delta > 0$ such that the point  
						$P'_1 = (R_{1p} + \delta, R_{1c} -\delta, R_{2c}) \in  \MAConeR(p)$. 
						Suppose for a contradiction that $P_2$ was originally in $(c_2 \cup d_2) \setminus (a_1 \cup b_1)$.
						The decrease in $R_{1c}$ associated with the move from $P_1$ to $P'_1$,
						will have allowed us to increase the one of the rates for Receiver~2 which 
						is a contradiction since we assumed the $R_2 = R_{2c}+R_{2p}$ was optimal. 
						More specifically, if $P_2 \in  c_2$, or $P_2 \in d_2$, 
						then we would be allowed to increase $R_{2p}$ by $\delta$,
						to obtain 
%						The corresponding image in $\MACtwoR(p)$ therefore becomes
						$P'_2 = (R_{2p}+\delta, R_{2c}, R_{1c}-\delta)$, 
						resulting in the operating point $(R_1, R_2 + \delta)$ which contradicts the 
						assumption that the initial rate pair $(R_1,R_2)$ was on the boundary of  $\CMGR$.
						Thus, if $P_1 \in b_1 \setminus a_1$, then $P_2$ must be in $a_2 \cup b_2$.
						%To see this, suppose the contrary, i.e., that $P_2 \in  (c_2 \cup d_2)\setminus (a_1 \cup b_1)$, 
						%Then by (1), 				\item		

				\item		If $P_1 \in d_1 \setminus (a_1 \cup b_1 \cup c_1)$ then $P_2 \in a_2$. \\
						Again consider moving the rates to obtain 
						$P'_1 = (R_{1p} + \delta, R_{1c} -\delta, R_{2c}) \in d_1 \setminus (a_1 \cup b_1 \cup c_1)$,
%						By a similar argument as the above, we can show that 
%						if $P_1 \in  d_1\setminus (a_1 \cup b_1 \cup c_1)$, 
						then if then if $P_2$ was originally in $c_2$ or $d_2$, then 
						the decrease in $R_{1c}$ would allow us to move the point $P_2$ to a new 
						rate triple $P'_2 = (R_{2p}+\delta, R_{2c}, R_{1c}-\delta)$, 
						resulting in the operating point $(R_1, R_2 + \delta)$, which again leads to a contradiction.
						Therefore $P_2$ can only be in $a_2$ or $b_2$. 
						But if $P_2$ were in $b_2$, then by observation 1 (with a change of roles between $P_1$ and $P_2$) 
						we would have $P_1 \in  (a_1 \cup b_1)$ which contradicts our assumption that 
						$P_1 \in d_1 \setminus (a_1 \cup b_1 \cup c_1)$.
						Thus we see that if $P_1 \in d_1 \setminus (a_1 \cup b_1 \cup c_1)$,
						then 	$P_2 \in  a_2$.
			\end{enumerate}
			
			%Suppose now that the points  $(P_1, P_2) \in  \delMAConeR(p) \times \delMACtwoR(p)$  
			%are such that $(R_{1p} + R_{1c}, R_{2p} + R_{2c}) =(R_1 , R_2) \in \delCMGR$.
			
			By the above reasoning we have restricted the possible combinations where the points
			$(P_1, P_2)$ could lie initially.
			To prove Theorem \ref{lem:sasoglou-moving-bd-points}, we have to show that we can 
			deal with the following combinations:
			$b_1 \times a_2$, $a_1 \times b_2$, $b_1 \times b_2$, $d_1 \times a_1$ and $a_1 \times d_2$.

			\bigskip

			 We now show that we can move any point $P_1 \in b_1 \cup d_1$ 
			 (on one of the bad planes) to an equivalent point  lying in $a_1 \cup c_1$,
			 %and that doing so will leave $P_2 \in  \times a_2 \cup c_2$.
			%
			
			\begin{itemize}
		
			\item 	{\bf Case $(P_1, P_2) \in b_1  \times a_2$:}  \\
					In this case, equations  (b1)  and (a2) are tight which means that the rate pairs
					are of the form
					\bea
						P_1 &= & (R_{1p},R_{1c},R_{2c}), \text{such that } R_{1p}+R_{1c}=I(b_1), \nonumber \\
						P_2 &=& (R_{2p},R_{2c},R_{1c})= (I(a_2),R_{2c},R_{1c}). \nonumber
					\eea

					If we apply a $R_{1c} \to R_{1p}$  rate moving operation to $P_1$ we can obtain a new
					point $P'_1$ with 
					\[
						P'_1  =  (R'_{1p},R'_{1c},R_{2c}) = ( I(a_1), I(b_1)-I(a_1), R_{2c})  \in a_1 \cap b_1.
					\]
					As a result of the moving the point $P_2$ will be moved to
					\be
						P'_2 = (R_{2p},R_{2c},R'_{1c})= (I(a_2),R_{2c}, I(b_1)-I(a_1) ), \nonumber
					\ee
					which continues to lie in the $a_2$ plane.
%		 as illustrated in Figure \ref{fig:shash3}.
%		%			
%		\begin{figure}[hptb]
%			\begin{center}
%			\includegraphics[width=3in]{}
%			\caption{ Moving away from $b_1$ when $P_2 \in a_2$.  
%					Should be $R_{0c}\to R_{1c}, R_{0p}\to R_{1p}, R_{1c}\to R_{2c},$ }%
%			\label{fig:shash3}%
%			\end{center}
%		\end{figure}			
					%
					Observe that during this rate moving operation the sum rates remain unchanged 
					$(R_{1p}+ R_{1c},  R_{2p} + R_{2c} ) = (R_1 , R_2)=(R_{1p}^{\prime} + R_{1c}^{\prime} , 
					  R_{2p}^{\prime} + R_{2c}^{\prime} )$.
					 
					The case when $(P_1, P_2) \in a_1  \times b_2$ is analogous.

			\item		{\bf Case $(P_1, P_2) \in b_1  \times b_2$:}  \\
					Our starting points are  
					\bea
						P_1 &=& (R_{1p},R_{1c},R_{2c}), \text{such that } R_{1p}+R_{1c}=I(b_1), \nonumber \\
						P_2 &=& (R_{2p},R_{2c},R_{1c}),  \text{such that } R_{2p}+R_{2c}=I(b_2). \nonumber
					\eea
					
					We will first do a $R_{1c} \to R_{1p}$ rate moving operation until we get to the
					plane $a_1$. The points we obtain are
					\bea
						P^{\prime}_1 &=& (R^{\prime}_{1p},R^{\prime}_{1c},R_{2c}) 
									     = ( I(a_1), I(b_1)-I(a_1), R_{2c})  \in a_1 \cap b_1, \nonumber \\
						P^{\prime}_2 &=& (R_{2p},R_{2c},R^{\prime}_{1c})
										= (R_{2p},R_{2c},I(b_1)-I(a_1)) \in b_2. \nonumber
					\eea
					
					We then perform second rate moving operation $R_{2c} \to R_{2p}$ in order
					to move to the plane $a_2$. 
					\bea
						P^{\prime\prime}_1 &=& (R^{\prime}_{1p},R^{\prime}_{1c},R^{\prime\prime}_{2c}) 
									     = ( I(a_1), I(b_1)-I(a_1),I(b_2)-I(a_2))  \in a_1 \cap b_1, \nonumber \\
						P^{\prime\prime}_2 &=& (R^{\prime\prime}_{2p},R^{\prime\prime}_{2c},R^{\prime}_{1c})
										= (I(a_2),I(b_2)-I(a_2),I(b_1)-I(a_1)) \in a_1 \cap b_2. \nonumber
					\eea				
					
					%	\begin{figure}[h] %[hptb] 
					%		\begin{center}
					%		\includegraphics[width=3in]{}
					%		\caption{Moving away from $b_1 \times b_2$. 
					%				Should be $R_{0c}\to R_{1c}, R_{0p}\to R_{1p}, R_{1c}\to R_{2c},$ }%
					%		\label{fig:shash4}%
					%		\end{center}
					%	\end{figure}			
					
					Thus we have managed to move the points $(P_1,P_2) \in b_1 \times b_2$
					to equivalent points $(P_1^{\prime\prime}, P_2^{\prime\prime}) \in a_1 \times a_2$
					while leaving the sum rate 
					%$(R_{1p}^{\prime\prime} + R_{1c}^{\prime\prime} , R_{2p}^{\prime\prime} + R_{2c}^{\prime\prime} ) = 
					$(R_1, R_2)$ unchanged.
				%		The two rate moving operations are illustrated in Figure \ref{fig:shash4}.

			\item		{\bf Case $(P_1, P_2) \in d_1  \times a_2$:}  \\
					If $P_1 \in d_1$, it means that the triple sum inequality (d1) is tight. 
					The starting rates are  
					\bea
						P_1 &=& (R_{1p},R_{1c},R_{2c}), \text{ such that } R_{1p}+R_{1c}+R_{2c}=I(d_1), \nonumber \\
						P_2 &=& (I(a_2),R_{2c},R_{1c}) \in a_2. \nonumber
					\eea
					
					To move $P_1$ away from the interior of the $d_1$ plane we will once again
					use a rate moving operation  $R_{1c} \to R_{1p}$.
					This operation will increase the rate $R_{1p}$ at the expense of the rate $R_{1c}$.
					%Since we are in the interior of the $d_1$ plane, we can surely 
					We cannot increase the rate $R_{1p}$ indefinitely -- sooner or later one of the 
					two other rate constraints on $R_{1p}$ will saturate.
					
					%By assumption, we are already on the $d_1$ plane and moving will keep us 
					%on the plane $d_1$.
					The other constraints on $R_{1p}$ come from equations (a1) and (c1), 
					so by rate moving we will eventually reach either the $a_1$ or the $c_1$ planes.
					
					If the first case the resulting points will be
					\bea
						P^{\prime}_1 &=& (R^{\prime}_{1p},R^{\prime}_{1c},R_{2c}) 
									     = ( I(a_1),R^{\prime}_{1c}, R_{2c})  \in a_1 \cap d_1, \nonumber \\
						P^{\prime}_2 &=& (I(a_2),R_{2c},R^{\prime}_{1c}) \in a_2, \nonumber
					\eea
					where $R^{\prime}_{1c} = I(d_1)-I(a_1)-R_{2c}$ because by rate moving we stayed
					in the $d_1$ plane.
					
					In the latter case where moving the rates of $P_1 \in d_1$ puts us on the
					$c_1$ plane the resulting points will be
					\bea
						P^{\prime}_1 &=& (R^{\prime}_{1p},R^{\prime}_{1c},R_{2c}) \in c_1 \cap d_1, 
									     \text{ s.t. } R^{\prime}_{1p}+R_{2c} = I(c_1) \nonumber \\
						P^{\prime}_2 &=& (I(a_2),R_{2c},R^{\prime}_{1c}) \in a_2. \nonumber
					\eea
					
					Once again, the sum rate $(R_1, R_2)$ remains unchanged 
					by the rate moving, but the moved points $(P_1^{\prime}, P_2^{\prime})$ 
					are now either in $a_1 \times a_2$ or $c_1 \times a_2$ as claimed.
					
					The case when $(P_1, P_2) \in a_1  \times d_2$ is analogous.

			\end{itemize}
							
			Therefore, given an arbitrary point $(R_1, R_2) \in  \delCMGR\cNpCMG$, 
			there always exists a choice of common/private rates  
			such that $(P_1, P_2) \in a_1 \cup c_1 \times a_2 \cup c_2$ 
			with $(R_{1p}+R_{1c} ,R_{2p}	+R_{2c} )	=(R_1 ,R_2 )$.	
			%Despite its simplicity, this is a key observation, because when one of the 
			%points is on the b or the d face of its polytope, it turns out that the common
			
		\end{proof}

		%	\subsection{Schizophrenic's argument }	 
		%		{\bf Logically}, if we can achieve $(R_{1p},R_{1c}, R_{2c})$, then for sure we 
		%		can also achieve $(R'_{1p}, R'_{1c}, R_{2c})=(R_{1p}+R_{1c},0, R_{2c})$.
		%		(since it is just a logical re-definition ignoring Rx2's ability to understand $M_{1c}$.) \\
		%		\comment{It is a not a "logical moving" -- it is done in practice as we change how many
		%				codewords we encode in the common part $W_1$ and how much in the 
		%				superimposed part $X_1|W_1$. }		
		%                Omar explanation: Shuffling changes the code so there is no sense in which you are 
		%                re-using the same code so no reason why $I(a_1)$ should apply. \\

	\section{Redundant inequality}				\label{apdx:redundant-ineq}
						
		In Section~\ref{subsec:two-simult-decoding-for-ac}, we claimed that
		the inequality \eqref{eqn:ctightb2} is less tight than the 
		sum rate constraint obtained by adding equations 
		\eqref{eqn:ctighta2} and \eqref{eqn:ctightd2}.

		To that this is true, consider the following argument starting from the positivity 
		of the mutual information $I(W_1;W_2|B_1) \geq 0$:

		\begin{align}
			H(W_1W_2B_1)  + H(B_1)  & \leq H(W_1B_1) + H(W_2B_1).
				%	$0=H(W_1W_2)-H(W_1)$ to the left hand side and $0=U(W_2)-H(W_2)$ to the right hand side:}
		\end{align}
		We now add $H(X_1W_1W_2)$ and
		subtract $-H(X_1W_1W_2B_1)$ on both sides of the equation:%
		{\small
			\begin{align}
			\begin{array}[c]{c}%
				H(W_1W_2B_1)  + H(B_1) + H(X_1W_1W_2)  \\
				 - H(X_1W_1W_2B_1) 
			\end{array}
			& \leq 
			\begin{array}[c]{c}%
				H(W_1B_1) + H(W_2B_1) +H(X_1W_1W_2) \\
				 \ \ \ \  - H(X_1W_1W_2B_1). 
			\end{array} \nonumber
		\end{align}}%
		We now use the fact that $W_2$ is independent from $W_1$,
				so $H(W_1)-H(W_1W_2) = - H(W_2)$ to obtain:
		{ \small
		\begin{align}				
			\begin{array}[c]{c}%
				H(W_1W_2B_1)  + H(B_1) + H(X_1W_1W_2) \\
				 - H(X_1W_1W_2B_1) + H(W_1)-H(W_1W_2)
			\end{array}
			& \leq 
			\begin{array}[c]{c}%
				H(W_1B_1) + H(W_2B_1) +H(X_1W_1W_2) \\
				 \ \ \ \  - H(X_1W_1W_2B_1) - H(W_2).
			\end{array}			\nonumber
		\end{align}		
		}
		We move the term $H(W_1B_1)$ to the other side
		and rearrange the terms the final expression:
		{ \small
		\begin{align}		
			&
			\begin{array}[c]{c}%
				\!\!\! H(X_1W_1W_2) + H(W_1W_2B_1)  - \! H(X_1W_1W_2B_1)  - \!H(W_1W_2) \\
				  + H(W_1) + H(B_1)- H(W_1B_1) 
			\end{array}  \nonumber 
			\\
			& \qquad \qquad \qquad \qquad \qquad \qquad\qquad \leq 
			\begin{array}[c]{c}%
				H(X_1W_1W_2) + H(W_2B_1)  \\
					\quad - H(X_1W_1W_2B_1) - H(W_2)
			\end{array} \nonumber
		\end{align}
		\begin{align}			
			I(a_1) = I(X_1;B_1|W_1W_2) + I(W_1;B_1) \ \ 
			&\leq \ \ 		
			I(X_1;B_1|W_2) = I(b_1),	\nonumber			
		\end{align}
		}
		which shows that we can drop the constraint
		from equation \eqref{eqn:ctightb2}.

\pagestyle{plain}

\singlespacing
\bibliographystyle{alpha}
{\small
\bibliography{zz.references}
}
%%%%%%%%%%%%%%%%%%%%%%%%%%%%%%%%%%%%%%%%%%%%%%%%

\end{document}